
\documentclass[11pt]{article}
\usepackage{amsfonts}
\usepackage{amsmath,amsthm}
\usepackage{graphicx,epstopdf,epsfig,multirow,epic,bm}
\usepackage{color}
\usepackage{multicol}
\usepackage{algorithm}
\usepackage{algpseudocode}
\usepackage{paralist}

\usepackage{makeidx}
\usepackage{showidx}
\usepackage{textcomp}



\theoremstyle{definition}
\newtheorem{thm}{Theorem}
\newtheorem{cor}[thm]{Corollary}
\newtheorem{lem}[thm]{Lemma}

\theoremstyle{definition}
\newtheorem{defn}{Definition}
%
\theoremstyle{definition}
\newtheorem{rem}{Remark}

\theoremstyle{definition}
\newtheorem{problem}{Problem}

\theoremstyle{definition}
\newtheorem{conj}{Conjecture}
\theoremstyle{definition}
\newtheorem{example}{Example}
\theoremstyle{definition}


\DeclareGraphicsExtensions{.eps,.eps.gz}
\DeclareGraphicsRule{*}{eps}{*}{} \oddsidemargin=0 cm
\topmargin=0 cm
\textwidth=16.5 true cm \textheight=21 true cm



\normalsize \rm
\parindent=16pt
\DeclareGraphicsRule{*}{eps}{*}{}

\linespread{1.1}

\usepackage{titlesec}

\newcommand{\qqed}{\hfill $\square$}
\newcommand{\paralled}{\hfill $\parallel$}

\usepackage[nottoc]{tocbibind} 

\begin{document}

\begin{center}
{\huge \textbf{Parameterized Colorings And Labellings\\[12pt]
Of Graphs In Topological Coding}}\\[14pt]
{\large \textbf{Bing YAO, Xiaohui ZHANG, Hui SUN, Jing SU, Fei Ma, Hongyu WANG}\\[8pt]
(\today)}
\end{center}

\vskip 1cm

\pagenumbering{roman}
\tableofcontents

\newpage

\setcounter{page}{1}
\pagenumbering{arabic}

\begin{center}
{\huge \textbf{Parameterized Colorings And Labellings\\[12pt]
Of Graphs In Topological Coding}}\\[14pt]
{\large Bing \textsc{Yao}\footnote{~College of Mathematics and Statistics, Northwest Normal University, Lanzhou, 730070, CHINA, email: yybb918@163.com}, Xiaohui \textsc{Zhang}\footnote{~School of Computer, Qinghai Normal University, Xining, 810001, CHINA, email: 2547851790@qq.com}, Hui \textsc{Sun}\footnote{~School of Electronics Engineering and Computer Science, Peking University, Beijing, 100871, CHINA, email: 18919104606@163.com}, Jing \textsc{Su}\footnote{~School of Electronics Engineering and Computer Science, Peking University, Beijing, 100871, CHINA, email: 1099270659@qq.com}, Fei \textsc{Ma}\footnote{~School of Electronics Engineering and Computer Science, Peking University, Beijing, 100871, CHINA, email: mafei123987@163.com}, Hongyu \textsc{Wang}\footnote{~National Computer Network Emergency Response Technical Team/Coordination Center of China, Beijing, 100029, CHINA, email: why200904@163.com}}\\[12pt]
(\today)
\end{center}

\vskip 1cm

\begin{quote}
\textbf{Abstract:} The coming quantum computation is forcing us to reexamine the cryptosystems people use. We are applying graph colorings of topological coding to modern information security and future cryptography against supercomputer and quantum computer attacks in the near future. Many of techniques introduced here are associated with many mathematical conjecture and NP-problems. We will introduce a group of $W$-constraint $(k,d)$-total colorings and algorithms for realizing these colorings in some kinds of graphs, which are used to make quickly \emph{public-keys} and \emph{private-keys} with anti-quantum computing, these $(k,d)$-total colorings are: graceful $(k,d)$-total colorings, harmonious $(k,d)$-total colorings, $(k,d)$-edge-magic total colorings, $(k,d)$-graceful-difference total colorings and $(k,d)$-felicitous-difference total colorings. One of useful tools we used is called \emph{Topcode-matrix} with elements can be all sorts of things, for example, sets, graphs, number-based strings. Most of parameterized graphic colorings/labelings are defined by \emph{Topcode-matrix algebra} here. From the application point of view, many of our coloring techniques are given by algorithms and easily converted into programs. Some operations based on Topcode-matrices help us to define new parameterized colorings, to generate number-based strings containing parameters, and to build up graphic lattices including tree-graphic lattices and tree-graphic lattice homomorphisms. More complex homomorphism phenomenon is from a colored graph homomorphism to a Topcode-matrix homomorphism, and then to a graph-set homomorphism. We investigate parameterized set-colorings, hypergraphs based on parameterized set-colorings, and set-colorings with multiple intersections, as well as graph homomorphisms with parameterized set-colorings. We point that the techniques of topological coding are related with the subgraph isomorphic problem, the problems of integer partition and integer factorization, the parameterized number-based string partition problem, the topological structure decomposition problem, and the parametric reconstitution of number-based string.\\[6pt]
\textbf{Mathematics Subject classification}: 05C60, 68M25, 06B30, 22A26, 81Q35\\[6pt]
\textbf{Keywords:} Lattice-based cryptosystem; parameterized graphic coloring; set-coloring; topological coding; quantum computation; integer factorization; algorithm; quantum computation.
\end{quote}

\section{Introduction and preliminary}

\subsection{Requirements of information security}

The coming quantum computation is forcing us to reexamine the cryptosystems we use.

The authors in \cite{Bernstein-Buchmann-Dahmen-Quantum-2009} pointed: Most research in quantum algorithms has revolved around the hidden subgroup problem. The hidden subgroup problem is a problem defined on a group, and many problems reduce to it. Factoring and discrete log reduce to the hidden subgroup problem when the underlying group is finite or countable. Pell's equation reduces to the hidden subgroup problem when the group is uncountable. For these cases there are efficient quantum algorithms to solve the hidden subgroup problem, and hence the underlying problem, because the group is \emph{abelian}. Graph isomorphism reduces to the hidden subgroup problem for the symmetric group, and the unique shortest lattice vector problem is related to the hidden subgroup problem when the group is dihedral.

``Quantum cryptography'' also called ``quantum key distribution'' expands a short shared key into an effectively infinite shared stream. The prerequisite for quantum cryptography is that the users, say Alice and Bob, both know (e.g.) 256 unpredictable secret key bits. The result of quantum cryptography is that Alice and Bob both know a stream of (e.g.) 1012 unpredictable secret bits that can be used to encrypt messages. The length of the output stream increases linearly with the amount of time that Alice and Bob spend on quantum cryptography.

Peikert, in \cite{Chris-Peikert-decade}, pointed that lattice-based ciphers have the following advantages: Conjectured security against quantum attacks; Algorithmic simplicity, efficiency, and parallelism; Strong security guarantees from worst-case hardness.

Lattice-based systems provide a good alternative since they are based on a long-standing open problem for classical computation \cite{Bernstein-Buchmann-Dahmen-Quantum-2009}.

\subsection{Techniques of topological coding}

Topological Coding consists of two different kinds of mathematical branches: Topological coding involves millions of things in the world, and people use topological structures to connects things together to form complete ``stories'' under certain constraints in asymmetric encryption system \cite{Bing-Yao-2020arXiv}. The authors, in \cite{Tian-Li-Peng-Yang-2021-102212}, use the topological graph to generate the honeywords, which is the first application of graphic labeling of topological coding in the honeywords generation. They propose a method to protect the \emph{hashed passwords} by using \emph{topological graphic sequences}.

Topsnut-gpws, the abbreviation of the sentence ``graphic passwords made by topological structures and mathematical constraints'', belong to topological coding in \cite{Wang-Xu-Yao-2016} and \cite{Wang-Xu-Yao-Key-models-Lock-models-2016}. However, Topsnut-gpws differ from the existing graphical passwords, since they are saved and implemented in computer by popular matrices. There are many NP-hard problems and conjectures in Topsnut-gpws, for example, the \textbf{Subgraph Isomorphic Problem}, a NP-hard problem. As known, a Topcode-matrix can bring a group of ``things'' together topologically by mathematical constraints, in general.

Topsnut-gpws correspond to matrices of order $3\times q$, called \emph{Topcode-matrices}, which can generate quickly number-based strings with longer bytes for ciphering digital files. And moreover, Topsnut-gpws are easy to induce asymmetric public-keys and private-keys, such as, one public-key vs two or more private-keys, more public-keys vs more private-keys. However, it is irreversible to find the original Topsnut-gpw from a number-based string made by the Topcode-matrix of this Topsnut-gpw, since a Topsnut-gpw is made by a topological structure and a group of mathematical constraints.

After years of research, we summarize the following advantages of Topsnut-gpws \cite{Yao-Wang-2106-15254v1}:
\begin{asparaenum}[(1) ]
\item \textbf{Run} fast in communication networks because they are saved in computer by popular matrices rather than pictures.

\item \textbf{Produce} easily text-based strings, or number-based strings for encrypting files in real applications.

\item \textbf{Diversity} of asymmetric ciphers, such as one \emph{public-key} corresponds to more \emph{private-keys}, or more \emph{public-keys} correspond to more \emph{private-keys}.
\item \textbf{Supported by many branches of mathematics.} Number theory (especially computational number theory), information theory, probability theory, random process, combinatorial mathematics, linear algebra, matrix algebra, discrete mathematics and graph theory are the solid foundation and strong technical support to Topsnut-gpws of topology coding.

\item \textbf{Irreversibility}, Topsnut-gpws can generate quickly text-based strings, or number-based strings with bytes as long as desired, but these strings can not reconstruct the original Topsnut-gpws, since this work must span two fundamentally different branches of mathematics: topological structures and non-topological mathematical constraints.

\item \textbf{Computational security} There are enormous numbers of graph colorings and labelings in graph theory, and new graph colorings/labelings come into being everyday. As known, there are no polynomial algorithms for creating the topological structures (see Eq.(\ref{eqa:number-graphs-23-24-vertices})) of Topsnut-gpws.

\item \textbf{Provable security}, since Topsnut-gpws are related with many long-standing mathematical conjectures, NP-complete and NP-hard problems from number theory, combinatorics, discrete mathematics and graph theory.
\end{asparaenum}

\vskip 0.2cm

In other word, Topsnut-gpws have part of the advantages of lattice-based cryptosystems.

\subsubsection{Parameterized number-based strings}

A \emph{number-based string} $S=c_1c_2\cdots c_m$ is defined by $c_i\in [0,9]=\{0,1,2,\dots ,9\}$. We can get the following number-based strings
\begin{equation}\label{eqa:3-topo-number-based-strings}
{
\begin{split}
D_1(1,2)&=001919170415171513461117159647151356639917\\
D_2(1,2)&=617936459137669151511441317151500171719190\\
D_3(1,2)&=170019191704151715134611159647151356639917
\end{split}}
\end{equation}
from a Topcode-matrix $T_{code}(1,2)$ of order $3\times 10$ shown in Eq.(\ref{eqa:example11}). It is not difficult to see that there are $30!$ different number-based strings made by $T_{code}(1,2)$, like three number-based strings shown in Eq.(\ref{eqa:3-topo-number-based-strings}), in total. The Topcode-matrix $T_{code}(1,2)$ is a particular case of a Topcode-matrix $T_{code}(k,d)$ with parameterized elements shown in Eq.(\ref{eqa:example22}), as we take $(k,d)=(1,2)$. Clearly, we have infinite different number-based strings from the Topcode-matrix $T_{code}(k,d)$.

\begin{equation}\label{eqa:example11}
\centering
{
\begin{split}
T_{code}(1,2)= \left(
\begin{array}{ccccccccccccccc}
6 & 6 & 4 & 6 & 6 & 4 & 4 & 0 & 0 & 0\\
1 & 3 & 5 & 7 & 9 & 11 & 13 & 15 & 17 & 19\\
7 & 9 & 9 & 13 & 15 & 15 & 17 & 15 & 17 & 19
\end{array}
\right)_{3\times 10}
\end{split}}
\end{equation}

{\footnotesize
\begin{equation}\label{eqa:example22}
\centering
T_{code}(k,d)= \left(
\begin{array}{ccccccccccccccc}
3d & 3d & 2d & 3d & 3d & 2d & 2d & 0 & 0 & 0\\
k & k+d & k+2d & k+3d & k+4d & k+5d & k+6d & k+7d & k+8d & k+9d\\
k+3d & k+4d & k+4d & k+6d & k+7d & k+7d & k+8d & k+7d & k+8d & k+9d
\end{array}
\right)
\end{equation}
}

Each Topsnut-gpw shown in Fig.\ref{fig:6-Topsnut-gpws} consists of a topological structure (see Fig.\ref{fig:6-Topsnut-topological}) and a parameterized total coloring. Moreover, each Topsnut-gpw $A_i$ shown in Fig.\ref{fig:6-Topsnut-gpws} admits a \emph{colored graph homomorphism} to $A$, that is $A_i\rightarrow A$ with $i\in [1,5]=\{1,2,3,4,5\}$. Thereby, each colored graph homomorphism $A_i\rightarrow A$ forms a \emph{\textbf{topological authentication}}.

\begin{figure}[h]
\centering
\includegraphics[width=16.4cm]{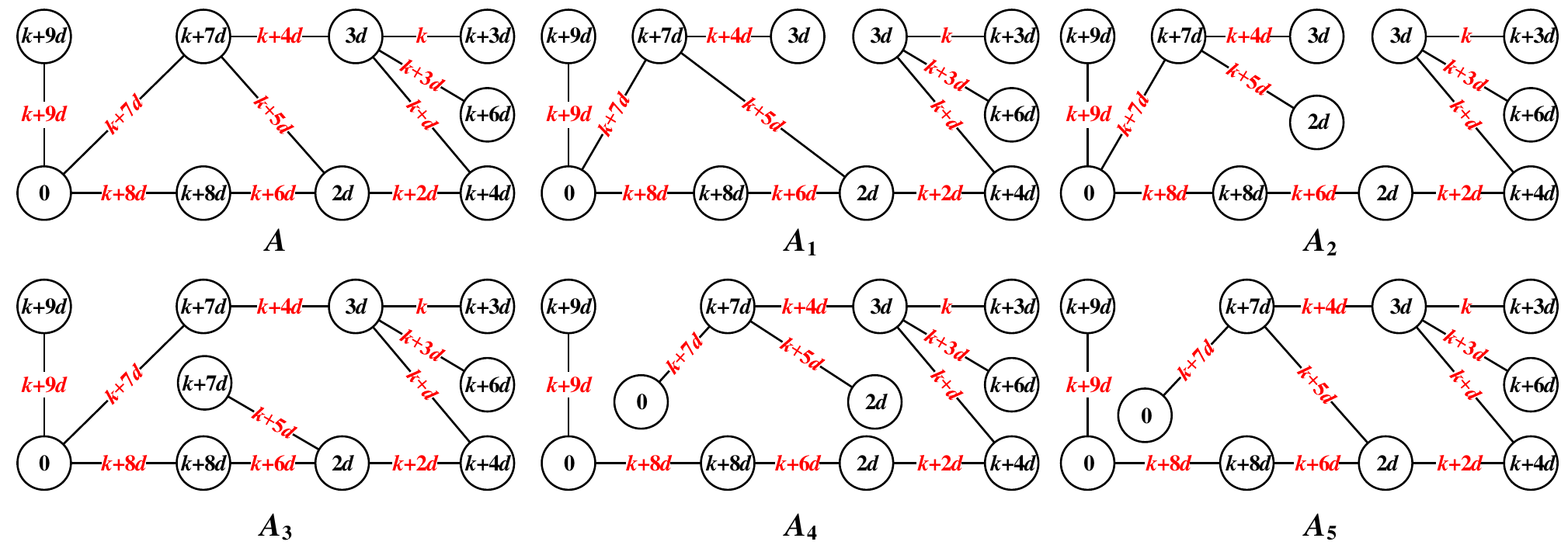}
\caption{\label{fig:6-Topsnut-gpws}{\small Six Topsnut-gpws with parameterized total colorings.}}
\end{figure}

\begin{figure}[h]
\centering
\includegraphics[width=16.4cm]{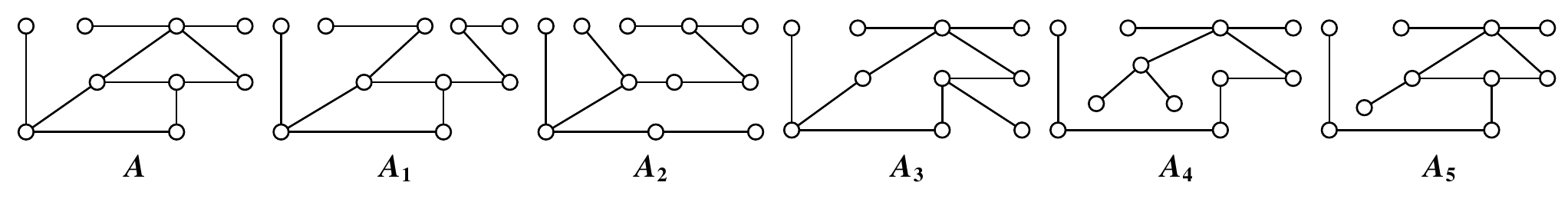}
\caption{\label{fig:6-Topsnut-topological}{\small Six uncolored graphs correspond to six Topsnut-gpws with parameterized total colorings shown in Fig.\ref{fig:6-Topsnut-gpws}.}}
\end{figure}

\begin{problem}\label{qeu:PNBSP-problem}
Motivated from the Number-based String Partition Problem proposed in \cite{Bing-Yao-Hongyu-Wang-arXiv-2020-homomorphisms}, we present the following \textbf{PNBSP-problem} (Parameterized Number-based String Partition Problem):
\begin{quote}
\textbf{PNBSP-problem}. For a given number-based string $S=c_1c_2\cdots c_m$ with $c_i\in [0,9]$, partition it into $3q$ segments $c_1c_2\cdots c_m=a_1a_2\cdots a_{3q}$ with $a_j=c_{n_j}c_{n_j+1}\cdots c_{n_{j+1}}$ with $j\in [1,3q-1]$, where $n_1=1$ and $n_{3q}=m$, and use these segments $a_1,a_2,\dots ,a_{3q}$ to reform a Topcode-matrix $T_{code}$, like the Topcode-matrix $T_{code}(1,2)$ shown in Eq.(\ref{eqa:example11}), and moreover use the Topcode-matrix $T_{code}$ to reconstruct all of Topsnut-gpws of $q$ edges, like six Topsnut-gpws corresponding to $T_{code}(k,d)$ shown in Fig.\ref{fig:6-Topsnut-gpws}. By the found Topsnut-gpws corresponding to the common Topcode-matrix $T_{code}(k,d)$, find the desired \emph{public Topsnut-gpw} $A$ and the desired \emph{private Topsnut-gpws} $A_i$, such that each mapping $\varphi_i:V(A)\rightarrow V(A_i)$ forms a \emph{colored graph homomorphism} (topological authentication) $A_i\rightarrow A$ with $i\in [1,5]$.
\end{quote}
\end{problem}

\subsubsection{The problems of integer partition and integer factorization}

The PNBSP-problem enables us to think the following problems related with the \emph{integer partition} and the \emph{integer factorization}:
\begin{asparaenum}[\textrm{\textbf{Prob}}-1. ]
\item \label{hard:coloring-labeling} Since a parameterized Topcode-matrix $T_{code}(k,d)$ with parameters $k$ and $d$ is related with one of graph colorings and graph labelings, as known there are hundreds of graph colorings and graph labelings have been introduced, so determining this Topcode-matrix $T_{code}(k,d)$ is not a slight job.

\item \label{hard:graph-isomorphic} Let $G_p$ be the number of graphs of $p$ vertices, Harary and Palmer \cite{Harary-Palmer-1973} computed
\begin{equation}\label{eqa:number-graphs-23-24-vertices}
{
\begin{split}
G_{23}&=559946939699792080597976380819462179812276348458981632\approx 2^{179}\\
G_{24}&=195704906302078447922174862416726256004122075267063365754368\approx 2^{197}
\end{split}}
\end{equation} If a parameterized Topcode-matrix $T_{code}(k,d)$ corresponding to a number-based string $D(k_0,d_0)$ has been determined, finding all non-isomorphic Topsnut-gpws (colored graphs) corresponding to $T_{code}(k,d)$ will meet the \textbf{Subgraph Isomorphic Problem}, as known, a NP-hard, see examples shown in Fig.\ref{fig:6-Topsnut-topological} and two numbers $G_{23}$, $G_{24}$ shown in Eq.(\ref{eqa:number-graphs-23-24-vertices}).

\item \label{hard:primes-decomposition} As two parameters $k$ and $d$ both are prime numbers, determining a parameterized Topcode-matrix $T_{code}(k,d)$ will meet the \textbf{Integer Factorization Problem} with an odd number factorization $\alpha =k\cdot d$ if $\alpha$ is as a \emph{public-key}, or the \textbf{Integer Partition Problem}, such as an even number partition $\beta =k+d$ if $\beta$ is as a \emph{public-key}.
\item No polynomial algorithm is for finding the colorings admitted by the \emph{public-key} $A$ and the \emph{private-key} $A_i$ with $i\in [1,5]$. And no algorithm find all possible colorings admitted by a graph, since it is a NP-type problem.
\item Since a number-based string $s=c_1c_2\cdots c_m$ can be induced by different Topcode-matrices $T^1_{code}(k,d)$, $T^2_{code}(k,d)$, $\dots$, $T^n_{code}(k,d)$, so no way is to realize the following processes:
 $$D_j(1,2)\rightarrow T_{code}(1,2)\rightarrow A;~D_j(1,2)\rightarrow T_{code}(1,2)\rightarrow A_j,~j\in [1,5]
 $$ from the number-based strings to the public Topsnut-gpw $A$ and the desired private Topsnut-gpws $A_i$, see Eq.(\ref{eqa:3-topo-number-based-strings}), Eq.(\ref{eqa:example11}) and Fig.\ref{fig:6-Topsnut-gpws}.

\item As known, a number-based string $s=c_1c_2\cdots c_m$ can be generated by degree sequences, vectors, matrices, graph colorings \emph{etc.}, so it is difficult to judge which way produces the given number-based string $s=c_1c_2\cdots c_m$. For example, the number-based string $D_3(1,2)$ shown in Eq.(\ref{eqa:3-topo-number-based-strings}), also, is an odd integer.
\item \label{hard:matrices} \textbf{Different colored graphs induce the same number-based strings.} For example, a colored graph $G_2$ shown in Fig.\ref{fig:one-string-vs-more-graphs} admits a total set-coloring $F:V(G_2)\cup E(G_2)\rightarrow M$, such that each $c\in F(uv)$ corresponds to some $a\in F(u)$ and $b\in F(v)$ holding $c=|a-b|$. So, we have a Topcode-matrix as follows:
\begin{equation}\label{eqa:same-string11}
\centering
{
\begin{split}
T_{code}(G_2)= \left(
\begin{array}{ccccccccccccccc}
11 & 11 & 11 & 11 & 44 & 44 \\
76 & 52 & 47 & 33 & 21 & 31 \\
87 & 63 & 58 & 44 & 23 & 13
\end{array}
\right)_{3\times 6}
\end{split}}
\end{equation}
Let $S_{tring}(G_2)$ be the set of number-based strings $D_k(G_2)$ induced from the Topcode-matrix $T_{code}(G_2)$ shown in Eq.(\ref{eqa:same-string11}), such as
\begin{equation}\label{eqa:same-string000}
D_k(G_2)=111111114444765247332131876358442313
\end{equation} with $36$ bytes. Clearly, the Topcode-matrix $T_{code}(G_2)$ can induce $|S_{tring}(G_2)|=(18)!$ different number-based strings, in total.

\begin{figure}[h]
\centering
\includegraphics[width=14cm]{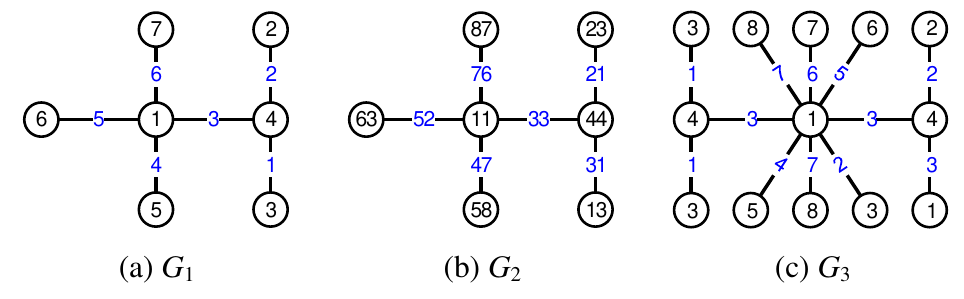}\\
\caption{\label{fig:one-string-vs-more-graphs}{\small (a) A tree $G_1$ admits a set-ordered graceful labeling; (b) $G_2$ admits a set-coloring; (c) $G_3$ admits a proper total coloring.}}
\end{figure}

\quad Another colored graph $G_3$ shown in Fig.\ref{fig:one-string-vs-more-graphs} admits a total coloring $f:V(G_3)\cup E(G_3)\rightarrow [1,10]$, such that $f(xy)=|f(x)-f(y)|$ for each edge $xy\in E(G_3)$. All number-based strings $D_i(G_3)$ induced from the Topcode-matrix $T_{code}(G_3)$ shown in Eq.(\ref{eqa:same-string22}) are collected into $S_{tring}(G_3)$, which distributes us $|S_{tring}(G_3)|=(36)!$ different number-based strings.

\begin{equation}\label{eqa:same-string22}
\centering
{
\begin{split}
T_{code}(G_3)= \left(
\begin{array}{ccccccccccccccc}
1 &1 & 1 &1 & 1 &1 & 1 &1 & 4 &4 & 4 &4 \\
7 &6 & 5 &2 & 4 &7 & 3 &3 & 2 &1 & 3 & 1\\
8 &7 & 6 &3 & 5 &8 & 4 &4 & 2 &3 & 1 &3
\end{array}
\right)_{3\times 12}
\end{split}}
\end{equation}

Since $S_{tring}(G_2)\subset S_{tring}(G_3)$, so we have more different colored graphs, which induce the same number-based strings.

\item \label{hard:no-method} No method for cutting a number-based string like $D_k(G_2)$ shown in Eq.(\ref{eqa:same-string000}) into $3q$ segments $c_1c_2\cdots c_m=a_1a_2\cdots a_{3q}$ with $a_j=c_{n_j}c_{n_j+1}\cdots c_{n_{j+1}}$ with $j\in [1,3q-1]$, where $n_1=1$ and $n_{3q}=m$, such that $a_1,a_2,\dots , a_{3q}$ are just the elements of some Topcode-matrix of order $3\times q$.
\end{asparaenum}

\begin{problem}\label{problem:hard-hard}
The above problems \textbf{Prob}-\ref{hard:coloring-labeling}, \textbf{Prob}-\ref{hard:primes-decomposition} and \textbf{Prob}-\ref{hard:no-method} form NP-hard problems. Since the complexity of the PNBSP-problem consists of the Topcode-matrix Problem and Subgraph Isomorphic Problem, so we can say PNBSP-problem to be (NP-hard$)^2$.
\end{problem}

\begin{thm}\label{thm:different-graphs-same-number-based-strings}
There are different colored graphs, which induce the same number-based strings.
\end{thm}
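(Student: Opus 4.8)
The plan is to prove Theorem~\ref{thm:different-graphs-same-number-based-strings} by explicit construction, using exactly the worked examples already displayed in the excerpt. The statement is existential, so one well-chosen pair of colored graphs suffices; in fact the paper has essentially laid out the witnesses in Fig.~\ref{fig:one-string-vs-more-graphs} and Eq.~(\ref{eqa:same-string11})--(\ref{eqa:same-string22}), and the proof is really a matter of verifying that those witnesses do what is claimed. I would organize the argument around the two colored graphs $G_2$ and $G_3$ of Fig.~\ref{fig:one-string-vs-more-graphs}, whose Topcode-matrices are $T_{code}(G_2)$ of order $3\times 6$ and $T_{code}(G_3)$ of order $3\times 12$.

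First I would recall the mechanism by which a colored graph produces number-based strings: given a colored graph $G$ with $q$ edges, its Topcode-matrix $T_{code}(G)$ has order $3\times q$, and by concatenating its $3q$ entries (read off according to any fixed ordering of the columns and any fixed ordering of the three rows) one obtains a number-based string; reordering the columns and the within-column blocks yields the whole set $S_{tring}(G)$ of strings associated with $G$. Second, I would check that $G_2$ genuinely admits the total set-coloring $F$ described — each $c\in F(uv)$ arising as $|a-b|$ for suitable $a\in F(u)$, $b\in F(v)$ — so that the six columns $\binom{11}{76}{87},\binom{11}{52}{63},\binom{11}{47}{58},\binom{11}{33}{44},\binom{44}{21}{23},\binom{44}{31}{13}$ are legitimately the Topcode-matrix $T_{code}(G_2)$ in Eq.~(\ref{eqa:same-string11}); and similarly that $G_3$ admits the proper total coloring $f$ with $f(xy)=|f(x)-f(y)|$, so that $T_{code}(G_3)$ in Eq.~(\ref{eqa:same-string22}) is correct. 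Third — the key step — I would show the containment $S_{tring}(G_2)\subseteq S_{tring}(G_3)$ by observing that the multiset of columns of $T_{code}(G_2)$ sits inside the multiset of columns of $T_{code}(G_3)$ after an appropriate splitting: concretely, the string $D_k(G_2)$ of Eq.~(\ref{eqa:same-string000}) with its $36$ bytes is obtainable as one of the $(36)!$ orderings underlying $S_{tring}(G_3)$, because the $18$ entries of $T_{code}(G_2)$, each being a one- or two-digit block, can be matched byte-for-byte against a selection of the $36$ single-digit entries of $T_{code}(G_3)$. Hence there exists a string $s$ with $s\in S_{tring}(G_2)\cap S_{tring}(G_3)$ while $G_2\not\cong G_3$ (they have different numbers of edges and vertices), which is exactly the assertion.

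The main obstacle is the third step: one must be careful that "inducing the same number-based string" is interpreted at the level of the flattened digit sequence, not at the level of the structured $3\times q$ matrix, since $G_2$ and $G_3$ have different edge counts and so cannot share a Topcode-matrix of a common order. The verification is that the particular digit string $D_k(G_2)=111111114444765247332131876358442313$ lies in both $S_{tring}(G_2)$ and $S_{tring}(G_3)$ — for $G_2$ by the given column order, and for $G_3$ because the digits $7,6,5,2,4,7,3,3,2,1,3,1$ of its middle row and $8,7,6,3,5,8,4,4,2,3,1,3$ of its bottom row, together with the top row $1,1,1,1,1,1,1,1,4,4,4,4$, can be concatenated (in a suitable column order matching that of $G_2$) to yield precisely the same $36$-digit string. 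I would finish by noting that this same phenomenon can be iterated to produce infinitely many pairwise non-isomorphic colored graphs sharing a common induced string, strengthening the conclusion, but the bare existential statement follows already from the pair $(G_2,G_3)$.
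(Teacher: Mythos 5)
Your proposal is correct and is essentially the paper's own justification: the theorem is supported by exactly the example in \textbf{Prob}-7, where the colored graphs $G_2$ and $G_3$ of Fig.~\ref{fig:one-string-vs-more-graphs} with Topcode-matrices $T_{code}(G_2)$ and $T_{code}(G_3)$ produce the common $36$-digit string $D_k(G_2)$ (the paper records this as $S_{tring}(G_2)\subset S_{tring}(G_3)$). Your digit-level matching of the rows of the two matrices is the right verification, and your remark that the comparison must be made on flattened digit sequences rather than on matrices of a common order is exactly the point the paper relies on.
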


\subsubsection{Topological structure decomposition problems}

There are topological structure decomposition problems related with the Integer Factorization Problem and the Integer Partition Problem.
\begin{asparaenum}[\textrm{\textbf{TSDp}}-1.]
\item \textbf{Vertex-split Problem.} Let $N(x)=\{y_1,y_2,\dots, y_d\}$ with the degree $d=\textrm{deg}_G(x)\geq 2$ be the neighbor set of the vertex $x$ in a $(p,q)$-graph $G$. We vertex-split $x$ into $m$ vertices $x_1,x_2,\dots, x_m$ with $m\geq 2$, and join $x_i$ with the vertices of a subset $N(x_i)\subset N(x)$ by edges, here $|N(x_i)|\geq 1$ and $\bigcup ^m_{i=1}N(x_i)=N(x)$ and $N(x_i)\cap N(x_j)= \emptyset $ if $i\neq j$, the resultant graph, called \emph{vertex-split graph}, is denoted as $G\wedge_m x$.

\qquad Let $N(x_i)=\{y_{i,1},y_{i,2},\dots, y_{i,n_i}\}$ with $n_i=|N(x_i)|$, then we have to partition the degree $d$ into
\begin{equation}\label{eqa:vertex-split-partitions}
d=\sum^m_{i=1}|N(x_i)|=n_1+n_2+\cdots +n_m
\end{equation} For each degree partition defined in Eq.(\ref{eqa:vertex-split-partitions}), suppose that there are $n(N(x),m)$ different ways to partition $N(x)$ into different groups of subsets $N(x_1)$, $N(x_2)$, $\dots $, $N(x_m)$. We assume that there are $n(d,m)$ degree partitions holding Eq.(\ref{eqa:vertex-split-partitions}). Thereby, we have
\begin{equation}\label{eqa:555555}
\sum^{p-1}_{m=2}n(N(x),m)\cdot n(d,m)
\end{equation} different vertex-split graphs $G\wedge_m x$.
\item \textbf{Leaf-added graphs related with Integer Partition Problem.} We select $k$ vertices from a $(p,q)$-graph $G$ for adding $m$ new leaves to them, then we have $A^k_p=p(p-1)\cdots (p-k+1)$ selections, rather than ${p \choose k}=\frac{p!}{k!(p-k)!}$. Next, we partition $m$ into a sum of $k$ parts $m_1,m_2,\cdots ,m_k$ with each $m_i\neq 0$, that is $m=m_1+m_2+\cdots +m_k$. Suppose there is $n(m,k)$ groups of such $k$ parts. For a group of $k$ parts $m_1,m_2,\cdots ,m_k$, let $m_{i_1},m_{i_2},\cdots ,m_{i_k}$ be a permutation of $m_1,m_2,\cdots ,m_k$, so we have the number of such permutations is a factorial $k!$. Since the $(p,q)$-graph $G$ is colored well by a $W$-constraint coloring/labeling $f$, then we have the number $A_{leaf}(G,m)$ of all possible adding $m$ leaves as follows
\begin{equation}\label{eqa:c3xxxxx}
A_{leaf}(G,m)=\sum^m_{k=1}A^k_p\cdot n(m,k)\cdot k!=\sum^m_{k=1} n(m,k)\cdot p!
\end{equation} where $n(m,k)=\sum ^k_{r=1}n(m-k,r)$. Here, computing $n(m,k)$ can be transformed into finding the number $A(m,k)$ of solutions of \emph{Diophantine equation} $m=\sum ^k_{i=1}ix_i$. There is a recursive formula
\begin{equation}\label{eqa:c3xxxxx}
A(m,k)=A(m,k-1)+A(m-k,k)
\end{equation}
with $0 \leq k\leq m$. It is not easy to compute the exact value of $A(m,k)$, for example, the authors in \cite{Shuhong-Wu-Accurate-2007} and \cite{WU-Qi-qi-2001} computed exactly
\begin{equation}\label{eqa:c3xxxxx}
{
\begin{split}
A(m,6)=&\biggr\lfloor \frac{1}{1036800}(12m^5 +270m^4+1520m^3-1350m^2-19190m-9081)+\\
&\frac{(-1)^m(m^2+9m+7)}{768}+\frac{1}{81}\left[(m+5)\cos \frac{2m\pi}{3}\right ]\biggr\rfloor
\end{split}}
\end{equation}

\qquad On the other hands, for any odd integer $m\geq 7$, it was conjectured $m=p_1+p_2+p_3$ with three prime numbers $p_1,p_2,p_3$ from the famous Goldbach's conjecture: ``Every even integer, greater than 2, can be expressed as the sum of two prime numbers.'' In other word, determining $A(m,3)$ is difficult, also, it is difficult to express an odd integer $m=\sum^{3n}_{k=1} p\,'_k$, where each $p\,'_k$ is a prime number.

\item \textbf{Power sets and hyperedge sets. } The symbol $[1,n]^2$ is the set of all subsets of an integer set $[1,n]=\{1,2,\dots ,n\}$, called \emph{power set}. A \emph{hyperedge set} $\mathcal{E}\subset [1,n]^2$ holds $[1,n]=\bigcup _{e\in \mathcal{E}}e$ true. There is a theorem in \cite{Jianfang-Wang-Hypergraphs-2008}: If a hyperedge set $\mathcal{E}=\bigcup^m_{j=1} \mathcal{E}_j$ satisfies $\mathcal{E}_i\cap \mathcal{E}_j=\emptyset$, such that each hyperedge $e\in \mathcal{E}$ belongs to some $\mathcal{E}_j$, and $e\not\in \bigcup^m_{k=1,k\neq j} \mathcal{E}_k$. Then, $\{\mathcal{E}_1,\mathcal{E}_2,\dots ,\mathcal{E}_m\}$ is a \emph{hyperedge decomposition} of the hyperedge set $\mathcal{E}$. As $n$ is a large odd number, characterizing the power set $[1,n]^2$ will meet the integer factorization $$n=p_1\cdot p_2\cdot \cdots \cdot p_s~(s\geq 2)$$ and the integer partition $$n=m_1+m_2+\cdots +m_r~(r\geq 2)$$ since it is related with the problems of colorings, connectivity, Hamilton property and topological decomposition, as well as the application in Database structure \cite{Jianfang-Wang-Hypergraphs-2008}.
\item \textbf{Problem of Integer Partition And Trees.} Partition a positive integer $m$ into a sum $m=m_1+m_2+\cdots +m_n$ $(n\geq 2)$, such that there is a tree $T$ having vertices $u_1,u_2,\dots ,u_n$ holding just degrees $\textrm{deg}_T(u_i)=m_i$ for $i\in [1,n]$. \textbf{How many }trees with this degree sequence $m_1,m_2,\dots ,m_n$ are there?
\item \textbf{Topological form of Integer Factorization Problem.} The \emph{$4$-color triangle base} shown in Fig.\ref{fig:Triangle-split-coincide} (a)
\begin{equation}\label{eqa:4-color-triangle-base-mpgs}
\textbf{\textrm{T}}_{\textrm{riangle}}=\{T_{S,1},T_{S,2},T_{S,3},T_{S,4}\}
\end{equation} forms a \emph{4-colorable planar-graphic lattice} as follows
\begin{equation}\label{eqa:4-color-planar-lattice}
\textbf{\textrm{L}}(Z^0[~\overline{\ominus}~]\textbf{\textrm{T}}_{\textrm{riangle}})=\big \{[~\overline{\ominus}~]^4_{k=1}a_kT_{S,k}:a_k\in Z^0,T_{S,k}\in \textbf{\textrm{T}}_{\textrm{riangle}}\big \}
\end{equation} with $\sum ^4_{k=1}a_k\geq 1$, where ``$[~\overline{\ominus}~]$'' is the \emph{edge-coinciding operation} (see examples shown in Fig.\ref{fig:Triangle-split-coincide}). So, there are the following facts:

\qquad (i) Let $F_{ace}(G)$ be the face number of a maximal planar graph $G$, then $F_{ace}(G)-1=\sum^4_{k=1}a_k$.

\qquad (ii) Different $4$-colorings of a maximal planar graph $G$ induce different sums $\sum^4_{k=1}a_k$.

\qquad (iii) An integer $\sum^4_{k=1}a_k$ may correspond to two or more maximal planar graphs.
\end{asparaenum}

\begin{problem}\label{problem:66666666666}
\textbf{How partition} an integer $m=a_1+a_2+a_3+a_4$, such that the operation $[\ominus]^4_{k=1}a_kT_{S,k}$ defined in Eq.(\ref{eqa:4-color-planar-lattice}) produces \emph{4-colorable maximal planar graphs}? Or, the results of the operation $[\ominus]^4_{k=1}a_kT_{S,k}$ are just \emph{4-colorable semi-maximal planar graphs} $G^{in}$ and $G^{out}$ having 2-color no-changed cycles defined in \cite{Jin-Xu-Maximal-Science-Press-2019, Jin-Xu-55-56-configurations-arXiv-2107-05454v1}?
\end{problem}

\begin{figure}[h]
\centering
\includegraphics[width=16.4cm]{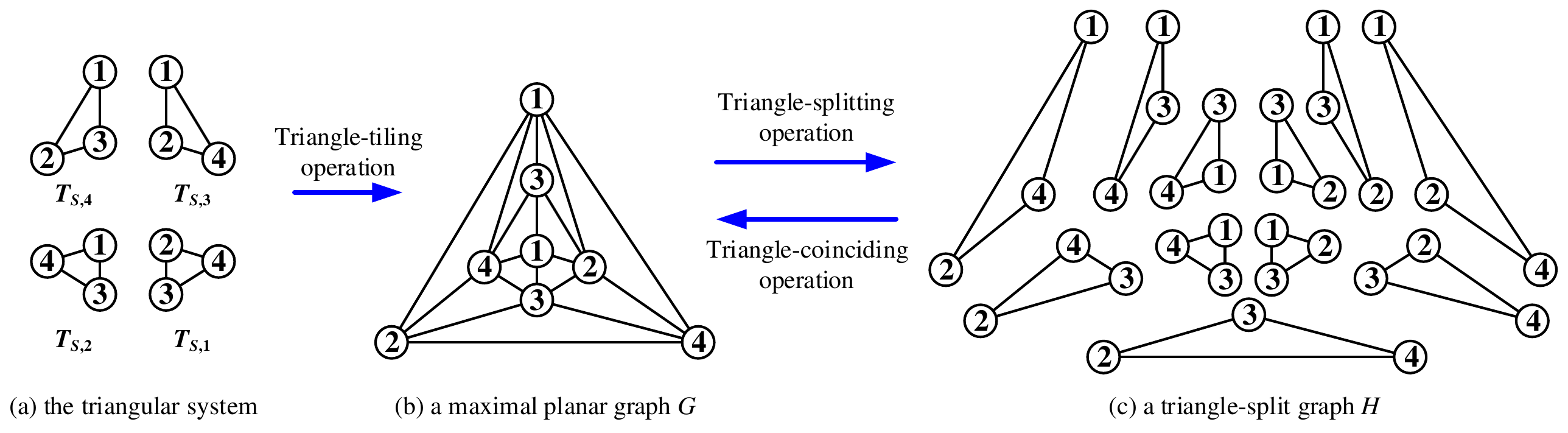}\\
\caption{\label{fig:Triangle-split-coincide}{\small The triangular system, the triangle-splitting operation and the triangle-coinciding operation.}}
\end{figure}

\begin{problem}\label{qeu:444444}
By the triangular system $\textbf{\textrm{T}}_{sys}=\{T_{S,1},T_{S,2},T_{S,3},T_{S,4}\}$ shown in Fig.\ref{fig:Triangle-split-coincide}, the vertex-coinciding operation and the edge-coinciding operation, we tile the whole $xOy$-plane, the resultant plane is called a \emph{triangular $4$-colored-plane}, denoted as $T4$-plane. At the infinity of the $T4$-plane, \textbf{what} is the $T4$-plane's boundary? \textbf{Is} any planar graph in the $T4$-plane?
\end{problem}

\begin{problem}\label{qeu:444444}
For the 4-colorable planar-graphic lattice defined in Eq.(\ref{eqa:4-color-planar-lattice}), we have the following questions:
\begin{asparaenum}[\textrm{\textbf{Mpg}}-1.]
\item In Eq.(\ref{eqa:4-color-planar-lattice}), $[~\overline{\ominus}~]^4_{k=1}a_kT_{S,k}$ produces $4$-colored maximal planar graphs, \textbf{consider} $4$-colored maximal planar graphs holding: (i) each $a_k$ is even; (ii) each $a_k$ is odd; (iii) each $a_k$ is a prime number; (iv) $4m=\sum^4_{k=1}a_k$; (v) $4n=1+\sum^4_{k=1}a_k$.
\item Since $[~\overline{\ominus}~]^4_{k=1}a_kT_{S,k}$ produces $4$-colored maximal planar graphs for a fixed group of $a_1,a_2,a_3,a_4$, \textbf{how many} different $4$-colored maximal planar graphs made by this fixed group of $a_1,a_2,a_3,a_4$?
\item A maximal planar graph $G$ admits a $4$-coloring $f$ which corresponding to a fixed group of $a_1,a_2,a_3,a_4$, and $G$ admits another $4$-coloring $g$ which corresponding to a fixed group of $b_1,b_2,b_3,b_4$, \textbf{do we have} $a_k=b_k$ for $k\in [1,4]$? If it is, then we call $G$ \emph{recursive maximal planar graph}.
\item If $G$ is a recursive maximal planar graph, \textbf{do we have} $a_k=b_k$ with $k\in [1,4]$ for any pair of $4$-colorings $f,g$ of $G$, where the $4$-coloring $f$ corresponds to $a_1,a_2,a_3,a_4$ in $G=[\overline{\ominus}]^4_{k=1}a_kT_{S,k}$, and the $4$-coloring $g$ corresponds to $b_1,b_2,b_3,b_4$ in $G=[~\overline{\ominus}~]^4_{k=1}b_kT_{S,k}$?
\item Suppose that a maximal planar graph $G$ admits a $4$-coloring $f$ which corresponding to a fixed group of $a_1,a_2,a_3,a_4$, and another maximal planar graph $H$ admits a $4$-coloring $g$ which corresponding to a fixed group of $b_1,b_2,b_3,b_4$. If $a_k=b_k$ for $k\in [1,4]$, \textbf{do we have} $G\cong H$?
\item \textbf{Characterize} a maximal planar graph $G$ admitting a $4$-coloring $f$ which corresponding to a fixed group of $a_1,a_2,a_3,a_4$, such that $|a_i-a_j|\leq 1$ for $i\neq j$, we say that this maximal planar graph $G$ admits a \emph{structure-equitable $4$-coloring}.
\end{asparaenum}
\end{problem}

\subsection{Our works}

Our goal is applying the techniques of topological coding to modern information security and future cryptography against supercomputer and quantum computer attacks, we will introduce a group of $W$-constraint $(k,d)$-total colorings and algorithms for realizing these colorings in some kinds of graphs, which are used to make quickly \emph{public-keys} and \emph{private-keys} with anti-quantum computing.

Since topological codes based on trees are easy to produce number-based strings by computer, and many Hanzi-graphs are forests consisted of trees. Another important reason is that connected graph can be vertex-split into trees for getting their own various colorings, or set-colorings obtained from that of trees. For example, we investigate trees, which obey the following $(k,d)$-total colorings: graceful $(k,d)$-total colorings, harmonious $(k,d)$-total colorings, $(k,d)$-edge-magic total colorings, $(k,d)$-graceful-difference total colorings and $(k,d)$-felicitous-difference total colorings, and so on.

We use a useful tool, called \emph{Topcode-matrix}, to bring some particular things together topologically, for example, these particular things are \emph{sets}, \emph{graphs}, \emph{number-based strings}, and so on. According to the needs of practical applications, many of our coloring techniques are given by algorithms and are easily converted into programs.

Part of $(k,d)$-type colorings/labelings were introduced in \cite{Yao-Sun-Hongyu-Wang-n-dimension-ICIBA2020}, \cite{Bing-Yao-2020arXiv} and \cite{Yao-Zhao-Zhang-Mu-Sun-Zhang-Yang-Ma-Su-Wang-Wang-Sun-arXiv2019}.

\subsection{Preliminary}

\subsubsection{Terminologies and notations}

Standard terminology and notation of graph theory used here can be found in \cite{Bondy-2008} and \cite{Gallian2021}. We often use the following terminology and notation:

\begin{asparaenum}[$\ast$ ]
\item The symbol $Z^0$ is the set of non-negative integers, and the notation $Z$ stands for the set of integers.
\item $[a,b]$ stands for an integer set $\{a,a+1,a+2,\dots, b\}$ with two integers $a,b$ subject to $a<b$; $[\alpha,\beta]^o$ denotes an \emph{odd-set} $\{\alpha,\alpha+2,\dots, \beta\}$ with odd integers $\alpha,\beta$ falling into $1\leq \alpha<\beta$.
\item For integers $r,k\geq 0$ and $s,d\geq 1$, we have two parameterized sets
\begin{equation}\label{eqa:two-parameterized-sets11}
S_{s,k,r,d}=\{k+rd,k+(r+1)d,\dots ,k+(r+s)d\}
\end{equation} and
\begin{equation}\label{eqa:two-parameterized-sets22}
O_{s,k,r,d}=\{k+[2(r+1)-1]d,k+[2(r+2)-1]d,\dots ,k+[2(r+s)-1]d\}
\end{equation}
\item The \emph{cardinality} (or \emph{number}) of elements of a set $S$ is denoted as $|S|$, so $|S_{s,k,r,d}|=s+1$ and $|O_{s,k,r,d}|=s$.
\item Let $\Lambda$ be a finite set. The set of all subsets of $\Lambda$ is denoted as $\Lambda^2=\{X:~X\subseteq \Lambda\}$, called \emph{power set}, and the power set $\Lambda^2$ contains no empty set at all. For example, for a given set $\Lambda=\{a,b,c,d,e\}$, the power set $\Lambda^2$ has its own elements $\{a\}$, $\{b\}$, $\{c\}$, $\{d\}$, $\{e\}$, $\{a,b\}$, $\{a,c\}$, $\{a,d\}$, $\{a,e\}$, $\{b,c\}$, $\{b,d\}$, $\{b,e\}$, $\{c, d\}$, $\{c, e\}$, $\{d,e\}$, $\{a,b,c\}$, $\{a,b,d\}$, $\{a,b,e\}$, $\{a,c,d\}$, $\{a,c,e\}$, $\{a,d,e\}$, $\{b,c,d\}$, $\{b,c,e\}$, $\{b,d,e\}$, $\{c,d,e\}$, $\{a,b,c,d\}$, $\{a,b,c,e\}$, $\{a,c,d,e\}$, $\{a,b,d,e\}$, $\{b,c,d,e\}$ and $\Lambda$ holding $|\Lambda^2|=2^5-1$.

\qquad Moreover, the integer set $[1,4]=\{1,2,3,4\}$ induces a power set $[1,4]^2=\big \{\{1\},\{2\},\{3\}$, $\{4\}$, $\{1,2\}$, $\{1,3\},\{1,4\}$, $\{2,3\}$, $\{2,4\}$, $\{3,4\}$, $\{1,2,3\}$, $\{1,2,4\}$, $\{1,3,4\}$, $\{2,3,4\}$, $[1,4]\big \}$, so the number of subsets is $2^4-1=15$ in total.
\item Each graph mentioned here has no multiple-edge and loop-edge.
\item A $(p,q)$-graph is a graph having $p$ vertices and $q$ edges.
\item Since the \emph{number} (also, \emph{cardinality}) of elements of a set $X$ is denoted as $|X|$, so the \emph{degree} of a vertex $x$ in a $(p,q)$-graph $G$ is denoted as $\textrm{deg}_G(x)=|N(x)|$, where $N(x)$ is the set of neighbors of the vertex $x$.
\item A \emph{leaf} is a vertex of degree one in a graph.
\item A \emph{bipartite graph} $H$ has its own vertex set $V(H)=X\cup Y$ with $X\cap Y=\emptyset$, such that each edge $uv\in E(H)$ holds $u\in X$ and $v\in Y$ true.
\item A \emph{tree} is a bipartite graph, such that each pair of vertices is connected by a unique \emph{path}. So, a tree has at least two \emph{leaves} and has no cycle.
\end{asparaenum}

\vskip 0.4cm

Let $M$ be a set of integers, and let a graph $G$ admit a mapping $f:S\subset V(G)\cup E(G)\rightarrow M$, write the set of colors assigned to the elements of $S$ by $f(S)=\{f(x):x\in S\}$. There are the following basic colorings/labelings in graph theory:
\begin{asparaenum}[\textrm{A}-1.]
\item As $S=V(G)$, $f(S)=f(V(G))$, we call $f$ \emph{vertex coloring}.
\item As $S=E(G)$, $f(S)=f(E(G))$, we call $f$ \emph{edge coloring}.
\item As $S=V(G)\cup E(G)$, $f(S)=f(V(G)\cup E(G))=f(V(G))\cup f(E(G))$, we call $f$ \emph{total coloring}.
\end{asparaenum}
\begin{asparaenum}[\textrm{B}-1. ]
\item As $f$ is a \emph{vertex coloring} and $f(u)\neq f(v)$ for each edge $uv\in E(G)$, we call $f$ \emph{proper vertex coloring}. The number $\chi(G)=\min_f\max\{f(w):w\in V(G)\}$ over all proper vertex colorings of $G$ is called \emph{chromatic number}.
\item As $f$ is an \emph{edge coloring} and $f(uv)\neq f(uw)$ for any two adjacent edges $uv$ and $uw$ of $G$, we call $f$ \emph{proper edge coloring}. The number $\chi\,'(G)=\min_f\max\{f(w):w\in E(G)\}$ over all proper edge colorings of $G$ is called \emph{chromatic index}.
\item As $f$ is a \emph{total coloring} holding $f(u)\neq f(v)$ for each edge $uv$ and $f(uv)\neq f(uw)$ for any two adjacent edges $uv,uw$, we call $f$ \emph{proper total coloring}. The number $\chi\,''(G)=\min_f\max\{f(w):w\in V(G)\cup E(G)\}$ over all proper total colorings of $G$ is called \emph{total chromatic number}.
\item As $f$ is a vertex coloring (resp. an edge coloring, or a total coloring) holding $f(u)\neq f(x)$ for any two vertices $u$ and $x$ (resp. for any two edges, or for any two elements $u,x\in V(G)\cup E(G)$), that is, $|f(V(G))|=|V(G)|$ (resp. $|f(E(G))|=|E(G)|$, or $|f(V(G)\cup E(G))|=|V(G)|+|E(G)|$), we call $f$ \emph{labeling}.
\item If a coloring/labeling $f$ satisfies a mathematical constraint or a group of mathematical constraints, we say $f$ to be a \emph{$W$-constraint coloring/labeling}.
\end{asparaenum}

\vskip 0.4cm

The labelings and colorings of graph theory not defined here can be found in \cite{Bondy-2008}, \cite{Gallian2021} and \cite{Yao-Wang-2106-15254v1}.

\begin{defn}\label{defn:topcode-matrix-definition}
\cite{Yao-Sun-Zhao-Li-Yan-2017, Yao-Zhang-Sun-Mu-Sun-Wang-Wang-Ma-Su-Yang-Yang-Zhang-2018arXiv} A \emph{Topcode-matrix} $T_{code}(G)$ of a $(p,q)$-graph $G$ is defined as
\begin{equation}\label{eqa:basic-formula-Topcode-matrix}
\centering
{
\begin{split}
T_{code}(G)= \left(
\begin{array}{ccccc}
x_{1} & x_{2} & \cdots & x_{q}\\
e_{1} & e_{2} & \cdots & e_{q}\\
y_{1} & y_{2} & \cdots & y_{q}
\end{array}
\right)_{3\times q}=\left(
\begin{array}{cccccccccccccc}
X\\
E\\
Y
\end{array}
\right)=(X,E,Y)^{T}
\end{split}}
\end{equation}\\
where the \emph{v-vector} $X=(x_1, x_2, \dots, x_q)$, the \emph{e-vector} $E=(e_1$, $e_2 $, $ \dots $, $e_q)$, and the \emph{v-vector} $Y=(y_1, y_2, \dots, y_q)$ consist of non-negative integers $e_i$, $x_i$ and $y_i$ for $i\in [1,q]$. We say $T_{code}(G)$ to be \emph{evaluated} if there exists a function $\theta$ such that $e_i=\theta(x_i,y_i)$ for $i\in [1,q]$, and call $x_i$ and $y_i$ \emph{ends} of $e_i$, denoted as $e_i=x_iy_i$, and $q$ \emph{size} of $T_{code}(G)$, such that $V(G)=X\cup Y$ and $E(G)=E$.\qqed
\end{defn}

\begin{defn}\label{defn:colored-topcode-matrix}
$^*$ Suppose that a $(p,q)$-graph $G$ admits a $W$-constraint total coloring $f:V(G)\cup E(G)\rightarrow [a,b]$. A \emph{colored Topcode-matrix} $T_{code}(G,f)$ of the graph $G$ is defined as
\begin{equation}\label{eqa:basic-colored-Topcode-matrix}
\centering
{
\begin{split}
T_{code}(G,f)= \left(
\begin{array}{ccccc}
f(x_{1}) & f(x_{2}) & \cdots & f(x_{q})\\
f(x_{1}y_{1}) & f(x_{2}y_{2}) & \cdots & f(x_{q}y_{q})\\
f(y_{1}) & f(y_{2}) & \cdots & f(y_{q})
\end{array}
\right)_{3\times q}=\left(
\begin{array}{cccccccccccccc}
X_f\\
E_f\\
Y_f
\end{array}
\right)=(X_f,E_f,Y_f)^{T}
\end{split}}
\end{equation}\\
holding the $W$-constraint $f(x_{i}y_{i})=W\langle f(x_{i}),f(y_{i})\rangle$ for $i\in [1,q]$. Moreover, if $G$ is a bipartite graph with the vertex set $V(G)=X^v\cup Y^v$ and $X^v\cap Y^v=\emptyset $, we stipulate $x_{i}\in X^v$ and $y_{i}\in Y^v$ such that $X_f\cap Y_f=\emptyset $ in Eq.(\ref{eqa:basic-colored-Topcode-matrix}), where ``$W$-constraint'' is a mathematical constraint, or a group of mathematical constraints.\qqed
\end{defn}

\begin{rem}\label{rem:generalization-colored-topcode-matrix}
In Definition \ref{defn:colored-topcode-matrix}, if the $(p,q)$-graph $G$ admits a $W$-constraint vertex coloring $g:V(G)\rightarrow [\alpha , \beta]$, we have a colored Topcode-matrix $T_{code}(G,g)$ of $G$ defined as
\begin{equation}\label{eqa:vertex-coloring-Topcode-matrix}
\centering
{
\begin{split}
T_{code}(G,g)= \left(
\begin{array}{ccccc}
g(x_{1}) & g(x_{2}) & \cdots & g(x_{q})\\
x_{1}y_{1} & x_{2}y_{2} & \cdots & x_{q}y_{q}\\
g(y_{1}) & g(y_{2}) & \cdots & g(y_{q})
\end{array}
\right)_{3\times q}=\left(
\begin{array}{cccccccccccccc}
X_g\\
E\\
Y_g
\end{array}
\right)=(X_g,E,Y_g)^{T}
\end{split}}
\end{equation} If $G$ is a bipartite graph with the vertex set $V(G)=X^v\cup Y^v$ and $X^v\cap Y^v=\emptyset $, we have $X_g\cap Y_g=\emptyset $, also, $X_g=X^v$ and $Y_g=Y^v$. And, if the $(p,q)$-graph $G$ admits a $W$-constraint edge coloring $h:E(G)\rightarrow [\lambda, \gamma]$, we have a colored Topcode-matrix $T_{code}(G,h)$ of $G$ defined as
\begin{equation}\label{eqa:edge-colored-Topcode-matrix}
\centering
{
\begin{split}
T_{code}(G,h)= \left(
\begin{array}{ccccc}
x_{1} & x_{2} & \cdots & x_{q}\\
h(x_{1}y_{1}) & h(x_{2}y_{2}) & \cdots & h(x_{q}y_{q})\\
y_{1} & y_{2} & \cdots & y_{q}
\end{array}
\right)_{3\times q}=\left(
\begin{array}{cccccccccccccc}
X\\
E_h\\
Y
\end{array}
\right)=(X,E_h,Y)^{T}
\end{split}}
\end{equation} If $G$ is a bipartite graph with the vertex set $V(G)=X^v\cup Y^v$ and $X^v\cap Y^v=\emptyset $, refer to that in Definition \ref{defn:colored-topcode-matrix}. \paralled
\end{rem}

\begin{rem}\label{rem:generalization-topcode-matrix}
In Definition \ref{defn:topcode-matrix-definition}, if each of elements $x_i,e_j,y_k$ in the v-vectors and the e-vector is a \emph{set} (resp. graph, matrix, string), we call $T_{code}(G)$ \emph{set-type Topcode-matrix} (resp. \emph{graph-type Topcode-matrix}, \emph{matrix-type Topcode-matrix}, \emph{string-type Topcode-matrix}) in this article.

In other word, the generalization of a Topcode-matrix is that each of elements in the Topcode-matrix is a \emph{thing} in the world, such that the Topcode-matrix brings these $3q$ things together topologically by a mathematical constraint, or a group of mathematical constraints for getting a completely mathematical ``story'' \cite{Yao-Wang-Su-Wanjia-Post-Quantum-2021}.

About part of Topcode-matrix algebra, readers can refer to \cite{Bing-Yao-2020arXiv} and \cite{Yao-Zhao-Zhang-Mu-Sun-Zhang-Yang-Ma-Su-Wang-Wang-Sun-arXiv2019}.\paralled
\end{rem}

\subsubsection{Basic graph operations}

Graph operations are the \emph{soul} of topological structures of graphs.

We introduce the graph operations often used. Let $G_i$ and $G_j$ be two disjoint connected graphs in the following argument.

(1) \textbf{Vertex-coinciding operation.} We coincide a vertex $u$ of $G_i$ with one vertex $x$ of $G_j$ into one vertex $w=u\odot x$, the resulting graph is denoted as $G_i\odot G_j$, called a \emph{vertex-coincided graph} based on the \emph{graph vertex-coinciding operation} ``$\odot$'', such that the vertex $w\in V(G_i\odot G_j)$ has the degree $\textrm{deg}_{G_i}(u)+\textrm{deg}_{G_j}(x)$, see an illustration in Fig.\ref{fig:22-vertex-join-split-coincide} from (a) to (b).
\begin{figure}[h]
\centering
\includegraphics[width=16cm]{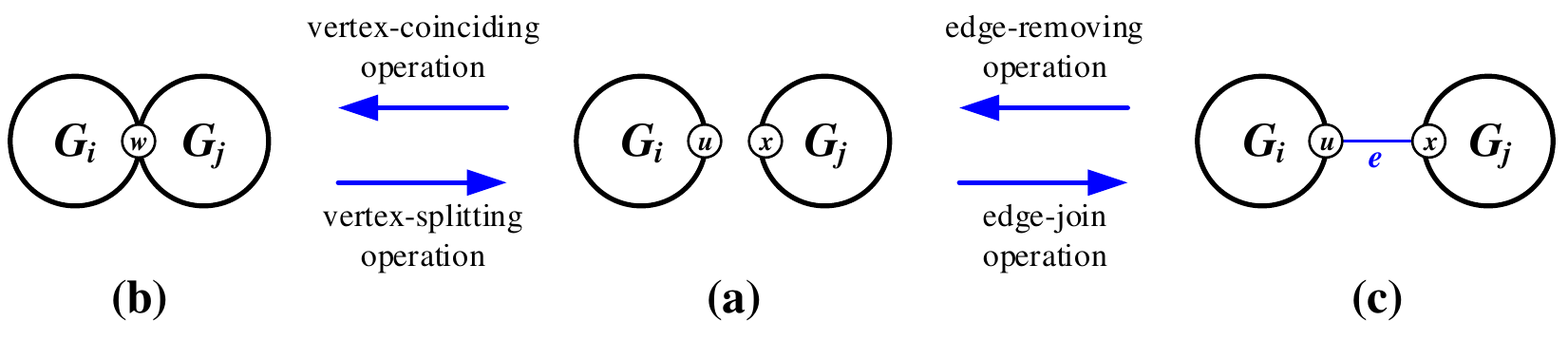}
\caption{\label{fig:22-vertex-join-split-coincide}{\small A scheme for illustrating the vertex-coinciding operation and the edge-joining operation.}}
\end{figure}

(2) \textbf{Edge-joining operation.} We use a new edge to join a vertex $u$ of $G_i$ with one vertex $x$ of $G_j$ together, so we get a connected graph $G_i\ominus G_j$, called an \emph{edge-joined graph} based on the \emph{graph edge-joining operation} ``$\ominus$'', see an example shown in Fig.\ref{fig:22-vertex-join-split-coincide} from (a) to (c).

(3) \textbf{Vertex-splitting tree-operation. }A \emph{vertex-splitting operation} on a connected graph is defined as: Let $x$ be a vertex of a cycle $C$ of a connected graph $G$, and $N(x)=\{x_1,x_2,\dots, x_d\}$ with $d=\textrm{deg}_G(x)\geq 2$ be the neighbor set of the vertex $x$. We vertex-split $x$ into two vertices $x\,'$ and $x\,''$, and let $x\,'$ to join with vertices $x_1,x_2,\dots, x_k$ by edges, and let $x\,''$ to join with vertices $x_{k+1},x_{k+2},\dots, x_d$ by edges, the resultant graph is a connected graph, denoted as $G\wedge x$ and called \emph{vertex-split graph}. See an example shown in Fig.\ref{fig:22-vertex-join-split-coincide} from (b) to (a). Clearly, the number of cycles of the vertex-split graph $G\wedge x$ is less than that of the original connected graph $G$.

If $G_1=G\wedge x$ has cycles, we do the vertex-splitting operation to $G_1$. Go on in this way, we get a graph $G_m=G_{m-1}\wedge y$ to be a tree $T$ for some $m$. The process of vertex-splitting $G$ into a tree $T$ is called the \emph{vertex-splitting tree-operation}, and write $T=\wedge(G)$ and $G=\wedge^{-1}(T)$, or $G=\odot_m (T)$.

\begin{thm}\label{thm:666666}
Each connected $(p,q)$-graph $G$ corresponds to a set $V_{\textrm{split}}(G)$ of trees of $q+1$ vertices, such that $G=\wedge^{-1}(T)$, also, $G=\odot_m (T)$ for each tree $T\in V_{\textrm{split}}(G)$.
\end{thm}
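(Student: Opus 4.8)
The plan is to establish the two claimed identities $G = \wedge^{-1}(T)$ and $G = \odot_m(T)$ for every tree $T$ obtained by exhaustively applying the vertex-splitting operation to $G$, and to verify that every such tree has exactly $q+1$ vertices. First I would show that the vertex-splitting tree-operation terminates and produces a tree. By the remark following the definition of the vertex-splitting operation, each single step $G \mapsto G \wedge x$ applied at a vertex $x$ lying on a cycle strictly decreases the number of cycles (more precisely, the cycle rank $q - p + 1$ drops by one, since we add one vertex and keep all edges). Hence after finitely many steps $m = q - p + 1$ we reach a graph $G_m$ with cycle rank $0$; since each $G_i$ is connected (the operation preserves connectivity, as the two split vertices $x',x''$ each retain at least one neighbor and the rest of $G$ is untouched), $G_m$ is a connected acyclic graph, i.e.\ a tree $T$.

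Next I would track the vertex and edge counts. A single vertex-split adds exactly one vertex and no edges, so after $m = q-p+1$ steps the resulting tree $T$ has $p + (q-p+1) = q+1$ vertices and still $q$ edges — consistent with $T$ being a tree on $q+1$ vertices. This gives the cardinality claim and also pins down $m$: every maximal sequence of splits has length exactly $q - p + 1$, independent of the choices made, so the notation $\odot_m(T)$ is well-defined with $m = q-p+1$.

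Then I would address the reconstruction identities. The key observation is that vertex-splitting and vertex-coinciding are mutually inverse as local operations: if $G \wedge x$ is obtained by splitting $x$ into $x'$ and $x''$, then coinciding $x'$ with $x''$ via the vertex-coinciding operation ``$\odot$'' recovers $G$ exactly, because coinciding restores the original neighbor set $N(x) = N(x') \cup N(x'')$ with the correct degree $\deg_{G}(x) = \deg(x') + \deg(x'')$. Applying this step-by-step in reverse order along the splitting sequence $G = G_0, G_1, \dots, G_m = T$, each coinciding operation undoes the corresponding split, so $\wedge^{-1}(T) = G$; writing the composition of the $m$ coinciding steps as $\odot_m$ gives $G = \odot_m(T)$. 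Collecting all trees reachable by some maximal splitting sequence into the set $V_{\textrm{split}}(G)$ then yields the statement for each $T \in V_{\textrm{split}}(G)$.

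The main obstacle I anticipate is the bookkeeping needed to make ``undo each split by a coincide'' rigorous when later splits may act on vertices created by earlier splits: one must argue that the coinciding operations can be performed in the reverse order of the splits and that each one is still a legal local inverse at that stage (the two vertices to be coincided are precisely the two images of a single earlier split, and nothing done in between has altered their incident edges in a way that obstructs coinciding). Formalizing this cleanly — perhaps by induction on $m$, peeling off the last split and invoking the inductive hypothesis on $G_1 = G \wedge x$ — is the technical heart of the argument; everything else is a routine count of vertices, edges, and cycle rank.
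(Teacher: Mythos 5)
The paper offers no proof of this statement at all: Theorem~\ref{thm:666666} is stated immediately after the definition of the vertex-splitting tree-operation and is treated as an immediate consequence of that definition together with the fact that vertex-coinciding is the local inverse of vertex-splitting, the only supporting material being the example in Fig.~\ref{fig:33-from-graph-to-tree}. Your write-up supplies exactly the bookkeeping the paper leaves implicit — the cycle-rank count showing $m=q-p+1$ and $|V(T)|=p+m=q+1$, and the reverse-order induction replacing each split by the coinciding of the two vertices it created — and this is a correct and more complete argument than anything in the paper, so on the whole your proposal is sound.

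One step does need repair: your claim that a single vertex-split automatically preserves connectivity (``the two split vertices $x\,',x\,''$ each retain at least one neighbor and the rest of $G$ is untouched'') is false in general. If $x$ is a cut vertex lying on a cycle — say $G$ is two triangles sharing the vertex $x$ — then splitting $x$ so that $x\,'$ receives the neighbors of one triangle and $x\,''$ those of the other disconnects the graph. The paper avoids this by building connectivity into the definition of $G\wedge x$ (the split is required to yield a connected graph), and one should check such a choice always exists (it does: give $x\,'$ a single neighbor lying on a cycle through $x$ and give $x\,''$ the rest). A cleaner patch, which makes your counting argument airtight without any appeal to a ``good choice'', is to observe that vertex-splitting can never merge components; hence if a splitting sequence $G=G_0,G_1,\dots,G_m=T$ terminates in a tree, every intermediate $G_i$ is forced to be connected, each step drops the cycle rank by exactly one, and the conclusions $m=q-p+1$, $|V(T)|=q+1$, and $G=\odot_m(T)$ follow as you argue.
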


Fig.\ref{fig:33-from-graph-to-tree} shows a connected $(5,8)$-graph $G$ that can be vertex-split into trees $G_4$ and $T_4$ of $8$ edges.

\begin{figure}[h]
\centering
\includegraphics[width=13.6cm]{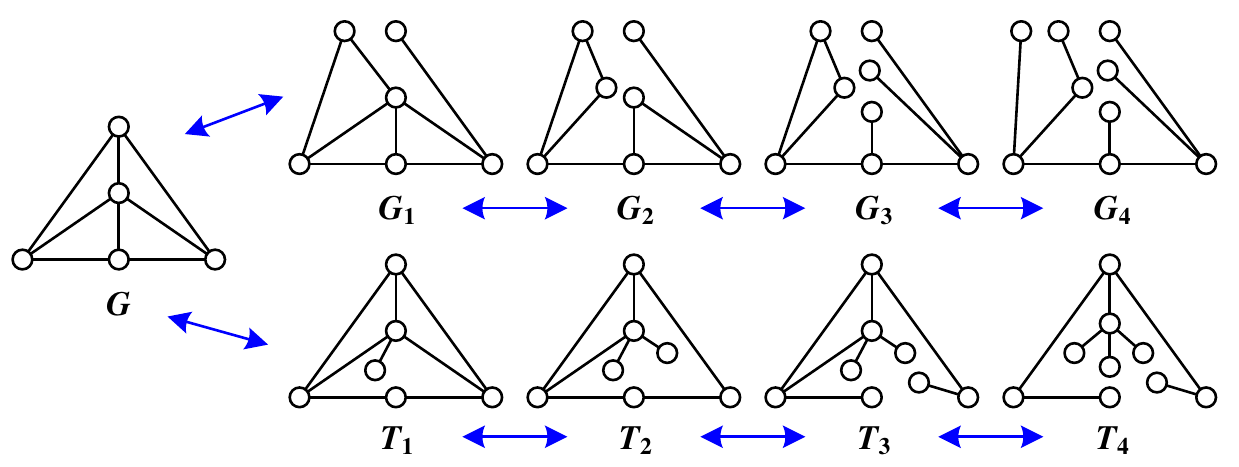}\\
\caption{\label{fig:33-from-graph-to-tree} {\small A connected graph $G$ can be vertex-split into trees $G_4$ and $T_4$.}}
\end{figure}

\begin{thm}\label{thm:2-vertex-split-graphs-isomorphic}
\cite{Yao-Su-Sun-Wang-Graph-Operations-2021, Wang-Su-Yao-divided-2020} Suppose that two connected graphs $G$ and $H$ admit a mapping $f:V(G)\rightarrow V(H)$. In general, a vertex-split graph $G\wedge u$ with $\textrm{deg}_G(u)\geq 2$ is not unique, so we have a vertex-split graph set $S_G(u)=\{G\wedge u\}$, similarly, we have another vertex-split graph set $S_H(f(u))=\{H\wedge f(u)\}$. If each vertex-split graph $L\in S_G(u)$ corresponds to another vertex-split graph $T\in S_H(f(u))$ such that $L\cong T$, and vice versa, we write this fact as $G\wedge u\cong H\wedge f(u)$ for each vertex $u\in V(G)$, then we claim that $G$ is \emph{isomorphic} to $H$, that is $G\cong H$.
\end{thm}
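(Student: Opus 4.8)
The plan is to prove the two implications separately: if $G\cong H$ then $G\wedge u\cong H\wedge f(u)$ for each $u\in V(G)$ under a suitable choice of $f$, and conversely if $G\wedge u\cong H\wedge f(u)$ holds (in the set-wise sense of the statement) for every $u\in V(G)$, then $G\cong H$. The forward direction should be routine: take $f$ to be an isomorphism $G\to H$; then $f$ carries $N_G(u)$ bijectively onto $N_H(f(u))$, so every partition $\{N(x'),N(x'')\}$ of $N_G(u)$ used to form a vertex-split graph $L=G\wedge u$ transports through $f$ to a partition of $N_H(f(u))$, yielding $T=H\wedge f(u)$ with $L\cong T$ via the obvious extension of $f$ (send $x'\mapsto (f(u))'$, $x''\mapsto (f(u))''$ and keep $f$ elsewhere). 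Running the same argument with $f^{-1}$ gives the ``vice versa'' part, so $S_G(u)$ and $S_H(f(u))$ are in isomorphism-preserving bijection.

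The substantive direction is the converse. Here I would argue that the hypothesis, applied at a single well-chosen vertex $u$ with $\deg_G(u)\ge 2$, already forces $G\cong H$ when combined with information recoverable from the split graphs. The key observation is that a vertex-split graph $G\wedge u$ retains all of $G$ except that the vertex $u$ is replaced by two vertices $u',u''$ of positive degree whose neighborhoods partition $N_G(u)$; in particular $G$ is recovered from any $L\in S_G(u)$ by the vertex-coinciding operation $u'\odot u''$ (this is exactly the $G=\wedge^{-1}(T)$, $G=\odot_m(T)$ machinery of Theorem \ref{thm:666666} specialized to one split). So from $L\cong T$ for a matched pair, one wants to conclude that coinciding the two split-vertices on each side produces isomorphic graphs. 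The isomorphism $L\to T$ need not send $\{u',u''\}$ to $\{(f(u))',(f(u))''\}$, which is the crux of the difficulty; to get around it one uses the freedom in choosing the partition. Specifically, I would pick the split that isolates a single neighbor, i.e. $N(u')=\{y\}$ for some $y\in N_G(u)$, making $u'$ a leaf adjacent to $y$; among all $L\in S_G(u)$ these ``leaf-type'' split graphs are distinguishable, and an isomorphism to the corresponding $T\in S_H(f(u))$ can be post-composed with an automorphism so as to respect the leaf, after which deleting the leaf and merging recovers an isomorphism $G\to H$.

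The main obstacle I anticipate is precisely this identification of the split-vertices across an abstract isomorphism $L\cong T$: without extra bookkeeping, $L\cong T$ only says the two graphs are abstractly the same, not that the distinguished pair of vertices corresponds, and a careless argument would be circular. I would handle it by a degree/structure argument — e.g. choosing $u$ to be a vertex whose degree is large enough that $u',u''$ occupy a structurally forced position in every split graph, or by iterating over all vertices $u$ simultaneously so that the family $\{S_G(u)\}_{u\in V(G)}$ pins down the adjacency of $G$ edge by edge — and then invoke the hypothesis ``$G\wedge u\cong H\wedge f(u)$ for each vertex $u$'' in full strength rather than at one vertex. A secondary point to check is that $f$ in the statement may be assumed to be a bijection on vertex sets with $|V(G)|=|V(H)|$ (otherwise the hypothesis is vacuous or inconsistent), and that vertices $u$ with $\deg_G(u)\le 1$ are harmless since no splitting occurs there and their incidence is already determined by the splits at their neighbors. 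Assembling these, one concludes $G\cong H$, and since the cited references \cite{Yao-Su-Sun-Wang-Graph-Operations-2021, Wang-Su-Yao-divided-2020} already contain the detailed verification, I would present the above as the skeleton and defer the routine case analysis.
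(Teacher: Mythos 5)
This theorem is stated in the paper only as a quoted result from \cite{Yao-Su-Sun-Wang-Graph-Operations-2021, Wang-Su-Yao-divided-2020}; the paper itself contains no proof of it, so there is nothing internal to compare your argument against, and your proposal has to stand on its own. As written, it does not: you have correctly isolated the crux — an abstract isomorphism $L\cong T$ between a split graph $L\in S_G(u)$ and a split graph $T\in S_H(f(u))$ carries no information about where the distinguished pair of split-vertices goes — but you never actually close that gap. The ``leaf-type split'' device plus ``post-compose with an automorphism so as to respect the leaf'' is not justified and fails in general: if you split $u$ so that $u'$ becomes a leaf hanging on some $y\in N_G(u)$, an isomorphism $L\to T$ may send $u'$ to a leaf that $H$ already possessed (attached at a vertex unrelated to $f(u)$) and send $u''$ elsewhere, and there need be no automorphism of $T$ repairing this; merging the wrong pair then does not reconstruct $H$, so you cannot conclude $G\cong H$ from that single matched pair. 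Your claim that the hypothesis ``applied at a single well-chosen vertex already forces $G\cong H$'' is therefore unsupported, and the fallback — ``iterating over all vertices $u$ simultaneously so that the family $\{S_G(u)\}_u$ pins down the adjacency of $G$ edge by edge'' — is exactly the substantive content of the theorem, which you gesture at but do not carry out before deferring to the cited references. A reconstruction-style argument of this kind needs real bookkeeping (e.g.\ an induction on $\sum_u(\deg u-1)$ or on the number of independent cycles, together with a mechanism for locating the split pair inside $T$, say via degree sums $\deg_T(u')+\deg_T(u'')=\deg_G(u)$ and the ``vice versa'' direction of the hypothesis), none of which appears in the proposal.

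Two smaller points. The ``forward direction'' you prove first is not part of the statement (the theorem is a single implication from the split-set hypothesis to $G\cong H$), so while harmless it adds nothing. And your remark that $f$ must in effect be a bijection with $|V(G)|=|V(H)|$ is right and worth making explicit, since otherwise the hypothesis is vacuous; but observing this does not substitute for the missing main argument.
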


\begin{rem}\label{rem:333333}
Although each graph $G\wedge u$ keeps all of edges of $G$, and each graph $H\wedge f(u)$ keeps all of edges of $H$, however, it is not easy to show $G\wedge u\cong H\wedge f(u)$ for each vertex $u\in V(G)$ defined in Theorem \ref{thm:2-vertex-split-graphs-isomorphic}. The authors in \cite{Wang-Su-Yao-divided-2020} have shown two parameters of graphs based on the split connectivity:

\textbf{The v-split connectivity.} A \emph{v-split $k$-connected graph} $H$ holds: $H\wedge V^*$ (or $H\wedge \{x_i\}^{k}_1$) is disconnected, where $V^*=\{x_1,x_2,\dots,x_k\}$ is a subset of $V(H)$, each component $H_j$ of $H\wedge \{x_i\}^{k}_1$ has at least a vertex $w_j\not \in V^*$, $|V(H\wedge \{x_i\}^{k}_1)|=k+|V(H)|$ and $|E(H\wedge \{x_i\}^{k}_1)|=|E(H)|$. The smallest number of $k$ for which $H\wedge \{x_i\}^{k}_1$ is disconnected is called the \emph{v-split connectivity} of $H$, denoted as $\kappa_{d}(H)$ (see an example shown in Fig.\ref{fig:5-degree-connectivity}).

\textbf{The e-split connectivity.} An \emph{e-split $k$-connected graph} $H$ holds: $H\wedge \{e_i\}^{k}_1$ (or $H\wedge E^*$) is disconnected, where $E^*=\{e_1,e_2,\dots,e_k\}$ is a subset of $E(H)$, each component $H_j$ of $H\wedge \{e_i\}^{k}_1$ has at least a vertex $w_j$ being not any end of any edge of $E^*$, $|V(H\wedge \{e_i\}^{k}_1)|=2k+|V(H)|$ and $|E(H\wedge \{e_i\}^{k}_1)|=k+|E(H)|$. The smallest number of $k$ for which $H\wedge \{e_i\}^{k}_1$ is disconnected is called the \emph{e-split connectivity} of $H$, denoted as $\kappa'_{d}(H)$ (see an example shown in Figure \ref{fig:5-degree-connectivity}).\paralled
\end{rem}

\begin{lem}\label{thm:equivalent-proof}
\cite{Wang-Su-Yao-divided-2020} A graph $G$ is $k$-connected if and only if it is v-divided $k$-connected, namely, $\kappa_{d}(H)=\kappa(H)$.
\end{lem}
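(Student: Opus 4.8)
For the final statement, Lemma~\ref{thm:equivalent-proof}, the plan is to prove the two inequalities $\kappa_{d}(H)\leq \kappa(H)$ and $\kappa(H)\leq \kappa_{d}(H)$ separately. Combined with the standard fact that $G$ is $k$-connected exactly when $\kappa(G)\geq k$, and with the definition in Remark~\ref{rem:333333} that $G$ is v-divided $k$-connected exactly when $\kappa_{d}(G)\geq k$, these two inequalities yield the claimed equivalence. The complete graphs, where no ordinary vertex cut exists and (because any two un-split original vertices stay adjacent after splitting) no valid v-split disconnection exists either, I would dispose of in a separate remark, observing that both quantities take the same conventional value $|V(H)|-1$; for every other $H$ a vertex cut exists and the argument below applies.

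For $\kappa_{d}(H)\leq \kappa(H)$: let $S$ be a minimum vertex cut, so $|S|=\kappa(H)$, fix one component $C$ of $H-S$, and put $D=V(H)\setminus(S\cup V(C))\neq\emptyset$. I would split each $s\in S$ into two vertices $s_{C}$ and $s_{\overline{C}}$, routing to $s_{C}$ every edge from $s$ into $V(C)$, routing to $s_{\overline{C}}$ every edge from $s$ into $D$, and routing every edge $st$ with $t\in S$ to the pair $s_{C}t_{C}$. A finite check over the edge types then shows $H\wedge S$ has no edge between $V(C)\cup\{s_{C}:s\in S\}$ and $D\cup\{s_{\overline{C}}:s\in S\}$, hence is disconnected. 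The point that makes this split \emph{valid} in the sense of Remark~\ref{rem:333333} is the classical tightness of a minimum cut: every $s\in S$ has a neighbor in every component of $H-S$. This yields (i) each $s_{C}$ is adjacent to $C$, so $V(C)\cup\{s_{C}:s\in S\}$ is a single component containing original (non-$S$) vertices, and (ii) each $s_{\overline{C}}$ has an edge into $D$, so it is not isolated and its component contains original vertices of $D$. Thus $H\wedge S$ is a valid v-split disconnection using $|S|$ vertices, giving $\kappa_{d}(H)\leq |S|=\kappa(H)$.

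For $\kappa(H)\leq \kappa_{d}(H)$: let $k=\kappa_{d}(H)$ and let $V^{*}=\{x_{1},\dots,x_{k}\}$ realize a valid disconnection of $G'=H\wedge V^{*}$. Choose a component $A'$ of $G'$ and let $B'$ be the union of the remaining components; by validity each of $A',B'$ contains an original vertex outside $V^{*}$. Classify each $x_{i}$: type $A$ if both $x_{i}',x_{i}''\in A'$; type $B$ if both lie in $B'$; otherwise place $x_{i}$ in $W$. Let $A$ consist of the original vertices lying in $A'$ together with the type-$A$ vertices, and define $B$ symmetrically; then $V(H)=A\sqcup B\sqcup W$ with $A,B\neq\emptyset$. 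The core observation is that any edge of $H$ joining $A$ to $B$ would, under the splitting operation, become an edge of $G'$ joining a vertex of $A'$ to a vertex of $B'$, which is impossible since $A'$ and $B'$ lie in different components of $G'$. Hence $W$ is a vertex cut of $H$ (or $H$ is already disconnected when $W=\emptyset$), and $\kappa(H)\leq |W|\leq k=\kappa_{d}(H)$.

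The main obstacle is the first inequality, and specifically not the disconnectedness of $H\wedge S$ but the two side-conditions of a v-split disconnection: that no component consists solely of split vertices, and that one does not manufacture a spurious separation by leaving a split vertex isolated. Both are controlled by the minimality of $S$ (tightness against every component of $H-S$), and routing each intra-cut edge $st$ to $s_{C}t_{C}$ is the device that keeps the ``outside'' copies $s_{\overline{C}}$ anchored to $D$ rather than floating free. The remaining steps --- the edge-type case analyses in both directions, and the bookkeeping on vertex and edge counts to match $|V(H\wedge \{x_{i}\}_{1}^{k})|=k+|V(H)|$ and $|E(H\wedge \{x_{i}\}_{1}^{k})|=|E(H)|$ --- are routine.
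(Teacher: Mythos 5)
The paper does not prove this lemma: it is quoted verbatim from the cited reference \cite{Wang-Su-Yao-divided-2020}, so there is no in-paper argument to compare yours against. Your two-inequality proof is correct as far as I can check it: for $\kappa_{d}(H)\leq\kappa(H)$, the tightness of a minimum cut (every cut vertex has a neighbour in every component of $H-S$) is exactly what guarantees that each copy $s_{C}$, $s_{\overline{C}}$ receives a nonempty neighbourhood and that every component of the split graph contains an original vertex, so the constructed $H\wedge S$ is a valid v-split disconnection; for $\kappa(H)\leq\kappa_{d}(H)$, the partition into $A$, $B$, $W$ and the observation that an $A$--$B$ edge of $H$ would survive as an $A'$--$B'$ edge of the split graph correctly shows $W$ is a vertex cut. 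The only caveat is the complete-graph case, which you handle by convention; whether the cited source adopts the same convention cannot be verified from this paper, but your treatment is internally consistent with the definition in Remark~\ref{rem:333333}, under which complete graphs admit no valid v-split disconnection at all.
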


\begin{thm}\label{thm:lemma-equivalent-proof}
\cite{Wang-Su-Yao-divided-2020} If a $k$-connected graph has a property related with its $k$-connectivity, so do a $v$-divided $k$-connected graph.
\end{thm}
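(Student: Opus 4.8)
The final statement to prove is Theorem~\ref{thm:lemma-equivalent-proof}: if a $k$-connected graph has a property related with its $k$-connectivity, so does a $v$-divided $k$-connected graph.

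\medskip

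The plan is to reduce this metatheorem to the equivalence already recorded in Lemma~\ref{thm:equivalent-proof}, namely $\kappa_d(H)=\kappa(H)$. First I would make precise what ``a property related with $k$-connectivity'' means: it should be any graph property $\mathcal{P}$ whose definition, when unwound, refers to the graph only through the value $\kappa(H)$ and through the structural consequences of $H$ being $k$-connected (Menger-type vertex-disjoint path statements, non-separability by fewer than $k$ vertices, existence of $k$ internally disjoint $u$--$v$ paths, etc.). With that reading, the theorem is essentially a substitution statement: since the notion ``v-divided $k$-connected'' is, by Lemma~\ref{thm:equivalent-proof}, \emph{extensionally the same class of graphs} as ``$k$-connected,'' any predicate that is logically built from $\kappa(\cdot)$ and the $k$-connectivity axioms holds of one exactly when it holds of the other.

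\medskip

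The key steps, in order, are: (1) State the definition of $v$-divided $k$-connectedness from Remark~\ref{rem:333333} (the smallest $k$ for which $H\wedge\{x_i\}_1^k$ is disconnected, with the side conditions on components and on $|V|,|E|$), and quote $\kappa_d(H)=\kappa(H)$ from Lemma~\ref{thm:equivalent-proof}. (2) Observe that the equality of these two invariants means precisely that ``$H$ is $k$-connected'' $\iff$ ``$H$ is $v$-divided $k$-connected,'' so the two adjectives pick out the same graphs with the same parameter. (3) Let $\mathcal{P}$ be a property ``related with $k$-connectivity.'' By hypothesis $\mathcal{P}$ is derivable from the statement ``$\kappa(H)\ge k$'' together with its standard structural corollaries. (4) Since every such corollary is a theorem about graphs $G$ with $\kappa(G)\ge k$, and a $v$-divided $k$-connected graph is such a graph by step (2), each corollary transfers verbatim; hence $\mathcal{P}$ holds of the $v$-divided $k$-connected graph. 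I would also remark that this is exactly the pattern of Theorem~\ref{thm:lemma-equivalent-proof}'s companion, Lemma~\ref{thm:equivalent-proof}, so one may phrase the proof as: ``$v$-divided $k$-connectedness is not a new hypothesis but a reformulation, and reformulations preserve all consequences.''

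\medskip

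The main obstacle is not mathematical depth but making the quantifier ``a property related with its $k$-connectivity'' rigorous enough that the transfer argument is a genuine proof rather than a tautology; I expect to handle this by fixing, once and for all, that ``related with $k$-connectivity'' abbreviates ``expressible in terms of $\kappa(G)$ and Menger-type disjoint-path data,'' and then the argument closes immediately from Lemma~\ref{thm:equivalent-proof}. A secondary point to be careful about is that the operation $H\wedge\{x_i\}_1^k$ changes $|V|$ (it becomes $k+|V(H)|$) while keeping $|E|$ fixed, so when a property $\mathcal{P}$ is stated in terms of $p$ and $q$ one must check it is invoked for the original $H$, not for the split graph; but since the \emph{conclusion} is about the $v$-divided $k$-connected graph as an abstract $k$-connected graph (via $\kappa_d=\kappa$), this bookkeeping causes no real trouble.
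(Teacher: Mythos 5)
Your proposal is correct and takes the only route available: the paper itself gives no proof of Theorem~\ref{thm:lemma-equivalent-proof} (it is quoted from \cite{Wang-Su-Yao-divided-2020} immediately after Lemma~\ref{thm:equivalent-proof}), and the intended argument is exactly the transfer principle you describe, which follows at once from the equivalence $\kappa_{d}(H)=\kappa(H)$ of Lemma~\ref{thm:equivalent-proof}. Your extra care in fixing what ``a property related with its $k$-connectivity'' should mean, and in noting that the split graph $H\wedge \{x_i\}^{k}_1$ has altered vertex and edge counts so the property must be read off the original graph, is sensible bookkeeping and does not change the substance.
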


\begin{thm}\label{thm:vertex-divided-vs-e-divided}
\cite{Wang-Su-Yao-divided-2020} Any connected graph $G$ holds the inequalities $\kappa'_{d}(G)\leq \kappa_{d}(G)\leq 2\kappa'_{d}(G)$ true, and the boundaries are reachable.
\end{thm}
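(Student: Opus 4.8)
\textbf{Proof proposal for Theorem \ref{thm:vertex-divided-vs-e-divided}.}

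The plan is to prove the two inequalities $\kappa'_{d}(G)\leq \kappa_{d}(G)$ and $\kappa_{d}(G)\leq 2\kappa'_{d}(G)$ separately, in each case by taking an optimal disconnecting family for one split operation and manufacturing a disconnecting family for the other. Throughout I will use Lemma \ref{thm:equivalent-proof}, which lets me replace $\kappa_{d}(G)$ by the ordinary connectivity $\kappa(G)$ whenever that is more convenient; this is the main leverage the earlier results give us.

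First I would establish $\kappa'_{d}(G)\leq \kappa_{d}(G)$. Take a smallest vertex set $V^*=\{x_1,\dots,x_k\}$ with $k=\kappa_{d}(G)$ such that $G\wedge\{x_i\}_1^k$ is disconnected with each component retaining a vertex outside $V^*$. The idea is to choose, for each $x_i$, one edge $e_i$ incident with $x_i$ that is ``cut across'' by the vertex-split (i.e.\ an edge whose two sides of the split $x_i',x_i''$ end up in different components), and to e-split exactly these edges $E^*=\{e_1,\dots,e_k\}$. I then need to argue that $G\wedge E^*$ is disconnected and that each of its components still contains a vertex that is not an end of any edge of $E^*$ — this uses that in $G\wedge V^*$ each component already had a vertex avoiding $V^*$, hence avoiding all of $e_1,\dots,e_k$ except possibly at their far endpoints, which a small case check handles. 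This gives $\kappa'_{d}(G)\leq k=\kappa_{d}(G)$.

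Next, for $\kappa_{d}(G)\leq 2\kappa'_{d}(G)$, I would start from a smallest edge set $E^*=\{e_1,\dots,e_m\}$, $m=\kappa'_{d}(G)$, with $G\wedge E^*$ disconnected in the required sense. For each edge $e_i=u_iv_i$ put \emph{both} its endpoints into a vertex set $V^{**}=\{u_1,v_1,\dots,u_m,v_m\}$, so $|V^{**}|\le 2m$. The claim is that some vertex-split $G\wedge V^{**}$ (choosing the neighbor-partitions at each $u_i,v_i$ to mirror how the edge-splits separated the graph) is disconnected with the components-have-an-outside-vertex property; again the property transfers from $G\wedge E^*$ because a vertex that is not an end of any edge in $E^*$ is in particular not in $V^{**}$. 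Hence $\kappa_{d}(G)\le |V^{**}|\le 2\kappa'_{d}(G)$. I expect the bookkeeping here — matching up components of $G\wedge E^*$ with components of the constructed $G\wedge V^{**}$ and verifying no component collapses — to be the main obstacle; degenerate cases (an $e_i$ with an endpoint of degree $1$, or two edges of $E^*$ sharing an endpoint) need separate attention, and this is presumably why the statement only claims the bound rather than equality.

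Finally, reachability of both boundaries should be shown by explicit families: for the lower bound $\kappa'_{d}=\kappa_{d}$, a cycle $C_n$ (where splitting one vertex, or one edge, already disconnects, giving $\kappa'_d=\kappa_d=1$) or any graph where the optimal $E^*$ consists of pairwise nonadjacent edges each of whose endpoints has a ``private'' neighbor; for the upper bound $\kappa_{d}=2\kappa'_{d}$, a graph built so that the only way to vertex-disconnect it is to split both endpoints of each edge in an optimal edge-cut — for instance a suitable theta-graph or a small $3$-regular example where a single e-split suffices but any v-split family must use twice as many vertices. I would present one clean instance for each side rather than a general construction.
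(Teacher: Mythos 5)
First, note that the paper itself contains no proof of this theorem: it is quoted verbatim from the cited reference, with $\kappa_{d}$ and $\kappa'_{d}$ only defined in the remark accompanying Figure \ref{fig:5-degree-connectivity}. So your proposal must stand on its own, and as written it has concrete problems. The clearest one is in the "reachability" part: your cycle example is false. Splitting a single vertex $x$ of $C_n$ distributes its two incident edges between the two copies, producing a path, which is connected; and indeed Lemma \ref{thm:equivalent-proof} forces $\kappa_{d}(C_n)=\kappa(C_n)=2$. Likewise a single e-split of a cycle edge $uv$ either leaves the graph connected or isolates the doubled edge, whose two vertices are both ends of the split edge, so that component has no admissible witness vertex; hence $\kappa'_{d}(C_n)=2$ as well, not $1$. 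A path or any tree with a cut vertex would give the lower boundary $\kappa'_{d}=\kappa_{d}$, and the paper's Figure \ref{fig:5-degree-connectivity} already exhibits the upper boundary ($\kappa'_{d}=1$, $\kappa_{d}=2$); your theta-graph suggestion does not work for that side either, since no single e-split of a theta graph disconnects it while leaving every component with a vertex avoiding the ends of the split edge.

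The two inequality arguments also have gaps beyond "bookkeeping". For $\kappa'_{d}(G)\leq\kappa_{d}(G)$ your key object, an edge "cut across by the vertex-split", is not well defined: a vertex-split severs no edge, every edge at $x_i$ attaches to exactly one of $x_i',x_i''$. What you actually need is to pick an arbitrary incident edge $x_iy_i$, e-split it so that the partition at $x_i$ mirrors the v-split and one copy of $y_i$ becomes a pendant; but then the witness condition is the real difficulty, because each $y_i$ is now an end of an edge of $E^*$, and the vertex $w_j\notin V^*$ guaranteed for a component of $G\wedge V^*$ may be exactly some $y_i$, leaving that component with no argued witness in $G\wedge E^*$. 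The symmetric issue appears in $\kappa_{d}(G)\leq 2\kappa'_{d}(G)$: mirroring the e-splits by v-splitting $V^{**}$ amounts to deleting one of the two doubled edges at each $e_i$, so the components of $G\wedge V^{**}$ refine those of $G\wedge E^*$, and a refined piece need not contain any vertex outside $V^{**}$. Handling these witness conditions (by a careful choice of the edges $e_i$, of the neighbor-partitions, and of which doubled edge survives) is precisely the substance of the theorem under these definitions, and the proposal defers it rather than supplying it.
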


\begin{figure}[h]
\centering
\includegraphics[width=16cm]{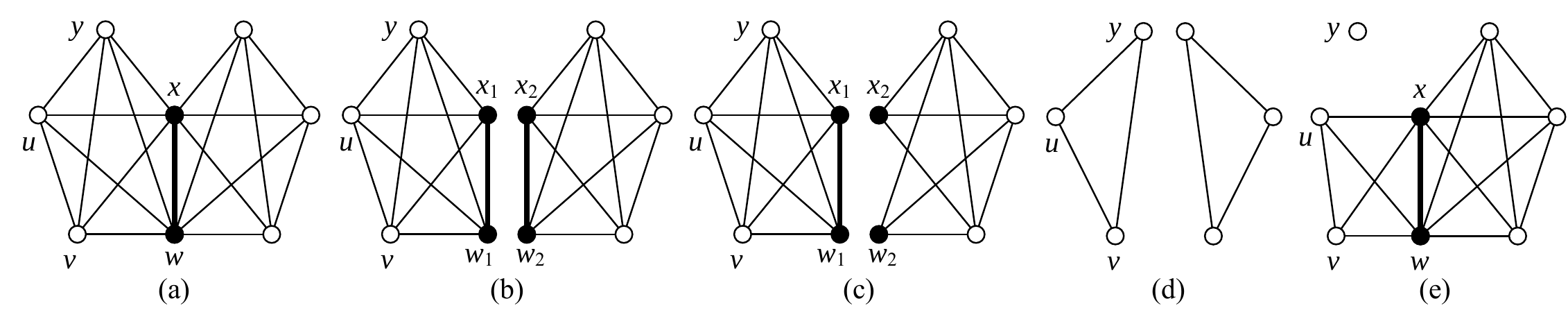}
\caption{\label{fig:5-degree-connectivity}{\small (a) A graph $H$ with minimum degree $\delta(H)=4$; (b) an e-split graph $H\wedge xw$ with $\kappa'_{d}(H)=1$; (c) a v-split graph $H\wedge \{x,w\}$ with $\kappa_{d}(H)=2$; (d) a v-deleted graph $H-\{x,w\}$ with $\kappa(H)=2$; (e) an e-deleted graph $H-\{yx,yw,yu,yv\}$ with $\kappa'(H)=4$, cited from \cite{Wang-Su-Yao-divided-2020}.}}
\end{figure}

(4) \cite{Zhou-Yao-Chen-Tao2012} \textbf{Adding leaves operation.} A graph operation called ``adding leaves operation'' is defined as: Adding new vertices $w_1,w_2,\dots, w_r$ to a graph $G$, and join each new vertex $w_i$ with some vertex $z_i$ of $G$ by an edge $w_iz_i$, the resultant graph is denoted as $G+L_{eaf}(w)$, called \emph{leaf-added graph}, where the leaf set $L_{eaf}(w)=\{w_1,w_2,\dots, w_r\}$, and each vertex $w_i$ is called a \emph{leaf} of $G+L_{eaf}(w)$. Conversely, removing all leaves of a graph $H$ produces a graph $H-L_{eaf}(H)$, where $L_{eaf}(H)$ is the set of all leaves of the graph $H$.

We will use the following trees in the later sections: Let $T$ be a tree with diameter $D(T)$ not less than 3, and let $L(T)$ be the set of all leaves of the tree $T$. We have trees $T_{i+1}=T_{i}-L(T_i)$ with $i\in[1,m]$ obtained by removing all leaves of the leaf set $L(T_i)$ such that $T_{m+1}$ is just a star $K_{1,n}$ being a tree with diameter $D(K_{1,n})\leq 2$ and $n\geq 1$, where $T_{1}=T$ and $m\geq 1$. Then we have a formula
\begin{equation}\label{eqa:555555}
m+1=\left \lceil\frac{D(T)}{2} \right \rceil
\end{equation}

See examples shown Fig.\ref{fig:tree-leaf-diameter}, we can see

(i) Trees $G_{i+1}=G_{i}-L(G_i)$ for $i=1,2$, and $G_3=K_{1,1}$ is a star of two vertices and diameter one, such that $2+1=\left \lceil\frac{D(G_1)}{2} \right \rceil=\left \lceil 2.5 \right \rceil=3$.

(ii) Trees $T_{j+1}=T_{j}-L(T_j)$ for $j=1,2$, and $T_3=K_{1,2}$ is a star of three vertices and diameter two, such that $2+1=\left \lceil\frac{D(T_1)}{2} \right \rceil=\left \lceil 3 \right \rceil=3$.

\begin{figure}[h]
\centering
\includegraphics[width=15cm]{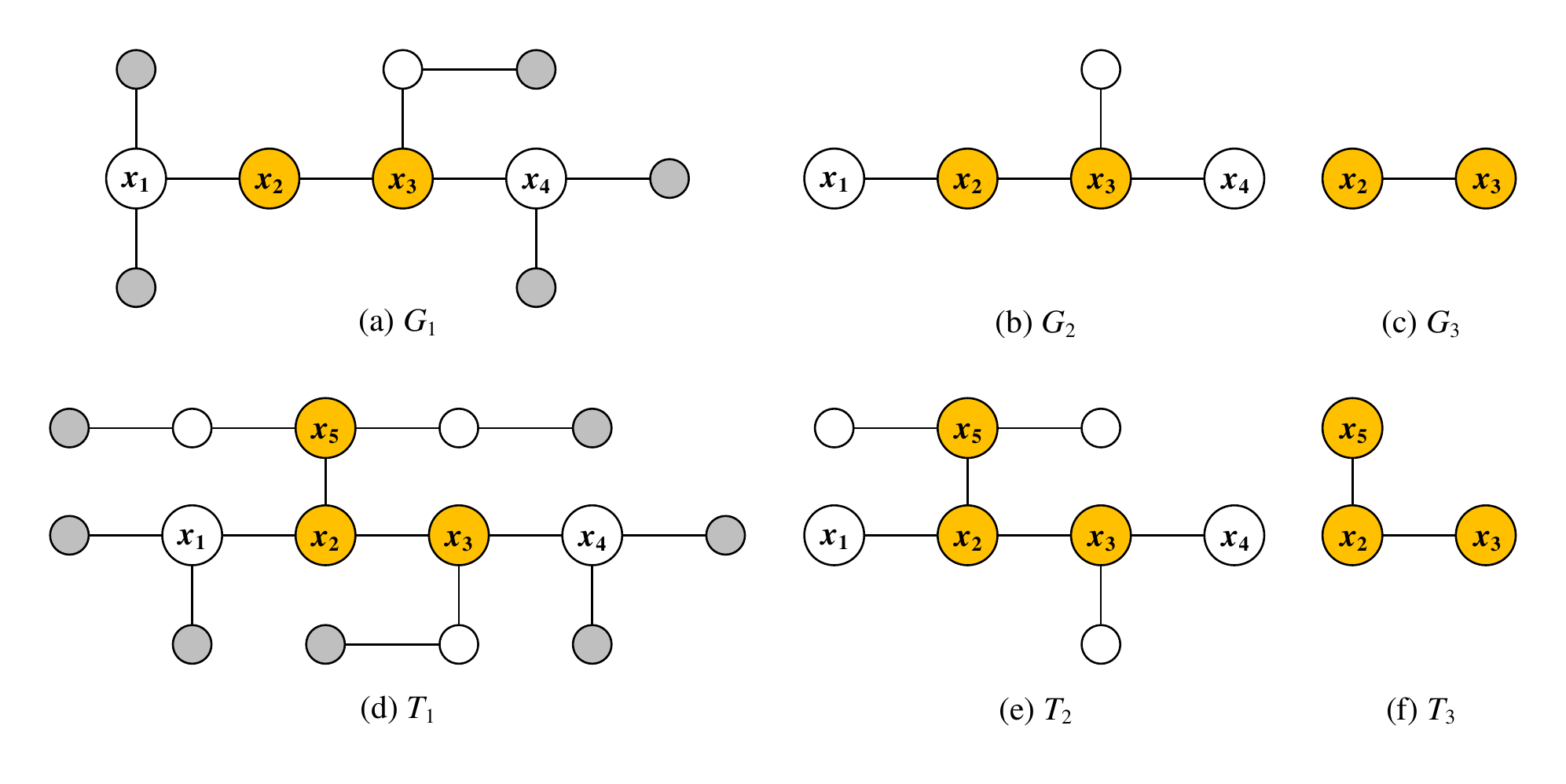}\\
\caption{\label{fig:tree-leaf-diameter} {\small (a) A tree $G_1$ with $D(G_1)=5$; (b) a tree $G_2$ obtained by removing all leaves of $G_1$; (c) a tree $G_3$ obtained by removing all leaves of $G_2$; (d) a tree $T_1$ with $D(T_1)=6$; (e) a tree $T_2$ obtained by removing all leaves of $T_1$; (f) a tree $T_3$ obtained by removing all leaves of $T_2$.}}
\end{figure}

\section{Parameterized total colorings}

\subsection{Basic labelings and colorings}

Acharya and Hegde, in \cite{Acharya-Hegde-1990}, introduced $(k,d)$-arithmetic labeling as: Let $G$ be a graph with $q$ edges and let $k$ and $d$ be positive integers. A labeling $f$ of $G$ is said to be $(k,d)$-arithmetic if the vertex labels are distinct nonnegative integers and the edge labels induced by $f(x)+f(y)$ for each edge $xy$ are $k,k+d, k+2d,\dots ,k+(q-1)d$. Clearly, this graph $G$ must be \emph{bipartite}.

Notice that a $(k,d)$-arithmetic labeling can induce a \emph{strongly c-elegant labeling} introduced by Chang, Hsu, and Rogers \cite{Gallian2021}, and moreover a strongly $c$-elegant labeling is felicitous too.

\begin{defn}\label{defn:Marumuthu-edge-magic-graceful-labeling}
\cite{Marumuthu-G-2015} If there exists a constant $k\geq 0$, such that a $(p, q)$-graph $G$ admits a total labeling $f:V(G)\cup E(G)\rightarrow [1, p+q]$, each edge $uv\in E(G)$ holds
$|f(u)+f(v)-f(uv)|=k$ and $f(V(G)\cup E(G))=[1, p+q]$ true, we call $f$ \emph{edge-magic graceful labeling} of $G$, and $k$ a \emph{magic constant}. Moreover, $f$ is called \emph{super edge-magic graceful labeling} if the vertex color set $f(V(G))=[1, p]$.\qqed
\end{defn}

\begin{defn}\label{defn:S-M-Hegde-kd-graceful-labeling}
\cite{S-M-Hegde2000} For a $(p,q)$-graph $G$, if there is a total labeling $f$ such that the vertex color set $f(V(G))\subseteq S_{b,0,0,d}\cup S_{q-1,k,0,d}$ and the edge color set
$$
f(E(G))=\{|f(u)-f(v)|;\ uv\in E(G)\}=S_{q-1,k,0,d}
$$ then we call $f$ \emph{$(k,d)$-graceful labeling} of $G$. \qqed
\end{defn}

In Definition \ref{defn:S-M-Hegde-kd-graceful-labeling}, as $d=1$, $f$ is called a \emph{$k$-graceful labeling} of $G$; as $(k,d)=(1,2)$, $f$ is called an \emph{odd-graceful labeling} of $G$.

\begin{defn}\label{defn:k-d-arithmetic-labelings}
\cite{Acharya-Hegde-1990} For a $(p, q)$-graph $G$, if there is a total labeling $f$ such that the vertex color set $f(V(G))\subseteq S_{b,0,0,d}\cup S_{q-1,k,0,d}$ and the edge color set
$$
f(E(G))=\{f(xy)=f(x)+f(y): xy\in E(G)\}=S_{q-1,k,0,d}
$$ then we call $f$ $(k,d)$-\emph{arithmetic labeling} of $G$.\qqed
\end{defn}

\begin{defn} \label{defn:k-d-harmonious}
\cite{Yao-Zhang-Sun-Mu-Wang-Zhang2018} If a $(p,q)$-graph $G$ admits a mapping $h:V(G)\rightarrow S_{b,0,0,d}\cup S_{q-1,k,0,d}$ with integers $k\geq 0$ and $d\geq 1$, such that $f(x)\neq f(y)$ for any pair of vertices $x,y$ of $G$, and each edge $uv\in E(G)$ is colored by $h(uv)=h(u)+h(v)~(\bmod^*~qd)$ defined as $h(uv)-k=[h(u)+h(v)-k](\bmod ~qd)$, and the edge color set
$$h(E(G))=S_{q-1,k,0,d}=\{k,k+d,\dots ,k+(q-1)d\}$$
then we call $h$ \emph{$(k,d)$-harmonious labeling} of $G$.\qqed
\end{defn}

Motivated from the parameterized labelings: the $(k,d)$-arithmetic labeling, the $(k,d)$-graceful labeling, the $(k,d)$-harmonious labeling and the new $(k,d)$-type labelings introduced in \cite{Sun-Zhang-Yao-ICMITE-2017}, as well as the well-defined parameterized labelings, we present the following parameterized total colorings.

\begin{defn} \label{defn:kd-w-type-colorings}
\cite{Yao-Su-Wang-Hui-Sun-ITAIC2020, Yao-Wang-2106-15254v1} Let $G$ be a bipartite and connected $(p,q)$-graph, so its vertex set $V(G)=X\cup Y$ with $X\cap Y=\emptyset$ such that each edge $uv\in E(G)$ holds $u\in X$ and $v\in Y$. If there is a total mapping
\begin{equation}\label{eqa:333333}
{
\begin{split}
&f:X\rightarrow S_{m,0,0,d}=\{0,d,\dots ,md\}\\
&f:Y\cup E(G)\rightarrow S_{n,k,0,d}=\{k,k+d,\dots ,k+nd\}
\end{split}}
\end{equation} here it is allowed $f(x)=f(y)$ for some distinct vertices $x,y\in V(G)$. Let $c$ be a fixed non-negative integer.
\begin{asparaenum}[\textrm{\textbf{Ptol}}-1. ]
\item If $f(uv)=|f(u)-f(v)|$ for each edge $uv\in E(G)$, and the edge color set $f(E(G))=S_{q-1,k,0,d}$, and the vertex color set
$$f(V(G)\cup E(G))\subseteq S_{m,0,0,d}\cup S_{q-1,k,0,d}$$ then $f$ is called a \emph{graceful $(k,d)$-total coloring}; and moreover $f$ is called a \emph{strongly graceful $(k,d)$-total coloring} if $f(x)+f(y)=k+(q-1)d$ for each matching edge $xy$ of a perfect matching $M$ of $G$.
\item If $f(uv)=|f(u)-f(v)|$ for each edge $uv\in E(G)$, and the edge color set $f(E(G))=O_{2q-1,k,d}$, and the vertex color set
$$f(V(G)\cup E(G))\subseteq S_{m,0,0,d}\cup S_{2q-1,k,0,d}$$ then $f$ is called an \emph{odd-graceful $(k,d)$-total coloring}; and moreover $f$ is called a \emph{strongly odd-graceful $(k,d)$-total coloring} if $f(x)+f(y)=k+(2q-1)d$ for each matching edge $xy$ of a perfect matching $M$ of $G$.
\item If the mixed color set $$\{f(u)+f(uv)+f(v):uv\in E(G)\}=\{2k+2ad,2k+2(a+1)d,\dots ,2k+2(a+q-1)d\}$$ with $a\geq 0$ and the total color set $f(V(G)\cup E(G))\subseteq S_{m,0,0,d}\cup S_{2(a+q-1),k,a,d}$, then $f$ is called an \emph{edge-antimagic $(k,d)$-total coloring}.
\item If $f(uv)=f(u)+f(v)~(\bmod^*qd)$ defined by $f(uv)-k=[f(u)+f(v)-k](\bmod ~qd)$ for each edge $uv\in E(G)$, and the edge color set $f(E(G))=S_{q-1,k,0,d}$, then $f$ is called a \emph{harmonious $(k,d)$-total coloring}.
\item If $f(uv)=f(u)+f(v)~(\bmod^*qd)$ defined by $f(uv)-k=[f(u)+f(v)-k](\bmod ~2qd)$ for each edge $uv\in E(G)$, and the edge color set $f(E(G))=O_{2q-1,k,d}$, then $f$ is called an \emph{odd-elegant $(k,d)$-total coloring}.

--------- \textbf{Four $W$-magic types:}

\item If there are the edge-magic constraint
\begin{equation}\label{eqa:definition-edge-magic-constraint}
f(u)+f(uv)+f(v)=c,~uv\in E(G)
\end{equation} the edge color set $f(E(G))=S_{q-1,k,0,d}$ and the total color set $f((V(G)\cup E(G))\subseteq S_{m,0,0,d}\cup S_{q-1,k,0,d}$, then $f$ is called an \emph{edge-magic $(k,d)$-total coloring}. And moreover, $f$ is called a \emph{pseudo-edge-magic $(k,d)$-total coloring} if both $|f(E(G))|< q$ and Eq.(\ref{eqa:definition-edge-magic-constraint}) hold true.
\item If there are the edge-difference constraint
\begin{equation}\label{eqa:definition-edge-difference-constraint}
f(uv)+|f(u)-f(v)|=c,~uv\in E(G)
\end{equation} and the edge color set $f(E(G))=S_{q-1,k,0,d}$, then $f$ is called an \emph{edge-difference $(k,d)$-total coloring}. And moreover $f$ is called a \emph{pseudo-edge-difference $(k,d)$-total coloring} if both $|f(E(G))|< q$ and Eq.(\ref{eqa:definition-edge-difference-constraint}) hold true.
\item If there are the graceful-difference constraint
\begin{equation}\label{eqa:definition-graceful-difference-constraint}
\big ||f(u)-f(v)|-f(uv)\big |=c,~uv\in E(G)
\end{equation} and the edge color set $f(E(G))=S_{q-1,k,0,d}$, then we call $f$ \emph{graceful-difference $(k,d)$-total coloring}. And moreover $f$ is called a \emph{pseudo-graceful-difference $(k,d)$-total coloring} if both $|f(E(G))|< q$ and Eq.(\ref{eqa:definition-graceful-difference-constraint}) hold true.
\item If there are the felicitous-difference constraint
\begin{equation}\label{eqa:definition-felicitous-difference-constraint}
|f(u)+f(v)-f(uv)|=c,~uv\in E(G)
\end{equation} and the edge color set $f(E(G))=S_{q-1,k,0,d}$, then we call $f$ \emph{felicitous-difference $(k,d)$-total coloring}. And moreover $f$ is called a \emph{pseudo-felicitous-difference $(k,d)$-total coloring} if both $|f(E(G))|< q$ and Eq.(\ref{eqa:definition-felicitous-difference-constraint}) hold true.\qqed
\end{asparaenum}
\end{defn}

\begin{defn}\label{defn:odd-edge-W-type-total-labelings-definition}
\cite{Yao-Zhang-Yang-Wang-Odd-Edge-arXiv-02477} Let $G$ be a bipartite and connected $(p,q)$-graph with vertex set $V(G)=X\cup Y$, and $X\cap Y=\emptyset$. There is a total coloring
$${
\begin{split}
&h:X\rightarrow S_{m,0,0,d}\\
&h:Y\cup E(G)\rightarrow S_{n,k,0,d}\cup O_{2q-1,k,d}
\end{split}}
$$ with integers $k\geq 0$ and $d\geq 1$. Let $c$ be an non-negative integer.

(i) If there are the edge-magic constraint $h(u)+h(uv)+h(v)=c$ for each edge $uv\in E(G)$ and the edge color set $h(E(G))=O_{2q-1,k,d}$, then $h$ is called an \emph{odd-edge edge-magic $(k,d)$-total \textbf{labeling}} when $|h(V(G))|=p$; and moreover $h$ is called an \emph{odd-edge edge-magic $(k,d)$-total \textbf{coloring}} when the vertex color set cardinality $|h(V(G))|<p$.

(ii) If there are the edge-difference constraint $h(uv)+|h(u)-h(v)|=c$ for each edge $uv\in E(G)$ and the edge color set $h(E(G))=O_{2q-1,k,d}$, then $h$ is called an \emph{odd-edge edge-difference $(k,d)$-total \textbf{labeling}} when $|h(V(G))|=p$; and furthermore $h$ is called an \emph{odd-edge edge-difference $(k,d)$-total \textbf{coloring}} when the vertex color set cardinality $|h(V(G))|<p$.

(iii) If there are the felicitous-difference constraint $|h(u)+h(v)-h(uv)|=c$ for each edge $uv\in E(G)$ and the edge color set $h(E(G))=O_{2q-1,k,d}$, then $h$ is called an \emph{odd-edge felicitous-difference $(k,d)$-total \textbf{labeling}} when $|h(V(G))|=p$; and moreover $h$ is called an \emph{odd-edge felicitous-difference $(k,d)$-total \textbf{coloring}} when the vertex color set cardinality $|h(V(G))|<p$.

(vi) If there are the graceful-difference constraint $\big ||h(u)-h(v)|-h(uv)\big |=c$ for each edge $uv\in E(G)$ and the edge color set $h(E(G))=O_{2q-1,k,d}$, then $h$ is called an \emph{odd-edge graceful-difference $(k,d)$-total \textbf{labeling}} when $|h(V(G))|=p$; and furthermore $h$ is called an \emph{odd-edge graceful-difference $(k,d)$-total \textbf{coloring}} when the vertex color set cardinality $|h(V(G))|<p$.\qqed
\end{defn}

\begin{rem} \label{rem:kd-w-tupe-colorings-definition}
In general, we call $f$ \emph{$W$-constraint $(k,d)$-total coloring} if the $W$-constraint $f(uv)=W\langle f(u), f(v)\rangle$ holds true, and the edge color set $f(E(G))$ hold one of the well-defined total colorings defined in Definition \ref{defn:kd-w-type-colorings} and Definition \ref{defn:odd-edge-W-type-total-labelings-definition}.
\begin{asparaenum}[(1) ]
\item There are new parameters of graphs based on Definition \ref{defn:kd-w-type-colorings} and Definition \ref{defn:odd-edge-W-type-total-labelings-definition}. For a graph $G$, we have two $W$-constraint $(k,d)$-total colorings $g_{\min}$ and $g_{\max}$ of $G$ such that
 $$|g_{\min}(V(G))|\leq |g(V(G))|\leq |g_{\max}(V(G))|$$ for each $W$-constraint $(k,d)$-total coloring $g$ of $G$.

\qquad As this $W$-constraint $(k,d)$-total coloring is a graceful $(1,1)$-total coloring $g_{\max}$ of a tree $T$ means that the maximal number $|g_{\max}(V(T))|$ is close to a graceful labeling of $T$. \textbf{Graceful Tree Conjecture} \cite{A-Rosa-graceful-1967, Gallian2021} says $|g_{\max}(V(T))|=|V(T)|$.

\qquad As this $W$-constraint $(k,d)$-total coloring is the graceful $(k,d)$-total coloring, then a graceful $(1,2)$-total coloring $g_{\max}$ of a tree $T$ means that the maximal number $|g_{\max}(V(T))|$ is close to an odd-graceful labeling of $T$. Gnanajothi proposed \textbf{Odd-graceful Tree Conjecture} \cite{R-B-Gnanajothi-1991} as $|g_{\max}(V(T))|=|V(T)|$.

\qquad As this $W$-constraint $(k,d)$-total coloring is a strongly graceful coloring, Wang and Yao, in \cite{Wang-Yao-equivalent-2020}, have proven that two conjectures \textbf{Graceful Tree Conjecture} and \textbf{Strongly Graceful Tree Conjecture} are equivalent from each other.

\qquad As this $W$-constraint $(k,d)$-total coloring is an edge-magic $(k,d)$-total labeling: In 1970, Anton Kotzig and Alex Rosa defined an \emph{edge-magic total labeling} of a $(p,q)$-graph $G$ as a bijective mapping $f$ from $V(G)\cup E(G)$ to $[1, p+q]$ such that for any edge $uv$, $f(u)+f(v)+f(uv)=c$, where $c$ is a fixed constant. Moreover, they \textbf{conjectured}: \emph{Every tree admits an edge-magic total labeling}.

\item A $W$-constraint $(k,d)$-total coloring $f$ defined in Definition \ref{defn:kd-w-type-colorings} is \emph{proper} if $f(u)\neq f(v)$ for each edge $uv\in E(G)$, and $f(uv)\neq f(uw)$ for any two adjacent edges $uv,uw\in E(G)$. So we call $f$ a \emph{$W$-constraint proper $(k,d)$-total coloring} of $G$. Investigating various $W$-constraint proper $(k,d)$-total colorings of graphs is interesting and challenging.
\item In practical application, we may reduce the restrictions of some $W$-constraint $(k,d)$-total colorings of graphs, such as a $W$-constraint $(k,d)$-total coloring $g$ of a bipartite graph $G$ with vertex bipartition $(X,Y)$ holds
$$g:X\rightarrow \{-md,-(m-1)d,\dots ,-2d,-d\}\cup \{0,d,\dots ,md\}$$ and
$${
\begin{split}
g:Y\cup E(G)\rightarrow &\{-k,-k-d,\dots ,-k-nd\}\cup \{-k+d,-k+2d,\dots ,-k+nd\}\\
&\cup \{k-d,k-2d,\dots ,k-nd\}\cup \{k,k+d,\dots ,k+nd\}
\end{split}}$$ where integers $m,n,k$ and $d$ are non-negative integers.\paralled
\end{asparaenum}
\end{rem}

\begin{defn} \label{defn:w-magic-kd-total-vd-ek-coloring}
$^*$ A $(p,q)$-graph $G$ admits a \emph{$W$-constraint $(k,d)$-total $(vd$-$ek)$-coloring} $f$ defined as
\begin{equation}\label{eqa:vd-ek-coloring}
{
\begin{split}
&f:V(G)\rightarrow S_{m,0,0,d}=\{0,d,\dots, md\}\\
&f:E(G)\rightarrow S_{n,k,0,d}=\{k,k+d, \dots, k+nd\}
\end{split}}
\end{equation} and each edge $uv\in E(G)$ holds the $W$-constraint $f(uv)=W\langle f(u),f(v)\rangle$ for integers $k\geq 0$ and $d\geq 1$. Moreover, $f$ is called a \emph{$W$-constraint $(k,d)$-total $(vd$-$ek)$-all-labeling} if the vertex color set $f(V(G))=S_{p-1,0,0,d}$, the edge color set $f(E(G))=S_{p-1,k,0,d}$, and each edge $uv\in E(G)$ holds the $W$-constraint $f(uv)=W\langle f(u),f(v)\rangle$ for integers $k\geq 0$ and $d\geq 1$.\qqed
\end{defn}

\begin{defn} \label{defn:w-magic-kd-total-vk-ed-labeling}
$^*$ A $(p,q)$-graph $G$ admits a \emph{$W$-constraint $(k,d)$-total $(vk$-$ed)$-coloring} $g$ defined as
\begin{equation}\label{eqa:vk-ed-labeling}
{
\begin{split}
&g:V(G)\rightarrow S_{n,k,0,d}=\{k,k+d, \dots, k+nd\}\\
&g:E(G)\rightarrow S_{m,0,0,d}=\{0,d,\dots, md\}
\end{split}}
\end{equation}
and each edge $uv\in E(G)$ holds the $W$-constraint $g(uv)=W\langle g(u),g(v)\rangle$ for integers $k\geq 0$ and $d\geq 1$. Especially, $g$ is called a \emph{$W$-constraint $(k,d)$-total $(vk$-$ed)$-all-labeling} if the vertex color set $f(V(G))=S_{p-1,k,0,d}$, the edge color set $f(E(G))=S_{p-1,0,0,d}$, and each edge $uv\in E(G)$ holds the $W$-constraint $f(uv)=W\langle f(u),f(v)\rangle$ for integers $k\geq 0$ and $d\geq 1$.\qqed
\end{defn}

\begin{defn} \label{defn:graceful-kd-total-model-coloring}
$^*$ For integers $k\geq 0$ and $d\geq 1$, a $(p,q)$-graph $G$ admits a \emph{graceful $(k,d)$-total model-coloring} $h$ is defined as
\begin{equation}\label{eqa:vk-ed-labeling}
h:V(G)\cup E(G)\rightarrow S_{m,0,0,d}\cup S_{q-1,k,0,d}=\{0,d,\dots, md\}\cup \{k,k+d, \dots, k+(q-1)d\}
\end{equation}
such that each edge $uv\in E(G)$ holds $h(uv)=|h(u)-h(v)~(\bmod^*~qd)|$, and the edge color set
$$
h(E(G))=S_{p-1,k,0,d}=\{k,k+d, \dots, k+(q-1)d\}
$$ where
$$
h(u)-h(v)~(\bmod^*~qd)=k+(q-r)d
$$ if $h(u)-h(v)=k-rd$ (or $h(u)-h(v)=-k+rd$) for $r\geq 2$.\qqed
\end{defn}

\begin{example}\label{exa:8888888888}
In Fig.\ref{fig:33-new-kd-colorings} (a) and (b), a tree $G_1$ admits a graceful-difference $(k,d)$-total labeling $h_1$ holding the graceful-difference constraint
$$
\big ||h_1(u)-h_1(v)|-h_1(uv)\big |=d,~uv\in E(G_1)
$$ and the edge color set $h_1(E(G_1))=S_{5,k,0,d}$; and a tree $G_2$ admits a graceful $(k,d)$-total model-coloring $h_2$ holding $h_2(uv)=|h_2(u)-h_2(v)~(\bmod^*~6d)|$ for each edge $uv\in E(G_2)$ and the edge color set $h_2(E(G_2))=S_{5,k,0,d}$, refer to Definition \ref{defn:graceful-kd-total-model-coloring}. Since $h_1(x)+h_2(x)=k+6d$ for each vertex $x\in V(G_1)=V(G_2)$, we say that $h_i$ is the \emph{totally vertex-dual coloring} of $h_{3-i}$ for $i=1,2$.

A colored tree $G_3$ shown in Fig.\ref{fig:33-new-kd-colorings} (c) admits an edge-difference $(k,d)$-total $(vd$-$ek)$-all-labeling $h_3$ holding the edge-difference constraint
$$h_3(uv)+|h_3(u)-h_3(v)|=k+7d,~uv\in E(G_3)
$$ according to Definition \ref{defn:w-magic-kd-total-vd-ek-coloring}. By Definition \ref{defn:w-magic-kd-total-vk-ed-labeling}, another colored tree $G_4$ shown in Fig.\ref{fig:33-new-kd-colorings} (d) admits a graceful-difference $(k,d)$-total $(vk$-$ed)$-all-labeling $h_4$ holding the graceful-difference constraint
$$\big ||h_4(u)-h_4(v)|-h_4(uv)\big |=0,~uv\in E(G_4)
$$

Notice that $|h_3(v)-h_4(v)|=k$ for each vertex $v\in V(G_3)=V(G_4)$, since $G_3\cong G_4$. So, $h_j$ is the \emph{totally vertex-dual coloring} of $h_{7-j}$ for $j=3,4$. Moreover, we have $h_3(xy)+h_4(xy)=k+7d$ for each edge $xy\in E(G_3)=E(G_4)$, then $h_j$ is the \emph{totally edge-dual coloring} of $h_{7-j}$ for $j=3,4$. Totally, $h_j$ is the \emph{total-dual coloring} of $h_{7-j}$ for $j=3,4$. \qqed
\end{example}

\begin{figure}[h]
\centering
\includegraphics[width=16.4cm]{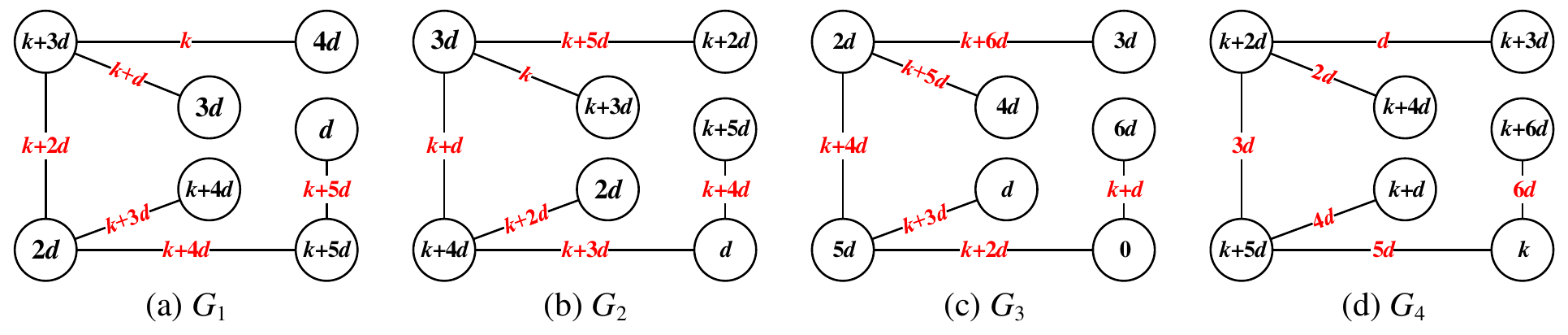}\\
\caption{\label{fig:33-new-kd-colorings}{\small (a) and (b) are two totally vertex-dual colorings holding $g_1(u)+g_2(u)=k+6d$; (b), (c) and (d) are three new $(k,d)$-total colorings/labelings defined in Definition \ref{defn:w-magic-kd-total-vd-ek-coloring}, Definition \ref{defn:w-magic-kd-total-vk-ed-labeling} and Definition \ref{defn:graceful-kd-total-model-coloring}.}}
\end{figure}

For the topological authentication, we present the following three definitions: Definition \ref{defn:vertex-dual-kd-total-coloring}, Definition \ref{defn:edge-dual-kd-total-coloring} and Definition \ref{defn:edge-dual-vertex-kd-total-coloring}.

\begin{defn} \label{defn:vertex-dual-kd-total-coloring}
$^*$ For integers $k\geq 0$ and $d\geq 1$, a $(p,q)$-graph $G$ (as a \emph{public-key}) admits a $W_i$-constraint $(k,d)$-total coloring $f$, and another $(p,q\,')$-graph $H$ (as a \emph{private-key}) admits a $W_j$-constraint $(k,d)$-total coloring $g$, and there is a bijection $\theta:V(G) \rightarrow V(H)$. If there are integers $a,r\geq 1$, such that $f(x)+g[\theta(x)]=a\cdot k+r\cdot d$ for each vertex $x\in V(G)$, then we call $f$ (resp. $g$) \emph{vertex-dual $(k,d)$-total coloring} of $g$ (resp. $f$).\qqed
\end{defn}

\begin{defn} \label{defn:edge-dual-kd-total-coloring}
$^*$ For integers $k\geq 0$ and $d\geq 1$, a $(p,q)$-graph $G$ (as a \emph{public-key}) admits a $W_i$-constraint $(k,d)$-total coloring $f$, and another $(p\,',q)$-graph $H$ (as a \emph{private-key}) admits a $W_j$-constraint $(k,d)$-total coloring $g$, and there is an edge bijection $\pi:E(G) \rightarrow E(H)$. If there are integers $b,r\geq 1$, such that $f(xy)+g[\pi(xy)]=b\cdot k+r\cdot d$ for each edge $xy\in E(G)$, then we call $f$ (resp. $g$) \emph{edge-dual $(k,d)$-total coloring} of $g$ (resp. $f$).\qqed
\end{defn}

\begin{defn} \label{defn:edge-dual-vertex-kd-total-coloring}
$^*$ For integers $k\geq 0$ and $d\geq 1$, a $(p,q)$-graph $G$ (as a \emph{public-key}) admits a $W_i$-constraint $(k,d)$-total coloring $f$, and another $(p\,',q)$-graph $H$ (as a \emph{private-key}) admits a $W_j$-constraint $(k,d)$-total coloring $g$, and there are a bijection $\theta:V(G) \rightarrow V(H)$ and an edge bijection $\phi:E(G) \rightarrow E(H)$. If there are integers $a,b,r,s\geq 1$, such that

$f(x)+g[\theta(x)]=a\cdot k+r\cdot d$ for each vertex $x\in V(G)$ and

$f(xy)+g[\phi(xy)]=b\cdot k+s\cdot d$ for each edge $xy\in E(G)$,\\
then $f$ is called (resp. $g$) \emph{$(v,e)$-dual $(k,d)$-total coloring} of $g$ (resp. $f$).\qqed
\end{defn}

\begin{example}\label{exa:8888888888}
In Fig.\ref{fig:33-dual-kd-totals}, we can see the following facts:
\begin{asparaenum}[\textrm{\textbf{Ex}}-1.]
\item Two trees $H_r\not\cong H_s$ for $r\neq s$ and $r,s\in [1,4]$.

\item Each tree $H_l$ for $l\in [1,3]$ admits a graceful $(k,d)$-total coloring $h_l$ holding $h_l(uv)=|h_l(u)-h_l(v)|$ for each edge $uv\in E(H_l)$.

\item Each tree $H_t$ for $t=4,5$ admits an edge-difference $(k,d)$-total coloring $h_t$ holding the edge-difference constraint
$$
h_t(uv)+|h_t(u)-h_t(v)|=2k+5d,\quad uv\in E(H_t)$$

\item There is a bijection $\theta_{1,i} : V(H_1)\rightarrow V(H_i)$ for $i\in [2,5]$, such that each vertex $x_j\in V(H_1)$ holds $\theta_{1,i}(x_j)=A_j$ for $j\in [1,7]$ and $A\in \{y,u,v,w\}$. Moreover, we get
$$h_1(x_j)+h_i[\theta_{1,i}(x_j)]=k+6d,~i\in [2,5],~j\in [1,7],~A\in \{y,u,v,w\}
$$ and claim that each coloring $h_i$ is a \emph{vertex-dual $(k,d)$-total coloring} of the graceful $(k,d)$-total coloring $h_1$ by Definition \ref{defn:vertex-dual-kd-total-coloring}.

\item About two trees $H_1$ and $H_2$, we have an edge bijection $\varepsilon_{1,2}:E(H_1)\rightarrow E(H_2)$, where $\varepsilon_{1,2}(x_1x_2)=y_4y_7$, $\varepsilon_{1,2}(x_2x_3)=y_4y_5$, $\varepsilon_{1,2}(x_2x_6)=y_6y_5$, $\varepsilon_{1,2}(x_5x_6)=y_2y_6$, $\varepsilon_{1,2}(x_6x_7)=y_2y_3$, $\varepsilon_{1,2}(x_4x_7)=y_1y_2$, such that each edge $x_ix_j\in E(H_1)$ holds
  $$h_1(x_ix_j)+h_2[\varphi_{1,2}(x_ix_j)]=2k+5d$$

\qquad Thereby, the coloring $h_2$ is an \emph{edge-dual $(k,d)$-total coloring} of $h_1$ according to Definition \ref{defn:edge-dual-kd-total-coloring}. Moreover, the coloring $h_2$ is an \emph{$(v,e)$-dual $(k,d)$-total coloring} of $h_1$ in term of Definition \ref{defn:edge-dual-vertex-kd-total-coloring}.\qqed
\end{asparaenum}
\end{example}

\begin{figure}[h]
\centering
\includegraphics[width=16.4cm]{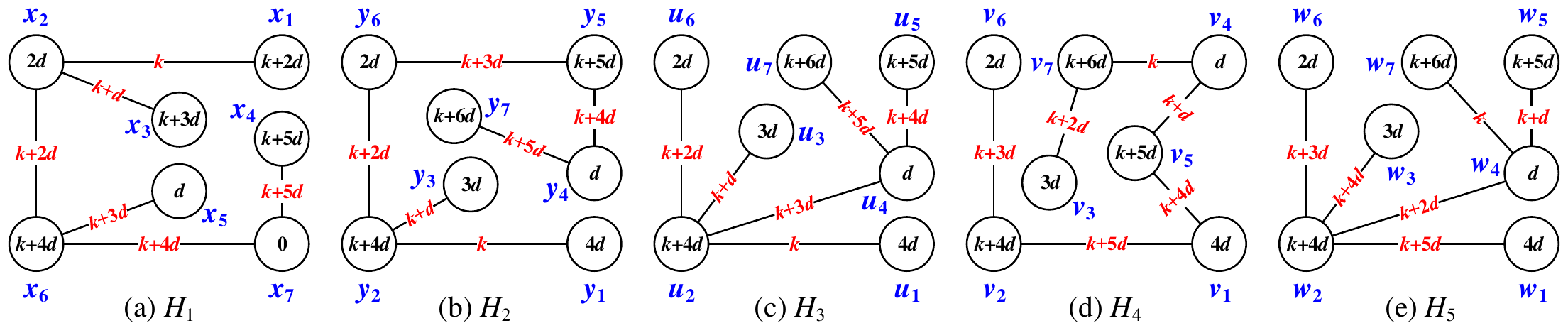}\\
\caption{\label{fig:33-dual-kd-totals}{\small Examples for illustrating Definition \ref{defn:vertex-dual-kd-total-coloring}, Definition \ref{defn:edge-dual-kd-total-coloring} and Definition \ref{defn:edge-dual-vertex-kd-total-coloring}.}}
\end{figure}

\begin{problem}\label{qeu:public-key-private-key}
For a given $(p,q)$-graph $G$ (as a \emph{public-key}) admitting a $W_i$-constraint $(k,d)$-total coloring $f$, \textbf{determine} another $(p,q\,')$-graph $H$ (as a \emph{private-key}) admitting a $W_j$-constraint $(k,d)$-total coloring $g$, such that $f$ (resp. $g$) is a \emph{vertex-dual $(k,d)$-total coloring} of $g$ (resp. $f$) defined in Definition \ref{defn:vertex-dual-kd-total-coloring}. \textbf{Do} the same works according to Definition \ref{defn:edge-dual-kd-total-coloring} and Definition \ref{defn:edge-dual-vertex-kd-total-coloring}, respectively.
\end{problem}

\begin{rem}\label{rem:333333}
About Problem \ref{qeu:public-key-private-key}, if $G=H$ is a tree admitting a set-ordered graceful labeling, then we can find the vertex-dual $(k,d)$-total coloring, the edge-dual $(k,d)$-total coloring and the $(v,e)$-dual $(k,d)$-total coloring. However, finding these colorings for $G\not \cong H$, or $G$ and $H$ being not trees in Problem \ref{qeu:public-key-private-key} is important and difficult in the real application of information security. \paralled
\end{rem}

\subsection{Labelings and results on disconnected graphs}

\begin{thm}\label{defn:group-flawed-labelings}
\cite{Yao-Mu-Sun-Sun-Zhang-Wang-Su-Zhang-Yang-Zhao-Wang-Ma-Yao-Yang-Xie2019} Let $G_1,G_2,\dots, G_m$ be disjoint connected graphs, and $E^*$ be an edge set such that each edge $uv$ of $E^*$ has one end $u$ in some $G_i$ and another end $v$ is in some $G_j$ with $i\neq j$, and $E^*$ joins $G_1,G_2,\dots, G_m$ together to form a connected graph $H$, denoted as $H=G+E^*$, where $G=\bigcup^m_{i=1}G_i$. We say $G=\bigcup^m_{i=1}G_i$ to be a disconnected $(p,q)$-graph with
$$
p=|V(H)|=\sum^m_{i=1}|V(G_i)|,\quad q=|E(H)|-|E^*|=\left (\sum^m_{i=1}|E(G_i)|\right )-|E^*|
$$ If $H$ admits an $\alpha$-labeling shown in the following:
\begin{asparaenum}[\emph{\textrm{Flawed}}-1. ]
\item $\alpha$ is a graceful labeling, or a set-ordered graceful labeling, or graceful-intersection total set-labeling, or a graceful group-labeling.
\item $\alpha$ is an odd-graceful labeling, or a set-ordered odd-graceful labeling, or an edge-odd-graceful total labeling, or an odd-graceful-intersection total set-labeling, or an odd-graceful group-labeling, or a perfect odd-graceful labeling.
\item $\alpha$ is an elegant labeling, or an odd-elegant labeling.
\item $\alpha$ is an edge-magic total labeling, or a super edge-magic total labeling, or super set-ordered edge-magic total labeling, or an edge-magic total graceful labeling.
\item $\alpha$ is an edge-antimagic $(k,d)$-total labeling, or a $(k, d)$-arithmetic.
\item $\alpha$ is a relaxed edge-magic total labeling.
\item $\alpha$ is an odd-edge-magic matching labeling, or an ee-difference odd-edge-magic matching labeling.
\item $\alpha$ is a 6C-labeling, or an odd-6C-labeling.
\item $\alpha$ is an ee-difference graceful-magic matching labeling.
\item $\alpha$ is a difference-sum labeling, or a felicitous-sum labeling.
\item $\alpha$ is a multiple edge-meaning vertex labeling.
\item $\alpha$ is a perfect $\varepsilon$-labeling.
\item $\alpha$ is an image-labeling, or a $(k,d)$-harmonious image-labeling.
\item $\alpha$ is a twin $(k,d)$-labeling, or a twin Fibonacci-type graph-labeling, or a twin odd-graceful labeling.
\end{asparaenum}

Then $G$ admits a \emph{flawed $\alpha$-labeling} too.\qqed
\end{thm}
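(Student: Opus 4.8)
The plan is to exhibit the flawed $\alpha$-labeling of $G$ as the \emph{restriction} of the given $\alpha$-labeling of $H$, and to observe that the defining conditions of every labeling in the list survive this restriction except for the single ``interval-filling'' condition on the edge colour set, whose controlled failure is exactly what the adjective \emph{flawed} is designed to record.

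First I would fix the $\alpha$-labeling $\alpha$ of $H = G + E^{*}$ and record the two structural facts that make the argument uniform across all fourteen cases: adjoining the edges of $E^{*}$ creates no new vertices, so $V(G) = V(H)$; and $E(H)$ is the disjoint union $E(G)\cup E^{*}$, so $E(G)\subseteq E(H)$. Hence the map $\alpha' = \alpha|_{V(G)\cup E(G)}$ is well defined on $G$, it has the same vertex-colour set as $\alpha$, and its edge-colour set is $\alpha'(E(G)) = \alpha(E(H))\setminus \alpha(E^{*})$, obtained from the prescribed consecutive (or odd-consecutive) colour set of $\alpha$ by deleting precisely the colours that $\alpha$ assigned to the edges of $E^{*}$. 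Since $G = \bigcup^{m}_{i=1}G_{i}$ is disconnected, $\alpha'$ is a labeling of a disconnected graph, which is exactly the regime in which ``flawed'' labelings live.

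Next I would sort the clauses defining each family into three types: (i) \emph{vertex conditions} --- injectivity of the vertex colouring, membership of vertex colours in a prescribed set, and, for the ``set-ordered'' and bipartite variants, the inequality $\max$ on one part $<\min$ on the other --- all of which are inherited by any spanning subgraph and therefore by $\alpha'$ on $G$ and on each $G_{i}$; (ii) \emph{local edge conditions} --- the relations $\alpha(uv)=|\alpha(u)-\alpha(v)|$, $\alpha(uv)=\alpha(u)+\alpha(v)~(\bmod^{*}~qd)$, $\alpha(u)+\alpha(uv)+\alpha(v)=c$, the edge-difference, graceful-difference and felicitous-difference constants, and the analogous set-valued, group-valued, image and twin identities --- each quantified ``for every edge'' and hence still valid once we retain only the edges of $E(G)\subseteq E(H)$; and (iii) the single \emph{global edge-colour condition} $\alpha(E(H))$ equals a prescribed consecutive set such as $S_{q_{H}-1,k,0,d}$, $O_{2q_{H}-1,k,d}$ or $[1,q_{H}]$. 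Only clause (iii) is disturbed, and in the precise manner above: $\alpha'(E(G))$ is the original interval with exactly $|E^{*}|$ entries removed. By the definition of a \emph{flawed} $\alpha$-labeling of a disconnected graph in \cite{Yao-Mu-Sun-Sun-Zhang-Wang-Su-Zhang-Yang-Zhao-Wang-Ma-Yao-Yang-Xie2019}, this is exactly the admissible defect, so $\alpha'$ is a flawed $\alpha$-labeling of $G$ for whichever of the listed types $\alpha$ happened to be.

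I expect the main obstacle to be bookkeeping rather than conceptual: one must unwind the definitions of the less familiar entries --- the set-labelings and group-labelings, the multiple-edge-meaning vertex labelings, the image-labelings and $(k,d)$-harmonious image-labelings, the $6C$- and odd-$6C$-labelings, the matching-type labelings, and the twin labelings --- and confirm in each case that its vertex clauses are of type (i), that its edge clauses are genuinely per-edge of type (ii) (in particular that any auxiliary structure, such as the matching in a ``matching'' labeling or the paired graph in a ``twin'' labeling, restricts compatibly, which it does because $V(G)=V(H)$ and the relevant matchings or special cycles of $H$ either lie inside $G$ or meet $E^{*}$ and are simply dropped with their edges), and that the only global requirement is the interval-filling one. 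No step requires anything beyond the elementary fact that a labeling is inherited by a spanning subgraph on the same vertex set, together with the deliberate slack built into the notion of ``flawed''.
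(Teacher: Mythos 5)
Your restriction argument is correct and is essentially the intended reading: the paper itself gives no proof of this statement (it is quoted from the cited reference), and the notion of a flawed labeling/coloring is set up there, and in Definition~\ref{defn:111111} of this paper for the $(k,d)$ case, precisely as the labeling induced on $G=\bigcup^m_{i=1}G_i$ by restricting the $\alpha$-labeling of $H=G+E^*$, with the edge-colour set becoming the prescribed set minus $\alpha(E^*)$ (exactly as in the conclusion of the proof of Theorem~\ref{thm:flawed-graceful-labeling-forest}). The only caution is the one you already flag: for the entries whose defining clauses are not purely per-edge (e.g.\ the matching/pairing clauses of the 6C- and matching-type labelings, or the modular constraints referencing $q_H$), the surviving conditions are those inherited from $H$, and it is the definition of ``flawed'' in the cited source that absorbs these defects, so your proof is complete modulo that definitional bookkeeping.
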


\begin{thm}\label{thm:flawed-graceful-labeling-forest}
\cite{Yao-Mu-Sun-Sun-Zhang-Wang-Su-Zhang-Yang-Zhao-Wang-Ma-Yao-Yang-Xie2019} Let $G=\bigcup^m_{i=1} G_i$ be a forest with components $G_1,G_2,\dots ,G_m$, where each $G_i$ is a $(p_i,p_i-1)$-tree with vertex partition $(X_i,Y_i)$ and admits a set-ordered graceful labeling $f_i$ holding $\max f_i(X_i)<\min f_i(Y_i)$ true for $i\in [1,m]$. Then $G$ admits a flawed set-ordered graceful labeling.
\end{thm}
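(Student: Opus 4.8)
The plan is to reduce the assertion to Theorem~\ref{defn:group-flawed-labelings} (case Flawed-1). By definition a \emph{flawed} set-ordered graceful labeling of the disconnected graph $G=\bigcup_{i=1}^m G_i$ is the restriction to $G$ of a set-ordered graceful labeling of a connected graph $H=G+E^*$; thus it suffices to produce an edge set $E^*$ with $|E^*|=m-1$ joining the trees $G_1,\dots,G_m$ into a single tree $H$ that admits a set-ordered graceful labeling. The $m-1$ colors ``missing'' from the edge-color set of the restriction are then exactly the colors of the $m-1$ edges of $E^*$, which is precisely the flaw. (One could instead define the flawed labeling on $G$ directly, without naming $H$; I introduce $H$ only to match the statement of Theorem~\ref{defn:group-flawed-labelings}.)

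First I would establish the two-component case as a lemma and then iterate. Write $s_i=|X_i|$, $t_i=|Y_i|$, $q_i=p_i-1=s_i+t_i-1$; since $f_i$ is a set-ordered graceful labeling of the tree $G_i$ we must have $f_i(X_i)=[0,s_i-1]$ and $f_i(Y_i)=[s_i,q_i]$. Given two such trees $T_1,T_2$ with partitions $(X_1,Y_1)$, $(X_2,Y_2)$ and labelings $f_1,f_2$, define a vertex labeling $g$ on $T_1\cup T_2$ by $g(x)=f_1(x)$ for $x\in X_1$, $g(x)=f_2(x)+s_1$ for $x\in X_2$, $g(y)=f_2(y)+s_1$ for $y\in Y_2$, and $g(y)=f_1(y)+q_2+1$ for $y\in Y_1$. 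Then $g(X_1\cup X_2)=[0,s_1+s_2-1]$, $g(Y_2)=[s_1+s_2,\,s_1+q_2]$ and $g(Y_1)=[s_1+q_2+1,\,q_1+q_2+1]$, so $g$ is injective, maps onto $[0,p_1+p_2-1]$, and satisfies $\max g(X_1\cup X_2)<\min g(Y_1\cup Y_2)$. The edges inside $T_2$ retain the color set $[1,q_2]$, while the edges inside $T_1$ have their colors shifted to $[q_2+2,\,q_1+q_2+1]$, so only the single color $q_2+1$ is unrealized.

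To realize $q_2+1$ I would add one edge $uv$ with $u\in X_1$, $g(u)=a$, and $v\in Y_2$, $g(v)=a+q_2+1$; such $u,v$ exist exactly when $\max(0,\,s_1-t_2)\le a\le s_1-1$, an interval that is always nonempty because $s_1\ge 1$ and $t_2\ge 1$. The result $H_{12}:=T_1\cup T_2+uv$ is again a tree (one edge added between two disjoint trees), it has $q_1+q_2+1$ edges whose colors are exactly $[1,q_1+q_2+1]$, it is set-ordered, and its bipartition is $(X_1\cup X_2,\ Y_1\cup Y_2)$ -- so the lemma's hypotheses are reproduced and it can be applied again. Induction on $m$, combining the tree built from $G_1,\dots,G_{m-1}$ with $G_m$, then yields the desired tree $H=G+E^*$ with a set-ordered graceful labeling; each edge of $E^*$ joins a vertex of some $X_j$ with $j<m$ to a vertex of $Y_m$ at the stage it is introduced, hence has its two ends in two different components, as required. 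Theorem~\ref{defn:group-flawed-labelings} then gives the flawed set-ordered graceful labeling of $G$.

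I expect the work to be bookkeeping rather than conceptual: the two points needing care are (i) checking in every case that the bridging color $q_2+1$ is attainable by an $X_1$--$Y_2$ edge -- the interval argument $\max(0,s_1-t_2)\le a\le s_1-1$ above -- and (ii) verifying in the inductive step that the running tree assembled from $G_1,\dots,G_{m-1}$ still has bipartition $\big(\bigcup_{i<m}X_i,\ \bigcup_{i<m}Y_i\big)$ so that the lemma applies and every added edge is genuinely an inter-component edge. No deep new idea seems to be needed beyond this shift-and-splice construction.
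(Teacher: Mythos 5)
Your proposal is correct and is essentially the paper's own construction: the same block-shift labeling (all $X_i$ packed into the low labels in component order, the $Y_i$ packed into the high labels so that each component's edge colors form a consecutive block leaving exactly one gap per missing inter-component edge). The only difference is presentational — you build it two components at a time and explicitly exhibit the bridging edges of $E^*$ (checking the gap color $q_2+1$ is realizable by an $X_1$–$Y_2$ edge), whereas the paper writes the shifts for all $m$ components in one closed form and leaves $E^*$ implicit; your version is, if anything, slightly more complete on that point.
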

\begin{proof}
Suppose that the vertex partitions $(X_i,Y_i)$ consists of $X_i=\{x_{i,j}:~j\in[1,s_{i}]\}$ and $Y_i=\{y_{i,j}:~j\in[1,t_{i}]\}$ with $i\in [1,m]$. Let $M=\sum ^{m}_{k=1}s_{k}$ and $N=\sum ^{m}_{k=1}t_{k}$. Furthermore,
$$
0\leq \max f_i(X_i)<\min f_i(x_{i,j+1})\leq s_{i}-1,~j\in[1,s_{i}]
$$ and
$$s_{i}\leq f_i(y_{i,j})<f_i(y_{i,j+1})\leq s_{i}+t_{i}-1=p_i-1,~j\in[1,t_{i}],~i\in[1,m]$$

We defined a new labeling $g$ for the forest $G$ as follows:

(1) $g(x_{1,j})=f_1(x_{1,j})$ with $j\in[1,s_{1}]$.

(2) $g(x_{i,j})=f_i(x_{i,j})+\sum ^{i-1}_{k=1}s_{k}$ for $j\in[1,s_{i}]$ and $i\in[2,m]$.

(3) $g(y_{m,j})=f_m(y_{m,j})+M-s_m$ with $j\in[1,t_{m}]$.

(4) $g(y_{\ell,j})=f_{\ell}(y_{\ell,j})+M+\sum ^{m-\ell}_{k=1}t_{k}$ for $j\in[1,t_{\ell}]$ and $\ell\in[1,m-1]$.

Clearly,
\begin{equation}\label{eqa:flawed-labelings-00}
0=g(x_{1,1})\leq g(u)<g(v)\leq g(y_{1,t_{1}})=M+N-1
\end{equation} for $u\in \bigcup^m_{i=1} X_i$ and $v\in \bigcup^m_{i=1} Y_i$.

Notice that each edge color set
$$f_i(E(G_i))=[1,p_i-1]=\{f_{i}(y_{i,a})-f_{i}(x_{i,b}):~x_{i,b}y_{i,a}\in E(G_i)\}
$$ We compute edge labels $g(x_{i,b}y_{i,a})=g(y_{i,a})-g(x_{i,b})$ as follows: For each edge $x_{m,b}y_{m,a}\in E(G_m)$, we have
\begin{equation}\label{eqa:c3xxxxx}
{
\begin{split}
g(x_{m,b}y_{m,a})&=g(y_{m,a})-g(x_{m,b})=[f_m(y_{m,a})+M-s_m]-\left [f_m(x_{m,b})+\sum ^{m-1}_{k=1}s_{k}\right ]\\
&=f_m(y_{m,a})-f_m(x_{m,b}),
\end{split}}
\end{equation}
so the edge color set $g(E(G_m))=[1,p_m-1]$. Next, for each edge $x_{i,b}y_{i,a}\in E(G_i)$ with $i\in [1,m-1]$, we can compute
\begin{equation}\label{eqa:c3xxxxx}
{
\begin{split}
g(x_{i,b}y_{i,a})&=g(y_{i,a})-g(x_{i,b})=\left [f_{i}(y_{i,j})+M+\sum ^{m-i}_{k=1}t_{k}\right ]-\left [f_i(x_{i,j})+\sum ^{i-1}_{k=1}s_{k}\right ]\\
&=f_{i}(y_{i,j})-f_i(x_{i,j})+\sum ^{m}_{k=i}s_{k}+\sum ^{m-i}_{k=1}t_{k}
\end{split}}
\end{equation}
So, we obtain the edge color set
\begin{equation}\label{eqa:c3xxxxx}
{
\begin{split}
&\quad g(E(G_i))=[1+M_i, p_i-1+M_i]
\end{split}}
\end{equation} where $M_i=\sum ^{m}_{k=i}s_{k}+\sum ^{m-i}_{k=1}t_{k}$. Finally, we get the whole edge color set
\begin{equation}\label{eqa:c3xxxxx}
g(E(G))=\bigcup ^m_{i=1}g(E(G_i))=\left [1,\bigcup^m_{i=1}(p_i-1)\right ] \setminus g(E^*)
\end{equation} We claim that $g$ is a flawed set-ordered graceful labeling of the forest $G$.
\end{proof}

\begin{cor}\label{thm:flawed-graceful-labeling-graph}
\cite{Yao-Mu-Sun-Sun-Zhang-Wang-Su-Zhang-Yang-Zhao-Wang-Ma-Yao-Yang-Xie2019} Let $G$ be a disconnected graph with components $H_1,H_2,\dots ,H_m$, where each $H_i$ is a connected bipartite $(p_i,q_i)$-graph with vertex bipartition $(X_i,Y_i)$ and admits a set-ordered graceful labeling $f_i$ holding $\max f_i(X_i)<\min f_i(Y_i)$ with $i\in [1,m]$. Then $G$ admits a flawed set-ordered graceful labeling.
\end{cor}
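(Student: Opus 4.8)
The plan is to adapt, almost line for line, the proof of Theorem~\ref{thm:flawed-graceful-labeling-forest}. Reading that proof, the only features of a component $G_i$ it actually uses are: the bipartition $(X_i,Y_i)$; the set-ordered property $\max f_i(X_i)<\min f_i(Y_i)$; and the facts, valid for any set-ordered graceful labeling of a $(p_i,q_i)$-graph $H_i$, that $f_i(V(H_i))\subseteq[0,q_i]$ and $f_i(E(H_i))=[1,q_i]$. None of these requires $H_i$ to be a tree. What genuinely changes is the interval structure inside a component: for a tree on $p_i$ vertices a set-ordered graceful labeling fills the whole intervals $f_i(X_i)=[0,s_i-1]$ and $f_i(Y_i)=[s_i,p_i-1]$, whereas for an arbitrary connected bipartite graph one only has the containments $f_i(X_i)\subseteq[0,\alpha_i]$ and $f_i(Y_i)\subseteq[\alpha_i+1,q_i]$, where $\alpha_i:=\max f_i(X_i)$ and one checks (exactly as for trees) that $\min f_i(Y_i)=\alpha_i+1$ and $\max f_i(Y_i)=q_i$, since otherwise the edge-label set $[1,q_i]$ could not contain both $1$ and $q_i$.

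So, first I would record these interval facts for each $i$, replacing the tree quantities $s_i=|X_i|$ and $t_i=|Y_i|$ by the true interval lengths $\alpha_i+1$ and $q_i-\alpha_i$, and replacing $p_i-1$ by $q_i$ throughout. Then I would define the relabeling $g$ of $G=\bigcup_{i=1}^m H_i$ in the same spirit as in the forest proof: raise the labels on $X_i$ by $\sum_{k<i}(\alpha_k+1)$, so that the shifted $X$-blocks of distinct components are consecutive and pairwise disjoint inside $\bigl[0,\sum_k(\alpha_k+1)-1\bigr]$; and raise the labels on $Y_i$, processing the components in the reverse order $m,m-1,\dots,1$, by a cumulative amount chosen so that every shifted $Y$-label lies strictly above every shifted $X$-label, and so that on the last component $X_m$ and $Y_m$ receive the same shift. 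One then checks, step by step as in Theorem~\ref{thm:flawed-graceful-labeling-forest}, that $g$ is injective on $V(G)$; that $\max g\bigl(\bigcup_i X_i\bigr)<\min g\bigl(\bigcup_i Y_i\bigr)$, i.e.\ $g$ is set-ordered; that on each component the shifts on $X_i$ and on $Y_i$ differ by a fixed constant $M_i$ (with $M_m=0$), so that $g$ merely translates the edge-label set of $H_i$ and $g(E(H_i))=[1+M_i,\,q_i+M_i]$; and finally that $\bigcup_i g(E(H_i))$ equals the color interval appearing in Theorem~\ref{thm:flawed-graceful-labeling-forest}, with the labels of the connecting-edge set $E^*$ removed. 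This exhibits $g$ as a flawed set-ordered graceful labeling of $G$.

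The main obstacle is the bookkeeping of these shift constants. For a forest the label blocks $[0,s_i-1]$ and $[s_i,p_i-1]$ are automatically contiguous, so the cumulative shifts telescope transparently; for general bipartite components one only has the containments above, with possible gaps inside $[0,\alpha_i]$ and inside $[\alpha_i+1,q_i]$, and one must verify carefully that the chosen shifts still keep the $X$-blocks pairwise disjoint, keep the $Y$-blocks pairwise disjoint and lying entirely above all $X$-blocks, and produce mutually consistent per-component translations $M_i$ of the edge labels, with the values omitted between successive edge-label blocks matching $g(E^*)$. Once the constants are fixed correctly, every remaining verification is the same elementary computation carried out in the proof of Theorem~\ref{thm:flawed-graceful-labeling-forest}.
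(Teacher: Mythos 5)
Your proposal is correct and follows essentially the same route the paper intends: the corollary is obtained by rerunning the block-shifting construction in the proof of Theorem \ref{thm:flawed-graceful-labeling-forest}, with the tree-specific interval lengths $s_i,t_i$ replaced by $\alpha_i+1$ and $q_i-\alpha_i$ (and $p_i-1$ by $q_i$), exactly as you describe. Your key observations -- that a set-ordered graceful labeling of a connected bipartite $(p_i,q_i)$-graph forces $0\in f_i(X_i)$, $\min f_i(Y_i)=\alpha_i+1$ and $\max f_i(Y_i)=q_i$, which is all the construction ever uses -- are the right ones, and with the shifts chosen so that $M_m=0$ and consecutive edge-label blocks leave single gaps realizable by connecting edges of $E^*$, the verification goes through as you outline.
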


In \cite{Yao-Liu-Yao-2017}, the authors have proven the following mutually equivalent labelings:

\begin{thm} \label{thm:connections-several-labelings}
\cite{Yao-Liu-Yao-2017} Let $T$ be a tree on $p$ vertices, and let $(X,Y)$ be its
bipartition of vertex set $V(T)$. For all values of integers $k\geq 1$ and $d\geq 1$, the following assertions are mutually equivalent:

$(1)$ $T$ admits a set-ordered graceful labeling $f$ with $\max f(X)<\min f(Y)$.

$(2)$ $T$ admits a super felicitous labeling $\alpha$ with $\max \alpha(X)<\min \alpha(Y)$.

$(3)$ $T$ admits a $(k,d)$-graceful labeling $\beta$ with
$\beta(x)<\beta(y)-k+d$ for all $x\in X$ and $y\in Y$.

$(4)$ $T$ admits a super edge-magic total labeling $\gamma$ with $\max \gamma(X)<\min \gamma(Y)$ and a magic constant $|X|+2p+1$.

$(5)$ $T$ admits a super $(|X|+p+3,2)$-edge antimagic total labeling $\theta$ with $\max \theta(X)<\min \theta(Y)$.

$(6)$ $T$ admits an odd-elegant labeling $\eta$ with $\eta(x)+\eta(y)\leq 2p-3$ for every edge $xy\in E(T)$.

$(7)$ $T$ admits a $(k,d)$-arithmetic labeling $\psi$ with $\max \psi(x)<\min \psi(y)-k+d\cdot |X|$ for all $x\in X$ and $y\in Y$.

$(8)$ $T$ admits a harmonious labeling $\varphi$ with $\max \varphi(X)<\min \varphi(Y\setminus \{y_0\})$ and $\varphi(y_0)=0$.
\end{thm}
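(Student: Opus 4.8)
The plan is to treat assertion (1), the set-ordered graceful labeling, as a hub and to prove $(1)\Leftrightarrow(i)$ for each $i\in\{2,3,\dots,8\}$, so that all eight become equivalent by composition. The engine of every implication is the rigid normal form that set-orderedness forces: writing $s=|X|$, $t=|Y|$, $p=s+t$, $q=p-1$, a set-ordered graceful labeling $f$ with $\max f(X)<\min f(Y)$ must satisfy $f(X)=\{0,1,\dots,s-1\}$, $f(Y)=\{s,s+1,\dots,p-1\}$, and $f(E(T))=\{f(y)-f(x):xy\in E(T)\}=[1,q]$ with each difference attained exactly once. Conversely, each of the other labelings, \emph{together with} its stated inequality side-condition, partitions the available colour interval into an ``$X$-block'' and a ``$Y$-block'' in exactly this pattern, so the normal form can be read back off and the inverse map constructed. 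This is precisely why the side conditions cannot be dropped: without them the labelings in (2)--(8) are not equivalent to (1).

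For each target I would exhibit an explicit affine re-encoding of $f$, linear on each colour class, and verify the induced edge constraint. The cleanest model case is (3): put $\beta(x)=d\,f(x)$ for $x\in X$ and $\beta(y)=k+d\,(f(y)-1)$ for $y\in Y$; then $\beta(y)-\beta(x)=k+d\,(f(y)-f(x)-1)$ sweeps $\{k,k+d,\dots,k+(q-1)d\}$ as $f(y)-f(x)$ runs over $[1,q]$, the vertex colours land in $S_{s-1,0,0,d}\cup S_{q-1,k,0,d}$, and $\max f(X)<\min f(Y)$ is equivalent to $\beta(x)<\beta(y)-k+d$ for all $x\in X$, $y\in Y$; inverting $\beta$ recovers $f$. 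The route to (7), $(k,d)$-arithmetic (Definition \ref{defn:k-d-arithmetic-labelings}), is the same with a sum in place of a difference, absorbing the shift $d\cdot|X|$ appearing in its side condition; the route to (4) and (5) keeps $f$ (shifted to $[1,p]$) on the vertices and defines the edge colour by a reflection $c-f(u)-f(v)$, with the set-ordered gap forcing the edge colours onto $[p+1,2p-1]$ and pinning the magic constant to $|X|+2p+1$ (resp. producing the super $(|X|+p+3,2)$-edge-antimagic total labeling); (2) super felicitous is obtained in the same spirit. Assertions (6) odd-elegant and (8) harmonious need the modular reductions of Definitions \ref{defn:k-d-harmonious} and \ref{defn:kd-w-type-colorings}: here I would show that under the normal form the relevant vertex sums already lie in a window of length $q$ (resp. $2q$), so the reduction modulo $q$ (resp. $2q$) is the identity and the prescribed progression is hit on the nose — and that the side conditions $\varphi(y_0)=0$ with the max/min gap, resp. $\eta(x)+\eta(y)\le 2p-3$, are exactly what guarantee this window.

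I expect the main obstacle to be bookkeeping rather than concept: in each of the (at least) fourteen directions one must confirm that the transformed edge-colour multiset is \emph{exactly} the required arithmetic, odd, or modular progression — no collisions, no omissions — and that each side inequality is genuinely equivalent to, not merely a consequence of, the set-ordered gap. The two genuinely delicate spots are (6) and (8), where a careless bound lets a modular wrap-around occur and the progression degenerates; the argument must show the chosen normalization of $f$ keeps every relevant sum inside the safe window, and that any labeling satisfying the stated side condition can be renormalized this way before inversion. A final consistency remark is that, because every hub implication is a true ``iff'' against (1), the equivalences compose freely, so listing them against (1) suffices to conclude the mutual equivalence of all eight.
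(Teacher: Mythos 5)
The paper does not actually prove this theorem: it is quoted from \cite{Yao-Liu-Yao-2017}, so there is no in-paper proof to compare against. Judged on its own terms, your overall strategy is the right one and is the standard route: extract the rigid normal form of a set-ordered graceful labeling ($f(X)=[0,s-1]$, $f(Y)=[s,p-1]$, edge differences exactly $[1,q]$ with $s=|X|$, $q=p-1$) and convert to and from each of (2)--(8) by maps that are affine on each colour class; it is also exactly the style this paper uses for the parameterized analogues (Definition \ref{defn:kd-w-type-coloring-transfoemations}, Theorem \ref{thm:equivalent-k-d-total-colorings}, and the $X$-dual/$Y$-dual steps of the RLA-algorithms). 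Your worked case $(1)\Leftrightarrow(3)$ is correct, including the equivalence of the side condition $\beta(x)<\beta(y)-k+d$ with the set-ordered gap.

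There is, however, a concrete flaw in the step you describe for (4) and (5), and the same imprecision infects (2), (6), (7), (8). You say the route to (4) ``keeps $f$ (shifted to $[1,p]$) on the vertices and defines the edge colour by a reflection $c-f(u)-f(v)$.'' That fails: for a set-ordered graceful labeling the vertex sums $f(x)+f(y)$ over edges need not be distinct, so $c-\gamma(u)-\gamma(v)$ assigns equal colours to distinct edges and you do not even obtain a total labeling. For example, the path $x_1y_1x_2y_2$ with $f(x_1)=0$, $f(x_2)=1$, $f(y_1)=3$, $f(y_2)=2$ is set-ordered graceful, yet the edge sums are $3,4,3$. The missing idea is the one-sided reflection (the ``partly $X$-image/$Y$-image'' dual of Definition \ref{defn:kd-w-type-coloring-transfoemations}): take $\gamma(x)=f(x)+1$ for $x\in X$ and $\gamma(y)=p+s-f(y)$ for $y\in Y$; then $\gamma(x)+\gamma(y)=p+s+1-f(xy)$ runs over the $q$ consecutive values $s+2,\dots,p+s$ exactly once, the edge colours $c-\gamma(x)-\gamma(y)$ fill $[p+1,2p-1]$, and the magic constant is forced to $c=|X|+2p+1$; the super $(|X|+p+3,2)$-edge-antimagic case (5) and the sum-based targets (2), (6), (7), (8) all require the same reflection before the modular or arithmetic-progression check, since the sum of the \emph{unreflected} labels is not an affine function of the graceful difference and is not injective on $E(T)$. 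Your phrase ``the same with a sum in place of a difference'' hides exactly this point, and the bookkeeping you defer would not close without it; with the reflection inserted, the plan goes through.
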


We have some results similarly with that in Theorem \ref{thm:connections-several-labelings} about flawed graph labelings as follows:

\begin{thm} \label{thm:connection-flawed-labelings}
\cite{Yao-Mu-Sun-Sun-Zhang-Wang-Su-Zhang-Yang-Zhao-Wang-Ma-Yao-Yang-Xie2019} Suppose that $T=\bigcup ^m_{i=1}T_i$ is a forest having disjoint trees $T_1,T_2,\dots ,T_m$, and $(X,Y)$ be its vertex
bipartition. For all values of integers $k\geq 1$ and $d\geq 1$, the following assertions are mutually equivalent:
\begin{asparaenum}[F-1. ]
\item $T$ admits a flawed set-ordered graceful labeling $f$ with $\max f(X)<\min f(Y)$;
\item $T$ admits a flawed set-ordered odd-graceful labeling $f$ with $\max f(X)<\min f(Y)$;

\item $T$ admits a flawed set-ordered elegant labeling $f$ with $\max f(X)<\min f(Y)$;

\item $T$ admits a flawed odd-elegant labeling $\eta$ with $\eta(x)+\eta(y)\leq 2p-3$ for each edge $xy\in E(T)$.

\item $T$ admits a super flawed felicitous labeling $\alpha$ with $\max \alpha(X)<\min \alpha(Y)$.

\item $T$ admits a super flawed edge-magic total labeling $\gamma$ with $\max \gamma(X)<\min \gamma(Y)$ and a magic constant $|X|+2p+1$.
\item $T$ admits a super flawed $(|X|+p+3,2)$-edge antimagic total labeling $\theta$ with $\max \theta(X)<\min \theta(Y)$.

\item $T$ admits a flawed harmonious labeling $\varphi$ with $\max \varphi(X)<\min \varphi(Y\setminus \{y_0\})$ and $\varphi(y_0)=0$.
\end{asparaenum}
\end{thm}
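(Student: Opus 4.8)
The statement F-1 through F-8 is the ``flawed forest'' analogue of the single-tree equivalence Theorem~\ref{thm:connections-several-labelings}, so the natural plan is to \emph{reduce everything to that theorem, applied componentwise}. Concretely, I would isolate the bridging claim: \emph{for each of the eight labeling types $\alpha$ occurring in F-1,\dots,F-8, the forest $T=\bigcup_{i=1}^m T_i$ admits a flawed $\alpha$-labeling with the stated set-ordered / magic side-condition if and only if every component $T_i$ admits an $\alpha$-labeling of $T_i$ with the corresponding side-condition on its bipartition $(X_i,Y_i)$.} Granting this claim, the theorem is immediate: Theorem~\ref{thm:connections-several-labelings} tells us that, for a fixed $i$, ``$T_i$ is set-ordered graceful'', ``$T_i$ is set-ordered odd-graceful'', \dots, ``$T_i$ is harmonious'' (all with their side-conditions) are mutually equivalent; conjoining over $i\in[1,m]$ preserves mutual equivalence; and the bridging claim then transports this to the mutual equivalence of F-1,\dots,F-8 for $T$. (One could instead mimic the cyclic-implication layout of the proof of Theorem~\ref{thm:connections-several-labelings} directly, but the componentwise route is cleaner and reuses the construction already in hand.)

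The ``if'' half of the bridging claim is a direct construction modeled exactly on the proof of Theorem~\ref{thm:flawed-graceful-labeling-forest}. Given per-component labelings $f_i$ with $\max f_i(X_i)<\min f_i(Y_i)$, I would glue them by rescaling and translating: shift the $X_i$-labels by a block offset assembled from $\sum_{k<i}|X_k|$, and the $Y_i$-labels by a block offset assembled from $\sum_{k\le m}|X_k|$ and $\sum_{k>i}|Y_k|$ (for graceful, odd-graceful, elegant, odd-elegant, harmonious), with the offsets recalibrated for the magic-type cases so that one common magic constant $c$ serves all components. Then one verifies: (i) the global coloring of $V(T)=\bigcup_i X_i\cup\bigcup_i Y_i$ keeps $\max$ on $\bigcup_i X_i$ below $\min$ on $\bigcup_i Y_i$; (ii) on each $T_i$ the $W$-constraint — whether a difference $f(uv)=|f(u)-f(v)|$, a modular sum, or an edge-magic sum $f(u)+f(v)+f(uv)=c$ — survives the shift, because each such constraint is translation-compatible once the $X$-offset and $Y$-offset are matched; and (iii) the union of the shifted edge-color sets is the full target set (an integer interval, an odd-integer interval, etc.) with exactly $|E^*|$ values deleted, those values being the colors that the $|E^*|$ missing connecting edges would carry — which is precisely what makes the labeling \emph{flawed} in the sense underlying Theorem~\ref{defn:group-flawed-labelings} rather than genuine.

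For the ``only if'' half I would restrict a flawed $\alpha$-labeling of $T$ to one component $T_i$ and then ``compress'': the restriction still satisfies the $W$-constraint on every edge of $T_i$, and composing with the unique order-preserving relabeling of the images onto an initial segment of the target set restores both the set-ordered property and the prescribed edge-color set for $T_i$. The point to check is that a linear order-preserving rename applied simultaneously on $X_i$ and on $Y_i$ leaves difference-type, modular-sum-type, and magic-sum-type constraints intact, which it does once the rename on the two classes is chosen with the same step.

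The main obstacle is the ``if'' direction for the magic-type and elegant-type cases, F-3, F-4, F-5, F-6. Unlike the plain graceful case, where \emph{any} translation of a component labeling works, here one must pin down the per-component offsets so that simultaneously (a) the set-ordered property survives concatenation, (b) for the edge-magic and super edge-antimagic cases the numerical magic (resp.\ antimagic) constant is literally identical across all $m$ components — this couples the $X$-offsets to the $Y$-offsets — and (c) for the (odd-)elegant cases the modular reduction must be tracked through the shifts, and the side inequality $\eta(x)+\eta(y)\le 2p-3$ has to be re-verified for the shifted labels, where now $p=\sum_i p_i$ for the forest while the componentwise bound only gave $\le 2p_i-3$. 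Each of these checks is a bounded computation, but exhibiting a single offset scheme that works uniformly for all eight constraints is where the real work lies; past that point the argument is a routine transcription of the proof of Theorem~\ref{thm:flawed-graceful-labeling-forest}.
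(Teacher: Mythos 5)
First, note that the paper itself gives no proof of Theorem~\ref{thm:connection-flawed-labelings}: it is quoted from the cited reference, and the only in-paper models are the gluing construction in the proof of Theorem~\ref{thm:flawed-graceful-labeling-forest} and the transformation-style proofs of Theorem~\ref{thm:connections-several-labelings} and Theorem~\ref{thm:equivalent-k-d-total-colorings}. Measured against those, your plan has a genuine gap: the whole reduction rests on a \emph{two-way} componentwise bridging claim, but only its ``if'' half (componentwise labelings $\Rightarrow$ flawed labeling of $T$) is supported by the techniques in the paper — that half is indeed a routine re-run of the proof of Theorem~\ref{thm:flawed-graceful-labeling-forest} with adjusted offsets. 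The ``only if'' half is neither proved by your argument nor obviously true. Your compression step — restrict the flawed labeling to $T_i$ and compose with ``the unique order-preserving relabeling onto an initial segment'' — does not preserve the constraints: an order-preserving bijection of the used labels onto $[0,p_i-1]$ is not affine unless the used labels already form arithmetic progressions on $X_i$ and $Y_i$ with a common step, so it destroys differences $|f(u)-f(v)|$, modular sums, and magic sums alike; after compression there is no reason the edge colors become exactly $[1,q_i]$ (or the required odd set, or a constant sum). Worse, the claim itself is doubtful: a flawed set-ordered graceful labeling of $T$ only gives each component an injective, set-ordered labeling with \emph{distinct} edge differences, not an interval of differences, and trees admitting set-ordered graceful labelings are a very restricted class (essentially caterpillars), so one cannot expect every component of a flawed set-ordered graceful forest to admit one. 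Since your derivation of F-1 $\Leftrightarrow\cdots\Leftrightarrow$ F-8 needs both halves of the bridge, the argument as written does not go through.

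The repair is to drop the componentwise detour and transform the flawed labelings of $T$ directly and globally, exactly as in the proof of Theorem~\ref{thm:connections-several-labelings}: all of the passages there (graceful $\to$ odd-graceful via $x\mapsto 2f(x)$ on $X$, $y\mapsto 2f(y)-1$ on $Y$; graceful $\to$ felicitous/edge-magic/edge-antimagic via duals and affine shifts; graceful $\to$ harmonious via the modular construction) are purely label-arithmetic and use only the set-ordered property, never connectivity. Applied to the whole forest at once (equivalently, to the connected graph $H=T+E^*$ witnessing the flawed labeling, followed by restriction to $T$), they carry a flawed set-ordered labeling of one type to a flawed labeling of each other type, with the missing $|E^*|$ edge colors mapping to the corresponding missing values; one then closes the cycle of implications as in Theorem~\ref{thm:connections-several-labelings}. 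Your last paragraph correctly identifies where the bookkeeping is delicate (common magic constant across components, the modular cases, and the bound $\eta(x)+\eta(y)\le 2p-3$ with $p=\sum_i p_i$), but those checks belong to the global-transformation argument, not to a componentwise equivalence that is unavailable.
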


We present some equivalent definitions with parameters $k,d$ for flawed $(k,d)$-labelings.

\begin{thm} \label{thm:flawed-kk-dd-labelings}
\cite{Yao-Mu-Sun-Sun-Zhang-Wang-Su-Zhang-Yang-Zhao-Wang-Ma-Yao-Yang-Xie2019} Let $T=\bigcup ^m_{i=1}T_i$ be a forest having disjoint trees $T_1,T_2,\dots ,T_m$, and $V(T)=X\cup Y$. For all values of two integers $k\geq 1$ and $d\geq 1$, the following assertions are mutually equivalent:
\begin{asparaenum}[\textbf{\textrm{Flawl}}-1. ]
\item $T$ admits a flawed set-ordered graceful labeling $f$ with $\max f(X)<\min f(Y)$.

\item $T$ admits a flawed $(k,d)$-graceful labeling $\beta$ with $\max \beta(x)<\min \beta(y)-k+d$ for all $x\in X$ and $y\in Y$.

\item $T$ admits a flawed $(k,d)$-arithmetic labeling $\psi$ with $\max \psi(x)<\min \psi(y)-k+d\cdot |X|$ for all $x\in X$ and $y\in Y$.
\item $T$ admits a flawed $(k,d)$-harmonious labeling $\varphi$ with $\max \varphi(X)<\min \varphi(Y\setminus \{y_0\})$ and $\varphi(y_0)=0$.
\end{asparaenum}
\end{thm}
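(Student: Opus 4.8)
The plan is to prove the four statements mutually equivalent in a hub-and-spoke pattern with \textbf{Flawl}-1 as the hub, establishing \textbf{Flawl}-1 $\Leftrightarrow$ \textbf{Flawl}-$j$ for $j=2,3,4$ by exhibiting explicit recolourings and their inverses, in direct analogy with the tree equivalences $(1)\Leftrightarrow(3)$, $(1)\Leftrightarrow(7)$, $(1)\Leftrightarrow(8)$ of Theorem \ref{thm:connections-several-labelings}. By Theorem \ref{thm:flawed-graceful-labeling-forest} together with its proof, a flawed set-ordered graceful labeling $f$ of $T=\bigcup^m_{i=1}T_i$ may be normalised so that, writing $s=|X|$ and $p=|V(T)|$, we have $f(X)=[0,s-1]$, $f(Y)=[s,p-1]$, and the edge colours fill $[1,p-1]$ except for the $m-1$ values $f(E^*)$ carried by a set $E^*$ of edges between $X$ and $Y$ whose addition completes $T$ to a tree $H=T+E^*$ extending $f$ to a genuine set-ordered graceful labeling. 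I keep this completion picture throughout: by the notion of a flawed labeling used in Theorem \ref{defn:group-flawed-labelings}, a flawed $W$-labeling of $T$ is exactly the restriction to $T$ of a genuine $W$-labeling of a connected bipartite supergraph of $T$ formed by adjoining edges between $X$ and $Y$.

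For the forward directions \textbf{Flawl}-1 $\Rightarrow$ \textbf{Flawl}-$j$ I use a colouring that is affine on the colours of $X$ and (another) affine map on the colours of $Y$. For the $(k,d)$-graceful case (Definition \ref{defn:S-M-Hegde-kd-graceful-labeling}) put $\beta(x)=d\,f(x)$ on $X$ and $\beta(y)=d\,f(y)+k-d$ on $Y$, so that each edge $xy$ receives $\beta(y)-\beta(x)=d\,f(xy)+k-d$; the edge colours then fill the arithmetic progression prescribed by the completion $H$ except for the flaw inherited from $f(E^*)$, the vertex colours land in a set of the form $S_{s-1,0,0,d}\cup S_{\ast,k,0,d}$, and a short computation gives $\max\beta(X)=(s-1)d<sd=\min\beta(Y)-k+d$, the required gap. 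For the $(k,d)$-arithmetic case (Definition \ref{defn:k-d-arithmetic-labelings}) I \emph{reflect} the $X$-colours, taking $\psi(x)=d(s-1-f(x))$ and $\psi(y)=d(f(y)-s)+k$, so that the induced sums $\psi(x)+\psi(y)$ run through the same progression; the reflection is precisely why the hypothesis there reads $\max\psi(X)<\min\psi(Y)-k+d\,|X|$, and this follows since $\max\psi(X)=(s-1)d<sd=\min\psi(Y)-k+d\,|X|$. For the $(k,d)$-harmonious case (Definition \ref{defn:k-d-harmonious}) I single out the base vertex $y_0\in Y$ with $f(y_0)=s$, assign $\varphi(y_0)=0$, apply the analogous affine maps to the remaining colours, and check that the modular rule $\varphi(uv)-k\equiv[\varphi(u)+\varphi(v)-k]\ (\mathrm{mod}\ qd)$ reproduces the prescribed progression minus the flaw, mirroring item $(8)$ of Theorem \ref{thm:connections-several-labelings}. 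In each case one verifies three routine points: that the vertex colours lie in the prescribed parameterized sets $S_{\ast}$, that the $W$-constraint transfers edge by edge, and that $\max f(X)<\min f(Y)$ survives the rescaling.

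For the reverse directions \textbf{Flawl}-$j$ $\Rightarrow$ \textbf{Flawl}-1 I invert the affine maps: because $k\ge1$, $d\ge1$, and the gap hypothesis in each of \textbf{Flawl}-2, \textbf{Flawl}-3, \textbf{Flawl}-4 is exactly what keeps the $X$-block of colours below the $Y$-block, the inverse maps $x\mapsto \beta(x)/d$, $y\mapsto(\beta(y)-k+d)/d$, and their counterparts, are well-defined integer-valued maps on $[0,p-1]$, are order-preserving (order-reversing on $X$ in the arithmetic case), and carry the $(k,d)$-type constraint back to the difference rule $f(uv)=|f(u)-f(v)|$. The set of missing edge colours pulls back to a set of edges between $X$ and $Y$, so one recovers a flawed set-ordered graceful labeling on the same completion graph $H$.

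The step I expect to be the main obstacle is the consistent bookkeeping of the \emph{flaw} across all these transformations, sharpened by the wrap-around in \textbf{Flawl}-4. One must confirm that a flawed $(k,d)$-type labeling of $T$ extends to a genuine one on a connected supergraph $H=T+E^*$ if and only if the pulled-back labeling extends to a genuine set-ordered graceful labeling on the very same $H$; this uses that adjoining only edges between $X$ and $Y$ preserves bipartiteness and that the affine recolouring of $T$ itself extends to $H$ without collision, the latter being exactly where the ``set-ordered'' gap inequalities do the real work. For \textbf{Flawl}-4 the reduction $h(uv)=h(u)+h(v)\ (\mathrm{mod}\ qd)$ forces one to show the affine map commutes with reduction modulo $qd$ once the exceptional vertex $y_0$ with colour $0$ is pinned down, and to match the edge colour absent from $\varphi(E(T))$ with the one absent from $f(E(T))$; getting this single vertex and this single missing colour to line up is the delicate core of the argument.
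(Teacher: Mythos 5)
First, note that the paper itself gives no proof of this theorem: it is quoted from the cited reference, so there is no in-paper argument to compare with line by line. Your overall route (normalise the flawed set-ordered graceful labeling via Theorem \ref{thm:flawed-graceful-labeling-forest}, transfer it by affine maps as in Theorem \ref{thm:connections-several-labelings}, and carry the flaw through the completion $H=T+E^*$) is indeed the natural and presumably intended one. However, the places you label ``routine'' or defer as ``the delicate core'' are exactly where your argument is incomplete, and in two of the three spokes the maps you write down do not work as stated for all $k,d\geq 1$. For \textbf{Flawl}-3, the maps $\psi(x)=d(s-1-f(x))$, $\psi(y)=d(f(y)-s)+k$ do produce the correct edge sums, but a $(k,d)$-arithmetic labeling (Definition \ref{defn:k-d-arithmetic-labelings}, read with the paper's convention B-4 and the Acharya--Hegde definition) requires \emph{distinct} vertex labels, and your maps collide across the bipartition whenever $d\mid k$ and $k\le (s-1)d$: e.g.\ for $k=d$ and $s\ge 2$ the $X$-vertex with $f(x)=s-2$ and the $Y$-vertex with $f(y)=s$ both receive the colour $d$. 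A shift $\psi(x)=d(s-1-f(x))+a$, $\psi(y)=d(f(y)-s)+k-a$ repairs this only when $2a\not\equiv k\pmod d$ is achievable, which fails for instance when $d=2\mid k$, so injectivity is not a ``routine point'' and needs a genuine argument (or a restriction, or a different construction).

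The gap is more serious in \textbf{Flawl}-4. Because the reduction $\bmod^{*}\,qd$ preserves residues modulo $d$, any edge $xy_0$ incident with the pinned vertex $\varphi(y_0)=0$ receives a colour congruent to $\varphi(x)\pmod d$, and membership of the edge-colour set in $S_{q-1,k,0,d}$ forces $\varphi(x)\equiv k\pmod d$ for every neighbour $x$ of $y_0$. This is incompatible with colouring all of $X$ by multiples of $d$, which is what ``the analogous affine maps'' would do when $d\nmid k$; and the alternative of colouring $X$ from the $k$-shifted progression clashes with the stated gap condition $\max\varphi(X)<\min\varphi(Y\setminus\{y_0\})$ and with injectivity when $d\mid k$. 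So the analogy with item $(8)$ of Theorem \ref{thm:connections-several-labelings} (where the modulus is $q$ and one repeated label at $0$ is tolerated) does not carry over verbatim to Definition \ref{defn:k-d-harmonious}, and the step you acknowledge as delicate is precisely the missing proof. Finally, your reverse directions assume the given flawed $(k,d)$-labelings have the exact affine form produced in the forward direction (all $X$-colours multiples of $d$, all $Y$-colours $\equiv k\pmod d$); the definitions only place the colours in $S_{m,0,0,d}\cup S_{n,k,0,d}$, so before ``dividing by $d$'' you must prove this residue structure (say by propagating residues along the connected completion $H$), which the proposal does not do.
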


\begin{problem}\label{qeu:444444}
It is meaningful to \textbf{find} out $(k,d)$-type colorings/labelings based on those colorings/labelings mentioned in Theorem \ref{defn:group-flawed-labelings}, since there are over 30 flawed graph labelings introduced in Theorem \ref{defn:group-flawed-labelings}.

Clearly, the operation $G+E^*$ produces two or more connected graphs by the different ways of joining $G_1,G_2,\dots, G_m$ together, so we get a graph set $S_{et}(G[\ominus]\{G_i\}^m_{i=1})$ such that each graph $H=G+E^*\in S_{et}(G[\ominus]\{G_i\}^m_{i=1})$ is connected and admits a $W$-constraint labeling mentioned in Theorem \ref{defn:group-flawed-labelings} when each $G_i$ of disjoint graphs $G_1,G_2,\dots, G_m$ is a connected bipartite $(p_i,q_i)$-graph with vertex bipartition $(X_i,Y_i)$ and admits a set-ordered graceful labeling $f_i$ holding $\max f_i(X_i)<\min f_i(Y_i)$ with $i\in [1,m]$, according to Theorem \ref{thm:flawed-graceful-labeling-forest}, Corollary \ref{thm:flawed-graceful-labeling-graph}, and Theorem \ref{thm:connections-several-labelings}. \textbf{Show} the topological structure and mathematical properties of this graph set $S_{et}(G[\ominus]\{G_i\}^m_{i=1})$.
\end{problem}

\begin{rem}\label{rem:333333}
A group of colored Hanzi-graphs $H_{1643}$, $H_{3907}$, $H_{3279}$, $H_{3423}$, $H_{2645}$, $H_{5238}$ is shown in Fig.\ref{fig:Hanzi-graph-00}, and they admit set-ordered graceful labelings/colorings. Hanzi-graphs are natural graphs and admit many $W$-constraint labelings/colorings mentioned in Theorem \ref{defn:group-flawed-labelings}, Theorem \ref{thm:flawed-graceful-labeling-forest}, Corollary \ref{thm:flawed-graceful-labeling-graph}, Theorem \ref{thm:connections-several-labelings}, Theorem \ref{thm:connection-flawed-labelings} and Theorem \ref{thm:flawed-kk-dd-labelings}.

Suppose that each Hanzi-graph $H_i$ of disjoint Hanzi-graphs $H_{1}$, $H_{2}$, $\dots$, $H_{n}$ is a connected bipartite $(p_i,q_i)$-graph with vertex bipartition $(X_i,Y_i)$ and admits a group of colorings $g_{i,j}$ with $j\in [1,a_i]$ and $i\in [1,n]$, corresponding to a group of Topcode-matrices $T_{code}(H_i,g_{i,1})$, $T_{code}(H_i,g_{i,2})$, $\dots $, $T_{code}(H_i,g_{i,a_i})$, where each Topcode-matrix $T_{code}(H_i,g_{i,1})$ is $3\times q_i$ and produces $(3q_i)!$ number-based strings, we put them into a \emph{string set} $S_{tring}(H_i,g_{i,j})$ with $j\in [1,a_i]$ and $i\in [1,n]$.

\textbf{The complex analysis.} We take randomly take a number-based string $s(r_{i,j})\in S_{tring}(H_i,g_{i,j})$ with $r_{i,j}\in [1,(3q_i)!]$, and make a large scale of number-based string as follows
\begin{equation}\label{eqa:Hanzi-number-based-strings}
S(\{H_i\}^n_{i=1},\{r_{i,j_t}\}^n_{i=1})=s(r_{1,j_1})s(r_{2,j_2})\cdots s(r_{n,j_n})
\end{equation} for $j_t\in [1,a_t]$, $r_{i,j_t}\in [1,(3q_t)!]$ and $t\in [1,n]$. If we do not think about the permutations on the strings $s(r_{1,j_1}),s(r_{2,j_2}),\dots ,s(r_{n,j_n})$ and Hanzi-graphs $H_{1}$, $H_{2}$, $\dots$, $H_{n}$, we get $\prod^n_{i=1}(a_i)!\prod^n_{i=1}(q_i)!$ different number-based strings like $S(\{H_i\}^n_{i=1},\{r_{i,j_t}\}^n_{i=1})$ shown in Eq.(\ref{eqa:Hanzi-number-based-strings}). Otherwise, we have at least
\begin{equation}\label{eqa:Hanzi-number-based-string-complex}
n_{string}(\{H_i\}^n_{i=1})=(n!)^2\prod^n_{i=1}(a_i)!\prod^n_{i=1}(q_i)!
\end{equation} different number-based strings $S(\{H_i\}^n_{i=1},\{r_{i,j_t}\}^n_{i=1})$ define in Eq.(\ref{eqa:Hanzi-number-based-strings}).

A Chinese sentence $H_{1643}$$H_{3907}$$H_{3279}$$H_{3423}$$H_{2645}$$H_{5238}$ consists of disjoint connected Hanzi-graphs $H_{1}$, $H_{2}$, $\dots$, $H_{13}$ with the edge numbers $q_1=$2, $q_2=$2, $q_3=$6, $q_4=$18, $q_5=$4, $q_6=$7, $q_7=$6, $q_8=$4, $q_9=$7, $q_{10}=$1, $q_{11}=$4, $q_{12}=$1, $q_{13}=$7. This group of connected Hanzi-graphs provide us
$$n_{string}(\{H_i\}^{13}_{i=1})=(13!)^2\prod^{13}_{i=1}(a_i)!\prod^{13}_{i=1}(q_i)!
$$ different number-based strings, where each $a_i\geq 30$ by Theorem \ref{defn:group-flawed-labelings}, Theorem \ref{thm:flawed-graceful-labeling-forest}, Corollary \ref{thm:flawed-graceful-labeling-graph}, Theorem \ref{thm:connections-several-labelings}, Theorem \ref{thm:connection-flawed-labelings} and Theorem \ref{thm:flawed-kk-dd-labelings}.

The above work illustrates the fact that Chinese speakers can generate their own passwords made by long byte number-based strings through speaking and writing their own easy-to-remember Chinese sentences, in response to cryptographic security in the age of supercomputers and quantum computers. \paralled
\end{rem}

\begin{figure}[h]
\centering
\includegraphics[width=16.4cm]{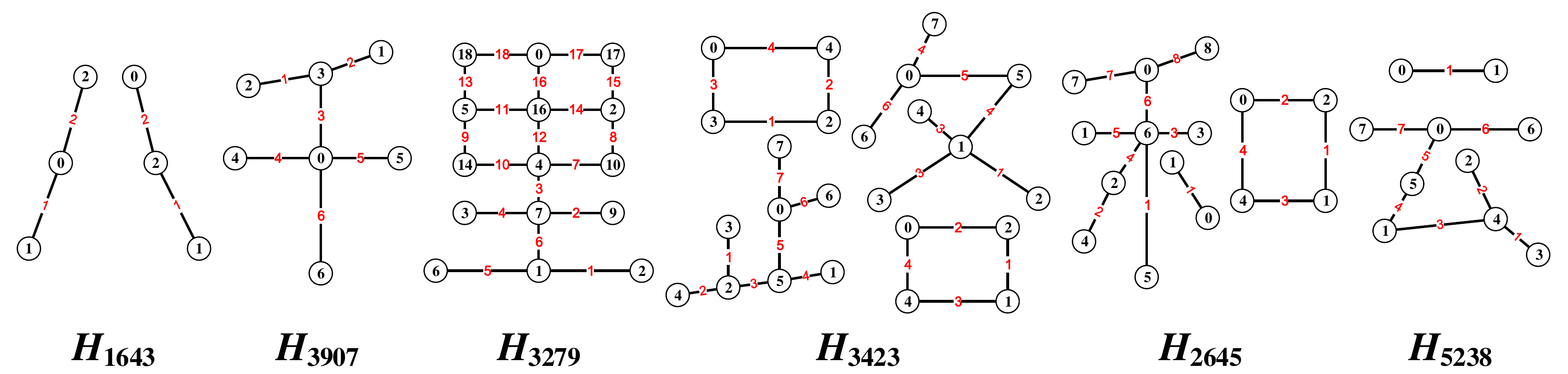}
\caption{\label{fig:Hanzi-graph-00}{\small A sentence $H_{1643}$$H_{3907}$$H_{3279}$$H_{3423}$$H_{2645}$$H_{5238}$ in Chinese \cite{GB2312-80}.}}
\end{figure}

\section{Graphs admitting parameterized total colorings}

\subsection{Graphs admitting $W$-constraint $(k,d)$-total colorings}

\begin{example}\label{exa:bipartite-graph-graceful-kd-total}
It is not hard to find a graceful $(k,d)$-total coloring of a \emph{complete bipartite graph} $K_{m,n}$. Let $V(K_{m,n})=X\cup Y$ with $X=\{x_j:j\in [1,m]\}$ and $Y=\{y_i:i\in [1,n]\}$ be the vertex sets of $K_{m,n}$, and $E(K_{m,n})=\{x_jy_i:j\in [1,m],i\in [1,n]\}$ be the edge set of $K_{m,n}$.

We define a $(k,d)$-total coloring $f$ for $K_{m,n}$ in the following way:

(i) $f(x_j)=(j-1)d$ with $j\in [1,m]$;

(ii) $f(y_i)=k+(m-1)d+m(i-1)d$ with $i\in [1,n]$; and

(iii) $f(x_jy_i)=f(y_i)-f(x_j)=k+(mi-j)d$ with $j\in [1,m]$.\\
Clearly, the edge color set $f(E(K_{m,n}))=S_{mn-1,k,0,d}$, so we claim that $f$ is a graceful $(k,d)$-total coloring of $K_{m,n}$.

See examples shown in Fig.\ref{fig:22-G-book}, where $K_{2,3}=G_1$, $K_{2,4}=G_4$, $K_{2,2}=G_5$, and $K_{2,1}=G_6$.\qqed
\end{example}

\begin{figure}[h]
\centering
\includegraphics[width=16.4cm]{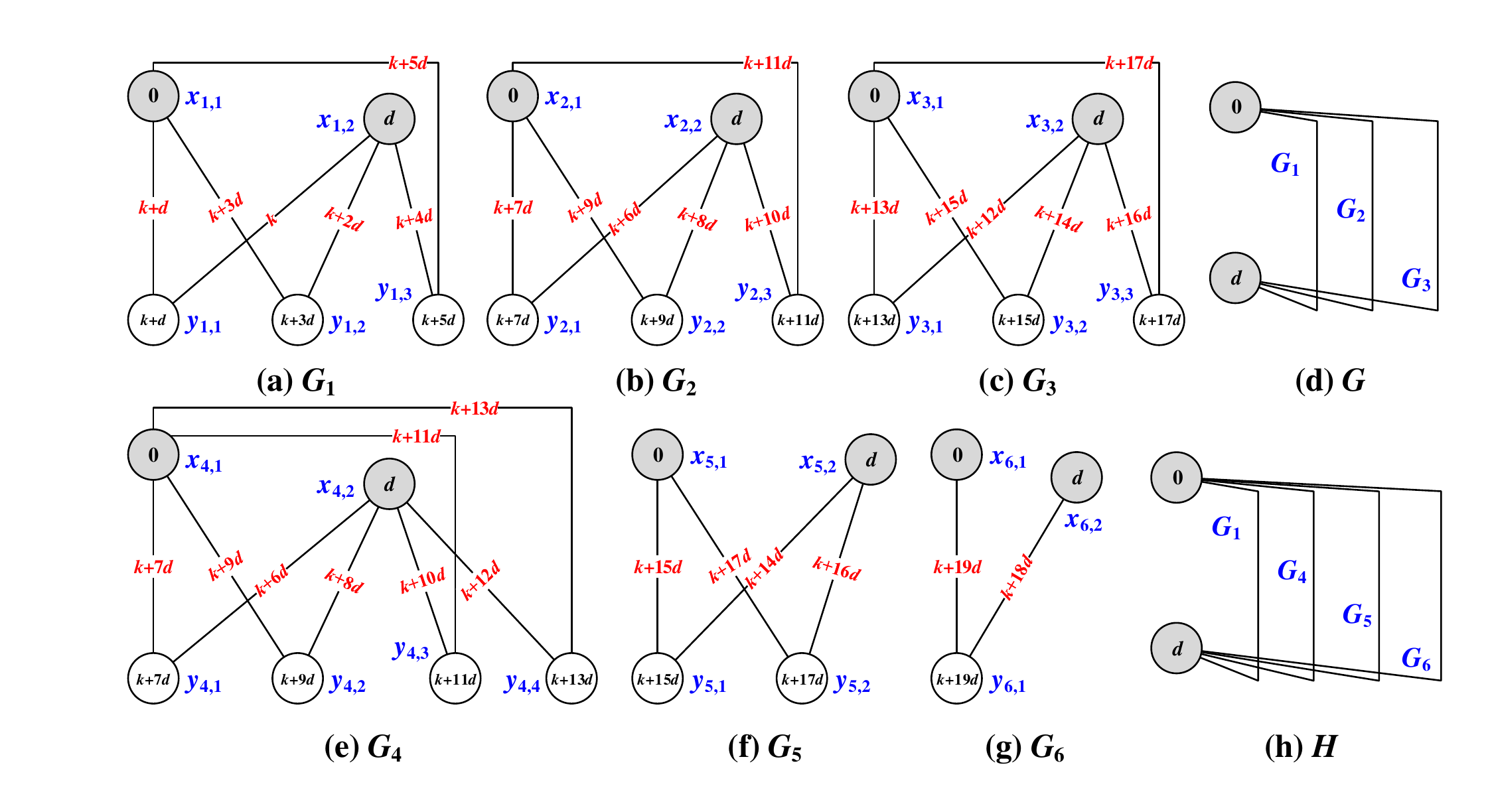}
\caption{\label{fig:22-G-book}{\small Two $W$-constraint colored $\{T_i\}^m_1$-books: (d) $G=(G_1\odot_2G_2)\odot_2G_3$; (h) $H=[(G_1\odot_2G_4)\odot_2G_5]\odot_2G_6$.}}
\end{figure}

\begin{example}\label{exa:regular-graph-book-expression}
\textbf{The regular graph-books.} A bipartite and connected $(p,q)$-graph $G$ admits a graceful $(k,d)$-total coloring $f$ such that the $X$-vertex color set $f(X)\subseteq S_{m,k,0,d}$, the $Y$-vertex color set $f(Y)\subseteq S_{q-1,k,0,d}$ and the edge color set $f(E(G))=S_{q-1,k,0,d}$, where $(X,Y)$ is the bipartition of $V(G)$, $X=\{x_i:i\in [1,s]\}$ and $Y=\{y_j:j\in [1,t]\}$ holding $s+t=p$.

Each copy $G_j$ of $G$ has its own bipartition $(X_j,Y_j)$, such that $X_j=\{x_{j,i}:i\in [1,s]\}$ and $Y_j=\{y_{j,i}:i\in [1,t]\}$ with $j\in [1,n]$. We define a total coloring $f_j$ of $G_j$ as:

(1) $f_j(x_{j,i})=f(x_i)$ for $x_i\in X$ and $x_{j,i}\in X_j$ with $i\in [1,s]$;

(2) $f_j(y_{j,i})=f(y_i)+(j-1)qd$ for $y_i\in Y$ and $y_{j,i}\in Y_j$ with $i\in [1,t]$;

(3) The edge $x_ry_i\in E(G)$ and the edge $x_{j,r}y_{j,i}\in E(G_j)$ hold
$$f_j(x_{j,r}y_{j,i})=f_j(y_{j,i})-f_j(x_{j,i})=f(y_i)+(j-1)qd-f(x_r)$$
Obviously, we have edge colors
$$
f_j(x_{j,r}y_{j,i})\in\{k+(j-1)qd,k+[(j-1)q+1]d,\dots , k+(jq-1)d\},~j\in [1,n]
$$ Then, we vertex-coincide these vertices $x_i,x_{1,i},x_{2,i},\dots , x_{n,i}$ into one vertex
$$
x_i=x_{1,i}\odot x_{2,i}\odot \cdots \odot x_{n,i}=x_1\odot \left ([\odot ]^n_{j=1}x_{j,i}\right )\in X
$$ The resulting graph is called \emph{regular graph-book}, denoted as
\begin{equation}\label{eqa:regular-graph-book-expression}
R_{book}=[\odot_X]^n_{j=1}G_j
\end{equation} the vertex set $X$ of $G$ is called \emph{spine}, and each colored graph $G_i$ is called \emph{book-page}.

Thereby, this regular graph-book admits a graceful $(k,d)$-total coloring $F$ defined by the compound of the colorings $f_1,f_2,\dots ,f_n$. See a regular graph-book $G$ shown in Fig.\ref{fig:22-G-book} (d), where $G$ admits a graceful $(k,d)$-total coloring.\qqed
\end{example}

\begin{example}\label{exa:general-graph-book-expression}
\textbf{The general graph-books.} Suppose that each bipartite and connected $(p_j,q_j)$-graph $H_j$ for $j\in [1,n]$ has its own vertex bipartition $(X_j,Y_j)$ as
$$X_j=\{u_{j,i}:i\in [1,s_j]\},~Y_j=\{v_{j,r}:r\in [1,t_j]\},~p_j=s_j+t_j
$$ and admits a $W$-constraint $(k,d)$-total coloring $g_j$ defined by
\begin{equation}\label{eqa:graph-book-expression11}
{
\begin{split}
&g_j:X_j\rightarrow S_{m_j,0,0,d}=\{0,d,\dots ,m_jd\}\\
&g_j:Y_j\cup E(G_j)\rightarrow S_{n_j,k,0,d}=\{k,k+d,\dots ,k+n_jd\}
\end{split}}
\end{equation} and holding the $W$-constraint $g_j(u_{j,i}v_{j,r})=W\langle g_j(u_{j,i}), g_j(v_{j,r})\rangle$ for each edge $u_{j,i}v_{j,r}\in E(H_j)$.

For constructing general graph-books, we set $s=s_j$ for $j\in [1,n]$ and $g_a(u_{a,i})=g_b(u_{b,i})$ for $i\in [1,s]$ and $a,b\in [1,n]$. We vertex-coincide the vertices $u_{1,i},u_{2,i},\dots ,u_{n,i}$ into one vertex
\begin{equation}\label{eqa:general-general-graph-book-vertex}
w_i=u_{1,i}\odot u_{2,i}\odot \cdots \odot u_{n,i}=[\odot]^n_{j=1}u_{j,i}
\end{equation} with $i\in [1,s]$, the resultant graph is just a \emph{general graph-book} denoted as
\begin{equation}\label{eqa:general-general graph-book}
G_{book}=H_1\odot _sH_2\odot_s \cdots \odot_s H_n=[\odot_s]^n_{j=1}H_j
\end{equation}

Now, we define a $(k,d)$-total coloring $F$ for the general graph-book $G_{book}$ by the following algorithm:

\textbf{Step 1.} $F(w_i)=g_1(u_{1,i})$ for $i\in [1,s]$, where $u_{1,i}\in X_1\subset V(H_1)$ and $w_i\in V(G_{book})$.

\textbf{Step 2.} Each vertex $v_{1,r}\in Y_1\subset V(H_1)$ is colored as $F(v_{1,r})=g_1(v_{1,r})$ for $r\in [1,t_1]$, and each edge $u_{1,i}v_{1,r}\in E(H_1)$ is colored by $F(u_{1,i}v_{1,r})=g_1(u_{1,i}v_{1,r})$.

\textbf{Step 3.} Each vertex $v_{j,r}\in Y_j\subset V(H_j)$ with $j\in [2,n]$ is colored as
\begin{equation}\label{eqa:555555}
F(v_{j,r})=g_j(v_{j,r})+\sum ^{j-1}_{i=1}q_i\cdot d,~r\in [1,t_j]
\end{equation} By Eq.(\ref{eqa:general-general-graph-book-vertex}), we have each edge $w_{i}v_{j,r}=u_{j,i}v_{j,r}\in E(H_j)\subset E(G_{book})$, so the edge color is
\begin{equation}\label{eqa:555555}
F(w_{i}v_{j,r})=F(u_{j,i}v_{j,r})=g_j(u_{j,i}v_{j,r})+\sum ^{j-1}_{i=1}q_i\cdot d
\end{equation}

\textbf{Step 3.} The coloring $F$ holds the $W$-constraint $F(xy)=W\langle F(x), F(y)\rangle$ for $xy\in E(G_{book})$.

\textbf{Step 4.} Return the $W$-constraint $(k,d)$-total coloring $F$ of the general graph-book $G_{book}$.

\vskip 0.2cm

See a general graph-book $H$ shown in Fig.\ref{fig:22-G-book} (h).\qqed
\end{example}

\begin{defn} \label{defn:general-graph-book-defn}
$^*$ We write a general graph-book $G_{book}=B\langle S_X,W,\{H_j\}^M_{j=1}\rangle$ for the convenience of further study, such that

(i) $S_X$ is the \emph{spine};

(ii) each \emph{book-page} $H_j$ is a bipartite and connected graph with its vertex bipartition $V(H_j)=(S_X,Y_j)$ for $j\in [1,M]$;

(iii) edge number $q_j=|E(H_j)|$ od each book-page $H_j$ is a \emph{book-page weight}, and the sum $\sum^M_{j=1}q_j$ is the \emph{volume} of the general graph-book;

(iv) two book-pages $H_i$ and $H_j$ hold $V(H_i)\cap V(H_j)=S_X$ for $i\neq j$.\qqed
\end{defn}

\begin{rem}\label{rem:333333}
We say that the general graph-book $B\langle S_X,W,\{H_j\}^M_{j=1}\rangle$ defined in Definition \ref{defn:general-graph-book-defn} to be \emph{Fibonacci graph-book} if the \emph{book-page weight sequence} $\{q_j\}^M_{j=1}$ is the \emph{Fibonacci sequence}. And moreover, a general graph-book $B\langle S_X,W,\{H_j\}^M_{j=1}\rangle$ is called \emph{$Q$-graph-book} if $\{q_j\}^M_{j=1}$ is a \emph{$Q$-sequence}, in general.

In the topological authentication of view, a general graph-book $B\langle S_X,W,\{H_j\}^M_{j=1}\rangle$ is just a topological authentication, such that each book-page $H_j$ is just a \emph{private-key}.\paralled
\end{rem}

\textbf{Graph-book lattices.} In a \emph{base} $\textbf{\textrm{T}}=(T_1,T_2,\dots ,T_n)$, suppose that each bipartite and connected $(p_j,q_j)$-graph $T_j\in \textbf{\textrm{T}}$ for $j\in [1,n]$ admits a $W$-constraint total coloring (or $(k,d)$-total coloring) $g_j$ defined in Eq.(\ref{eqa:graph-book-expression11}). The vertex bipartition $(X_i,Y_i)=V(T_i)$ of each graph $T_i$ holds $X_i=S_X$ true for $i\in [1,n]$ according to Definition \ref{defn:general-graph-book-defn}. Let $H_1,H_2,\dots ,H_M$ in a general graph-book $B\langle S_X,W,\{H_j\}^M_{j=1}\rangle$ defined in Definition \ref{defn:general-graph-book-defn} be a permutation of graphs $a_1T_1$, $a_2T_2$, $\dots $, $a_nT_n$ with $M=\sum ^n_{k=1}a_k\geq 1$, so we rewrite this general graph-book as
$$B\langle S_X,W,\{H_j\}^M_{j=1}\rangle=B\langle S_X,W,[\odot_X]^n_{k=1}a_kT_k\rangle
$$ Notice that there are $M!$ permutations, in total. We get a general uniform-$W$-constraint graph-book lattice
\begin{equation}\label{eqa:general-graph-book-lattice}
\textbf{\textrm{L}}(Z^0[\odot_X]\textbf{\textrm{T}},W)=\big \{B\langle S_X,W,[\odot_X]^n_{k=1}a_kT_k\rangle:~a_k\in Z^0,T_k\in \textbf{\textrm{T}}\big \}
\end{equation} under the base $\textbf{\textrm{T}}=(T_1,T_2,\dots ,T_n)$.

If each bipartite and connected $(p_j,q_j)$-graph $T_j$ of the base $\textbf{\textrm{T}}=(T_1,T_2,\dots ,T_n)$ admits a $W_t$-constraint total coloring (or $(k,d)$-total coloring) $f_{j,t}$ for $t\in [1,A]$, then we get a group of general uniform-$W_t$-constraint graph-book lattices $\textbf{\textrm{L}}(Z^0[\odot_X]\textbf{\textrm{T}},W_t)$ for each $t\in [1,A]$.

\begin{problem}\label{qeu:444444}
Does there exist a general graph-book lattice $\textbf{\textrm{L}}(Z^0[\odot_X]\textbf{\textrm{T}},\{W_t\}^A_{i=1})$ as each bipartite and connected $(p_j,q_j)$-graph $T_j\in \textbf{\textrm{T}}$ admitting a $W_j$-constraint total coloring (or $(k,d)$-total coloring) $g_j$ for $j\in [1,n]$, and $W_j\neq W_i$ for some $j\neq i$ and $1\leq j,i\leq A$?
\end{problem}

\subsection{Graphs admitting graceful $(k,d)$-total colorings}

\begin{thm}\label{thm:adding-leaves-keep-total-coloring}
\cite{Yao-Sun-Hongyu-Wang-n-dimension-ICIBA2020} Suppose that a bipartite and connected $(p,q)$-graph $G$ admits a graceful $(k,d)$-total coloring defined in Definition \ref{defn:kd-w-type-colorings}, adding randomly leaves to $G$ produces a bipartite and connected $(p\,',q\,')$-graph admitting a graceful $(k,d)$-total coloring.
\end{thm}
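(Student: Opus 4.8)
The plan is to reduce everything to one elementary move — attaching leaves to vertices of the \emph{lower} color class (the one colored by multiples of $d$) — and to reach the other class by a complement operation. I read the statement in the sense of the adding-leaves operation: $G'=G+L_{eaf}(w)$, where the new leaves $w_1,\dots,w_r$ are such that $w_i$ is joined to a vertex $z_i$ of the original graph $G$. Write $V(G)=X\cup Y$ with $f:X\to S_{m,0,0,d}$ and $f:Y\cup E(G)\to S_{n,k,0,d}$; from $f(E(G))=S_{q-1,k,0,d}$ one checks that $f(uv)=f(v)-f(u)$ for every edge $uv$ ($u\in X$, $v\in Y$), that $\max f(Y)\le k+(q-1)d$, and that $\max f(X)\le (q-1)d$ (so in particular $m$ may be taken $\le q-1$). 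Split the leaves according to $I_X=\{i:z_i\in X\}$ and $I_Y=\{i:z_i\in Y\}$, with $A=|I_X|$ and $B=|I_Y|$.

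The core sublemma I would prove first is: if $H$ is a bipartite connected graph with a graceful $(k,d)$-total coloring $h$, parts $(X_H,Y_H)$ with $X_H$ the lower class and $q_H=|E(H)|$, and $\ell$ new leaves $u_1,\dots,u_\ell$ are attached to vertices $z'_1,\dots,z'_\ell\in X_H$ (not necessarily distinct), then $H'=H+\{u_j\}^{\ell}_{j=1}$ admits a graceful $(k,d)$-total coloring $g$. Indeed, set $g(x)=h(x)$ for $x\in X_H$, $g(v)=h(v)+\ell d$ for $v\in Y_H$, $g(u_j)=h(z'_j)+k+(j-1)d$ and $g(u_jz'_j)=k+(j-1)d$, and keep $g(xy)=|g(x)-g(y)|$ on old edges. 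Then the old edge colors become $\{k+\ell d,\dots,k+(q_H+\ell-1)d\}$ while the new ones fill $\{k,\dots,k+(\ell-1)d\}$, these two ranges being disjoint, so $g(E(H'))=S_{q_H+\ell-1,k,0,d}$ exactly; the lower class is unchanged, the colors on $Y_H\cup\{u_j\}$ and on all edges have the form $k+jd$, and since $h(z'_j)\le (q_H-1)d$ every vertex color is $\le k+(q_H+\ell-1)d$, so the total color set lies in $S_{m',0,0,d}\cup S_{q_H+\ell-1,k,0,d}$; finally $H'$ is bipartite with parts $(X_H,\,Y_H\cup\{u_j\})$ and connected. I would also record the complement move: for any graceful $(k,d)$-total coloring $h$ of a graph with $q_H$ edges, $\overline h(v):=k+(q_H-1)d-h(v)$ is again a graceful $(k,d)$-total coloring — edge colors are preserved since $\overline h(uv)=|\overline h(u)-\overline h(v)|=|h(u)-h(v)|$, and because $k+(q_H-1)d\equiv k\pmod d$ and all vertex colors are $\le k+(q_H-1)d$, the two color classes simply swap roles.

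Now assemble in two passes. First apply the sublemma to $G$ with the leaves indexed by $I_X$, getting a graceful $(k,d)$-total coloring $g_1$ of $G_1:=G+\{w_i:i\in I_X\}$, where $X$ is still the lower class and $|E(G_1)|=q+A$. Next replace $g_1$ by its complement $\overline{g_1}$, so that now the class $Y_1:=Y\cup\{w_i:i\in I_X\}$ — which contains every $z_i$ with $i\in I_Y$ — is the lower class. Then apply the sublemma once more, attaching the leaves indexed by $I_Y$ to the vertices $z_i$ lying in that lower class, obtaining a graceful $(k,d)$-total coloring $g_2$ of $G_2:=G_1+\{w_i:i\in I_Y\}$. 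Since $G_2=G+L_{eaf}(w)=G'$, and $G'$ is bipartite (every edge joins $X\cup\{w_i:i\in I_Y\}$ to $Y\cup\{w_i:i\in I_X\}$) and connected, $g_2$ is the required coloring. The degenerate cases $I_X=\emptyset$, $I_Y=\emptyset$, or $r=0$ are handled by skipping the corresponding pass.

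The construction is short; the real work — and the only place things can go wrong — is the bookkeeping of the color sets. At each pass one must verify that the edge color set is \emph{exactly} the interval $S_{q'-1,k,0,d}$ with neither repetition nor gap, and that every vertex color still lands in the prescribed union $S_{m',0,0,d}\cup S_{q'-1,k,0,d}$; the delicate sub-point is that the lower class never climbs above $(q'-1)d$, which is precisely what keeps the complement $\overline h(v)=k+(q'-1)d-h(v)$ a legitimate graceful $(k,d)$-total coloring. This rests on the inequality $\max f(X)\le (q-1)d$ (equivalently $m\le q-1$) valid for every graceful $(k,d)$-total coloring, so I would isolate that as a preliminary observation and check that it is maintained by the sublemma at each step of the induction.
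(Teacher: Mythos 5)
Your proof is correct, but it is organized quite differently from the paper's. The paper proves the theorem in a single pass: it colors the new edges at the $X$-side leaves with the smallest edge labels $k,k+d,\dots,k+(A-1)d$, then the new edges at the $Y$-side leaves with the next $B$ labels, shifts every $Y$-vertex color and every old edge color up by $(A+B)d$ while keeping the $X$-vertex colors fixed, and finally reads off the leaf colors from the graceful constraint (this is exactly the ``Leaf-adding $x$SL$y$LS-graceful algorithm'' that the paper reuses later, e.g.\ to get the $2^m$ distinct colorings of Theorem \ref{thm:any-tree-k-d-graceful-total-coloring}). You instead prove a one-sided lemma (leaves attached only to the lower color class, new edges taking the labels $k,\dots,k+(\ell-1)d$ and the upper class shifted by $\ell d$) and then reach the other class by the complement $\overline h(v)=k+(q_H-1)d-h(v)$, which swaps the roles of the two classes; two applications of the lemma with one complementation in between give the result. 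Your duality step is essentially the ``totally vertex-image'' transformation of Definition \ref{defn:kd-w-type-coloring-transfoemations}, so the ingredients are native to the paper even though its proof of this theorem does not use them. What your route buys is modularity and a single invariant to track (lower-class colors bounded by $(q'-1)d$, which you correctly isolate and verify is preserved); what the paper's route buys is one explicit closed formula for every color simultaneously and a coloring that fixes the original $X$-colors, which is what its later algorithmic variants ($x$SL$y$LS versus $x$LS$y$SL) rely on. One shared caveat: your identities $f(uv)=f(v)-f(u)$ and $\max f(X)\le (q-1)d$ hold under the set-ordered/generic-parameter reading of Definition \ref{defn:kd-w-type-colorings} (the $X$-colors lying below the $Y$-colors), and the paper's own proof makes exactly the same implicit assumption, so you are at the same level of rigor, but it is worth stating that hypothesis explicitly at the start.
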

\begin{proof} Let $G$ be a bipartite and connected $(p,q)$-graph with its vertex set $V(G)=X\cup Y$ with $X\cap Y=\emptyset$, and $X=\{x_i:i\in [1,s]\}$ and $Y=\{y_j:j\in [1,t]\}$ holding $s+t=p$, as well as the edge set $E(G)=\{e_i:i\in [1,q]\}$.

Suppose that $G$ admits a graceful $(k,d)$-total coloring $f$ defined in Definition \ref{defn:kd-w-type-colorings}, so we have
$$f:X\rightarrow S_{m,0,0,d},~f:Y\cup E(G)\rightarrow S_{q-1,k,0,d},~k\geq 1
$$ and moreover $0=f(x_1)\leq f(x_i)\leq f(x_{i+1})$ for $i\in [1,s-1]$ and
$$f(y_j)\leq f(y_{j+1})\leq f(y_{t})=k+(q-1)d,~j\in [1,t-1]
$$ as well as the edge color set $f(E(G))=S_{q-1,k,0,d}$.

Adding randomly leaves to the vertices $x_i$ and $y_j$ of $G$ produces leaf sets
$$L_{eaf}(x_i)=\{x_{i,1},x_{i,2},\dots ,x_{i,a_i}\},~i\in [1,s]\textrm{ and }L_{eaf}(y_j)=\{y_{j,1},y_{j,2},\dots ,y_{j,b_j}\},~j\in [1,t]
$$ Here, it is allowed that $L_{eaf}(x_i)=\emptyset$ (or $L_{eaf}(y_j)=\emptyset$) if some vertex $x_i$ (or some vertex $y_j$) is not joined with any new added vertex. So, the leaf-added graph $G+L_{eaf}(w)$ has its own vertex set
$$
V(G+L_{eaf}(w))=X\,'\cup Y\,',~X\,'=X\bigcup \left (\bigcup^t_{j=1} L_{eaf}(y_j)\right ),~Y\,'=Y\bigcup \left (\bigcup^s_{i=1} L_{eaf}(x_i)\right )
$$ and the new leaf set
$$L_{eaf}(w)=\left (\bigcup^s_{i=1} L_{eaf}(x_i)\right )\bigcup \left (\bigcup^t_{j=1} L_{eaf}(y_j)\right )$$
Let $A=\sum ^{s}_{k=1} a_k$ and $B=\sum ^{t}_{k=1} b_k$. Thereby,
$${
\begin{split}
|V(G+L_{eaf}(w))|&=|V(G)|+|L_{eaf}(w)|=|V(G)|+A+B\\
|E(G+ L_{eaf}(w))|&=|E(G)|+|L_{eaf}(w)|=|E(G)|+A+B
\end{split}}
$$

We define a total coloring $g$ for the connected graph $G+L_{eaf}(w)$ in the following steps:

\textbf{Step 1.} Color the edges $x_ix_{i,r}$ made by the leaves of $L_{eaf}(x_i)$ joined with the vertices $x_i$ of $X$ as:
$$
g(x_ix_{i,r})=k+d\cdot \left (r-1+\sum ^{i-1}_{k=1} a_k\right )
$$ for $r\in [1,a_i]$ and $i\in [1,s]$, here, $\sum ^{0}_{k=1} a_k=0$.

\textbf{Step 2.} Color the edges $y_jy_{j,r}$ made by the leaves of $L_{eaf}(y_j)$ joined with the vertices $y_j$ of $Y$ as:
$$
g(y_ty_{t,r})=k+d\cdot (A+r-1),~r\in [1,b_t]
$$ and
$$
g(y_{t-j}y_{t-j,r})=k+d\cdot \left (A+r-1+\sum^{j}_{k=1} b_{t-k+1}\right ),~r\in [1,b_{t-j}],~j\in [1,t-1]
$$

\textbf{Step 3.} Color the vertices $x_i\in X\,'$ by $g(x_i)=f(x_i)$ with $i\in [1,s]$, and set $g(x_{i,r})=f(x_i)+g(x_ix_{i,r})$ for each leaf $x_{i,r}\in L_{eaf}(x_i)\subset Y\,'$.

\textbf{Step 4.} Color the vertices $y_j\in Y\,'$ by $g(y_j)=f(y_j)+(A+B)d$ with $j\in [1,t]$, and each leaf $y_{j,r}\in L_{eaf}(y_j)\subset Y\,'$ is colored with
$$
g(y_{j,r})=g(y_j)-g(y_{j}y_{j,r})=f(y_j)+(A+B)d-\left [k+d\cdot \left(A+r-1+\sum^{t-j}_{k=1} b_{t-k+1}\right )\right ]
$$ for $j\in [1,t]$, where $\sum^{0}_{k=1} b_{t-k+1}=0$.

\textbf{Step 5.} Color each edge $x_iy_j\in E(G)\subset E(G+L_{eaf}(w))$ by
$$
g(x_iy_j)=g(y_j)-g(x_i)=f(y_j)+(A+B)d-f(x_i)=f(x_iy_j)+(A+B)d
$$ with $x_iy_j\in E(G+L_{eaf}(w))$. Thereby, we have the edge color set
$${
\begin{split}
g(E(G+L_{eaf}(w)))=&\{k,k+d,\dots, k+(A+B-1)d\}\cup\\
&\cup \{f(x_iy_j)+(A+B)d:x_iy_j\in E(G+L_{eaf}(w))\}\\
=&\{k,k+d,\dots, k+(A+B-1)d, k+(A+B)d, \dots, k+(q+A+B-1)d\}\\
=&S_{q+A+B-1,k,d}
\end{split}}
$$ and the smallest vertex color is $\min g(V(G+L_{eaf}(w)))=g(x_1)=0$, and the largest vertex color is
$$\max g(V(G+L_{eaf}(w)))=g(y_t)=k+(q-1+A+B)d=k+(|E(G+L_{eaf}(w))|-1)d$$ as well as the vertex color sets $g(X\,')\subset S_{n,0,d}$ and $g(Y\,')\subset S_{A+B+q,k,d}$. Finally, we claim that $g$ is a graceful $(k,d)$-total coloring of the leaf-added graph $G+L_{eaf}(w)$.

The proof of the theorem is complete.
\end{proof}

\begin{figure}[h]
\centering
\includegraphics[width=16.4cm]{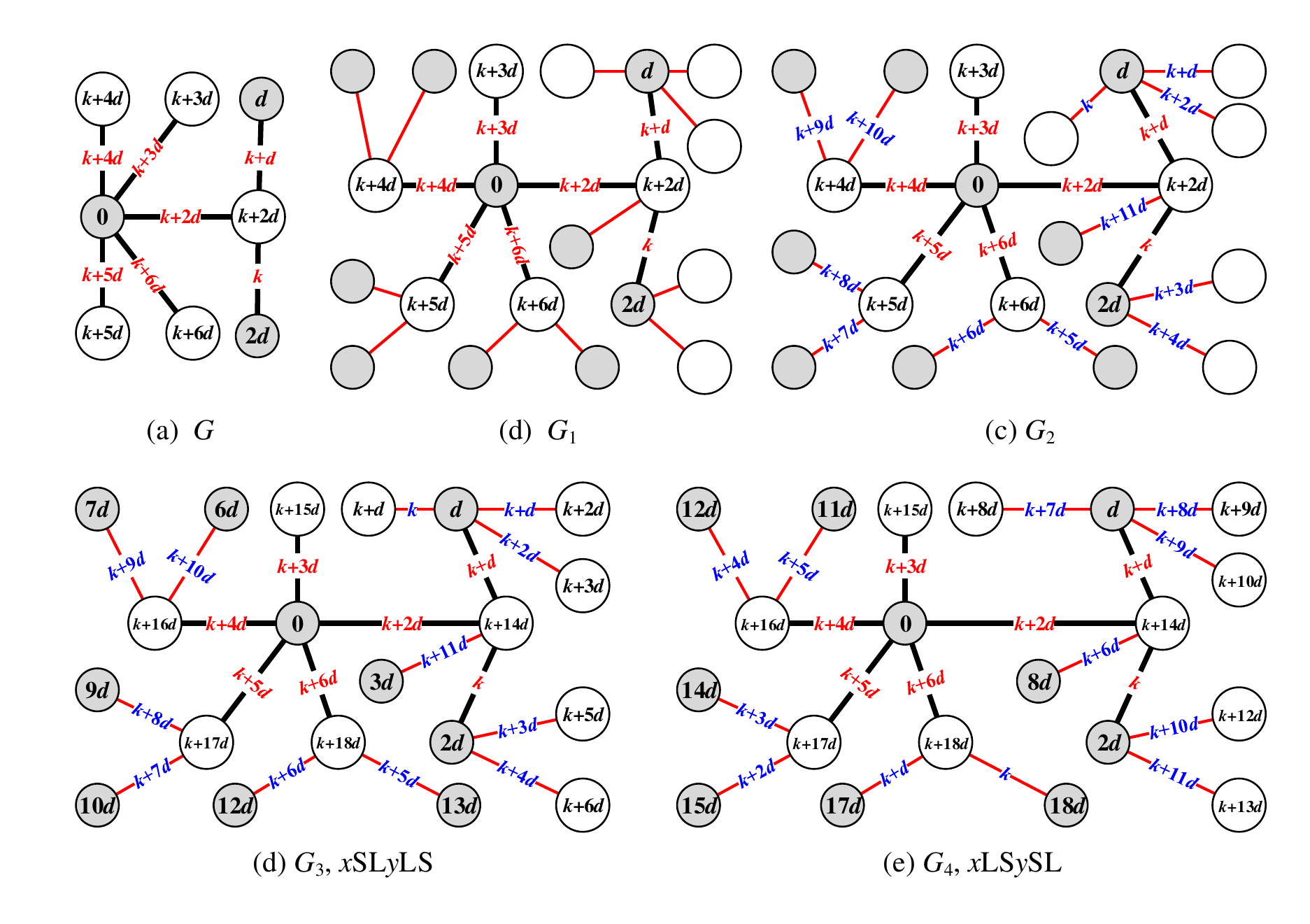}
\caption{\label{fig:k-d-add-leaves-1}{\small The processes for adding leaves: (a) A graph $G$ admitting a graceful $(k,d)$-total coloring; (b) adding randomly leaves to $G$; (c) coloring new added edges by the Leaf-adding $x$SL$y$LS-graceful algorithm; (d) coloring added leaves; (e) coloring new added edges by the Leaf-adding $x$LS$y$SL-graceful algorithm and coloring added leaves.}}
\end{figure}

\begin{figure}[h]
\centering
\includegraphics[width=16.4cm]{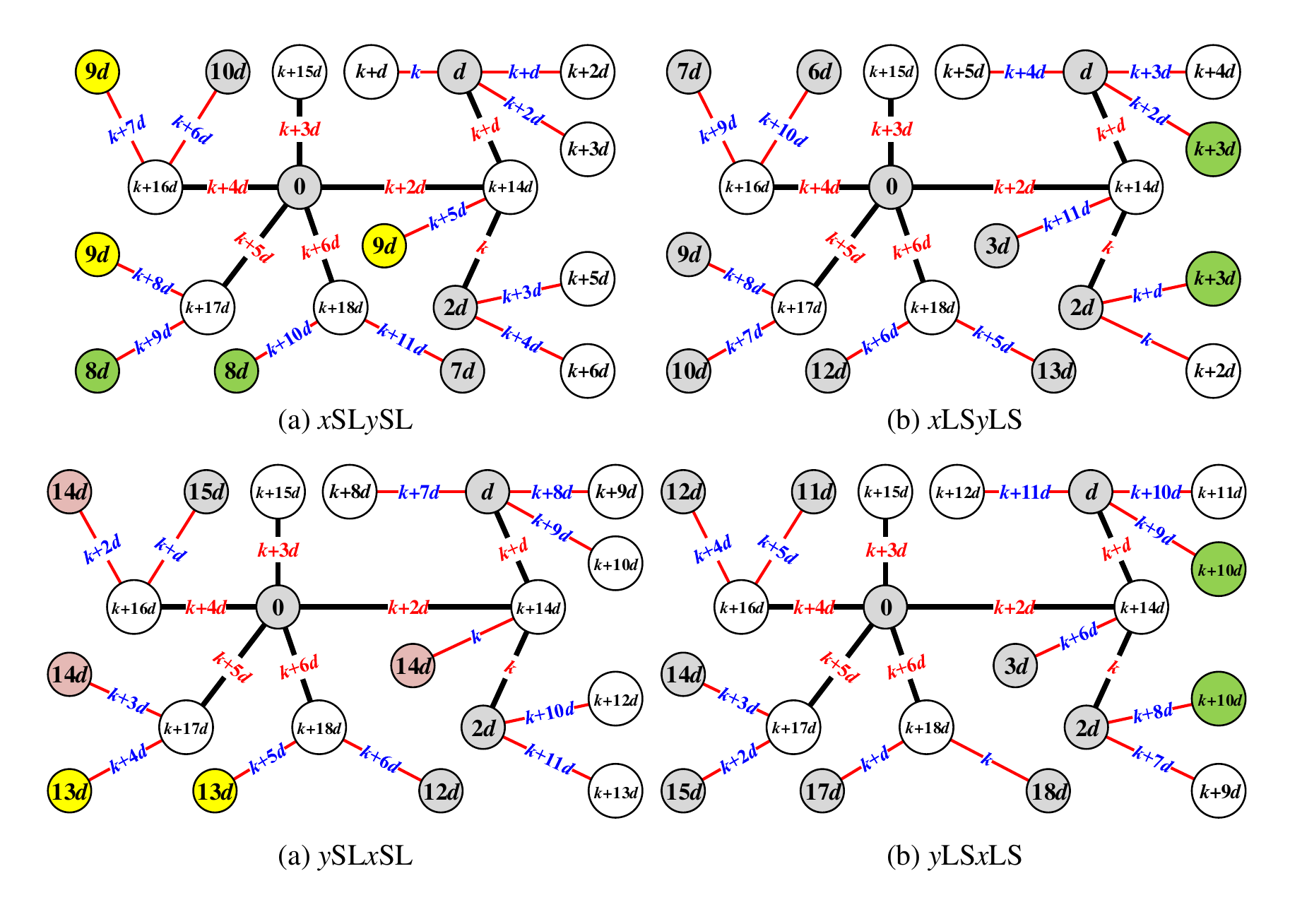}
\caption{\label{fig:k-d-add-leaves-1a}{\small Four Leaf-adding $W$-graceful algorithms with $W=$$x$SL$y$SL, $x$LS$y$LS, $y$SL$x$SL and $y$LS$x$LS based on $G_2$ shown in Fig.\ref{fig:k-d-add-leaves-1}.}}
\end{figure}

\begin{figure}[h]
\centering
\includegraphics[width=13.6cm]{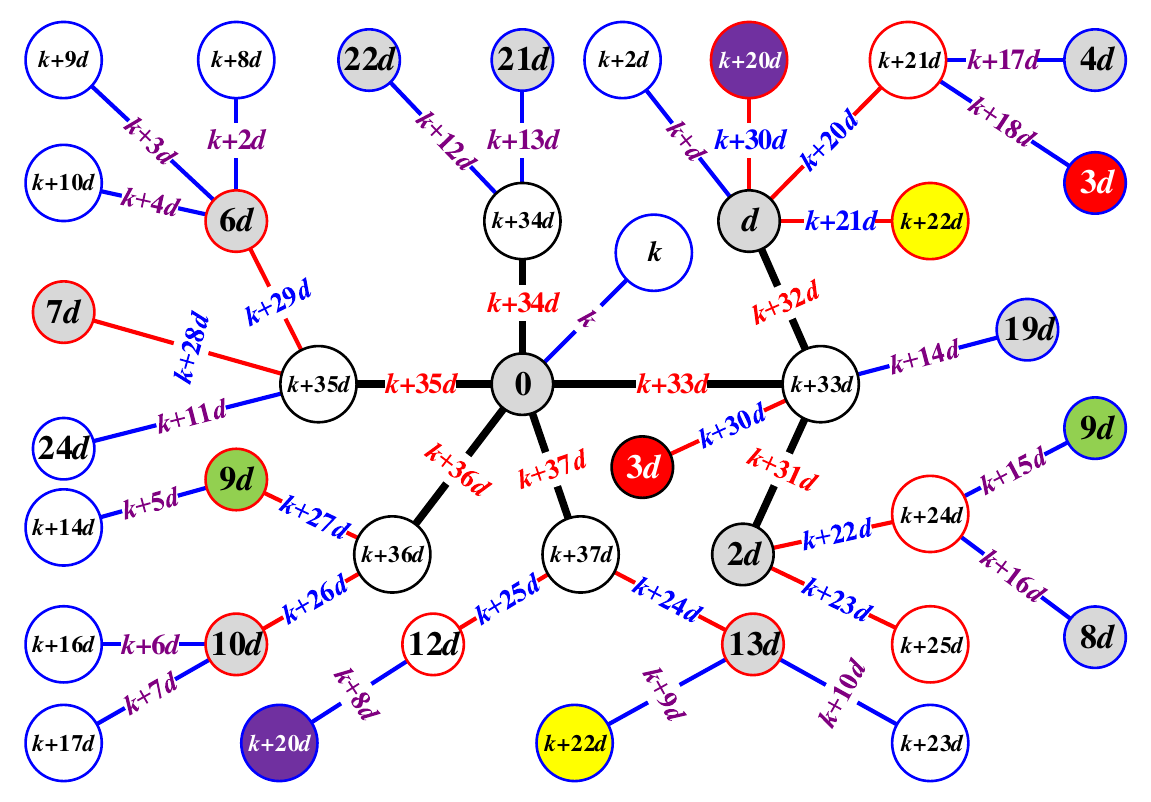}
\caption{\label{fig:k-d-add-leaves-2}{\small An example for understanding the Leaf-adding $x$SL$y$LS-graceful algorithm by the positive order of coloring the new added edges, where two vertices are colored with $3d$, two vertices are colored with $9d$, two vertices colored with $k+20d$, and two vertices are colored with $k+22d$.}}
\end{figure}

\begin{rem}\label{rem:333333}
We call the process of showing the graceful $(k,d)$-total coloring $g$ of the leaf-added graph $G+L_{eaf}(w)$ obtained by ``adding-leaf operation'' in the proof of Theorem \ref{thm:adding-leaves-keep-total-coloring} as \emph{Leaf-adding $x$SL$y$LS-graceful algorithm} hereafter, since we have colored new added edges from the leaves of $x_1$ to the leaves of $x_s$ (from small to large, $x$SL) and then from the leaves of $y_t$ to the leaves of $y_1$ (from large to small, $y$LS). The coloring technique in the proof of Theorem \ref{thm:adding-leaves-keep-total-coloring} is as the same as that in \cite{Zhou-Yao-Chen-Tao2012}. See examples shown in Fig.\ref{fig:k-d-add-leaves-1} and Fig.\ref{fig:k-d-add-leaves-2} for illustrating the Leaf-adding $x$SL$y$LS-graceful algorithm, where $G_3$ is obtained by adding randomly leaves to $G$.

Conversely, we have a Leaf-adding $x$LS$y$SL-graceful algorithm when we color new added edges from the leaves of $x_s$ to the leaves of $x_1$ and then from the leaves of $y_1$ to the leaves of $y_t$ in the proof of Theorem \ref{thm:adding-leaves-keep-total-coloring}. See an example shown in Fig.\ref{fig:anti-oder} for knowing the Leaf-adding $x$LS$y$SL-graceful algorithm.\paralled
\end{rem}

\begin{figure}[h]
\centering
\includegraphics[width=13.6cm]{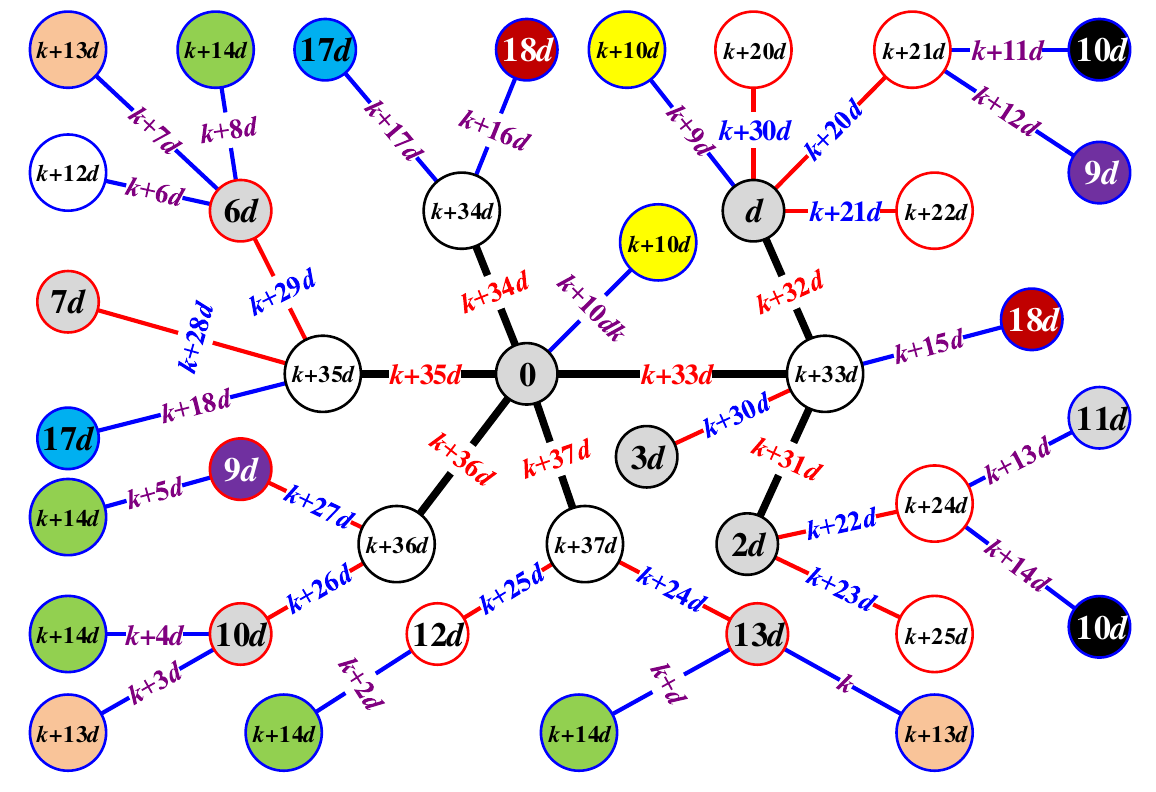}
\caption{\label{fig:anti-oder}{\small An example for illustrating the Leaf-adding $x$LS$y$SL-graceful algorithm by the reverse order of coloring the new added edges, where two vertices are colored with $9d$, three vertices are colored with $10d$, two vertices are colored with $17d$, two vertices are colored with $18d$, two vertices are colored with $k+10d$, three vertices are colored with $k+13d$, and five vertices are colored with $k+14d$.}}
\end{figure}

\begin{thm}\label{thm:any-tree-k-d-graceful-total-coloring}
$^*$ Any tree $T$ with diameter $D(T)\geq 3$ admits at least $2^m$ different graceful $(k,d)$-total colorings for $m+1=\left \lceil \frac{D(T)}{2}\right \rceil $.
\end{thm}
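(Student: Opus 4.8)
\emph{Overall strategy.} Build $T$ from the inside out along the leaf‑peeling sequence of Eq.~(\ref{eqa:555555}): $T=T_1,T_2,\dots ,T_{m+1}$ with $T_{i+1}=T_i-L(T_i)$, $T_{m+1}=K_{1,n}$ a star, and $m+1=\lceil D(T)/2\rceil$. The base star $K_{1,n}$ is a complete bipartite graph, so it admits a graceful $(k,d)$-total coloring by Example~\ref{exa:bipartite-graph-graceful-kd-total}. For $i=m,m-1,\dots ,1$ in turn, $T_i$ is obtained from $T_{i+1}$ by adding back the leaves of $L(T_i)$, so Theorem~\ref{thm:adding-leaves-keep-total-coloring} converts a graceful $(k,d)$-total coloring of $T_{i+1}$ into one of $T_i$; after these $m$ un‑peeling steps we reach $T$. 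The plan is to show that each of the $m$ steps admits two inequivalent such conversions and that the resulting $2^m$ binary choices give pairwise distinct colorings of $T$.

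\emph{Two conversions at each step.} Since $m+1=\lceil D(T)/2\rceil$ forces $D(T)\in\{2m+1,2m+2\}$, and deleting all leaves of a tree of diameter $\geq 2$ lowers the diameter by exactly $2$, we have $D(T_i)=D(T)-2(i-1)\geq 3$ for every $i\in[1,m]$. Hence, when we recover $T_i$ from $T_{i+1}$, a longest path $u_0u_1\cdots u_\ell$ of $T_i$ has $\ell\geq 3$, and its two endpoint leaves $u_0,u_\ell$ hang from two distinct vertices $u_1\neq u_{\ell-1}$ of $T_{i+1}$; so leaves are added to at least two vertices. When this happens, the proof of Theorem~\ref{thm:adding-leaves-keep-total-coloring} together with the variant leaf‑adding algorithms in the remark following it and in Fig.~\ref{fig:k-d-add-leaves-1a} (the $x$SL$y$LS‑, $x$LS$y$SL‑, $y$SL$x$SL‑graceful algorithms, etc.) provides two algorithms that assign the arithmetic block $\{k,k+d,\dots ,k+(A+B-1)d\}$ of new‑edge colors to the groups of new edges in different orders, hence produce two graceful $(k,d)$-total colorings of $T_i$ that differ on at least one newly added edge. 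Fix such a pair $\{\alpha^{0}_{s},\alpha^{1}_{s}\}$ for the $s$-th step, $s\in[1,m]$.

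\emph{Pairwise distinctness.} For a word $b=(b_1,\dots ,b_m)\in\{0,1\}^m$ let $g_b$ be the coloring of $T$ obtained by applying $\alpha^{b_1}_{1},\dots ,\alpha^{b_m}_{m}$ in succession to the base coloring. The decisive observation, visible in Steps 3--5 of the proof of Theorem~\ref{thm:adding-leaves-keep-total-coloring}, is that every leaf‑adding algorithm keeps the colors of the old $X$-vertices unchanged and shifts the colors of the old $Y$-vertices and old edges by one and the same amount $(A+B)d$, a quantity determined only by which leaves are added (hence only by $T$), not by the algorithm; thus any two of our algorithms applied to the same intermediate colored tree agree off the newly added leaves and edges. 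Now take $b\neq b'$ and let $j$ be the least index with $b_j\neq b'_j$. The two processes build the same coloring of $T_{m+2-j}$, and at step $j$ they produce colorings of $T_{m+1-j}$ that differ on some edge $e$ added at that step. Each subsequent step is a leaf‑addition, which leaves $e$ present and merely translates its color by a $T$-determined amount, the same in both branches; so $g_b$ and $g_{b'}$ still differ on $e$. Hence $b\mapsto g_b$ is injective and $T$ admits at least $2^m$ distinct graceful $(k,d)$-total colorings.

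\emph{Main obstacle.} The substance lies in the two claims above: that at every un‑peeling step there genuinely are two inequivalent valid extensions — which needs $D(T_i)\geq 3$ to force leaves onto at least two vertices, so that permuting the new‑edge color blocks really changes the coloring — and that the leaf‑adding algorithms act on the already‑colored part by a single algorithm‑independent translation, so that a discrepancy created early cannot be cancelled later. Both are verified by inspecting the explicit color formulas in the proof of Theorem~\ref{thm:adding-leaves-keep-total-coloring}; the remainder is bookkeeping on the peeling sequence.
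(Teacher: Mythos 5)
Your proposal follows essentially the same route as the paper's proof: peel leaves down to the star $T_{m+1}$, color it, and rebuild $T$ through the leaf-adding graceful algorithms with two choices at each of the $m$ un-peeling steps. The only difference is that you make explicit two points the paper leaves implicit — using $D(T_i)\geq 3$ (and, where the two leaf-supporting vertices fall in different bipartition classes, the wider family of algorithm variants) to guarantee the two extensions at each step genuinely differ, and the translation-invariance of old colors under later leaf-additions to show the $2^m$ resulting colorings are pairwise distinct — both of which are correct and strengthen the argument rather than change it.
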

\begin{proof} Let $D(T_1)$ be the diameter of a tree $T_1$ with $D(T_1)\geq 3$, and let $L_{eaf}(T_1)$ be the set of leaves of $T_1$. We get another tree $T_2=T_1-L_{eaf}(T_1)$ after removing the leaves of $L_{eaf}(T_1)$. Go on in this way, we have trees $T_{i+1}=T_i-L_{eaf}(T_i)$ with $i\in [1,m]$, and $T_{m+1}$ is just a star with vertex set $V(T_{m+1})=\{x\}\cup \{x_j:j\in [1,n]\}$ and edge set $E(T_{m+1})=\{xx_j:j\in [1,n]\}$.

We define a graceful $(k,d)$-total coloring $f_{m+1}$ for the star $T_{m+1}$ as:

(i) $f_{m+1}(x)=0$, $f_{m+1}(x_j)=k+(j-1)d$ and

(ii) $f_{m+1}(xx_j)=f_{m+1}(x_j)-f_{m+1}(x)$ with $j\in [1,n]$.\\
Thereby, each tree $T_i$ is obtained by adding the leaves of $L_{eaf}(T_i)$ to the tree $T_{i+1}$, that is, $T_i=T_{i+1}+ L_{eaf}(T_i)$ is a \emph{leaf-added tree}. Moreover, $T_i$ admits a graceful $(k,d)$-total coloring $f_{i}$ obtained from the graceful $(k,d)$-total coloring $f_{i+1}$ of $T_{i+1}$ according to one of the Leaf-adding $x$SL$y$LS-graceful algorithm and the Leaf-adding $x$LS$y$SL-graceful algorithm. By the induction, the tree $T_1$ admits a graceful $(k,d)$-total coloring $f_{1}$ obtained from the graceful $(k,d)$-total coloring $f_{2}$ of $T_{2}$.

Notice that there are two ways from choosing one of the Leaf-adding $x$SL$y$LS-graceful algorithm and the Leaf-adding $x$LS$y$SL-graceful algorithm, since each tree has at least two leaves, that is, $|L_{eaf}(T_i)|\geq 2$ with $i\in [1,m]$.

Then the tree $T_1$ admits at least $2^m$ different graceful $(k,d)$-total colorings as desired.
\end{proof}

\subsection{Graphs admitting harmonious $(k,d)$-total colorings}

\begin{thm}\label{thm:tree-harmonious-k-d-total-colorings}
$^*$ Each tree admits a harmonious $(k,d)$-total coloring.
\end{thm}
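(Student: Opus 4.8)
The plan is to prove this by an explicit one-pass construction, not by imitating the leaf-adding argument behind Theorem~\ref{thm:any-tree-k-d-graceful-total-coloring}: there no modulus was involved, whereas the harmonious constraint is read modulo $qd$, and $q$ changes when leaves are added, so ``peeling off leaves'' does not extend a harmonious $(k,d)$-total coloring in any obvious way. Instead I would root the tree and assign all colors in one sweep. Fix integers $k\ge 0$, $d\ge 1$. Let $T$ be a tree with $p\ge 2$ vertices and $q=p-1$ edges, and let $V(T)=X\cup Y$ be its (unique) bipartition. Choose a root $v_1$ and order $V(T)=\{v_1,v_2,\dots,v_p\}$ so that every $v_i$ with $i\ge 2$ has exactly one neighbour $v_{\pi(i)}$ with $\pi(i)<i$ (a BFS or DFS order); then $E(T)=\{e_i=v_iv_{\pi(i)}:2\le i\le p\}$, and because $(X,Y)$ is the bipartition, $v_{\pi(i)}$ always lies in the part opposite to $v_i$.

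First I would color the edges: set $f(e_i)=k+(i-2)d$ for $i=2,\dots,p$, so that $f(E(T))=\{k,k+d,\dots,k+(q-1)d\}=S_{q-1,k,0,d}$, exactly as item \textbf{Ptol}-4 of Definition~\ref{defn:kd-w-type-colorings} demands. Next I would color the vertices in the order $v_1,\dots,v_p$: put $f(v_1)=0$ if $v_1\in X$ and $f(v_1)=k$ if $v_1\in Y$; and for $i\ge 2$ let $f(v_i)$ be the unique element of $S_{q-1,0,0,d}=\{0,d,\dots,(q-1)d\}$ when $v_i\in X$, respectively of $S_{q-1,k,0,d}=\{k,k+d,\dots,k+(q-1)d\}$ when $v_i\in Y$, that is congruent to $f(e_i)-f(v_{\pi(i)})$ modulo $qd$. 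This makes sense because $\{0,d,\dots,(q-1)d\}$ is a complete system of representatives, modulo $qd$, of the residue classes $\equiv 0\pmod d$, while $\{k,k+d,\dots,k+(q-1)d\}$ is a complete system of representatives modulo $qd$ of the classes $\equiv k\pmod d$; so such a representative exists and is unique provided $f(e_i)-f(v_{\pi(i)})$ falls in the correct class modulo $d$. By induction on $i$, using that $v_{\pi(i)}$ lies in the part opposite to $v_i$ and that $f(e_i)=k+(i-2)d\equiv k\pmod d$, one checks that $f(v_i)\equiv 0\pmod d$ for $v_i\in X$ and $f(v_i)\equiv k\pmod d$ for $v_i\in Y$, which is precisely what closes the induction.

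It then remains to verify the harmonious constraint. By construction $f(v_i)+f(v_{\pi(i)})\equiv f(e_i)\pmod{qd}$, hence $f(v_i)+f(v_{\pi(i)})-k\equiv f(e_i)-k\pmod{qd}$; and since $f(e_i)-k=(i-2)d\in\{0,1,\dots,qd-1\}$ equals its own residue modulo $qd$, we get $f(e_i)-k=[\,f(v_i)+f(v_{\pi(i)})-k\,]\bmod qd$ for every edge $e_i$, which is exactly the defining equation $f(uv)=f(u)+f(v)\,(\bmod^{*}\,qd)$. One finally records that $f$ has the required form, namely $f:X\to S_{m,0,0,d}$ and $f:Y\cup E(T)\to S_{n,k,0,d}$ with $m=n=q-1$, and that (as the definition explicitly allows) distinct vertices may receive equal colors.

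I expect the only delicate point to be the residue bookkeeping in the inductive step: that the parent $v_{\pi(i)}$ always lies in the opposite part, so that the forced value $f(e_i)-f(v_{\pi(i)})$ is divisible by $d$ at an $X$-vertex and $\equiv k\pmod d$ at a $Y$-vertex — this is where bipartiteness of $T$ is essential — together with the elementary but easily-botched claim that the stated arithmetic progressions are complete residue systems of the right classes modulo $qd$. Once these are in place, everything else is a one-line verification. If a shorter write-up is preferred, rooting at a leaf recasts the same construction as ``add the leaves of $T$ one at a time,'' but the modulus must be taken to be $qd$ (the \emph{final} edge count) throughout, not updated step by step.
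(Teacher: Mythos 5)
Your construction is correct, and it is genuinely different from the paper's argument. The paper proves this theorem by the leaf-peeling/leaf-adding route you deliberately avoid: it strips leaf layers down to a star, colors the star, and re-attaches each layer with the Leaf-adding LS/SL-harmonious algorithms, which requires splitting $E(T_{i+1})$ into the sets $E_{\textrm{non}}$ and $E_{\textrm{mod}}$ (edges whose color is the literal sum versus those needing reduction), shifting the old $Y$-vertex colors by $A\cdot d$, and re-verifying the constraint as the modulus changes from $q_{i+1}d$ to $q_i d$ at every stage; the payoff of that machinery is the count of at least $2^m$ distinct colorings recorded in Theorem~\ref{thm:different-k-d-harmonious-colorings} and a uniform algorithmic framework reused for the other $(k,d)$-colorings. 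Your one-pass scheme instead fixes the final modulus $qd$ once, assigns the edge colors bijectively onto $S_{q-1,k,0,d}$, and back-solves the vertex colors along a rooted (BFS/DFS) order; the two key observations — that $S_{q-1,0,0,d}$ and $S_{q-1,k,0,d}$ are complete residue systems modulo $qd$ for the classes $\equiv 0$ and $\equiv k \pmod d$, and that bipartiteness forces $f(e_i)-f(v_{\pi(i)})$ into the correct class at each step — are exactly the right delicate points, and your verification of the defining equation $f(uv)-k=[f(u)+f(v)-k]\pmod{qd}$ (using $0\le f(e_i)-k\le (q-1)d<qd$ and $f(u)+f(v)-k\ge 0$) is complete. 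Your proof is shorter and avoids the changing-modulus bookkeeping; it yields one coloring per choice of root and edge bijection rather than the explicit $2^m$ family, but since any bijection $E(T)\to S_{q-1,k,0,d}$ works in your scheme, the multiplicity could also be recovered if desired.
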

\begin{proof} Let $L_{eaf}(T_i)$ be the leaf set of a tree $T_i$, and let $T_{i+1}=T_i-L_{eaf}(T_i)$ with $i\in [1,m]$. Suppose that $T_{m+1}=T_{m}-L_{eaf}(T_{m})$ is a star with vertex set $V(T_{m+1})=\{u_0\}\cup L_{eaf}(T_{m+1})$ and the edge set $E(T_{m+1})=\{u_0v_j:j\in [1,n]\}$, where $L_{eaf}(T_{m+1})=\{v_1,v_2,\dots ,v_n\}$ is the leaf set of the star $T_{m+1}$.

\textbf{Initialization. }We define a harmonious $(k,d)$-total coloring $f_{m+1}$ for the star $T_{m+1}$ as follows: $f_{m+1}(u_0)=0$, $f_{m+1}(v_j)=k+jd$ with $k\geq 0$ and $d\geq 1$, and color each edge $u_0u_j$ with $$f_{m+1}(u_0u_j)=f_{m+1}(v_j)+f_{m+1}(u_0)=k+jd,~j\in [1,n]
$$ So, $f_{m+1}$ is a harmonious $(k,d)$-total coloring of $T_{m+1}$, since
$$f_{m+1}(u_0u_j)=f_{m+1}(v_j)+f_{m+1}(u_0)=k+jd~(\bmod^* ~q_md),~q_m=|E(T_{m+1})|=n
$$ where the operation ``$\bmod^* q_md$'' means $f_{m+1}(u_0u_j)-k~(\bmod~q_md)$.

\textbf{Iteration. }Suppose that a tree $T_{i+1}=T_{i}-L_{eaf}(T_{i})$ admits a harmonious $(k,d)$-total coloring $f_{i+1}$, and $V(T_{i+1})=X_{i+1}\cup Y_{i+1}$ with $X_{i+1}=\{u_{i+1,r}:r\in [1,c_{i+1}]\}$ and $Y_{i+1}=\{v_{i+1,j}:j\in [1,b_{i+1}]\}$ for some $i\in [1,m]$. Thereby, the $X$-vertex set
$$f_{i+1}(X_{i+1})\subseteq S_{l,0,0,d}=\{0,d,\dots ,ld\}
$$ and the $Y$-vertex set
$$
f_{i+1}(Y_{i+1})\subseteq S_{q_{i+1}-1,k,0,d}=\{k,k+d,\dots ,k+(q_{i+1}-1)d\}
$$
Since the edge color set $f_{i+1}(E(T_{i+1}))=S_{q_{i+1}-1,k,0,d}$, so we have $E(T_{i+1})=E_{\textrm{non}}\cup E_{\textrm{mod}}$ such that the edge color set
$$
f_{i+1}(E(T_{i+1}))=f_{i+1}(E_{non})\cup f_{i+1}(E_{\textrm{mod}})
$$ where $f_{i+1}(uv)~(\bmod^* ~q_{i+1}d)=f_{i+1}(u)+f_{i+1}(v)$ for each edge $uv\in E_{non}$, and
$$f_{i+1}(xy)~(\bmod^* ~q_{i+1}d)=f_{i+1}(x)+f_{i+1}(y)-k~(\bmod~q_{i+1}d),~xy\in E_{\textrm{mod}}
$$
Clearly, two edge color sets $f_{i+1}(E_{\textrm{non}})=\{k,k+d,\dots ,k+\alpha d\}$ and
$$f_{i+1}(E_{\textrm{mod}})=\{k+(\alpha +1)d,k+(\alpha +2)d,\dots ,k+(q_{i+1}-1)d\},~\alpha\geq 0
$$ Without loss of generality, we have
$$0=f_{i+1}(u_{i,1})\leq f_{i+1}(u_{i+1,r})\leq f_{i+1}(u_{i,r+1}),~r\in [1,c_{i+1}-1]
$$
and
$$
k\leq f_{i+1}(v_{i+1,j})\leq f_{i+1}(v_{i,j+1})\leq f_{i+1}(v_{i,b_{i+1}})=k+(q_{i+1}-1)d,~j\in [1,b_{i+1}-1]
$$

Adding the leaves of the leaf set $L_{eaf}(T_{i})$ to the tree $T_{i+1}$ produces the tree $T_i$, and each vertex $u_{i+1,r}$ of $X_{i+1}$ has its own leaf set
$$L_{eaf}(u_{i+1,r})=\{u\,'_{i+1,r,j}:j\in [1,c_{i+1,r}]\}\textrm{ with }|L_{eaf}(u_{i+1,r})|=c_{i+1,r},~r\in [1,c_{i+1}]$$ and each vertex $v_{i+1,s}$ of $Y_{i+1}$ has its own leaf set
$$
L_{eaf}(v_{i+1,s})=\{v\,'_{i+1,s,j}:j\in [1,b_{i+1,s}]\}\textrm{ with }|L_{eaf}(v_{i+1,s})|=b_{i+1,s},~s\in [1,b_{i+1}]
$$ Thereby, we have the leaf set
$$
L_{eaf}(T_i)=\left (\bigcup^{c_{i+1}}_{r=1}L_{eaf}(u_{i,r})\right )\bigcup \left (\bigcup^{b_{i+1}}_{s=1}L_{eaf}(v_{i+1,s})\right )
$$ and $A=|L_{eaf}(T_{i})|=M_X+M_Y$, where $M_X=\sum^{c_{i+1}}_{r=1} |L_{eaf}(u_{i,r})|$ and $M_Y=\sum^{b_{i+1}}_{s=1} |L_{eaf}(v_{i,s})|$.

We, now, define a total coloring $f_i$ for $T_i$ with $i\in [1,m]$ in the following steps:

\textbf{Step 1.} $f_i(v_{i+1,s})=d\cdot A+f_{i+1}(v_{i+1,s})$ for $v_{i+1,s}\in Y_{i+1}\subset Y_{i}\subset V(T_i)$;

\textbf{Step 2.} $f_i(u_{i+1,r})=f_{i+1}(u_{i+1,r})$ for $u_{i+1,r}\in X_{i+1}\subset Y_{i}\subset V(T_i)$;

\textbf{Step 3.} $f_i(xy)=f_{i+1}(xy)$ for each edge $xy\in E_{\textrm{mod}}\subset E(T_{i+1})\subset E(T_i)$;

\textbf{Step 4.} $f_i(uv)=f_i(u)+f_i(v)=d\cdot A+f_{i+1}(u)+f_{i+1}(v)$ for each edge $uv\in E_{\textrm{non}}\subset E(T_{i+1})\subset E(T_i)$ holding $u\in Y_{i+1}$ and $v\in X_{i+1}$;

\textbf{Step 5.} For the new added leaves to $T_{i+1}$, we, first, color new edges as:
$$f_i(v_{i+1,b_{i+1}}v\,'_{i+1,b_{i+1},j})=k+(\alpha +j)d,~v\,'_{i+1,b_{i+1},j}\in L_{eaf}(v_{i+1,b_{i+1}})
$$ in general,
\begin{equation}\label{eqa:c3xxxxx}
f_i(v_{i+1,s}v\,'_{i+1,s,j})=jd+k+\alpha d+d\sum^{b_{i+1}-s}_{k=1} b_{i+1,b_{i+1}-k+1}
\end{equation}for $v\,'_{i+1,s,j}\in L_{eaf}(v_{i+1,s})$ with $s\in [1,b_{i+1}-1]$, where $\sum^{0}_{k=1} b_{i+1,b_{i+1}-k+1}=0$; and moreover
$$f_i(v_{i+1,1}v\,'_{i+1,1,b_{i+1,1}})=d\cdot b_{i+1,1}+k+\alpha d+d\sum^{b_{i+1}-1}_{k=1} b_{i+1,b_{i+1}-k+1}=k+(\alpha +M_Y)d.$$
We color the leaves $v\,'_{i+1,s,j}\in \bigcup^{b_{i+1}}_{s=1}L_{eaf}(v_{i+1,s})$ by
$$f_i(v\,'_{i+1,s,j})=d\cdot (A+q_{i+1})+f_i(v_{i+1,s}v\,'_{i+1,s,j})-f_i(v_{i+1,s})
$$ so $f_i(v\,'_{i+1,s,j})\in S_{q_{i}-1,k,0,d}$, and $f_i(v_{i+1,s}v\,'_{i+1,s,j})\in S_{q_{i}-1,k,0,d}$.

\textbf{Step 6.} Let $N_Y=k+(\alpha +M_Y)d$. We, first, color edges $u_{i+1,c_{i+1}}u\,'_{i+1,c_{i+1},j}$ by
$$f_i(u_{i+1,c_{i+1}}u\,'_{i+1,c_{i+1},j})=jd+N_Y
$$ with $j\in [1,c_{i+1,c_{i+1}}]$ for $u\,'_{i+1,c_{i+1},j}\in L_{eaf}(u_{i+1,c_{i+1}})$. In general, we color edges $u_{i+1,r}u\,'_{i+1,r,j}$ by
\begin{equation}\label{eqa:c3xxxxx}
f_i(u_{i+1,r}u\,'_{i+1,r,j})=jd+N_Y+d \sum^{c_{i+1}-r}_{k=1} c_{i+1,c_{i+1}-k+1}
\end{equation} with $j\in [1,c_{i+1,r}]$ for $u\,'_{i+1,r,j}\in L_{eaf}(u_{i+1,r})$ and $r\in [1,c_{i+1}]$. The last edge $u_{i+1,1}u\,'_{i+1,1,c_{i+1,1}}$ is colored with
$$f_i(u_{i+1,1}u\,'_{i+1,1,c_{i+1,1}})=c_{i+1,1}d+N_Y+d \sum^{c_{i+1}-1}_{k=1} c_{i+1,c_{i+1}-k+1}=N_Y+N_X$$
where $N_X=\sum^{c_{i+1}}_{k=1} c_{i+1,c_{i+1}-k+1}$.

Now, we color the leaves of the leaf set $\bigcup^{c_{i+1}}_{r=1}L_{eaf}(u_{i,r})$ by
$$f_i(u\,'_{i+1,r,j})=f_i(u_{i+1,r}u\,'_{i+1,r,j})-f_i(u_{i+1,r}),~j\in [1,c_{i+1,r}]
$$ for $u\,'_{i+1,r,j}\in L_{eaf}(u_{i+1,r})$ and $r\in [1,c_{i+1}]$. Clearly,
$$f_i(u\,'_{i+1,r,j})\in S_{m,0,0,d},\quad f_i(u_{i+1,r}u\,'_{i+1,r,j})\in S_{q_{i}-1,k,0,d}$$

Thereby, there are the vertex color set $f_i(V(T_i))\subseteq S_{m,0,0,d}\cup S_{q_{i}-1,k,0,d}$ and the edge color set $f_i(E(T_i))=S_{q_{i}-1,k,0,d}$ with $i\in [1,m]$, then we claim that the coloring $f_i$ is a harmonious $(k,d)$-total coloring of $T_i$ with $i\in [1,m]$.

The theorem follows the induction on adding leaves.
\end{proof}

The algorithm in the proof of Theorem \ref{thm:tree-harmonious-k-d-total-colorings} is called \emph{Leaf-adding LS-harmonious algorithm}, conversely, we can color the edges $u_{i+1,1}u\,'_{i+1,1,j}$ for $u\,'_{i+1,1,j}\in L_{eaf}(u_{i+1,1})$ first, and then color the edges $u_{i+1,2}u\,'_{i+1,2,j}$ for $u\,'_{i+1,2,j}\in L_{eaf}(u_{i+1,2})$, go on in this way, to the last edges $v_{i+1,b_{i+1}}v\,'_{i+1,b_{i+1},j}$ for $u\,'_{i+1,b_{i+1},j}\in L_{eaf}(u_{i+1,b_{i+1}})$. This process is called \emph{Leaf-adding SL-harmonious algorithm}. So, we have the following result:

\begin{thm}\label{thm:different-k-d-harmonious-colorings}
$^*$ Each tree $T_1$ admits at least $2^m$ different harmonious $(k,d)$-total colorings, where each tree $T_{i+1}=T_{i}-L_{eaf}(T_{i})$ with $i\in [1,m]$, each $L_{eaf}(T_{i})$ is the leaf set of the tree $T_{i+1}$, and $T_{m+1}$ is a star such that $T_{m+1}-L_{eaf}(T_{m+1})$ is a graph having one vertex only.
\end{thm}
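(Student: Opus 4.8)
The plan is to run the same leaf-stripping induction used to prove Theorem~\ref{thm:tree-harmonious-k-d-total-colorings}, but to count the binary choices encountered along the way. Start from the star $T_{m+1}$ with its harmonious $(k,d)$-total coloring $f_{m+1}$, and pass from $T_{i+1}$ to $T_i=T_{i+1}+L_{eaf}(T_i)$ by adding the leaves of $L_{eaf}(T_i)$ and extending the coloring. By Theorem~\ref{thm:tree-harmonious-k-d-total-colorings} this extension can be carried out by the \emph{Leaf-adding LS-harmonious algorithm}, and by the remark following that theorem it can equally be carried out by the \emph{Leaf-adding SL-harmonious algorithm}; in either case the output is a harmonious $(k,d)$-total coloring of $T_i$. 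Thus at each of the $m$ steps we make an independent choice, so a sequence $\vec c=(c_m,c_{m-1},\dots ,c_1)\in\{\mathrm{LS},\mathrm{SL}\}^m$ determines a harmonious $(k,d)$-total coloring $f_1^{\vec c}$ of $T_1$, and the whole game is to show these $2^m$ colorings are pairwise distinct.

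Two preliminary facts make this possible. First, $|L_{eaf}(T_i)|\ge 2$ for every $i\in[1,m]$: since $T_{m+1}$ is a star with at least two vertices and $T_{i}\supseteq T_{i+1}$, every $T_i$ is a tree on at least two vertices and hence has at least two leaves. Second, at step $i$ the LS- and SL-algorithms assign the block of new edge colors $\{k+(\alpha+1)d,\dots\}$ attached to the leaves of $L_{eaf}(T_i)$ to the host vertices (and, within a host vertex, to its leaves) in opposite orders; because this block contains at least two distinct values, reversing the assignment is a nontrivial permutation, so the two algorithms disagree on at least one of these new leaf-edges. In particular, for a fixed coloring of $T_{i+1}$ the two choices at step $i$ already yield different colorings of $T_i$.

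To upgrade this to genuine distinctness of $f_1^{\vec c}$ over all of $\{\mathrm{LS},\mathrm{SL}\}^m$, I would show the map $\vec c\mapsto f_1^{\vec c}$ is injective by decoding. Given the final coloring $f_1$ of $T_1$: the tree $T_2=T_1-L_{eaf}(T_1)$ is determined by $T_1$ alone; the step from $f_2$ to $f_1$ leaves the colors of $X_2$ untouched and merely shifts those of $Y_2$ by $d\,|L_{eaf}(T_1)|$, so $f_2$ on $V(T_2)$ — and hence, by the harmonious rule $f_2(uv)=f_2(u)+f_2(v)\ (\bmod^*\,q_2 d)$, on all of $T_2$ — is recovered from $f_1$ restricted to $T_2$ once the structurally determined bipartition of $T_2$ and the shift amount are known; and the choice $c_1$ is then read off by comparing $f_1$ on the at-least-two edges joining $L_{eaf}(T_1)$ to $T_2$ with the LS- and SL-outputs computed from the now-known data $(T_1,T_2,f_2)$. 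Recursing — recover $f_3$ and $c_2$ from $f_2$, and so on down to $f_{m+1}$ — reconstructs the whole sequence $\vec c$, so distinct sequences give distinct colorings and $T_1$ admits at least $2^m$ harmonious $(k,d)$-total colorings.

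The main obstacle is precisely this injectivity bookkeeping: one must verify that each single-step extension map (for a fixed LS/SL choice) is injective on the set of harmonious $(k,d)$-total colorings, so that the $Y$-color shift by $d\,|L_{eaf}(T_i)|$ together with the $E_{\mathrm{non}}/E_{\mathrm{mod}}$ split of the edges can be inverted without ambiguity, and that the LS- and SL-outputs at step $i$ truly differ on some new leaf-edge in every configuration — in particular when all of $L_{eaf}(T_i)$ is attached to a single vertex of $T_{i+1}$, where one must track the order of leaves within that vertex. Everything else is a routine repetition of the construction in the proof of Theorem~\ref{thm:tree-harmonious-k-d-total-colorings}.
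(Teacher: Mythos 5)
Your proposal follows essentially the same route as the paper: the paper obtains Theorem \ref{thm:different-k-d-harmonious-colorings} directly from the leaf-stripping construction of Theorem \ref{thm:tree-harmonious-k-d-total-colorings}, with the binary choice between the Leaf-adding LS-harmonious and SL-harmonious algorithms at each of the $m$ levels giving at least $2^m$ colorings, and it records no further argument. Your additional injectivity/decoding sketch (and the observation that every $T_{i+1}$ with $i\in[1,m]$ has at least two leaves, each of which must receive new leaves, so the LS and SL extensions genuinely differ) only fills in distinctness details the paper leaves implicit, so the proposal is correct and consistent with the paper's proof.
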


See examples shown in Fig.\ref{fig:harmonious-1}, Fig.\ref{fig:harmonious-2} and Fig.\ref{fig:harmonious-3} for understanding the proof of Theorem \ref{thm:tree-harmonious-k-d-total-colorings}, and the Leaf-adding LS-harmonious algorithm and the Leaf-adding SL-harmonious algorithm, respectively.

\begin{figure}[h]
\centering
\includegraphics[width=15cm]{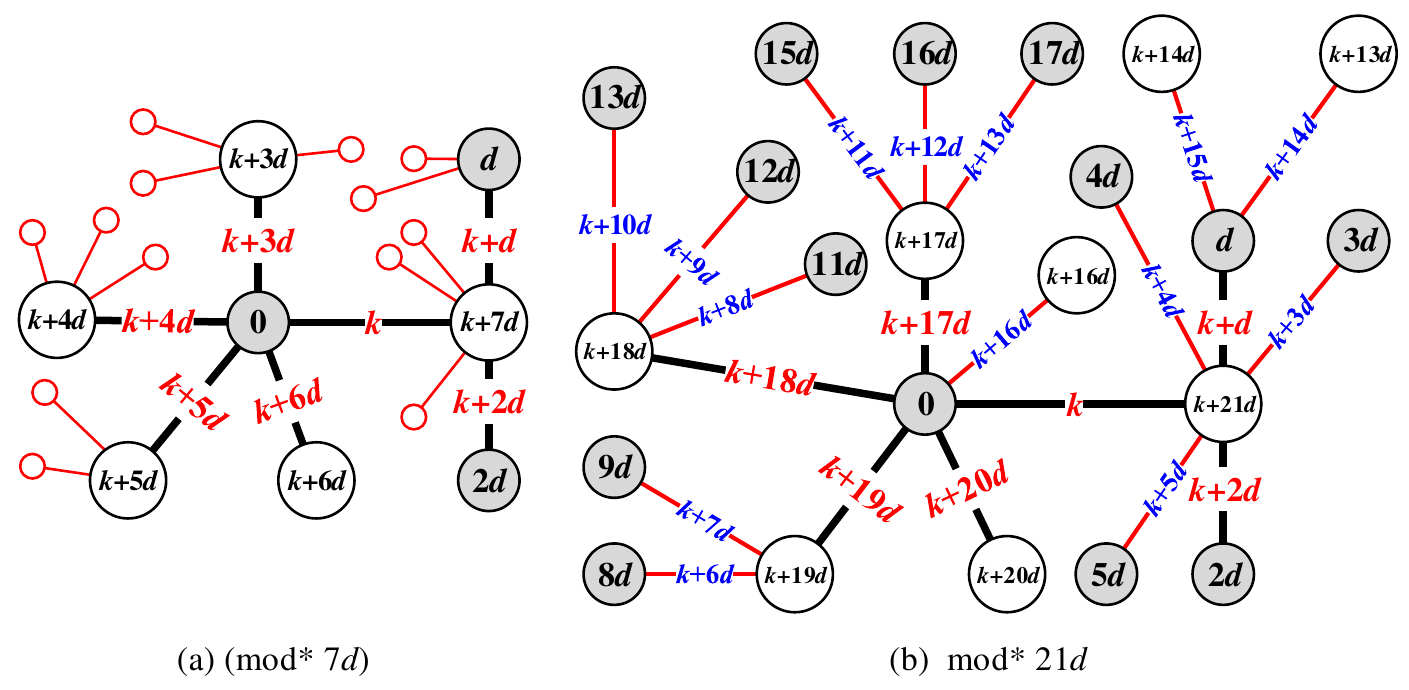}
\caption{\label{fig:harmonious-1}{\small An example for illustrating the proof of Theorem \ref{thm:tree-harmonious-k-d-total-colorings}.}}
\end{figure}

\begin{figure}[h]
\centering
\includegraphics[width=14cm]{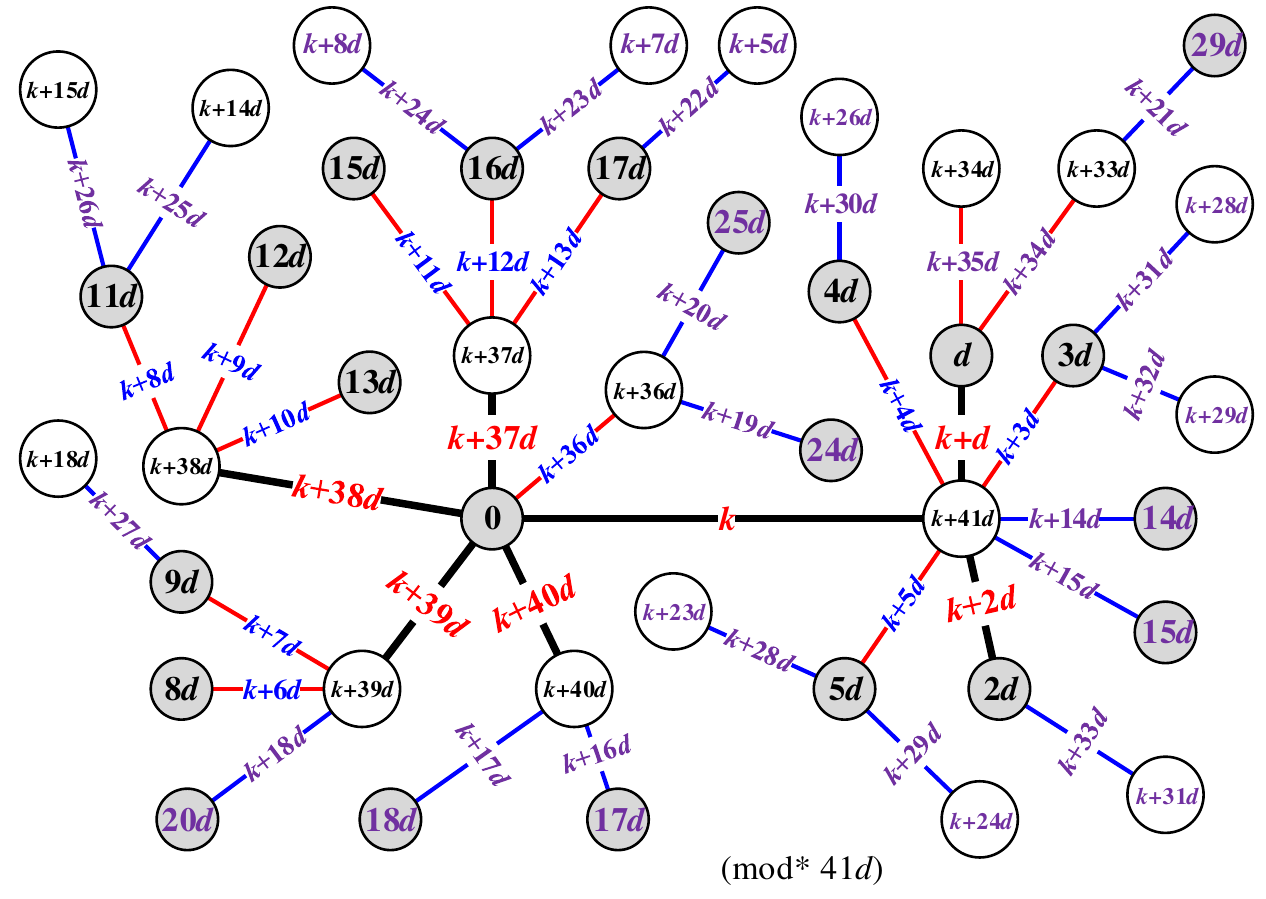}
\caption{\label{fig:harmonious-2}{\small An example for illustrating Leaf-adding LS-harmonious algorithm differing from Leaf-adding SL-harmonious algorithm.}}
\end{figure}

\begin{figure}[h]
\centering
\includegraphics[width=13.6cm]{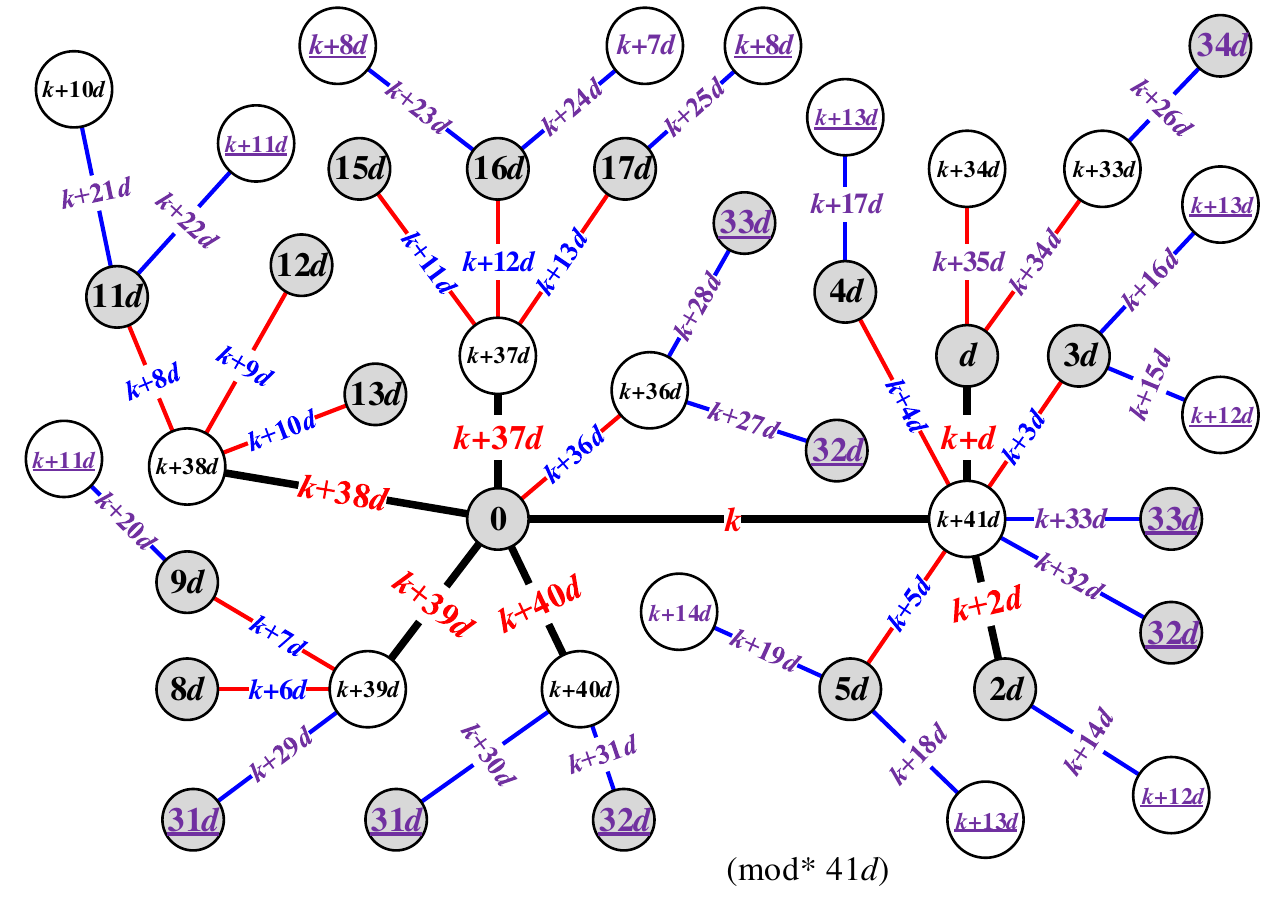}
\caption{\label{fig:harmonious-3}{\small An example for illustrating Leaf-adding SL-harmonious algorithm differing from Leaf-adding LS-harmonious algorithm.}}
\end{figure}

\subsection{Connections between parameterized total colorings}

\begin{defn} \label{defn:kd-w-type-coloring-transfoemations}
$^*$ Let $G$ be a bipartite and connected $(p,q)$-graph with vertex bipartition $V(G)=X\cup Y$ with $X\cap Y=\emptyset$, where
$$
X=\{x_i:i\in [1,s]\},~Y=\{y_j:j\in [1,t]\},~s+t=p
$$ and $E(G)=\{e_i:i\in [1,q]\}$. Suppose that $G$ admits a $W$-constraint $(k,d)$-total coloring $f$ holding

(i) $f:X\rightarrow S_{m,0,0,d}$ and $f:Y\cup E(G)\rightarrow S_{q-1,k,a,d}$;

(ii) $f(x_i)\leq f(x_{i+1})$ for $i\in [1,s-1]$ and $f(y_j)\leq f(y_{j+1})$ for $j\in [1,t-1]$;

(iii) $f(e_i)\leq f(e_{i+1})$ with $i\in [1,q-1]$.

\noindent We have the following transformations $g$:
\begin{asparaenum}[\textrm{\textbf{Tra}}-1. ]
\item \label{aspa:vertex-unchange} $g(u)=f(u)$ for $u\in V(G)$;
\item \label{aspa:x-vertex-unchange} $g(x_i)=f(x_i)$ for $x_i\in X$;
\item \label{aspa:y-vertex-unchange} $g(y_j)=f(y_j)$ for $y_j\in Y$;
\item \label{aspa:edge-unchange} $g(e_i)=f(e_i)$ for each edge $e_i\in E(G)$;
\item \label{aspa:totally-v-image} (totally vertex-image) $g(w)=[f(y_{t})+f(x_1)]-f(w)$ for $w\in V(G)$;
\item \label{aspa:totally-e-image} (totally edge-image) $g(e_i)=[f(e_q)+f(e_{1})]-f(e_i)$ for each edge $e_i\in E(G)$;
\item \label{aspa:partly-x-v-image} (partly X-image) $g(x_i)=[f(x_s)+f(x_1)]-f(x_i)$ for $x_i\in X$;
\item \label{aspa:partly-y-v-image} (partly Y-image) $g(y_j)=[f(y_{t})+f(y_1)]-f(y_j)$ for $y_j\in Y$;
\item \label{aspa:e-reciprocal} (edge-reciprocal) $g(e_i)=f(e_{q-i+1})$ for each edge $e_i\in E(G)$;
\item \label{aspa:X-reciprocal} (X-vertex-reciprocal) $g(x_i)=f(x_{s-i+1})$ for $x_i\in X$;
\item \label{aspa:Y-reciprocal} (Y-vertex-reciprocal) $g(y_j)=f(y_{t-j+1})$ for $y_j\in Y$;
\end{asparaenum}

\noindent\textbf{We call $g$:}
\begin{asparaenum}[\textrm{\textbf{Col}}-1. ]
\item \label{color:totally-v-image} a \emph{totally vertex-image $(k,d)$-total coloring} if Tra-\ref{aspa:edge-unchange} and Tra-\ref{aspa:totally-v-image} hold true;
\item \label{color:totally-e-image} a \emph{totally edge-image $(k,d)$-total coloring} if Tra-\ref{aspa:vertex-unchange} and Tra-\ref{aspa:totally-e-image} hold true;
 \item \label{color:totally-e-image} a \emph{partly X-image totally-e-image $(k,d)$-total coloring} if Tra-\ref{aspa:y-vertex-unchange}, Tra-\ref{aspa:totally-e-image} and Tra-\ref{aspa:partly-x-v-image} hold true;
\item \label{color:totally-e-image} a \emph{partly Y-image totally-e-image $(k,d)$-total coloring} if Tra-\ref{aspa:x-vertex-unchange}, Tra-\ref{aspa:totally-e-image} and Tra-\ref{aspa:partly-y-v-image} hold true;
\item \label{color:totally-ve} a \emph{totally ve-image $(k,d)$-total coloring} if Tra-\ref{aspa:totally-v-image} and Tra-\ref{aspa:totally-e-image} hold true;
\item \label{color:totally-ve} a \emph{partly XY-image $(k,d)$-total coloring} if Tra-\ref{aspa:edge-unchange}, Tra-\ref{aspa:partly-x-v-image} and Tra-\ref{aspa:partly-y-v-image} hold true;
\item \label{color:X-Y-image-totally-e-image} a \emph{partly XY-image totally-e-image $(k,d)$-total coloring} if Tra-\ref{aspa:totally-e-image}, Tra-\ref{aspa:partly-x-v-image} and Tra-\ref{aspa:partly-y-v-image} hold true;
\item \label{color:totally-e-image} a \emph{partly X-image e-reciprocal $(k,d)$-total coloring} if Tra-\ref{aspa:y-vertex-unchange}, Tra-\ref{aspa:partly-x-v-image} and Tra-\ref{aspa:e-reciprocal} hold true;
\item \label{color:totally-e-image} a \emph{partly Y-image e-reciprocal $(k,d)$-total coloring} if Tra-\ref{aspa:x-vertex-unchange}, Tra-\ref{aspa:partly-y-v-image} and Tra-\ref{aspa:e-reciprocal} hold true;
\item \label{color:e-reciprocal} an \emph{e-reciprocal $(k,d)$-total coloring} if Tra-\ref{aspa:e-reciprocal} hold true;
\item \label{color:X-reciprocal} an \emph{X-reciprocal $(k,d)$-total coloring} if Tra-\ref{aspa:y-vertex-unchange}, Tra-\ref{aspa:edge-unchange} and Tra-\ref{aspa:X-reciprocal} hold true;
\item \label{color:X-reciprocal} a \emph{Y-reciprocal $(k,d)$-total coloring} if Tra-\ref{aspa:x-vertex-unchange}, Tra-\ref{aspa:edge-unchange} and Tra-\ref{aspa:Y-reciprocal} hold true;
\item \label{color:X-e-reciprocal} an \emph{X-e-reciprocal $(k,d)$-total coloring} if Tra-\ref{aspa:y-vertex-unchange}, Tra-\ref{aspa:e-reciprocal} and Tra-\ref{aspa:X-reciprocal} hold true;
\item \label{color:Y-e-reciprocal} a \emph{Y-e-reciprocal $(k,d)$-total coloring} if Tra-\ref{aspa:x-vertex-unchange}, Tra-\ref{aspa:e-reciprocal} and Tra-\ref{aspa:Y-reciprocal} hold true;

\item \label{color:XY-reciprocal} an \emph{XY-reciprocal $(k,d)$-total coloring} if Tra-\ref{aspa:edge-unchange}, Tra-\ref{aspa:X-reciprocal} and Tra-\ref{aspa:Y-reciprocal} hold true;
\item \label{color:XY-reciprocal} a \emph{ve-reciprocally $(k,d)$-total coloring} if Tra-\ref{aspa:e-reciprocal}, Tra-\ref{aspa:X-reciprocal} and Tra-\ref{aspa:Y-reciprocal} hold true.\qqed
\end{asparaenum}
\end{defn}

\begin{figure}[h]
\centering
\includegraphics[width=16.4cm]{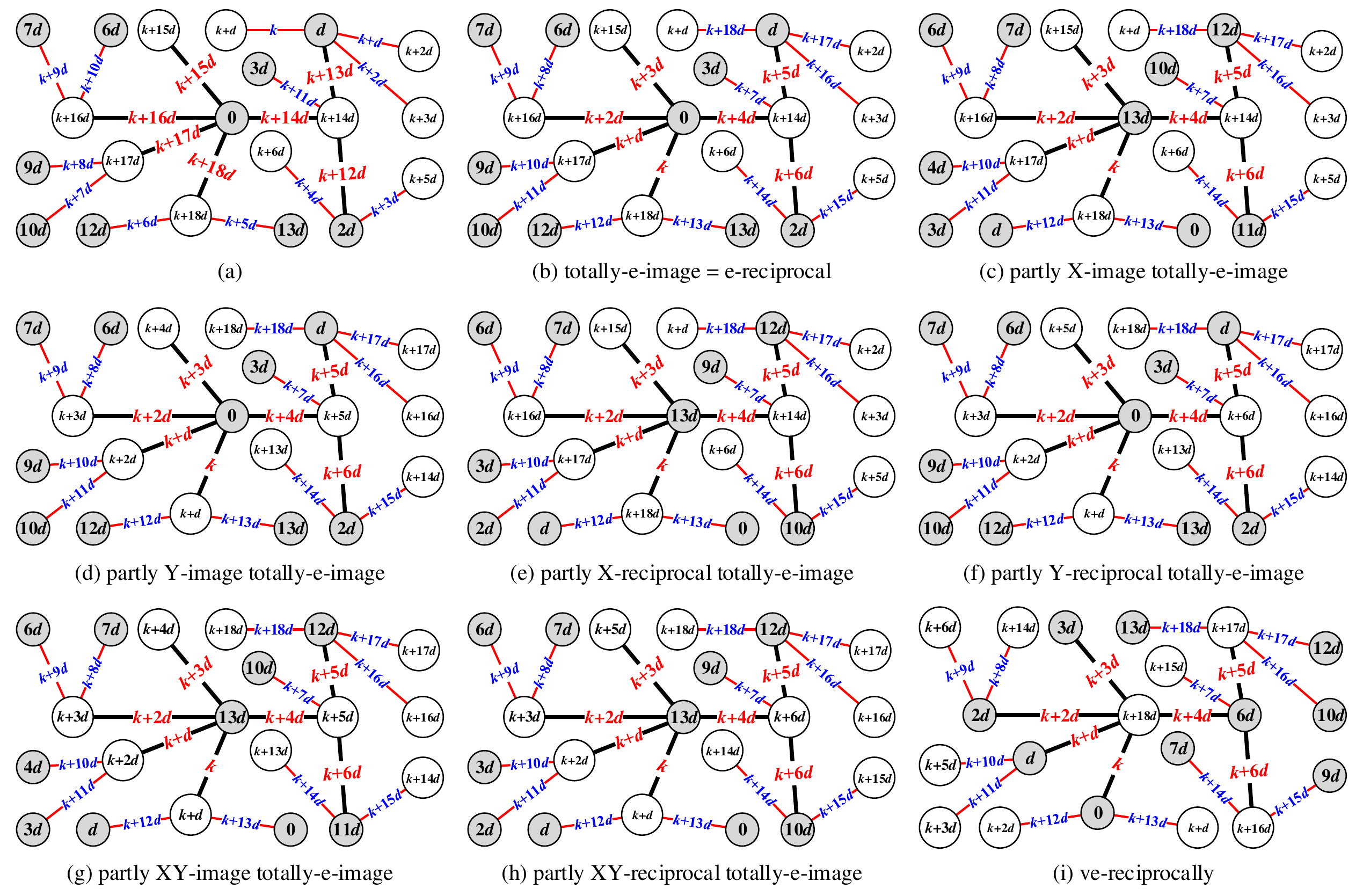}
\caption{\label{fig:transformation-1}{\small A scheme for illustrating Definition \ref{defn:kd-w-type-coloring-transfoemations}.}}
\end{figure}

\begin{figure}[h]
\centering
\includegraphics[width=16.4cm]{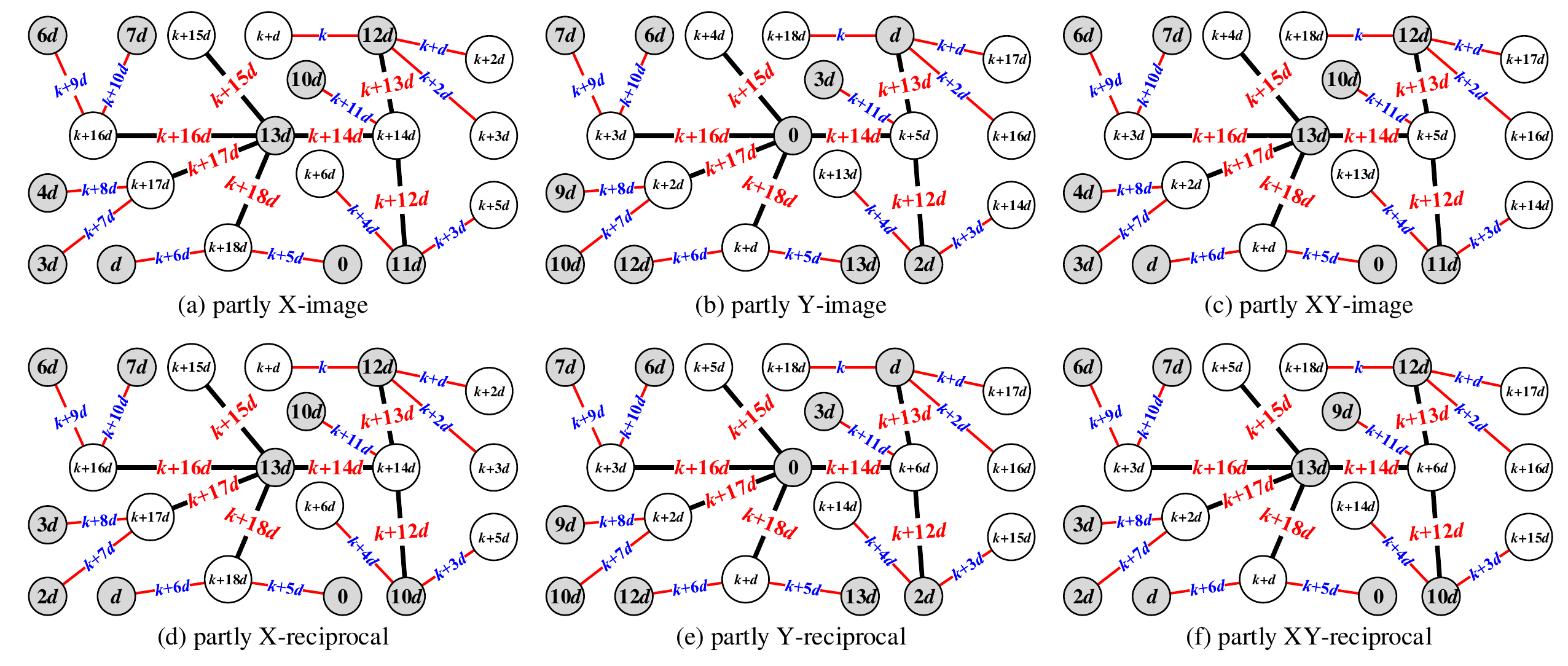}
\caption{\label{fig:transformation-2}{\small Another scheme for illustrating Definition \ref{defn:kd-w-type-coloring-transfoemations}.}}
\end{figure}

Definition \ref{defn:kd-w-type-coloring-transfoemations} enables us to get the following result:
\begin{thm}\label{thm:equivalent-k-d-total-colorings}
$^*$ A bipartite and connected $(p,q)$-graph $G$ admits a graceful $(k,d)$-total coloring if and only if $G$ admits each one of edge-magic $(k,d)$-total coloring, graceful-difference $(k,d)$-total coloring, edge-difference $(k,d)$-total coloring (also, 3C-$(k,d)$-total coloring), felicitous-difference $(k,d)$-total coloring, harmonious $(k,d)$-total coloring and edge-antimagic $(k,d)$-total coloring.
\end{thm}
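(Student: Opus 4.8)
The plan is to establish all six biconditionals simultaneously by transporting a graceful $(k,d)$-total coloring to each of the other types through the affine ``image'' and ``reciprocal'' maps catalogued in Definition \ref{defn:kd-w-type-coloring-transfoemations}, and by noting that every one of those maps is an involution (a reflection of a color range about its midpoint, a reversal of a sorted color list, or the mod-$qd$ re-indexing on $S_{q-1,k,0,d}$), so that each implication comes equipped for free with its converse. Fix a graceful $(k,d)$-total coloring $f$ of the bipartite connected $(p,q)$-graph $G$ with bipartition $(X,Y)$; after sorting we have $f(x_1)\le\cdots\le f(x_s)$, $f(y_1)\le\cdots\le f(y_t)$, $f(e_1)\le\cdots\le f(e_q)$, so the transformations of Definition \ref{defn:kd-w-type-coloring-transfoemations} are well defined. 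I will use the structural facts that $f|_{E(G)}$ is a bijection onto $S_{q-1,k,0,d}$ and that (in the set-ordered situation $\max f(X)<\min f(Y)$, to which one reduces) there is an $X$-vertex colored $0$ and a $Y$-vertex colored $k+(q-1)d$, so that $f(uv)=f(v)-f(u)$ for every edge $uv$ with $u\in X$, $v\in Y$.

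Then I would read off, for each target constraint, the transformation that produces it. For a \emph{graceful-difference} $(k,d)$-total coloring the identity (Tra-\ref{aspa:vertex-unchange}, Tra-\ref{aspa:edge-unchange}) already works with $c=0$. For an \emph{edge-difference (3C)} $(k,d)$-total coloring keep the vertices and reflect the edges, $g(e_i)=[f(e_q)+f(e_1)]-f(e_i)=2k+(q-1)d-f(e_i)$ (Tra-\ref{aspa:vertex-unchange} together with Tra-\ref{aspa:totally-e-image}); then $g(uv)+|g(u)-g(v)|=g(uv)+f(uv)=2k+(q-1)d$ and $g(E(G))=S_{q-1,k,0,d}$ is preserved. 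For a \emph{felicitous-difference} $(k,d)$-total coloring reflect only the $X$-colors, $g(x_i)=[f(x_s)+f(x_1)]-f(x_i)$ (Tra-\ref{aspa:partly-x-v-image}), keeping $Y$ and the edges; since $f(uv)=f(v)-f(u)$ this gives $|g(u)+g(v)-g(uv)|=f(x_s)+f(x_1)=\max f(X)$. For an \emph{edge-magic} $(k,d)$-total coloring instead reflect only the $Y$-colors, $g(y_j)=[f(y_t)+f(y_1)]-f(y_j)$ (Tra-\ref{aspa:partly-y-v-image}), keeping $X$ and the edges; then $g(u)+g(uv)+g(v)=f(y_t)+f(y_1)=k+(q-1)d+\min f(Y)$, a constant. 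For a \emph{harmonious} $(k,d)$-total coloring again use Tra-\ref{aspa:partly-x-v-image} on the vertices and let the defining congruence fix the edges; since $g(u)+g(v)=\max f(X)+f(uv)$ and $\max f(X)\equiv 0\pmod d$, reducing $g(u)+g(v)-k$ modulo $qd$ turns the bijection $f|_{E(G)}$ into a cyclic shift of $S_{q-1,k,0,d}$, so $g(E(G))=S_{q-1,k,0,d}$. Finally, for an \emph{edge-antimagic} $(k,d)$-total coloring reflect the $X$-colors once more: then $g(u)+g(uv)+g(v)=[f(x_s)+f(x_1)]+2f(uv)$, so the mixed color set is the arithmetic progression $\{f(x_s)+f(x_1)+2k+2jd:\ 0\le j\le q-1\}$ of common difference $2d$, which matches the normalized form $\{2k+2ad,\dots,2k+2(a+q-1)d\}$ once $f$ is chosen (or first normalized) so that $f(x_s)+f(x_1)$ is an even multiple of $d$. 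In every case one then checks, using $\min f(X)=0$ and $\max f(Y)=k+(q-1)d$, that the transformed colors still lie in $S_{m',0,0,d}\cup S_{q-1,k,0,d}$, i.e. that $g$ is a genuine $W$-constraint $(k,d)$-total coloring.

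For the converse of each biconditional I would apply the same transformation again: reflecting a range about its midpoint twice, reversing a sorted list twice, and iterating the mod-$qd$ re-indexing twice all return the original, so from a $W$-constraint $(k,d)$-total coloring $g$ one recovers a coloring $f$ for which a direct check gives $f(uv)=|f(u)-f(v)|$ and $f(E(G))=S_{q-1,k,0,d}$, that is, a graceful $(k,d)$-total coloring. Chaining these one-step equivalences through the common ``hub'' of graceful $(k,d)$-total colorings yields the stated equivalence of all the listed colorings.

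The hard part will be the bookkeeping that keeps the transformed colors in the right sets: the $X$-colors must stay non-negative multiples of $d$ and the $Y$-colors and edge colors must stay inside the progression $S_{q-1,k,0,d}$, which is precisely why $X$ and $Y$ must be reflected separately (reflecting the whole vertex set at once, Tra-\ref{aspa:totally-v-image}, would mix the residue classes $0$ and $k$ modulo $d$ and break the structure) and why one needs the ``$0\in f(X)$ and $k+(q-1)d\in f(Y)$'' observation. A second subtlety, to be addressed explicitly, is that the felicitous-difference and edge-antimagic computations above use $f(uv)=f(v)-f(u)$ on \emph{every} edge; if one does not wish to restrict to the set-ordered case $\max f(X)<\min f(Y)$, these two cases must be run through an extra case analysis according to the sign of $f(u)-f(v)$ on each edge, and the edge-antimagic target then also requires the parity normalization of $f(x_s)+f(x_1)$ already mentioned.
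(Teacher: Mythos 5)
Your overall route is the same as the paper's: fix a graceful $(k,d)$-total coloring $f$ and transport it to each target constraint by the affine image/reciprocal transformations of Definition \ref{defn:kd-w-type-coloring-transfoemations}. Your graceful-difference and edge-difference cases coincide exactly with the paper's (identity, respectively totally edge-image with vertices unchanged), and your choices for edge-magic, felicitous-difference and harmonious differ from the paper's only in which side is mirrored (the paper mirrors $X$ together with the edges for edge-magic, and mirrors $Y$ for felicitous-difference and harmonious, while you mirror the complementary side); each variant yields the required constant, so that difference is immaterial.

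The one genuine weak point is your edge-antimagic case. Mirroring only the $X$-colors gives mixed sums $f(x_s)+f(x_1)+2f(uv)=2k+f(x_s)+2jd$, which lies in the required set $\{2k+2ad,\dots ,2k+2(a+q-1)d\}$ of Definition \ref{defn:kd-w-type-colorings} only when $f(x_s)$ is an even multiple of $d$; you flag this but leave the ``normalization'' of $\max f(X)$ unproved, and it is not clear that a graceful $(k,d)$-total coloring can always be adjusted to arrange it. The paper sidesteps the issue by using the partly $Y$-image e-reciprocal transformation instead: keep $X$, reflect $Y$, and reverse the sorted edge colors, i.e. $g(e_i)=f(e_{q-i+1})=2k+(q-1)d-f(e_i)$, so the sum becomes $4k+2(q-1)d-2f(e_r)\in\{2k,2k+2d,\dots ,2k+2(q-1)d\}$ with no parity hypothesis; replacing your edge-antimagic step by that transformation closes the gap. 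Finally, note that the paper's proof only writes out the forward direction, so your involution sketch for the converses is no weaker than the paper's treatment; but to make it rigorous you would still have to verify that an arbitrary $W$-constraint $(k,d)$-total coloring has the particular constant (e.g. $c=2k+(q-1)d$ in the edge-difference case) needed for the inverse transformation to land back on a coloring whose edge colors are $|f(u)-f(v)|$ and fill $S_{q-1,k,0,d}$.
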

\begin{proof} Let $G$ be a bipartite and connected $(p,q)$-graph with its vertex set $V(G)=X\cup Y$ with $X\cap Y=\emptyset$, and
$$
X=\{x_i:i\in [1,s]\},~Y=\{y_r:j\in [1,t]\},~s+t=p
$$ and its edge set $E(G)=\{e_i:i\in [1,q]\}$.

Suppose that $G$ admits a graceful $(k,d)$-total coloring $f$ defined in Definition \ref{defn:kd-w-type-colorings}, without loss of generality, we have $0=f(x_1)\leq f(x_i)\leq f(x_{i+1})$ for $i\in [1,s-1]$ and
$$f(y_r)\leq f(y_{j+1})\leq f(y_{t})=k+(q-1)d,~j\in [1,t-1]
$$ as well as
$$
k=f(e_1)< f(e_2)<\cdots < f(e_{q})=k+(q-1)d
$$ holding $f(uv)=|f(u)-f(v)|$ for each edge $uv\in E(G)$ and the edge color set $f(E(T))=S_{q-1,k,0,d}$. Notice that each edge $e_r=x_ry_r\in E(G)$ holds $f(e_r)=|f(x_r)-f(y_r)|=f(y_r)-f(x_r)$ true. We show that the above graceful $(k,d)$-total coloring $f$ can induce the following $W$-constraint $(k,d)$-total colorings:
\begin{asparaenum}[\textbf{\textrm{Wtype}}-1. ]
\item By Definition \ref{defn:kd-w-type-coloring-transfoemations}, the coloring $g$ is a \emph{partly X-image totally-e-image $(k,d)$-total coloring} holding $g(e_r)=[f(e_q)+f(e_{1})]-f(e_r)$ for each edge $e_r\in E(G)$, $g(x_r)=[f(x_s)+f(x_1)]-f(x_r)$ for $x_r\in X$ and $g(y_r)=f(y_r)$ for $y_r\in Y$. Then, for each edge $e_r=x_ry_r\in E(G)$, we have the edge-magic constraint
$${
\begin{split}
g(x_r)+g(y_r)+g(x_ry_r)=&[f(x_s)+f(x_1)]-f(x_r)+f(y_r)+[f(e_q)+f(e_{1})]-f(e_r)\\
=&[f(x_s)+f(x_1)]+[f(e_q)+f(e_{1})]\\
=&2k+(q-1)d+f(x_s),~x_ry_r\in E(G)
\end{split}}$$ to be a constant. So, $g$ is an \emph{edge-magic $(k,d)$-total coloring} defined in Definition \ref{defn:kd-w-type-colorings}. If $f(x_r)=(r-1)d$ for $r\in [1,s]$, we can use $g(x_r)=f(x_{s-r+1})$ for $x_r\in X$, so $f(x_r)+f(x_{s-r+1})=(s-1)d$, and moreover we have the edge-magic constraint
$${
\begin{split}
g(x_r)+g(y_r)+g(x_ry_r)&=f(x_{s-r+1})+f(y_r)+[f(e_q)+f(e_{1})]-f(e_r)\\
&=f(x_{s-r+1})+f(y_r)+2k+(q-1)d-[f(y_r)-f(x_r)]\\
&=2k+(q-1)d+(s-1)d
\end{split}}$$ holds true for each edge $x_ry_r\in E(G)$.
\item Since each edge $e_r=x_ry_r\in E(G)$ holds $f(e_r)=|f(x_r)-f(y_r)|=f(y_r)-f(x_r)$ true, then the coloring $g$ is a \emph{graceful-difference $(k,d)$-total coloring}.
\item According to Definition \ref{defn:kd-w-type-coloring-transfoemations}, the coloring $g$ is a \emph{totally edge-image $(k,d)$-total coloring} holding $g(u)=f(u)$ for $u\in V(G)$ and $g(e_r)=[f(e_q)+f(e_{1})]-f(e_r)$ for each edge $e_r\in E(G)$. For each edge $e_r=x_ry_r\in E(G)$, we have $f(e_r)=|f(x_r)-f(y_r)|=f(y_r)-f(x_r)$, and moreover the edge-difference constraint
$${
\begin{split}
g(e_r)+|g(y_r)-g(x_r)|&=[f(e_q)+f(e_{1})]-f(e_r)+|f(y_r)-f(x_r)|\\
&=f(e_q)+f(e_{1})=2k+(q-1)d
\end{split}}
$$
is equal to a constant for each edge $e_r=x_ry_r\in E(G)$; there exists an edge $u_xu_y$ such that
$$f(u_xu_y)=[f(e_q)+f(e_{1})]-f(e_r),~g(e_r)=f(u_xu_y)=|f(u_y)-f(u_x)|=|g(u_y)-g(u_x)|$$
we get $g(e_r)=|g(u_y)-g(u_x)|$; and
$${
\begin{split}
s(e_r)&=|g(y_r)-g(x_r)|-g(e_r)=|f(y_r)-f(x_r)|-[f(e_q)+f(e_{1})]+f(e_r)\\
&=|f(y_r)-f(x_r)|-f(u_xu_y)
\end{split}}
$$
and
$$s(u_xu_y)=|g(u_y)-g(u_x)|-g(u_xu_y)=|f(u_y)-f(u_x)|-[f(e_q)+f(e_{1})]+f(e_r)$$
so $s(e_r)+s(u_xu_y)=f(e_q)+f(e_{1})$ is a constant.

\quad We claim that $g$ is a \emph{\textbf{3C-$(k,d)$-total coloring}}, also, an \emph{edge-difference $(k,d)$-total coloring} defined in Definition \ref{defn:kd-w-type-colorings}.
\item By Definition \ref{defn:kd-w-type-coloring-transfoemations}, the coloring $g$ is a \emph{partly Y-image totally-e-image $(k,d)$-total coloring} holding $g(e_r)=[f(e_q)+f(e_{1})]-f(e_r)$ for each edge $e_r\in E(G)$, $g(x_r)=f(x_r)$ for $x_r\in X$ and $g(y_r)=[f(y_t)+f(y_1)]-f(y_r)$ for $y_r\in Y$. For each edge $e_r=x_ry_r\in E(G)$, we compute the felicitous-difference constraint
$${
\begin{split}
|g(x_r)+g(y_r)-g(e_r)|&=f(x_r)+[f(y_t)+f(y_1)]-f(y_r)-\{[f(e_q)+f(e_{1})]-f(e_r)\}\\
&=f(y_t)+f(y_1)-[f(e_q)+f(e_{1})]\\
&=f(y_1)-k
\end{split}}
$$ which enables us to claim that $g$ is a \emph{felicitous-difference $(k,d)$-total coloring}.

\item We defined a transformation $g$ as:

\quad (i) $g(x_r)=f(x_r)$ for $x_r\in X$;

\quad (ii)  $g(y_r)=[f(y_{t})+f(y_1)]-f(y_r)$ for $y_r\in Y$; and

\quad (iii) $g(e_r)=g(x_r)+g(y_r)~(\bmod^*qd)$ by $g(uv)-k=[g(u)+g(v)-k](\bmod ~qd)$ for each edge $uv\in E(G)$.

And furthermore we have
$${
\begin{split}
g(e_r)-k&=g(x_r)+g(y_r)-k~(\bmod ~qd)=f(x_r)+[f(y_{t})+f(y_1)]-f(y_r)-k~(\bmod ~qd)\\
&=2k+(q-1)d-f(e_r)-k~(\bmod ~qd)\\
&=k+(q-1)d-f(e_r)\in S_{q-1,0,0,d}
\end{split}}
$$ so the edge color set $g(E(G))=S_{q-1,k,0,d}$, which means that $g$ is a \emph{harmonious $(k,d)$-total coloring} defined in Definition \ref{defn:kd-w-type-colorings}.
\item By Definition \ref{defn:kd-w-type-coloring-transfoemations}, a \emph{partly Y-image e-reciprocal $(k,d)$-total coloring} $g$ holds:

\quad (i) $g(x_r)=f(x_r)$ for $x_r\in X$;

\quad (ii) $g(y_r)=[f(y_{t})+f(y_1)]-f(y_r)$ for $y_r\in Y$; and

\quad (iii) $g(e_i)=f(e_{q-i+1})$ for each edge $e_i\in E(G)$.

We have the edge-antimagic constraint
$${
\begin{split}
g(x_r)+g(e_r)+g(y_r)&=f(x_r)+f(e_{q-r+1})+[f(y_{t})+f(y_1)]-f(y_r)\\
&=[f(y_{t})+f(y_1)]-2f(e_r)+f(e_{q-r+1})+f(e_{r})\\
&=2k+(q-1)d+2k+(q-1)d-2f(e_r)\\
&=4k+2(q-1)d-2f(e_r)\in \{2k, 2k+2d,2k+4d,\dots, 2k+2(q-1)d\}
\end{split}}
$$ Thereby, the coloring $g$ is an \emph{edge-antimagic $(k,d)$-total coloring} defined in Definition \ref{defn:kd-w-type-colorings}.
\end{asparaenum}

The proof of the theorem is complete.
\end{proof}

\begin{thm}\label{thm:connected-tree-k-d-graceful-total-coloring}
$^*$ A connected graph $G$ admits a graceful $(k,d)$-total coloring if and only if there exists a tree $T$ obtained from $G$ by the vertex-splitting tree-operation and $T$ admits a graceful $(k,d)$-total coloring, such that $T$ admits a colored graph homomorphism to $G$, that is, $T\rightarrow G$.
\end{thm}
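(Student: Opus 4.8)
The plan is to prove the two directions separately, using the vertex-splitting tree-operation and its inverse (the vertex-coinciding operation) from Theorem \ref{thm:666666} as the bridge between $G$ and the tree $T$. For the direction ($\Leftarrow$): suppose $T$ is a tree obtained from $G$ by the vertex-splitting tree-operation, so $G=\wedge^{-1}(T)=\odot_m(T)$, and $T$ admits a graceful $(k,d)$-total coloring $f$. I would first note that the vertex-coinciding operation $\odot$ identifies groups of vertices of $T$ into single vertices of $G$; let $\varphi:V(T)\rightarrow V(G)$ be the induced surjection. The key observation is that $\odot$ destroys no edge and creates no edge, so $E(T)$ and $E(G)$ are in natural bijection. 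Hence I would define a total coloring $g$ on $G$ by $g(\varphi(z))=f(z)$ — provided this is well defined, i.e. all vertices of $T$ coinciding into one vertex of $G$ share the same $f$-color — and $g(e)=f(e)$ on the matching edges. Since $g$ and $f$ agree on the (identified) edge set, the edge color set $g(E(G))=f(E(T))=S_{q-1,k,0,d}$ is preserved, the $W$-constraint $g(uv)=|g(u)-g(v)|$ is inherited edge by edge from $f$, and the containment of the total color set in $S_{m,0,0,d}\cup S_{q-1,k,0,d}$ survives. That $\varphi$ is then a colored graph homomorphism $T\rightarrow G$ is immediate from the construction (adjacency is preserved because edges are preserved, and colors are preserved by definition of $g$).

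For the direction ($\Rightarrow$): suppose $G$ admits a graceful $(k,d)$-total coloring $f$. I would apply Theorem \ref{thm:666666} to obtain a tree $T\in V_{\textrm{split}}(G)$ with $q+1$ vertices and $G=\odot_m(T)$, together with the surjection $\varphi:V(T)\rightarrow V(G)$. The natural move is to pull back the coloring: define $h(z)=f(\varphi(z))$ for $z\in V(T)$ and $h(e)=f(e)$ on the edge bijection. Because each vertex-splitting step replaces a vertex $x$ by two vertices $x',x''$ whose neighbor sets partition $N(x)$, the edge set is again untouched, so the edge color set of $T$ under $h$ equals that of $G$ under $f$, namely $S_{q-1,k,0,d}$, and since $|E(T)|=q$ with all $q$ edge colors distinct, the edge colors are exactly $S_{q-1,k,0,d}$. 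The $W$-constraint $h(uv)=|h(u)-h(v)|$ holds on each edge of $T$ because it held on the corresponding edge of $G$. The remaining point is the vertex color set: the colors used on $V(T)$ are a subset of the colors used on $V(G)$, hence still lie in $S_{m,0,0,d}\cup S_{q-1,k,0,d}$, so $h$ is a graceful $(k,d)$-total coloring of $T$, and $\varphi:T\rightarrow G$ is a colored graph homomorphism by construction.

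I would also record the routine bookkeeping that makes $T$ bipartite with a consistent bipartition: since $G$ is bipartite with $V(G)=X\cup Y$ and $\varphi$ preserves edges, each split vertex inherits the side of the vertex it came from, so $V(T)$ inherits a bipartition $(X_T,Y_T)$ with $X_T\cup Y_T$ mapping onto $X\cup Y$; this is needed so that ``graceful $(k,d)$-total coloring'' is meaningful on $T$ in the sense of Definition \ref{defn:kd-w-type-colorings}, and it is automatic because trees are bipartite and the splitting cannot merge the two sides.

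The main obstacle is the well-definedness issue in the ($\Leftarrow$) direction: an arbitrary graceful $(k,d)$-total coloring $f$ of $T$ need not assign the same color to the two vertices $x',x''$ that one wishes to re-coincide, so not every tree coloring descends to $G$. The honest fix — and the reason the theorem is stated as an ``if and only if'' rather than a direct transfer — is that one takes $T$ together with the {\it specific} coloring produced by pulling back a coloring of $G$ (in the ($\Rightarrow$) direction), or, for the ($\Leftarrow$) direction, one must hypothesize (or arrange) that $f$ is constant on each coincidence class, which is exactly the condition that makes $g$ well defined; here the key is that Definition \ref{defn:kd-w-type-colorings} explicitly permits $f(x)=f(y)$ for distinct vertices, so such colorings of $T$ exist in abundance, and the homomorphism $T\rightarrow G$ encodes precisely the requirement that $f$ respect the fibers of $\varphi$. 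I would make this compatibility condition explicit at the start of the ($\Leftarrow$) argument (a coloring of $T$ that ``agrees on fibers of the homomorphism''), since without it the statement is false, and then the rest is the edge-preservation bookkeeping sketched above.
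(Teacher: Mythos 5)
Your argument is correct and is exactly the intended one: the paper states this theorem (marked $^*$) without supplying a proof, and the natural argument is the split-and-pull-back / coincide-and-descend reasoning you give, which mirrors the paper's analogous vertex-coinciding equivalences for the other $W$-constraint $(k,d)$-total colorings. You also correctly isolate the one real issue — that an arbitrary coloring of $T$ need not be constant on the coincidence classes — and correctly read the hypothesis "$T$ admits a colored graph homomorphism to $G$" as precisely the fiber-compatibility condition (legitimate because Definition \ref{defn:kd-w-type-colorings} allows repeated vertex colors), so nothing essential is missing.
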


\subsection{Trees admitting magic-constraint total colorings}

For dealing with issues based on magic-constraint total colorings we introduce the Leaf-adding $(k,d)$-magic-constraint algorithms, ``SL'' means ``From Small to Large'', and ``LS'' means ``From Large to Small'' in the following algorithms: (1) Leaf-adding LS-edge-difference (SL-edge-difference) algorithm; (2) Leaf-adding LS-graceful-difference (SL-graceful-difference) algorithm.

\begin{defn}\label{defn:leaf-added-trees}
$^*$ \textbf{Leaf-added trees based on the adding leaves operation. }Let $(X,Y)$ be the bipartition of a tree $H$, where $X=\{x_1,x_2,\dots ,x_s\}$ and $Y=\{y_1,y_2,\dots ,y_t\}$ with
$$
s+t=p=|V(H)|=1+|E(H)|
$$ We add new leaves to $H$ as: Each vertex $x_i\in X$ is added the leaves $x_{i,1},x_{i,2},\dots ,x_{i,a_i}$ of a leaf set $L_{eaf}(x_i)$ with $i\in [1,s]$, and each vertex $y_j\in Y$ is added the leaves $y_{j,1},y_{j,2},\dots ,y_{j,b_j}$ of a leaf set $L_{eaf}(y_j)$ with $j\in [1,t]$. The resultant tree is denoted as $H_L$, called \emph{leaf-added tree}. So, the leaf-added tree $H_L$ has its own bipartition $(X\,',Y\,')$, where
$$X\,'=X\bigcup \left (\bigcup^t_{k=1}L_{eaf}(y_k)\right ),~Y\,'=Y\bigcup \left (\bigcup^s_{k=1}L_{eaf}(x_k)\right )
$$ Let $A=\sum^s_{k=1}a_k$ and $B=\sum^t_{k=1}b_k$.
\end{defn}

\begin{thm}\label{thm:edge-difference-algorithm}
$^*$ Each tree $T$ with diameter $D(T)\geq 3$ admits at least $2^m$ different edge-difference $(k,d)$-total colorings with $m+1=\left \lceil \frac{D(T)}{2}\right \rceil $.
\end{thm}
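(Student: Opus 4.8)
The plan is to bypass a direct leaf-by-leaf construction and instead derive the statement from two results already available: Theorem~\ref{thm:any-tree-k-d-graceful-total-coloring}, which produces $2^m$ mutually distinct graceful $(k,d)$-total colorings of any tree $T$ with $D(T)\geq 3$ (here $m+1=\left\lceil D(T)/2\right\rceil$), and the transformation machinery of Definition~\ref{defn:kd-w-type-coloring-transfoemations} exploited in the proof of Theorem~\ref{thm:equivalent-k-d-total-colorings}. First I would fix a tree $T$ with $D(T)\ge 3$, write $V(T)=X\cup Y$ for its bipartition and $E(T)=\{e_1,\dots,e_q\}$ ordered by colour, and invoke Theorem~\ref{thm:any-tree-k-d-graceful-total-coloring} to obtain a family $\mathcal F=\{f^{(1)},\dots,f^{(2^m)}\}$ of pairwise distinct graceful $(k,d)$-total colorings of $T$. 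Each $f\in\mathcal F$ satisfies $f(uv)=|f(u)-f(v)|$ for every edge and has edge colour set $f(E(T))=S_{q-1,k,0,d}$, so $f(e_1)=k$ and $f(e_q)=k+(q-1)d$.

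Next I would apply to each $f\in\mathcal F$ the totally edge-image transformation, i.e.\ Tra-\ref{aspa:vertex-unchange} together with Tra-\ref{aspa:totally-e-image} of Definition~\ref{defn:kd-w-type-coloring-transfoemations}: set $g(u)=f(u)$ for every vertex $u\in V(T)$ and $g(uv)=[f(e_q)+f(e_1)]-f(uv)=2k+(q-1)d-f(uv)$ for every edge $uv\in E(T)$. Reflecting $S_{q-1,k,0,d}$ about its own midpoint $2k+(q-1)d$ merely permutes it, so $g(E(T))=S_{q-1,k,0,d}$ is unchanged, while $g(X)=f(X)\subseteq S_{m,0,0,d}$ and $g(Y)=f(Y)\subseteq S_{q-1,k,0,d}$ remain as required for a $(k,d)$-total coloring in the sense of Definition~\ref{defn:kd-w-type-colorings}. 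Then, for each edge $uv\in E(T)$,
\[
g(uv)+|g(u)-g(v)|=\big(2k+(q-1)d-f(uv)\big)+|f(u)-f(v)|=2k+(q-1)d-f(uv)+f(uv)=2k+(q-1)d ,
\]
a constant independent of $uv$ since $f$ is graceful. Hence $g$ is an edge-difference $(k,d)$-total coloring of $T$ with magic constant $c=2k+(q-1)d$; this is precisely the computation named Wtype-3 in the proof of Theorem~\ref{thm:equivalent-k-d-total-colorings}. The counting step is then immediate: the map $\Phi\colon f\mapsto g$ fixes all vertex colours and sends each edge colour $f(uv)$ to $2k+(q-1)d-f(uv)$, so it is injective; if $f_1\ne f_2$ differ at a vertex they still do under $\Phi$, and otherwise they differ at some edge $e$, whence $g_1(e)\ne g_2(e)$. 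Thus $\Phi(\mathcal F)$ is a set of $2^m$ pairwise distinct edge-difference $(k,d)$-total colorings of $T$, completing the proof.

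For the self-contained route suggested by the surrounding text I would instead peel $T=T_1\supset T_2\supset\cdots\supset T_{m+1}$ with $T_{i+1}=T_i-L_{eaf}(T_i)$ down to a star, equip the star $T_{m+1}\cong K_{1,n}$ with the obvious edge-difference $(k,d)$-total coloring (centre in $Y$ coloured $k+(n-1)d$, leaves coloured $0,d,\dots,(n-1)d$, giving constant $2k+(n-1)d$ and edge colour set $S_{n-1,k,0,d}$), and reinstate the peeled leaves level by level via a ``Leaf-adding LS-edge-difference'' and a ``Leaf-adding SL-edge-difference'' algorithm, the two choices at each of the $m$ levels producing $2^m$ colorings. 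The main obstacle on this route is the magic constant: unlike the graceful case (Theorem~\ref{thm:any-tree-k-d-graceful-total-coloring}) or the harmonious case (Theorem~\ref{thm:tree-harmonious-k-d-total-colorings}), every edge --- old and freshly created --- must carry a single value $c$, and since the edge colour set is forced to grow to $S_{q_i-1,k,0,d}$ at each stage, $c$ necessarily changes; reconciling the already-coloured part with the new leaves essentially amounts to carrying along a graceful skeleton and re-flipping edge colours at each step, that is, re-deriving the \emph{edge-image device} used above. So I expect the short argument in the first three paragraphs to be the cleanest path, with the algorithmic version recorded only for the explicit ``$\mathrm{LS}$'' and ``$\mathrm{SL}$'' descriptions.
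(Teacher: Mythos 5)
Your first argument is correct, but it is not the route the paper takes for this theorem. The paper proves Theorem~\ref{thm:edge-difference-algorithm} directly and algorithmically: starting from the star $T_{m+1}$ obtained by repeatedly deleting leaf sets, it reinstates the leaves level by level using two explicit procedures (the Leaf-adding LS-edge-difference and SL-edge-difference algorithms), in which the old edge colours are kept, the $Y$-vertex colours are shifted by $(A+B)d$ so the edge-difference constant becomes $2k+(\alpha+A+B)d$, and the new edges and leaves are coloured to extend the edge colour set to $S_{p-2+A+B,k,0,d}$; the two choices at each of the $m$ levels give the $2^m$ count. Your route instead takes the $2^m$ graceful $(k,d)$-total colorings from Theorem~\ref{thm:any-tree-k-d-graceful-total-coloring} and pushes them through the totally edge-image transformation (Tra-1 plus Tra-6 of Definition~\ref{defn:kd-w-type-coloring-transfoemations}, i.e.\ the Wtype-3 computation in the proof of Theorem~\ref{thm:equivalent-k-d-total-colorings}); since the map fixes vertex colours and applies the bijection $t\mapsto 2k+(q-1)d-t$ to edge colours, it is injective and yields $2^m$ distinct edge-difference $(k,d)$-total colorings with constant $2k+(q-1)d$. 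This is exactly the style of argument the paper itself uses for the felicitous-difference and edge-magic analogues (Theorems~\ref{thm:felicitous-difference-algorithm} and~\ref{thm:edge-magic-algorithm}), so it is certainly acceptable; what it buys is brevity, while the paper's choice buys stand-alone leaf-adding algorithms that are reused later (e.g.\ in the RLA-algorithms). One small correction to your closing remarks: the direct route is not blocked by the changing magic constant and does not require re-deriving the edge-image device --- the paper reconciles old and new parts simply by translating the $Y$-side vertex colours, so no graceful skeleton is carried along.
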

\begin{proof} Use the notation and terminology of defining a leaf-added tree $H_L$ obtained from a tree $H$ in Definition \ref{defn:leaf-added-trees}. Suppose that this tree $H$ admits an edge-difference $(k,d)$-total coloring $f$ with $$f(x_iy_j)+|f(y_j)-f(x_i)|=f(x_iy_j)+f(y_j)-f(x_i)=2k+\alpha\cdot d$$ for each edge $x_iy_j\in E(H)$, and the edge color set $f(E(H))=S_{p-2,k,0,d}$. Without loss of generality, we have
$$f(x_1)\leq f(x_2)\leq \cdots \leq f(x_s),\quad f(y_1)\leq f(y_2)\leq \cdots \leq f(y_t)
$$ We define a total coloring $g$ of the leaf-added tree $H_L$ by the following algorithms.

\vskip 0.4cm

\textbf{1. Leaf-adding LS-edge-difference algorithm.}

\textbf{LSed-1.} Vertex colors are $g(x_i)=f(x_i)$ for $x_i\in X\subset X\,'$ and $g(y_j)=f(y_j)+(A+B)d$ for $y_j\in Y\subset Y\,'$.

\textbf{LSed-2.} Edge $x_iy_j$ are colored with $g(x_iy_j)=f(x_iy_j)$ for each edge $x_iy_j\in E(H)\subset E(H_L)$, so $g(x_iy_j)=f(x_iy_j)\leq e^*=k+(p-2)d$, and we have the edge-difference constraints
$$g(x_iy_j)+|g(y_j)-g(x_i)|=f(x_iy_j)+f(y_j)+(A+B)d-f(x_i)=t^*,~x_iy_j\in E(H)\subset E(H_L)$$
where $t^*=2k+(\alpha+A+B)d$.

\textbf{LSed-3.} \textbf{From Large to Small.} Set $g(y_ty_{t,i})=id+e^*$ for $i\in [1,b_t]$, and moreover
\begin{equation}\label{eqa:c3xxxxx}
g(y_{t-r}y_{t-r,i})=id+e^*+\sum ^r_{k=1}b_{t-k+1},~i\in [1,b_{t-r}],~r\in [1,t-1]
\end{equation} Clearly, $g(y_{1}y_{1,b_1})=e^*+B$. And set $g(y_{j,i})=g(y_jy_{j,i})+g(y_j)-t^*$ for each edge $y_jy_{j,i}\in E(H_L)$.

\textbf{LSed-4.} \textbf{From Large to Small.} Set $g(x_sx_{s,i})=id+e^*+B$ for $i\in [1,a_s]$, in general, we have edge colors
\begin{equation}\label{eqa:c3xxxxx}
g(x_{s-r}x_{s-r,i})=id+e^*+B+\sum ^r_{k=1}a_{s-k+1},~i\in [1,a_{s-r}],~r\in [1,s-1]
\end{equation} and $g(x_{1}x_{1,a_1})=e^*+A+B$. And we set $g(x_{j,i})=t^*+g(x_j)-g(x_jx_{j,i})$ for each edge $x_jx_{j,i}\in E(H_L)$.

Thereby, we get the edge-difference constraint $g(uv)+|g(u)-g(v)|=t^*$ for each edge $uv\in E(H_L)$ such that the edge color set
$$g(E(H_L))=S_{p-2,k,0,d}\cup \{e^*+d,e^*+2d,\dots ,e^*+A+B\}=S_{p-2+A+B,k,0,d}$$
so the Leaf-adding LS-edge-difference algorithm determines an edge-difference $(k,d)$-total coloring $g$ of the leaf-added tree $H_L$.

\vskip 0.4cm

We show another algorithm for determining another edge-difference $(k,d)$-total coloring $h$ of the leaf-added tree $H_L$ as follows:

\vskip 0.4cm

\textbf{2. Leaf-adding SL-edge-difference algorithm.}

\textbf{SLed-1.} Each vertex $x_i\in X\subset X\,'$ is colored by $h(x_i)=f(x_i)$, and each vertex $y_j\in Y\subset Y\,'$ is cored by $h(y_j)=f(y_j)+(A+B)d$.

\textbf{SLed-2.} Edge colors are $h(x_iy_j)=f(x_iy_j)$ for each edge $x_iy_j\in E(H)\subset E(H_L)$, so $h(x_iy_j)=f(x_iy_j)\leq e^*=k+(p-2)d$, and there are edge-difference constraints
$$h(x_iy_j)+|h(y_j)-h(x_i)|=f(x_iy_j)+f(y_j)+(A+B)d-f(x_i)=t^*,~x_iy_j\in E(H)\subset E(H_L)$$
where $t^*=2k+(\alpha+A+B)d$.

\textbf{SLed-3.} \textbf{From Small to Large.} Set $h(x_{1}x_{1,i})=id+e^*$ for $i\in [1,a_1]$, in general, there is
\begin{equation}\label{eqa:c3xxxxx}
h(x_{r}x_{r,i})=id+e^*+\sum ^{r-1}_{k=1}a_{k},~i\in [1,a_{r}],~r\in [2,s]
\end{equation} and $h(x_{s}x_{s,a_s})=e^*+A$. And we color vertex $x_{j,i}$ with $h(x_{j,i})=t^*+h(x_j)-h(x_{j}x_{j,i})$ for each edge $x_{j}x_{j,i}\in E(H_L)$.

\textbf{SLed-4.} \textbf{From Small to Large.} Set $h(y_1y_{1,i})=id+e^*+A$ for $i\in [1,b_1]$, and moreover
\begin{equation}\label{eqa:c3xxxxx}
h(y_{r}y_{r,i})=id+e^*+A+\sum ^{r-1}_{k=1}b_{k},~i\in [1,b_{r}],~r\in [2,t]
\end{equation} so, $h(y_{t}y_{t,b_t})=e^*+A+B$. We set $h(y_{j,i})=h(y_jy_{j,i})+h(y_j)-t^*$ for each edge $y_jy_{j,i}\in E(H_L)$.

Clearly, the edge-difference constraint $h(uv)+|h(u)-h(v)|=t^*$ holds for each edge $uv\in E(H_L)$, and the edge color set
$$
h(E(H_L))=S_{p-2+A+B,k,0,d}
$$ that is the Leaf-adding LS-edge-difference algorithm determines another edge-difference $(k,d)$-total coloring $h$ of the leaf-added tree $H_L$.

\vskip 0.4cm

Let $T_1$ be a tree being not a star, and let $L_{eaf}(T_1)$ be the set of all leaves of $T_1$. Furthermore, we have trees $T_{i+1}=T_i-L_{eaf}(T_i)$ with $i\in [1,m]$, and each tree $T_i$ has its own leaf set $L_{eaf}(T_i)$, $T_{m+1}$ is a star. Since the star $T_{m+1}$ admits an edge-difference $(k,d)$-total coloring, we apply two algorithms introduced above to each tree $T_{i+1}$ for determining two different edge-difference $(k,d)$-total colorings of the leaf-added tree $T_i=T_{i+1}+L_{eaf}(T_i)$, since each number $|L_{eaf}(T_i)|\geq 2$ according to each tree having at least two leaves.

In the above each step, we have two ways to realize edge-difference $(k,d)$-total colorings, so $T_1$ admits at least $2^m$ different edge-difference $(k,d)$-total colorings.

This is the complete proof of the theorem.
\end{proof}

\begin{figure}[h]
\centering
\includegraphics[width=16.4cm]{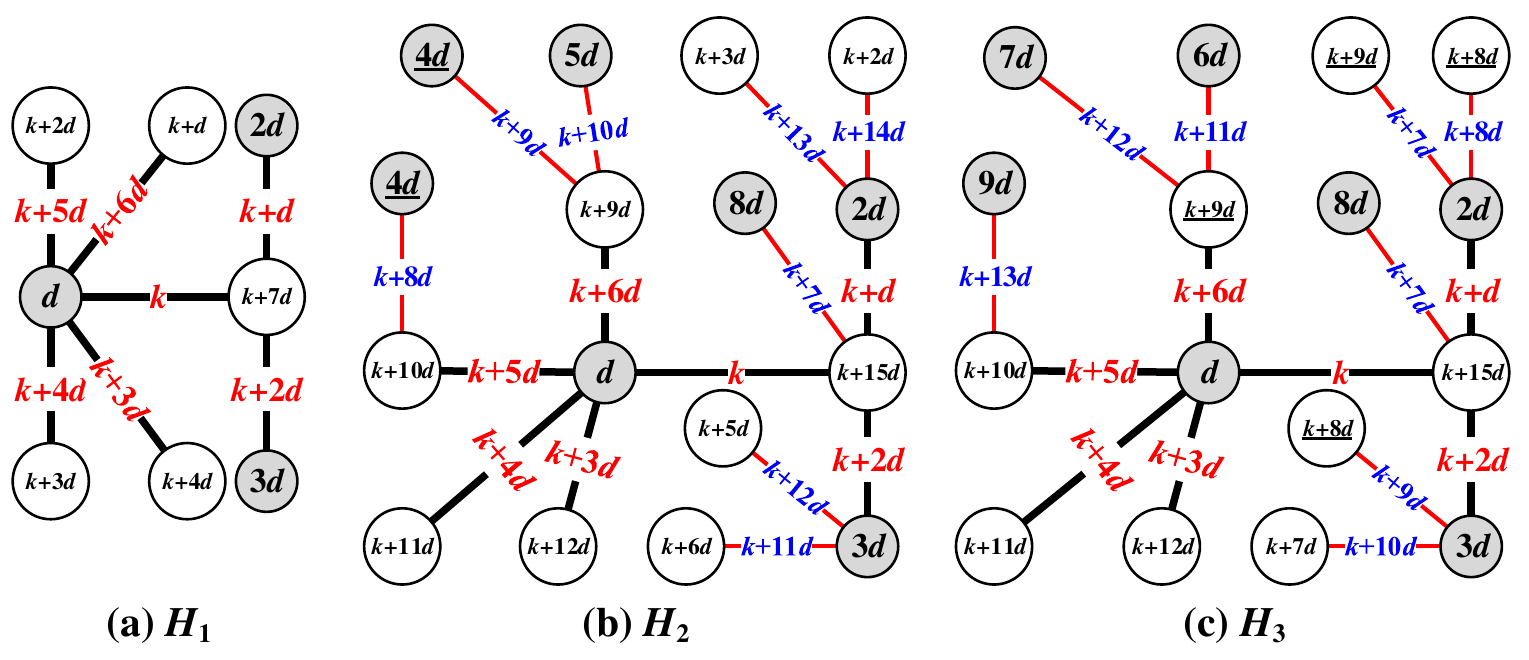}
\caption{\label{fig:1-edge-difference}{\small (a) $H_1$ admits an edge-difference $(k,d)$-total coloring $f_1$ holding the edge-difference constraint $f_1(uv)+|f_1(u)-f_1(v)|=2k+6d$; (b) $H_2$ admits an edge-difference $(k,d)$-total coloring $f_2$ holding the edge-difference constraint $f_2(uv)+|f_2(u)-f_2(v)|=2k+14d$ from Large to Small; (c) $H_3$ admits an edge-difference $(k,d)$-total coloring $f_3$ holding the edge-difference constraint $f_3(uv)+|f_3(u)-f_3(v)|=2k+14d$ from Small to Large.}}
\end{figure}

\begin{thm}\label{thm:graceful-difference-algorithm}
$^*$ Each tree $T$ with diameter $D(T)\geq 3$ admits at least $2^m$ different graceful-difference $(k,d)$-total colorings with $m+1=\left \lceil \frac{D(T)}{2}\right \rceil $.
\end{thm}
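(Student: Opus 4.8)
The plan is to follow the template of the proof of Theorem~\ref{thm:edge-difference-algorithm} almost verbatim, merely replacing the edge-difference constraint $f(uv)+|f(u)-f(v)|=\textrm{const}$ by the graceful-difference constraint $\big||f(u)-f(v)|-f(uv)\big|=d$, and adapting the role of the ``free'' colors. Fix the constant $c=d$ (note that the case $c=0$ is exactly a graceful $(k,d)$-total coloring, so that case is already covered by Theorem~\ref{thm:any-tree-k-d-graceful-total-coloring}). First I would do the base case: a star $T_{m+1}=K_{1,n}$ with center $u_0$ placed in the $X$-part and colored $f(u_0)=0$, leaves $v_1,\dots,v_n$ in the $Y$-part with $f(v_j)=k+jd$, and edges $f(u_0v_j)=k+(j-1)d$, is a graceful-difference $(k,d)$-total coloring, since $\big||f(v_j)-f(u_0)|-f(u_0v_j)\big|=\big|(k+jd)-(k+(j-1)d)\big|=d$ and the edge color set is exactly $S_{n-1,k,0,d}$; moreover every $Y$-color is $\ge k$, a property the induction will preserve.

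Next I would set up the two algorithms, Leaf-adding LS-graceful-difference and SL-graceful-difference, acting on a tree $H$ on $p$ vertices with bipartition $(X,Y)$ already carrying a graceful-difference $(k,d)$-total coloring $f$ with constant $d$, edge color set $S_{p-2,k,0,d}$, colors sorted within each part, and every $Y$-color $\ge k$. Using the notation of Definition~\ref{defn:leaf-added-trees} (with $A=\sum_k a_k$, $B=\sum_k b_k$ added leaves), the crucial device is: keep $g(x_i)=f(x_i)$, raise $g(y_j)=f(y_j)+(A+B)d$, and \emph{also raise every old edge}, $g(x_iy_j)=f(x_iy_j)+(A+B)d$. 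Then for every old edge $\big||g(y_j)-g(x_i)|-g(x_iy_j)\big|=\big||f(y_j)-f(x_i)|-f(x_iy_j)\big|=d$ is unchanged, the old edge colors now occupy $S_{p-2+A+B,k,0,d}\setminus S_{A+B-1+k,k,0,d}$, and the low block $\{k,k+d,\dots,k+(A+B-1)d\}$ is freed for the new edges --- exactly as in the graceful leaf-adding step of Theorem~\ref{thm:adding-leaves-keep-total-coloring}. The new edges receive these low colors in a prescribed traversal order: in the LS-version running from the leaves of $y_t$ down to those of $y_1$ and then from the leaves of $x_s$ down to those of $x_1$, in the SL-version in the opposite order. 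If a new leaf $w$ attached to a vertex $z$ carries new edge color $\gamma\in\{k,\dots,k+(A+B-1)d\}$, set $g(w)=g(z)-\gamma-d$ when $z\in Y$ (legitimate since $g(z)=f(z)+(A+B)d\ge k+(A+B)d>\gamma+d$ forces $g(w)\ge 0$, and $g(w)$ is a multiple of $d$, i.e.\ an $X$-color) and $g(w)=g(z)+\gamma+d$ when $z\in X$ (then $g(w)=f(z)+\gamma+d\ge k$ and $g(w)-k$ is a multiple of $d$, i.e.\ a $Y$-color); in both cases $\big||g(w)-g(z)|-\gamma\big|=\big|(\gamma+d)-\gamma\big|=d$. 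One checks that with either traversal the $A+B$ new edges exactly exhaust the low block, so $g(E(H_L))=S_{p-2+A+B,k,0,d}=S_{|V(H_L)|-2,k,0,d}$, every $Y$-color remains $\ge k$, and $g$ is again a graceful-difference $(k,d)$-total coloring of the leaf-added tree $H_L$.

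Finally, given a tree $T_1$ with $D(T_1)\ge 3$, peel leaves repeatedly, $T_{i+1}=T_i-L_{eaf}(T_i)$ for $i\in[1,m]$ with $m+1=\big\lceil \frac{D(T_1)}{2}\big\rceil$, ending at a star $T_{m+1}$; then reconstruct $T_i=T_{i+1}+L_{eaf}(T_i)$ for $i=m,m-1,\dots,1$, choosing independently at each stage the LS- or the SL-algorithm. Since every tree here has $|L_{eaf}(T_i)|\ge 2$, the two algorithms assign at least two fresh colors in opposite orders, so their outputs differ at that stage, and a later stage only shifts existing edge colors by a common constant and appends fresh colors, hence cannot re-identify two colorings that were already distinct; thus the $2^m$ choice sequences yield $2^m$ pairwise distinct graceful-difference $(k,d)$-total colorings of $T_1$. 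The step needing the most care is purely combinatorial: verifying that, under each traversal order, the new edge colors tile $\{k,\dots,k+(A+B-1)d\}$ with no gaps and no repetition while every new-leaf color stays inside $S_{m',0,0,d}$ resp.\ $S_{q',k,0,d}$ --- the same bookkeeping as in the LSed/SLed clauses of Theorem~\ref{thm:edge-difference-algorithm}; the graceful-difference constraint itself is preserved almost for free by the ``shift-the-old-edges-up'' trick.
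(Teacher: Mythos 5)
Your proposal is correct and follows essentially the same route as the paper's proof: peel leaves down to a star, then re-attach them level by level with two leaf-adding graceful-difference algorithms (the SL and LS orderings), giving $2^m$ independent binary choices, exactly as in the paper's treatment of Theorem~\ref{thm:edge-difference-algorithm} and Theorem~\ref{thm:graceful-difference-algorithm}. The only differences are bookkeeping — you keep the graceful-difference constant fixed at $d$ by shifting just the $Y$-colors and the old edge colors and letting the new edges fill the freed low block $\{k,k+d,\dots,k+(A+B-1)d\}$, whereas the paper shifts all old vertex and edge colors by $(A+B)d$, lets the constant grow to $(\beta+A+B)d$, and places the new edge colors above $e^*$ — so just make explicit the per-edge invariant that the $Y$-end color is at least the $X$-end color on every old edge (your ``constraint is unchanged'' computation silently uses it, and your construction does preserve it).
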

\begin{proof} Use the notation and terminology of defining a leaf-added tree $H_L$ obtained from a tree $H$ in Definition \ref{defn:leaf-added-trees}. Suppose that this tree $H$ admits a graceful-difference $(k,d)$-total coloring $f$ with the graceful-difference constraint
$$
\big ||f(u)-f(v)|-f(uv)\big |=\beta\cdot d,~uv\in E(H)
$$ and the edge color set $f(E(H))=S_{p-2,k,0,d}$, and let $e^*=k+(p-2)d$. Without loss of generality, there are vertex color inequalities
$$f(x_1)\leq f(x_2)\leq \cdots \leq f(x_s),~f(y_1)\leq f(y_2)\leq \cdots \leq f(y_t)
$$ We define a total coloring $g$ for the leaf-added tree $H_L$ by the following algorithm.

\vskip 0.2cm

\textbf{1. Leaf-adding SL-graceful-difference algorithm.}

\textbf{SLgd-1.} Set vertex colors $g(x_i)=f(x_i)+(A+B)d$ for $x_i\in X\subset X\,'$, and edge colors $g(y_j)=f(y_j)+(A+B)d$ for $y_j\in Y\subset Y\,'$, and $g(x_iy_j)=f(x_iy_j)+(A+B)d$ for each edge $x_iy_j\in E(H)$. Thus, each edge $uv\in E(H)\subset E(H_L)$ holds the graceful-difference constraint
$$\big ||g(u)-g(v)|-g(uv)\big |=\big ||f(u)-f(v)|-f(uv)-(A+B)d\big |=(\beta+A+B)d$$
and let $t^*=(\beta+A+B)d$.

\textbf{SLgd-2.} \textbf{From Small to Large.} Set edge colors $g(x_{1}x_{1,i})=id+e^*$ for $i\in [1,a_1]$, in general, we get
\begin{equation}\label{eqa:c3xxxxx}
g(x_{r}x_{r,i})=id+e^*+\sum ^{r-1}_{k=1}a_{k},~i\in [1,a_{r}],~r\in [2,s]
\end{equation} and $g(x_{s}x_{s,a_s})=e^*+A$. And we set $g(x_{j,i})=g(x_{j}x_{j,i})+g(x_j)-t^*$ for each edge $x_{j}x_{j,i}\in E(H_L)$.

\textbf{SLgd-3.} \textbf{From Small to Large.} Set edge colors $g(y_1y_{1,i})=id+e^*+A$ for $i\in [1,b_1]$, and moreover
\begin{equation}\label{eqa:c3xxxxx}
g(y_{r}y_{r,i})=id+e^*+A+\sum ^{r-1}_{k=1}b_{k},~i\in [1,b_{r}],~r\in [2,t]
\end{equation} so, $g(y_{t}y_{t,b_t})=e^*+A+B$. We color $y_{j,i}$ with $g(y_{j,i})=g(y_j)-g(y_jy_{j,i})-t^*$, or $g(y_{j,i})=g(y_j)+t^*-g(y_jy_{j,i})$ for each edge $y_jy_{j,i}\in E(H_L)$. Thereby, we get the following graceful-difference constraints

1. $\big ||g(x_j)-g(x_{j,i})|-g(x_jx_{j,i})\big |=t^*$ for each edge $x_{j}x_{j,i}\in E(H_L)$, and

2. $\big ||g(y_j)-g(y_{j,i})|-g(y_jy_{j,i})\big |=t^*$ for each edge $y_jy_{j,i}\in E(H_L)$. \\
So, $g$ is a graceful-difference $(k,d)$-total coloring of the leaf-added tree $H_L$.

\vskip 0.4cm

We show another algorithm for defining another graceful-difference $(k,d)$-total coloring $h$ of the leaf-added tree $H_L$ as follows:

\vskip 0.4cm

\textbf{2. Leaf-adding LS-graceful-difference algorithm.}

\textbf{LSgd-1.} Set vertex colors $h(x_i)=f(x_i)+(A+B)d$ for $x_i\in X\subset X\,'$, and $h(y_j)=f(y_j)+(A+B)d$ for $y_j\in Y\subset Y\,'$, and edge colors are $h(x_iy_j)=f(x_iy_j)+(A+B)d$ for each edge $x_iy_j\in E(H)$. So, each edge $uv\in E(H)\subset E(H_L)$ holds the graceful-difference constraint
$$\big ||h(u)-h(v)|-h(uv)\big |=\big ||f(u)-f(v)|-f(uv)-(A+B)d\big |=(\beta+A+B)d$$
and let $t^*=(\beta+A+B)d$.

\textbf{LSgd-2.} \textbf{From Large to Small.} Set edge colors $h(y_ty_{t,i})=id+e^*$ for $i\in [1,b_t]$, and moreover
\begin{equation}\label{eqa:c3xxxxx}
h(y_{t-r}y_{t-r,i})=id+e^*+\sum ^r_{k=1}b_{t-k+1},~i\in [1,b_{t-r}],~r\in [1,t-1]
\end{equation} Clearly, $h(y_{1}y_{1,b_1})=e^*+B$. And we set $h(y_{j,i})=h(y_j)-h(y_jy_{j,i})-t^*$ for each edge $y_jy_{j,i}\in E(H_L)$, also, the graceful-difference constraint
$$\big ||h(y_{j,i})-h(y_j)|-h(y_jy_{j,i})\big |=t^*,~y_jy_{j,i}\in E(H_L)$$
holds true.

\textbf{LSgd-3.} \textbf{From Large to Small.} Set edge colors $h(x_sx_{s,i})=id+e^*+B$ for $i\in [1,a_s]$, in general, we have
\begin{equation}\label{eqa:c3xxxxx}
h(x_{s-r}x_{s-r,i})=id+e^*+B+\sum ^r_{k=1}a_{s-k+1},~i\in [1,a_{s-r}],~r\in [1,s-1]
\end{equation} and $h(x_{1}x_{1,a_1})=e^*+A+B$. And we set $h(x_{j,i})=h(x_j)+h(x_jx_{j,i})-t^*$ for each edge $x_jx_{j,i}\in E(H_L)$, that is the graceful-difference constraint
$$\big ||h(x_{j,i})-h(x_j)|-h(x_jx_{j,i})\big |=t^*,~x_jx_{j,i}\in E(H_L)$$
holds true.

Thereby, the leaf-added tree $H_L$ admits $h$ as its graceful-difference $(k,d)$-total coloring.

\vskip 0.4cm

Let $T_1$ be a tree being not a star, and let $L_{eaf}(T_1)$ be the set of all leaves of $T_1$. Furthermore, we have trees $T_{i+1}=T_i-L_{eaf}(T_i)$ with $i\in [1,m]$, and each tree $T_i$ has its own leaf set $L_{eaf}(T_i)$, $T_{m+1}$ is a star. Since the star $T_{m+1}$ admits an graceful-difference $(k,d)$-total coloring, we apply two algorithms introduced above to each tree $T_{i+1}$ for determining two different graceful-difference $(k,d)$-total colorings of the leaf-added tree $T_i=T_{i+1}+L_{eaf}(T_i)$, because of each number $|L_{eaf}(T_i)|\geq 2$ from each tree having at least two leaves.

In the above each step, we have two ways to realize graceful-difference $(k,d)$-total colorings, so $T_1$ admits at least $2^m$ different graceful-difference $(k,d)$-total colorings.

This theorem is covered.
\end{proof}

\begin{figure}[h]
\centering
\includegraphics[width=16.4cm]{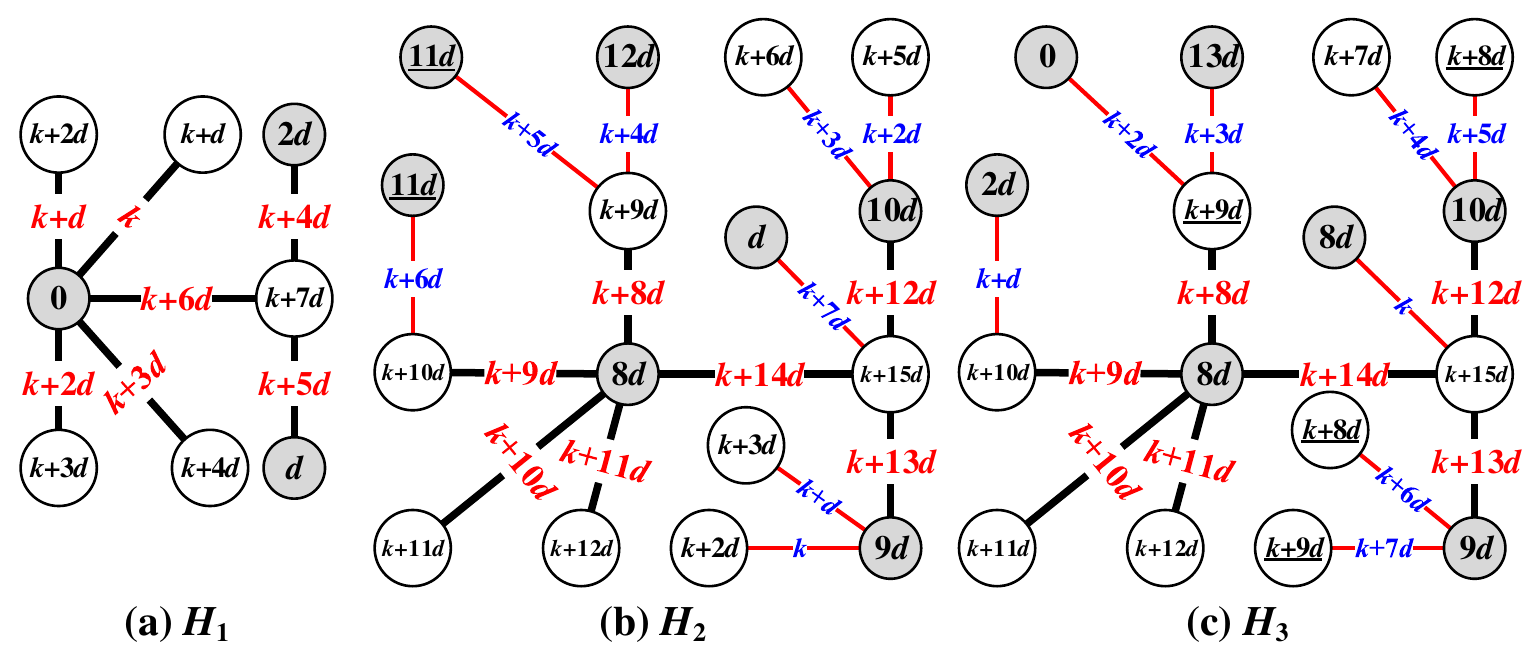}
\caption{\label{fig:2-graceful-difference}{\small (a) $H_1$ admits a graceful-difference $(k,d)$-total coloring $g_1$ holding the graceful-difference constraint $\big ||g_1(u)-g_1(v)|-g_1(uv)\big |=d$; (b) $H_2$ admits a graceful-difference $(k,d)$-total coloring $g_2$ holding the graceful-difference constraint $\big ||g_2(u)-g_2(v)|-g_2(uv)\big |=7d$ from Small to Large; (c) $H_3$ admits a graceful-difference $(k,d)$-total coloring $g_3$ holding the graceful-difference constraint $\big ||g_3(u)-g_3(v)|-g_3(uv)\big |=7d$ from Large to Small.}}
\end{figure}

By Theorem \ref{thm:any-tree-k-d-graceful-total-coloring} and the transformation from a graceful $(k,d)$-total coloring to a felicitous-difference $(k,d)$-total coloring, or to an edge-magic $(k,d)$-total coloring in the proof of Theorem \ref{thm:equivalent-k-d-total-colorings}, we have the following two results in Theorem \ref{thm:felicitous-difference-algorithm} and Theorem \ref{thm:edge-magic-algorithm}:

\begin{thm}\label{thm:felicitous-difference-algorithm}
$^*$ Each tree $T$ with diameter $D(T)\geq 3$ admits at least $2^m$ different felicitous-difference $(k,d)$-total colorings for $m+1=\left \lceil \frac{D(T)}{2}\right \rceil $.
\end{thm}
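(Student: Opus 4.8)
The plan is to derive this from Theorem \ref{thm:any-tree-k-d-graceful-total-coloring} together with the colour-transformation already carried out in the proof of Theorem \ref{thm:equivalent-k-d-total-colorings}, rather than rerunning a fresh leaf-adding induction. By Theorem \ref{thm:any-tree-k-d-graceful-total-coloring}, a tree $T$ with $D(T)\geq 3$ admits at least $2^m$ distinct graceful $(k,d)$-total colorings, where $m+1=\lceil D(T)/2\rceil$, and each of them is produced by the Leaf-adding $x$SL$y$LS / $x$LS$y$SL-graceful algorithms, so it already comes with the monotone orderings $0=f(x_1)\leq f(x_i)\leq f(x_{i+1})$ on $X$ and $f(y_j)\leq f(y_{j+1})\leq f(y_t)=k+(q-1)d$ on $Y$ (with $(X,Y)$ the bipartition of $T$) and a sorted listing of the edges. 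First I would fix one such $f$ and apply the partly $Y$-image totally-$e$-image transformation of Definition \ref{defn:kd-w-type-coloring-transfoemations}: $g(x_r)=f(x_r)$ for $x_r\in X$, $g(y_r)=[f(y_t)+f(y_1)]-f(y_r)$ for $y_r\in Y$, and $g(e_r)=[f(e_q)+f(e_1)]-f(e_r)$ for every edge $e_r\in E(T)$.

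The check that $g$ is a felicitous-difference $(k,d)$-total coloring is exactly Wtype-4 in the proof of Theorem \ref{thm:equivalent-k-d-total-colorings}: using $f(e_r)=f(y_r)-f(x_r)$ one gets, for each edge $e_r=x_ry_r$, that $|g(x_r)+g(y_r)-g(e_r)|=f(y_t)+f(y_1)-[f(e_q)+f(e_1)]=f(y_1)-k$, a fixed non-negative integer, while the two order-reversals preserve $g(E(T))=S_{q-1,k,0,d}$ and $g(V(T)\cup E(T))\subseteq S_{m,0,0,d}\cup S_{q-1,k,0,d}$. Hence every graceful $(k,d)$-total coloring of $T$ is sent to a felicitous-difference $(k,d)$-total coloring of $T$.

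The step I expect to carry the real weight is checking that the assignment $f\mapsto g$ is injective, so that the lower bound $2^m$ is transported without loss. Here I would note that, on the class of colorings equipped with the monotone orderings above, the transformation is an involution: the reversal $y_r\mapsto[f(y_t)+f(y_1)]-f(y_r)$ preserves the pair-sum $g(y_1)+g(y_t)=f(y_1)+f(y_t)$ and merely interchanges the two extreme $Y$-colors, and likewise $g(e_1)+g(e_q)=f(e_1)+f(e_q)$, so applying the same transformation to $g$ recovers $f$. Therefore $f\mapsto g$ is a bijection from the set of graceful $(k,d)$-total colorings of $T$ onto its image inside the set of felicitous-difference $(k,d)$-total colorings, and distinct $f$ give distinct $g$; combined with Theorem \ref{thm:any-tree-k-d-graceful-total-coloring} this yields at least $2^m$ distinct felicitous-difference $(k,d)$-total colorings of $T$.

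If a self-contained argument is preferred, parallel to Theorem \ref{thm:edge-difference-algorithm} and Theorem \ref{thm:graceful-difference-algorithm}, one can instead set up Leaf-adding SL- and LS-felicitous-difference algorithms on the leaf-added trees of Definition \ref{defn:leaf-added-trees}: start from the star $T_{m+1}$, which plainly admits a felicitous-difference $(k,d)$-total coloring, and at each of the $m$ leaf-adding stages color the new pendant edges in one of two orders (small-to-large or large-to-small) while keeping the felicitous-difference constant fixed; the $m$ independent binary choices produce $2^m$ pairwise-distinct colorings. I would present the transformation route as the main proof and leave this direct algorithm as a remark.
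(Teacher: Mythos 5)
Your proposal follows essentially the same route as the paper: Theorem \ref{thm:felicitous-difference-algorithm} is obtained there directly from Theorem \ref{thm:any-tree-k-d-graceful-total-coloring} combined with the graceful-to-felicitous-difference transformation (the partly Y-image totally-e-image step, Wtype-4) in the proof of Theorem \ref{thm:equivalent-k-d-total-colorings}, exactly as you argue. Your explicit involution/injectivity check (and the remark about a direct Leaf-adding SL/LS algorithm parallel to Theorems \ref{thm:edge-difference-algorithm} and \ref{thm:graceful-difference-algorithm}) merely fills in details the paper leaves implicit, and is correct.
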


See examples of trees admitting felicitous-difference $(k,d)$-total colorings shown in Fig.\ref{fig:33-felicitous-difference} and Fig.\ref{fig:3-felicitous-difference}.

\begin{figure}[h]
\centering
\includegraphics[width=16.4cm]{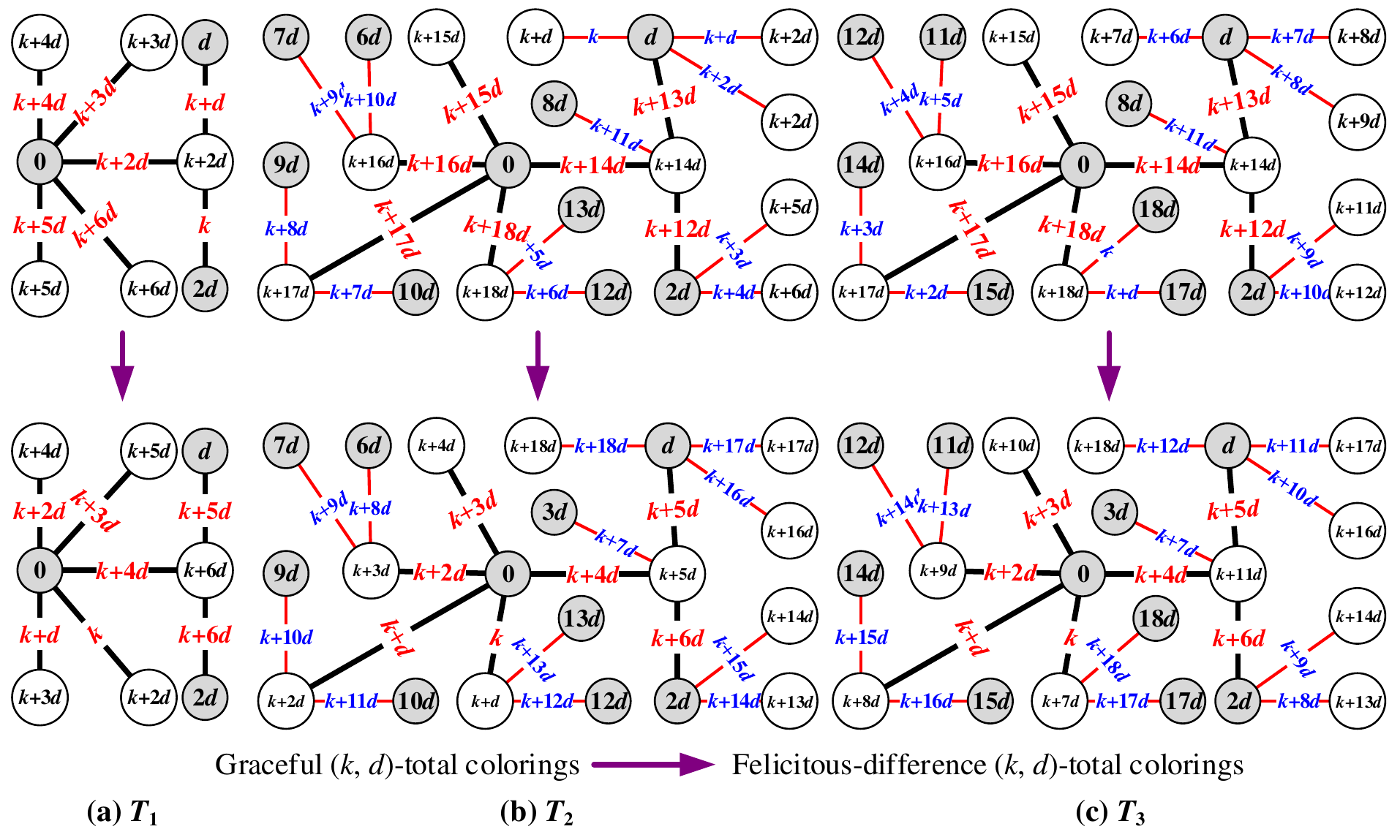}
\caption{\label{fig:33-felicitous-difference}{\small A scheme for illustrating Theorem \ref{thm:felicitous-difference-algorithm}, where (a) $T_1$ admits a felicitous-difference $(k,d)$-total coloring $f_1$ holding the felicitous-difference constraint $|f_1(u)+f_1(v)-f_1(uv)|=2d$; (b) $T_2$ admits a felicitous-difference $(k,d)$-total coloring $f_2$ holding the felicitous-difference constraint $|f_2(u)+f_2(v)-f_2(uv)|=d$; (c) $T_3$ admits a felicitous-difference $(k,d)$-total coloring $f_3$ holding the felicitous-difference constraint $|f_3(u)+f_3(v)-f_3(uv)|=7d$.}}
\end{figure}

\begin{figure}[h]
\centering
\includegraphics[width=16.4cm]{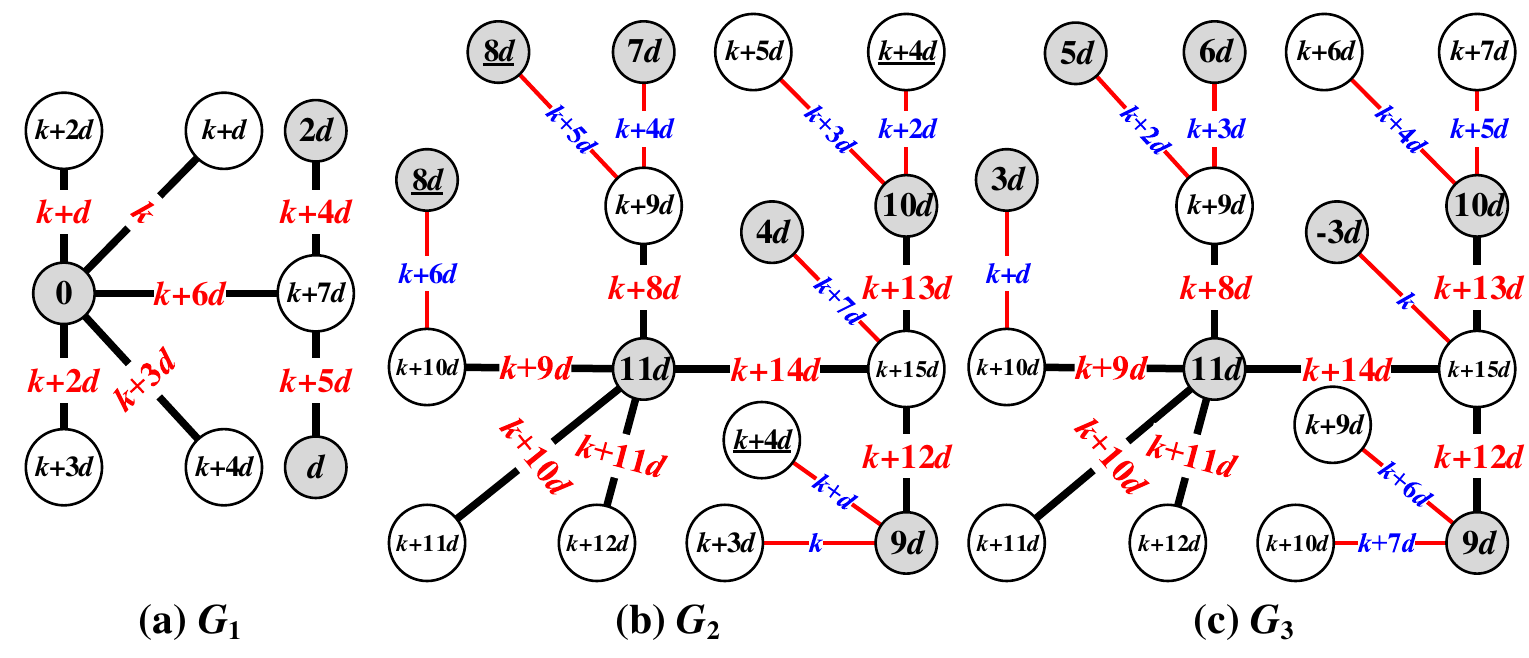}
\caption{\label{fig:3-felicitous-difference}{\small (a) $G_1$ admits a felicitous-difference $(k,d)$-total coloring $h_1$ holding the felicitous-difference constraint $|h_1(u)+h_1(v)-h_1(uv)|=4d$; (b) $G_2$ admits a felicitous-difference $(k,d)$-total coloring $h_2$ holding the felicitous-difference constraint $|h_2(u)+h_2(v)-h_2(uv)|=12d$ from Small to Large; (c) $G_3$ admits a total coloring $h_3$ holding the felicitous-difference constraint $|h_3(u)+h_3(v)-h_3(uv)|=12d$ from Large to Small.}}
\end{figure}

\begin{thm}\label{thm:edge-magic-algorithm}
$^*$ Each tree $T$ with diameter $D(T)\geq 3$ admits at least $2^m$ different edge-magic $(k,d)$-total colorings with $m+1=\left \lceil \frac{D(T)}{2}\right \rceil $.
\end{thm}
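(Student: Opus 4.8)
The plan is to deduce Theorem~\ref{thm:edge-magic-algorithm} from Theorem~\ref{thm:any-tree-k-d-graceful-total-coloring} by pushing each graceful $(k,d)$-total coloring of $T$ forward through the colour transformation used in the proof of Theorem~\ref{thm:equivalent-k-d-total-colorings}, and then checking that this push-forward is injective, so that the lower bound $2^m$ is preserved.

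First I would invoke Theorem~\ref{thm:any-tree-k-d-graceful-total-coloring}: since $D(T)\ge 3$, the tree $T$ admits at least $2^m$ pairwise distinct graceful $(k,d)$-total colorings $f_1,f_2,\dots ,f_{2^m}$, where $m+1=\left\lceil \frac{D(T)}{2}\right\rceil$. Fix the bipartition $V(T)=X\cup Y$ with $X=\{x_i:i\in[1,s]\}$, $Y=\{y_j:j\in[1,t]\}$ and $E(T)=\{e_i:i\in[1,q]\}$, each list ordered so that the relevant colour values are non-decreasing, exactly as in Definition~\ref{defn:kd-w-type-coloring-transfoemations}. To each $f_\ell$ I associate the coloring $g_\ell=\Phi(f_\ell)$ given by the \emph{partly X-image totally-e-image} transformation $\Phi$, i.e. $g_\ell(y_r)=f_\ell(y_r)$, $g_\ell(x_r)=[f_\ell(x_s)+f_\ell(x_1)]-f_\ell(x_r)$ and $g_\ell(e_r)=[f_\ell(e_q)+f_\ell(e_1)]-f_\ell(e_r)$; this is precisely \textbf{Wtype}-1 in the proof of Theorem~\ref{thm:equivalent-k-d-total-colorings}. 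The computation there shows that every edge $x_ry_r\in E(T)$ satisfies the edge-magic constraint $g_\ell(x_r)+g_\ell(y_r)+g_\ell(x_ry_r)=2k+(q-1)d+f_\ell(x_s)$, a constant; moreover $\Phi$ merely reflects the edge colours inside $S_{q-1,k,0,d}$ and the $X$-colours inside $[0,f_\ell(x_s)]\subseteq S_{m,0,0,d}$, so $g_\ell(E(T))=S_{q-1,k,0,d}$ and the total colour set stays inside $S_{m,0,0,d}\cup S_{q-1,k,0,d}$. Hence each $g_\ell$ is an edge-magic $(k,d)$-total coloring of $T$ in the sense of Definition~\ref{defn:kd-w-type-colorings}.

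It remains to see that $f\mapsto \Phi(f)$ is injective on graceful $(k,d)$-total colorings of $T$, which is the only step needing real care. Given $g=\Phi(f)$, the $Y$-colours are recovered at once from $g(y_r)=f(y_r)$; the extreme edge colours are forced by the edge colour set, $\{f(e_1),f(e_q)\}=\{k,\,k+(q-1)d\}$, hence $f(e_r)=2k+(q-1)d-g(e_r)$; and since $g$ restricted to $X$ equals the reflection $[f(x_1)+f(x_s)]-f(x_r)$, one has $\{f(x_1),f(x_s)\}=\{\min g(X),\max g(X)\}$ and therefore $f(x_r)=[f(x_1)+f(x_s)]-g(x_r)$. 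Thus $f$ is determined by $g$, the correspondence is a bijection onto its image, and the $2^m$ distinct $f_\ell$ produce $2^m$ distinct edge-magic $(k,d)$-total colorings $g_\ell$ of $T$, so $T$ admits at least $2^m$ of them. (Alternatively one may note that the $2^m$ colorings of Theorem~\ref{thm:any-tree-k-d-graceful-total-coloring} already differ on $Y\cup E(T)$ through the independent SL/LS choices at the $m$ leaf-adding stages, and that $\Phi$ recovers $f$ on $Y\cup E(T)$ from $g$.) I expect no other obstacle: the edge-magic property itself is handed to us verbatim by the proof of Theorem~\ref{thm:equivalent-k-d-total-colorings}, and the whole difficulty is the injectivity bookkeeping above together with a careful reconciliation of the non-decreasing ordering conventions used to define $\Phi$.
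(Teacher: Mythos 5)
Your proposal follows essentially the same route as the paper, which justifies Theorem~\ref{thm:edge-magic-algorithm} in one line by combining Theorem~\ref{thm:any-tree-k-d-graceful-total-coloring} with the graceful-to-edge-magic transformation (\textbf{Wtype}-1, the partly X-image totally-e-image map) from the proof of Theorem~\ref{thm:equivalent-k-d-total-colorings}. Your injectivity bookkeeping — recovering $f$ on $Y$, on $E(T)$ via $f(e_r)=2k+(q-1)d-g(e_r)$, and on $X$ via the reflection about $\min g(X)+\max g(X)$ — is correct and simply makes explicit the counting step the paper leaves implicit.
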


See examples shown in Fig.\ref{fig:00-edge-magic} and Fig.\ref{fig:0-edge-magic} for understanding Theorem \ref{thm:edge-magic-algorithm}.

\begin{figure}[h]
\centering
\includegraphics[width=16.2cm]{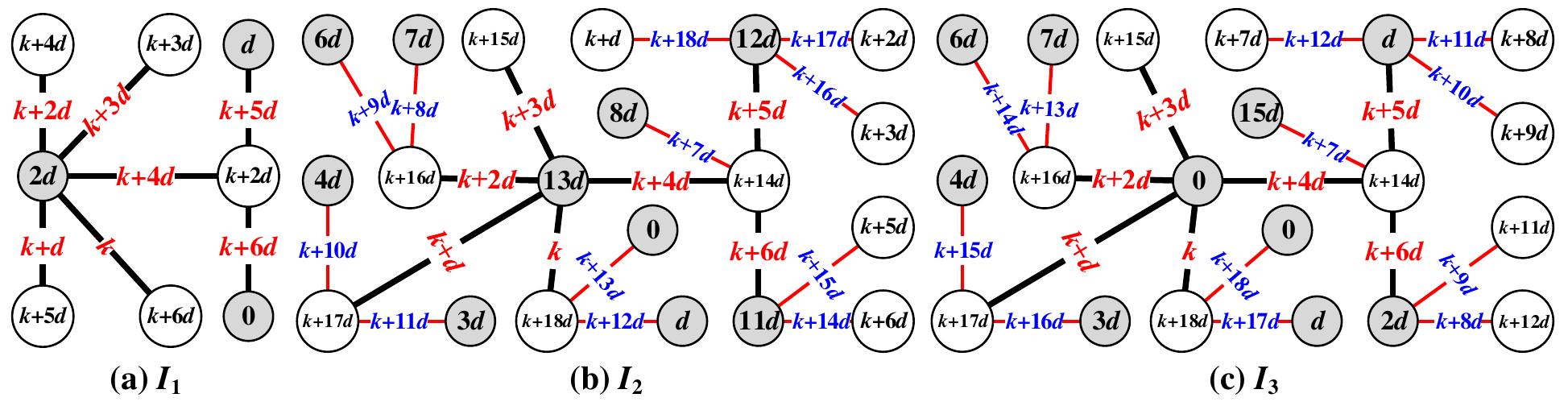}
\caption{\label{fig:00-edge-magic}{\small (a) $I_1$ admits an edge-magic $(k,d)$-total coloring $s_1$ holding the edge-magic constraint $s_1(u)+s_1(uv)+s_1(v)=2k+8d$; (b) $I_2$ admits an edge-magic $(k,d)$-total coloring $s_2$ holding the edge-magic constraint $s_2(u)+s_2(uv)+s_2(v)=2k+31d$ from Small to Large; (c) $I_3$ admits an edge-magic $(k,d)$-total coloring $s_3$ holding the edge-magic constraint $s_3(u)+s_3(uv)+s_3(v)=2k+36d$ from Large to Small.}}
\end{figure}

\begin{figure}[h]
\centering
\includegraphics[width=16.4cm]{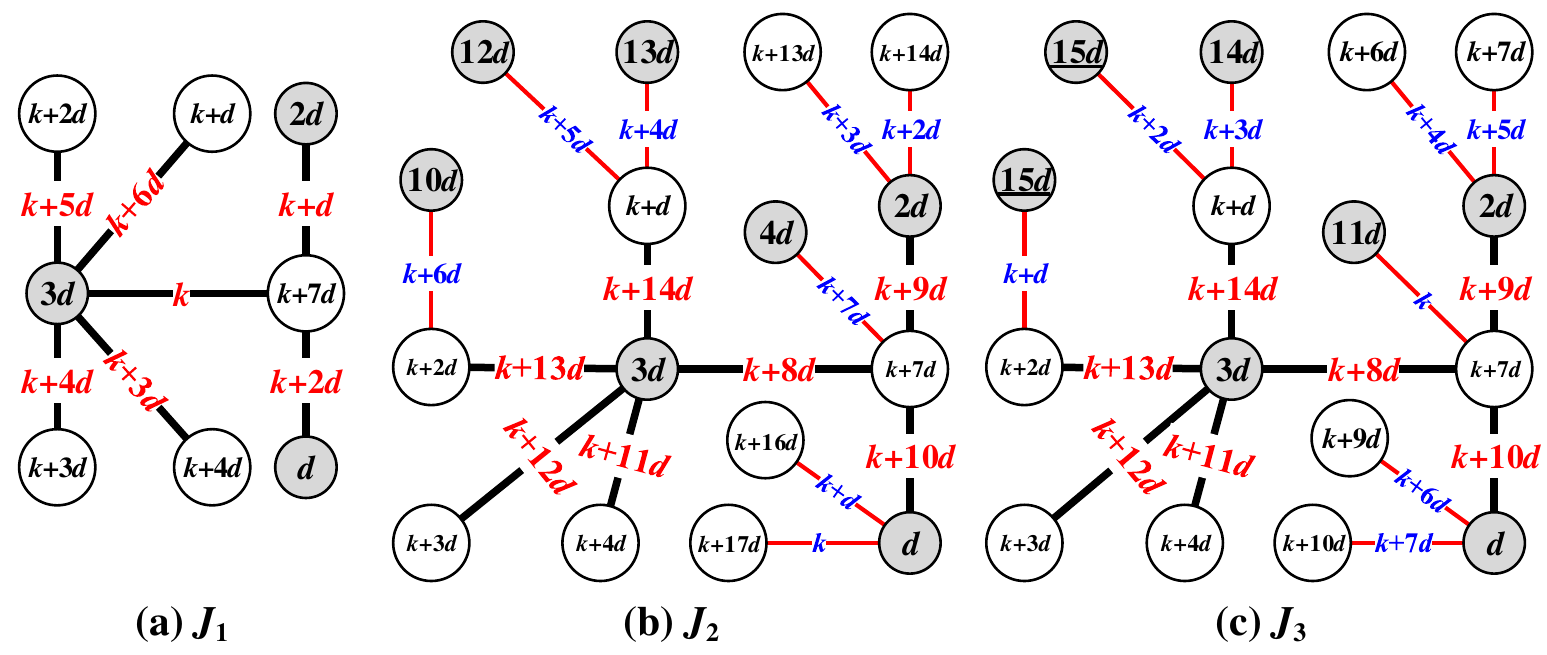}
\caption{\label{fig:0-edge-magic}{\small (a) $J_1$ admits an edge-magic $(k,d)$-total coloring $l_1$ holding the edge-magic constraint $l_1(u)+l_1(uv)+l_1(v)=2k+10d$; (b) $J_2$ admits an edge-magic $(k,d)$-total coloring $l_2$ holding the edge-magic constraint $l_2(u)+l_2(uv)+l_2(v)=2k+18d$ from Small to Large; (c) $J_3$ admits an edge-magic $(k,d)$-total coloring $l_3$ holding the edge-magic constraint $l_3(u)+l_3(uv)+l_3(v)=2k+18d$ from Large to Small.}}
\end{figure}

\section{Magic-constraint parameterized algorithms}

In this subsection, the sentence ``RANDOMLY-LEAF-adding algorithm'' is abbreviated as ``RLA-algorithm''.

\subsection{Parameterized graceful-difference total labelings/colorings}

\noindent $^*$ \textbf{RLA-algorithm-A for the graceful-difference $(k,d)$-total coloring}.

\textbf{Input:} A connected bipartite $(p,q)$-graph $G$ admitting a set-ordered graceful labeling $f$.

\textbf{Output:} A connected bipartite $(p+m,q+m)$-graph $G^*$ admitting a graceful-difference $(k,d)$-total coloring, where $G^*$ is the result of adding randomly $m$ leaves to $G$.

\textbf{Step A-1.} \textbf{Initialization.} Since a connected bipartite $(p,q)$-graph $G$ admitting a set-ordered graceful labeling $f$, we have the vertex set $V(G)=X\cup Y$ with $X\cap Y=\emptyset$, where
$$
X=\{x_1,x_2,\dots ,x_s\},~Y=\{y_1,y_2,\dots ,y_t\},~s+t=p=|V(G)|
$$ such that
$$
0=f(x_1)<f(x_2)<\cdots <f(x_s)<f(y_1)<f(y_2)<\cdots <f(y_t)=q
$$ with the edge color set
$$f(E(G))=\{f(x_iy_j)=f(y_j)-f(x_i):x_iy_j\in E(G)\}=[1,q]$$

Adding randomly $a_i$ new leaves $u_{i,j}\in L(x_i)$ to each vertex $x_i$ by adding new edges $x_iu_{i,j}$ for $j\in [1,a_i]$ and $i\in[1,s]$, and adding randomly $b_j$ new leaves $v_{j,r}\in L(y_j)$ to each vertex $y_j$ by adding new edges $y_jv_{j,r}$ for $r\in [1,b_j]$ and $j\in[1,t]$, it is allowed that some $a_i=0$ or some $b_j=0$. The resultant graph is denoted as $G^*$. Let $A=\sum ^s_{i=1}a_i$ and $B=\sum ^t_{j=1}b_j$, so $m=A+B$.

\textbf{Step A-2.} Define a graceful-difference total labeling $f_{gr}$ for $G$ by the set-ordered graceful labeling $f$ in the following substeps:

\textbf{Step A-2.1.} \textbf{$X$-dual transformation.} Set
$$
f_{gr}(x_i)=\max f(X)+\min f(X)-f(x_i)=\max f(X)-f(x_i),~x_i\in X
$$

\textbf{Step A-2.2.} \textbf{$Y$-dual transformation.} Set
$$
f_{gr}(y_j)=\max f(Y)+\min f(Y)-f(y_j)=q+\min f(Y)-f(y_j),~y_j\in Y
$$

\textbf{Step A-2.3.} \textbf{Edge-dual transformation.} Each edge $x_iy_j\in E(G)$ is colored as
$$
f_{gr}(x_iy_j)=\max f(E(G))+\min f(E(G))-f(x_iy_j)=q+1-f(x_iy_j)
$$

There is the following the graceful-difference constraint
\begin{equation}\label{eqa:graceful-difference-total-labeling-11}
{
\begin{split}
&\quad \big | |f_{gr}(x_i)-f_{gr}(y_j)|-f_{gr}(x_iy_j)\big |=\big | f_{gr}(y_j)-f_{gr}(x_i)-f_{gr}(x_iy_j)\big |\\
&=\big |q+\min f(Y)-f(y_j)-[\max f(X)-f(x_i)]-[q+1-f(x_iy_j)]\big |\\
&=\big |\min f(Y)-\max f(X)+f(x_iy_j)-f(y_j)+f(x_i)-1\big |\\
&=\min f(Y)-\max f(X)-1
\end{split}}
\end{equation} for each edge $x_iy_j\in E(G)$, which means that $f_{gr}$ is a graceful-difference total labeling of $G$.

\textbf{Step A-3.} \textbf{Parameterizing the graceful-difference total labeling.} For integers $k\geq 0$ and $d\geq 1$, we construct a $(k,d)$-constraint labeling $g_{gr}$ for $G$ as follows:

A-3.1. $g_{gr}(x_i)=f_{gr}(x_i)\cdot d$ for $x_i\in X$;

A-3.2. $g_{gr}(y_j)=k+[f_{gr}(y_j)-1]\cdot d$ for $y_j\in Y$; and

A-3.3. $g_{gr}(x_iy_j)=k+[f_{gr}(x_iy_j)-1]\cdot d$ for each edge $x_iy_j\in E(G)$.

So, we obtain the vertex color set
\begin{equation}\label{eqa:555555}
g_{gr}(V(G))=\{f(x_i)\cdot d:x_i\in X\}\cup \{k+[f_{gr}(y_j)-1]\cdot d:y_j\in Y\}=g_{gr}(X)\cup g_{gr}(Y)
\end{equation}
and the edge color set
\begin{equation}\label{eqa:555555}
g_{gr}(E(G))=\{k+[f_{gr}(x_iy_j)-1]\cdot d:x_iy_j\in E(G)\}=\{k,k+d,\dots, k+(q-1)\cdot d\}
\end{equation}
Moreover, for each edge $x_iy_j\in E(G)$, we use Eq.(\ref{eqa:graceful-difference-total-labeling-11}) to get the graceful-difference constraint
\begin{equation}\label{eqa:555555}
{
\begin{split}
&\big | |g_{gr}(x_i)-g_{gr}(y_j)|-g_{gr}(x_iy_j)\big |\\
=&\big | k+[f_{gr}(y_j)-1]\cdot d-f_{gr}(x_i)\cdot d-(k+[f_{gr}(x_iy_j)-1]\cdot d)\big |\\
=&[\min f(Y)-\max f(X)-1]\cdot d
\end{split}}
\end{equation} Thereby, we claim that $g_{gr}$ is a graceful-difference $(k,d)$-total labeling of $G$ by Definition \ref{defn:kd-w-type-colorings}.

\textbf{Step A-4.} Define a total coloring $h_{gr}$ for $G^*$ in the following substeps:

\textbf{Step A-4.1.} Recolor the vertices and edges of $G$ as:

A-4.1. $h_{gr}(x_i)=g_{gr}(x_i)\cdot d$ for $x_i\in X$;

A-4.2. $h_{gr}(y_j)=g_{gr}(y_j)+m\cdot d$ for $y_j\in Y$; and

A-4.3. $h_{gr}(x_iy_j)=g_{gr}(x_iy_j)+m\cdot d$ for each edge $x_iy_j\in E(G)$.

Then, there is the graceful-difference constraint
\begin{equation}\label{eqa:555555}
{
\begin{split}
\big | |h_{gr}(x_i)-h_{gr}(y_j)|-h_{gr}(x_iy_j)\big |=&\big | |g_{gr}(y_j)+m\cdot d-g_{gr}(x_i)|-g_{gr}(x_iy_j)-m\cdot d\big |\\
=&[\min f(Y)-\max f(X)-1]\cdot d
\end{split}}
\end{equation} for each edge $x_iy_j\in E(G)$. Notice that the edge color set
$$h_{gr}(E(G))=\{k+m\cdot d,k+(m+1)\cdot d,\dots, k+(q+m-1)\cdot d\}$$

\textbf{Step A-4.2.} Color edges $y_jv_{j,r}$ for $v_{j,r}\in L(y_j)$ with $r\in [1,b_j]$ and $j\in[1,t]$.

Since each edge $y_1v_{1,r}$ generated by the leaves of $L(y_1)$ is colored with $h_{gr}(y_1v_{1,r})=k+(r-1)\cdot d$ for $r\in [1,b_1]$, and $h_{gr}(y_1v_{1,r})=k+(b_1-1)\cdot d$; we have a formula \begin{equation}\label{eqa:555555}
h_{gr}(y_jv_{j,r})=k+(r-1)\cdot d+d\sum ^{j-1}_{i=1}b_i,~r\in [1,b_j]
\end{equation} and
$$
h_{gr}(y_jv_{j,b_j})=k+(b_j-1)\cdot d+d\sum ^{j-1}_{i=1}b_i=k-d+d\sum ^{j}_{i=1}b_i
$$ The last edge $y_tv_{t,b_t}$ is colored with
$$
h_{gr}(y_tv_{t,b_t})=k+(b_t-1)\cdot d+d\sum ^{t-1}_{i=1}b_i=k-d+d\sum ^{t}_{i=1}b_i=k+(B-1)\cdot d
$$

\textbf{Step A-4.3.} Color edges $x_iu_{i,j}$ for $u_{i,j}\in L(x_i)$ with $j\in [1,a_i]$ and $i\in[1,s]$.

Since $h_{gr}(x_su_{s,j})=k+(j-1)\cdot d+B\cdot d$ for $j\in [1,a_s]$, $h_{gr}(x_su_{s,a_s})=k+(a_s-1)\cdot d+B\cdot d$, we have a formula
\begin{equation}\label{eqa:555555}
h_{gr}(x_iu_{i,j})=k+(j-1)\cdot d+B\cdot d+d\sum^{s-i}_{r=1}a_{s-r+1},~j\in [1,a_i],~i\in[1,s-1]
\end{equation} The last edge $x_1u_{1,a_1}$ is colored with
$$
h_{gr}(x_1u_{1,a_1})=k+(a_1-1)\cdot d+B\cdot d+d\sum^{s-1}_{r=1}a_{s-r+1}=k+(A+B-1)\cdot d=k+(m-1)\cdot d
$$

\textbf{Step A-5.} Let $M_{gr}=[\min f(Y)-\max f(X)-1]$. Color each added leaf in the following substeps:

\textbf{Step A-5.1.} Color each added leaf $u_{i,j}\in L(x_i)$ for $j\in [1,a_i]$ and $i\in[1,s]$ with
$$
h_{gr}(u_{i,j})=h_{gr}(x_i)+h_{gr}(x_iu_{i,j})+M_{gr}
$$ so we have the graceful-difference constraints
$$
\big | |h_{gr}(u_{i,j})-h_{gr}(x_i)|-h_{gr}(x_iu_{i,j})\big |=M_{gr}\cdot d,\quad x_iu_{i,j}\in E(G^*)
$$ hold true.

\textbf{Step A-5.2.} Color each added leaf $v_{j,r}\in L(y_j)$ with $r\in [1,b_j]$ and $j\in[1,t]$ with $M_{gr}+h_{gr}(v_{j,r})=h_{gr}(y_j)-h_{gr}(y_jv_{j,r})$, that are the graceful-difference constraints
$$\big | |h_{gr}(y_j)-h_{gr}(v_{j,r})|-h_{gr}(y_jv_{j,r})\big |=M_{gr}\cdot d,~y_jv_{j,r}\in E(G^*)$$

\textbf{Step A-6.} Return the graceful-difference $(k,d)$-total coloring $h_{gr}$ of $G^*$.

\vskip 0.4cm

The RLA-algorithm-A enables us to get the following result:

\begin{thm}\label{thm:trees-admits-graceful-difference-k-d}
$^*$ Each tree admits a graceful-difference $(k,d)$-total coloring and an odd-edge graceful-difference $(k,d)$-total coloring.
\end{thm}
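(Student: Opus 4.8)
The plan is to split on the diameter of $T$: stars are handled by hand, and for larger diameter one reuses the leaf-peeling/leaf-rebuilding scheme already behind Theorem~\ref{thm:graceful-difference-algorithm} and RLA-algorithm-A, run a second time to cover the odd-edge version. Both colorings asked for live on bipartite connected graphs, which every tree is, so the only structural input needed is the decomposition $T_{i+1}=T_i-L_{eaf}(T_i)$ together with the fact that each $L_{eaf}(T_i)$ is nonempty.

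First I would dispose of the degenerate cases. A tree with $D(T)\le 2$ is a star $K_{1,n}$ (including $K_{1,1}$ and the trivial $K_1$). With center $u_0$ and leaves $w_1,\dots,w_n$, setting $f(u_0)=0$ and $f(w_j)=f(u_0w_j)=k+(j-1)d$ gives a graceful-difference $(k,d)$-total coloring with constant $c=0$ and edge colour set $S_{q-1,k,0,d}$; setting $h(u_0)=0$ and $h(w_j)=h(u_0w_j)=k+(2j-1)d$ gives, since $h(u_0w_j)=|h(u_0)-h(w_j)|$, an odd-graceful $(k,d)$-total coloring, hence in particular an odd-edge graceful-difference $(k,d)$-total coloring (its constant is $0$) with edge colour set $O_{2q-1,k,d}$.

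For $D(T)\ge 3$ the graceful-difference half is exactly Theorem~\ref{thm:graceful-difference-algorithm}, which already delivers (at least $2^m$ of) these colorings, with $m+1=\lceil D(T)/2\rceil$. For the odd-edge half I would observe that any \emph{odd-graceful} $(k,d)$-total coloring $f$ — one with $f(uv)=|f(u)-f(v)|$ and edge colour set $O_{2q-1,k,d}$ — automatically satisfies $\big| |f(u)-f(v)|-f(uv)\big|=0$, so it \emph{is} an odd-edge graceful-difference $(k,d)$-total coloring; thus it suffices to produce an odd-graceful $(k,d)$-total coloring of $T$. I would get one by the odd counterparts of the Leaf-adding $x$SL$y$LS-graceful and $x$LS$y$SL-graceful algorithms: peel $T=T_1$ to a star by $T_{i+1}=T_i-L_{eaf}(T_i)$, colour $T_{m+1}$ by the star coloring $h$ above, and at the stage rebuilding $T_i=T_{i+1}+L_{eaf}(T_i)$ with $\ell=|L_{eaf}(T_i)|$ new leaves, keep the $X$-vertex colours of $T_{i+1}$, shift every $Y$-vertex colour and every previously used edge colour by $2\ell d$ (so the old odd edge colours $k+d,k+3d,\dots,k+(2q_i-1)d$, before adding, slide up and free $k+d,k+3d,\dots,k+(2\ell-1)d$), colour the $\ell$ new pendant edges with those freed values in one of the two orders, and colour each new leaf $w'$ on pendant edge $uw'$ by whichever of $h(u)-h(uw')$ or $h(u)+h(uw')$ keeps $h(w')$ in the admissible range, which forces $h(uw')=|h(u)-h(w')|$. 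Because $Y$-colours and edge colours move together while $X$-colours stay fixed, the relation $h(uv)=|h(u)-h(v)|$ is preserved on the old edges, and the shifted old colours together with the new ones form exactly $O_{2q_i-1,k,d}$ at each stage; induction over $i$ gives an odd-graceful, hence odd-edge graceful-difference, $(k,d)$-total coloring of $T$.

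The routine but genuinely fiddly part — which I expect to be the main obstacle — is the rebuild bookkeeping: writing the explicit formulas that distribute the odd values $k+(2i-1)d$ among the individual new pendant edges at each parent vertex (the odd analogues of the SL/LS substeps, now with increment $2d$), tracking how leaves attached to $X$-vertices versus $Y$-vertices re-sort the bipartition at the next level, and checking that every leaf colour produced is non-negative and lands in the admissible set, so that the edge colour set is precisely $O_{2q_i-1,k,d}$ throughout. Once this is verified for a single rebuild step, the induction on the peeling levels closes the argument, and the star cases treated at the start complete the statement for all trees.
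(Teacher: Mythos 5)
Your proposal is correct, and it rests on the same leaf-adding technique the paper uses throughout, but it does not follow the paper's stated route for this particular theorem. The paper obtains the result in one line from RLA-algorithm-A: start from a set-ordered graceful labeling, pass to a graceful-difference total labeling via the $X$-, $Y$- and edge-dual transformations, parameterize it to a graceful-difference $(k,d)$-total labeling, and extend over one round of randomly added leaves; the odd-edge half is left implicit (an unwritten odd variant of the same algorithm with edge colors in $O_{2q-1,k,d}$). You instead run the leaf-peeling induction $T_{i+1}=T_i-L_{eaf}(T_i)$ down to a star: the graceful-difference half is delegated to Theorem \ref{thm:graceful-difference-algorithm} together with your explicit star coloring, and the odd-edge half is obtained by building an odd-graceful $(k,d)$-total coloring (the increment-$2d$ analogue of the Leaf-adding graceful algorithms) and observing that any odd-graceful $(k,d)$-total coloring satisfies $\big||h(u)-h(v)|-h(uv)\big|=0$ with edge color set $O_{2q-1,k,d}$, hence is an odd-edge graceful-difference $(k,d)$-total coloring with constant $0$. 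Your route is in fact somewhat more self-contained: a single pass of RLA-algorithm-A only reaches trees obtained by one round of leaf addition to a tree admitting a set-ordered graceful labeling, so the paper's one-liner implicitly relies on exactly the iterated leaf-adding argument you invoke, and your reduction of the odd-edge case to odd-gracefulness cleanly replaces the odd variant the paper never writes down. The bookkeeping you defer is genuinely routine: the sign choices are forced (a leaf at an $X$-vertex must take $h(u)+h(uw\,')$ to land in the $k$-shifted class, a leaf at a shifted $Y$-vertex must take $h(v)-h(vw\,')$ to land on a multiple of $d$, and nonnegativity follows since the shift $2\ell d$ exceeds every freed pendant-edge color minus $k$), and the distribution of the freed odd values among parents mirrors the paper's proof of Theorem \ref{thm:adding-leaves-keep-total-coloring} with $d$ replaced by $2d$, so no essential gap remains.
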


An example shown in Fig.\ref{fig:4-graceful-difference-leaves} is for illustrating the RLA-algorithm-A, and we can see:

(a) A connected graph $G$ admits a set-ordered graceful-difference total labeling $f_{gr}$ holding the graceful-difference constraint
$$
\big | |f_{gr}(x)-f_{gr}(y)|-f_{gr}(xy)\big |=0,~xy\in E(G)
$$

(b) a parameterized connected graph $G_{\textrm{p}}$ admitting a graceful-difference $(k,d)$-total labeling $g_{gr}$ with the graceful-difference constraint
$$\big | |g_{gr}(x)-g_{gr}(y)|-g_{gr}(xy)\big |=0,~xy\in E(G_{\textrm{p}})
$$

(c) adding leaves to $G_{\textrm{p}}$ produces a connected graph $G_{\textrm{p-leaf}}$;

(d) a connected graph $Gh_{\textrm{p-leaf}}$ admits a graceful-difference $(k,d)$-total coloring $h_{gr}$ with the graceful-difference constraint
$$\big | |h_{gr}(x)-h_{gr}(y)|-h_{gr}(xy)\big |=2d,~
$$

\begin{figure}[h]
\centering
\includegraphics[width=16.4cm]{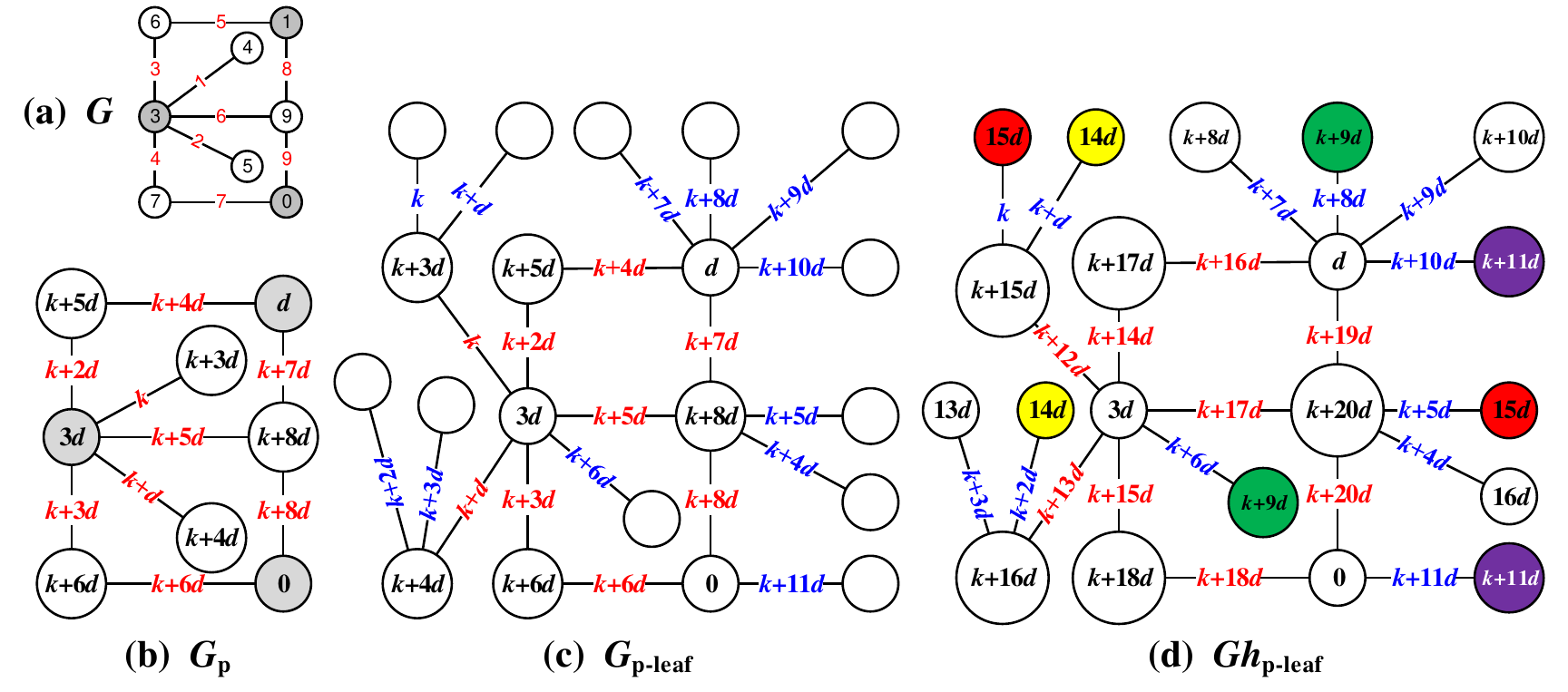}\\
\caption{\label{fig:4-graceful-difference-leaves} {\small Adding leaves to a connected graph $G$ makes a large scale connected graph $Gh_{\textrm{p-leaf}}$ admitting a graceful-difference $(k,d)$-total coloring.}}
\end{figure}

\begin{thm}\label{thm:666666}
$^*$ If a connected graph $G$ admits a set-ordered graceful labeling, then $G$ admits a graceful-difference $(k,d)$-total coloring and an odd-edge graceful-difference $(k,d)$-total coloring.
\end{thm}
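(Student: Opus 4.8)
The plan is to obtain both colorings directly from \textbf{RLA-algorithm-A} by running it with $m=0$, i.e.\ adding no leaves, so that $G^{*}=G$ and Steps~A-4 through A-6 are vacuous. First I would use the hypothesis: $G$ is a connected bipartite $(p,q)$-graph with a set-ordered graceful labeling $f$, so $V(G)=X\cup Y$ with $X=\{x_1,\dots,x_s\}$, $Y=\{y_1,\dots,y_t\}$, $0=f(x_1)<\cdots<f(x_s)<f(y_1)<\cdots<f(y_t)=q$, $f(E(G))=[1,q]$, and each edge $x_iy_j$ satisfies $f(x_iy_j)=f(y_j)-f(x_i)$. Applying Step~A-2 I form the dual total labeling $f_{gr}$ by $f_{gr}(x_i)=\max f(X)-f(x_i)$, $f_{gr}(y_j)=q+\min f(Y)-f(y_j)$, $f_{gr}(x_iy_j)=q+1-f(x_iy_j)$; the cancellation carried out in Eq.(\ref{eqa:graceful-difference-total-labeling-11}) gives $\big|\,|f_{gr}(u)-f_{gr}(v)|-f_{gr}(uv)\big|=\min f(Y)-\max f(X)-1$ on every edge, a fixed non-negative constant, so $f_{gr}$ is a graceful-difference total labeling of $G$.

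Then I would parameterize exactly as in Step~A-3: put $g_{gr}(x_i)=d\cdot f_{gr}(x_i)$, $g_{gr}(y_j)=k+[f_{gr}(y_j)-1]d$, $g_{gr}(x_iy_j)=k+[f_{gr}(x_iy_j)-1]d$. Since $f_{gr}$ maps the edge set bijectively onto $[1,q]$, the edge color set is $S_{q-1,k,0,d}$; since $f_{gr}(x_i)\le\max f(X)<\min f(Y)\le f_{gr}(y_j)$ one gets $g_{gr}(X)\subseteq S_{m,0,0,d}$, $g_{gr}(Y)\subseteq S_{q-1,k,0,d}$, and $|g_{gr}(u)-g_{gr}(v)|=g_{gr}(y_j)-g_{gr}(x_i)$ on each edge, whence $\big|\,|g_{gr}(u)-g_{gr}(v)|-g_{gr}(uv)\big|=[\min f(Y)-\max f(X)-1]d$ is constant. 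By Definition~\ref{defn:kd-w-type-colorings} this is a graceful-difference $(k,d)$-total coloring of $G$ (in fact a labeling, as $f_{gr}$ is injective on $V(G)$), which settles the first assertion.

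For the odd-edge version I would keep $f_{gr}$ but ``double'' the scale so that the edge colors become the odd terms. Set $h(x_i)=2d\cdot f_{gr}(x_i)$, $h(x_iy_j)=k+[2f_{gr}(x_iy_j)-1]d$, $h(y_j)=k+[2f_{gr}(y_j)-2]d$. Because $f_{gr}$ still hits every value of $[1,q]$ on the edges, $h(E(G))=\{k+d,k+3d,\dots,k+(2q-1)d\}=O_{2q-1,k,d}$; the inequality $\max f(X)<\min f(Y)$ again forces $h(x_i)\le h(y_j)$ and yields $h(X)\subseteq S_{m,0,0,d}$ and $h(Y)\subseteq S_{2q-2,k,0,d}$ (the $Y$-values being even multiples of $d$ above $k$, hence disjoint from the odd edge values); and
\[
|h(y_j)-h(x_i)|-h(x_iy_j)=\big[2\big(f_{gr}(y_j)-f_{gr}(x_i)-f_{gr}(x_iy_j)\big)-1\big]d=\big[2(\min f(Y)-\max f(X)-1)-1\big]d,
\]
so the graceful-difference quantity is the constant $c=\big|2(\min f(Y)-\max f(X)-1)-1\big|d$. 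By Definition~\ref{defn:odd-edge-W-type-total-labelings-definition}(vi), $h$ is an odd-edge graceful-difference $(k,d)$-total coloring of $G$.

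There is no substantial obstacle here: the statement is essentially the $m=0$ specialization of RLA-algorithm-A already established in the text, the only genuinely new ingredient being the doubling trick for the odd-edge variant. The points that need care are the bookkeeping that the three target color sets sit inside the prescribed sets $S_{m,0,0,d}$, $S_{n,k,0,d}$, $O_{2q-1,k,d}$, and the verification that the inner absolute value in the graceful-difference constraint may be removed uniformly on all edges — which is exactly where the set-ordered hypothesis $\max f(X)<\min f(Y)$ is used.
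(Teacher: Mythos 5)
Your proposal is correct, and for the graceful-difference half it is exactly the paper's (implicit) argument: the theorem is stated right after RLA-algorithm-A, and its intended justification is nothing more than Steps A-2 and A-3 of that algorithm applied to $G$ itself, i.e.\ your $m=0$ specialization. The odd-edge half is where you genuinely add something: the paper never exhibits a construction for odd-edge graceful-difference $(k,d)$-total colorings (the definition is given, but no algorithm or proof), and your doubling trick $h(x_i)=2d\,f_{gr}(x_i)$, $h(y_j)=k+[2f_{gr}(y_j)-2]d$, $h(x_iy_j)=k+[2f_{gr}(x_iy_j)-1]d$ is a correct way to supply it, since $f_{gr}$ is bijective from $E(G)$ onto $[1,q]$ and the set-ordered property removes the inner absolute value uniformly. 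One small refinement worth recording: for any set-ordered graceful labeling the edge colored $1$ forces $\min f(Y)=\max f(X)+1$, so your two constants are in fact $0$ and $d$; in particular, in the odd-edge computation the quantity $|h(y_j)-h(x_i)|-h(x_iy_j)$ equals $-d$, so the outer absolute value in your final formula is not merely cosmetic but essential, and your statement of the constant as $\bigl|2(\min f(Y)-\max f(X)-1)-1\bigr|d$ handles this correctly. The only loose end, inherited from the paper itself, is the labeling-versus-coloring distinction in Definition \ref{defn:odd-edge-W-type-total-labelings-definition} (which hinges on whether $|h(V(G))|<p$); the paper uses the terms interchangeably in these theorems, so this does not affect the verdict.
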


\begin{thm}\label{thm:666666}
(i) $^*$ A connected graph $G$ admits a graceful-difference $(k,d)$-total coloring if and only if there exists a tree $T$ admitting a graceful-difference $(k,d)$-total coloring such that the result of vertex-coinciding each group of vertices colored the same colors into one vertex is just $G$.

(ii) $^*$ A connected bipartite graph $H$ admits an odd-edge graceful-difference $(k,d)$-total coloring if and only if there exists a tree $T^*$ admitting an odd-edge graceful-difference $(k,d)$-total coloring such that the result of vertex-coinciding each group of vertices colored the same color in $T^*$ into one vertex is just $H$.
\end{thm}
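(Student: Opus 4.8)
The plan is to argue exactly as in Theorem~\ref{thm:connected-tree-k-d-graceful-total-coloring}, only with the graceful $(k,d)$-constraint there replaced by the graceful-difference $(k,d)$-constraint of Definition~\ref{defn:kd-w-type-colorings} for part~(i), and by its odd-edge analogue of Definition~\ref{defn:odd-edge-W-type-total-labelings-definition} for part~(ii). Everything rests on one observation: a graceful-difference $(k,d)$-total coloring is pinned down by two \emph{edge-local} requirements, the per-edge identity $\big | |h(u)-h(v)|-h(uv)\big |=c$ and the edge-color-set condition $h(E(G))=S_{q-1,k,0,d}$, the latter depending only on the multiset of edge colors. (In part~(ii) everything below reads verbatim once ``odd-edge'' is prefixed, $S_{q-1,k,0,d}$ is replaced by the odd edge-color set $O_{2q-1,k,d}$, and $G,T$ are replaced by $H,T^{*}$.) Neither requirement is disturbed by the vertex-coinciding operation $\odot$ or the vertex-splitting operation $\wedge$: these create and destroy no edges, never change the color of an edge, and are consistent on vertex colors, since splitting replicates the color of $x$ onto the copies $x',x''$, while coinciding merges two vertices already carrying the same color.

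\textbf{Sufficiency ($\Leftarrow$).} Suppose $T$ carries a graceful-difference $(k,d)$-total coloring $g$ such that coinciding each monochromatic group of its vertices produces $G$. Writing $G=\odot_m(T)$ as an iterated vertex-coinciding, every single step keeps each edge together with its $g$-color and hence keeps both the per-edge graceful-difference identity and the edge-color set; therefore the coloring $\bar g$ on $G$ obtained by giving each color class its common color and each edge its $g$-color is again a graceful-difference $(k,d)$-total coloring, and in part~(ii) the bipartition of $H$ survives because the color range $S_{m,0,0,d}$ and the odd range $O_{2q-1,k,d}$ remain on opposite sides of the quotient. So $G$ (resp.\ $H$) admits the coloring claimed.

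\textbf{Necessity ($\Rightarrow$).} Assume now that $G$ admits a graceful-difference $(k,d)$-total coloring $f$. By the structural result on the vertex-splitting tree-operation — each connected $(p,q)$-graph vertex-splits into a tree $T$ on $q+1$ vertices with $G=\odot_m(T)$ — pick such a $T$ (with $T=G$ if $G$ is already a tree). Extend $f$ to $T$ by assigning to every vertex created when a vertex $x$ is split the color $f(x)$, and leaving all edge colors unchanged. Since the tree is bipartite and the split copies stay in the part of their parent, this is a genuine total coloring of $T$; its edge-color set is still $f(E(G))=S_{q-1,k,0,d}$ and the per-edge graceful-difference identity holds word for word, so it is a graceful-difference $(k,d)$-total coloring of $T$. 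The fibers of the quotient map $T\to G$ are exactly the groups of vertices that share a replicated color, so coinciding these monochromatic groups returns $G$ (resp.\ $H$), which completes the argument.

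The step I expect to fight with is the bookkeeping in the necessity direction: one must be sure that the monochromatic groups that are coincided are \emph{precisely} the vertex-splitting fibers, and not a strictly coarser partition caused by an accidental color collision between two vertices lying in different fibers. I would handle this either by reading the conclusion at the level of the split structure — equivalently, through the color-preserving graph homomorphism $T\to G$ as in Theorem~\ref{thm:connected-tree-k-d-graceful-total-coloring}, which makes the ``iff'' immediate — or, if a literal color-class statement is insisted upon, by re-choosing the coloring of the tree using Theorem~\ref{thm:trees-admits-graceful-difference-k-d} together with the Leaf-adding graceful-difference algorithm of Theorem~\ref{thm:graceful-difference-algorithm}, so that its color classes match the fibers. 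For part~(ii) the only additional checks are that the odd-edge edge-color-set condition $O_{2q-1,k,d}$ and bipartiteness are preserved; both are automatic, since a vertex-split of a bipartite graph is a bipartite tree, no edge or edge-color is moved, and $H$ is bipartite by hypothesis.
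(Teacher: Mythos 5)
The paper states this theorem without any proof (it is one of the starred results placed right after the RLA-algorithm-A), so there is no written argument to compare against line by line; your route is evidently the intended one: push the coloring forward along the vertex-coinciding operation for sufficiency, and lift it along the vertex-splitting tree-operation $G=\odot_m(T)$ for necessity, exactly as in Theorem \ref{thm:connected-tree-k-d-graceful-total-coloring}. The sufficiency half is sound as you write it: coinciding equally colored vertices creates and deletes no edges and alters no vertex or edge color, hence preserves both the per-edge constraint $\big ||h(u)-h(v)|-h(uv)\big |=c$ and the edge color set $S_{q-1,k,0,d}$ (resp. $O_{2q-1,k,d}$ in part (ii)), the only tacit assumption being that the coincided graph is the simple graph $G$ (resp. $H$), which the hypothesis grants.

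The step you say you "expect to fight with" is, however, a genuine gap in the necessity direction, and neither of your patches closes it. Definition \ref{defn:kd-w-type-colorings} explicitly allows $f(x)=f(y)$ for distinct vertices, so the coloring of $G$ that you lift to the split tree $T$ may have maximal monochromatic groups strictly coarser than the splitting fibers; coinciding \emph{each} group of equally colored vertices then identifies distinct vertices of $G$ and returns a proper quotient of $G$, not $G$. For the literal statement you need both that vertices inside one fiber share a color and that vertices in different fibers do not, i.e. the induced coloring of $G$ must be vertex-distinguishing, and nothing in the hypothesis provides this. Your first patch (reading the conclusion through the colored graph homomorphism $T\rightarrow G$) proves the weaker statement of Theorem \ref{thm:connected-tree-k-d-graceful-total-coloring}, not the theorem as worded; your second patch (re-choosing the tree's coloring via Theorem \ref{thm:trees-admits-graceful-difference-k-d} or the algorithms of Theorem \ref{thm:graceful-difference-algorithm}) gives no control forcing the color classes of the new coloring to coincide with the fibers — those constructions can just as easily split one fiber into several colors, or merge two fibers, and then coinciding the monochromatic groups again fails to return $G$. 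So you must either prove the missing claim that every connected graph admitting a graceful-difference (resp. odd-edge graceful-difference) $(k,d)$-total coloring admits one lifting to a split tree whose color classes are exactly the fibers (equivalently, a vertex-distinguishing such coloring of $G$), or state explicitly that you prove the theorem under the weaker reading "$G$ is obtained from $T$ by coinciding certain groups of equally colored vertices", i.e. $G=\odot_m(T)$ with colors preserved, under which your original lifting argument is already complete.
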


\subsection{Parameterized edge-difference total labelings/colorings}

\vskip 0.4cm

\noindent $^*$ \textbf{RLA-algorithm-B for the edge-difference $(k,d)$-total coloring}.

\textbf{Input:} A connected bipartite $(p,q)$-graph $G$ admitting a set-ordered graceful labeling $f$.

\textbf{Output:} A connected bipartite $(p+m,q+m)$-graph $G^*$ admitting an edge-difference $(k,d)$-total coloring, where $G^*$ is the result of adding randomly $m$ leaves to $G$.

\textbf{Step B-1.} \textbf{Initialization.} Since a connected bipartite $(p,q)$-graph $G$ admitting a set-ordered graceful labeling $f$, we have the vertex set $V(G)=X\cup Y$ with $X\cap Y=\emptyset$, where
$$
X=\{x_1,x_2,\dots ,x_s\},~Y=\{y_1,y_2,\dots ,y_t\},~s+t=p=|V(G)|
$$ such that
$$
0=f(x_1)<f(x_2)<\cdots <f(x_s)<f(y_1)<f(y_2)<\cdots <f(y_t)=q
$$ with the edge color set
$$f(E(G))=\{f(x_iy_j)=f(y_j)-f(x_i):x_iy_j\in E(G)\}=[1,q]$$

Adding randomly $a_i$ new leaves $u_{i,j}\in L(x_i)$ to each vertex $x_i$ by adding new edges $x_iu_{i,j}$ for $j\in [1,a_i]$ and $i\in[1,s]$, and adding randomly $b_j$ new leaves $v_{j,r}\in L(y_j)$ to each vertex $y_j$ by adding new edges $y_jv_{j,r}$ for $r\in [1,b_j]$ and $j\in[1,t]$, it is allowed that some $a_i=0$ or some $b_j=0$. The resultant graph is denoted as $G^*$. Let $A=\sum ^s_{i=1}a_i$ and $B=\sum ^t_{j=1}b_j$, so $m=A+B$.

\textbf{Step B-2.} Define an edge-difference total labeling $f_{ed}$ for $G$ in the following way:

B-2.1. \textbf{$X$-dual transformation.} Set
$$
f_{ed}(x_i)=\max f(X)+\min f(X)-f(x_i)=\max f(X)-f(x_i),~x_i\in X
$$

B-2.2. \textbf{$Y$-dual transformation.} Set
$$
f_{ed}(y_j)=\max f(Y)+\min f(Y)-f(y_j)=q+\min f(Y)-f(y_j),~y_j\in Y
$$

B-2.3. $f_{ed}(x_iy_j)=f(x_iy_j)$ for each edge $x_iy_j\in E(G)$.\\
Since the edge-difference constraint
\begin{equation}\label{eqa:edge-difference-total-labelings}
{
\begin{split}
f_{ed}(x_iy_j)+ |f_{ed}(x_i)-f_{ed}(y_j)|&=f_{ed}(x_iy_j)+f_{ed}(y_j)-f_{ed}(x_i)\\
&=f(x_iy_j)+ q+\min f(Y)-f(y_j)-[\max f(X)-f(x_i)] \\
&=f(x_iy_j)+q+\min f(Y)-\max f(X)-[f(y_j)-f(x_i)]\\
&=q+\min f(Y)-\max f(X),\quad x_iy_j\in E(G)
\end{split}}
\end{equation} hold true, so we claim that that $f_{ed}$ is a set-ordered edge-difference total labeling of $G$.

\textbf{Step B-3.} \textbf{Parameterizing the graceful-difference total labeling.} For integers $k\geq 0,d\geq 1$, we define a $(k,d)$-constraint labeling $g_{ed}$ for $G$ as follows:

B-3.1. $g_{ed}(x_i)=f_{ed}(x_i)\cdot d$ for $x_i\in X$;

B-3.2. $g_{ed}(y_j)=k+[f_{ed}(y_j)-1]\cdot d$ for $y_j\in Y$;

B-3.3. $g_{ed}(x_iy_j)=k+[f_{ed}(x_iy_j)-1]\cdot d$ for each edge $x_iy_j\in E(G)$. \\
So, the vertex color set
$$
g_{ed}(V(G))=\{f(x_i)\cdot d:x_i\in X\}\cup \{k+[f_{ed}(y_j)-1]\cdot d:y_j\in Y\}=g_{ed}(X)\cup g_{ed}(Y)
$$
and the edge color set
$$
g_{ed}(E(G))=\{k+[f_{ed}(x_iy_j)-1]\cdot d:x_iy_j\in E(G)\}=\{k,k+d,\dots, k+(q-1)\cdot d\}
$$

Let $M_{ed}=q+\min f(Y)-\max f(X)$ shown in Eq.(\ref{eqa:edge-difference-total-labelings}). We have that the edge-difference constraint
\begin{equation}\label{eqa:555555}
{
\begin{split}
g_{ed}(x_iy_j)+|g_{ed}(x_i)-g_{ed}(y_j)|=&k+[f_{ed}(x_iy_j)-1]\cdot d+| k+[f_{ed}(y_j)-1]\cdot d-f_{ed}(x_i)\cdot d |\\
=&2(k-d)+f_{ed}(x_iy_j)\cdot d+[f_{ed}(y_j)-f_{ed}(x_i)]\cdot d\\
=&2(k-d)+M_{ed}\cdot d
\end{split}}
\end{equation} for each edge $x_iy_j\in E(G)$ holds true. Thereby, we claim that $g_{ed}$ is an edge-difference $(k,d)$-total labeling of $G$ according to Definition \ref{defn:kd-w-type-colorings}.

\textbf{Step B-4.} Define a total coloring $h_{ed}$ for $G^*$ in the following substeps:

\textbf{Step B-4.1.} Recolor the vertices and edges of $G$ by setting

B-4.1. $h_{ed}(x_i)=g_{ed}(x_i)$ for $x_i\in X\subset V(G)$;

B-4.2. $h_{ed}(y_j)=g_{ed}(y_j)+m\cdot d$ for $y_j\in Y\subset V(G)$; and

B-4.3. $h_{ed}(x_iy_j)=g_{ed}(x_iy_j)$ for each edge $x_iy_j\in E(G)$.

Then, each edge $x_iy_j\in E(G)$ holds the edge-difference constraint
\begin{equation}\label{eqa:para-edge-difference-total-coloring}
{
\begin{split}
h_{ed}(x_iy_j)+|h_{ed}(y_j)-h_{ed}(x_i)|&=g_{ed}(x_iy_j)+g_{ed}(y_j)+m\cdot d-g_{ed}(x_i)\\
&=2(k-d)+(M_{ed}+m)\cdot d
\end{split}}
\end{equation} true.

\textbf{Step B-4.2.} Color edges $x_iu_{i,j}$ for $u_{i,j}\in L(x_i)$ with $j\in [1,a_i]$ and $i\in[1,s]$.
Since
$$h_{ed}(x_1u_{1,j})=k+j\cdot d+(q-1)\cdot d,~j\in [1,a_1],\textrm{ and }h_{ed}(x_1u_{1,a_1})=k+(q+a_1-1)\cdot d
$$ we have a formula
\begin{equation}\label{eqa:555555}
h_{ed}(x_iu_{i,j})=k+j\cdot d+(q-1)\cdot d+d\sum^{i-1}_{r=1}a_{r},~j\in [1,a_i],~i\in[1,s]
\end{equation}
The last edge $x_su_{s,a_s}$ is colored with
$$
h_{ed}(x_su_{s,a_s})=k+a_s\cdot d+(q-1)\cdot d+d\sum^{s-1}_{r=1}a_{r}=k+(A+B-1)\cdot d=k+(q+A-1)\cdot d
$$

\textbf{Step B-4.3.} Color edges $y_jv_{j,r}$ for $v_{j,r}\in L(y_j)$ with $r\in [1,b_j]$ and $j\in[1,t]$.
Because of $h_{ed}(y_tv_{t,r})=k+r\cdot d+(q+A-1)\cdot d$ for $r\in [1,b_t]$, and $h_{ed}(y_1v_{1,r})=k+b_t\cdot d+(q+A-1)\cdot d$. There is a formula
\begin{equation}\label{eqa:555555}
h_{ed}(y_jv_{j,r})=k+r\cdot d+(q+A-1)\cdot d+d\sum ^{j-1}_{i=1}b_{t-i+1},~r\in [1,b_j],~j\in[1,t]
\end{equation} as well as
$$
h_{ed}(y_jv_{j,b_j})=k+b_j\cdot d+(q+A-1)\cdot d+d\sum ^{j-1}_{i=1}b_{t-i+1}=k+(q+A-1)\cdot d+d\sum ^{j}_{i=1}b_{t-i+1}
$$ The last edge $x_su_{s,a_s}$ is colored with
$$
h_{ed}(y_1v_{1,b_1})=k+b_1\cdot d+(q+A-1)\cdot d+d\sum ^{t-1}_{i=1}b_{t-i+1}=k+(q+m-1)\cdot d
$$

\textbf{Step B-4.4.} Let $M^*=2(k-d)+(M_{ed}+m)\cdot d$ defined in Eq.(\ref{eqa:para-edge-difference-total-coloring}). Color added leaves $u_{i,j}\in L(x_i)$ for $j\in [1,a_i]$ and $i\in[1,s]$ by  $$h_{ed}(u_{i,j})=M^*-h_{ed}(x_iu_{i,j})+h_{ed}(x_i),~j\in [1,a_i],~i\in[1,s]$$

\textbf{Step B-4.5.} Color added leaves $v_{j,r}\in L(y_j)$ with $r\in [1,b_j]$ and $j\in[1,t]$ as:
$$h_{ed}(v_{j,r})=h_{ed}(y_j)-[M^*-h_{ed}(y_jv_{j,r})],~r\in [1,b_j],~j\in[1,t]
$$ Also, we get the edge-difference constraint $h_{ed}(y_jv_{j,r})+h_{ed}(y_j)-h_{ed}(v_{j,r})=M^*$.

\textbf{Step B-5.} Return the edge-difference $(k,d)$-total coloring $h_{ed}$ of $G^*$.

\vskip 0.4cm

Fig.\ref{fig:4-edge-difference-leaves} shows us examples for understanding the RLA-algorithm-B:

(a) A connected graph $T$ admits a set-ordered edge-difference total labeling $f_{ed}$ holding the edge-difference constraint
$$
f_{ed}(xy)+|f_{ed}(x)-f_{ed}(y)|=10,~xy\in E(T)
$$

(b) a parameterized connected graph $T_{\textrm{p}}$ admitting a graceful-difference $(k,d)$-total labeling $g_{ed}$ with the edge-difference constraint
$$g_{ed}(xy)+|g_{ed}(x)-g_{ed}(y)|=2k+8d,~xy\in E(T_{\textrm{p}})
$$

(c) adding leaves to $T_{\textrm{p}}$ produces a connected graph $T_{\textrm{p-leaf}}$;

(d) a connected graph $Th_{\textrm{p-leaf}}$ admits a graceful-difference $(k,d)$-total coloring $h_{ed}$ with the edge-difference constraint
$$h_{ed}(xy)+|h_{ed}(x)-h_{ed}(y)|=2k+22d,~
$$

\begin{figure}[h]
\centering
\includegraphics[width=16.4cm]{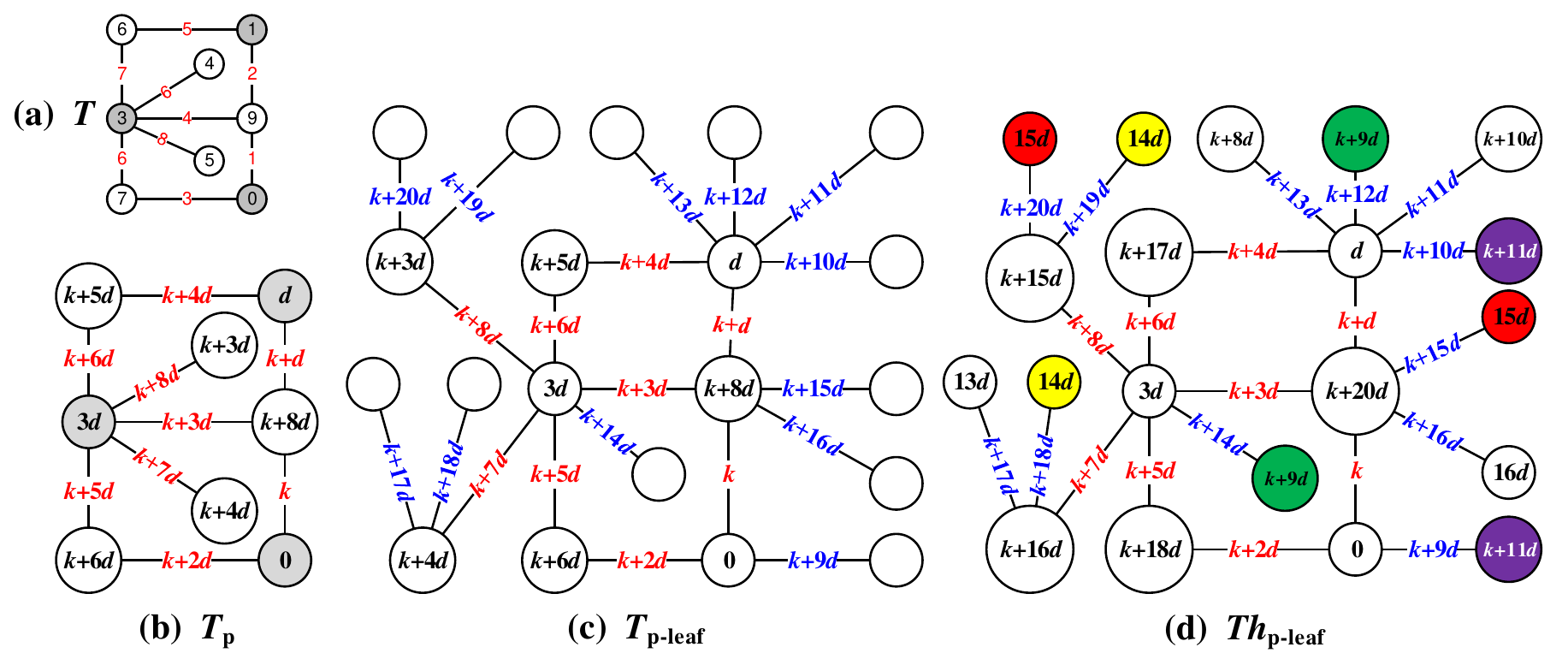}\\
\caption{\label{fig:4-edge-difference-leaves} {\small For understanding the RLA-algorithm-B for the edge-difference $(k,d)$-total coloring.}}
\end{figure}

The RLA-algorithm-B enables us to get the following result:

\begin{thm}\label{thm:edge-difference-kd-trees}
$^*$ Each tree admits an edge-difference $(k,d)$-total coloring and an odd-edge edge-difference $(k,d)$-total coloring.
\end{thm}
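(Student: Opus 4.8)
The plan is to prove the two assertions in parallel, reducing each to the leaf-adding machinery already set up in RLA-algorithm-B and in the proof of Theorem~\ref{thm:edge-difference-algorithm}. For the first assertion, if the tree $T$ has diameter $D(T)\ge 3$ there is nothing new to do: Theorem~\ref{thm:edge-difference-algorithm} already supplies an edge-difference $(k,d)$-total coloring of $T$ (in fact at least $2^{m}$ of them, $m+1=\lceil D(T)/2\rceil$). So the only trees left are those with $D(T)\le 2$, namely $K_1$, $K_2$ and the stars $K_{1,n}$ with $n\ge 2$. For these I would simply write down a coloring by hand: put the centre at colour $0$, the leaves at $k,k+d,\dots ,k+(n-1)d$, and colour each edge with the value forced by the edge-difference constraint, which then comes out constantly equal to $2k+(n-1)d$; one checks at once that the induced edge colour set is exactly $S_{n-1,k,0,d}$. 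Together with Theorem~\ref{thm:edge-difference-algorithm} this settles the edge-difference half.

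Actually, the route I would present so that both halves look identical is an induction on the number $m$ with $m+1=\lceil D(T)/2\rceil$. Peel $T$ by $T_{i+1}=T_i-L_{eaf}(T_i)$ down to the star $T_{m+1}$, colour $T_{m+1}$ directly as above, and then apply the Leaf-adding LS-edge-difference algorithm (or its SL-counterpart) from the proof of Theorem~\ref{thm:edge-difference-algorithm} at each stage $T_{i+1}\mapsto T_i=T_{i+1}+L_{eaf}(T_i)$, using that $|L_{eaf}(T_i)|\ge 2$. This is exactly what RLA-algorithm-B performs in a single batch for a graph carrying a set-ordered graceful labeling; running the leaf-adding part iteratively from the star upward yields an edge-difference $(k,d)$-total coloring of $T=T_1$. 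The containment $h(V(T))\subseteq S_{m,0,0,d}\cup S_{q-1,k,0,d}$ and the identity $h(uv)+|h(u)-h(v)|=M^{*}$ are inherited at each step exactly as in Step~B-4 of RLA-algorithm-B.

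For the odd-edge assertion I would repeat the same induction, but thread the edge palette onto the odd positions throughout. The base case is again a star, now coloured so that its $n$ edges occupy the set $O_{2n-1,k,d}=\{k+d,k+3d,\dots ,k+(2n-1)d\}$ of Definition~\ref{defn:odd-edge-W-type-total-labelings-definition}, with the centre and the leaves chosen to keep $h(uv)+|h(u)-h(v)|$ constant; this is a small finite check respecting the bipartition $S_{m,0,0,d}$ versus $S_{n,k,0,d}\cup O_{2q-1,k,d}$. In the inductive step I would use an odd-edge variant of the Leaf-adding edge-difference algorithm: when the leaf sets $L_{eaf}(x_i)$ and $L_{eaf}(y_j)$ of $T_{i+1}$ are attached, give the new leaf-edges the next unused odd-position colours $k+(2j-1)d$ in the SL/LS order of Theorem~\ref{thm:edge-difference-algorithm}, and then fix each new leaf's colour by $h(\text{leaf})=M^{*}-h(\text{leaf-edge})+h(x_i)$ on the $X$-side and $h(\text{leaf})=h(y_j)-\bigl(M^{*}-h(\text{leaf-edge})\bigr)$ on the $Y$-side, precisely as in Step~B-4.4 and Step~B-4.5 with $M^{*}$ the updated edge-difference constant. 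One then verifies, as in the cited proofs, that the old edges keep the constant, the new leaf-edges all realise it, the total edge colour set is the prescribed odd set $O_{2q-1,k,d}$, and the vertex colours stay inside $S_{m,0,0,d}\cup S_{n,k,0,d}\cup O_{2q-1,k,d}$.

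The main obstacle I expect is confined to this last bookkeeping in the odd-edge case. Unlike the plain edge-difference coloring, the odd-edge version cannot be obtained from it by a single affine rescaling of all colours: doubling every colour preserves the edge-difference constant but pushes the $Y$-vertex colours off the required step-$d$ progression starting at $k$ unless $d\mid k$. Hence the odd palette has to be carried through the recursion from the star onward, and the delicate point is that the block of odd-position colours handed to the leaf-edges in each round is contiguous within $O_{2q-1,k,d}$ and abuts the block already consumed, so that across all rounds no colour is repeated or skipped. Everything else — the edge-difference identity itself, the containment of the vertex colours, and the passage through $T_{m+1},\dots ,T_1$ — transfers essentially verbatim from the arguments already given for Theorem~\ref{thm:edge-difference-algorithm}, RLA-algorithm-B and (in spirit) Theorem~\ref{thm:trees-admits-graceful-difference-k-d}.
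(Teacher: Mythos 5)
Your proposal is correct and follows essentially the same leaf-adding mechanism the paper relies on: the paper gives no detailed proof of this statement beyond invoking RLA-algorithm-B, and your reduction — handle the stars explicitly, then iterate the Leaf-adding LS/SL-edge-difference algorithm from the proof of Theorem~\ref{thm:edge-difference-algorithm} through the peeling $T_{i+1}=T_i-L_{eaf}(T_i)$ — is exactly that induction with the base case written out. If anything, your route is tighter than the paper's one-line justification, since RLA-algorithm-B as stated takes as input a graph with a set-ordered graceful labeling, which an arbitrary tree need not admit, whereas the iterated leaf-adding step only needs an edge-difference $(k,d)$-total coloring of the smaller tree, which your star coloring (centre $0$, leaves $k,k+d,\dots,k+(n-1)d$, constant $2k+(n-1)d$) supplies. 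For the odd-edge half, neither the paper nor your sketch carries out the verification; the bookkeeping you flag is precisely what a complete proof must supply — shift the old $Y$-side colours by $2(A+B)d$ in each round so the constant becomes $2k+2qd$ while the old edges keep the low odd block and the new leaf-edges take the contiguous top block $\{k+(2q\,'+1)d,\dots,k+(2q-1)d\}$, then check that the forced leaf colours are nonnegative multiples of $d$ on the $X$-side and of the form $k$ plus a nonnegative multiple of $d$ on the $Y$-side — and with that choice of shift the estimates do go through, so the gap is one of detail rather than of approach.
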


\begin{thm}\label{thm:edge-difference-kd-total-coloring}
$^*$ If a connected graph $G$ admits a set-ordered graceful labeling, then $G$ admits an edge-difference $(k,d)$-total coloring and an odd-edge edge-difference $(k,d)$-total coloring.
\end{thm}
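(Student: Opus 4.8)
The plan is to recognize that this statement is exactly the degenerate (``no leaves added'', $m=0$) case of the construction already packaged in \textbf{RLA-algorithm-B}: Steps B-2 and B-3 of that algorithm take a set-ordered graceful labeling $f$ of $G$ and output an edge-difference $(k,d)$-total labeling of $G$ \emph{itself}, so no leaf-adding is required, and only a mild ``odd stretch'' of the parameterization is needed for the second assertion.

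First I would record that a set-ordered graceful labeling forces $G$ to be bipartite with bipartition $(X,Y)$, $X=\{x_1,\dots,x_s\}$, $Y=\{y_1,\dots,y_t\}$, and $0=f(x_1)<\cdots<f(x_s)<f(y_1)<\cdots<f(y_t)=q$, $f(E(G))=[1,q]$, so $G$ is a legitimate input to RLA-algorithm-B. Then I would apply the $X$-dual and $Y$-dual transformations of Step B-2,
\[
f_{ed}(x_i)=\max f(X)-f(x_i),\quad f_{ed}(y_j)=q+\min f(Y)-f(y_j),\quad f_{ed}(x_iy_j)=f(x_iy_j),
\]
and verify, as in Eq.(\ref{eqa:edge-difference-total-labelings}), that $f_{ed}(x_iy_j)+|f_{ed}(x_i)-f_{ed}(y_j)|=q+\min f(Y)-\max f(X)$ is constant (the hypothesis $\max f(X)<\min f(Y)$ is what makes $f_{ed}(y_j)-f_{ed}(x_i)$ nonnegative so the absolute value drops out). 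The parameterization of Step B-3, $g_{ed}(x_i)=f_{ed}(x_i)\cdot d$, $g_{ed}(y_j)=k+[f_{ed}(y_j)-1]d$, $g_{ed}(x_iy_j)=k+[f_{ed}(x_iy_j)-1]d$, maps the edge colors onto $S_{q-1,k,0,d}$, keeps $g_{ed}(X)\subseteq S_{m,0,0,d}$ and $g_{ed}(Y)\cup g_{ed}(E(G))\subseteq S_{n,k,0,d}$, and preserves the edge-difference constraint with constant $2(k-d)+[q+\min f(Y)-\max f(X)]\,d$; by Definition~\ref{defn:kd-w-type-colorings} this $g_{ed}$ is an edge-difference $(k,d)$-total coloring of $G$. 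For the odd-edge claim I would keep $f_{ed}$ but replace Step B-3 by the odd parameterization $h(x_i)=2d\cdot f_{ed}(x_i)$, $h(y_j)=k+[2f_{ed}(y_j)-1]d$, $h(x_iy_j)=k+[2f_{ed}(x_iy_j)-1]d$; since $f_{ed}(x_iy_j)$ runs over $[1,q]$, the edge colors run over $\{k+d,k+3d,\dots,k+(2q-1)d\}=O_{2q-1,k,d}$, with $h(X)\subseteq S_{m,0,0,d}$ and $h(Y)\cup h(E(G))\subseteq S_{n,k,0,d}\cup O_{2q-1,k,d}$, and a computation identical in spirit gives $h(x_iy_j)+|h(x_i)-h(y_j)|=2k-2d+2d\,[q+\min f(Y)-\max f(X)]$, a constant, so $h$ is an odd-edge edge-difference $(k,d)$-total coloring by Definition~\ref{defn:odd-edge-W-type-total-labelings-definition}(ii).

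The only thing needing care is routine bookkeeping: checking that every induced absolute value resolves in the nonnegative direction — this is precisely where the set-ordered hypothesis is used, since $\max f(X)<\min f(Y)$ forces $f_{ed}(y_j)\geq f_{ed}(x_i)+1$ and hence $h(y_j)\geq h(x_i)$ for all $k\geq 0$, $d\geq 1$ — and confirming that the vertex and edge color sets sit inside the parameterized sets demanded by Definition~\ref{defn:kd-w-type-colorings} and Definition~\ref{defn:odd-edge-W-type-total-labelings-definition}. I do not anticipate a genuine obstacle here; in particular the leaf-adding machinery (Steps B-4, B-5 of RLA-algorithm-B) is never invoked, because $G$ itself already carries both colorings.
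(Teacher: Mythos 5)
Your proposal is correct and matches the paper's intended argument: the paper offers no separate proof for this theorem, leaving it to Steps B-2 and B-3 of RLA-algorithm-B (the $X$-dual/$Y$-dual transformation followed by the $(k,d)$-parameterization), which is exactly what you execute, with the set-ordered hypothesis resolving all absolute values nonnegatively. Your explicit odd parameterization $h(x_i)=2d\,f_{ed}(x_i)$, $h(y_j)=k+[2f_{ed}(y_j)-1]d$, $h(x_iy_j)=k+[2f_{ed}(x_iy_j)-1]d$ supplies the odd-edge case the paper leaves implicit, and the constant $2k-2d+2d[q+\min f(Y)-\max f(X)]$ checks out.
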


\begin{thm}\label{thm:666666}
(i) $^*$ A connected graph $G$ admits an edge-difference $(k,d)$-total coloring if and only if there exists a tree $T$ admitting an edge-difference $(k,d)$-total coloring such that the result of vertex-coinciding each group of vertices colored the same colors into one vertex is just $G$.

(ii) $^*$ A connected bipartite graph $H$ admits an odd-edge edge-difference $(k,d)$-total coloring if and only if there exists a tree $T^*$ admitting an odd-edge edge-difference $(k,d)$-total coloring such that the result of vertex-coinciding each group of vertices colored the same color in $T^*$ into one vertex is just $H$.
\end{thm}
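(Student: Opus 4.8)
The plan is to establish both biconditionals by transporting an edge-difference $(k,d)$-total coloring along the vertex-splitting tree-operation $T=\wedge(G)$ and its inverse, the vertex-coinciding operation $G=\odot_m(T)$, in the same spirit as Theorem~\ref{thm:connected-tree-k-d-graceful-total-coloring} and the preceding graceful-difference result. The structural observation that makes this work is that an edge-difference $(k,d)$-total coloring $f$ is determined on the edges by its values on the vertices: for every edge $uv$ the constraint gives $f(uv)=c-|f(u)-f(v)|$, so the colour of an edge is a function of its two end colours only; and since the edge colour set must be $S_{q-1,k,0,d}$, of cardinality $q=|E|$, the restriction of $f$ to the edges is in fact a bijection onto $S_{q-1,k,0,d}$.

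For the ``if'' direction, let $T$ be a tree carrying an edge-difference $(k,d)$-total coloring $g$ with constant $c$ and edge colour set $S_{q-1,k,0,d}$, and let $G$ be obtained by coinciding the prescribed monochromatic groups of vertices of $T$ into single vertices. I would first check that $G$ is simple: two edges of $T$ that became parallel would have the same unordered pair of end colours, hence equal $g$-values, contradicting that $g$ is injective on $E(T)$; a loop would require a monochromatic edge, which is excluded because $G$ is loopless. I then define $f$ on $G$ by assigning each coincided vertex the common colour of its group and each edge of $G$ the $g$-colour of the unique edge of $T$ it arises from; well-definedness is exactly the formula $g(uv)=c-|g(u)-g(v)|$. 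It is then immediate that $f(E(G))=g(E(T))=S_{q-1,k,0,d}$ and that $f$ obeys the edge-difference constraint with the same constant $c$, i.e. $f$ is an edge-difference $(k,d)$-total coloring of $G$.

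For the ``only if'' direction, let $G$ be a connected $(p,q)$-graph with an edge-difference $(k,d)$-total coloring $f$. Apply the vertex-splitting tree-operation to get a tree $T=\wedge(G)$ on $q+1$ vertices with $G=\odot_m(T)$, and transport the colouring: give every split-copy of a vertex $v$ of $G$ the colour $f(v)$, and every edge of $T$ the colour of the corresponding edge of $G$. Since vertex-splitting deletes no edge and alters no end colour, the edge colour set remains $S_{q-1,k,0,d}$ and the edge-difference constraint holds verbatim, so $T$ admits an edge-difference $(k,d)$-total coloring. The groups of split-copies of the vertices of $G$ are monochromatic, and coinciding them reproduces $G=\odot_m(T)$, which is the tree required by the statement. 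Part (ii) follows the same two moves with $S_{q-1,k,0,d}$ replaced by the odd set $O_{2q-1,k,d}$ (again of cardinality $q$, so the edge-injectivity argument is untouched) and with the bipartition carried along: vertex-splitting never merges the two colour blocks of a bipartite graph, and every group we coincide lies inside a single block, so bipartiteness passes in both directions between $H$ and $T^*$.

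The one step that needs a genuine argument rather than bookkeeping is the ``if'' direction: one must be certain that coinciding equally-coloured groups of a tree cannot create parallel edges or loops and that the edge colour set is preserved exactly, not merely as a multiset. Both points rest on the same two facts — the counting identity $q=|E(T)|=|S_{q-1,k,0,d}|$ (resp. $|O_{2q-1,k,d}|$), which forces injectivity of the edge colouring, and the edge-difference constraint, which forces an edge colour to depend only on its end colours — so I expect this to be the main obstacle and also the crux of the whole proof.
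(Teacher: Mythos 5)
The paper states this result (like its graceful-difference and edge-magic analogues) without any proof, so there is no argument of the authors to compare against; your splitting/coinciding strategy is the natural one, and your ``if'' direction is essentially right: since $|f(E)|=q=|S_{q-1,k,0,d}|$ (resp. $|O_{2q-1,k,d}|$) the edge coloring is injective, and together with $f(uv)=c-|f(u)-f(v)|$ this rules out two edges of $T$ acquiring the same coincided end pair, so the colors transfer to $G$ verbatim.

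The ``only if'' direction, however, does not prove the literal statement. Definition \ref{defn:kd-w-type-colorings} explicitly allows $f(x)=f(y)$ for distinct vertices, and the paper's own edge-difference $(k,d)$-total colorings produced by the leaf-adding algorithms repeat many vertex colors. If you transport such an $f$ to $T=\wedge(G)$, the color classes of $T$ are in general strictly coarser than the fibers of the splitting map: all split copies of $u$ \emph{and} of $v$ with $f(u)=f(v)$ lie in one monochromatic group. The theorem requires that coinciding \emph{each} group of equally colored vertices of $T$ return exactly $G$; applied to your $T$ this also identifies $u$ with $v$, and the quotient is a proper identification of $G$ (possibly with multiple edges), not $G$. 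Put differently, any tree satisfying the stated condition induces a coloring of $G$ with pairwise distinct vertex colors, so your construction establishes the ``only if'' only when the given coloring of $G$ is vertex-injective. To close the gap you must either show that a vertex-injective edge-difference $(k,d)$-total coloring exists whenever any such coloring exists, or argue for the looser reading of the operation that the paper uses elsewhere (``vertex-coincide some vertices colored the same color''), in which one coincides precisely the monochromatic group of split copies of each original vertex; as written, the sentence ``coinciding them reproduces $G$, which is the tree required by the statement'' is exactly the step that fails. The same caveat applies verbatim to part (ii) with $O_{2q-1,k,d}$.
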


\subsection{Parameterized felicitous-difference total labelings/colorings}

\vskip 0.4cm

\noindent $^*$ \textbf{RLA-algorithm-C for the felicitous-difference $(k,d)$-total coloring}.

\textbf{Input:} A connected bipartite $(p,q)$-graph $G$ admitting a set-ordered graceful labeling $f$.

\textbf{Output:} A connected bipartite $(p+m,q+m)$-graph $G^*$ admitting a felicitous-difference $(k,d)$-total coloring, where $G^*$ is the result of adding randomly $m$ leaves to $G$.

\textbf{Step C-1.} \textbf{Initialization.} Since a connected bipartite $(p,q)$-graph $G$ admitting a set-ordered graceful labeling $f$, we have the vertex set $V(G)=X\cup Y$ with $X\cap Y=\emptyset$, where
$$
X=\{x_1,x_2,\dots ,x_s\},~Y=\{y_1,y_2,\dots ,y_t\},~s+t=p=|V(G)|
$$ such that
$$
0=f(x_1)<f(x_2)<\cdots <f(x_s)<f(y_1)<f(y_2)<\cdots <f(y_t)=q
$$ with the edge color set
$$f(E(G))=\{f(x_iy_j)=f(y_j)-f(x_i):x_iy_j\in E(G)\}=[1,q]
$$

Adding randomly $a_i$ new leaves $u_{i,j}\in L(x_i)$ to each vertex $x_i$ by adding new edges $x_iu_{i,j}$ for $j\in [1,a_i]$ and $i\in[1,s]$, and adding randomly $b_j$ new leaves $v_{j,r}\in L(y_j)$ to each vertex $y_j$ by adding new edges $y_jv_{j,r}$ for $r\in [1,b_j]$ and $j\in[1,t]$, it is allowed that some $a_i=0$ or some $b_j=0$. The resultant graph is denoted as $G^*$. Let $A=\sum ^s_{i=1}a_i$ and $B=\sum ^t_{j=1}b_j$, so $m=A+B$.

\textbf{Step C-2.} Define a felicitous-difference total labeling $f_{fel}$ of $G$ in the following way:

C-2.1. \textbf{$X$-dual transformation.} Each vertex $x_i\in X$ holds
$$f_{fel}(x_i)=\max f(X)+\min f(X)-f(x_i)=\max f(X)-f(x_i)$$

C-2.2. $f_{fel}(y_j)=f(y_j)$ for $y_j\in Y$; and

C-2.3. $f_{fel}(x_iy_j)=f(x_iy_j)$ for each edge $x_iy_j\in E(G)$.\\
Because of the felicitous-difference constraint
\begin{equation}\label{eqa:felicitous-difference-total-labelings}
{
\begin{split}
\big |f_{fel}(x_i)+f_{fel}(y_j)-f_{fel}(x_iy_j)\big |&=\big |\max f(X)-f(x_i)+f(y_j)-f(x_iy_j)\big |=\max f(X)
\end{split}}
\end{equation} for each edge $x_iy_j\in E(G)$ holds true, so we claim that that $f_{fel}$ is a set-ordered felicitous-difference total labeling of $G$.

\textbf{Step C-3.} \textbf{Parameterizing the felicitous-difference total labeling.} For integers $k\geq 0,d\geq 1$, we define a $(k,d)$-constraint labeling $g_{fel}$ for $G$ as follows:

C-3.1. $g_{fel}(x_i)=f_{fel}(x_i)\cdot d$ for $x_i\in X$;

C-3.2. $g_{fel}(y_j)=k+[f_{fel}(y_j)-1]\cdot d$ for $y_j\in Y$;

C-3.3. $g_{fel}(x_iy_j)=k+[f_{fel}(x_iy_j)-1]\cdot d$ for each edge $x_iy_j\in E(G)$. \\
So, the vertex color set
$$
g_{fel}(V(G))=\{f(x_i)\cdot d:x_i\in X\}\cup \{k+[f_{fel}(y_j)-1]\cdot d:y_j\in Y\}=g_{fel}(X)\cup g_{fel}(Y)
$$ and the edge color set
$$
g_{fel}(E(G))=\{k+[f_{fel}(x_iy_j)-1]\cdot d:x_iy_j\in E(G)\}=\{k,k+d,\dots, k+(q-1)\cdot d\}
$$

By Eq.(\ref{eqa:felicitous-difference-total-labelings}), the felicitous-difference constraint
\begin{equation}\label{eqa:555555}
{
\begin{split}
& \big |g_{fel}(x_i)+g_{fel}(y_j)-g_{fel}(x_iy_j)\big |\\
=&|f_{fel}(x_i)\cdot d+k+[f_{fel}(y_j)-1]\cdot d-(k+[f_{fel}(x_iy_j)-1]\cdot d)|\\
=&f_{fel}(x_i)\cdot d+f_{fel}(y_j)\cdot d-f_{fel}(x_iy_j)\cdot d\\
=&\max f(X)\cdot d
\end{split}}
\end{equation} for each edge $x_iy_j\in E(G)$ holds true. Thereby, we claim that $g_{fel}$ is a felicitous-difference $(k,d)$-total labeling of $G$ from Definition \ref{defn:kd-w-type-colorings}.

\textbf{Step C-4.} Define a total coloring $h_{fel}$ for $G^*$ in the following substeps:

\textbf{Step C-4.1.} Recolor the vertices and edges of $G$ by setting

C-4.1. $h_{fel}(x_i)=g_{fel}(x_i)$ for $x_i\in X\subset V(G)$;

C-4.2. $h_{fel}(y_j)=g_{fel}(y_j)+m\cdot d$ for $y_j\in Y\subset V(G)$; and

C-4.3. $h_{fel}(x_iy_j)=g_{fel}(x_iy_j)$ for each edge $x_iy_j\in E(G)$.

So, there are the following felicitous-difference constraints
\begin{equation}\label{eqa:para-felicitous-difference-total-coloring}
{
\begin{split}
|h_{fel}(y_j)+h_{fel}(x_i)-h_{fel}(x_iy_j)|&=g_{fel}(x_i)+g_{fel}(y_j)+m\cdot d-g_{fel}(x_iy_j)\\
&=[\max f(X)+m]\cdot d,~x_iy_j\in E(G)
\end{split}}
\end{equation}

\textbf{Step C-4.2.} Color edges $x_iu_{i,j}$ for $u_{i,j}\in L(x_i)$ with $j\in [1,a_i]$ and $i\in[1,s]$.

From $h_{fel}(x_1u_{1,j})=k+j\cdot d+(q-1)\cdot d$ for $j\in [1,a_1]$, $h_{fel}(x_1u_{1,a_1})=k+(q+a_1-1)\cdot d$, we get a formula

\begin{equation}\label{eqa:555555}
h_{fel}(x_iu_{i,j})=k+j\cdot d+(q-1)\cdot d+d\sum^{i-1}_{r=1}a_{r},~j\in [1,a_i],~ i\in[1,s]
\end{equation} The last edge $x_su_{s,a_s}$ is colored with
$$
h_{fel}(x_su_{s,a_s})=k+a_s\cdot d+(q-1)\cdot d+d\sum^{s-1}_{r=1}a_{r}=k+(A+B-1)\cdot d=k+(q+A-1)\cdot d
$$

\textbf{Step C-4.3.} Color edges $y_jv_{j,r}$ for $v_{j,r}\in L(y_j)$ with $r\in [1,b_j]$ and $j\in[1,t]$.

Since
$$h_{fel}(y_tv_{t,r})=k+r\cdot d+(q+A-1)\cdot d,~r\in [1,b_t],\textrm{ and }h_{fel}(y_1v_{1,r})=k+b_t\cdot d+(q+A-1)\cdot d
$$ we have a formula

\begin{equation}\label{eqa:555555}
h_{fel}(y_jv_{j,r})=k+r\cdot d+(q+A-1)\cdot d+d\sum ^{j-1}_{i=1}b_{t-i+1},~r\in [1,b_j]
\end{equation} and
$$
h_{fel}(y_jv_{j,b_j})=k+b_j\cdot d+(q+A-1)\cdot d+d\sum ^{j-1}_{i=1}b_{t-i+1}=k+(q+A-1)\cdot d+d\sum ^{j}_{i=1}b_{t-i+1}
$$ The last edge $y_1v_{1,b_1}$ is colored with
$$
h_{fel}(y_1v_{1,b_1})=k+b_1\cdot d+(q+A-1)\cdot d+d\sum ^{t-1}_{i=1}b_{t-i+1}=k+(q+m-1)\cdot d
$$

\textbf{Step C-4.4.} Let $M\,'=[\max f(X)+m]\cdot d$ defined in Eq.(\ref{eqa:para-felicitous-difference-total-coloring}). Color added leaves $u_{i,j}\in L(x_i)$ for $j\in [1,a_i]$ and $i\in[1,s]$ as:
$$
h_{fel}(u_{i,j})=M\,'+h_{fel}(x_iu_{i,j})-h_{fel}(x_i),~j\in [1,a_i],~i\in[1,s]
$$ which is just the felicitous-difference constraint.

\textbf{Step C-4.5.} Color the added leaves $v_{j,r}\in L(y_j)$ with $r\in [1,b_j]$ and $j\in[1,t]$ as:
$$h_{fel}(v_{j,r})=M\,'+h_{fel}(y_jv_{j,r})-h_{fel}(y_j),~r\in [1,b_j],~j\in[1,t]
$$ Also, the felicitous-difference constraint
$$h_{fel}(v_{j,r})+h_{fel}(y_j)-h_{fel}(y_jv_{j,r})=M\,'
$$ holds true for each edge $y_jv_{j,r}\in E(G^*)$.

Summarizing all steps together, we confirm that $h_{fel}$ is really a felicitous-difference $(k,d)$-total coloring of $G^*$.

\textbf{Step C-5.} Return the felicitous-difference $(k,d)$-total coloring $h_{fel}$ of $G^*$.

\vskip 0.4cm

Fig.\ref{fig:4-felicitous-difference-leaves} is for understanding the RLA-algorithm-C, we can see:

(a) A connected graph $L$ admits a set-ordered felicitous-difference total labeling $f_{fel}$ holding the felicitous-difference constraint
$$\big | f_{fel}(x)+f_{fel}(y)-f_{fel}(xy)\big |=3,~xy\in E(L)
$$

(b) a parameterized connected graph $L_{\textrm{p}}$ admitting a felicitous-difference $(k,d)$-total labeling $g_{fel}$ with the felicitous-difference constraint
$$\big | g_{fel}(x)+ g_{fel}(y)-g_{fel}(xy)\big |=3d,~xy\in E(L_{\textrm{p}})
$$

(c) adding leaves to $L_{\textrm{p}}$ produces a connected graph $L_{\textrm{p-leaf}}$;

(d) a connected graph $Lh_{\textrm{p-leaf}}$ admits a felicitous-difference $(k,d)$-total coloring $h_{fel}$ with the felicitous-difference constraint
$$\big | h_{fel}(x)+ h_{fel}(y)-h_{fel}(xy)\big |=17d,~xy\in E(Lh_{\textrm{p-leaf}})
$$

\begin{figure}[h]
\centering
\includegraphics[width=16.4cm]{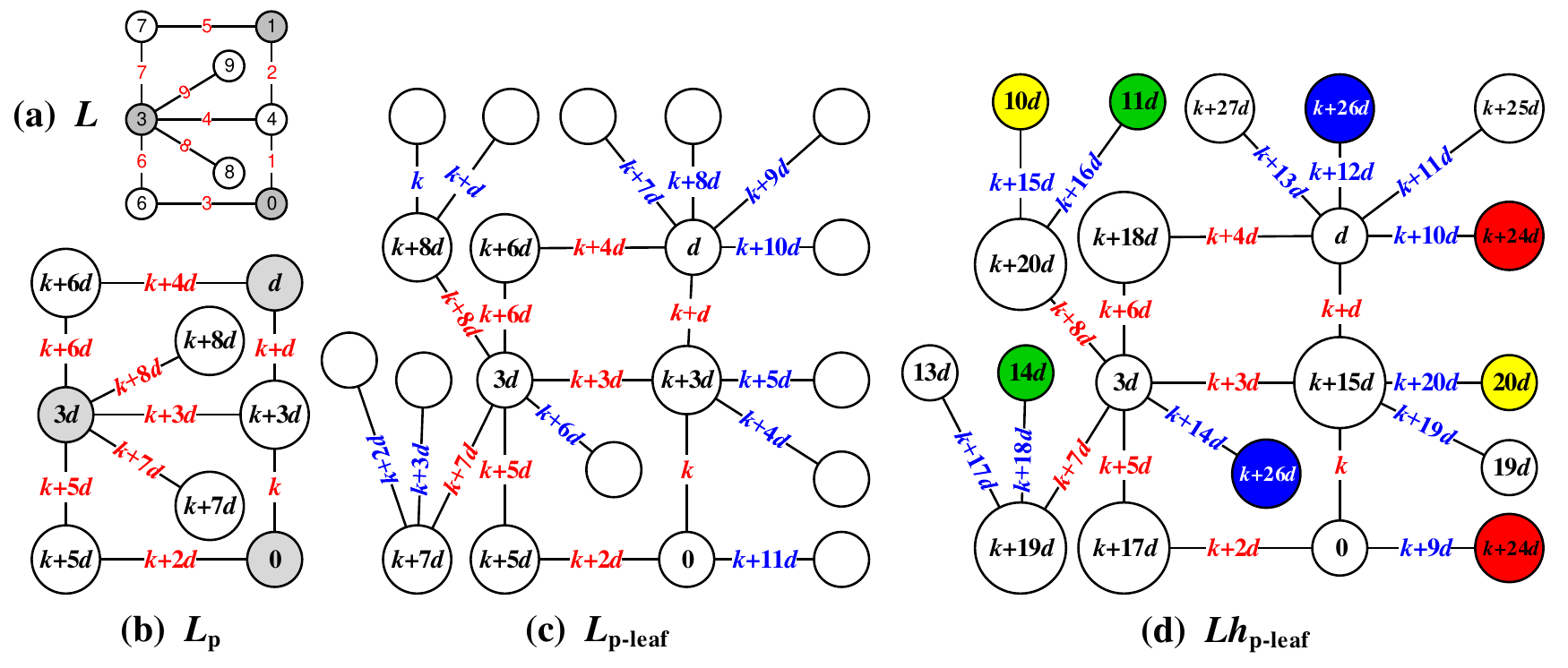}\\
\caption{\label{fig:4-felicitous-difference-leaves} {\small For understanding the RLA-algorithm-C for the felicitous-difference $(k,d)$-total coloring.}}
\end{figure}

The RLA-algorithm-C enables us to get the following result:

\begin{thm}\label{thm:trees-admits-felicitous-difference-k-d}
$^*$ Each tree admits a felicitous-difference $(k,d)$-total coloring and an odd-edge felicitous-difference $(k,d)$-total coloring.
\end{thm}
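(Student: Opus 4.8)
The plan is to follow the scheme already used for Theorem~\ref{thm:felicitous-difference-algorithm}, Theorem~\ref{thm:trees-admits-graceful-difference-k-d} and Theorem~\ref{thm:edge-difference-kd-trees}: reduce an arbitrary tree to a star along the leaf-removal chain $T_1=T$, $T_{i+1}=T_i-L_{eaf}(T_i)$ with $i\in[1,m]$ and $m+1=\left\lceil\frac{D(T)}{2}\right\rceil$ (here $T_{m+1}$ is a star, and when $D(T)\leq 2$ the tree $T$ is itself $K_1$ or a star $K_{1,n}$), and then propagate a suitable $(k,d)$-total coloring up the chain by adding leaves. First I would dispose of the base case directly: for a star $K_{1,n}$ with centre $x$ and leaves $x_1,\dots,x_n$, setting $f(x)=0$ and $f(x_j)=f(xx_j)=k+(j-1)d$ gives $|f(x)+f(x_j)-f(xx_j)|=0$ and edge colour set $S_{n-1,k,0,d}$, a felicitous-difference $(k,d)$-total coloring, while setting $h(x)=0$ and $h(x_j)=h(xx_j)=k+(2j-1)d$ gives $|h(x)+h(x_j)-h(xx_j)|=0$ and edge colour set $O_{2n-1,k,d}$, an odd-edge felicitous-difference $(k,d)$-total coloring; the case $K_1$ is vacuous.

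For the first assertion no further work is needed: Theorem~\ref{thm:felicitous-difference-algorithm} already gives a felicitous-difference $(k,d)$-total coloring of every tree of diameter at least $3$, so together with the star above every tree is covered. For the second assertion I would first build, along the same chain, an odd-graceful $(k,d)$-total coloring (Ptol-2 of Definition~\ref{defn:kd-w-type-colorings}) of $T$ by an odd-stretched form of the Leaf-adding $x$SL$y$LS-graceful algorithm of Theorem~\ref{thm:any-tree-k-d-graceful-total-coloring}: starting from the star colouring $h$ above, and given an odd-graceful $(k,d)$-total coloring of $T_{i+1}$ with edge colour set $O_{2q_{i+1}-1,k,d}$, I add back the leaves of $L_{eaf}(T_i)$, colour the new edges with the fresh odd multiples $k+(2q_{i+1}+1)d,\,k+(2q_{i+1}+3)d,\dots$ in the order ``leaves of $x_1$ up to leaves of $x_s$, then leaves of $y_t$ down to leaves of $y_1$'', shift the $Y$-side vertex colours upward by $2(A+B)d$ where $A+B$ is the number of added leaves, and choose each added leaf colour so that $h(uv)=|h(u)-h(v)|$ holds on its new edge; this is exactly RLA-algorithm-C with every arithmetic block $S_{\ast,k,0,d}$ replaced by its odd counterpart $O_{2\ast-1,k,d}$, the permitted vertex ranges being those of Definition~\ref{defn:odd-edge-W-type-total-labelings-definition}. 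Iterating down to $T_1=T$ gives an odd-graceful $(k,d)$-total coloring of $T$, and I then apply the partly Y-image totally-e-image transformation of Definition~\ref{defn:kd-w-type-coloring-transfoemations}: since the edge colour set is $O_{2q-1,k,d}$ one has $f(e_q)+f(e_1)=2k+2qd$, so $g(e_i)=[f(e_q)+f(e_1)]-f(e_i)$ again runs over $O_{2q-1,k,d}$, and the computation in Wtype-4 of the proof of Theorem~\ref{thm:equivalent-k-d-total-colorings} turns the graceful constraint into the felicitous-difference constraint $|g(u)+g(v)-g(uv)|=c$ with a fixed constant $c$, so $T$ admits an odd-edge felicitous-difference $(k,d)$-total coloring.

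The step I expect to be the main obstacle is the bookkeeping in the odd-stretched leaf-adding: one must check that the inherited edge colours (after the uniform upward shift) together with the newly assigned odd multiples tile $O_{2q_i-1,k,d}$ with neither gap nor repetition, while every interior vertex colour stays in $S_{\ast,0,0,d}$ on the $X$-side and in $S_{\ast,k,0,d}\cup O_{2\ast-1,k,d}$ on the $Y$-side and every new leaf colour lands in the complementary range --- i.e.\ a careful re-run of the Step~C-4 and Step~C-5 counting of RLA-algorithm-C with common difference $2d$ in the edge block instead of $d$. A secondary point is to confirm that the partly Y-image totally-e-image transformation really preserves the odd edge-colour set and produces a single constant $c$; the identity $f(e_q)+f(e_1)=2k+2qd$ noted above is what makes this go through.
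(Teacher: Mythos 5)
Your proposal is correct in substance, but it takes a noticeably different route from the one the paper attaches to this statement. The paper derives the theorem from RLA-algorithm-C: it dualizes a set-ordered graceful labeling of the current tree into a set-ordered felicitous-difference total labeling (the $X$-dual transformation of Step C-2), parameterizes it (Step C-3), and then colors the newly added leaf edges and leaves directly so that the felicitous-difference constant is preserved, i.e.\ it stays inside the felicitous-difference regime while adding leaves; the odd-edge half is left implicit there (the algorithm as written only treats the block $S_{q-1,k,0,d}$). You instead keep the graceful regime all the way up the leaf-removal chain and convert only at the very end: for the ordinary case you simply invoke Theorem~\ref{thm:felicitous-difference-algorithm} plus a direct star base case (legitimate, and essentially the paper's own earlier argument via Theorem~\ref{thm:any-tree-k-d-graceful-total-coloring} and Wtype-4 of Theorem~\ref{thm:equivalent-k-d-total-colorings}), and for the odd-edge case you build an odd-graceful $(k,d)$-total coloring by an ``odd-stretched'' leaf-adding and then apply the partly Y-image totally-e-image transformation; your observation that $f(e_1)+f(e_q)=2k+2qd$ makes $g(e_i)=2k+2qd-f(e_i)$ permute $O_{2q-1,k,d}$ and yields a single constant is exactly what is needed, so the conversion step is sound. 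What each approach buys: the paper's order (transform first, then add leaves) gives the explicit recursive formulas for the new leaf colors in the felicitous-difference regime, while your order (add leaves gracefully, transform once at the end) localizes all the parity bookkeeping in one place and makes the odd-edge half explicit, which the paper never does. The only piece you leave unverified is that bookkeeping itself --- that after shifting the $Y$-side colors by $2(A+B)d$ and assigning the new edges the odd values $k+d,k+3d,\dots,k+(2(A+B)-1)d$, the edge colors tile $O_{2(q+A+B)-1,k,d}$ and the leaf colors land in the correct vertex blocks; this is a routine re-run of the proof of Theorem~\ref{thm:adding-leaves-keep-total-coloring} with step $2d$ in the edge block, so I regard it as a gap in detail rather than in idea, and it is no less explicit than the paper's own treatment of the odd-edge claim.
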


\begin{thm}\label{thm:666666}
$^*$ If a connected graph $G$ admits a set-ordered graceful labeling, then $G$ admits a felicitous-difference $(k,d)$-total coloring and an odd-edge felicitous-difference $(k,d)$-total coloring.
\end{thm}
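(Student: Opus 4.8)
The plan is to reuse the construction already contained in RLA-algorithm-C: its Steps C-2 and C-3 produce, \emph{directly} from a set-ordered graceful labeling, a felicitous-difference $(k,d)$-total labeling of $G$ itself, before a single leaf is added. Since the present statement asks only for a coloring of $G$, Steps C-1, C-4, C-5 (the leaf-adding bookkeeping) are discarded, and the theorem reduces to checking that Steps C-2--C-3 go through when applied to an arbitrary connected $G$ admitting a set-ordered graceful labeling rather than to a tree; nothing in those steps used the tree structure.

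First I would fix such a labeling $f$ with $V(G)=X\cup Y$, $0=f(x_1)<\cdots<f(x_s)<f(y_1)<\cdots<f(y_t)=q$ and $f(E(G))=\{f(x_iy_j)=f(y_j)-f(x_i):x_iy_j\in E(G)\}=[1,q]$. Applying the $X$-dual transformation $f_{fel}(x_i)=\max f(X)-f(x_i)$ while keeping $f_{fel}(y_j)=f(y_j)$ and $f_{fel}(x_iy_j)=f(x_iy_j)$, the telescoping computation of Eq.(\ref{eqa:felicitous-difference-total-labelings}) gives $|f_{fel}(x_i)+f_{fel}(y_j)-f_{fel}(x_iy_j)|=\max f(X)$ for every edge, so $f_{fel}$ is a set-ordered felicitous-difference total labeling. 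Then, for integers $k\geq 0$, $d\geq 1$, I would set $g_{fel}(x_i)=f_{fel}(x_i)\cdot d$, $g_{fel}(y_j)=k+[f_{fel}(y_j)-1]d$ and $g_{fel}(x_iy_j)=k+[f_{fel}(x_iy_j)-1]d$, and verify three things: $g_{fel}(X)\subseteq S_{m,0,0,d}$ with $m=\max f(X)$; the edge color set is exactly $S_{q-1,k,0,d}$ because $f(x_iy_j)$ runs over $[1,q]$; and the felicitous-difference constraint $|g_{fel}(x_i)+g_{fel}(y_j)-g_{fel}(x_iy_j)|=\max f(X)\cdot d$ is constant. By the felicitous-difference clause (Ptol-9) of Definition \ref{defn:kd-w-type-colorings}, $g_{fel}$ is then a felicitous-difference $(k,d)$-total coloring of $G$ (indeed a labeling whenever $k\geq 1$).

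For the odd-edge assertion I would first promote $f$ to a set-ordered odd-graceful labeling $f^{o}$ of $G$ by the standard doubling $f^{o}(x_i)=2f(x_i)$, $f^{o}(y_j)=2f(y_j)-1$, whose edge colors $f^{o}(x_iy_j)=2f(x_iy_j)-1$ fill the odd set $[1,2q-1]^o$. Then I would run the same two-step recipe with a factor-two scaling on the vertices and an edge offset chosen to land on the odd arithmetic progression: $h(x_i)=[\max f^{o}(X)-f^{o}(x_i)]d$, $h(y_j)=k+f^{o}(y_j)d$, $h(x_iy_j)=k+f^{o}(x_iy_j)d$. The same telescoping now yields $|h(x_i)+h(y_j)-h(x_iy_j)|=2\max f(X)\cdot d$, the edge color set equals $O_{2q-1,k,d}$, and $h$ respects $h:X\to S_{m,0,0,d}$, $h:Y\cup E(G)\to S_{n,k,0,d}\cup O_{2q-1,k,d}$, so by part (iii) of Definition \ref{defn:odd-edge-W-type-total-labelings-definition} it is an odd-edge felicitous-difference $(k,d)$-total coloring. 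As an alternative route, the whole statement also follows from Theorem \ref{thm:trees-admits-felicitous-difference-k-d} together with the fact that vertex-coinciding the monochromatic vertex classes of a suitable tree recovers $G$, mirroring the vertex-coinciding analogs for graceful-difference and edge-difference established above.

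The only place that demands care --- and hence the ``main obstacle'', such as it is --- is the index bookkeeping: one must pick the linear reparameterization so that the induced edge colors exactly exhaust $S_{q-1,k,0,d}$ (respectively $O_{2q-1,k,d}$) while the expression $f(u)+f(v)-f(uv)$ keeps telescoping to a value independent of the chosen edge. In the odd-edge case this is what forces routing through the set-ordered odd-graceful labeling and using the offset $k+f^{o}(x_iy_j)d$ rather than $k+[f^{o}(x_iy_j)-1]d$. Beyond noticing this, no new ideas past RLA-algorithm-C are needed.
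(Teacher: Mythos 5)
Your proposal is correct and follows essentially the same route as the paper: the theorem is exactly the content of Steps C-2 and C-3 of RLA-algorithm-C (the $X$-dual transformation producing a set-ordered felicitous-difference total labeling, then the $(k,d)$-parameterization), which the paper states the result as a direct consequence of. Your explicit odd-edge construction via the doubled set-ordered odd-graceful labeling with edge colors $k+f^{o}(x_iy_j)d$ filling $O_{2q-1,k,d}$ is the natural completion of the second claim, which the paper leaves implicit, and your verification of the constant $2\max f(X)\cdot d$ is sound.
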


\begin{thm}\label{thm:666666}
(i) $^*$ A connected graph $G$ admits a felicitous-difference $(k,d)$-total coloring if and only if there exists a tree $T$ admitting a felicitous-difference $(k,d)$-total coloring such that such that the result of vertex-coinciding each group of vertices colored the same colors into one vertex is just $G$.

(ii) $^*$ A connected bipartite graph $H$ admits an odd-edge felicitous-difference $(k,d)$-total coloring if and only if there exists a tree $T^*$ admitting an odd-edge felicitous-difference $(k,d)$-total coloring such that such that the result of vertex-coinciding each group of vertices colored the same color in $T^*$ into one vertex is just $H$.
\end{thm}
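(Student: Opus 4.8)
The plan is to prove both biconditionals by transporting a felicitous-difference $(k,d)$-total coloring back and forth along the vertex-splitting tree-operation and its inverse, the vertex-coinciding operation, in exactly the spirit of Theorem \ref{thm:connected-tree-k-d-graceful-total-coloring} for graceful $(k,d)$-total colorings. Two facts carry the argument. First, every connected $(p,q)$-graph $G$ vertex-splits into a tree $T\in V_{\textrm{split}}(G)$ on $q+1$ vertices with $G=\odot_m(T)$, and neither splitting a vertex nor coinciding a group of vertices ever destroys an edge; so in both directions there is a natural surjection $\psi:V(T)\to V(G)$ inducing a bijection $E(T)\to E(G)$. Second, the felicitous-difference constraint $|f(u)+f(v)-f(uv)|=c$ is a purely per-edge condition, while the requirement $f(E)=S_{q-1,k,0,d}$ (resp.\ $O_{2q-1,k,d}$ in part (ii)) depends only on the multiset of edge colors; both are therefore stable under merely relabelling edges by $\psi$.

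Sufficiency of (i): suppose a tree $T$ carries a felicitous-difference $(k,d)$-total coloring $f_T$ and that coinciding its monochromatic groups of vertices yields $G$, with quotient map $\psi:V(T)\to V(G)$. Since $G$ is a simple graph (no loops or multi-edges, as assumed throughout the paper), $\psi$ restricts to a bijection $E(T)\to E(G)$. Define $f$ on $G$ by $f(\psi(u))=f_T(u)$, which is well defined because each monochromatic group has a single color, and $f(\psi(u)\psi(v))=f_T(uv)$. Then $|E(G)|=|E(T)|=q$, the edge color set of $f$ equals $f_T(E(T))=S_{q-1,k,0,d}$, and every edge satisfies $|f(\psi(u))+f(\psi(v))-f(\psi(u)\psi(v))|=|f_T(u)+f_T(v)-f_T(uv)|=c$, so $f$ is a felicitous-difference $(k,d)$-total coloring of $G$ by Definition \ref{defn:kd-w-type-colorings}.

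Necessity of (i): assume $G$ admits a felicitous-difference $(k,d)$-total coloring $f$. Vertex-split $G$ into a tree $T$ with $G=\odot_m(T)$, and let $\psi:V(T)\to V(G)$ record which vertex of $G$ each vertex of $T$ collapses to. Pull $f$ back along $\psi$, colouring $u\in V(T)$ by $f(\psi(u))$ and $uv\in E(T)$ by $f(\psi(u)\psi(v))$. This is a total coloring of $T$ with edge color set again $S_{q-1,k,0,d}$ which satisfies the felicitous-difference constraint edgewise, so $T$ admits a felicitous-difference $(k,d)$-total coloring, and coinciding the fibres $\psi^{-1}(v)$ returns $G$ by construction.

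The main obstacle is to pass from ``coincide the fibres $\psi^{-1}(v)$'' to the verbatim statement ``coincide every maximal monochromatic group'': if two distinct vertices of $G$ happen to receive the same $f$-color, the pulled-back coloring on $T$ merges their fibres into one color class, and collapsing that class produces a proper quotient of $G$ rather than $G$. I would settle this as in the graceful-difference and edge-difference counterparts, by using the freedom both in the choice of $T\in V_{\textrm{split}}(G)$ and in how the split copies created at a vertex $v$ are re-coloured: the $X$-copies can be reassigned inside $S_{m,0,0,d}$ and the $Y$-copies inside $S_{n,k,0,d}$, exploiting that a felicitous-difference coloring forces equal colors only on non-adjacent vertices, so that the monochromatic classes of $T$ become exactly the fibres while $f_T(E(T))=S_{q-1,k,0,d}$ and $|f_T(u)+f_T(v)-f_T(uv)|=c$ are kept intact; this bookkeeping is the heart of the argument and the only non-routine point. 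Part (ii) is then proved word for word, with $S_{q-1,k,0,d}$ replaced throughout by the odd set $O_{2q-1,k,d}$ of Definition \ref{defn:odd-edge-W-type-total-labelings-definition}, together with the free observation that the vertex-splitting tree-operation preserves bipartiteness, so the tree $T^*$ is automatically bipartite and the odd-edge felicitous-difference constraint makes sense for it.
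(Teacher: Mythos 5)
The paper states this theorem without proof (it is one of the starred results following the RLA-algorithm-C), so there is no written argument to compare against line by line; the evidently intended argument, judging from Theorem \ref{thm:connected-tree-k-d-graceful-total-coloring} and the proof of Theorem \ref{thm:graph-admits-6-set-colorings}, is precisely the transport of a coloring along the vertex-splitting tree-operation and its inverse, which is what you do. Your push-forward (sufficiency) and pull-back (necessity) steps are sound: the two operations preserve the edge set bijectively, the felicitous-difference constraint is a per-edge condition, and the edge color set $S_{q-1,k,0,d}$ (resp.\ $O_{2q-1,k,d}$ in part (ii)) is carried across unchanged. So under the reading in which ``each group of vertices colored the same colors'' means the groups actually being coincided (the fibres of the splitting, each of which is monochromatic under the pulled-back coloring), your argument is complete and agrees with the paper's framework.

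The weak point is your final paragraph, where you adopt the stricter reading (coincide every maximal monochromatic class of $T$) and then do not close the gap you yourself raise. The proposed fix --- reassigning the colors of split copies, $X$-copies inside $S_{m,0,0,d}$ and $Y$-copies inside $S_{n,k,0,d}$, so that the color classes of $T$ become exactly the fibres --- does not work as sketched: if $f(v)=f(w)$ for distinct $v,w\in V(G)$ (which Definition \ref{defn:kd-w-type-colorings} explicitly allows) and you recolor the fibre of $w$ while keeping the edge colors, then every edge incident with that fibre violates $|f_T(u)+f_T(v)-f_T(uv)|=c$; if you instead adjust those edge colors to restore the constraint, you destroy the requirement $f_T(E(T))=S_{q-1,k,0,d}$, since each value of that set is used exactly once when $|E(T)|=q$. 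So either you should commit to the weaker reading (which is how the companion Theorem \ref{thm:connected-tree-k-d-graceful-total-coloring} is phrased, via a colored graph homomorphism $T\rightarrow G$), in which case the last paragraph should simply be deleted, or, if the strict reading is intended, the step you call ``the heart of the argument'' is genuinely missing and cannot be supplied by the recoloring freedom you invoke.
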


\subsection{Parameterized edge-magic total labelings/colorings}

\vskip 0.4cm

\noindent $^*$ \textbf{RLA-algorithm-D for the edge-magic $(k,d)$-total coloring}.

\textbf{Input:} A connected bipartite $(p,q)$-graph $G$ admitting a set-ordered graceful labeling $f$.

\textbf{Output:} A connected bipartite $(p+m,q+m)$-graph $G^*$ admitting an edge-magic $(k,d)$-total coloring, where $G^*$ is the result of adding randomly $m$ leaves to $G$.

\textbf{Step D-1.} \textbf{Initialization.} Since a connected bipartite $(p,q)$-graph $G$ admitting a set-ordered graceful labeling $f$, we have the vertex set $V(G)=X\cup Y$ with $X\cap Y=\emptyset$, where
$$
X=\{x_1,x_2,\dots ,x_s\},~Y=\{y_1,y_2,\dots ,y_t\},~+t=p=|V(G)|
$$ such that
$$
0=f(x_1)<f(x_2)<\cdots <f(x_s)<f(y_1)<f(y_2)<\cdots <f(y_t)=q
$$ with the edge color set
$$f(E(G))=\{f(x_iy_j)=f(y_j)-f(x_i):x_iy_j\in E(G)\}=[1,q]$$

Adding randomly $a_i$ new leaves $u_{i,j}\in L(x_i)$ to each vertex $x_i$ by adding new edges $x_iu_{i,j}$ for $j\in [1,a_i]$ and $i\in[1,s]$, and adding randomly $b_j$ new leaves $v_{j,r}\in L(y_j)$ to each vertex $y_j$ by adding new edges $y_jv_{j,r}$ for $r\in [1,b_j]$ and $j\in[1,t]$, it is allowed that some $a_i=0$ or some $b_j=0$. The resultant graph is denoted as $G^*$. Let $A=\sum ^s_{i=1}a_i$ and $B=\sum ^t_{j=1}b_j$, so $m=A+B$.

\textbf{Step D-2.} Define an edge-magic total labeling $f_{mag}$ for $G$ in the following way:

D-2.1. \textbf{$X$-dual transformation.} Each vertex $x_i\in X$ holds
$$f_{mag}(x_i)=\max f(X)+\min f(X)-f(x_i)=\max f(X)-f(x_i)$$

D-2.2. $f_{mag}(y_j)=f(y_j)$ for $y_j\in Y$; and

D-2.3. \textbf{Edge-dual transformation.} Each edge $x_iy_j\in E(G)$ holds
$$f_{mag}(x_iy_j)=\max f(E(G))+\min f(E(G))-f(x_iy_j)$$
Since each edge $x_iy_j$ of $G$ holds the edge-magic constraint
\begin{equation}\label{eqa:felicitous-difference-total-labelings}
{
\begin{split}
&\quad f_{mag}(x_i)+f_{mag}(x_iy_j)+f_{mag}(y_j)\\
&=\max f(X)-f(x_i)+\max f(E(G))+\min f(E(G))-f(x_iy_j)+f(y_j)\\
&=\max f(E(G))+\min f(E(G))+\max f(X)\\
&=\max f(X)+q+1
\end{split}}
\end{equation} true, so $\lambda_{mag}$ is an edge-magic total labeling of $G$.

\textbf{Step D-3.} \textbf{Parameterizing the felicitous-difference total labeling.} For integers $k\geq 0,d\geq 1$, we define a $(k,d)$-constraint labeling $g_{mag}$ for $G$ as follows:

D-3.1. $g_{mag}(x_i)=f_{mag}(x_i)\cdot d$ for $x_i\in X$;

D-3.2. $g_{mag}(y_j)=k+[f_{mag}(y_j)-1]\cdot d$ for $y_j\in Y$;

D-3.3. $g_{mag}(x_iy_j)=k+[f_{mag}(x_iy_j)-1]\cdot d$ for each edge $x_iy_j\in E(G)$. \\
So, the vertex color set
$$
g_{mag}(V(G))=\{f(x_i)\cdot d:x_i\in X\}\cup \{k+[f_{mag}(y_j)-1]\cdot d:y_j\in Y\}=g_{mag}(X)\cup g_{mag}(Y)
$$ and the edge color set
$$
g_{mag}(E(G))=\{k+[f_{mag}(x_iy_j)-1]\cdot d:x_iy_j\in E(G)\}=\{k,k+d,\dots, k+(q-1)\cdot d\}
$$ By Eq.(\ref{eqa:felicitous-difference-total-labelings}), we have that the edge-magic constraint
\begin{equation}\label{eqa:555555}
{
\begin{split}
& g_{mag}(x_i)+g_{mag}(x_iy_j)+g_{mag}(y_j)\\
=&f_{mag}(x_i)\cdot d+(k+[f_{mag}(x_iy_j)-1]\cdot d)+(k+[f_{mag}(y_j)-1]\cdot d)\\
=&2(k-d)+[f_{mag}(x_i)+f_{mag}(x_iy_j)+f_{mag}(y_j)]\\
=&2k+[\max f(X)+q-1]\cdot d
\end{split}}
\end{equation} for each edge $x_iy_j\in E(G)$ holds true. Thereby, we claim that $g_{mag}$ is an edge-magic $(k,d)$-total labeling of $G$ by Definition \ref{defn:kd-w-type-colorings}.

\textbf{Step D-4.} Define a total coloring $h_{mag}$ for $G^*$ in the following substeps:

\textbf{Step D-4.1.} Recolor the vertices and edges of $G$ by setting

D-4.1. $h_{mag}(x_i)=g_{mag}(x_i)$ for $x_i\in X\subset V(G)$;

D-4.2. $h_{mag}(y_j)=g_{mag}(y_j)+m\cdot d$ for $y_j\in Y\subset V(G)$; and

D-4.3. $h_{mag}(x_iy_j)=g_{mag}(x_iy_j)+m\cdot d$ for each edge $x_iy_j\in E(G)$.

Thereby, we have that the edge-magic constraint
\begin{equation}\label{eqa:para-edge-magic-total-coloring}
{
\begin{split}
h_{mag}(y_j)+h_{mag}(x_iy_j)+h_{mag}(x_i)&=g_{mag}(x_i)+g_{mag}(x_iy_j)+m\cdot d+g_{mag}(y_j)+m\cdot d\\
&=2k+[\max f(X)+q+2m-1]\cdot d
\end{split}}
\end{equation} for each edge $x_iy_j\in E(G)$ holds true.

\textbf{Step D-4.2.} Color edges $y_jv_{j,r}$ for $v_{j,r}\in L(y_j)$ with $r\in [1,b_j]$ and $j\in[1,t]$.

Since $h_{mag}(y_tv_{t,r})=k+(r-1)\cdot d$ for $r\in [1,b_t]$ and $h_{mag}(y_tv_{t,b_t})=k+(b_t-1)\cdot d$, we have a formula

\begin{equation}\label{eqa:555555}
h_{mag}(y_jv_{j,r})=k+(r-1)\cdot d+d\sum ^{j-1}_{i=1}b_{t-i+1},~r\in [1,b_j], j\in[2,t]
\end{equation} and
$$
h_{mag}(y_jv_{j,b_j})=k+(b_j-1)\cdot d+d\sum ^{j-1}_{i=1}b_{t-i+1}=k+d\sum ^{j}_{i=1}b_{t-i+1}
$$ The last edge $y_1v_{1,b_1}$ is colored with
$$
h_{mag}(y_1v_{1,b_1})=k+(b_1-1)\cdot d+d\sum ^{t-1}_{i=1}b_{t-i+1}=k+(B-1)\cdot d
$$

\textbf{Step D-4.3.} Color edges $x_iu_{i,j}$ for $u_{i,j}\in L(x_i)$ with $j\in [1,a_i]$ and $i\in[1,s]$.

From $h_{mag}(x_su_{s,j})=k+j\cdot d+(B-1)\cdot d$ for $j\in [1,a_1]$, and
$$h_{mag}(x_su_{s,a_s})=k+a_s\cdot d+(B-1)\cdot d
$$ we get a formula
\begin{equation}\label{eqa:555555}
h_{mag}(x_iu_{i,j})=k+j\cdot d+(B-1)\cdot d+d\sum^{s-i}_{r=1}a_{s-r+1},~j\in [1,a_i],~ i\in[2,s]
\end{equation} The last edge $x_1u_{1,a_1}$ is colored with
$$
h_{mag}(x_1u_{1,a_1})=k+a_1\cdot d+(B-1)\cdot d+d\sum^{s-1}_{r=1}a_{s-r+1}=k+(A+B-1)\cdot d=k+(m-1)\cdot d
$$

\textbf{Step D-4.4.} Let $M\,''=2k+[\max f(X)+q+2m-1]\cdot d$ defined in Eq.(\ref{eqa:para-edge-magic-total-coloring}). Color added leaves $u_{i,j}\in L(x_i)$ for $j\in [1,a_i]$ and $i\in[1,s]$ as: $$h_{mag}(u_{i,j})=M\,''-h_{mag}(x_iu_{i,j})-h_{mag}(x_i),~j\in [1,a_i],~i\in[1,s]$$
which is just the edge-magic constraint.

\textbf{Step D-4.5.} Color added leaves $v_{j,r}\in L(y_j)$ with $r\in [1,b_j]$ and $j\in[1,t]$ as:
$$h_{mag}(v_{j,r})=M\,''-h_{mag}(y_jv_{j,r})-h_{mag}(y_j),~r\in [1,b_j],~j\in[1,t]$$
which is just the edge-magic constraint $h_{mag}(v_{j,r})+h_{mag}(y_jv_{j,r})+h_{mag}(y_j)=M\,''$.

Thereby, we claim that $h_{mag}$ is really an edge-magic $(k,d)$-total coloring of $G^*$.

\textbf{Step D-5.} Return the edge-magic $(k,d)$-total coloring $h_{mag}$ of $G^*$.

\vskip 0.4cm

Fig.\ref{fig:4-edge-magic-leaves} is for understanding the RLA-algorithm-D for the edge-magic $(k,d)$-total coloring, there are:

(a) A connected graph $H$ admits a set-ordered edge-magic total labeling $f_{mag}$ holding the edge-magic constraint
$$f_{mag}(x)+f_{mag}(xy)+f_{mag}(y)=3,~xy\in E(L)
$$

(b) a parameterized connected graph $H_{\textrm{p}}$ admitting an edge-magic $(k,d)$-total labeling $g_{mag}$ with the edge-magic constraint
$$g_{mag}(x)+g_{mag}(xy)+ g_{mag}(y)=2k+11d,~xy\in E(H_{\textrm{p}})
$$

(c) adding leaves to $H_{\textrm{p}}$ produces a connected graph $H_{\textrm{p-leaf}}$;

(d) a connected graph $Hh_{\textrm{p-leaf}}$ admits an edge-magic $(k,d)$-total coloring $h_{mag}$ with the edge-magic constraint
$$h_{mag}(x)+h_{mag}(xy)+ h_{mag}(y)=2k+35d,~xy\in E(Hh_{\textrm{p-leaf}})
$$

\begin{figure}[h]
\centering
\includegraphics[width=16.4cm]{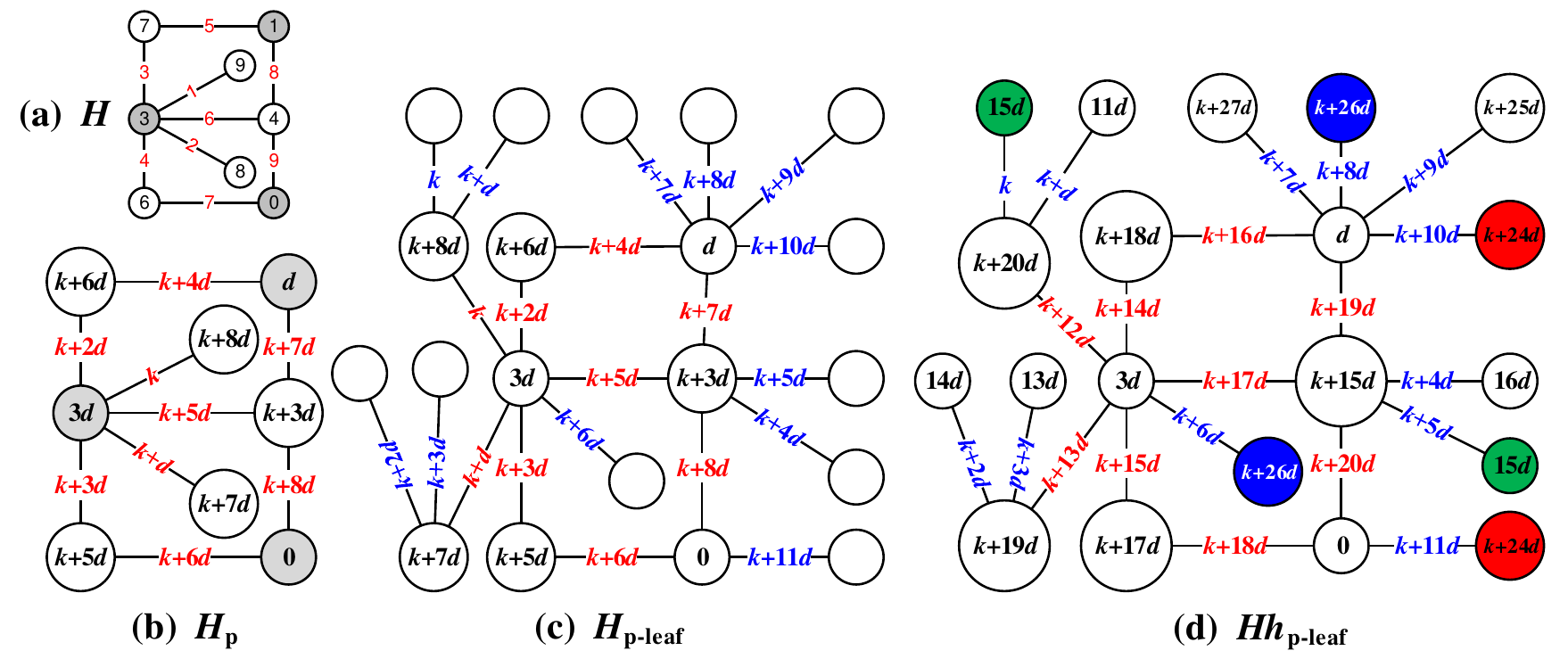}\\
\caption{\label{fig:4-edge-magic-leaves} {\small For understanding the RLA-algorithm-D for the edge-magic $(k,d)$-total coloring.}}
\end{figure}

\begin{figure}[h]
\centering
\includegraphics[width=16.4cm]{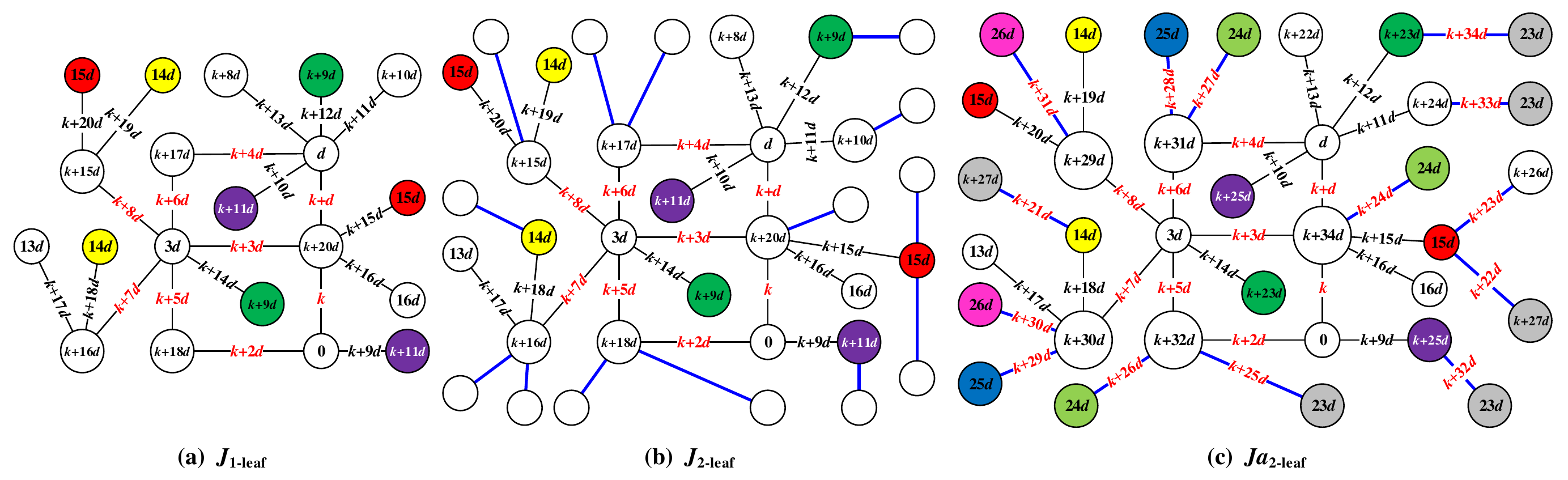}\\
\caption{\label{fig:2add-leaf-edge-difference} {\small Adding leaf two times, edge-difference $(k,d)$-total coloring.}}
\end{figure}

\begin{figure}[h]
\centering
\includegraphics[width=16.4cm]{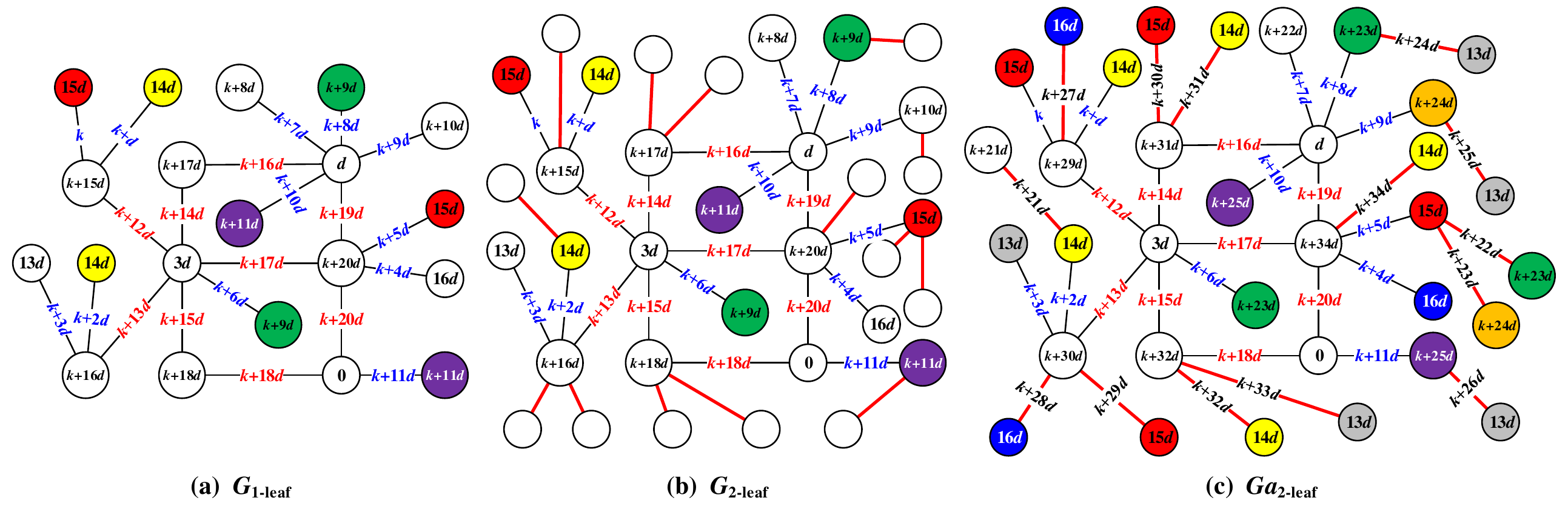}\\
\caption{\label{fig:2add-leaf-graceful-difference} {\small Adding leaf two times, graceful-difference $(k,d)$-total coloring.}}
\end{figure}

\begin{figure}[h]
\centering
\includegraphics[width=16.4cm]{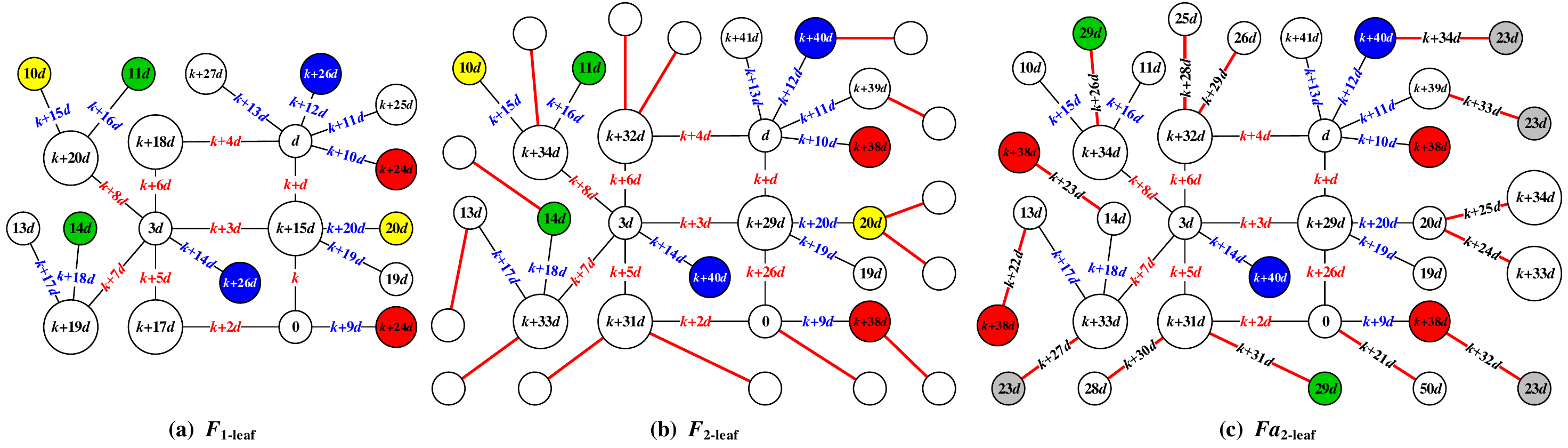}\\
\caption{\label{fig:2add-leaf-felicitous-difference} {\small Adding leaf two times, felicitous-difference $(k,d)$-total coloring.}}
\end{figure}

The RLA-algorithm-D enables us to get the following result:

\begin{thm}\label{thm:trees-admits-edge-magic-k-d}
$^*$ Each tree admits an edge-magic $(k,d)$-total coloring and an odd-edge edge-magic $(k,d)$-total coloring.
\end{thm}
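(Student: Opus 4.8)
The plan is to build an edge-magic $(k,d)$-total coloring of an arbitrary tree $T$ by a leaf-peeling induction, using RLA-algorithm-D as the inductive engine. First I would dispose of the trivial cases: if $T$ is a single vertex, an edge, or more generally a star $K_{1,n}$, then $T$ carries a set-ordered graceful labeling $f$, and Steps D-2 and D-3 of RLA-algorithm-D convert $f$ into an edge-magic $(k,d)$-total labeling $g_{mag}$ with $g_{mag}(X)\subseteq S_{m,0,0,d}$, $g_{mag}(Y\cup E)\subseteq S_{q-1,k,0,d}$, a constant edge-magic sum $2k+[\max f(X)+q-1]d$, and monotone color lists on $X$ and on $Y$. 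This serves as the base case, and it already records the property that we will carry through the induction, namely that the coloring is in the standard "ordered form" just described.

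For a tree $T$ with diameter $D(T)\ge 3$, set $T_1=T$ and $T_{i+1}=T_i-L_{eaf}(T_i)$ for $i\in[1,m]$, where $m+1=\lceil D(T)/2\rceil$, so that $T_{m+1}$ is a star. The base case equips $T_{m+1}$ with an edge-magic $(k,d)$-total coloring $g^{(m+1)}$ in standard ordered form. The inductive step is precisely Step D-4 of RLA-algorithm-D: given $T_{i+1}$ together with such a coloring $g^{(i+1)}$, I would first re-shift the colors as in D-4.1, then color the new leaf-edges incident with the $Y$-vertices and then with the $X$-vertices by consecutive blocks (D-4.2, D-4.3) so that the edge color set of $T_i$ becomes $S_{q_i-1,k,0,d}$, and finally define the colors of the new leaves (D-4.4, D-4.5) so that $g^{(i)}(u)+g^{(i)}(uv)+g^{(i)}(v)$ equals the single value $M''$ on every new edge, while the D-4.1 shift keeps the old edges magic with the same $M''$. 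Since $|L_{eaf}(T_i)|\ge 2$ each step is available, and iterating from $i=m$ down to $i=1$ yields the desired edge-magic $(k,d)$-total coloring of $T=T_1$; choosing an "LS" or "SL" order in D-4.2--D-4.3 at each stage gives at least $2^m$ such colorings, in agreement with Theorem \ref{thm:edge-magic-algorithm}.

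For the odd-edge edge-magic $(k,d)$-total coloring I would run the same scheme but replace the edge parametrization $g(x_iy_j)=k+[f_{mag}(x_iy_j)-1]d$ in Step D-3 by $g(x_iy_j)=k+[2f_{mag}(x_iy_j)-1]d$, so that the edge color set becomes $O_{2q-1,k,d}=\{k+d,k+3d,\dots,k+(2q-1)d\}$ while the vertex colors are unchanged; the edge-magic sum is again a constant, now $2k+[\max f(X)+2q-2]d$. In the leaf-adding step the new leaf-edges must then be colored by the odd blocks $k+(2j-1)d,\dots$ so that the total edge color set of $T_i$ is $O_{2q_i-1,k,d}$, and the leaves are colored to keep the magic sum fixed; this is the $d\mapsto 2d$-type modification of D-4.2--D-4.5, and it preserves the ordered form needed for the next round.

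The main obstacle is purely combinatorial bookkeeping: one must verify that the single magic constant is respected simultaneously on the original edges (which are only re-shifted) and on every family of newly added leaf-edges, and that after the shift the vertex color set still lies inside the prescribed union $S_{m,0,0,d}\cup S_{q_i-1,k,0,d}$ (respectively $S_{m,0,0,d}\cup O_{2q_i-1,k,d}$) with the monotone ordering required to feed the next step of the induction. This is exactly what Steps D-4.1--D-4.5 accomplish in one pass, so the real work is isolating "standard ordered form" as an invariant of that pass; once that is done, the leaf-peeling induction and its odd-edge variant go through without further ideas.
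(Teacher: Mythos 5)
Your treatment of the ordinary edge-magic $(k,d)$-total coloring is sound and is essentially the paper's own (largely implicit) route: the paper merely points at RLA-algorithm-D, and your leaf-peeling induction with Step D-4 as the inductive engine is the natural way to make that precise. One remark on the invariant: what Step D-4 actually needs and preserves is weaker than your ``standard ordered form''. It suffices that the $X$-colors are nonnegative multiples of $d$, the $Y$- and edge-colors lie in $\{k+jd:~j\in Z^0\}$, the edge colors fill $\{k,k+d,\dots,k+(q-1)d\}$, and the magic sum is a single constant $M$; the monotone orderings play no role, and the containment $g(Y\cup E(T_i))\subseteq S_{q_i-1,k,0,d}$ is in fact \emph{not} preserved (the new $Y$-side leaves receive colors exceeding $k+(q_i-1)d$), but nonnegativity of all new leaf colors still follows automatically, since every vertex lies on an edge and hence $\max g(X)\le M-2k$ and $\max g(Y)\le M-k$. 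So the even half goes through once the invariant is stated this way.

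The odd-edge half, however, contains a genuine error. Doubling only the edge parametrization, $g(x_iy_j)=k+[2f_{mag}(x_iy_j)-1]d$ while keeping $g(x_i)=f_{mag}(x_i)\cdot d$ and $g(y_j)=k+[f_{mag}(y_j)-1]d$, destroys the magic constant: the sum equals $2(k-d)+[f_{mag}(x_i)+f_{mag}(y_j)+2f_{mag}(x_iy_j)]d$, and since it is $f_{mag}(x_i)+f_{mag}(y_j)+f_{mag}(x_iy_j)=\max f(X)+q+1$ that is constant, the extra copy of $f_{mag}(x_iy_j)$ makes the bracket vary from edge to edge. Concretely, for the path with $X=\{x_1,x_2\}$, $Y=\{y\}$, $f(x_1)=0$, $f(x_2)=1$, $f(y)=2$, your formulas give sums $2k+3d$ on $x_1y$ but $2k+4d$ on $x_2y$, so your claimed constant $2k+[\max f(X)+2q-2]d$ is not attained. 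The repair is to rescale the vertex side as well, e.g. $g(x_i)=2f_{mag}(x_i)d$, $g(y_j)=k+[2f_{mag}(y_j)-1]d$, $g(x_iy_j)=k+[2f_{mag}(x_iy_j)-1]d$, which yields the edge color set $\{k+d,k+3d,\dots,k+(2q-1)d\}$ and the constant sum $2k+2[\max f(X)+q]d$; in the leaf-adding pass the old edge colors must then be shifted by $2md$ (not $md$) so that the $m$ new edges can take $k+d,k+3d,\dots,k+(2m-1)d$, after which the parity classes and the magic constraint are preserved exactly as in the even case and your induction carries over to the odd-edge statement.
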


\begin{thm}\label{thm:666666-edge-magic-11}
$^*$ If a connected graph $G$ admits a set-ordered graceful labeling, then $G$ admits an edge-magic $(k,d)$-total coloring and an odd-edge edge-magic $(k,d)$-total coloring.
\end{thm}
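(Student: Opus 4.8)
The plan is to recognize the first assertion as the ``no leaves added'' special case of RLA-algorithm-D, and to obtain the odd-edge version by running the same dualization on a doubled scale so that the edge colors land on the odd progression $O_{2q-1,k,d}$.

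For the edge-magic $(k,d)$-total coloring I would carry out Step~D-1 through Step~D-3 of RLA-algorithm-D with the leaf count $m=0$, so $G^{*}=G$. Writing $V(G)=X\cup Y$ with $0=f(x_1)<\cdots<f(x_s)<f(y_1)<\cdots<f(y_t)=q$ from the given set-ordered graceful labeling $f$, set $f_{mag}(x_i)=\max f(X)-f(x_i)$, $f_{mag}(y_j)=f(y_j)$, $f_{mag}(x_iy_j)=q+1-f(x_iy_j)$; the computation in Step~D-2 yields the edge-magic constraint $f_{mag}(x_i)+f_{mag}(x_iy_j)+f_{mag}(y_j)=\max f(X)+q+1$. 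Parameterizing by $g_{mag}(x_i)=f_{mag}(x_i)\cdot d$, $g_{mag}(y_j)=k+[f_{mag}(y_j)-1]\cdot d$, $g_{mag}(x_iy_j)=k+[f_{mag}(x_iy_j)-1]\cdot d$, Step~D-3 gives the edge color set $S_{q-1,k,0,d}$ and the constant edge-magic value $2k+[\max f(X)+q-1]\cdot d$, so $g_{mag}$ is an edge-magic $(k,d)$-total coloring of $G$. An alternative is to first convert $f$ into a graceful $(k,d)$-total coloring of $G$ in the obvious way and then invoke Theorem~\ref{thm:equivalent-k-d-total-colorings}.

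For the odd-edge edge-magic $(k,d)$-total coloring I would use the same $X$-dual and edge-dual moves, doubled. Define a total coloring $h$ of $G$ by $h(x_i)=2[\max f(X)-f(x_i)]\cdot d$ for $x_i\in X$, $h(y_j)=k+2f(y_j)\cdot d$ for $y_j\in Y$, and $h(x_iy_j)=k+[2q+1-2f(x_iy_j)]\cdot d$ for each edge $x_iy_j\in E(G)$. Since $f(x_iy_j)=f(y_j)-f(x_i)$ runs bijectively over $[1,q]$, the integers $2q+1-2f(x_iy_j)$ run over $\{1,3,\dots,2q-1\}$, so $h(E(G))=O_{2q-1,k,d}$; moreover $h(X)$ consists of nonnegative multiples of $d$, hence lies in $S_{2\max f(X),0,0,d}$, and $h(Y)\cup h(E(G))\subseteq S_{2q,k,0,d}$, which matches the color pattern prescribed in Definition~\ref{defn:odd-edge-W-type-total-labelings-definition}. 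Using $f(x_iy_j)=f(y_j)-f(x_i)$ again, $h(x_i)+h(x_iy_j)+h(y_j)=2k+[2\max f(X)+2q+1]\cdot d$ is a constant independent of the edge, so $h$ is an odd-edge edge-magic $(k,d)$-total coloring of $G$; since $h(X)$ and $h(Y)$ occupy disjoint ranges, $h$ is in fact injective on $V(G)$.

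The two identities for the magic constants are routine and parallel those already displayed inside RLA-algorithm-D, so I expect no real difficulty there. The step needing care — and the only one where set-orderedness is genuinely used — is checking that the three color classes sit inside the prescribed target sets with the edge class filling its progression exactly; this rests on the chain $0=f(x_1)<\cdots<f(x_s)<f(y_1)<\cdots<f(y_t)=q$, which forces $\max f(X)-f(x_i)\ge 0$ and $f(y_j)-1\ge 0$, makes the induced edge labels exactly $1,2,\dots,q$, and, in the odd-edge case, keeps both the even values $k+2f(y_j)d$ and the odd values $k+[2q+1-2f(x_iy_j)]d$ inside $S_{2q,k,0,d}$.
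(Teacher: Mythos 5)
Your first half is exactly the paper's own route: the theorem is presented in the paper as a direct consequence of RLA-algorithm-D, and running Steps D-1 through D-3 with $m=0$ (so $G^{*}=G$) reproduces the paper's computation verbatim, including the magic constants $\max f(X)+q+1$ and $2k+[\max f(X)+q-1]\cdot d$ and the edge color set $S_{q-1,k,0,d}$; the detour through a graceful $(k,d)$-total coloring and Theorem \ref{thm:equivalent-k-d-total-colorings} that you mention is likewise available in the paper. For the odd-edge half the paper gives no explicit construction (the odd-edge claims accompany the RLA-algorithms with the machinery delegated to a cited preprint), so your doubled dualization $h(x_i)=2[\max f(X)-f(x_i)]\cdot d$, $h(y_j)=k+2f(y_j)\cdot d$, $h(x_iy_j)=k+[2q+1-2f(x_iy_j)]\cdot d$ is a genuine supplement rather than a restatement. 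I checked it: the sum telescopes to the constant $2k+[2\max f(X)+2q+1]\cdot d$ precisely because $f(x_iy_j)=f(y_j)-f(x_i)$ under set-orderedness, the edge colors exhaust $\{k+d,k+3d,\dots,k+(2q-1)d\}$ (the only reading of $O_{2q-1,k,d}$ consistent with a $q$-element edge color set), and the containments $h(X)\subseteq S_{2\max f(X),0,0,d}$, $h(Y)\cup h(E(G))\subseteq S_{2q,k,0,d}\cup O_{2q-1,k,d}$ required by Definition \ref{defn:odd-edge-W-type-total-labelings-definition} hold. The one caveat is terminological: since your $h$ is injective on $V(G)$, that definition would strictly call it an odd-edge edge-magic $(k,d)$-total \emph{labeling} ($|h(V(G))|=p$) rather than a \emph{coloring} ($|h(V(G))|<p$); this is the stronger object and consistent with the paper's loose usage, but it is worth a sentence acknowledging the convention.
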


\begin{thm}\label{thm:666666-edge-magic-22}
(i) $^*$ A connected graph $G$ admits an edge-magic $(k,d)$-total coloring if and only if there exists a tree $T$ admitting an edge-magic $(k,d)$-total coloring such that such that the result of vertex-coinciding each group of vertices colored the same colors into one vertex is just $G$.

(ii) $^*$ A connected bipartite graph $H$ admits an odd-edge edge-magic $(k,d)$-total coloring if and only if there exists a tree $T^*$ admitting an odd-edge edge-magic $(k,d)$-total coloring such that such that the result of vertex-coinciding each group of vertices colored the same color in $T^*$ into one vertex is just $H$.
\end{thm}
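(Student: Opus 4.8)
The plan is to prove both equivalences by using the \emph{vertex-splitting tree-operation} together with its inverse, the \emph{vertex-coinciding operation}: one writes $T=\wedge(G)$ and $G=\wedge^{-1}(T)=\odot_m(T)$ as in the preliminaries. The single structural fact driving everything is that neither operation alters the edge set nor any edge--endpoint incidence; consequently any $W$-constraint evaluated edge-by-edge from the colours of the two ends — in particular the edge-magic constraint $f(u)+f(uv)+f(v)=c$ and the odd-edge edge-magic constraint $h(u)+h(uv)+h(v)=c$ — together with the induced edge colour set, are preserved under both operations. Parts (i) and (ii) therefore share the same proof skeleton, the only changes in (ii) being that the prescribed edge colour set is $O_{2q-1,k,d}$ rather than $S_{q-1,k,0,d}$, and that $H$ must stay bipartite (which a tree $T^*$ automatically is, and which is imposed on the contraction). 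Note the hypothesis of the ``if'' direction is never vacuous: every tree admits such colorings by Theorem \ref{thm:trees-admits-edge-magic-k-d} and Theorem \ref{thm:666666-edge-magic-11}.

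For the ``if'' direction of (i), assume the tree $T$ carries an edge-magic $(k,d)$-total coloring $f$ and that coinciding a family of monochromatic vertex groups of $T$ produces $G$. Define $\overline{f}$ on $G$ by $\overline{f}(\overline{v})=f(v)$ for any $v$ in the group contracted to $\overline{v}$ — well defined because that group is monochromatic — and $\overline{f}(\overline{e})=f(e)$ for the edge $\overline{e}$ of $G$ inherited from $e\in E(T)$; here $E(G)$ and $E(T)$ are in natural bijection since $\odot_m$ adds no edge and, $G$ being a graph in the paper's sense, the contraction produces neither a loop nor a multiple edge. Then for each $\overline{e}=\overline{u}\,\overline{v}$ one gets $\overline{f}(\overline{u})+\overline{f}(\overline{e})+\overline{f}(\overline{v})=f(u)+f(e)+f(v)=c$; the edge colour set is $\overline{f}(E(G))=f(E(T))=S_{q-1,k,0,d}$; the total colour set lies in $f(V(T)\cup E(T))\subseteq S_{m,0,0,d}\cup S_{q-1,k,0,d}$; and $G$ is connected since vertex-coinciding preserves connectivity. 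Hence $\overline{f}$ is an edge-magic $(k,d)$-total coloring of $G$ in the sense of Definition \ref{defn:kd-w-type-colorings}, read, when $G$ fails to be bipartite, through the general $W$-constraint convention of Remark \ref{rem:kd-w-tupe-colorings-definition}.

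For the ``only if'' direction of (i), let $f$ be an edge-magic $(k,d)$-total coloring of $G$. Apply the vertex-splitting tree-operation to reach a tree $T=\wedge(G)$ on $q+1$ vertices with $G=\odot_m(T)$, and transport $f$: every vertex of $T$ split off from $x\in V(G)$ is given colour $f(x)$, and every edge of $T$, corresponding to a unique edge of $G$, keeps its $f$-colour. Since each edge of $T$ retains the colours of its two ends, the edge-magic constraint still holds on all of $E(T)$; also $f(E(T))=f(E(G))=S_{q-1,k,0,d}$ and the vertex colours of $T$ form a subset of those of $G$, so $T$ admits an edge-magic $(k,d)$-total coloring; by construction the split-off groups are monochromatic and coinciding them recovers $G$. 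One also checks that the bipartition of $T$ is the natural refinement of that of $G$, so the colour range does split as $S_{m,0,0,d}$ on one part and $S_{q-1,k,0,d}$ on the other part together with the edges.

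The point needing care — and the main obstacle — is the precise meaning of ``vertex-coinciding \emph{each group} of vertices coloured the same colour'': taken literally as ``every maximal monochromatic set is a group'', the ``only if'' direction would force the transported coloring on $T$ to have colour classes coinciding with the split-off groups, which can fail since an edge-magic $(k,d)$-total coloring may repeat vertex colours across genuinely distinct vertices of $G$. I would resolve this either by adopting the reading consistent with the paper's set-up of $\wedge$ and $\odot_m$ as mutually inverse operations — the groups being a chosen monochromatic partition, not necessarily the maximal one, under which the three steps above already close the argument — or, for the strong reading, by combining the freedom in choosing the tree inside $V_{\textrm{split}}(G)$ with the abundance of edge-magic $(k,d)$-total colorings of a tree (Theorems \ref{thm:trees-admits-edge-magic-k-d} and \ref{thm:666666-edge-magic-11}) to select one whose colour classes are exactly the desired fibres of $V(T)\to V(G)$. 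Finally, part (ii) is obtained by rerunning the three steps with $O_{2q-1,k,d}$ in place of $S_{q-1,k,0,d}$, observing that the odd-edge edge-magic constraint is equally edge-local and that the contraction of $T^*$ keeps the graph bipartite as required of $H$.
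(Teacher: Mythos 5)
Your proof is correct, and it supplies essentially the argument the paper intends: Theorem \ref{thm:666666-edge-magic-22} is stated in the paper without any proof, but the surrounding machinery it relies on --- vertex-splitting a connected graph into a tree $T$ with $G=\odot_m(T)$, transporting the total coloring so each fibre is monochromatic, and observing that both operations leave the edge set, the edge colour set and the edge-local constraint $f(u)+f(uv)+f(v)=c$ untouched --- is exactly what you use, for part (ii) with $O_{2q-1,k,d}$ in place of $S_{q-1,k,0,d}$. On your interpretive worry, the paper's own usage elsewhere (``we vertex-coincide \emph{some} vertices colored the same color into one vertex, the resultant graph is denoted as $G$'', in the tree-graphic lattice homomorphism discussion) confirms your weaker reading, so your main three-step argument already closes both directions; this is just as well, because your ``strong reading'' fallback --- selecting a tree in $V_{\textrm{split}}(G)$ and an edge-magic $(k,d)$-total coloring whose vertex colour classes are exactly the fibres of $V(T)\rightarrow V(G)$ --- is not actually delivered by Theorems \ref{thm:trees-admits-edge-magic-k-d} and \ref{thm:666666-edge-magic-11} and should not be presented as established.
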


\section{Parameterized labelings on trees having perfect matchings}

\subsection{Basic labelings on trees having perfect matchings}

\begin{defn} \label{defn:arm-odd-elegant-perfect-matchings}
\cite{SH-ZXH-YB-2017-1, SH-ZXH-YB-2017-2} Let $T$ be a $(p,q)$-tree having a perfect matching $M$. Suppose that each leaf of $T$ is an end of some matching edge of $M$, and $M(L)$ is the set of all leaves of $T$, and furthermore the graph $T-M(L)$ obtained by deleting all leaves of $T$ is just a tree having $k=|V(T-M(L))|$ vertices.

(1) \cite{SH-ZXH-YB-2017-1} Suppose that $T$ admits an odd-elegant labeling~$f_1$. If $|f_1(u)-f_1(v)|=1$ for each matching edge $uv \in M$ holds true, and the elements of $f_1(E(M))$ form an arithmetic progression having $|M|$ terms with the first term $1$ and the common difference 4. Then we call $f_1$ \emph{arithmetic matching odd-elegant labeling} (\emph{arm-odd-elegant labeling}) of $T$, and $T$ is an \emph{arm-odd-elegant graph}.

(2) \cite{SH-ZXH-YB-2017-1} Suppose that $T$ admits a super felicitous labeling~$f_2$. If $|f_2(u)-f_2(v)|=k$ for every matching edge $uv \in M$, we call $f_2$ \emph{arithmetic matching super felicitous labeling} (\emph{arm-super felicitous labeling}) of $T$, and $T$ an \emph{arm-super felicitous graph}.

(3) \cite{SH-ZXH-YB-2017-1} Suppose that $T$ admits a super edge-magic graceful labeling~$f_3$. If $|f_3(u)-f_3(v)|=k$ for each matching edge $uv \in M$, and the elements of $f_3(E(T))$ produce an arithmetic progression having $|M|$ terms with the first term $p+1$ and the common difference $2$. Then we call $f_3$ \emph{arithmetic matching super edge-magic graceful labeling} (\emph{arm-super edge-magic graceful labeling}) of $T$, and $T$ an \emph{arm-super edge-magic graceful graph}.

(4) \cite{SH-ZXH-YB-2017-2} Suppose that $T$ admits a super edge-magic total labeling~$f_4$. If $|f_4(u)-f_4(v)|=k$ for each matching edge $uv \in M$, and the elements of $f_4(E(T))$ yield an arithmetic progression having $|M|$ terms with the first term $2p-1$ and the common difference $-2$. Then we call $f_4$ \emph{arithmetic matching super edge-magic total labeling} (\emph{arm-super edge-magic total labeling}) of $T$, and $T$ an \emph{arm-super edge-magic total graph}.\qqed
\end{defn}

\begin{defn} \label{defn:am-parameterized-matchings}
\cite{SH-ZXH-YB-2017-2} Let $T$ be a tree with a perfect matching $M$.

(1) Suppose that $T$ admits a super $(k,d)$-edge antimagic labeling $g$ defined in Definition \ref{defn:k-d-arithmetic-labelings}. If $|g(u)-g(v)|=k$ for each matching edge $uv \in M$, and

(i) the elements of $g(E(T))$ produce an arithmetic progression having $|M|$ terms with the first term $p+1$ and the common difference $2$;

(ii) $g(u)+g(v)=b_i$ forms a sequence $\{b_i\}$ being an arithmetic progression having $|M|$ terms with the first term $k+2$ and the common difference $2$, where $n=|M|$.

Then we call $g$ an \emph{arithmetic matching super edge antimagic $(k,d)$-total labeling} (\emph{arm-super edge-antimagic $(k,d)$-total labeling}) of $T$, and $T$ an \emph{arm-super edge-antimagic $(k,d)$-total labeling graph}.

(2) Suppose that $T$ admits a $(k,d)$-arithmetic labeling $h$ defined in Definition \ref{defn:k-d-arithmetic-labelings}. If $|h(u)-h(v)|=k$ for each matching edge $uv \in M$, and the elements of $h(E(T))$ produce an arithmetic progression having $|M|$ terms with the first term $k$ and the common difference $2d$. Then we call $h$ an \emph{arithmetic matching $(k,d)$-arithmetic labeling} (\emph{arm-$(k,d)$-arithmetic labeling}) of $T$, and $T$ an \emph{arm-$(k,d)$-arithmetic graph}.\qqed
\end{defn}

\begin{figure}[h]
\centering
\includegraphics[width=16.4cm]{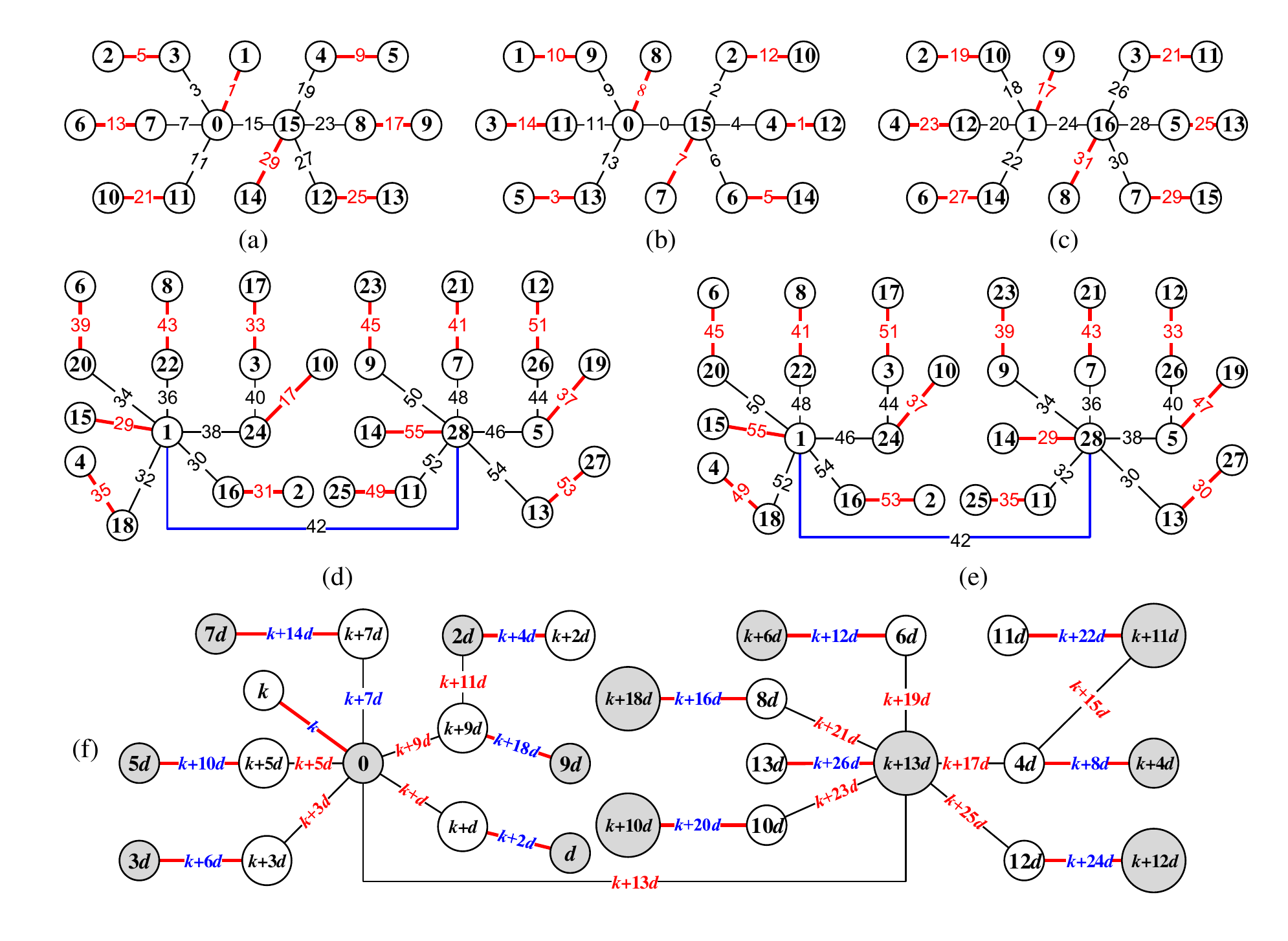}
\caption{\label{fig:for-illustrating-de-59}{\small For illustrating Definition \ref{defn:arm-odd-elegant-perfect-matchings} and Definition \ref{defn:am-parameterized-matchings}: (a) An arm-odd-elegant labeling; (b) an arm-super felicitous labeling; (c) an arm-super edge-magic graceful labeling; (d) an arm-super $(|X|+p+3,2)$-edge antimagic total labeling; (e) an arm-super edge-total graceful labeling; (f) an arm-$(k,d)$-arithmetic labeling.}}
\end{figure}

\subsection{$(k,d)$-constraint labelings on trees with perfect matchings}

\begin{defn} \label{defn:totally-odd-even-graceful-labeling}
\cite{Sun-Zhang-Yao-ICMITE-2017} Let $T$ be a $(p,q)$-tree having a perfect matching $M$, and let

$S_{q-1,k,d}=\{k, k+d, \dots , k +(q-1)d\}$,

$S\,'_{k,d}=\{k, k+2d, \dots , k +2(q-1)d\}$,

$S\,''_{k,d}=\{k, k+d, \dots , k +(p+q-1)d\}$\\
for integers $k\geq 1$ and $d\geq 1$. Suppose that each leaf of $T$ is an end of some matching edge of $M$, and $M(L)$ is the set of all leaves of $T$, and furthermore the graph $T-M(L)$ obtained by deleting all leaves of $T$ is a tree having $c=|V(T-M(L))|$ vertices. The abbreviation of ``arithmetic matching'' is ``arm''.
\begin{asparaenum}[S-1. ]
 \item An \emph{arm-$(k,d)$-totally odd (resp. even) graceful labeling} $f$ of $G$ holds $f(V(G))\subseteq [0, k+2(q-1)d]$, $f(x)\neq f(y)$ for distinct vertices $x,y\in V(G)$ and
 $$f(E(G))=\{|f(u)-f(v)|:\ uv\in E(G)\}=S\,'_{k,d}$$
Meanwhile, $f(u)+f(v)=k+2(q-1)d$ for each matching edge $uv \in M$.

 \item An \emph{arm-$(k,d)$-felicitous labeling} $f$ of $G$ holds $f(V(G))\subseteq [0, k+(q-1)d]$, $f(x)\neq f(y)$ for distinct vertices $x,y\in V(G)$ and
 $$
 f(E(G))=\{f(uv)=f(u)+f(v)~ (\bmod~qd): uv\in E(G)\}=S_{q-1,k,d}
 $$
and furthermore, $f$ is \emph{super} if
 $$
 f(V(G))=\left [0,\left (\frac{p}{2}-1\right )d\right ]\bigcup \left [k+\frac{pd}{2},k +(p-1)d\right ]
 $$
Meanwhile, $|f(u)-f(v)|=k+cd$ for each matching edge $uv \in M$.

 \item An \emph{arm-edge-magic $(k,d)$-total labeling} $f$ of $G$ holds $f(V(G)\cup E(G))\subseteq [0, k+(p+q-1)d]$ such that $f(u)+f(v)+f(uv)=a$ for any edge $uv\in E(G)$, where the magic constant $a$ is a fixed integer; and furthermore, $f$ is \emph{super} if $f(V(G))=[1,pd/2]\cup[k+pd/2,k +qd]$. Meanwhile, $|f(u)-f(v)|=k+(c-1)d$ for each matching edge $uv \in M$ and the elements of the edge color set $f(E(T))$ yield an arithmetic progression having $|M|$ terms with the first term $k+2(p-1)d$ and the common difference $-2d$.

 \item An \emph{arm-edge-magic $(k,d)$-graceful labeling} $f$ of $G$ holds $f(V(G)\cup E(G))\subseteq [0, k+(p+q-1)d]$ such that $\mid f(u)+f(v)-f(uv)\mid=a$ for any edge $uv\in E(G)$, where the magic constant $a$ is a fixed integer; and furthermore, $f$ is \emph{super} if $f(V(G))=[1,pd/2]\cup[k+pd/2,k+qd]$. Meanwhile, $|f(u)-f(v)|=k+(c-1)d$ for each matching edge $uv \in M$ and the elements of the edge color set $f(E(T))$ yield an arithmetic progression having $|M|$ terms with the first term $k+pd$ and the common difference $2d$.

 \item Suppose that $T$ admits a $(k,d)$-arithmetic labeling $f$. If $|f(u)-f(v)|=k$ for each matching edge $uv \in M$, and the elements of the edge color set $f(E(T))$ produce an arithmetic progression having $|M|$ terms with the first term $k$ and the common difference $2d$. Then we call $f$ \emph{arm-$(k,d)$-arithmetic labeling} of $T$.

 \item An \emph{arm-$(k,d)$-totally odd (resp. even)-elegant labeling} $f$ of $G$ holds $f(V(G))\subset [0,k+2(q-1)d]$, $f(u)\neq f(v)$ for distinct vertices $u,v\in V(G)$, and
 $$f(E(G))=\{f(uv)=f(u)+f(v)~(\bmod~k+2qd):uv\in E(G)\}=S\,'_{k,d}
 $$ Meanwhile, $|f(u)-f(v)|=k$ for each matching edge $uv \in M$ and the elements of the edge color set $f(E(T))$ yield an arithmetic progression having $|M|$ terms with the first term $k$ and the common difference $4d$.\qqed
\end{asparaenum}
\end{defn}

\begin{figure}[h]
\centering
\includegraphics[width=16.4cm]{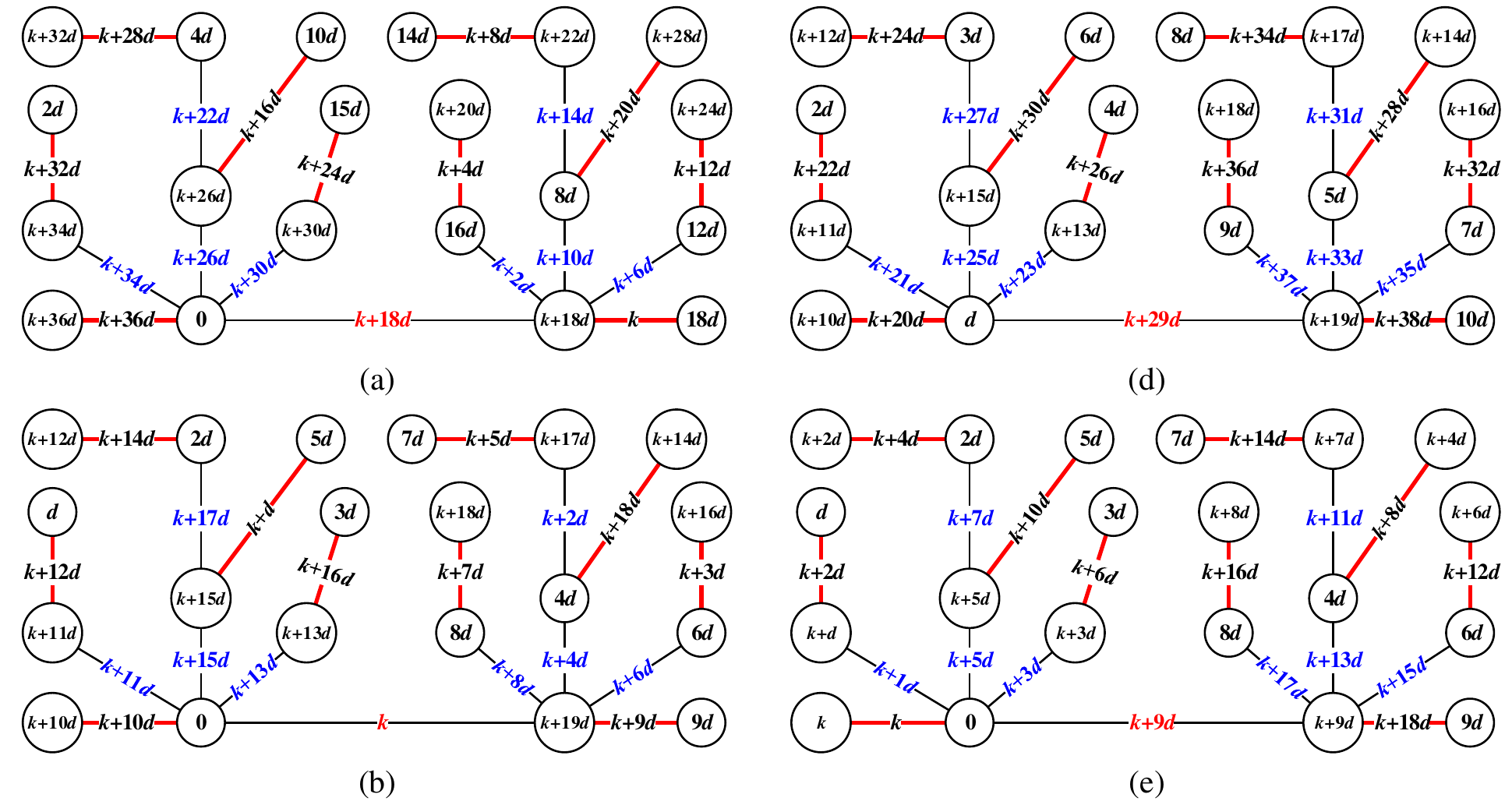}
\caption{\label{fig:matching-labelings-k-d}{\small For illustrating Definition \ref{defn:totally-odd-even-graceful-labeling}: (a) An arm-$(k,d)$-totally odd (resp. even) graceful labeling; (b) an arm-$(k,d)$-super felicitous labeling; (c) an arm-super edge-magic $(k,d)$-total labeling; (d) an arm-$(k,d)$-super edge-magic graceful labeling.}}
\end{figure}

\begin{defn} \label{defn:perfect-matching-definitions}
\cite{Sun-Wang-Yao-2020} Suppose that $T$ is a tree of $p$ vertices and $q$ edges, and has a perfect matching $M$ and its vertex bipartition $V(T)=X\cup Y$ with $X\cap Y=\emptyset$ such that each edge $xy\in E(T)$ holds $x\in X$ and $y\in Y$. Let $S_{m,\alpha,k\beta}=\{\alpha, \alpha + k\beta, \dots , \alpha +k(m-1)\beta\}$ with integers $k\geq 1$, $m\geq 2$, $\alpha\geq 0$ and $\beta\geq 1$, and let $c$ be a constant, as well as $\lambda=\lfloor \frac{p}{2} \rfloor$. There are the following restrictive conditions:

\begin{asparaenum}[C-1. ]
\item \label{parameter:totally-odd-even-graceful-v} $\sigma: V(T)\rightarrow S_{p-1,0,\beta}\cup S_{2p-3,\alpha,\beta}$, and $\sigma(x)\neq \sigma(w)$ for any pair of vertices $x,w\in V(T)$;
\item \label{parameter:felicitous-v} $\sigma: V(T)\rightarrow S_{\lambda,0,\beta}\cup S_{p,\alpha,\beta}$, and $\sigma(x)\neq \sigma(w)$ for any pair of vertices $x,w\in V(T)$;
\item \label{parameter:totally-odd-even-elegant} $\sigma(E(T))=\{\sigma(xy)=\sigma(x)+\sigma(y):xy\in E(T)\}=S_{q,\alpha,2\beta}$;
\item \label{parameter:arithmetic-labeling-e} $\sigma(E(T))=\{\sigma(xy)=\sigma(x)+\sigma(y):xy\in E(T)\}=S_{q,\alpha,\beta}$;

\item \label{parameter:totally-odd-even-graceful-e} Induced edge label $\sigma(xy)=|\sigma(x)-\sigma(y)|$ for each edge $xy\in E(T)$, and $\sigma(E(T))=\{\sigma(xy):xy\in E(T)\}=S_{q,\alpha,\beta}$;
\item \label{parameter:totally-odd-even-graceful-2e} Induced edge label $\sigma(xy)=|\sigma(x)-\sigma(y)|$ for each edge $xy\in E(T)$, and $\sigma(E(T))=\{\sigma(xy):xy\in E(T)\}=S_{q,\alpha,2\beta}$;
\item \label{parameter:felicitous-e} Induced edge label $\sigma(xy)-\alpha=\sigma(x)+\sigma(y)-\alpha~ (\bmod ~q\beta)$ for each edge $xy\in E(T)$, and $\sigma(E(T))=\{\sigma(xy):xy\in E(T)\}=S_{q,\alpha,\beta}$;

\item \label{parameter:edge-magic-total} $\sigma: V(T)\cup E(T)\rightarrow S_{\lambda+1,0,\beta}\cup S_{p+q,\alpha,\beta}$, and $\sigma(x)\neq \sigma(w)$ for any pair of vertices $x,w\in V(T)$;

\item \label{parameter:strongly-grace}$\sigma(X)=S_{\lambda,0,\beta}$ and $\sigma(Y)=S_{p-1,\alpha,\beta}\setminus S_{\lambda-1,\alpha,\beta}$;
\item \label{parameter:super-0}$\sigma(X)=S_{\lambda,0,\beta}$ and $\sigma(Y)=S_{p,\alpha,\beta}\setminus S_{\lambda,\alpha,\beta}$;
\item \label{parameter:super} $\sigma(X)=S_{\lambda,0,\beta}\setminus \{0\}$ and $\sigma(Y)=S_{p,\alpha,\beta}\setminus S_{\lambda,\alpha,\beta}$;

------------ \emph{matching-set}

\item \label{parameter:felicitous-matching} $|\sigma(x)-\sigma(y)|=\alpha+\frac{1}{2}p \beta$ for each matching edge $xy \in M$;
\item \label{parameter:edge-magic-total-matching} $|\sigma(x)-\sigma(y)|=\alpha+\frac{1}{2}(p-2)\beta$ for each matching edge $xy \in M$;
\item \label{parameter:totally-odd-even-graceful-matching} $\sigma(x)+\sigma(y)=\alpha+2(q-1)\beta$ for each matching edge $xy \in M$;
\item \label{parameter:totally-odd-even-elegant-matching} $|\sigma(x)-\sigma(y)|=\alpha$ for each matching edge $xy \in M$;
\item \label{parameter:edge-difference-matching} $|\sigma(x)-\sigma(y)|=\alpha+(q-1)\beta$ for each matching edge $xy \in M$;

\item \label{parameter:arithmetic-matching-set} $\sigma(M)=S_{|M|,\alpha,2\beta}$;
\item \label{parameter:matching-set-2} $\sigma(M)=S_{|M|,\alpha,4\beta}$;
\item \label{parameter:matching-set-0}$\sigma(M)=\{\alpha+2(p-1)\beta, \alpha+2(p-2)\beta,\dots ,\alpha+2(p-|M|)\beta\}$;
\item \label{parameter:matching-set-1} $\sigma(M)=\{\alpha+p\beta, \alpha+(2+p)\beta,\dots ,\alpha+[2(|M|-1)+p]\beta\}$;

------------ \emph{edge-magic}

\item \label{parameter:edge-magic-total-number} $\sigma(x)+\sigma(y)+\sigma(xy)=c$ for each edge $xy\in E(T)$;
\item \label{parameter:edge-difference} $\sigma(xy)+|\sigma(x)-\sigma(y)|=c$ for each edge $xy\in E(T)$;
\item \label{parameter:edge-magic-total-graceful} $|\sigma(x)+\sigma(y)-\sigma(xy)|=c$ for each edge $xy\in E(T)$;
\item \label{parameter:graceful-difference} $\big |\sigma(xy)-|\sigma(x)-\sigma(y)|\big |=c$ for each edge $xy\in E(T)$.
\end{asparaenum}
\textbf{Then the labeling $\sigma$ is}:
\begin{asparaenum}[\textrm{Label}-1. ]
\item \cite{S-M-Hegde2000} an \emph{$(\alpha,\beta)$-graceful labeling} if C-\ref{parameter:felicitous-v}, C-\ref{parameter:totally-odd-even-graceful-e} and C-\ref{parameter:edge-difference-matching} hold true;
\item a \emph{set-ordered $(\alpha,\beta)$-graceful labeling} if C-\ref{parameter:felicitous-v}, C-\ref{parameter:totally-odd-even-graceful-e}, C-\ref{parameter:edge-difference-matching} and C-\ref{parameter:strongly-grace} hold true (see an example shown in Fig.\ref{fig:1-definition} (a));

\item an \emph{$(\alpha,\beta)$ totally odd/even-graceful labeling} if C-\ref{parameter:totally-odd-even-graceful-v}, C-\ref{parameter:totally-odd-even-graceful-2e} and C-\ref{parameter:totally-odd-even-graceful-matching} hold true (see an example shown in Fig.\ref{fig:1-definition} (b));

\item an \emph{arm-$(\alpha,\beta)$ totally odd/even-elegant labeling} if C-\ref{parameter:totally-odd-even-graceful-v}, C-\ref{parameter:totally-odd-even-elegant}, C-\ref{parameter:totally-odd-even-elegant-matching} and C-\ref{parameter:matching-set-2} hold true (see an example shown in Fig.\ref{fig:1-definition} (c));
\item an \emph{arm-$(\alpha,\beta)$ felicitous labeling} if C-\ref{parameter:felicitous-v}, C-\ref{parameter:felicitous-e} and C-\ref{parameter:felicitous-matching} hold true;
\item a \emph{super arm-$(\alpha,\beta)$ felicitous labeling} if C-\ref{parameter:felicitous-v}, C-\ref{parameter:felicitous-e}, C-\ref{parameter:felicitous-matching} and C-\ref{parameter:super-0} hold true (see an example shown in Fig.\ref{fig:2-definition} (e));

\item an \emph{arm-$(\alpha,\beta)$ arithmetic labeling} if C-\ref{parameter:edge-magic-total}, C-\ref{parameter:totally-odd-even-elegant-matching}, C-\ref{parameter:arithmetic-labeling-e} and C-\ref{parameter:arithmetic-matching-set} hold true (see an example shown in Fig.\ref{fig:2-definition} (d));

------------ \emph{$W$-magic}

\item an \emph{arm-$(\alpha,\beta)$ edge-magic total labeling} if C-\ref{parameter:edge-magic-total}, C-\ref{parameter:edge-magic-total-number}, C-\ref{parameter:edge-magic-total-matching} and C-\ref{parameter:matching-set-0} hold true;
\item a \emph{super arm-$(\alpha,\beta)$ edge-magic total labeling} if C-\ref{parameter:edge-magic-total}, C-\ref{parameter:edge-magic-total-number}, C-\ref{parameter:edge-magic-total-matching}, C-\ref{parameter:super} and C-\ref{parameter:matching-set-0} hold true (see an example shown in Fig.\ref{fig:2-definition} (f));
\item an \emph{arm-$(\alpha,\beta)$ edge-magic graceful labeling} if C-\ref{parameter:edge-magic-total}, C-\ref{parameter:edge-magic-total-matching}, C-\ref{parameter:edge-magic-total-graceful} and C-\ref{parameter:matching-set-1} hold true;
\item a \emph{super arm-$(\alpha,\beta)$ edge-magic graceful labeling} if C-\ref{parameter:edge-magic-total}, C-\ref{parameter:edge-magic-total-matching}, C-\ref{parameter:edge-magic-total-graceful}, C-\ref{parameter:super} and C-\ref{parameter:matching-set-1} hold true (see an example shown in Fig.\ref{fig:3-definition} (g));

\item an \emph{arm-$(\alpha,\beta)$ edge-difference labeling} if C-\ref{parameter:totally-odd-even-graceful-v}, C-\ref{parameter:edge-difference} and C-\ref{parameter:edge-difference-matching} hold true;
\item a \emph{super arm-$(\alpha,\beta)$ edge-difference labeling} if C-\ref{parameter:totally-odd-even-graceful-v}, C-\ref{parameter:edge-difference}, C-\ref{parameter:edge-difference-matching} and C-\ref{parameter:strongly-grace} hold true (see an example shown in Fig.\ref{fig:3-definition} (h));

\item an \emph{arm-$(\alpha,\beta)$ graceful-difference labeling} if C-\ref{parameter:totally-odd-even-graceful-v}, C-\ref{parameter:graceful-difference} and C-\ref{parameter:edge-difference-matching} hold true;
\item a \emph{super arm-$(\alpha,\beta)$ graceful-difference labeling} if C-\ref{parameter:totally-odd-even-graceful-v}, C-\ref{parameter:graceful-difference}, C-\ref{parameter:edge-difference-matching} and C-\ref{parameter:strongly-grace} hold true (see an example shown in Fig.\ref{fig:3-definition} (i)).\qqed
\end{asparaenum}
\end{defn}

\begin{figure}[h]
\centering
\includegraphics[width=16.4cm]{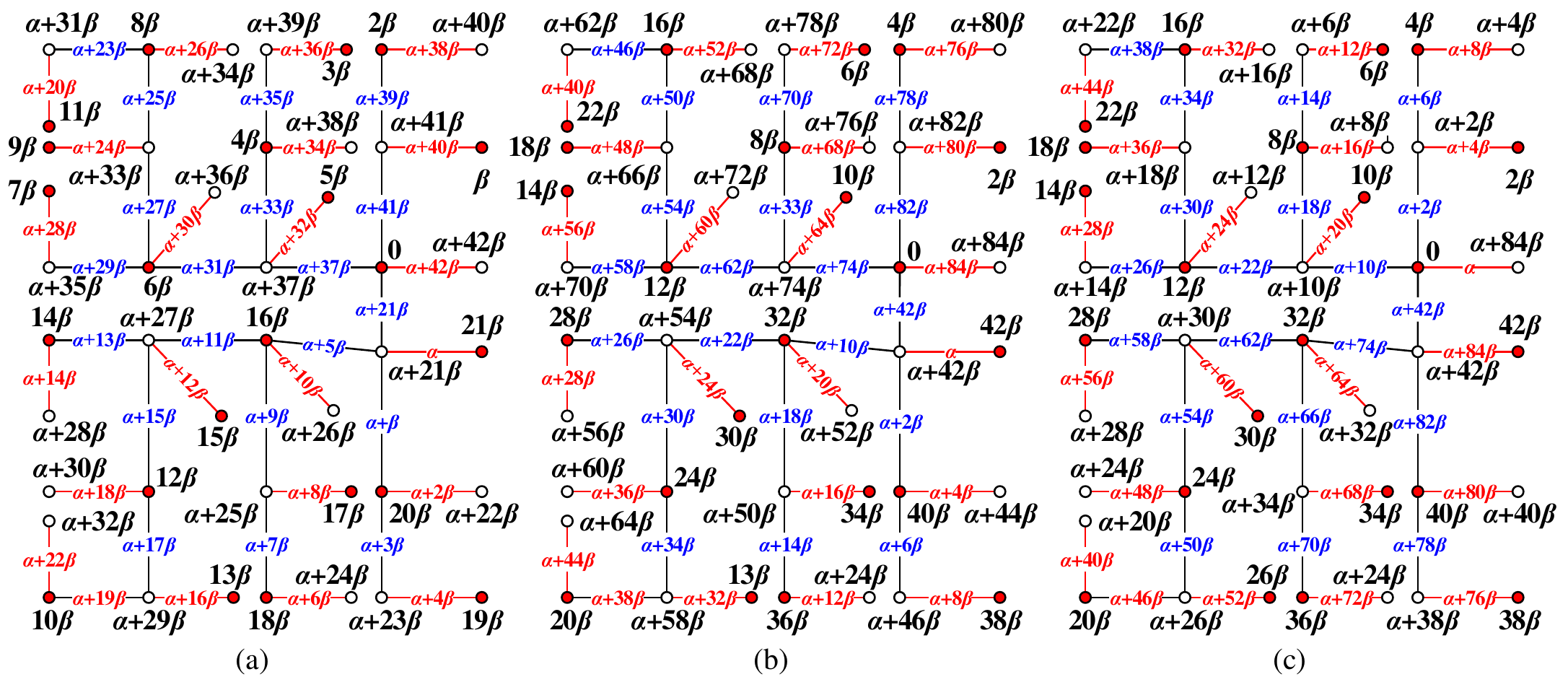}
\caption{\label{fig:1-definition}{\small $p=44$, $q=43$ and $|M|=22$ in Definition \ref{defn:perfect-matching-definitions}: (a) A strongly set-ordered $(\alpha,\beta)$-graceful labeling; (b) an arm-$(\alpha,\beta)$ totally odd/even-graceful labeling; (c) an arm-$(\alpha,\beta)$ totally odd/even-elegant labeling, cited from \cite{Sun-Wang-Yao-2020}.}}
\end{figure}

\begin{figure}[h]
\centering
\includegraphics[width=16.4cm]{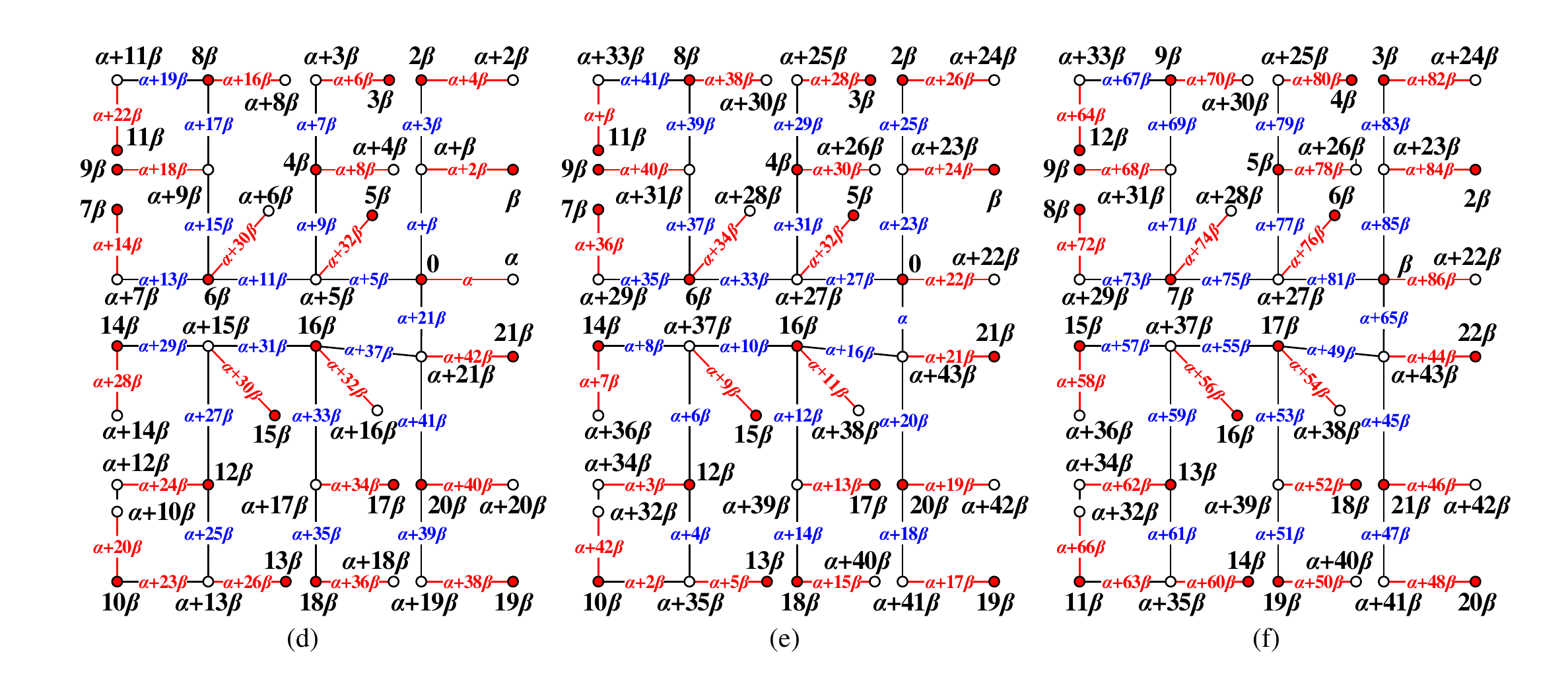}
\caption{\label{fig:2-definition}{\small $p=44$, $q=43$ and $|M|=22$ in Definition \ref{defn:perfect-matching-definitions}: (d) An arm-$(\alpha,\beta)$ arithmetic labeling; (e) a super arm-$(\alpha,\beta)$ felicitous labeling; (f) a super arm-$(\alpha,\beta)$ edge-magic total labeling, cited from \cite{Sun-Wang-Yao-2020}.}}
\end{figure}

\begin{figure}[h]
\centering
\includegraphics[width=16.4cm]{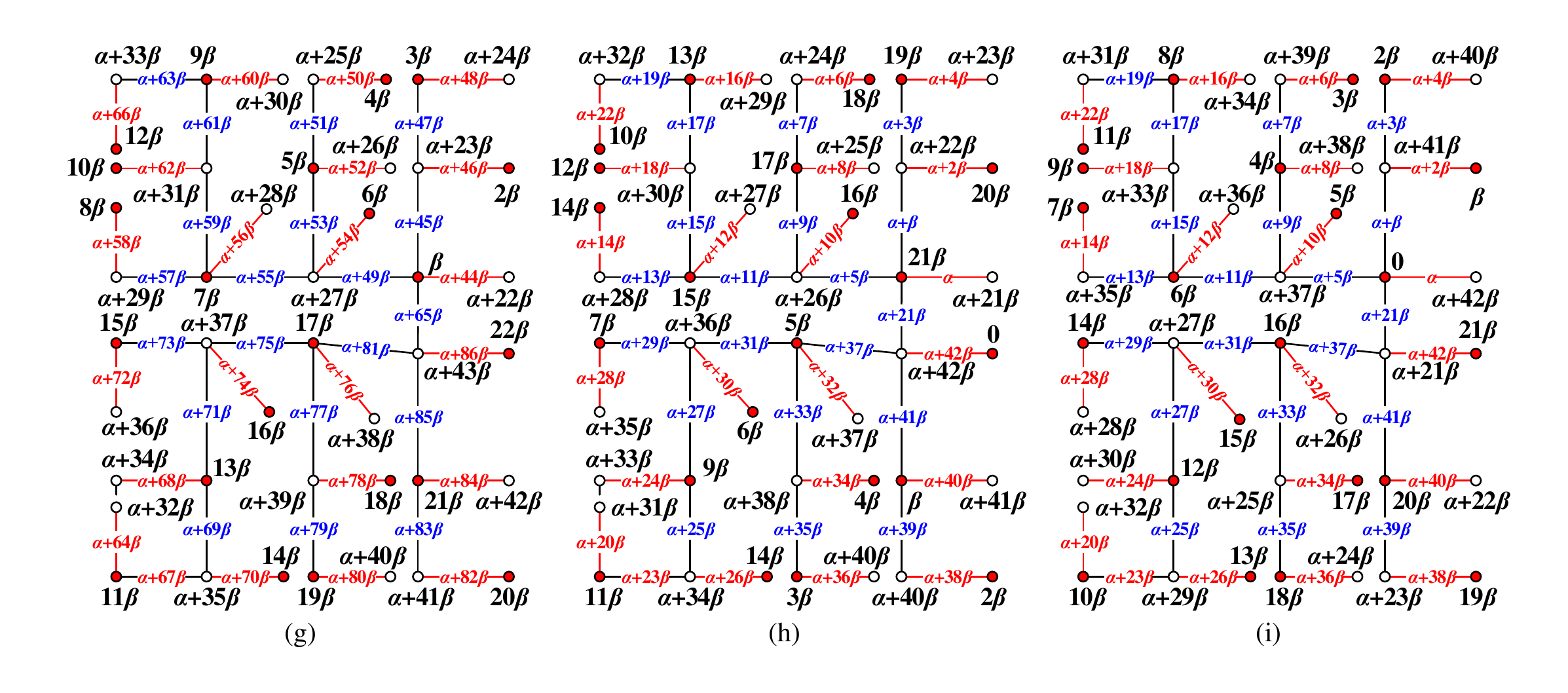}
\caption{\label{fig:3-definition}{\small $p=44$, $q=43$ and $|M|=22$ in Definition \ref{defn:perfect-matching-definitions}: (g) a super arm-$(\alpha,\beta)$ edge-magic graceful labeling; (h) a super arm-$(\alpha,\beta)$ graceful-difference labeling; (i) a super arm-$(\alpha,\beta)$ graceful-difference labeling, cited from \cite{Sun-Wang-Yao-2020}.}}
\end{figure}

\begin{defn} \label{defn:111111}
\cite{Sun-Wang-Yao-2020} Let $f$ be a strongly set-ordered $(\alpha,\beta)$-graceful labeling of a tree $T$, we have the following equivalent transformations:

(1) An arm-$(\alpha,\beta)$ graceful-difference labeling $\sigma$ is defined as: $\sigma(w)=f(w)$ for $w\in V(T)$, $\sigma(xy)=2\alpha+(q-1)\beta-f(xy)$ for each edge $xy\in E(T)$.

(2) An arm-$(\alpha,\beta)$ edge-difference labeling $\sigma$ is defined as: $\sigma(x)=\max f(X)+\min f(X)-f(x)$ for $x\in X$, $\sigma(y)=\max f(Y)+\min f(Y)-f(y)$ for $y\in Y$, $\sigma(xy)=\max f(E(T))+\min f(E(T))-f(xy)$ for each edge $xy\in E(T)$.\qqed
\end{defn}

\begin{lem} \label{them:particular-trees-perfect-matching}
\cite{Sun-Wang-Yao-2020} A \emph{leaf-matching tree} $T$ with a perfect matching $M$ has its leaf set $L(T)$ such that each matching edge $uv\in M$ has one end $u$ in $L(T)$. If the leaf-matching tree $T$ admits a strongly set-ordered graceful labeling, then it admits the following pairwise equivalent labelings:

(1) strongly set-ordered graceful labeling;

(2) strongly $(\alpha,\beta)$-graceful labeling;

(3) arm-$(\alpha,\beta)$-totally odd/even-graceful labeling;

(4) arm-$(\alpha,\beta)$-totally odd/even-elegant labeling;

(5) arm-$(\alpha,\beta)$ arithmetic labeling;

(6) arm-$(\alpha,\beta)$ super felicitous labeling;

(7) arm-$(\alpha,\beta)$ super edge-magic total labeling;

(8) arm-$(\alpha,\beta)$ super edge-magic graceful labeling.
\end{lem}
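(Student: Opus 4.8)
The plan is to use the strongly set-ordered graceful labeling $f$ of $T$ as a single ``seed'' and to manufacture each of the remaining seven labelings from it by an explicit transformation that is an affine recoloring composed with (possibly partial) reflections of the kind catalogued in Definition \ref{defn:kd-w-type-coloring-transfoemations} and Definition \ref{defn:111111}. Since every such transformation is invertible and its inverse is again of the same type, exhibiting $(1)\Rightarrow(k)$ for $k=2,\dots,8$ and noting that the inverse of each transformation carries the target labeling back to a strongly set-ordered graceful labeling yields $(1)\Leftrightarrow(k)$ for all $k$, hence the claimed pairwise equivalence. First I would record the rigid structure forced by the hypotheses. Writing $V(T)=X\cup Y$ with $f(x_1)<\cdots<f(x_s)<f(y_1)<\cdots<f(y_t)$ and $q=p-1$, set-orderedness makes $f(X)=[0,s-1]$ and $f(Y)=[s,p-1]$ two consecutive integer blocks; the strongly-graceful condition $f(u)+f(v)=q$ on every matching edge $uv\in M$ then forces $s=t=\frac{p}{2}$ and pairs $x_i$ with the vertex of $Y$ carrying label $p-1-f(x_i)$. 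Because $T$ is a leaf-matching tree, every internal vertex is matched to a leaf, so $L(T)$ sits inside one colour class and, by the constant matching sum, the leaf labels and the internal labels interleave in a completely determined way under $f$ — this is the observation that will later control the ``super'' refinements.

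Next I would build the parameterized members: the set-ordered $(\alpha,\beta)$-graceful labeling, the $(\alpha,\beta)$ totally odd/even-graceful labeling, the arm-$(\alpha,\beta)$ totally odd/even-elegant labeling and the arm-$(\alpha,\beta)$ arithmetic labeling. For integers $\beta\ge 1$, $\alpha\ge 0$ set $\sigma(x)=\beta f(x)$ for $x\in X$ and $\sigma(y)=\alpha+(f(y)-s)\beta$ for $y\in Y$ (replacing $\beta$ by $2\beta$ in the doubled-gap variants), and take the induced edge colours as $|\sigma(u)-\sigma(v)|$, as $\sigma(u)+\sigma(v)$, or as $\sigma(u)+\sigma(v)\ (\bmod^{*}~q\beta)$ according to the target. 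Because $f(X)$ and $f(Y)$ are intervals, the vertex-set conditions C-\ref{parameter:felicitous-v}, C-\ref{parameter:totally-odd-even-graceful-v}, C-\ref{parameter:edge-magic-total}, the edge-set conditions C-\ref{parameter:totally-odd-even-graceful-e}, C-\ref{parameter:totally-odd-even-graceful-2e}, C-\ref{parameter:totally-odd-even-elegant}, C-\ref{parameter:arithmetic-labeling-e}, and the matching-set conditions C-\ref{parameter:edge-difference-matching}, C-\ref{parameter:totally-odd-even-graceful-matching}, C-\ref{parameter:totally-odd-even-elegant-matching}, C-\ref{parameter:arithmetic-matching-set} all reduce to routine arithmetic, essentially the computation already performed in the proof of Theorem \ref{thm:equivalent-k-d-total-colorings}. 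Specializing $(\alpha,\beta)=(1,1)$ returns a labeling affinely equivalent to $f$, which gives the reverse implications for these four items.

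Then I would build the $W$-magic members — the super arm-$(\alpha,\beta)$ felicitous, super arm-$(\alpha,\beta)$ edge-magic total and super arm-$(\alpha,\beta)$ edge-magic graceful labelings — by composing the parameterization above with the class-wise reflections of Definition \ref{defn:111111}: on $X$ replace $\sigma(x)$ by $\max\sigma(X)+\min\sigma(X)-\sigma(x)$, on $Y$ replace $\sigma(y)$ by $\max\sigma(Y)+\min\sigma(Y)-\sigma(y)$, and on the edges replace $\sigma(uv)$ by $\max\sigma(E)+\min\sigma(E)-\sigma(uv)$, choosing which reflections to apply exactly as in the Wtype-items of the proof of Theorem \ref{thm:equivalent-k-d-total-colorings}. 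Each reflection is an involution, so the composed map is a bijection whose inverse is again a parameterization composed with reflections, i.e. corresponds to a strongly set-ordered graceful labeling; this closes the equivalence in both directions. The magic constant $c$ and the arithmetic progression describing $\sigma(M)$ drop out of the fixed sums $\max f(X)+\min f(X)$, $\max f(Y)+\min f(Y)$, $\max f(E)+\min f(E)$ together with $f(u)+f(v)=q$ on $M$.

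The main obstacle is the qualifier ``super'': one must show that after the dual transformations the two colour classes still occupy the prescribed disjoint blocks $S_{\lambda,0,\beta}$ and $S_{p,\alpha,\beta}\setminus S_{\lambda,\alpha,\beta}$ (conditions C-\ref{parameter:super}, C-\ref{parameter:super-0}) and that the matching edges receive precisely the progressions of C-\ref{parameter:matching-set-0} and C-\ref{parameter:matching-set-1}. This is where the leaf-matching hypothesis does real work: I would first isolate a structural lemma saying that under $f$ the leaves $L(T)$ fill one extreme of $Y$'s block while the internal vertices fill the complementary positions — a consequence of every internal vertex being matched to a leaf and of the matching sums being constant — and then check that an affine recoloring followed by class-wise reflections preserves this ``leaves at the extreme block'' property, so that the required disjoint-block decomposition is automatic. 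Once that lemma is in place the remaining verifications are bookkeeping, so I would devote the bulk of the argument to it.
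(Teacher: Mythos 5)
First, a point of reference: the paper itself contains no proof of Lemma \ref{them:particular-trees-perfect-matching} — it is quoted from \cite{Sun-Wang-Yao-2020} — so your proposal can only be measured against the paper's general toolkit, namely the class-wise dual/reflection transformations of Definition \ref{defn:kd-w-type-coloring-transfoemations} and the computations in the proofs of Theorem \ref{thm:connections-several-labelings} and Theorem \ref{thm:equivalent-k-d-total-colorings}. Your overall strategy — take the strongly set-ordered graceful labeling $f$ as a seed, rescale by $\beta$ on $X$ and by $\alpha+\beta(\cdot)$ on $Y\cup E$, compose with $X$-, $Y$- and edge-duals chosen per target labeling, and invert to get pairwise equivalence — is indeed the natural route and is consistent with how the paper handles analogous equivalences.

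There is, however, a genuine gap precisely in the part you say you would devote the bulk of the argument to. Your structural lemma is false: in a leaf-matching tree the leaves need not lie in one colour class, let alone occupy one extreme of $Y$'s block. Take $P_4=v_1v_2v_3v_4$ with $f(v_1)=0$, $f(v_2)=3$, $f(v_3)=1$, $f(v_4)=2$; this is a leaf-matching tree with a strongly set-ordered graceful labeling whose leaves receive $0\in f(X)$ and $2\in f(Y)$, so the inference ``every internal vertex is matched to a leaf, hence $L(T)$ sits inside one colour class'' does not hold. Moreover, the lemma is not what the ``super'' qualifiers require: conditions C-\ref{parameter:strongly-grace}, C-\ref{parameter:super-0} and C-\ref{parameter:super} of Definition \ref{defn:perfect-matching-definitions} constrain $\sigma(X)$ and $\sigma(Y)$, the bipartition classes, and these follow immediately from set-orderedness (which makes $f(X)$ and $f(Y)$ consecutive blocks) together with the fact that your affine maps and class-wise reflections preserve those blocks — no statement about leaves is needed. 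What your sketch actually leaves unverified is the load-bearing content: for each of the seven target labelings one must check the matching-edge (``arm'') constraints, i.e. the constant differences such as C-\ref{parameter:felicitous-matching}, C-\ref{parameter:edge-magic-total-matching}, C-\ref{parameter:edge-difference-matching} and the prescribed progressions for $\sigma(M)$ in C-\ref{parameter:arithmetic-matching-set}--C-\ref{parameter:matching-set-1}, with their exact constants $\alpha+\tfrac12 p\beta$, $\alpha+\tfrac12(p-2)\beta$, and the stated first terms and common differences. These verifications are where the constant matching sum $f(u)+f(v)=q$, the specific choice of which duals to apply, and (where the constants reflect the leaf structure) the leaf-matching hypothesis actually enter; for example, without the $X$-dual one gets $|\sigma(x)-\sigma(y)|=\alpha+(q-2f(x))\beta$ on matching edges, which is not constant, so the per-item choice of reflection is not optional bookkeeping. (A minor slip as well: $s=t=p/2$ follows from the perfect matching in a bipartite graph, not from the labeling.) As written, the plan rests on a false auxiliary lemma and defers the real computations, so it does not yet constitute a proof, although the rescaling-plus-reflection framework itself is salvageable.
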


\begin{lem} \label{thm:exchange-labels-in-perfect-matching}
Suppose that a tree $T$ has a perfect matching $M(T)$ and admits a strongly graceful labeling $\theta$. There are three operations:

$(i)$ \textbf{Exchanging-edge operation}. A new strongly graceful labeling $\varphi$ of $T$ holding $\varphi(x)=\theta(y)$ and $\varphi(y)=\theta(x)$ for
each matching edge $xy\in M(T)$.

$(ii)$ \textbf{Decreasing-leaf operation}. If a matching edge $xy\in M(T)$ satisfies the neighborhood $N(x)=\{y,x_i:i\in [1,d_x-1]\}$ with $d_x=\textrm{deg}_T(x)\geq 3$ and $\textrm{deg}_T(y)=1$. Deleting the edges $xx_j$ and joins $y$ with $x_j$ for $j\in [s+1,d_x-1]$ with $0\leq s<d_x-1$ produces a new tree $H$ such that $H$ is also strongly graceful, and its perfect matching $M(H)$ keeps
$$
|L(T)\cap V(M(T))|-1=|L(H)\cap V(M(H))|
$$

$(iii)$ \textbf{Increasing-leaf operation}. If a matching edge $xy\in M(T)$ satisfies $\textrm{deg}_T(x)\geq 2$ and $\textrm{deg}_T(y)\geq 2$. A process of (a) moving the edge $xy$ from $T$; (b) identifying the vertex $x$ with the vertex $y$ into one denoted as $x_0$; and (c) adding a new vertex $y_0$ to join with $x_0$ will build up another tree $T\,'$ having a strongly graceful labeling. Furthermore, the perfect matching $M(T\,')=\{x_0y_0\}\cup \big (M(T)\setminus\{xy\}\big )$ holds
$$
|L(T)\cap V(M(T))|+1=|L(T\,')\cap V(M(T\,'))|
$$
\end{lem}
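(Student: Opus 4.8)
Throughout, write $q=|E(T)|$, so a strongly graceful labeling $\theta$ of $T$ is a bijection $V(T)\to[0,q]$ whose edge labels $|\theta(u)-\theta(v)|$ exhaust $[1,q]$ and which satisfies $\theta(u)+\theta(\bar u)=q$ for every matching edge $u\bar u\in M(T)$ (here $\bar u$ denotes the $M$-partner of $u$). The plan for all three parts is a single device: build the new labeling by keeping $\theta$ on one region of the tree and using the \emph{complementary} rule $v\mapsto q-\theta(v)$ on the rest. The key observation I will reuse is that if $S\subseteq V(T)$ is \emph{matching-closed} (meaning $u\in S$ exactly when $\bar u\in S$), then $\theta$ pairs the labels occurring on $S$ into couples $\{i,q-i\}$, so $v\mapsto q-\theta(v)$ merely permutes $\{\theta(v):v\in S\}$; hence gluing $\theta$ on $V(T)\setminus S$ to $q-\theta$ on $S$ is still a bijection onto $[0,q]$. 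Moreover complementing leaves the label $|\theta(u)-\theta(v)|$ of any edge with both ends complemented unchanged, and turns $\theta(u)+\theta(v)=q$ into $2q-q=q$; so on every old edge gracefulness and the matching-sum survive, and one only checks the few edges created by the surgery. For the \textbf{exchanging-edge operation} this is immediate: since $M(T)$ is perfect and $\theta(u)+\theta(\bar u)=q$, swapping the two labels on each matching edge is exactly the global complement $\varphi=q-\theta$ (the case $S=V(T)$), which is graceful on the same tree $T$, with the same matching, and still satisfies $\varphi(x)+\varphi(y)=q$ on $M(T)$.

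For the \textbf{decreasing-leaf operation}, let $T_j$ be the component of $x_j$ in $T-xx_j$ and put $S=\bigcup_{j>s}V(T_j)$; I keep all vertex labels and complement on $S$. First $S$ is matching-closed: every $v\in T_j$ has all of its $T$-neighbours inside $T_j$ except possibly $x_j$ itself, whose only outside neighbour is $x$, and $x$ is matched to $y\neq x_j$, so $\bar x_j\in T_j$; hence $\bar v\in T_j$ for all $v\in T_j$. Thus the glued labeling $\varphi$ is a bijection onto $[0,q]$. The only edges of $H$ not already accounted for are the transplanted edges $yx_j$ with $j>s$, and $|\varphi(y)-\varphi(x_j)|=|(q-\theta(x))-(q-\theta(x_j))|=|\theta(x)-\theta(x_j)|$ is exactly the label the deleted edge $xx_j$ had in $T$; so $\varphi$ realizes $[1,q]$ as its edge-label set. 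Since $x$ is matched to $y$, no edge $xx_j$ lay in $M(T)$, so every matching edge of $T$ survives in $H$ and $M(H):=M(T)$ is a perfect matching on which $\varphi$ satisfies the sum condition. Retaining at least one child $x_i$ at $x$ (so that $x$ is not turned into a leaf, i.e. $s\ge1$), $x$ stays a non-leaf of degree $1+s$ while $y$ becomes a non-leaf of degree $d_x-s\ge2$, and no other vertex changes degree; hence $|L(H)\cap V(M(H))|=|L(T)\cap V(M(T))|-1$.

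For the \textbf{increasing-leaf operation}, write $T-xy=T_x\sqcup T_y$ with $x\in T_x$ and $y\in T_y$, and define $\varphi=\theta$ on $V(T_x)$ and $\varphi=q-\theta$ on $V(T_y)$; these agree at the identified vertex $x_0$ because $\theta(x)=q-\theta(y)$, giving $\varphi(x_0)=\theta(x)$. The new edge $x_0y_0$ then has label $|\varphi(x_0)-\varphi(y_0)|=|\theta(x)-\theta(y)|$, which is the label of the removed edge $xy$, while all remaining edges of $T'$ lie inside $T_x$ or inside $T_y$ and keep their labels; and the matching $M(T')=\{x_0y_0\}\cup(M(T)\setminus\{xy\})$ satisfies $\varphi(x_0)+\varphi(y_0)=\theta(x)+\theta(y)=q$ together with the sum condition on every surviving matching edge (unchanged in $T_x$, shifted to $2q-q=q$ in $T_y$). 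The new leaf $y_0$ is a matching endpoint, while $x_0$ has degree $\deg(x)+\deg(y)-1\ge3$ in $T'$ and no other degree changes, so the leaf count goes up by exactly one. The step I expect to be the main obstacle is verifying that $\varphi$ is actually a bijection onto $[0,q]$, i.e. pinning down the single label left free for $y_0$: because $xy$ is the only $T$-edge between $T_x$ and $T_y$, no matching edge crosses, so inside $T_x$ every matched couple $\{i,q-i\}$ is used in full except that $x$'s partner $y$ lies outside, and symmetrically inside $T_y$; tracking these couples through the complement shows the labels used on $V(T_x)\cup(V(T_y)\setminus\{y\})$ are precisely $[0,q]\setminus\{\theta(y)\}$, so setting $\varphi(y_0):=\theta(y)=q-\varphi(x_0)$ completes a bijection onto $[0,q]$ and finishes the proof.
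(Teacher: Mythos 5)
Your argument is correct and is essentially the paper's own: the swap-within-matching-edges device (equivalently, complementing $\theta$ on a matching-closed region, which under the condition $\theta(u)+\theta(\bar u)=q$ is the same map) is exactly the construction the paper uses for $(i)$ and $(ii)$, and your direct contraction-plus-complement labeling for $(iii)$, with the freed label $\theta(y)$ assigned to the new leaf $y_0$, is just the paper's remark that $(iii)$ is the inverse of $(ii)$ written out explicitly. Your observation that the leaf-count identity in $(ii)$ needs $s\ge 1$ (for $s=0$ the operation merely exchanges the roles of $x$ and $y$, leaving the count unchanged) is a point the paper passes over with ``clearly,'' and restricting to $s\ge 1$ is a reasonable and in fact more careful reading of the statement.
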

\begin{proof} In order to prove the operation $(i)$, let $\theta$ be a strongly graceful labeling of a tree $T$ with $n$ vertices and a perfect matching $M(T)$. We define a labeling $\varphi$ for $T$ as: $\varphi(u)=\theta(v)$ and $\varphi(v)=\theta(u)$ for each edge $uv\in M(T)$. Clearly, $\varphi(u)+\varphi(v)=\theta(v)+\theta(u)=n-1$ and $|\varphi(u)-\varphi(v)|=|\theta(v)-\theta(u)|$ for each edge $uv\in M(T)$. For each edge $xy\in E(T)\setminus M(T)$, there are two edges
$xx\,',yy\,'\in M(T)$ so that
$${
\begin{split}
|\varphi(x)-\varphi(y)|&=|\theta(x\,')-\theta(y\,')|=\big |(n-1)-\theta(y\,')-\big [(n-1)-\theta(x\,')\big ]\big |=|\theta(y)-\theta(x)|
\end{split}
}$$

We obtain the vertex color set $\theta(V(T))=\varphi(V(T))$ and the edge color set $\theta(E(T))=\varphi(E(T))$, as well as $\varphi(u)+\varphi(v)=n-1$ for each edge $uv\in M(T)$. Therefore, $\varphi$ is really a strongly graceful labeling of $T$.

To prove the operation $(ii)$, we assume that there is an edge $uv\in M(T)$ such that the neighborhood $N(u)=\{v,u_i:i\in
[1,d_u-1]\}$ and $N(u)\cap L(T)=\{v\}$ since $N(T)$ is a perfect matching, where $d_u=\textrm{deg} _T(u)\geq 3$. We have a tree $H$ defined by the hypothesis of the operation $(ii)$. Since $H-uv$ has two components $T\,'$ and $T\,''$, without loss of generality, the
vertex $u$ lies in $T\,'$. Notice that $M(H)=M(T)$. We define a labeling $\varphi$ for $H$ as: $\varphi(x)=\theta(x)$ with $x\in V(T\,')$;
$\varphi(v)=\theta(v)$; $\varphi(x)=\theta(y)$ and $\varphi(y)=\theta(x)$ for each edge $xy\in E(T\,'')\cap M(T)$.

Since $V(H)=V(T)$, so two vertex color sets $\varphi(V(H))=\theta(V(T))$. For edges $xy\in E(T\,')$, we have
$$
|\varphi(x)-\varphi(y)|=|\theta(x)-\theta(y)|,~|\varphi(u)-\varphi(v)|=|\theta(u)-\theta(v)|
$$ and every edge $xy\in E(T\,'')\cap M(T)$ holds $|\varphi(x)-\varphi(y)|=|\theta(x)-\theta(y)|$ true.

Considering each edge $xy\in E(T\,'')\setminus M(T)$, so there are two edges $xx\,',yy\,'\in E(T\,'')\cap M(T)$.

Notice that $\varphi(x)=\theta(x\,')$ and $\varphi(y)=\theta(y\,')$, and $n=|V(H)|$~$(=|V(T)|)$. We have
$$
|\varphi(x)-\varphi(y)|=|\theta(x\,')-\theta(y\,')|=|(n-1)-\theta(y\,')-[(n-1)-\theta(x\,')]|=|\theta(y)-\theta(x)|
$$
since $\theta(x)+\theta(x\,')=n-1$ and $\theta(y)+\theta(y\,')=n-1$ according to the definition of a strongly graceful labeling. Therefore,
$\varphi(E(H))=\theta(E(T))$, which means that $\varphi$ is a strongly graceful labeling of $H$. Clearly,
$$
|L(T)\cap V(M(T))|-1=|L(H)\cap V(M(H))|
$$

Notice that in the above proof of the operation $(ii)$, the strongly graceful tree $H$ and a strongly graceful labeling $\varphi$ of $H$
enable us to obtain the original tree $T$ and a strongly graceful labeling $\theta$ of $T$. Thereby, the inverse of the proof of the operation $(ii)$ is just the proof of the operation $(iii)$.
\end{proof}

\begin{defn} \label{defn:mismatched-operation}
\cite{Sun-Wang-Yao-2020} Let a $(p,q)$-graph $G$ have a perfect matching $M$ and a strongly labeling $\theta$. If there are two edges $uv\in E(G)$ and $xy\notin E(G)$ holding $\theta(xy)=\theta(uv)$, then the operation of removing the edge $uv$ from $G$ and adding the edge $xy$ to the remainder $G-uv$ is called an \emph{edge-mismatched transfer-operation}, the resultant graph is denoted as $G-uv+xy$, called \emph{$\pm$-edge graph}, and it admits a strongly labeling $\pi$ induced by the strongly labeling $\theta$.
\end{defn}

By the operation defined in Definition \ref{defn:mismatched-operation} and the operations stated in Lemma \ref{thm:exchange-labels-in-perfect-matching}, Sun \emph{et al.}, in the article \cite{Sun-Wang-Yao-2020}, presented a conjecture as follows:

\begin{conj}\label{conj:strongly-graceful-matching-path-tree}
\cite{Sun-Wang-Yao-2020} Any tree with a perfect matching can be transformed into some path with a perfect matching by the edge-mismatched transfer-operation defined in Definition \ref{defn:mismatched-operation} such that they both admit strongly graceful labelings.
\end{conj}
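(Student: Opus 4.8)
The plan is to argue by induction on the number of leaves $\ell(T)$ of a tree $T$ that carries a perfect matching $M$ and a strongly graceful labeling $\theta$. Since $M$ is perfect, every leaf of $T$ is the matched end of its own pendant edge, so $\ell(T)=2$ holds exactly when $T$ is a path with a perfect matching, and the base case is trivial; any non-path tree with a perfect matching has $\ell(T)\geq 3$. The inductive step should reduce $\ell$ by at least one while staying inside the class of strongly graceful trees with a perfect matching, and — crucially — realizing that reduction through a finite sequence of edge-mismatched transfer-operations in the sense of Definition \ref{defn:mismatched-operation}. I would track both the current tree and its induced strongly graceful labeling $\pi$ through every move.

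First I would do structural preprocessing with the operations already available. If some pendant matching edge $uv\in M$ (with $u$ a leaf) sits at a branch vertex $v$, $\textrm{deg}_T(v)\geq 3$, then the Decreasing-leaf operation of Lemma \ref{thm:exchange-labels-in-perfect-matching} moves part of $N(v)\setminus\{u\}$ onto $u$, yields a new tree that still admits a strongly graceful labeling, and strictly lowers $|L(\cdot)\cap V(M(\cdot))|$, hence lowers $\ell$. If instead every pendant matching edge sits at a degree-two vertex (the path-like/caterpillar-like local configuration), I would first apply an Increasing-leaf and/or Exchanging-edge step from Lemma \ref{thm:exchange-labels-in-perfect-matching} to manufacture a pendant matching edge at a branch vertex, then apply Decreasing-leaf. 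A useful warm-up to secure first is the case of leaf-matching trees, spiders, and caterpillars: by Lemma \ref{them:particular-trees-perfect-matching} a leaf-matching tree with a strongly set-ordered graceful labeling also carries the whole list of equivalent labelings there, which gives extra freedom when choosing the labels of the endpoints of a newly inserted edge, and the transformations of Definition \ref{defn:111111} let one switch among these labelings without leaving the strongly graceful world; a spider can be "unfolded" one leg at a time toward a path.

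The technical heart is to show that each of the Decreasing-leaf (and preprocessing) moves above can be expressed as a composition of edge-mismatched transfer-operations. Deleting an edge $vx_j$ and re-attaching $x_j$ to $u$ is a legal edge-mismatched transfer only when the current strongly labeling satisfies $\theta(ux_j)=\theta(vx_j)$, i.e. $|\theta(u)-\theta(x_j)|=|\theta(v)-\theta(x_j)|$, which in general fails for the given $\theta$; so the scheme is: pass (via Definition \ref{defn:111111} and the Exchanging-edge operation) to an equivalent strongly graceful labeling in which the relevant labels align, perform the transfer, check that $T-vx_j+ux_j$ is again a tree (the endpoints of the new edge lie in different components of $T-vx_j$) and that it admits a strongly graceful labeling $\pi$ whose certified perfect matching has all its matching edges present in the rewired graph. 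Carrying out this bookkeeping uniformly across the peeling process, and exhibiting a lexicographic potential — say $\big(\ell(T),\ \sum_{\textrm{deg}_T(v)\geq 3}(\textrm{deg}_T(v))^2,\ D(T)\big)$ — that strictly decreases under the chosen moves, gives termination at a path.

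The main obstacle I expect is exactly this compatibility requirement $\theta(xy)=\theta(uv)$ coupled with connectivity and the perfect-matching condition. An edge-mismatched transfer needs two vertices in different components of $T-uv$ whose labels differ by precisely $\theta(uv)$ and whose re-pairing leaves a perfect matching intact, and since a graceful labeling is injective on vertices this is a genuinely delicate scheduling problem rather than a routine verification. The decisive lemma to prove is therefore: for every non-path tree $T$ with a perfect matching there exists at least one strongly graceful labeling and one edge of $T$ admitting such a partner non-edge that advances $T$ toward a path — most plausibly attacked by a minimal-counterexample argument choosing, among all strongly graceful labelings of $T$, the transfer that maximally reduces the number of branch vertices (or the diameter), and showing the minimal counterexample cannot exist. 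Secondary difficulties are ruling out cycling among trees with the same leaf count (handled by the potential above) and confirming that the "induced" labeling in Definition \ref{defn:mismatched-operation} really stays strongly graceful rather than merely graceful after each step.
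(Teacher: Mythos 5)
This statement is not a theorem in the paper at all: it is Conjecture~\ref{conj:strongly-graceful-matching-path-tree}, quoted from \cite{Sun-Wang-Yao-2020}, and the paper offers no proof of it (Fig.~\ref{fig:matching-path-tree} is only an illustration of Definition~\ref{defn:mismatched-operation}). So there is nothing in the paper to compare your argument against, and what you have written is a strategy sketch rather than a proof. You yourself isolate the crux --- ``the decisive lemma to prove is therefore: for every non-path tree $T$ with a perfect matching there exists at least one strongly graceful labeling and one edge of $T$ admitting such a partner non-edge that advances $T$ toward a path'' --- and then leave it unproved. That lemma \emph{is} the conjecture, essentially verbatim; a minimal-counterexample heuristic and a lexicographic potential do not discharge it, because the whole difficulty is exhibiting, for the given labeling or some equivalent one, a non-edge $xy$ in the right component with $|\theta(x)-\theta(y)|=\theta(uv)$ whose insertion keeps the graph a tree, keeps the matching perfect, and keeps the induced labeling strongly graceful. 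Nothing in Lemma~\ref{thm:exchange-labels-in-perfect-matching} or Definition~\ref{defn:111111} guarantees such a pairing exists, and the Decreasing-leaf operation is not known to factor through edge-mismatched transfers (it rewires edges with no constraint of the form $\theta(xy)=\theta(uv)$), so reducing the conjecture to those operations is a genuine gap, not bookkeeping.

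There is also a foundational problem with the inductive frame. Your induction assumes every intermediate tree with a perfect matching admits a strongly graceful labeling, but that is not a known fact: the paper notes (Remark~\ref{rem:kd-w-tupe-colorings-definition}) that the Strongly Graceful Tree Conjecture is equivalent to the Graceful Tree Conjecture, both open. Lemma~\ref{them:particular-trees-perfect-matching} only covers leaf-matching trees that are \emph{assumed} to admit a strongly set-ordered graceful labeling, so it cannot supply the labelings your peeling process needs at each stage. In short: the approach is a reasonable research plan, but every step that would make it a proof is exactly the open content of the conjecture, and the paper does not claim otherwise.
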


\begin{figure}[h]
\centering
\includegraphics[width=16.4cm]{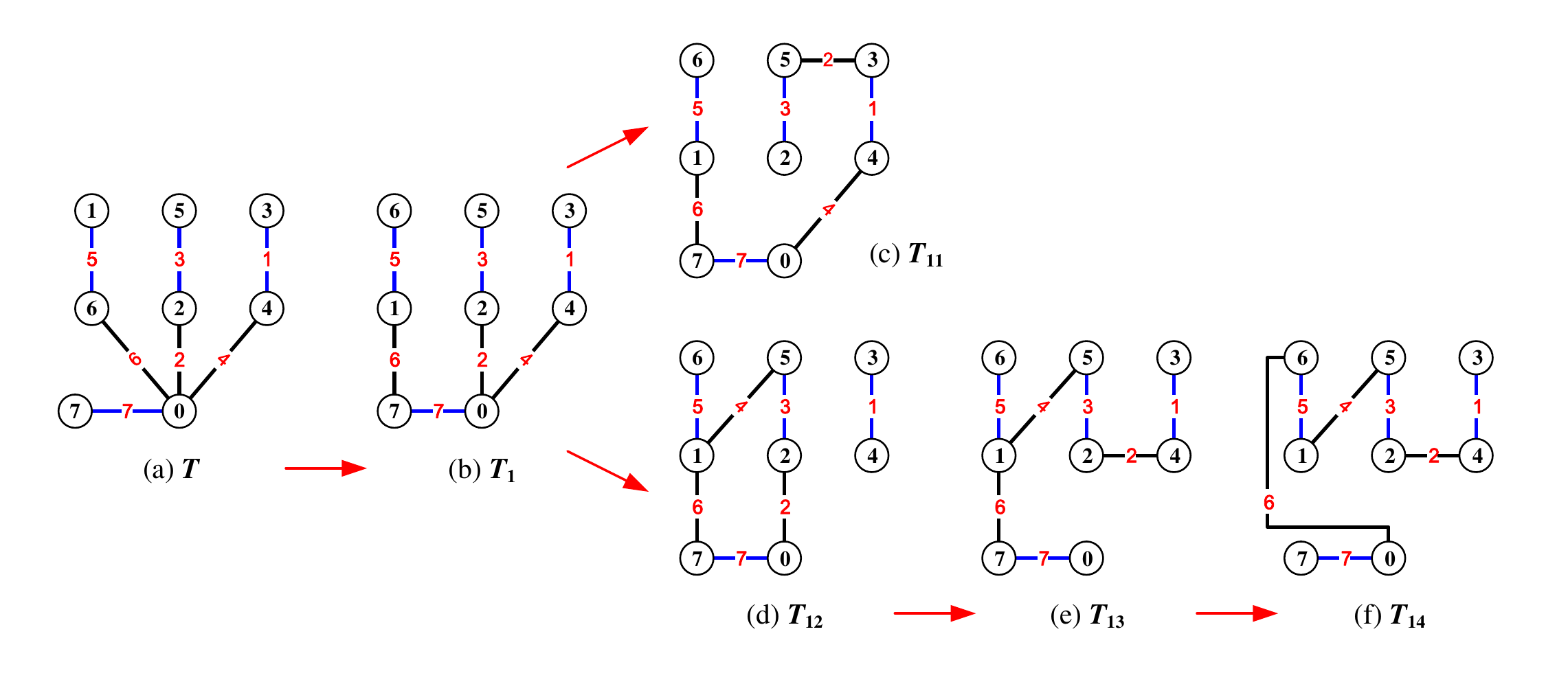}
\caption{\label{fig:matching-path-tree}{\small A scheme for illustrating Definition \ref{defn:mismatched-operation} and Conjecture \ref{conj:strongly-graceful-matching-path-tree}.}}
\end{figure}

\subsection{The $\pm e$-operations and the $\pm e$-graphs}

For the needs of theoretical research and practical application, we propose the following problems:

\begin{defn} \label{defn:add-remove-edge-tree-set}
$^*$ Let $T$ be a uncolored tree of $q$ edges. If each uncolored $\pm$-edge graph $T_{i+1}=T_i-u_iv_i+x_iy_i$ is just a tree of $q$ edges holding $u_iv_i\in E(T_i)$ and $x_iy_i\notin E(T_i)$, where $T_1=T$, then we put all uncolored $\pm$-edge trees $T_{i+1}=T_i-u_iv_i+x_iy_i$ into a set $S_{\pm e}(T)$, and write
\begin{equation}\label{eqa:e-operation-graph-homomorphism}
T_i\rightarrow ^1_{\pm e}T_{i+1}
\end{equation} called \emph{$\pm e$-operation graph homomorphism}. The set $S_{\pm e}(T)$ is called \emph{$\pm e$-tree set}, and $T$ is called \emph{source} of the $\pm e$-tree set $S_{\pm e}(T)$.\qqed
\end{defn}

\begin{problem}\label{qeu:444444} About Definition \ref{defn:add-remove-edge-tree-set}, we have the following facts:

(1) The $\pm e$-tree set $S_{\pm e}(T)$ contains a unique path of $q$ edges;

(2) There is a tree $H\in S_{\pm e}(T)$ holding diameters $D(H)\leq D(G)$ for each tree $G\in S_{\pm e}(T)$, then $H$ is a star $K_{1,q}$.\\
\textbf{Consider} the following questions:

\begin{asparaenum}[(i) ]
\item If $T$ admits a $W$-constraint coloring/labeling, \textbf{does} each tree $G\in S_{\pm e}(T)$ admit a $W$-constraint coloring/labeling too? If it is not so, \textbf{find} the subset $S_{W\textrm{-c-c}}(T)\subset S_{\pm e}(T)$, such that each tree in the subset $S_{W\textrm{-c-c}}(T)$ admits a $W$-constraint coloring/labeling.
\item For two different trees $T, T\,'$ of $q$ edges, \textbf{are} there $\pm e$-tree sets $S_{\pm e}(T)=S_{\pm e}(T\,')$?
\item If $T$ is a tree with a perfect matching as $q+1=2m$, \textbf{determine} the subset $S_{match}(T)\subseteq S_{\pm e}(T)$ such that each tree of $S_{match}(T)$ has a perfect matching. Moreover, if the perfect matching tree $T$ admits a strongly $W$-constraint coloring/labeling, \textbf{does} each tree in $S_{match}(T)$ admit a strongly $W$-constraint coloring/labeling too?

\item \textbf{Determine} all trees with diameter $D$ in $S_{\pm e}(T)$ for each diameter $D\in [3,q-1]$, and the subset $S_{dia}(T;D)\subset S_{\pm e}(T)$ collects these trees with diameter $D$.

\item \textbf{Determine} all trees with maximal degree $\Delta$ in $S_{\pm e}(T)$ for each maximal degree $\Delta\in [3,q-1]$, and we use a subset $S_{max}(T;\Delta)\subset S_{\pm e}(T)$ to collect these trees with maximal degree $\Delta$.

\item \textbf{Determine} all trees with $m$ leaves in $S_{\pm e}(T)$ for each integer $m\in [3,q-1]$, and we put them into the subset $S_{leaf}(T;m)\subset S_{\pm e}(T)$.
\item \textbf{Find} three integers $D_0,\Delta_0$ and $m_0$ falling into $3\leq D_0,\Delta_0, m_0\leq q-1$, such that
\begin{equation}\label{eqa:three-integer-D-0-Delta-0-m-0}
S_{\pm e}(T)=S_{dia}(T;D_0)\cup S_{max}(T;\Delta_0)\cup S_{leaf}(T;m_0)
\end{equation} holds true. Moreover, \textbf{consider} Eq.(\ref{eqa:three-integer-D-0-Delta-0-m-0}) for one of the following situations:

\qquad (1) $D_0,\Delta_0$ and $m_0$ are prime numbers.

\qquad (2) $D_0,\Delta_0$ and $m_0$ are odd integers.

\qquad (3) $D_0,\Delta_0$ and $m_0$ are even integers.

\qquad (4) Each of odd integers $D_0,\Delta_0$ and $m_0$ is just a product of two prime numbers.

\qquad (5) Each of even integers $D_0,\Delta_0$ and $m_0$ is just a sum of two prime numbers.
\item \textbf{Find} subsets $S_{para}(T;D,\Delta,m)\subset S_{\pm e}(T)$ for each group of integers $D,\Delta,m$ holding $3\leq D,\Delta, m\leq q-1$, such that each tree $G\in S_{para}(T;D,\Delta,m)$ just has $m$ leaves, maximal degree $\Delta(G)=\Delta$ and diameter $D(G)=D$.
\item For two trees $G\,'$ and $G\,''$ of $S_{\pm e}(T)$, if there are $T_{i+1}=T_i-u_iv_i+x_iy_i\in S_{\pm e}(T)$ for $i\in [1,r]$, such that $G\,'=T_1$ and $G\,''=T_{r+1}$, then by Eq.(\ref{eqa:e-operation-graph-homomorphism}), we get a series of \emph{$\pm e$-operation graph homomorphisms}
\begin{equation}\label{eqa:555555}
G\,'=T_1 \rightarrow ^1_{\pm e}T_2 \rightarrow ^1_{\pm e}T_3\rightarrow ^1_{\pm e}~\cdots ~\rightarrow ^1_{\pm e}T_r \rightarrow ^1_{\pm e}T_{r+1}=G\,''
\end{equation} and we write it in a short term $G\,'\rightarrow ^{[1,r]}_{\pm e}G\,''$. If $r$ is the smallest in all $\pm e$-operation graph homomorphisms $G\,'\rightarrow ^{[1,t]}_{\pm e}G\,''$, we write $r=D^{\pm e}_{homo}\langle G\,',G\,''\rangle $, called \emph{$\pm e$-operation graph homomorphism distance}. \textbf{Determine} the $\pm e$-tree set $S_{\pm e}(T)$'s diameter
\begin{equation}\label{eqa:555555}
D(S_{\pm e}(T))=\max\{D^{\pm e}_{homo}\langle G\,',G\,''\rangle :~G\,',G\,''\in S_{\pm e}(T)\}
\end{equation}
\item Suppose that a connected graph $J$ admits a vertex graph-coloring $\theta:V(J)\rightarrow U^*$, where $U^*$ is one of the $\pm e$-tree sets $S_{match}(T)$, $S_{max}(T;\Delta)$, $S_{dia}(T;D)$, $S_{leaf}(T;m)$, and $S_{\pm e}(T)$, such that $\theta (u)=\theta (v)-x_vy_v+w_uz_u$ with $x_vy_v\in E(\theta (v))$ and $w_uz_u\notin E(\theta (v))$ for each edge $uv\in E(H)$, and moreover there is an edge $u\,'v\,'\in E(H)$ holding $\theta (u\,')=T_i$ and $\theta (v\,')=T_{i+1}$ if there is $T_{i+1}=T_i-u_iv_i+x_iy_i$ for two trees $T_i,T_{i+1}\in U^*$. We call $J$ the \emph{$\pm e$-tree-intersected graph} of the $\pm e$-tree set $U^*$, like that in \emph{hypergraphs} (Ref. \cite{Yao-Ma-arXiv-2201-13354v1}). Immediately, we get two diameters $D(J)$ and $D(S_{\pm e}(T))$ holding $D(J)=D(S_{\pm e}(T))$ true.

\qquad Since the $\pm e$-tree-intersected graph $J$ describes indirectly the topological structure of each $\pm e$-tree set $U^*\in \{S_{match}(T)$, $S_{max}(T;\Delta)$, $S_{dia}(T;D)$, $S_{leaf}(T;m)$, $S_{\pm e}(T)\}$, so it is important to \textbf{characterize} the topological properties of the $\pm e$-tree-intersected graph $J$, such as, Hamilton property, Euler's property, diameter $D(J)$, maximal degree $\Delta(J)$, clique, girth, cycles, various connectivity, colorings, and so on.
\end{asparaenum}
\end{problem}

\begin{defn} \label{defn:colored-add-remove-edge-tree-set}
Suppose that a tree $T^c$ of $q$ edges admits a vertex labeling $g:V(T)\rightarrow [0,q]$ holding the vertex color set $f(V(T))=[0,q]$, then we get a tree set $S_{\pm e}(T^c)$, called \emph{colored $\pm e$-tree set}, by the \emph{$\pm e$-operation} $T^c_{i+1}=T^c_i-u_iv_i+x_iy_i$ holding $u_iv_i\in E(T^c_i)$ and $x_iy_i\notin E(T^c_i)$, where $T^c_1=T^c$, and two colored trees $T^c_{i+1}$ and $T^c_i$ form a \emph{$\pm e$-operation colored-graph homomorphism}
\begin{equation}\label{eqa:e-operation-colored-graph-homomorphism}
T^c_i\rightarrow ^{1}_{\pm e}T^c_{i+1}
\end{equation} such that any pair of trees $H\,'$ and $H\,''$ of $S_{\pm e}(T^c)$ corresponds to $T^c_{i+1}=T^c_i-u_iv_i+x_iy_i\in S_{\pm e}(T^c)$ for $i\in [1,A]$ with $H\,'=T^c_1$ and $H\,''=T^c_{A+1}$, so we have a series of \emph{$\pm e$-operation graph homomorphisms}
\begin{equation}\label{eqa:colored-graph-homomorphisms-series}
H\,'=T^c_1 \rightarrow ^1_{\pm e}T^c_2 \rightarrow ^1_{\pm e}T^c_3\rightarrow ^1_{\pm e}~\cdots ~\rightarrow ^1_{\pm e}T^c_{A} \rightarrow ^1_{\pm e}T^c_{A+1}=H\,''
\end{equation} also, $H\,'\rightarrow ^{[1,A]}_{\pm e}H\,''$, and the $\pm e$-operation graph homomorphism distance between two trees $H\,'$ and $H\,''$ is denoted as
$$D^{\pm e}_{homo}\langle H\,',H\,''\rangle =\min \left \{A:~H\,'\rightarrow ^{[1,A]}_{\pm e}H\,''\right \}
$$ over all series of $\pm e$-operation graph homomorphisms shown in Eq.(\ref{eqa:colored-graph-homomorphisms-series}). The parameter
\begin{equation}\label{eqa:555555}
D(S_{\pm e}(T^c))=\max\{D^{\pm e}_{homo}\langle H\,',H\,''\rangle :~H\,',H\,''\in S_{\pm e}(T^c)\}
\end{equation} is called the \emph{diameter} of the $\pm e$-tree set $S_{\pm e}(T^c)$.\qqed
\end{defn}

\begin{problem}\label{qeu:444444}
Recall Cayley's formula $\tau(K_n)=n^{n-2}$ (Ref. \cite{Bondy-2008}), which tells us that there are $n^{n-2}$ different colored spanning trees in a complete graph $K_n$ admitting a vertex labeling $f:V(K_n)\rightarrow [1,n]$ such that the vertex color set $f(V(K_n))=[1,n]$, so the colored $\pm e$-tree set $S_{\pm e}(T^c)$ defined in Definition \ref{defn:colored-add-remove-edge-tree-set} holds $|S_{\pm e}(T^c)|=q^{q-2}$ true.

About Definition \ref{defn:colored-add-remove-edge-tree-set}, we \textbf{propose} the following problems:
\begin{asparaenum}[\textbf{\textrm{C-}}1. ]
\item \textbf{Determine} the subset $S_{match}(T^c)$ of the colored $\pm e$-tree set $S_{\pm e}(T^c)$ if the colored tree $T^c$ is a tree with a perfect matching.
\item If the vertex labeling $g$ is just a graceful labeling of $T^c$, \textbf{find} out the subset $S_{grace}(T^c)$ of the colored $\pm e$-tree set $S_{\pm e}(T^c)$, such that each tree of $S_{grace}(T^c)$ admits a graceful labeling too.
\item If the vertex labeling $g$ is just a set-ordered graceful labeling of $T^c$, \textbf{find} out the subset $S^{order}_{grace}(T^c)$ of the colored $\pm e$-tree set $S_{\pm e}(T^c)$, such that each tree of $S^{order}_{grace}(T^c)$ admits a set-ordered graceful labeling too.
\item \textbf{Determine} the subset $S^{match}_{grace}(T^c)$ of the colored $\pm e$-tree set $S_{\pm e}(T^c)$ if the colored tree $T^c$ has a perfect matching and the vertex labeling $g$ is just a strongly graceful labeling of $T^c$, such that each tree of $S^{match}_{grace}(T^c)$ has a perfect matching and admits a strongly graceful labeling as $q+1=2m$. Clearly, for any perfect matching tree $G\in S^{match}_{grace}(T^c)$, we have a sequence $\{T^c_i-u_iv_i+x_iy_i\}^r_{i=1}$, such that $G=T^c_r-u_rv_r+x_ry_r$, also, we have a \emph{$\pm e$-operation colored-graph homomorphism}
\begin{equation}\label{eqa:555555}
T^c\rightarrow ^{[1,r]}_{\pm e}G
\end{equation} through the \emph{$\pm e$-operation colored-graph homomorphisms} $T^c_i\rightarrow^1 _{\pm e} T^c_{i+1}$ (see Eq.(\ref{eqa:e-operation-colored-graph-homomorphism})) with each $T^c_i$ is a perfect matching tree for $i\in [1,r]$. Here, the integer $r$ is the smallest $\pm e$-operation colored-graph homomorphism distance $r=D^{\pm e}_{homo}\langle T^c,G\rangle $ over all $\pm e$-operation colored-graph homomorphisms $T^c\rightarrow ^{[1,t]}_{\pm e}G$. \textbf{Find} a perfect matching tree $G^*\in S^{match}_{grace}(T^c)$ such that the $\pm e$-operation colored-graph homomorphism distance $D^{\pm e}_{homo}\langle T^c,G^*\rangle $ is largest in the subset $S^{match}_{grace}(T^c)$.
\item A connected graph $H$ admits a vertex graph-coloring $F:V(H)\rightarrow S^*$, where $S^*$ is one of subsets $S^{match}_{grace}(T^c)$, $S_{match}(T^c)$, $S^{order}_{grace}(T^c)$, and $S_{\pm e}(T^c)$ defined above, such that $F(u)=F(v)-x_vy_v+w_uz_u$ with $x_vy_v\in E(F(v))$ and $w_uz_u\notin E(F(v))$ for each edge $uv\in E(H)$, and moreover there is an edge $u\,'v\,'\in E(H)$ holding $F(u\,')=T^c_i$ and $F(v\,')=T^c_{i+1}$ if there is $T^c_{i+1}=T^c_i-u_iv_i+x_iy_i$ for two trees $T^c_i,T^c_{i+1}\in S^*$. We call $H$ the \emph{$\pm e$-tree-intersected graph} of the colored $\pm e$-tree set $S^*$.

\qquad Because of the $\pm e$-tree-intersected graph $H$ describes indirectly the topological structure of each $\pm e$-tree set $S^*\in \{S^{match}_{grace}(T^c)$, $S_{match}(T^c)$, $S^{order}_{grace}(T^c)$, $S_{\pm e}(T^c)\}$, then the topological structures of each $\pm e$-tree set $S^*$ can be \textbf{described} by the topological properties of each $\pm e$-tree-intersected graph $H$, such as, Hamilton property, Euler's property, diameter $D(H)$, maximal degree $\Delta(H)$, clique, colorings, girth, cycles, various connectivity, \emph{etc} (Ref. \cite{Yao-Ma-arXiv-2201-13354v1}). Notice that each element $F(w)$ of the Topcode-matrix $T_{code}(H, F)$ is a colored $\pm e$-tree corresponding to a Topcode-matrix, so the Topcode-matrix $T_{code}(H, F)$ is like that appeared in \emph{Tensor}.
\end{asparaenum}
\end{problem}

\begin{defn} \label{defn:pan-add-remove-edges-tree-sets}
$^*$ For a uncolored $(q+1,q)$-tree $G$, the $\pm e$-graph set $P_{\pm e}(G)$ contains all $\pm$-edge graphs $G_{i+1}=G_i-u_iv_i+x_iy_i$ with $u_iv_i\in E(G_i)$ and $x_iy_i\notin E(G_i)$, where $G_1=G$. Notice that some $\pm$-edge graphs of the $\pm e$-graph set $P_{\pm e}(G)$ are not connected, or have cycles. Clearly, $S_{\pm e}(G)\subset P_{\pm e}(G)$, and the subset $P_{\pm e}(G)\setminus S_{\pm e}(G)\neq \emptyset$, where the $\pm e$-tree set $S_{\pm e}(G)$ is defined in Definition \ref{defn:add-remove-edge-tree-set}.

As $G^c$ is a colored $(q+1,q)$-tree admitting a vertex labeling $\varphi:V(G^c)\rightarrow [0,q]$ holding the vertex color set $\varphi(V(G^c))=[0,q]$, we define the colored $\pm e$-graph set $P_{\pm e}(G^c)$ containing all colored $\pm$-edge graphs $G^c_{i+1}=G^c_i-u_iv_i+x_iy_i$ with $u_iv_i\in E(G^c_i)$ and $x_iy_i\notin E(G^c_i)$, where $G^c_1=G^c$. Obviously, some $\pm$-edge graphs of the $\pm e$-graph set $P_{\pm e}(G^c)$ are not trees, so we have $S_{\pm e}(G^c)\subset P_{\pm e}(G^c)$, and the subset $P_{\pm e}(G^c)\setminus S_{\pm e}(G^c)\neq \emptyset$, where the colored $\pm e$-tree set $S_{\pm e}(G^c)$ is defined in Definition \ref{defn:colored-add-remove-edge-tree-set}.\qqed
\end{defn}

\begin{problem}\label{qeu:pan-add-remove-edges-treesets-problem}
Let $N(t)$ be a connected network with $t\in [a,b]$, and let $P_{\pm e}(G)$ be the $\pm e$-graph set defined in Definition \ref{defn:pan-add-remove-edges-tree-sets}. We define an $e$-total tree-coloring $F$ for $N(t)$ as
\begin{equation}\label{eqa:555555}
F:V(N(t))\rightarrow P_{\pm e}(G)\setminus S_{\pm e}(G),\quad F:E(G)\rightarrow S_{\pm e}(G)
\end{equation} with each edge $uv\in E(G)$ holding the $\pm e$-constraint
\begin{equation}\label{eqa:555555}
\left\{
\begin{array}{cc}
F(u)=F(uv)-a_ub_u+x_uy_u,& a_ub_u\in E(G),~ x_uy_u\notin E(G)\\
F(v)=F(uv)-a_vb_v+x_vy_v,& a_vb_v\in E(G),~x_vy_v\notin E(G)
\end{array}
\right.
\end{equation}

And we define a $v$-total tree-coloring $\theta$ for $N(t)$ as
\begin{equation}\label{eqa:555555}
\theta:V(N(t))\rightarrow S_{\pm e}(G),\quad \theta:E(G)\rightarrow P_{\pm e}(G)\setminus S_{\pm e}(G)
\end{equation} with each edge $uv\in E(G)$ holding the $\pm e$-constraint
\begin{equation}\label{eqa:555555}
\left\{
\begin{array}{cc}
\theta (u)=\theta (uv)-s_ut_u+w_uz_u,& s_ut_u\in E(G),~ w_uz_u\notin E(G)\\
\theta (v)=\theta (uv)-s_vt_v+w_vz_v,& s_vt_v\in E(G),~w_vz_v\notin E(G)
\end{array}
\right.
\end{equation}

\textbf{Describe} the topological properties of the $\pm e$-graph set $P_{\pm e}(G)$ by the connected network $N(t)$ admitting an $e$-total tree-coloring, or a $v$-total tree-coloring, refer to \cite{Yao-Ma-arXiv-2201-13354v1}.
\end{problem}

\begin{problem}\label{qeu:colored-pan-add-remove-edges-treesets-problem}
Let $N(t)$ be a connected network with $t\in [a,b]$, and let $P_{\pm e}(G^c)$ be the colored $\pm e$-graph set defined in Definition \ref{defn:pan-add-remove-edges-tree-sets}. We define an $e$-total colored-tree-coloring $f^c$ for $N(t)$ as
\begin{equation}\label{eqa:555555}
f^c:V(N(t))\rightarrow P_{\pm e}(G^c)\setminus S_{\pm e}(G^c),\quad f^c:E(G^c)\rightarrow S_{\pm e}(G^c)
\end{equation} with each edge $uv\in E(G^c)$ holding the $\pm e$-constraint
\begin{equation}\label{eqa:555555}
\left\{
\begin{array}{cc}
f^c(u)=f^c(uv)-a_ub_u+x_uy_u,& a_ub_u\in E(G^c),~ x_uy_u\notin E(G^c)\\
f^c(v)=f^c(uv)-a_vb_v+x_vy_v,& a_vb_v\in E(G^c),~x_vy_v\notin E(G^c)
\end{array}
\right.
\end{equation}

And we define a $v$-total colored-tree-coloring $\varphi^c$ for $N(t)$ as
\begin{equation}\label{eqa:555555}
\varphi^c:V(N(t))\rightarrow S_{\pm e}(G^c),\quad \varphi^c: E(G^c)\rightarrow P_{\pm e}(G^c)\setminus S_{\pm e}(G^c)
\end{equation} with each edge $uv\in E(G^c)$ holding the $\pm e$-constraint
\begin{equation}\label{eqa:555555}
\left\{
\begin{array}{cc}
\varphi^c (u)=\varphi^c (uv)-s_ut_u+w_uz_u,& s_ut_u\in E(G^c),~ w_uz_u\notin E(G^c)\\
\varphi^c (v)=\varphi^c (uv)-s_vt_v+w_vz_v,& s_vt_v\in E(G^c),~w_vz_v\notin E(G^c)
\end{array}
\right.
\end{equation}

\textbf{Characterize} the topological structures of the $\pm e$-graph set $P_{\pm e}(G^c)$ by the connected network $N(t)$ admitting an $e$-total colored-tree-coloring, or a $v$-total colored-tree-coloring, refer to \cite{Yao-Ma-arXiv-2201-13354v1}. It is noticeable, each element $\varphi^c (w)=T^c_{i+1}$ of the Topcode-matrix $T_{code}(N(t), \varphi^c)$ corresponds to a Topcode-matrix $T_{code}(T^c_{i+1}, g_{T^c_{i+1}})$, like some thing in \emph{Tensor}.
\end{problem}

\begin{problem}\label{qeu:444444}
We consider two $\pm e$-graph sets $P_{\pm e}(G)$ and $P_{\pm e}(G^c)$ for a uncolored connected $(p,q)$-graph $G$ and another colored connected $(p,q)$-graph $G^c$ both having cycles, such that each graph of $P_{\pm e}(G)$ is $G_{i+1}=G_i-u_iv_i+x_iy_i$ holding $u_iv_i\in E(G_i)$ and $x_iy_i\notin E(G_i)$, where $G_1=G$, and each graph of $P_{\pm e}(G^c)$ is $G^c_{i+1}=G^c_i-u_iv_i+x_iy_i$ holding $u_iv_i\in E(G^c_i)$ and $x_iy_i\notin E(G^c_i)$ with $G^c_1=G^c$.

There is a $\pm e$-spanning-tree set $S_{\pm e}(T_{span})\subset P_{\pm e}(G)$ for each spanning tree $T_{span}$ of the connected $(p,q)$-graph $G$. Suppose that the connected $(p,q)$-graph $G$ has $s(G)$ different spanning trees $T^1_{span}$, $T^2_{span}$, $\dots $, $T^{s(G)}_{span}$, so we have the $\pm e$-spanning-tree sets $S_{\pm e}(T^i_{span})\subset P_{\pm e}(G)$ with $i\in [1,s(G)]$. \textbf{Do} we have $S_{\pm e}(T^i_{span})=S_{\pm e}(T^j_{span})$ for $1\leq i,j\leq s(G)$?

There is a colored $\pm e$-spanning-tree set $S_{\pm e}(T_{cspan})\subset P_{\pm e}(G^c)$ for each colored spanning tree $T_{cspan}$ of the colored connected $(p,q)$-graph $G^c$. Suppose that the colored connected $(p,q)$-graph $G^c$ has $s(G^c)$ different colored spanning trees $T^1_{cspan}$, $T^2_{cspan}$, $\dots $, $T^{s(G^c)}_{cspan}$, so we have the colored $\pm e$-spanning-tree sets $S_{\pm e}(T^i_{cspan})\subset P_{\pm e}(G^c)$ with $i\in [1,s(G^c)]$. \textbf{Is} there $S_{\pm e}(T^i_{cspan})=S_{\pm e}(T^j_{cspan})$ for $1\leq i,j\leq s(G^c)$?
\end{problem}

\section{Topics related with parameterized colorings/labelings}

\subsection{Topcode-matrices having parameterized elements}

Since a Topcode-matrix $T_{code}$ corresponds to a set $S_{gra}(T_{code})$ of graphs, such that each graph $H\in S_{gra}(T_{code})$ has its own Topcode-matrix $T_{code}(H)=T_{code}$, thereby, we will define $W$-constraint $(k,d)$-total colorings by means of parameterized Topcode-matrices in the following subsections. Number-based strings that accompanies parameterized Topcode-matrices are useful in real application. We will introduce some graphic lattices with the closure of some $W$-constraint $(k,d)$-total colorings, and other coloring problems with parameters.

Part of operations and properties on Topcode-matrices, parameterized Topcode-matrices and pan-Topcode-matrices (including set-type Topcode-matrices, string-type Topcode-matrices, graph-type Topcode-matrices, real-valued Topcode-matrices) have been introduced and investigated in \cite{Yao-Zhao-Mu-Sun-Zhang-Zhang-Yang-IAEAC-2019}, \cite{Yao-Su-Ma-Wang-Yang-arXiv-2202-03993v1}, \cite{Bing-Yao-2020arXiv}, and \cite{Yao-Zhao-Zhang-Mu-Sun-Zhang-Yang-Ma-Su-Wang-Wang-Sun-arXiv2019}.

\subsubsection{Algebraic operations on Topcode-matrices}

In \cite{Bing-Yao-2020arXiv}, the authors introduced the following operations on Topcode-matrices:

\vskip 0.2cm

\textbf{1. Union-sum operation ``$\uplus$'', union-sum Topcode-matrix and sub-Topcode-matrix.} Let $T^i_{code}=(X_i,~E_i$, $Y_i)^{T}_{3\times q_i}$ for $i=1,2$ be Topcode-matrices, where
$$
X_i=(x^i_1, x^i_2, \dots, x^i_{q_i}),~E_i=(e^i_1, e^i_2, \dots ,e^i_{q_i}),~Y_i=(y^i_1, y^i_2,\dots,y^i_{q_i}),~i=1,2
$$ The union operation ``$\uplus$'' of two Topcode-matrices $T^1_{code}$ and $T^2_{code}$ is defined by
\begin{equation}\label{eqa:union-operation-2-matrices}
T^1_{code}\uplus T^2_{code}=(X_1\uplus X_2,~E_1\uplus E_2,~Y_1\uplus Y_2)^{T}_{3\times (q_1+q_2)}
\end{equation} with

$X_1\uplus X_2=(x^1_1, x^1_2, \dots, x^1_{q_1},x^2_1, x^2_2, \dots, x^2_{q_2})$, $E_1\uplus E_2=(e^1_1, e^1_2, \dots, e^1_{q_1},e^2_1, e^2_2, \dots, e^2_{q_2})$ and

$Y_1\uplus Y_2=(y^1_1, y^1_2, \dots, y^1_{q_1},y^2_1, y^2_2, \dots, y^2_{q_2})$\\
and the process of obtaining the \emph{union-sum Topcode-matrix} $T^1_{code}\uplus T^2_{code}$ is called a \emph{union operation}, we call $T^i_{code}$ to be a \emph{sub-Topcode-matrix} of the union-sum Topcode-matrix $T^1_{code}\uplus T^2_{code}$, denoted as $T^i_{code}\subset T^1_{code}\uplus T^2_{code}$ $i=1,2$.

Moreover, we can generalize Eq.(\ref{eqa:union-operation-2-matrices}) to the following form
\begin{equation}\label{eqa:union-operation-more-matrices}
[\uplus]^m_{i=1}T^i_{code}=T^1_{code}\uplus T^2_{code}\uplus \cdots \uplus T^m_{code}=(X^*,~E^*,~Y^*)^{T}_{3\times q^*}
\end{equation} where

$X^*=X_1\uplus X_2\uplus \cdots \uplus X_m$, $E^*=E_1\uplus E_2\uplus \cdots \uplus E_m$ and

$Y^*=Y_1\uplus Y_2\uplus \cdots \uplus Y_m$, as well as $q^*=q_1+q_2+\cdots +q_m$ with $m\geq 2$.\\
And each $T^i_{code}$ is a \emph{sub-Topcode-matrix} of the union-sum Topcode-matrix $[\uplus]^m_{i=1}T^i_{code}$.

\vskip 0.2cm

\textbf{2. Subtraction operation ``$\setminus$''.} Let $T_{code}=T^1_{code}\uplus T^2_{code}$ be the union-sum Topcode-matrix defined in Eq.(\ref{eqa:union-operation-2-matrices}). We remove $T^1_{code}$ from $T_{code}$, and obtain
\begin{equation}\label{eqa:subtraction-operation-2-matrices}
{
\begin{split}
T_{code}\setminus T^1_{code}&=(X_1\uplus X_2\setminus X_1,~E_1\uplus E_2\setminus E_1,~Y_1\uplus Y_2\setminus Y_1)^{T}_{3\times (q_1+q_2-q_1)}\\
&=(X_2,~E_2,~Y_2)^{T}_{3\times q_2}=T^2_{code}
\end{split}}
\end{equation}
the process of obtaining $T^2_{code}$ from $T_{code}$ is called a \emph{subtraction operation}.

\vskip 0.2cm

\textbf{3. Intersection operation ``$\cap$''.} Let $T\,'_{code}=T^1_{code}\uplus H_{code}$ and $T\,''_{code}=T^2_{code}\uplus H_{code}$ be two Topcode-matrices obtained by doing the union operation to Topcode-matrices $T^1_{code}, H_{code}$ and $T^2_{code}$. Since the Topcode-matrix $H_{code}$ is a \emph{common sub-Topcode-matrix} of two Topcode-matrices $T\,'_{code}$ and $T\,''_{code}$, we take the largest common sub-Topcode-matrix of $T\,'_{code}$ and $T\,''_{code}$ and denote it as $T\,'_{code}\cap T\,''_{code}$ called \emph{intersected Topcode-matrix}, this is so-called the \emph{intersection operation}, so we have
\begin{equation}\label{eqa:555555}
H_{code}\subseteq T\,'_{code}\cap T\,''_{code},~T\,'_{code}\cap T\,''_{code}\subseteq T\,'_{code},~T\,'_{code}\cap T\,''_{code}\subseteq T\,''_{code}
\end{equation}

\vskip 0.2cm

\textbf{4. Union operation ``$\cup$''.} For the intersected Topcode-matrix $T\,'_{code}\cap T\,''_{code}$ based on two Topcode-matrices $T\,'_{code}$ and $T\,''_{code}$, we define the union operation ``$\cup$'' on $T\,'_{code}$ and $T\,''_{code}$ as
\begin{equation}\label{eqa:555555}
T\,'_{code}\cup T\,''_{code}=T\,'_{code}\uplus [T\,''_{code}\setminus (T\,'_{code}\cap T\,''_{code})]=[T\,'_{code}\setminus (T\,'_{code}\cap T\,''_{code})]\uplus T\,''_{code}
\end{equation}

\vskip 0.2cm

\textbf{5. Coinciding operation ``$\odot$''.} Let $T\,'_{code}=T^1_{code}\uplus H_{code}$ and $T\,''_{code}=T^2_{code}\uplus H_{code}$ be two Topcode-matrices obtained by doing the union operation to Topcode-matrices $T^1_{code}, H_{code}$ and $T^2_{code}$. So $H_{code}\subset T\,'_{code}$ and $H_{code}\subset T\,''_{code}$. We define the \emph{coinciding operation} ``$\odot$'' on two Topcode-matrices $T\,'_{code}$ and $T\,''_{code}$ as follows
\begin{equation}\label{eqa:intersection-operation-2-matrices}
[\odot H_{code}]\langle T\,'_{code},T\,''_{code}\rangle =T^1_{code}\uplus H_{code}\uplus T^2_{code}=[T\,'_{code}\setminus H_{code}]\uplus T\,''_{code}=[T\,''_{code}\setminus H_{code}]\uplus T\,'_{code}
\end{equation}

\vskip 0.2cm

\textbf{6. Splitting operation ``$\wedge $''.} We do the splitting operation to the sub-Topcode-matrix $H_{code}$ of a coincided Topcode-matrix $T^1_{code}\uplus H_{code}\uplus T^2_{code}$, such that the resultant Topcode-matrix consisted of two disjoint Topcode-matrices $T^1_{code}\uplus H_{code}$ and $T^2_{code}\uplus H_{code}$, denoted as
\begin{equation}\label{eqa:splitting-operation-2-matrices}
\{[\odot H_{code}]\langle T\,'_{code},T\,''_{code}\rangle \}\wedge H_{code}=[T^1_{code}\uplus H_{code}\uplus T^2_{code}]\wedge H_{code}
\end{equation}

\begin{example}\label{exa:8888888888}
For the following three Topcode-matrices
\begin{equation}\label{eqa:three-topcode-matrces}
\centering
{
\begin{split}
T_{code}= \left(
\begin{array}{ccccc}
x_{1} & x_{2} & \cdots & x_{n}\\
e_{1} & e_{2} & \cdots & e_{n}\\
y_{1} & y_{2} & \cdots & y_{n}
\end{array}
\right),~T^i_{code}= \left(
\begin{array}{ccccc}
x^i_{1} & x^i_{2} & \cdots & x^i_{m}\\
e^i_{1} & e^i_{2} & \cdots & e^i_{m}\\
y^i_{1} & y^i_{2} & \cdots & y^i_{m}
\end{array}
\right),~i=1,2
\end{split}}
\end{equation}
\noindent we have two union-sum Topcode-matrices

\begin{equation}\label{eqa:two-sum-topcode-matrces}
\centering
{
\begin{split}
T_{code}\uplus T^i_{code}= \left(
\begin{array}{cccccccc}
x_{1} & x_{2} & \cdots & x_{n} & x^i_{1} & x^i_{2} & \cdots & x^i_{m}\\
e_{1} & e_{2} & \cdots & e_{n} & e^i_{1} & e^i_{2} & \cdots & e^i_{m}\\
y_{1} & y_{2} & \cdots & y_{n} & y^i_{1} & y^i_{2} & \cdots & y^i_{m}
\end{array}
\right),~i=1,2
\end{split}}
\end{equation}
 A coincided Topcode-matrix is
\begin{equation}\label{eqa:matrces-coinciding-operation}
J_{code}=[\odot T^2_{code}]\langle T_{code}\uplus T^2_{code},T^1_{code}\uplus T^2_{code}\rangle
\end{equation}
and
\begin{equation}\label{eqa:topcode-matrces-coinciding-operation}
\centering
{
\begin{split}
J_{code}= \left(
\begin{array}{cccccccccccc}
x_{1} & x_{2} & \cdots & x_{n} & x^2_{1} & x^2_{2} & \cdots & x^2_{s} & x^1_{1} & x^1_{2} & \cdots & x^1_{m}\\
e_{1} & e_{2} & \cdots & e_{n} & e^2_{1} & e^2_{2} & \cdots & e^2_{s}& e^1_{1} & e^1_{2} & \cdots & e^1_{m}\\
y_{1} & y_{2} & \cdots & y_{n} & y^2_{1} & y^2_{2} & \cdots & y^2_{s}& y^1_{1} & y^1_{2} & \cdots & y^1_{m}
\end{array}
\right)=T_{code}\uplus T^2_{code}\uplus T^1_{code}
\end{split}}
\end{equation}
\end{example}

\begin{example}\label{exa:8888888888}
Given two Topcode-matrices
\begin{equation}\label{eqa:subtractive-union-intersection-1}
\centering
A= \left(
\begin{array}{ccccccccc}
\textbf{\textcolor[rgb]{0.00,0.00,1.00}{7}} & 5 & 7 &\textbf{\textcolor[rgb]{0.00,0.00,1.00}{1}}\\
\textbf{\textcolor[rgb]{0.00,0.00,1.00}{1}} & 3 & 5 &\textbf{\textcolor[rgb]{0.00,0.00,1.00}{7}} \\
\textbf{\textcolor[rgb]{0.00,0.00,1.00}{18}}& 18 & 14 &\textbf{\textcolor[rgb]{0.00,0.00,1.00}{18}}
\end{array}
\right),\quad
B= \left(
\begin{array}{ccccccccc}
\textbf{\textcolor[rgb]{0.00,0.00,1.00}{7}} &\textbf{\textcolor[rgb]{0.00,0.00,1.00}{1}} & 5 \\
\textbf{\textcolor[rgb]{0.00,0.00,1.00}{1}} &\textbf{\textcolor[rgb]{0.00,0.00,1.00}{7}}& 9\\
\textbf{\textcolor[rgb]{0.00,0.00,1.00}{18}} &\textbf{\textcolor[rgb]{0.00,0.00,1.00}{18}}& 12
\end{array}
\right)
\end{equation}
\noindent we have
\begin{equation}\label{eqa:subtractive-union-intersection-2}
\centering
A\uplus B= \left(
\begin{array}{ccccccccc}
\textbf{\textcolor[rgb]{0.00,0.00,1.00}{7}} & 5 & 7 &\textbf{\textcolor[rgb]{0.00,0.00,1.00}{1}}& \textbf{\textcolor[rgb]{0.00,0.00,1.00}{7}} &\textbf{\textcolor[rgb]{0.00,0.00,1.00}{1}} & 5 \\
\textbf{\textcolor[rgb]{0.00,0.00,1.00}{1}} & 3 & 5 &\textbf{\textcolor[rgb]{0.00,0.00,1.00}{7}}& \textbf{\textcolor[rgb]{0.00,0.00,1.00}{1}} &\textbf{\textcolor[rgb]{0.00,0.00,1.00}{7}}& 9\\
\textbf{\textcolor[rgb]{0.00,0.00,1.00}{18}}& 18 & 14 &\textbf{\textcolor[rgb]{0.00,0.00,1.00}{18}}& \textbf{\textcolor[rgb]{0.00,0.00,1.00}{18}} &\textbf{\textcolor[rgb]{0.00,0.00,1.00}{18}}& 12
\end{array}
\right),~A\cup B= \left(
\begin{array}{ccccccccc}
\textbf{\textcolor[rgb]{0.00,0.00,1.00}{7}} & 5 & 7 &\textbf{\textcolor[rgb]{0.00,0.00,1.00}{1}}& 5 \\
\textbf{\textcolor[rgb]{0.00,0.00,1.00}{1}} & 3 & 5 &\textbf{\textcolor[rgb]{0.00,0.00,1.00}{7}}& 9 \\
\textbf{\textcolor[rgb]{0.00,0.00,1.00}{18}}& 18 & 14 &\textbf{\textcolor[rgb]{0.00,0.00,1.00}{18}}& 12
\end{array}
\right)
\end{equation}

\begin{equation}\label{eqa:subtractive-union-intersection-3}
\centering
~B\setminus A= \left(
\begin{array}{ccccccccc}
 5 \\
9 \\
12
\end{array}
\right),~A\cap B= \left(
\begin{array}{ccccccccc}
\textbf{\textcolor[rgb]{0.00,0.00,1.00}{7}} &\textbf{\textcolor[rgb]{0.00,0.00,1.00}{1}} \\
\textbf{\textcolor[rgb]{0.00,0.00,1.00}{1}} &\textbf{\textcolor[rgb]{0.00,0.00,1.00}{7}}\\
\textbf{\textcolor[rgb]{0.00,0.00,1.00}{18}} &\textbf{\textcolor[rgb]{0.00,0.00,1.00}{18}}
\end{array}
\right),~A\setminus B= \left(
\begin{array}{ccccccccc}
 5 & 7 \\
 3 & 5 \\
18 & 14
\end{array}
\right)
\end{equation}

The split Topcode-matrix $(A\cup B)\wedge (A\cap B)$ consists of $A$ and $B$, and there are the following Topcode-matrix algebraic relationships:
\begin{equation}\label{eqa:555555}
{
\begin{split}
&A=(A\uplus B)\setminus B,~B=(A\uplus B)\setminus A,~(A\setminus B)\uplus (B\setminus A)=(A\cup B)\setminus (A\cap B)\\
&A\cup B=(A\setminus B)\uplus (B\setminus A)\uplus (A\cap B), ~A\cup B=[\odot (A\cap B)]\langle A, B\rangle\\
&[\odot (A\cap B)]\langle A, B\rangle=(A\setminus B)\uplus B=A\uplus (B\setminus A)
\end{split}}
\end{equation}
\end{example}

About the colored Topcode-matrix defined in Definition \ref{defn:topcode-matrix-definition} and the colored Topcode-matrix defined in Definition \ref{defn:colored-topcode-matrix}, we have some operations of Topcode-matrix algebra as follows:
\begin{asparaenum}[\textrm{\textbf{ATcode}}-1. ]
\item The values of $x_i$, $e_i$ and $y_i$ with $i\in [1,q]$ can be any things in the world. We say $e_i$ to be \emph{valued} by $x_i$ and $y_i$ if there exists a \emph{valued function} $f$ defined as $e_i=f(x_i,y_i)$ for $i\in [1,q]$, and furthermore $E=f(X,Y)$.
\item The graph $G$ has another \emph{Topcode-matrix} $T\,^{XY}(G)=(X, Y)^{T}_{2\times q}$.
\item In the view of graph colorings, we regard that $T\,^{XY}(G)$ corresponds to a \emph{vertex coloring} of $G$, $E=(e_1, e_2, \dots ,e_q)$ corresponds to an \emph{edge coloring} of $G$, and $T_{code}(G)$ corresponds to a \emph{total coloring} of $G$.
\item A Topcode-matrix $T_{code}(G)$ may be the common Topcode-matrix of many graphs $G_1,G_2,\dots,G_n$, that is $T_{code}(G)=T_{code}(G_i)$ with $i\in [1,n]$.

\item A \emph{column exchange} $C_{(i,j)}$ of vertex-vectors and edge-vector of a Topcode-matrix $T_{code}$ is defined as:
\begin{equation}\label{eqa:c3xxxxx}
{
\begin{split}
C_{(i,j)}(X)&=C_{(i,j)}(x_1, x_2, \dots ,x_{i-1},\textcolor[rgb]{0.98,0.00,0.00}{\mathbf{x_i}},x_{i+1}, \dots ,x_{j-1},\textcolor[rgb]{0.98,0.00,0.00}{\mathbf{x_j}},x_{j+1}, \dots ,x_q)\\
&=(x_1, x_2, \dots ,x_{i-1},\textcolor[rgb]{0.98,0.00,0.00}{\mathbf{x_j}},x_{i+1}, \dots ,x_{j-1},\textcolor[rgb]{0.98,0.00,0.00}{\mathbf{x_i}},x_{j+1}, \dots ,x_q)\\
C_{(i,j)}(E)&=C_{(i,j)}(e_1, e_2, \dots ,e_{i-1},\textcolor[rgb]{0.98,0.00,0.00}{\mathbf{e_i}},e_{i+1}, \dots ,e_{j-1},\textcolor[rgb]{0.98,0.00,0.00}{\mathbf{e_j}},e_{j+1}, \dots ,e_q)\\
&=(e_1, e_2, \dots ,e_{i-1},\textcolor[rgb]{0.98,0.00,0.00}{\mathbf{e_j}},e_{i+1}, \dots ,e_{j-1},\textcolor[rgb]{0.98,0.00,0.00}{\mathbf{e_i}},e_{j+1}, \dots ,e_q)\\
C_{(i,j)}(X)&=C_{(i,j)}(y_1, y_2, \dots ,y_{i-1},\textcolor[rgb]{0.98,0.00,0.00}{\mathbf{y_i}},y_{i+1}, \dots ,y_{j-1},\textcolor[rgb]{0.98,0.00,0.00}{\mathbf{y_j}},y_{j+1}, \dots ,y_q)\\
&=(y_1, y_2, \dots ,y_{i-1},\textcolor[rgb]{0.98,0.00,0.00}{\mathbf{y_j}},y_{i+1}, \dots ,y_{j-1},\textcolor[rgb]{0.98,0.00,0.00}{\mathbf{y_i}},y_{j+1}, \dots ,y_q),
\end{split}}
\end{equation} so we get a column exchange of a Topcode-matrix $T_{code}$ as:
\begin{equation}\label{eqa:555555}
C_{(i,j)}(T_{code})=C_{(i,j)}(X,E,Y)^T=(C_{(i,j)}(X),C_{(i,j)}(E),C_{(i,j)}(Y))^T
\end{equation}

For two Topcode-matrices $T\,^r_{code}$ and $T\,^s_{code}$, if there are column exchanges $C_{(i_k,j_k)}$ with $k\in [1,m]$, such that
\begin{equation}\label{eqa:555555}
C_{(i_m,j_m)}C_{(i_{m-1},j_{m-1})}\cdots C_{(i_1,j_1)}(T\,^r_{code})=T\,^s_{code}
\end{equation} we say that both Topcode-matrices $T\,^r_{code}$ and $T\,^s_{code}$ are similar from each other, write this fact as
\begin{equation}\label{eqa:555555}
T\,^r_{code}(\sim )T\,^s_{code}
\end{equation}

\item A \emph{line exchange} $L_{(i)}(T_{code})$ on the vertex-vectors of a Topcode-matrix $T_{code}$ is defined as:
\begin{equation}\label{eqa:c3xxxxx}
\left(
\begin{array}{ccccc}
x_{1}\\
e_{1}\\
y_{1}
\end{array}
\right)\bigcup \cdots \bigcup \left(
\begin{array}{ccccc}
\textcolor[rgb]{0.98,0.00,0.00}{\mathbf{x_{i}}}\\
e_{i}\\
\textcolor[rgb]{0.98,0.00,0.00}{\mathbf{y_{i}}}
\end{array}
\right)\bigcup \cdots \bigcup \left(
\begin{array}{ccccc}
x_{q}\\
e_{q}\\
y_{q}
\end{array}
\right)\rightarrow
\left(
\begin{array}{ccccc}
x_{1}\\
e_{1}\\
y_{1}
\end{array}
\right)\bigcup \cdots \bigcup \left(
\begin{array}{ccccc}
\textcolor[rgb]{0.98,0.00,0.00}{\mathbf{y_{i}}}\\
e_{i}\\
\textcolor[rgb]{0.98,0.00,0.00}{\mathbf{x_{i}}}
\end{array}
\right)\bigcup \cdots \bigcup \left(
\begin{array}{ccccc}
x_{q}\\
e_{q}\\
y_{q}
\end{array}
\right)
\end{equation} denoted as $L_{(i)}(T_{code})=(L_{(i)}(X),E,L_{(i)}(Y))^T$. For two Topcode-matrices $T\,^r_{code}$ and $T\,^s_{code}$, if there are line exchanges $L_{(i_k)}$ with $k\in [1,m]$, such that
\begin{equation}\label{eqa:555555}
L_{(i_m)}L_{(i_{m-1})}\cdots L_{(i_1)}(T\,^r_{code})=T\,^s_{code}
\end{equation} then both Topcode-matrices $T\,^r_{code}$ and $T\,^s_{code}$ are similar from each other, that is $T\,^r_{code}(\sim ) T\,^s_{code}$.
\begin{example}\label{exa:8888888888}
There two Topcode-matrices as follows:
\begin{equation}\label{eqa:column-line-exchange-top-matss}
\centering
{
\begin{split}
T\,^a_{code}= \left(
\begin{array}{ccccccc}
10 & 7 & 0 & 0 & 2 & 2 &0\\
1 & 3 & 5 & 7 & 9 & 11 &13\\
11 & 10 & 5 & 7 & 11 & 13 &13
\end{array}
\right),~T\,^b_{code}= \left(
\begin{array}{ccccccc}
0 & 10 & 10 & 0 & 7 & 2 & 2 \\
13 & 1 & 3 & 5 & 7 & 9 & 11 \\
13 & 11 & 7 & 5 & 0 & 11 & 13
\end{array}
\right)
\end{split}}
\end{equation}
We do two line exchanges $L_{(4)}$ and $L_{(2)}$ to $T\,^a_{code}$, and do a column exchange $C_{(i,j)}$ to $T\,^a_{code}$, so we get a \emph{mixed operation}
$C_{(i,j)}L_{(4)}L_{(2)}(T\,^a_{code})=T\,^b_{code}$, that is $T\,^a_{code}(\sim ) T\,^b_{code}$.\qqed
\end{example}

\item \textbf{The coinciding-operation and the joining-operation on Topcode-matrices.} Let $T_{code}(G)=(X,E,Y)^T$ be the Topcode-matrix of a $(p,q)$-graph $G$, and let $T_{code}(H)=(X\,'$, $E\,'$, $Y\,')^T$ be the Topcode-matrix of a $(p,q)$-graph $H$, and let $G$ and $H$ be disjoint from each other, that is $X\cap X\,'=\emptyset$, $E\cap E\,'=\emptyset$ and $Y\cap Y\,'=\emptyset$. If we vertex-coincide the vertex $x_{i}$ of $G$ with the vertex $x\,'_{j}$ of $H$ into one vertex $w=x_{i}\odot x\,'_{j}$, then the vertex-coincided graph $G\odot H$ has its own Topcode-matrix as follows
\begin{equation}\label{eqa:555555}
T_{code}(G\odot_w H)=T^*_{code}(G)\bigcup T^*_{code}(H)
\end{equation} called \emph{Topcode-matrix coinciding-operation}, where the Topcode-matrix $T^*_{code}(G)$ is the result of replacing each $x_{i}$ by $w$ in $T_{code}(G)$, and the Topcode-matrix $T^*_{code}(H)$ is obtain by replacing each $x\,'_{j}$ by $w$ in $T_{code}(H)$. Next, we add a new edge $e=x_{i}x\,'_{j}$ to join the vertex $x_{i}$ of $G$ with the vertex $x\,'_{j}$ of $H$ together, such that the edge-joined graph $G\ominus H$ has the Topcode-matrix defined as
\begin{equation}\label{eqa:555555}
T_{code}(G\ominus H)=T_{code}(G)\bigcup (x_{i},~e,~x\,'_{j})^T \bigcup T_{code}(H)
\end{equation} called \emph{Topcode-matrix joining-operation}.

\item \textbf{Operations on meta-matrices.}

\qquad (A) Let $T\,^i_{3\times 1}=(x_i,e_i,y_i)^T$ be a \emph{meta-matrix} of a Topcode-matrix $T_{code}$ defined in Definition \ref{defn:topcode-matrix-definition}. We define the \emph{union operation} of meta-matrices as
\begin{equation}\label{eqa:c3xxxxx}
T_{code}=\bigcup^q_{i=1} T\,^i_{3\times 1}=\left(
\begin{array}{ccccc}
x_{1}\\
e_{1}\\
y_{1}
\end{array}
\right)\bigcup \left(
\begin{array}{ccccc}
x_{2}\\
e_{2}\\
y_{2}
\end{array}
\right)\bigcup \cdots \bigcup \left(
\begin{array}{ccccc}
x_{q}\\
e_{q}\\
y_{q}
\end{array}
\right)=\bigcup^q_{i=1} (x_i,e_i,y_i)^T
\end{equation}

\qquad (B) Suppose that $x_i,e_i$ and $y_i$ of each meta-matrix $T\,^i_{3\times 1}$ with $i\in [1,q]$ are integers. The \emph{\textbf{meta-matrix number multiplication}} is defined as
\begin{equation}\label{eqa:555555}
\beta \cdot T\,^i_{3\times 1}=(\beta\cdot x_i,\beta\cdot e_i,\beta \cdot y_i)^T
\end{equation} with an integer $\beta\geq 1$, and the \emph{\textbf{meta-matrix add-operation}} is defined as
\begin{equation}\label{eqa:555555}
\alpha \cdot T\,^i_{3\times 1}+\beta \cdot T\,^j_{3\times 1}=(\alpha \cdot x_i+\beta\cdot x_j,\alpha \cdot e_i+\beta\cdot e_j,\alpha \cdot y_i+\beta\cdot y_j)^T
\end{equation} with integer $\alpha, \beta\geq 1$.

\qquad (C) \textbf{Meta-matrix lattices.} Let $\textbf{\textrm{T}}=(T\,^1_{3\times 1},T\,^2_{3\times 1},\dots ,T\,^q_{3\times 1})$ be a \emph{meta-matrix base} consisted of meta-matrices $T\,^i_{3\times 1}$ for $i\in [1,q]$. Notice that $T\,^i_{3\times 1}\cap T\,^j_{3\times 1}=\emptyset$ for $i\neq j$, so $T\,^i_{3\times 1}\cup T\,^j_{3\times 1}=T\,^i_{3\times 1}\uplus T\,^j_{3\times 1}$. We call the following set
\begin{equation}\label{eqa:c3xxxxx}
\textrm{\textbf{L}}(\cup \textbf{\textrm{T}})=\left \{\bigcup^q_{i=1} \beta_k\cdot T\,^k_{3\times 1}:\beta_k\in Z^0,T\,^k_{3\times 1}\in \textbf{\textrm{T}}\right \}=\left \{[\uplus]^q_{i=1} \beta_k\cdot T\,^k_{3\times 1}:\beta_k\in Z^0,T\,^k_{3\times 1}\in \textbf{\textrm{T}}\right \}
\end{equation} a \emph{meta-matrix lattice} based on the meta-matrix base $\textbf{\textrm{T}}=(T\,^1_{3\times 1},T\,^2_{3\times 1},\dots ,T\,^q_{3\times 1})$. In general, some matrix $\bigcup^q_{i=1} \beta_kT\,^k_{3\times 1}\in \textrm{\textbf{L}}(\cup \textbf{\textrm{T}})$ is not a Topcode-matrix of any graph.
\end{asparaenum}

Since there are $q!$ permutations for each one of $x_1x_2\cdots x_q$, $e_1e_2\cdots e_q$ and $y_1y_2\cdots y_q$, by the column exchange operation on Topcode-matrices, we have
\begin{thm} \label{thm:similar-topcode-matrices}
Any Topcode-matrix $T_{code}$ of order $3\times q$ is similar with other $(q!-1)$ Topcode-matrices of order $3\times q$.
\end{thm}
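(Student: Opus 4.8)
The plan is to count, and to use the column exchange operation \(C_{(i,j)}\) as the group action. First I would observe that a Topcode-matrix \(T_{code}=(X,E,Y)^T\) of order \(3\times q\) is determined, up to the equivalence relation \((\sim)\) generated by column exchanges, by the \emph{multiset} of its meta-matrices \(T^i_{3\times 1}=(x_i,e_i,y_i)^T\) for \(i\in[1,q]\). In the generic case the \(q\) meta-matrices are pairwise distinct as ordered triples, i.e. the columns \((x_i,e_i,y_i)^T\) are mutually different; this is exactly the situation implicitly intended in Theorem~\ref{thm:similar-topcode-matrices}, since it asserts a count of \(q!-1\) rather than something smaller. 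So the first step is to fix this genericity hypothesis (distinct columns) and note that every column exchange \(C_{(i,j)}\) simply transposes meta-matrices \(T^i_{3\times 1}\) and \(T^j_{3\times 1}\) while leaving the three rows internally unchanged within each column.

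Next I would set up the bijection. A finite composition \(C_{(i_m,j_m)}\cdots C_{(i_1,j_1)}\) applied to \(T_{code}\) produces the Topcode-matrix whose column sequence is a permutation \(\pi\) of the column sequence of \(T_{code}\); conversely, since the transpositions \(\{(i,j)\}\) generate the full symmetric group \(S_q\), every permutation \(\pi\in S_q\) of the columns is realizable by some such composition. Because the \(q\) columns are pairwise distinct, two permutations \(\pi,\pi'\) give the \emph{same} Topcode-matrix if and only if \(\pi=\pi'\). Hence the set of Topcode-matrices \((\sim)\)-similar to \(T_{code}\) (via column exchanges) is in bijection with \(S_q\), and therefore has exactly \(q!\) elements, one of which is \(T_{code}\) itself. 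Subtracting that one gives \(q!-1\) other Topcode-matrices of order \(3\times q\), each of order \(3\times q\) as claimed. I would present this as: the map \(\pi\mapsto\) (the matrix with columns permuted by \(\pi\)) is a well-defined injection \(S_q\to\{T':T'(\sim)T_{code}\}\) whose image is exactly the column-exchange orbit, and it is surjective onto that orbit because transpositions generate \(S_q\).

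The main obstacle — really the only subtlety — is the genericity/distinctness issue: if some columns of \(T_{code}\) coincide as ordered triples \((x_i,e_i,y_i)=(x_j,e_j,y_j)\), then distinct column permutations can yield identical matrices and the count drops to \(q!/\prod_\ell m_\ell!\), where \(m_\ell\) are the multiplicities of the distinct column-values. I would handle this by stating the theorem under the natural hypothesis that the \(q\) columns are pairwise distinct (which the intended applications — evaluated Topcode-matrices coming from graphs with, say, distinct edge colors — satisfy), and remarking that without it the count is the corresponding multinomial. A secondary point to address cleanly is that \((\sim)\) as \emph{defined} in the excerpt allows both column exchanges \(C_{(i,j)}\) and line exchanges \(L_{(i)}\); for this theorem I only invoke column exchanges, so I would either restrict attention to the column-exchange sub-relation or note that line exchanges can only enlarge orbits, so \(q!-1\) is a valid lower bound for the number of matrices similar to \(T_{code}\) under the full relation, and an exact count under the column-exchange relation. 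Beyond that, the argument is just the orbit–counting observation that \(S_q\) acts freely on an \(q\)-element set of distinct columns, so no further calculation is needed.
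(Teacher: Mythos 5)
Your argument is essentially the paper's own: the paper justifies this theorem with the one-line observation that there are $q!$ permutations of the columns, each realizable by column exchanges, so the orbit contains $q!-1$ other Topcode-matrices of order $3\times q$. Your extra care about repeated columns (where distinct permutations give identical matrices and the count drops to a multinomial $q!/\prod_{\ell} m_{\ell}!$) is a genuine refinement the paper's justification silently ignores, and it is worth keeping as a remark that the stated count tacitly assumes the $q$ columns are pairwise distinct.
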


\begin{thm} \label{thm:graph-homomorphism-topcode-matrices}
If there are different graphs $G,G_1,G_2,\dots,G_n$ holding $T_{code}(G)=T_{code}(G_i)$ and $|V(G)|\leq |V(G_i)|$ for $i\in [1,n]$, then each graph $G_i$ admits a graph homomorphism to $G$, i.e. $G_i\rightarrow G$ for $i\in [1,n]$.
\end{thm}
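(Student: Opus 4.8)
The plan is to produce, for each $i\in[1,n]$, an explicit vertex map $\varphi_i\colon V(G_i)\to V(G)$ read off from the matrix identity $T_{code}(G)=T_{code}(G_i)$, and then to check that it preserves adjacency. First I would unpack what the hypothesis buys us. Writing $T_{code}(G)=(X,E,Y)^{T}$ and $T_{code}(G_i)=(X_i,E_i,Y_i)^{T}$ in the notation of Definition~\ref{defn:topcode-matrix-definition}, equality of these two $3\times q$ matrices gives a column-by-column matching: for each $j\in[1,q]$ the $j$-th edge of $G$ and the $j$-th edge of $G_i$ carry the same ordered pair of end-values $(x_j,y_j)$ and the same edge-value $e_j=\theta(x_j,y_j)$. (If one prefers to line the matrices up first, Theorem~\ref{thm:similar-topcode-matrices} lets us normalise the column order without altering either graph.) In particular the value-sets of the vertices of $G$ and of $G_i$ coincide --- call this common set $C$ --- and, since a Topcode-matrix never records an isolated vertex, every vertex of either graph is an endpoint of some edge and hence carries a value in $C$, while conversely every value in $C$ is realised on a vertex of $G$.

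Second, I would define $\varphi_i$ by value-matching, but with the per-edge bookkeeping kept consistent: a vertex of $G_i$ is sent to a vertex of $G$ carrying the same value, and the choice is organised so that the two ends of the $j$-th edge of $G_i$ are sent to the two ends $x_j,y_j$ of the $j$-th edge of $G$. Here is where the size hypothesis $|V(G)|\le|V(G_i)|$ enters: because $G$ has no more vertices than $G_i$, the partition of $V(G)$ into value-classes is no finer than that of $V(G_i)$, so $G$ can be presented as the result of vertex-coinciding suitable groups of equally-valued vertices of $G_i$ into single vertices --- exactly the ``vertex-coinciding each group of vertices colored the same color into one vertex'' construction used throughout the paper (cf.\ Theorems~\ref{thm:connected-tree-k-d-graceful-total-coloring} and~\ref{thm:666666-edge-magic-22}). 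Taking $\varphi_i$ to be that coincidence map makes it well defined. Granting this, the homomorphism check is immediate: by construction the $j$-th edge $uv$ of $G_i$ satisfies $\{\varphi_i(u),\varphi_i(v)\}=\{x_j,y_j\}$, which is the $j$-th edge of $G$, so $\varphi_i(u)\varphi_i(v)\in E(G)$; thus $\varphi_i\colon G_i\to G$ is a graph homomorphism, and since $i$ was arbitrary, $G_i\to G$ for all $i\in[1,n]$, as claimed.

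The hard part will be the consistency step inside the construction of $\varphi_i$: one must show that $G$ genuinely is obtainable from $G_i$ by identifying value-classes (or sub-parts of them) to single vertices, rather than merely ``having the same colors and colored edges''. Equivalently, one needs the column-matching to be chosen so that all edges of $G_i$ meeting at a common vertex go to edges of $G$ meeting at a common vertex, and so that the resulting identification collapses no edge into a loop and no two edges into a parallel pair absent from $G$. The natural way to handle this is to argue that, among all graphs realising the fixed Topcode-matrix, every passage to a realisation with fewer vertices is effected by such a vertex-coincidence, so that the minimal-order realisation $G$ sits below $G_i$ in this coincidence order; a fully careful write-up should isolate the degenerate cases (two equally-valued vertices of $G$ sharing a neighbour) separately, whereas in the applications of interest --- trees and the graphic lattices built on them --- those cases do not arise and the map $\varphi_i$ is forced.
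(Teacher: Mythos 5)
The paper states this theorem without any proof, so there is no argument of the paper's to compare yours against; judged on its own, your proposal has a genuine gap, and it sits exactly where you placed it. The step you defer as ``the hard part'' --- that $T_{code}(G)=T_{code}(G_i)$ together with $|V(G)|\le |V(G_i)|$ forces $G$ to be obtainable from $G_i$ by vertex-coinciding equally-valued vertices, or at least forces some homomorphism $G_i\rightarrow G$ to exist --- is not merely unproven: it is false at this level of generality, so no amount of care with the degenerate cases you flag can close it. (Note first that under the literal reading of Definition~\ref{defn:topcode-matrix-definition}, where $V(G)=X\cup Y$, equal Topcode-matrices would force equal vertex and edge sets, so the intended reading must be the colored one of Definition~\ref{defn:colored-topcode-matrix}, in which entries are colors and distinct vertices may share a value; that is also the reading your argument uses.) Concretely, take the Topcode-matrix whose six columns are $(1,3,2)$, $(2,5,3)$, $(3,4,1)$, each repeated twice, the edge entry being the sum of the end entries. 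One realisation $G$ is a $6$-cycle whose vertices are colored $1,2,3,1,2,3$ in cyclic order; another realisation $G_1$ is the disjoint union of a triangle colored $1,2,3$ and a path on four vertices colored $1,2,3,1$. Both induce this same Topcode-matrix (order the columns identically), they are different graphs, and $|V(G)|=6\le 7=|V(G_1)|$; yet $G_1$ contains a triangle while $G$ is bipartite, so there is no graph homomorphism $G_1\rightarrow G$ at all, color-preserving or otherwise. Your intermediate inference --- that $|V(G)|\le|V(G_i)|$ makes the value-class partition of $V(G)$ ``no finer'' than that of $V(G_i)$, so that $G$ sits below $G_i$ in the coincidence order --- already fails here: a cardinality inequality says nothing about one realisation being a coincidence quotient of another.

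The upshot is that the map $\varphi_i$ you want can only be constructed under a stronger hypothesis, for instance that $G$ is literally the result of vertex-coinciding each group of equally-colored vertices of $G_i$ into one vertex without creating loops or multiple edges (which is the hypothesis the paper actually imposes in its other homomorphism statements, such as the ones you cite), or that the vertex entries of $T_{code}(G)$ are pairwise distinct, in which case ``send each vertex of $G_i$ to the unique vertex of $G$ carrying the same value'' is well defined and your column-by-column adjacency check goes through verbatim. As written, the plan cannot be completed, and the local obstructions you propose to isolate (two equally-valued vertices of $G$ sharing a neighbour, loops, parallel edges) do not capture the real obstruction, which is global --- the parity of cycles in the counterexample above --- and shows the theorem itself needs amending rather than a more careful write-up of your consistency step.
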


\subsubsection{Parameterized Topcode-matrices}

Matrices having parameterized elements are called $(k,d)$-type matrices, for example, $(k,d)$-graceful matrices, $(k,d)$-edge-magic matrices, $(k,d)$-harmonious matrices, in the following discussion.

\begin{defn}\label{defn:parameterized-topcode-matrix}
$^*$ A \emph{parameterized Topcode-matrix} $P_{ara}=(X_{k,d},E_{k,d},Y_{k,d})^T$ shown in Eq.(\ref{eqa:basic-formula-1}) is defined as

$X_{k,d}=(\alpha_1k+a_1d,\alpha_2k+a_2d,\dots ,\alpha_qk+a_qd)$,

$Y_{k,d}=(\beta_1k+b_1d,\beta_2k+b_2d,\dots ,\beta_qk+b_qd)$, and

$E_{k,d}=(\gamma_1k+c_1d,\gamma_2k+c_2d,\dots ,\gamma_qk+c_qd)$\\
for two integers $k,d\geq 1$, \qqed
\end{defn}

Moreover we can partition a parameterized Topcode-matrix into the linear combination of Topcode-matrices as follows:
\begin{equation}\label{eqa:basic-formula-1}
{
\begin{split}
P_{ara}&=\left(
\begin{array}{ccccc}
\alpha_1k+a_1d & \alpha_2k+a_2d & \cdots & \alpha_qk+a_qd\\
\gamma_1k+c_1d & \gamma_2k+c_2d & \cdots & \gamma_qk+c_qd\\
\beta_1k+b_1d & \beta_2k+b_2d & \cdots & \beta_qk+b_qd
\end{array}
\right)_{3\times q}\\
&=\left(
\begin{array}{ccccc}
\alpha_1k & \alpha_2k & \cdots & \alpha_qk\\
\gamma_1k & \gamma_2k & \cdots & \gamma_qk\\
\beta_1k & \beta_2k & \cdots & \beta_qk
\end{array}
\right)_{3\times q}+\left(
\begin{array}{ccccc}
a_1d & a_2d & \cdots & a_qd\\
c_1d & c_2d & \cdots & c_qd\\
b_1d & b_2d & \cdots & b_qd
\end{array}
\right)_{3\times q}\\
&=k\cdot I_{3\times q}+d\cdot C_{3\times q}
\end{split}}
\end{equation}
where $I_{3\times q}$ is called \emph{constant Topcode-matrix} defined as
\begin{equation}\label{eqa:basic-formula-2}
{
\begin{split}
I_{3\times q}=\left(
\begin{array}{ccccc}
\alpha_1 & \alpha_2 & \cdots & \alpha_q\\
\gamma_1 & \gamma_2 & \cdots & \gamma_q\\
\beta_1 & \beta_2 & \cdots & \beta_q
\end{array}
\right)_{3\times q}=(I_X,I_E,I_Y)^T
\end{split}}
\end{equation} with three vectors $I_X=(\alpha_1,\alpha_2,\dots ,\alpha_q)$, $I_E=(\gamma_1,\gamma_2,\dots ,\gamma_q)$ and $I_Y=(\beta_1,\beta_2,\dots ,\beta_q)$; and $C_{3\times q}$ is, called \emph{main Topcode-matrix} defined as
\begin{equation}\label{eqa:basic-formula-3}
{
\begin{split}
C_{3\times q}=\left(
\begin{array}{ccccc}
a_1 & a_2 & \cdots & a_q\\
c_1 & c_2 & \cdots & c_q\\
b_1 & b_2 & \cdots & b_q
\end{array}
\right)_{3\times q}=(C_X,C_E,C_Y)^T
\end{split}}
\end{equation} for three vectors $C_X=(a_1, a_2, \dots ,a_q)$, $C_E=(c_1, c_2, \dots ,c_q)$, and $C_Y=(b_1, b_2, \dots ,b_q)$. Also, the parameterized Topcode-matrix can be written as
\begin{equation}\label{eqa:basic-formula-4}
P_{ara}=k\cdot (I_X,I_E,I_Y)^T+d\cdot (C_X,C_E,C_Y)^T=k\cdot I_{3\times q}+d\cdot C_{3\times q}.
\end{equation}

In the valued case, there is a \emph{valued function} $\theta$ to combine $C_E,C_X$ and $C_Y$ together by $C_E=\theta(C_X,C_Y)$ with $c_i=W\langle a_i,b_i\rangle $ for $i\in [1,q]$ under a $W$-constraint.

For \emph{bipartite graphs}, especially, we define the \emph{unite Topcode-matrix} as follows
\begin{equation}\label{eqa:unit-Topcode-matrix}
{
\begin{split}
I\,^0=\left(
\begin{array}{ccccc}
0 & 0 & \cdots & 0\\
1 & 1 & \cdots & 1\\
1 & 1 & \cdots & 1
\end{array}
\right)_{3\times q}=(X\,^0,~E\,^0,~Y\,^0)^T
\end{split}}
\end{equation} with two vertex-vectors $X\,^0=(0, 0, \dots ,0)_{1\times q}$ and $Y\,^0=(1, 1, \dots ,1)_{1\times q}$, and the edge-vector $E\,^0=(1, 1, \dots ,1)_{1\times q}$.

\begin{example}\label{exa:8888888888}
In Eq.(\ref{eqa:basic-formula-1}), if we have an edge-difference constraint
$$(\gamma_ik+c_id)+|(\beta_ik+b_id)-(\alpha_ik+a_id)|=(\gamma_i+\beta_i-\alpha_i)k+(c_i+b_i-a_i)d=k\cdot A_{ed}+d\cdot B_{ed}
$$ for $i\in [1,q]$, we write the above fact as $I_E+|I_Y-I_X|:=A_{ed}$ and $C_E+|C_Y-C_X|:=B_{ed}$, where $A_{ed}$ and $B_{ed}$ are constants.

If there is an edge-magic constraint
$$
(\gamma_ik+c_id)+(\beta_ik+b_id)+(\alpha_ik+a_id=(\gamma_i+\beta_i+\alpha_i)k+(c_i+b_i+a_i)d)=k\cdot A_{em}+d\cdot B_{em}
$$ for $i\in [1,q]$, we write the above fact as $I_X+I_E+I_Y:=A_{em}$ and $C_X+C_E+C_Y:=B_{em}$ for constants $A_{em}$ and $B_{em}$.

If we have a graceful-difference constraint
$$
\big ||(\beta_ik+b_id)-(\alpha_ik+a_id)|-(\gamma_ik+c_id)\big |=(\beta_i-\alpha_i-\gamma_i)k+(b_i-a_i-c_i)d=k\cdot A_{gd}+d\cdot B_{gd}
$$ for $i\in [1,q]$, we write the above fact as $\big ||I_Y-I_X|-I_E\big |:=A_{gd}$ and $\big ||C_Y-C_X|-C_E\big |:=B_{gd}$ for constants $A_{gd}$ and $B_{gd}$.

If we have felicitous-difference constraint
$$
|(\beta_ik+b_id)+(\alpha_ik+a_id)-(\gamma_ik+c_id)|=(\beta_i+\alpha_i-\gamma_i)k+(b_i+a_i-c_i)d=k\cdot A_{fd}+d\cdot B_{fd}
$$ for $i\in [1,q]$, we write the above fact as $|I_Y+I_X-I_E|:=A_{fd}$ and $|C_Y+C_X-C_E|:=B_{fd}$ for constants $A_{fd}$ and $B_{fd}$.\qqed
\end{example}

If there is no confusion, we omit ``'order $3\times q$' in the following discussion, or add a sentence `` the Topcode-matrices $I$, $T_{code}(G)$ and $P_{ara}(G)$ have the same order''.

\begin{defn} \label{defn:definition-parameterized-topcode-matrix}
$^*$ Since a Topcode-matrix $T_{code}(G)$ defined in Eq.(\ref{eqa:basic-formula-Topcode-matrix}) for a $(p,q)$-graph $G$ can be as a main Topcode-matrix in Eq.(\ref{eqa:basic-formula-3}), so we get two parameterized Topcode-matrices
\begin{equation}\label{eqa:definition-parameterized-topcode-matrix}
P_{ara}(G)=k\cdot I+d\cdot T_{code}(G), \quad P_{ara}(G,F)=k\cdot I\,^0+d\cdot T_{code}(G,f)
\end{equation}
where the Topcode-matrices $I$, $T_{code}(G)$ and $P_{ara}(G)$, $I\,^0$, $T_{code}(G,f)$ and $P_{ara}(G,F)$ have the same order, and $f$ is a $W$-constraint coloring admitted by $G$, as well as $F$ is a $W$-constraint parameterized coloring of $G$.\qqed
\end{defn}

\begin{example}\label{exa:8888888888}
In Fig.\ref{fig:k-d-matrices-graphs}, there are four colored bipartite graphs $G_1,G_2,G_3,G_4$, their $(k,d)$-type Topcode-matrices $P_{ara}(G_i,F_i)$ with $i\in [1,4]$ are shown in Eq.(\ref{eqa:44-k-d-matrices}) and Eq.(\ref{eqa:55-k-d-matrices}) respectively. By the unite Topcode-matrix $I\,^0$ defined in Eq.(\ref{eqa:unit-Topcode-matrix}) and Definition \ref{defn:definition-parameterized-topcode-matrix}, we rewrite these $(k,d)$-type Topcode-matrices by
\begin{equation}\label{eqa:333333}
P_{ara}(G_j,F_j)=k\cdot I\,^0+d\cdot T_{code}(G\,'_j,g_j),~j\in [1,4]
\end{equation} for integers $k\geq 0$ and $d\geq 1$, where the Topcode-matrices $I\,^0$, $T_{code}(G\,'_j,g_j)$ and $P_{ara}(G_j,F_j)$ have the same order $3\times 9$, four Topcode-matrices $T_{code}(G\,'_j,g_j)$ for $j\in [1,4]$ are shown in Eq.(\ref{eqa:Topcode-matrices-express11}) and Eq.(\ref{eqa:Topcode-matrices-express22}), respectively, and moreover

(1) $P_{ara}(G_1,F_1)$ is a \emph{$(k,d)$-graceful Topcode-matrix}, where $G_1$ admits a $(k,d)$-graceful labeling $F_1$ and $G\,'_1$ admits a graceful labeling $g_1$ corresponding to a Topcode-matrix $T_{code}(G\,'_1,g_1)$;

(2) $P_{ara}(G_2,F_2)$ is a \emph{$(k,d)$-felicitous Topcode-matrix}, where $G_2$ admits a $(k,d)$-felicitous labeling $F_2$ and $G\,'_2$ admits a felicitous labeling $g_2$ corresponding to a Topcode-matrix $T_{code}(G\,'_2,g_2)$;

(3) $P_{ara}(G_3,F_3)$ is a \emph{$(k,d)$-edge-magic Topcode-matrix}, where $G_3$ admits a $(k,d)$-edge-magic labeling $F_3$ and $G\,'_3$ admits an edge-magic labeling $g_3$ corresponding to a Topcode-matrix $T_{code}(G\,'_3,g_3)$;

(4) $P_{ara}(G_4,F_4)$ is a \emph{$(k,d)$-edge-antimagic Topcode-matrix}, where $G_4$ admits a $(k,d)$-edge-antimagic labeling $F_4$ and $G\,'_4$ admits an edge antimagic labeling $g_4$ corresponding to a Topcode-matrix $T_{code}(G\,'_4$, $g_4)$.\qqed
\end{example}

{\small
\begin{equation}\label{eqa:44-k-d-matrices}
\centering
{
\begin{split}
P_{ara}(G_1,F_1)&= \left(
\begin{array}{ccccccccccccccc}
2d & 2d & d & 2d & d & 0 & d & 0 & 0 \\
k & k+d & k+2d & k+3d & k+4d & k+5d & k+6d & k+7d & k+8d\\
k+2d & k+3d & k+3d & k+5d & k+5d & k+5d & k+7d & k+7d & k+8d
\end{array}
\right)\\
P_{ara}(G_2,F_2)&= \left(
\begin{array}{ccccccccccccccc}
d & 2d & 2d & 0 & 0 & d & 0 & d & 2d \\
k & k+d & k+2d & k+3d & k+4d & k+5d & k+6d & k+7d & k+8d\\
k+8d & k+8d & k+9d & k+3d & k+4d & k+4d & k+6d & k+6d & k+6d
\end{array}
\right)
\end{split}}
\end{equation}
}

{\small
\begin{equation}\label{eqa:55-k-d-matrices}
\centering
{
\begin{split}
P_{ara}(G_3,F_3)&= \left(
\begin{array}{ccccccccccccccc}
3d & 3d & 2d & 3d & 2d & 2d & 2d & d & 0 \\
k & k+d & k+2d & k+3d & k+4d & k+5d & k+6d & k+7d & k+8d\\
k+8d & k+7d & k+7d & k+5d & k+5d & k+4d & k+3d & k+3d & k+3d
\end{array}
\right)\\
P_{ara}(G_4,F_4)&= \left(
\begin{array}{ccccccccccccccc}
0 & 0 & d & 0 & d & 2d & d & 2d & 2d \\
k & k+d & k+2d & k+3d & k+4d & k+5d & k+6d & k+7d & k+8d\\
k+2d & k+3d & k+3d & k+5d & k+5d & k+5d & k+7d & k+7d & k+8d
\end{array}
\right)
\end{split}}
\end{equation}
}

\begin{figure}[h]
\centering
\includegraphics[width=15cm]{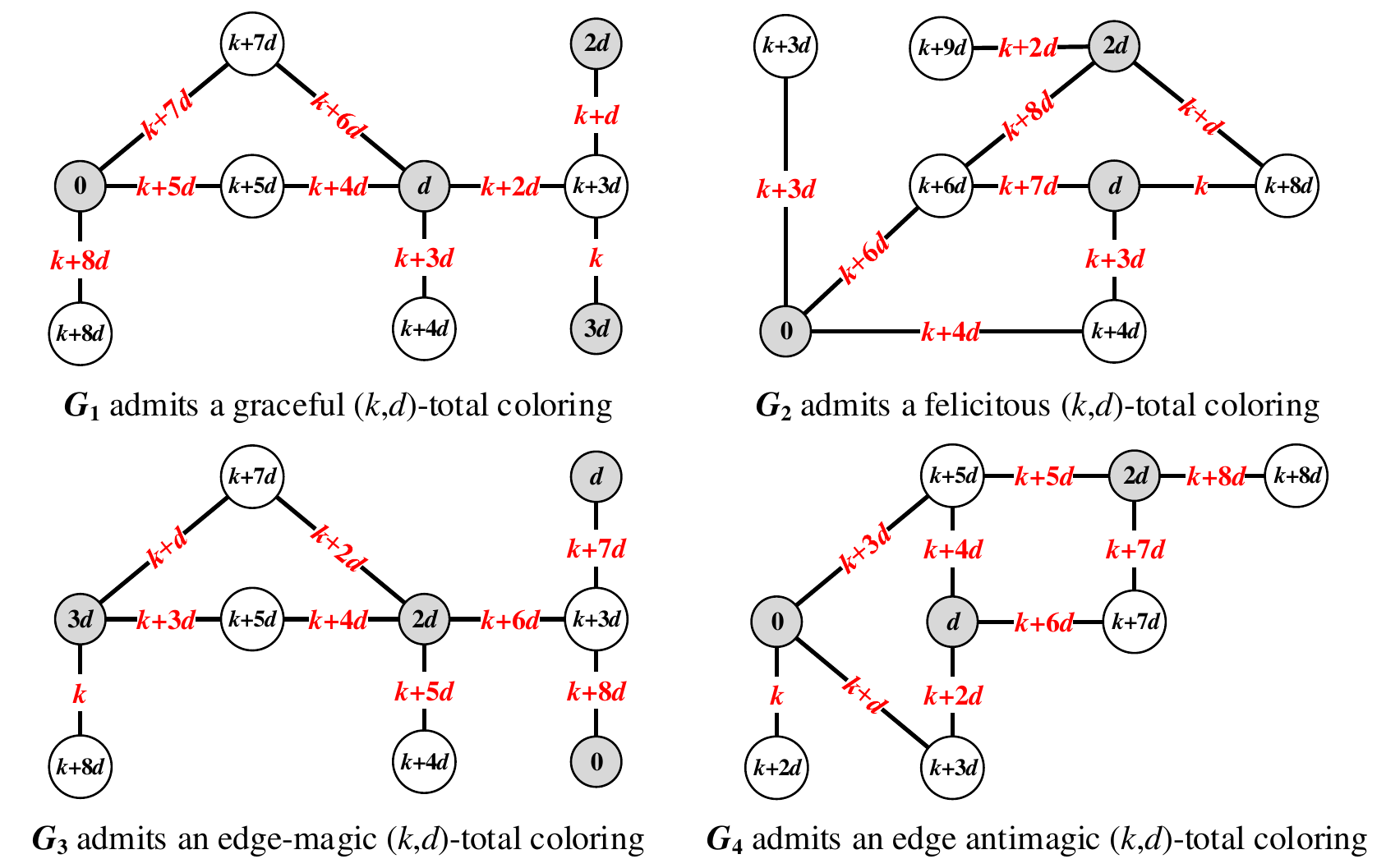}
\caption{\label{fig:k-d-matrices-graphs}{\small Four Topsnut-gpws corresponding to four $(k,d)$-type Topcode-matrices shown in Eq.(\ref{eqa:44-k-d-matrices}).}}
\end{figure}

{\footnotesize
\begin{equation}\label{eqa:Topcode-matrices-express11}
T_{code}(G\,'_1,g_1)= \left(
\begin{array}{ccccccccccccccc}
2 & 2 & 1 & 2 & 1 & 0 & 1 & 0 & 0 \\
0 & 1 & 2 & 3 & 4 & 5 & 6 & 7 & 8\\
2 & 3 & 3 & 5 & 5 & 5 & 7 & 7 & 8
\end{array}
\right)~T_{code}(G\,'_2,g_2)= \left(
\begin{array}{ccccccccccccccc}
1 & 2 & 2 & 0 & 0 & 1 & 0 & 1 & 2 \\
0 & 1 & 2 & 3 & 4 & 5 & 6 & 7 & 8\\
8 & 8 & 9 & 3 & 4 & 4 & 6 & 6 & 6
\end{array}
\right)
\end{equation}
}
{\footnotesize
\begin{equation}\label{eqa:Topcode-matrices-express22}
T_{code}(G\,'_3,g_3)= \left(
\begin{array}{ccccccccccccccc}
3 & 3 & 2 & 3 & 2 & 2 & 2 & 1 & 0 \\
0 & 1 & 2 & 3 & 4 & 5 & 6 & 7 & 8\\
8 & 7 & 7 & 5 & 5 & 4 & 3 & 3 & 3
\end{array}
\right)~T_{code}(G\,'_4,g_4)= \left(
\begin{array}{ccccccccccccccc}
0 & 0 & 1 & 0 & 1 & 2 & 1 & 2 & 2 \\
0 & 1 & 2 & 3 & 4 & 5 & 6 & 7 & 8\\
2 & 3 & 3 & 5 & 5 & 5 & 7 & 7 & 8
\end{array}
\right)
\end{equation}
}

\begin{defn}\label{defn:6C-labeling}
\cite{Yao-Sun-Zhang-Mu-Sun-Wang-Su-Zhang-Yang-Yang-2018arXiv} A total labeling $f:V(G)\cup E(G)\rightarrow [1,p+q]$ for a bipartite $(p,q)$-graph $G$ is a bijection and holds:

(i) (e-magic) $f(uv)+|f(u)-f(v)|=k$;

(ii) (ee-difference) each edge $uv$ matches with another edge $xy$ holding one of $f(uv)=|f(x)-f(y)|$ and $f(uv)=2(p+q)-|f(x)-f(y)|$ true;

(iii) (ee-balanced) let $s(uv)=|f(u)-f(v)|-f(uv)$ for each edge $uv\in E(G)$, then there exists a constant $k\,'$ such that each edge $uv$ matches with another edge $u\,'v\,'$ holding one of $s(uv)+s(u\,'v\,')=k\,'$ and $2(p+q)+s(uv)+s(u\,'v\,')=k\,'$ true;

(iv) (EV-ordered) $\min f(V(G))>\max f(E(G))$, or $\max f(V(G))<\min f(E(G))$, or $f(V(G))\subseteq f(E(G))$, or $f(E(G))$ $\subseteq f(V(G))$, or the vertex color set $f(V(G))$ is an odd-set and the edge color set $f(E(G))$ is an even-set;

(v) (ve-matching) there exists a constant $k\,''$ such that each edge $uv$ matches with one vertex $w$ such that $f(uv)+f(w)=k\,''$, and each vertex $z$ matches with one edge $xy$ such that $f(z)+f(xy)=k\,''$, except the \emph{singularity} $f(x_0)=\lfloor \frac{p+q+1}{2}\rfloor $;

(vi) (set-ordered) $\max f(X)<\min f(Y)$ (resp. $\min f(X)>\max f(Y)$) for the bipartition $(X,Y)$ of $V(G)$.

We refer to $f$ as an \emph{edge-difference 6C-labeling} of $G$.\qqed
\end{defn}

Since ``set-ordered'' is in an edge-difference 6C-labeling, so we have a new $(k,d)$-total coloring as follows:

\begin{defn} \label{defn:k-d-6C-labeling}
$^*$ If a bipartite $(p,q)$-graph $G$ admits an edge-difference 6C-labeling $f:V(G)\cup E(G)\rightarrow [1,p+q]$, which induces a Topcode-matrix $T_{code}(G,f)$, then the graph $G$ admits an \emph{edge-difference $(k,d)$-6C-labeling}
\begin{equation}\label{eqa:333333}
{
\begin{split}
&F:X\rightarrow S_{m,0,0,d}=\{0,d,\dots ,md\}\\
&F:Y\cup E(G)\rightarrow S_{p+q,k,0,d}=\{k,k+d,\dots ,k+(p+q)d\}
\end{split}}
\end{equation} where $V(G)=X\cup Y$ with $X\cap Y=\emptyset$, and the $(k,d)$-6C-labeling $F$ induces an \emph{edge-difference $(k,d)$-6C Topcode-matrix} of $G$ as follows
\begin{equation}\label{eqa:555555}
P_{ara}(G,F)=k\cdot I\,^0+d\cdot T_{code}(G,f)
\end{equation} where $I\,^0$ is the unite Topcode-matrix defined in Eq.(\ref{eqa:unit-Topcode-matrix}), and the Topcode-matrices $I\,^0$, $T_{code}(G,f)$ and $P_{ara}(G,F)$ have the same order $3\times q$ (see Fig.\ref{fig:6C-k-d}).\qqed
\end{defn}

\begin{figure}[h]
\centering
\includegraphics[width=14.4cm]{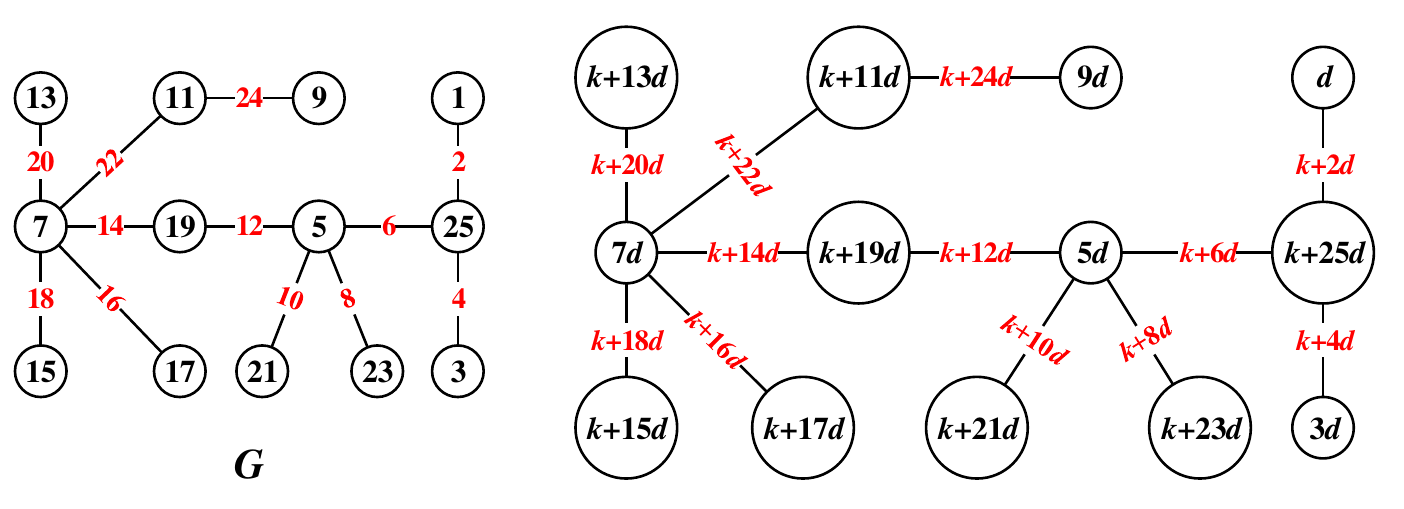}\\
\caption{\label{fig:6C-k-d} {\small A bipartite $(13,12)$-graph $G$ admits an edge-difference 6C-labeling and an edge-difference $(k,d)$-6C-labeling.}}
\end{figure}

\begin{rem}\label{rem:333333}
Since there are the (edge-difference) odd-6C-labeling and the odd-even separable (edge-difference) 6C-labeling introduced in \cite{Yao-Wang-2106-15254v1}, so we have the $(k,d)$-odd-6C Topcode-matrix and the odd-even separable $(k,d)$-6C Topcode-matrix similarly with that defined in Definition \ref{defn:k-d-6C-labeling}. \paralled
\end{rem}

Motivated from Definition \ref{defn:k-d-6C-labeling}, we present the following general definition about the $W$-constraint $(k,d)$-colorings/labelings:

\begin{defn} \label{defn:w-type-coloring-labeling-kd}
$^*$ If a bipartite $(p,q)$-graph $G$ admits a $W$-constraint coloring/labeling $f$, which induces a Topcode-matrix $T_{code}(G,f)$, then the graph $G$ admits a \emph{$W$-constraint $(k,d)$-coloring/labeling}
\begin{equation}\label{eqa:w-type-coloring-k-d}
{
\begin{split}
&F:X\rightarrow S_{m,0,0,d}=\{0,d,\dots ,md\}\\
&F:Y\cup E(G)\rightarrow S_{n,k,0,d}=\{k,k+d,\dots ,k+nd\}
\end{split}}
\end{equation} where $V(G)=X\cup Y$ with $X\cap Y=\emptyset$, and this coloring/labeling $F$ induces a \emph{$(k,d)$-$W$-constraint Topcode-matrix} of $G$ as follows
\begin{equation}\label{eqa:w-type-coloring-Topcode-ma}
P_{ara}(G,F)=k\cdot I\,^0+d\cdot T_{code}(G,f)
\end{equation} where $I\,^0$ is the unite Topcode-matrix defined in Eq.(\ref{eqa:unit-Topcode-matrix}), and $I\,^0$, $T_{code}(G,f)$ and $P_{ara}(G,F)$ are three Topcode-matrices of $3\times q$.\qqed
\end{defn}
\begin{thm}\label{thm:666666}
$^*$ Suppose that a bipartite $(p,q)$-graph $G$ admits different set-ordered total colorings $f_1,f_2,\dots, f_m$, and these total colorings are equivalent from each other. Then the graph $G$ admits $W$-constraint $(k,d)$-colorings $F_1,F_2,\dots, F_m$ defined in Definition \ref{defn:w-type-coloring-labeling-kd}, such that these $(k,d)$-$W$-constraint Topcode-matrices
\begin{equation}\label{eqa:555555}
P_{ara}(G,F_i)=k\cdot I\,^0+d\cdot T_{code}(G,f_i),~i\in [1,m]
\end{equation} are equivalent from each other, where three Topcode-matrices $I\,^0$, $T_{code}(G,f_i)$ and $P_{ara}(G,F_i)$ have the same order $3\times q$.
\end{thm}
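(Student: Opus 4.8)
The plan is to build each $F_i$ from $f_i$ by the construction of Definition~\ref{defn:w-type-coloring-labeling-kd}, and then transport the equivalence of the colorings $f_1,\dots,f_m$ to an equivalence of the parameterized Topcode-matrices $P_{ara}(G,F_i)$ by exploiting the fact that the identity $P_{ara}(G,F_i)=k\cdot I\,^0+d\cdot T_{code}(G,f_i)$ splits the parameterized matrix into a $k$-part that is always the fixed unite Topcode-matrix $I\,^0$ of Eq.(\ref{eqa:unit-Topcode-matrix}) and a $d$-part that is exactly the colored Topcode-matrix of $f_i$. First I would fix the common vertex bipartition $V(G)=X\cup Y$ witnessing that each $f_i$ is set-ordered, so that Definition~\ref{defn:w-type-coloring-labeling-kd} applies verbatim: for every $i\in[1,m]$ it produces a $W$-constraint $(k,d)$-coloring $F_i$ with induced matrix $P_{ara}(G,F_i)=k\cdot I\,^0+d\cdot T_{code}(G,f_i)$, and since $f_i$ is a total coloring of the $(p,q)$-graph $G$ the three matrices $I\,^0$, $T_{code}(G,f_i)$, $P_{ara}(G,F_i)$ all have order $3\times q$, as required. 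This is the routine half.

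The substantive half is to read ``$f_1,f_2,\dots,f_m$ are equivalent from each other'' at the level of the colored Topcode-matrices. The equivalences among set-ordered total colorings used in this paper are those generated by the transformations of Definition~\ref{defn:kd-w-type-coloring-transfoemations} together with the similarity relation $(\sim)$ of Theorem~\ref{thm:similar-topcode-matrices} (column exchanges $C_{(a,b)}$ and line exchanges $L_{(c)}$). For each pair $r,s$ this gives a finite composite $\Phi_{rs}$ of such operations with $\Phi_{rs}\big(T_{code}(G,f_r)\big)=T_{code}(G,f_s)$. The key point is that I would define the matching operation $\widehat{\Phi}_{rs}$ on parameterized Topcode-matrices so that it leaves $I\,^0$ fixed and applies $\Phi_{rs}$ to the main part; equivalently, I claim $\widehat{\Phi}_{rs}\big(k\cdot I\,^0+d\cdot B\big)=k\cdot I\,^0+d\cdot \Phi_{rs}(B)$ for every $3\times q$ matrix $B$. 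For a column exchange this is immediate, because $C_{(a,b)}$ is $\mathbb{Z}$-linear on $3\times q$ matrices and every column of $I\,^0$ equals $(0,1,1)^{T}$, so $C_{(a,b)}(I\,^0)=I\,^0$; with this in hand $\widehat{\Phi}_{rs}\big(P_{ara}(G,F_r)\big)=k\cdot I\,^0+d\cdot T_{code}(G,f_s)=P_{ara}(G,F_s)$, which exhibits $P_{ara}(G,F_r)\,(\sim)\,P_{ara}(G,F_s)$ and closes the argument.

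The hard part will be the remaining operations in the dictionary: the line exchanges $L_{(c)}$ (which do not fix $I\,^0$, since they turn one column of $I\,^0$ into $(1,1,0)^{T}$) and, more delicately, the ``image'' transformations of Definition~\ref{defn:kd-w-type-coloring-transfoemations} such as $w\mapsto[f(y_t)+f(x_1)]-f(w)$ or $x_i\mapsto[f(x_s)+f(x_1)]-f(x_i)$. For these I would proceed case by case: show each lifts from $f_r$ to its $(k,d)$-version $F_s$ via the corresponding affine map, e.g.\ $W\mapsto[F_s(y_t)+F_s(x_1)]-W$, and then verify two things, (i) that the image still lands in the palette $S_{m,0,0,d}\cup S_{n,k,0,d}$ (or, where a reflection overshoots, in its signed extension from Remark~\ref{rem:kd-w-tupe-colorings-definition}(3)), and (ii) that the effect on $P_{ara}$ is the affine operator whose $k$-shifts and $d$-scalings are tracked so that it agrees with $\widehat{\Phi}_{rs}$. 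Once this compatibility dictionary between operations on colored Topcode-matrices and operations on parameterized Topcode-matrices is in place, composing over a chain of transformations yields $\widehat{\Phi}_{rs}\big(P_{ara}(G,F_r)\big)=P_{ara}(G,F_s)$ uniformly, so the equivalence of the $P_{ara}(G,F_i)$ follows from the equivalence of the $T_{code}(G,f_i)$, and the theorem is proved.
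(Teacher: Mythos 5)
First, a point of reference: the paper states Theorem~\ref{thm:666666} as a new (starred) assertion and gives no proof of it at all, so there is no argument of the authors to compare yours against; the closest the paper comes is the proof of Theorem~\ref{thm:equivalent-k-d-total-colorings}, where the image/dual/reciprocal transformations of Definition~\ref{defn:kd-w-type-coloring-transfoemations} are worked out directly at the $(k,d)$-level. Your strategy --- build $F_i$ from $f_i$ by Definition~\ref{defn:w-type-coloring-labeling-kd}, then transport the equivalences among the $f_i$ through the affine decomposition $P_{ara}(G,F_i)=k\cdot I\,^0+d\cdot T_{code}(G,f_i)$ --- is consistent with that spirit and is, in my view, the right route.

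The gap is that the only case you actually verify is the trivial one. Column exchanges fix $I\,^0$ and commute with the decomposition, as you say, but the equivalences between \emph{different} set-ordered total colorings in this paper are generated precisely by the image/dual/reciprocal transformations (Tra-5 through Tra-11 of Definition~\ref{defn:kd-w-type-coloring-transfoemations}) and these are exactly the ones you defer to an unexecuted case analysis; since verifying them \emph{is} the content of the theorem, the proposal as written is a plan rather than a proof. The needed computations are not automatic but they do close: for the partly X-image one gets $[F_r(x_s)+F_r(x_1)]-F_r(x)=d\,f_s(x)=F_s(x)$ (the $k$'s cancel because $X$-colors carry no $k$), for the partly Y-image and the edge-image one gets $[F_r(y_t)+F_r(y_1)]-F_r(y)=k+d\,f_s(y)=F_s(y)$ and similarly for edges (one $k$ survives, matching the palette $S_{n,k,0,d}$), and the reciprocal maps are mere permutations; the genuinely delicate case is the totally vertex-image, where $[F_r(y_t)+F_r(x_1)]-F_r(w)$ equals $k+d\,f_s(w)$ for $w\in X$ but $d\,f_s(w)$ for $w\in Y$, i.e.\ the roles of the two vertex rows swap, so the lifted operation agrees with $P_{ara}(G,F_s)$ only after the bipartition convention (equivalently, a global line exchange of the matrix) is also swapped --- precisely the bookkeeping you flagged in your item (ii) but did not carry out, and precisely where a careless lift would be off by a summand $k\cdot(\text{column of }I\,^0)$. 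A second, smaller weakness is that neither the hypothesis ``the $f_i$ are equivalent from each other'' nor the conclusion ``the $P_{ara}(G,F_i)$ are equivalent from each other'' is given a precise meaning in the paper; you sensibly fix an interpretation (transformations of Definition~\ref{defn:kd-w-type-coloring-transfoemations} plus the similarity of Theorem~\ref{thm:similar-topcode-matrices}), but you should state explicitly that this is the notion of equivalence you prove the conclusion in, and note that set-ordered colorings with a fixed bipartition make the line-exchange worry largely moot except for the role-swap just described.
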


\subsection{Number-based strings containing parameters}

Number-based strings containing parameterized numbers are called \emph{$(k,d)$-type number-based strings}.

\begin{example}\label{exa:8888888888}
There are nine Hanzi-graphs $H_1,H_2,\dots ,H_9$ shown in Fig.(\ref{fig:9-nv-zi})with their parameterized Topcode-matrices
\begin{equation}\label{eqa:333333}
P_{ara}(H_i,F_i)=k_i\cdot I\,^0+d_i\cdot T_{code}(H_i,g_i),~i\in [1,9]
\end{equation} where the Topcode-matrices $I\,^0$, $T_{code}(H_i,h_i)$ and $P_{ara}(H_i,F_i)$ for $i\in [1,9]$ have the same order.

We let the Hanzi-graph $H_1$ to be a \emph{public-key} and the Hanzi-graph $H_2$ to be a \emph{private-key}, and then the Hanzi-graph $H_3$ is just the \emph{topological authentication} of two Hanzi-graphs $H_1$ and $H_2$. On the other hands, the Hanzi-graph $H_1$ has its own private-keys $H_7$ and $H_8$, and the Hanzi-graph $H_2$ has its own private-keys $H_4,H_5,H_6$ and $H_9$.

In Eq.(\ref{eqa:strings-parameterized-22}), we can get $(k,d)$-type number-based strings with arbitrary length of bytes from two Topcode-matrices $P_{ara}(H_1,F_1)$ and $P_{ara}(H_2,F_2)$ for the values of $k_1,d_1, k_2$ and $d_2$ with random bytes. As all of $k_1$, $k_2$, $d_1$ and $d_2$ are prime numbers, moreover, we can ask for

(1) larger odd integer $m=k_1\cdot k_2$ and larger odd integer $n=d_1\cdot d_2$;

(2) larger odd integer $m=k_1\cdot k_2$ and larger even integer $n=d_1+d_2$;

(3) larger even integer $r=k_1+k_2$ and larger odd integer $s=d_1\cdot d_2$;

(4) larger even integer $r=k_1+k_2$ and larger even integer $s=d_1+d_2$.\\
This is mainly to increase the cost of attackers as much as possible.\qqed
\end{example}

{\small
\begin{equation}\label{eqa:strings-parameterized-111}
T_{code}(H_1,g_1)= \left(
\begin{array}{ccccccccccccccc}
3 & 3 & 3 & 3 & 2 & 0 & 0 & 0 & 0 \\
1 & 2 & 3 & 4 & 5 & 6 & 7 & 8 & 9\\
4 & 5 & 6 & 7 & 7 & 6 & 7 & 8 & 9
\end{array}
\right),T_{code}(H_2,g_2)= \left(
\begin{array}{ccccccccccccccc}
2 & 1 & 1 & 1 & 1 & 0 & 0\\
1 & 2 & 3 & 4 & 5 & 6 & 7\\
3 & 3 & 4 & 5 & 6 & 6 & 7
\end{array}
\right)
\end{equation}
}

{\footnotesize
\begin{equation}\label{eqa:strings-parameterized-22}
\centering
{
\begin{split}
&\quad P_{ara}(H_1,F_1)\\
&= \left(
\begin{array}{ccccccccccccccc}
3d_1 & 3d_1 & 3d_1 & 3d_1 & 2d_1 & 0 & 0 & 0 & 0\\
k_1+d_1 & k_1+2d_1 & k_1+3d_1 & k_1+4d_1 & k_1+5d_1 & k_1+6d_1 & k_1+7d_1 & k_1+8d_1 & k_1+9d_1\\
k_1+4d_1 & k_1+5d_1 & k_1+6d_1 & k_1+7d_1 & k_1+7d_1 & k_1+6d_1 & k_1+7d_1 & k_1+8d_1 & k_1+9d_1
\end{array}
\right)\\
&\quad P_{ara}(H_2,F_2)= \left(
\begin{array}{ccccccccccccccc}
2d_2 & d_2 & d_2 & d_2 & d_2 & 0 & 0 \\
k_2+d_2 & k_2+2d_2 & k_2+3d_2 & k_2+4d_2 & k_2+5d_2 & k_2+6d_2 & k_2+7d_2\\
k_2+3d_2 & k_2+3d_2 & k_2+4d_2 & k_2+5d_2 & k_2+6d_2 & k_2+6d_2 & k_2+7d_2
\end{array}
\right)
\end{split}}
\end{equation}
}

\begin{figure}[h]
\centering
\includegraphics[width=14cm]{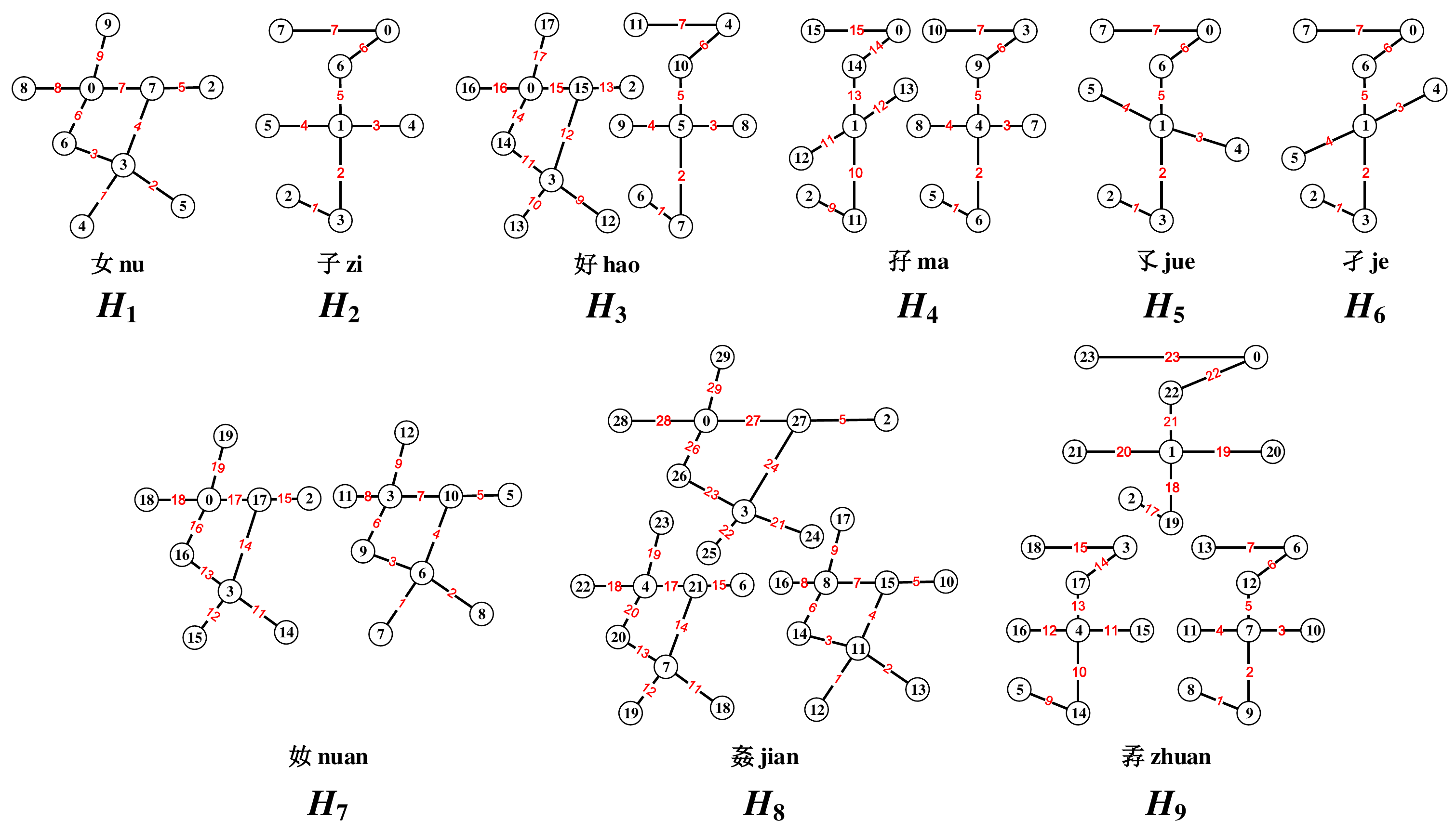}
\caption{\label{fig:9-nv-zi}{\small The Hanzi-graphs $H_1,H_2,H_5$ and $H_6$ admit set-ordered graceful labelings, and other Hanzi-graphs admit flawed set-ordered graceful labelings.}}
\end{figure}

\begin{problem}\label{qeu:PRONBS-problems}
We present a problem for the parametric reconstitution of number-based strings, called PRONBS-problem as follows:
\begin{quote}
\textbf{PRONBS-problem.} For a given number-based string $s=c_1c_2\cdots c_n$ with $c_i\in [0,9]$, rewrite this number-based string $s$ into $3q$ segments $a_1,a_2,\dots ,a_{3q}$ with $a_j=b_{j,1}b_{j,2}\cdots b_{j,m_j}$ for $j\in [1,3q]$, such that each number $c_i$ of the number-based string $s$ is in a segment $a_j$ but $c_i$ is not in other segment $a_s$ if $s\neq j$. Find two integers $k_0,d_0\geq 0$, such that each segment $a_j$ with $j\in [1,3q]$ can be expressed as $a_j=\beta_jk_0+\gamma_jd_0$ for integers $\beta_j,\gamma_j\geq 0$ and $j\in [1,3q]$; and moreover find a colored $(p,q)$-graph $G$ admitting a parameterized coloring $F$ holding its own parameterized Topcode-matrix
\begin{equation}\label{eqa:555555}
P_{ara}(G,F)_{3\times q}=k\cdot I\,^0_{3\times q}+d\cdot T_{code}(G,f)_{3\times q}
\end{equation} where $f$ is a $W$-constraint coloring of $G$, as well as the parameterized number-based string $a_1a_2\dots a_{3q}=c\,'_1c\,'_2\cdots c\,'_n$ is just generated from the parameterized Topcode-matrix $P_{ara}(G,F)_{3\times q}$ when $(k,d)=(k_0,d_0)$.
\end{quote}
\end{problem}

\begin{example}\label{exa:8888888888}
When $(k_2,d_2)=(7,5)$ in $P_{ara}(H_2,F_2)$ of Eq.(\ref{eqa:strings-parameterized-22}), we get a Topcode-matrix
{\small
\begin{equation}\label{eqa:valued-parameterized-matrix}
\centering
{
\begin{split}
P_{ara}(H_2,F_2)&= \left(
\begin{array}{ccccccccccccccc}
2\cdot 5 & 5 & 5 & 5 & 5 & 0 & 0 \\
7+5 & 7+2\cdot 5 & 7+3\cdot 5 & 7+4\cdot 5 & 7+5\cdot 5 & 7+6\cdot 5 & 7+7\cdot 5\\
7+3\cdot 5 & 7+3\cdot 5 & 7+4\cdot 5 & 7+5\cdot 5 & 7+6\cdot 5 & 7+6\cdot 5 & 7+7\cdot 5
\end{array}
\right)\\
&= \left(
\begin{array}{ccccccccccccccc}
10 & 5 & 5 & 5 & 5 & 0 & 0 \\
12 & 17 & 22 & 27 & 32 & 37 & 42\\
22 & 22 & 27 & 32 & 37 & 37 & 42
\end{array}
\right)_{3\times 7}
\end{split}}
\end{equation}
}which induces the following two number-based strings
$${
\begin{split}
D_1&=105555001217222732374222222732373742 \\
D_2&=121722273237420022222732373742105555
\end{split}}
$$ And we can rewrite the $(7,5)$-type number-based string $D_1$ into 21 segments:

$0\cdot 7+2\cdot 5$, $0\cdot 7+1\cdot 5$, $0\cdot 7+1\cdot 5$, $0\cdot 7+1\cdot 5$, $0\cdot 7+1\cdot 5$, $0\cdot 7+0\cdot 5$, $0\cdot 7+0\cdot 5$,

$1\cdot 7+1\cdot 5$, $1\cdot 7+2\cdot 5$, $1\cdot 7+3\cdot 5$, $1\cdot 7+4\cdot 5$, $1\cdot 7+5\cdot 5$, $1\cdot 7+6\cdot 5$, $1\cdot 7+7\cdot 5$,

$1\cdot 7+3\cdot 5$, $1\cdot 7+3\cdot 5$, $1\cdot 7+4\cdot 5$, $1\cdot 7+5\cdot 5$, $1\cdot 7+6\cdot 5$, $1\cdot 7+6\cdot 5$, $1\cdot 7+7\cdot 5$.\\
The above 21 segments form just the parameterized Topcode-matrix $P_{ara}(H_2,F_2)$ shown in Eq.(\ref{eqa:valued-parameterized-matrix}).

By rearranging the orders of numbers in the above $(7,5)$-type number-based strings $D_1$ and $D_2$, we rewrite these two $(7,5)$-type number-based strings as follows
$${
\begin{split}
D\,'_1&=000111222232342222223233425555777777\\
D\,'_2&=111737373737374400055552222222222222
\end{split}}
$$ Clearly, it is quite difficult to recover the $(7,5)$-type number-based strings $D_1$ and $D_2$ from these two number-based strings $D\,'_1$ and $D\,'_2$, and it is impossible to do such works for number-based strings with thousands of bytes.\qqed
\end{example}

\begin{defn} \label{defn:more-string-total-coloring}
$^*$ \textbf{Parameterized string-colorings and set-colorings}. By Definition \ref{defn:kd-w-type-colorings} and Definition \ref{defn:odd-edge-W-type-total-labelings-definition}, let $G$ be a bipartite $(p,q)$-graph, and its vertex set $V(G)=X\cup Y$ with $X\cap Y=\emptyset$ such that each edge $uv\in E(G)$ holds $u\in X$ and $v\in Y$. There are a group of colorings
$${
\begin{split}
&f_s:X\rightarrow S_{m,0,0,d}=\{0,d,\dots ,md\}\\
&f_s:Y\cup E(G)\rightarrow S_{n,k,0,d}=\{k,k+d,\dots ,k+nd\}
\end{split}}
$$ here it is allowed $f_s(u)=f_s(w)$ for some distinct vertices $u,w\in V(G)$ with $s\in [1,B]$ with integer $B\geq 2$, such that $f_s\in \{$gracefully $(k_s,d_s)$-total coloring, odd-gracefully $(k_s,d_s)$-total coloring, edge anti-magic $(k_s,d_s)$-total coloring, harmonious $(k_s,d_s)$-total coloring, odd-elegant $(k_s,d_s)$-total coloring, edge-magic $(k_s,d_s)$-total coloring, edge-difference $(k_s,d_s)$-total coloring, felicitous-difference $(k_s,d_s)$-total coloring, graceful-difference $(k_s,d_s)$-total coloring, odd-edge edge-magic $(k_s,d_s)$-total coloring, odd-edge edge-difference $(k_s,d_s)$-total coloring, odd-edge felicitous-difference $(k_s,d_s)$-total coloring, odd-edge graceful-difference $(k_s,d_s)$-total coloring$\}$ with $s\in [1,B]$. Then we have:

(i) The bipartite $(p,q)$-graph $G$ admits a \emph{parameterized total string-coloring} $F$ defined by
$$
F(u)=f_{i_1}(u)f_{i_2}(u)\cdots f_{i_B}(u),F(uv)=f_{j_1}(uv)f_{j_2}(uv)\cdots f_{j_B}(uv),F(v)=f_{s_1}(v)f_{s_2}(v)\cdots f_{s_B}(v)
$$ true for each edge $uv\in E(G)$, where $f_{i_1}(u)f_{i_2}(u)\cdots f_{i_B}(u)$ is a permutation of $f_1(u),f_2(u),\cdots $, and $f_B(u)$, $f_{j_1}(uv)f_{j_2}(uv)\cdots f_{j_B}(uv)$ is a permutation of $f_1(uv),f_2(uv),\cdots ,f_B(uv)$, as well as $f_{s_1}(v)f_{s_2}(v)$ $\cdots f_{s_B}(v)$ is a permutation of $f_1(v),f_2(v),\cdots ,f_B(v)$.

(ii) The bipartite $(p,q)$-graph $G$ admits a \emph{parameterized total set-coloring} $\theta$ defined by
$${
\begin{split}
\theta(u)&=\{f_1(u),f_2(u),\cdots ,f_B(u)\},\quad \theta(uv)=\{f_1(uv),f_2(uv),\cdots ,f_B(uv)\},\\
\theta(v)&=\{f_1(v),f_2(v),\cdots ,f_B(v)\}
\end{split}}
$$ true for each edge $uv\in E(G)$.\qqed
\end{defn}

\begin{problem}\label{qeu:444444}
Let $s_\infty=c_1c_2\dots $ be an infinite number-based string with $c_i\in [0,9]$. Does this infinite string $s_\infty$ contain any group of finite number-based strings $S_1$, $S_2$, $\dots$, $S_n$, where $S_j=a_{j,1}a_{j,2}\cdots a_{j,r_j}$ with $a_{j,t}\in [0,9]$ for $t\in [1,r_j]$, such that these finite number-based strings form an algebraic group, or a Topcode-matrix?
\end{problem}

\subsection{Graphic lattices related with parameterized total colorings}

\subsubsection{Operation-type graphic lattices}

\begin{defn} \label{defn:111111}
$^*$ Suppose that a \emph{graph base} $\textbf{\textrm{H}}=(H_1,H_2,\dots ,H_m)=(H_k)^m_{k=1}$ is made by disjoint bipartite and connected graphs $H_1$, $H_2$, $\dots $, $H_m$, and we define flawed $W$-constraint $(k,d)$-total colorings for $\textbf{\textrm{H}}$ as follows: Add the edges of an edge set $E^*$ to the base $\textbf{\textrm{H}}$ for joining disjoint bipartite and connected graphs $H_1,H_2,\dots ,H_m$ to be a connected graph, denoted as $$G=\textbf{\textrm{H}}+E^*=E^*[\ominus]^m_{k=1}H_k
$$ If $G=\textbf{\textrm{H}}+E^*$ admits a $W$-constraint $(k,d)$-total coloring $f$, then this coloring $f$ induces a $(k,d)$-total coloring $f\,'$ of the base $\textbf{\textrm{H}}$, we call $f\,'$ a \emph{flawed $W$-constraint $(k,d)$-total coloring}, and we call ``$[\ominus ]$'' to be \emph{edge-join operation}.\qqed
\end{defn}

For a graph base $\textbf{\textrm{H}}=(H_1,H_2,\dots ,H_m)=(H_k)^m_{k=1}$ and $a_k\in Z^0$, the notation $a_kH_k$ stands for $a_k$ copies of a graph $H_k$, and we use the edges of some edge set $E^*$ to join graphs $a_1H_1$, $a_2H_2$, $\dots $, $a_mH_m$ together, such that the resultant graph is a connected graph and denoted as $E^*[\ominus ]^m_{k=1}a_kH_k$, notice that the edge set $E^*$ is various, and there are more methods to get $E^*[\ominus ]^m_{k=1}a_kH_k$, then we get an \emph{edge-join graphic lattice}
\begin{equation}\label{eqa:edge-join-graphic-lattice}
\textbf{\textrm{L}}(\textbf{\textrm{E}}[\ominus ]Z^0\textbf{\textrm{H}})=\big \{E^*[\ominus ]^m_{k=1}a_kH_k:~a_k\in Z^0,H_k\in \textbf{\textrm{H}},E^*\in \textbf{\textrm{E}}\big \}
\end{equation} with $\sum ^m_{k=1}a_k\geq 1$, and call $\textbf{\textrm{H}}$ \emph{lattice base}, where $\textbf{\textrm{E}}$ is the set of edge sets.

Let $G_{i_1},G_{i_2},\dots ,G_{i_M}$ be a permutation of graphs $a_1H_1$, $a_2H_2$, $\dots $, $a_mH_m$ for $M=\sum ^m_{k=1}a_k\geq 1$. By the vertex-coinciding operation ``$[\odot ]$'', we vertex-coincide a vertex $x$ of $G_{i_j}$ with a vertex $y$ of $G_{i_{j+1}}$ into one vertex $x\odot y$ for $j\in [1,M-1]$, the resultant graph is denoted as
$$G_{i_1}[\odot ]G_{i_2}[\odot ]\dots [\odot ]G_{i_M}=[\odot _{\textrm{linear}}]^m_{k=1}a_kH_k$$
Hence, we get at least $M!$ graphs $[\odot _{\textrm{linear}}]^m_{k=1}a_kH_k$. We call the following set
\begin{equation}\label{eqa:linear-vertex-odot-graphic-lattice}
\textbf{\textrm{L}}([\odot_{\textrm{linear}}]Z^0\textbf{\textrm{H}})=\big \{[\odot _{\textrm{linear}}]^m_{k=1}a_kH_k:~a_k\in Z^0,H_k\in \textbf{\textrm{H}}\big \}
\end{equation} a \emph{linear vertex-coinciding graphic lattice} with $\sum ^m_{k=1}b_k\geq 1$.

In general, we use the vertex-coinciding operation to graphs $a_1H_1,a_2H_2,\dots ,a_mH_m$ for producing graphs $[\odot]^m_{k=1}a_kH_k$, and get a \emph{vertex-coinciding graphic lattice}
\begin{equation}\label{eqa:vertex-odot-graphic-lattice}
\textbf{\textrm{L}}([\odot]Z^0\textbf{\textrm{H}})=\big \{[\odot ]^m_{k=1}\beta_kH_k:~\beta_k\in Z^0,H_k\in \textbf{\textrm{H}}\big \}
\end{equation} with $\sum ^m_{k=1}\beta_k\geq 1$.

Similarly, we do the edge-join operation ``$[\ominus ]$'' and the vertex-coinciding operation ``$[\odot ]$'' to graphs $a_1H_1,a_2H_2,\dots ,a_mH_m$ by the mixed way, the resultant graph is denoted as $[\ominus \odot]^m_{k=1}a_kH_k$ simply. Thereby, we get an \emph{operation-mixed graphic lattice}
\begin{equation}\label{eqa:edge-join-vertex-odot-graphic-lattice}
\textbf{\textrm{L}}([\ominus \odot]Z^0\textbf{\textrm{H}})=\big \{[\ominus \odot]^m_{k=1}c_kH_k:~a_k\in Z^0,H_k\in \textbf{\textrm{H}}\big \}
\end{equation} with $\sum ^m_{k=1}c_k\geq 1$.

It is important to investigate the coloring closure of graphic lattices $\textbf{\textrm{L}}(\textbf{\textrm{E}}[\ominus ]Z^0\textbf{\textrm{H}})$, $\textbf{\textrm{L}}([\odot]Z^0\textbf{\textrm{H}})$ and $\textbf{\textrm{L}}([\ominus \odot]Z^0\textbf{\textrm{H}})$.

\begin{thm}\label{thm:vertex-split-disjoint-graphs}
Since any graph $G$ can be vertex-split into pairwise disjoint graphs $G_1,G_2,\dots ,G_m$, which is just a permutation of graphs $a_1H_1$, $a_2H_2$, $\dots $, $a_mH_m$, then each graph is in a vertex-coinciding graphic lattice $\textbf{\textrm{L}}([\odot]Z^0\textbf{\textrm{H}})$ defined in Eq.(\ref{eqa:vertex-odot-graphic-lattice}).
\end{thm}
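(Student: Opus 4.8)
The plan is to exhibit, for an arbitrary graph $G$, a lattice base $\textbf{\textrm{H}}$ for which $G$ appears in the vertex-coinciding graphic lattice $\textbf{\textrm{L}}([\odot]Z^0\textbf{\textrm{H}})$; the base will simply be read off from a total vertex-splitting of $G$. First I would treat a connected $(p,q)$-graph $G$ (the disconnected case is handled at the end). By the vertex-splitting tree-operation of Theorem \ref{thm:666666}, $G$ is vertex-split into a tree $T$ on $q+1$ vertices with $q=|E(G)|$ edges, and $G=\odot_m(T)$, i.e. $G$ is recovered from $T$ by a finite sequence of vertex-coinciding operations. Continuing to vertex-split every vertex of $T$ of degree at least two — a process which at each step strictly decreases the number of positive-degree vertices lying on a common component, hence terminates — I obtain a graph each of whose components is a single edge, that is, a disjoint union of $q$ copies of $K_2$. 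Thus $G$ is vertex-split, via $T$, into pairwise disjoint graphs $G_1,G_2,\dots ,G_q$ with each $G_i\cong K_2$.

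Next I would group the pieces by isomorphism type. In the general situation where the total vertex-splitting is stopped at some intermediate stage (not necessarily down to single edges), one gets pairwise disjoint connected graphs $G_1,\dots ,G_M$; letting $H_1,\dots ,H_m$ be their distinct isomorphism classes and $a_k$ the number of $G_i$ isomorphic to $H_k$, the list $G_1,\dots ,G_M$ is a permutation of $a_1H_1,a_2H_2,\dots ,a_mH_m$, and we set $\textbf{\textrm{H}}=(H_1,\dots ,H_m)$ (in the concrete construction above one may just take $\textbf{\textrm{H}}=(K_2)$, $m=1$, $a_1=q$). Because the vertex-coinciding operation ``$\odot$'' is exactly the inverse of the vertex-splitting operation ``$\wedge$'' — splitting a vertex $x$ into $x',x''$ is undone by coinciding $x'$ with $x''$ — the reassembly of $G$ from $G_1\cup G_2\cup \cdots \cup G_M$ is a finite composition of vertex-coinciding operations applied to $a_1H_1,a_2H_2,\dots ,a_mH_m$, namely $G=[\odot]^m_{k=1}a_kH_k$ in the sense of Eq.(\ref{eqa:vertex-odot-graphic-lattice}). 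Hence $G\in \textbf{\textrm{L}}([\odot]Z^0\textbf{\textrm{H}})$. For a disconnected $G$ with components $C_1,\dots ,C_r$, I would apply this to each component, take for $\textbf{\textrm{H}}$ the union of the resulting bases (or again $\textbf{\textrm{H}}=(K_2)$), and observe that $\textbf{\textrm{L}}([\odot]Z^0\textbf{\textrm{H}})$ ranges over all admissible ways of performing the $[\odot]$-operations on the $\sum^{m}_{k=1}a_k$ copies, in particular the way that reproduces the prescribed component structure, so $G\in \textbf{\textrm{L}}([\odot]Z^0\textbf{\textrm{H}})$ once more; since this holds for every $G$, Theorem \ref{thm:vertex-split-disjoint-graphs} follows.

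I expect the main obstacle to be bookkeeping rather than any deep idea. One must verify carefully that the composition of vertex-coinciding steps reassembling $G$ from the disjoint pieces is a legitimate instance of the operation $[\odot]^m_{k=1}a_kH_k$ as defined just before Eq.(\ref{eqa:linear-vertex-odot-graphic-lattice}) (the definition spells out only the linear/sequential mode explicitly and then the general ``mixed way''), that the successive coincidings never conflict because each one identifies only vertices created by an earlier split, and — in the disconnected case — that the operation convention actually permits leaving the pieces corresponding to distinct components of $G$ without a coinciding between them. A secondary point to pin down is the termination and correctness of splitting a tree all the way to $q$ disjoint edges, which is a short induction on the number of internal vertices of the tree.
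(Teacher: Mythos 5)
Your proposal is correct and follows exactly the argument the paper intends: the paper gives no separate proof of Theorem \ref{thm:vertex-split-disjoint-graphs}, treating it as immediate from the fact that vertex-coinciding is the inverse of vertex-splitting (cf. $G=\odot_m(T)$ in Theorem \ref{thm:666666}), so that splitting $G$ into disjoint pieces, grouping them by isomorphism type into a base $\textbf{\textrm{H}}$, and reassembling gives $G=[\odot]^m_{k=1}a_kH_k\in \textbf{\textrm{L}}([\odot]Z^0\textbf{\textrm{H}})$. Your extra bookkeeping (splitting down to copies of $K_2$, termination, and the disconnected case) only fills in details the paper leaves implicit.
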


\begin{problem}\label{qeu:444444}
Does there exist another graph base $\textbf{\textrm{H}}\,'=(H\,'_1,H\,'_2,\dots ,H\,'_m)$ holding one of
$$\textbf{\textrm{L}}([\ominus]Z^0\textbf{\textrm{H}}\,')=\textbf{\textrm{L}}([\ominus]Z^0\textbf{\textrm{H}}),~\textbf{\textrm{L}}([\odot]Z^0\textbf{\textrm{H}}\,')=\textbf{\textrm{L}}([\odot]Z^0\textbf{\textrm{H}}),~\textbf{\textrm{L}}([\ominus \odot]Z^0\textbf{\textrm{H}}\,')=\textbf{\textrm{L}}([\ominus \odot]Z^0\textbf{\textrm{H}})$$
\end{problem}

A \emph{parameter-colored tree base} $\textbf{\textrm{H}}^c=(H^c_1,H^c_2,\dots ,H^c_m)$ consists of disjoint graphs $H^c_1,H^c_2$, $\dots $, $H^c_m$, where each tree $H^c_i$ admits a $W_i$-constraint $(k_i,d_i)$-total coloring $g_i$ ($i\in [1,m]$) defined in Definition \ref{defn:kd-w-type-colorings}, and two vertex color sets $g_i(V(H^c_i))\cap g_j(V(H^c_j))\neq \emptyset$. We take is a permutation $G_1,G_2,\dots ,G_Q$ of graphs $c_1H^c_1,c_2H^c_2,\dots,c_nH^c_m$ with $Q=\sum ^m_{k=1}c_k$, and we vertex-coincide a vertex $x$ of $H^c_i$ with a vertex $y$ of $H^c_{i+1}$ into one vertex $x\odot y$ for $i\in [1,M-1]$, where $f_i(x)=f_j(y)$ according to the assume, so the resultant graph is a tree, denoted as
$$
G_1[\odot]G_2[\odot]\cdots [\odot]G_Q=[\odot_{\textrm{linear}}]^m_{k=1}c_kH^c_k
$$ and we can get $Q!$ trees for each number $Q=\sum ^m_{k=1}c_k$. We define a \emph{linear parameter-colored graphic lattice} as follows
\begin{equation}\label{eqa:linear-colored-tree-vertex-coin-lattice}
\textbf{\textrm{L}}([\odot_{\textrm{linear}}]Z^0\textbf{\textrm{H}}^c)=\big \{[\odot_{\textrm{linear}}]^m_{k=1}c_kH^c_k:~c_k\in Z^0,H^c_k\in \textbf{\textrm{H}}^c\big \}
\end{equation} with $\sum ^m_{k=1}c_k\geq 1$.

In general, we do the vertex-coinciding operation to disjoint graphs $H^c_1,H^c_2,\dots ,H^c_m$, the resultant simple graphs are denoted as $[\odot]^m_{k=1}\gamma_kH^c_k$, and we define a \emph{parameter-colored graphic lattice} as follows
\begin{equation}\label{eqa:colored-tree-vertex-coin-lattice}
\textbf{\textrm{L}}([\odot]Z^0\textbf{\textrm{H}}^c)=\big \{[\odot]^m_{k=1}\gamma_kH^c_k:~\gamma_k\in Z^0,H^c_k\in \textbf{\textrm{H}}^c\big \}
\end{equation} with $\sum ^m_{k=1}\gamma_k\geq 1$.

\subsubsection{Uncolored tree-graphic lattices}

An \emph{uncolored tree base} $\textbf{\textrm{T}}=(T_1,T_2,\dots ,T_n)=(T_k)^n_{k=1}$ consists of disjoint trees $T_1,T_2,\dots ,T_n$. We use the edges of an edge set $E^*$ to join these disjoint trees together, the resulting graph is denoted as $\textbf{\textrm{T}}+E^*$. Notice that one tree $H_i$ may be used two or more times in $\textbf{\textrm{T}}+E^*$, so we write $E^*[\ominus]^n_{k=1} \alpha_kT_k$ with $\sum ^m_{k=1}\alpha_k\geq 1$ for replacing $\textbf{\textrm{T}}+E^*$ by the \emph{edge-join operation} ``$[\ominus ]$''. We have a \emph{tree-base graphic lattice} as follows:
\begin{equation}\label{eqa:tree-edge-join-tree-lattice}
\textbf{\textrm{L}}(\textbf{\textrm{E}}[\ominus ]Z^0\textbf{\textrm{T}})=\big \{E^*[\ominus ]^m_{k=1}\mu_kT_k:~\mu_k\in Z^0,T_k\in \textbf{\textrm{T}},E^*\in \textbf{\textrm{E}}\big \}
\end{equation} with $\sum ^m_{k=1}\mu_k\geq 1$. And moreover, we have other two tree-base graphic lattices
\begin{equation}\label{eqa:tree-vertex-odot-tree-lattice}
\textbf{\textrm{L}}([\odot ]Z^0\textbf{\textrm{T}})=\big \{[\odot ]^m_{k=1}\lambda_kT_k:~\lambda_k\in Z^0,T_k\in \textbf{\textrm{T}}\big \}
\end{equation} with $\sum ^m_{k=1}\lambda_k\geq 1$ and
\begin{equation}\label{eqa:tree-vertex-edge-mixed-tree-lattice}
\textbf{\textrm{L}}([\ominus \odot]Z^0\textbf{\textrm{T}})=\big \{[\ominus \odot]^m_{k=1}\gamma_kT_k:~\gamma_k\in Z^0,T_k\in \textbf{\textrm{T}}\big \}
\end{equation} with $\sum ^m_{k=1}\gamma_k\geq 1$.

\vskip 0.4cm

For considering the coloring closure of graphic lattices, let $\textbf{\textrm{L}}_{\textrm{tree}}(\textbf{\textrm{E}}[\ominus ]Z^0\textbf{\textrm{T}})$ be the set of all trees in a tree-base graphic lattice $\textbf{\textrm{L}}(\textbf{\textrm{E}}[\ominus ]Z^0\textbf{\textrm{T}})$ defined in Eq.(\ref{eqa:tree-edge-join-tree-lattice}), and we call it \emph{all-tree-graphic lattice}. Similarly, we have other two all-tree-graphic lattices $\textbf{\textrm{L}}_{\textrm{tree}}([\odot]Z^0\textbf{\textrm{T}})\subset \textbf{\textrm{L}}([\odot]Z^0\textbf{\textrm{T}})$ defined in Eq.(\ref{eqa:tree-vertex-odot-tree-lattice}) and $\textbf{\textrm{L}}_{\textrm{tree}}([\ominus \odot]Z^0\textbf{\textrm{T}})\subset\textbf{\textrm{L}}([\ominus \odot]Z^0\textbf{\textrm{T}})$ defined in Eq.(\ref{eqa:tree-vertex-edge-mixed-tree-lattice}). By the theorems introduced in the pre-subsections, these three all-tree-graphic lattices are coloring closure for the $(k,d)$-total colorings stated in the following theorem \ref{thm:coloring-closure-kd-total-colorings}:

\begin{thm}\label{thm:coloring-closure-kd-total-colorings}
Each tree contained in the all-tree-graphic lattices $\textbf{\textrm{L}}_{\textrm{tree}}(\textbf{\textrm{E}}[\ominus ]Z^0\textbf{\textrm{T}})$, $\textbf{\textrm{L}}_{\textrm{tree}}([\odot]Z^0\textbf{\textrm{T}})$ and $\textbf{\textrm{L}}_{\textrm{tree}}([\ominus \odot]Z^0\textbf{\textrm{T}})$ admits the following ten $(k,d)$-total colorings: graceful $(k,d)$-total coloring, harmonious $(k,d)$-total coloring, (odd-edge) edge-difference $(k,d)$-total coloring, (odd-edge) graceful-difference $(k,d)$-total coloring, (odd-edge) felicitous-difference $(k,d)$-total coloring, (odd-edge) edge-magic $(k,d)$-total coloring.
\end{thm}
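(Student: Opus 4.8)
\textbf{Proof plan for Theorem \ref{thm:coloring-closure-kd-total-colorings}.}

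The plan is to reduce the closure statement for the three all-tree-graphic lattices to what has already been established for individual trees. First I would recall the key structural fact: by Theorem \ref{thm:vertex-split-disjoint-graphs} and the constructions of the lattices $\textbf{\textrm{L}}_{\textrm{tree}}(\textbf{\textrm{E}}[\ominus ]Z^0\textbf{\textrm{T}})$, $\textbf{\textrm{L}}_{\textrm{tree}}([\odot]Z^0\textbf{\textrm{T}})$ and $\textbf{\textrm{L}}_{\textrm{tree}}([\ominus \odot]Z^0\textbf{\textrm{T}})$, every element of each of these lattices is, by definition, a \emph{tree} (that is the whole point of passing to the ``all-tree'' sublattice). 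So the statement is really: \emph{every tree obtainable by edge-join / vertex-coinciding / mixed operations from the base $\textbf{\textrm{T}}$ admits each of the ten listed $(k,d)$-total colorings}. But a finite tree is a finite tree regardless of how it was assembled, so it suffices to prove that \emph{every} tree $T$ with at least one edge admits all ten colorings; the lattice packaging then adds nothing beyond guaranteeing that the objects in question are trees.

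Next I would assemble the per-tree results already proved in the excerpt. For a tree $T$ of diameter $D(T)\geq 3$: Theorem \ref{thm:any-tree-k-d-graceful-total-coloring} gives graceful $(k,d)$-total colorings; Theorem \ref{thm:tree-harmonious-k-d-total-colorings} (and Theorem \ref{thm:different-k-d-harmonious-colorings}) gives harmonious $(k,d)$-total colorings; Theorem \ref{thm:edge-difference-algorithm} gives edge-difference $(k,d)$-total colorings; Theorem \ref{thm:graceful-difference-algorithm} gives graceful-difference $(k,d)$-total colorings; Theorem \ref{thm:felicitous-difference-algorithm} gives felicitous-difference $(k,d)$-total colorings; and Theorem \ref{thm:edge-magic-algorithm} gives edge-magic $(k,d)$-total colorings. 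For the odd-edge variants I would invoke Theorem \ref{thm:trees-admits-graceful-difference-k-d}, Theorem \ref{thm:edge-difference-kd-trees}, Theorem \ref{thm:trees-admits-felicitous-difference-k-d} and Theorem \ref{thm:trees-admits-edge-magic-k-d}, each of which explicitly asserts that \emph{each tree} admits the corresponding coloring together with its odd-edge counterpart; these come from the RLA-algorithms A--D, which start from a set-ordered graceful labeling of a connected bipartite graph (a path suffices for a tree, and by the remarks a path of $q$ edges always has one) and add randomly $m=0$ leaves, so they apply to any tree directly. The small cases $D(T)\leq 2$ (stars $K_{1,n}$ and $K_{1,1}$) are handled by the explicit initializations given inside the proofs of Theorem \ref{thm:any-tree-k-d-graceful-total-coloring} and Theorem \ref{thm:tree-harmonious-k-d-total-colorings}, plus the equivalence transformations of Theorem \ref{thm:equivalent-k-d-total-colorings}.

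Then I would glue the two families together using Theorem \ref{thm:equivalent-k-d-total-colorings}: since a bipartite connected (in particular, any) graph admits a graceful $(k,d)$-total coloring if and only if it admits each of the edge-magic, graceful-difference, edge-difference, felicitous-difference, harmonious and edge-antimagic $(k,d)$-total colorings, one existence statement for trees propagates to all the ``even-edge'' members of the list; the odd-edge members then follow from the dedicated tree theorems listed above (or, alternatively, from the parameterization step inside RLA-algorithms that also produces odd-edge colorings via the $O_{2q-1,k,d}$ edge color set). Finally I would write: let $T$ be any tree in $\textbf{\textrm{L}}_{\textrm{tree}}(\textbf{\textrm{E}}[\ominus ]Z^0\textbf{\textrm{T}})$ (resp. the other two lattices); $T$ is a tree with $q\geq 1$ edges, hence by the cited theorems it admits all ten $(k,d)$-total colorings, which is exactly the claim.

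The main obstacle I anticipate is not conceptual but bookkeeping: checking that the hypotheses ``$D(T)\geq 3$'' appearing in Theorems \ref{thm:any-tree-k-d-graceful-total-coloring}, \ref{thm:edge-difference-algorithm}, \ref{thm:graceful-difference-algorithm}, \ref{thm:felicitous-difference-algorithm}, \ref{thm:edge-magic-algorithm} are genuinely not a restriction here — one must separately verify the degenerate trees (single edge, star) admit each coloring, and confirm that the odd-edge variants in Theorems \ref{thm:trees-admits-graceful-difference-k-d}, \ref{thm:edge-difference-kd-trees}, \ref{thm:trees-admits-felicitous-difference-k-d}, \ref{thm:trees-admits-edge-magic-k-d} indeed cover those cases with no parity obstruction. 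A second, subtler point worth a sentence is to make explicit that the ``all-tree'' restriction of each lattice is genuinely nonempty and consists only of trees, so that the reduction ``element of the lattice $\Rightarrow$ tree $\Rightarrow$ apply per-tree theorems'' is airtight; this is immediate from the definitions of $\textbf{\textrm{L}}_{\textrm{tree}}$ but should be stated.
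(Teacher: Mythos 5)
Your proposal is correct and follows essentially the same route as the paper, which gives no separate argument for this theorem beyond the remark that it follows from the per-tree results of the preceding subsections (Theorems \ref{thm:any-tree-k-d-graceful-total-coloring}, \ref{thm:tree-harmonious-k-d-total-colorings}, \ref{thm:edge-difference-algorithm}--\ref{thm:edge-magic-algorithm} and the RLA-algorithm theorems), exactly the reduction you carry out. Your explicit attention to the degenerate cases $D(T)\leq 2$ and to the fact that every element of an all-tree-graphic lattice is indeed a tree is just careful bookkeeping the paper leaves implicit.
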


\subsubsection{$(k,d)$-colored tree-graphic lattices}

A \emph{parameter-colored tree base} $\textbf{\textrm{T}}^c=(T^c_1,T^c_2,\dots ,T^c_n)$ is consisted of disjoint trees $T^c_1,T^c_2$, $\dots $, $T^c_n$, where each tree $T^c_i$ admits a $W_i$-constraint $(k_i,d_i)$-total coloring $f_i$ ($i\in [1,n]$) defined in Definition \ref{defn:kd-w-type-colorings}. If $W=W_1=W_2=\cdots =W_n$, and $(k,d)=(k_1,d_1)=(k_2,d_2)=\cdots=(k_n,d_n)$, we say that $\textbf{\textrm{T}}^c$ is a \emph{uniformly $W$-constraint $(k,d)$-colored tree base}.

Assume that two vertex color sets $f_i(V(T^c_i))\cap f_j(V(T^c_j))\neq \emptyset$ for $i\neq j$ in $\textbf{\textrm{T}}^c$. We take is a permutation $I_1,I_2,\dots ,I_M$ of graphs $b_1T^c_1,b_2T^c_2,\dots,b_nT^c_n$ with $M=\sum ^n_{k=1}b_k$, and we vertex-coincide a vertex $x$ of $T^c_i$ with a vertex $y$ of $T^c_{i+1}$ into one vertex $x\odot y$ for $i\in [1,M-1]$, where $f_i(x)=f_j(y)$ according to the assume, so the resultant graph is a tree, denoted as
$$I_1[\odot]I_2[\odot]\cdots [\odot]I_M=[\odot_{\textrm{linear}}]^n_{k=1}b_kT^c_k$$
And moreover we get a \emph{linear parameter-colored all-tree-graphic lattice}
\begin{equation}\label{eqa:linear-colored-tree-vertex-coin-lattice}
\textbf{\textrm{L}}_{\textrm{tree}}([\odot_{\textrm{linear}}]Z^0\textbf{\textrm{T}}^c)=\big \{[\odot_{\textrm{linear}}]^n_{k=1}b_kT^c_k:~b_k\in Z^0,T^c_k\in \textbf{\textrm{T}}^c\big \}
\end{equation} with $\sum ^n_{k=1}b_k\geq 1$.

In general, we call the following set
\begin{equation}\label{eqa:colored-tree-vertex-coin-lattice}
\textbf{\textrm{L}}_{\textrm{tree}}([\odot]Z^0\textbf{\textrm{T}}^c)=\big \{[\odot]^n_{k=1}\lambda_kT^c_k:~\lambda_k\in Z^0,T^c_k\in \textbf{\textrm{T}}^c\big \}
\end{equation} a \emph{parameter-colored all-tree-graphic lattice} with $\sum ^n_{k=1}\lambda_k\geq 1$.

\begin{rem}\label{rem:333333}
Sometimes, a tree $H\in \textbf{\textrm{L}}_{\textrm{tree}}([\odot]Z^0\textbf{\textrm{T}}^c)$ may be colored by two or more different $W_i$-constraint $(k_i,d_i)$-total colorings defined in Definition \ref{defn:kd-w-type-colorings}. Thereby, the Topcode-matrix $T_{code}(H)$ contains tow or more groups of parameters $(k_i,d_i),(k_j,d_j),\cdots ,(k_t,d_t)$. This enables us to create very complex number-base springs. \paralled
\end{rem}

For a permutation $I_1,I_2,\dots ,I_M$ of graphs $\beta_1T^c_1,\beta_2T^c_2,\dots,\beta_nT^c_n$ with $M=\sum ^n_{k=1}\beta_k$, we add leaves to each tree $I_j$ and get a $W_i$-constraint $(k_i,d_i)$-total coloring $f^*_j$ based on the $W_i$-constraint $(k_i,d_i)$-total coloring admitted by the parameter-colored tree base $\textbf{\textrm{T}}^c$, the resultant tree is denoted as $I^*_j$, and two vertex color sets $f^*_i(V(I^*_i))\cap f^*_j(I^*_j))\neq \emptyset$ by the assume $f_i(V(T^c_i))\cap f_j(V(T^c_j))\neq \emptyset$ for $i\neq j$ in $\textbf{\textrm{T}}^c$. And then we do the vertex-coincide operation to $I^*_1,I^*_2,\dots ,I^*_M$, that is
$$I^*_1[\odot]I^*_2[\odot]\cdots [\odot]I^*_M=[\ominus_{\textrm{leaf}}\odot_{\textrm{linear}}]^n_{k=1}\beta_kT^c_k$$
and get a \emph{parameter-colored adding-leaf tree-graphic lattice}
\begin{equation}\label{eqa:colored-tree-vertex-coin-add-leaf}
\textbf{\textrm{L}}_{\textrm{tree}}([\ominus_{\textrm{leaf}}\odot_{\textrm{linear}}]Z^0\textbf{\textrm{T}}^c)=\big \{[\ominus_{\textrm{leaf}}\odot_{\textrm{linear}}]^n_{k=1}\beta_kT^c_k:~\beta_k\in Z^0,T^c_k\in \textbf{\textrm{T}}^c\big \}
\end{equation} with $\sum ^n_{k=1}\beta_k\geq 1$.

\begin{rem}\label{rem:333333}
Sometimes, a tree $H\in \textbf{\textrm{L}}_{\textrm{tree}}([\odot]Z^0\textbf{\textrm{T}}^c)$, or $H\in \textbf{\textrm{L}}_{\textrm{tree}}([\ominus_{\textrm{leaf}}\odot_{\textrm{linear}}]Z^0\textbf{\textrm{T}}^c)$ may be colored by two or more different $W_i$-constraint $(k_i,d_i)$-total colorings defined in Definition \ref{defn:kd-w-type-colorings}. Thereby, the Topcode-matrix $T_{code}(H)$ contains tow or more groups of parameters $(k_i,d_i),(k_j,d_j),\cdots ,(k_t,d_t)$. This enables us to create very complex number-base springs. \paralled
\end{rem}

\vskip 0.4cm

We have some mathematical problems as follows:
\begin{problem} \label{problem:all-tree-graphic-lattice-11}
About an all-tree-graphic lattice $\textbf{\textrm{L}}_{\textrm{tree}}(\textbf{\textrm{E}}[\ominus ]Z^0\textbf{\textrm{T}})$ defined in Eq.(\ref{eqa:tree-edge-join-tree-lattice}), we may consider the following problems:
\begin{asparaenum}[\textrm{\textbf{Ext}}-1. ]
\item \textbf{Find} a tree $H^*\in \textbf{\textrm{L}}_{\textrm{tree}}(\textbf{\textrm{E}}[\ominus ]Z^0\textbf{\textrm{T}})$ has its own diameter holding $D(H^*)\leq D(H)$ for each tree $H\in \textbf{\textrm{L}}_{\textrm{tree}}(\textbf{\textrm{E}}[\ominus ]Z^0\textbf{\textrm{T}})$.
\item \textbf{Find} a tree $H^*\in \textbf{\textrm{L}}_{\textrm{tree}}(\textbf{\textrm{E}}[\ominus ]Z^0\textbf{\textrm{T}})$ has its own maximum degree holding $\Delta(H^*)\leq \Delta(H)$ for each tree $H\in \textbf{\textrm{L}}_{\textrm{tree}}(\textbf{\textrm{E}}[\ominus ]Z^0\textbf{\textrm{T}})$.
\item \textbf{Find} a subset $S_{\varepsilon}\subset \textbf{\textrm{L}}_{\textrm{tree}}(\textbf{\textrm{E}}[\ominus ]Z^0\textbf{\textrm{T}})$, which contains all $\varepsilon$ in $\textbf{\textrm{L}}_{\textrm{tree}}(\textbf{\textrm{E}}[\ominus ]Z^0\textbf{\textrm{T}})$ for $\varepsilon \in \{$caterpillars, lobsters, spiders$\}$.
\item \textbf{Find} a tree $H^*\in \textbf{\textrm{L}}_{\textrm{tree}}(\textbf{\textrm{E}}[\ominus ]Z^0\textbf{\textrm{T}})$ has its own leaf number $n_1(H^*)$ holding $n_1(H^*)\leq n_1(H)$ for the leaf number $n_1(H)$ of each tree $H\in \textbf{\textrm{L}}_{\textrm{tree}}(\textbf{\textrm{E}}[\ominus ]Z^0\textbf{\textrm{T}})$.
\item \textbf{Estimate} the cardinality of an all-tree-graphic lattice $\textbf{\textrm{L}}_{\textrm{tree}}(\textbf{\textrm{E}}[\ominus ]Z^0\textbf{\textrm{T}})$.
\item \textbf{Find} connection between a leaf number $l_e(G)$ and the leaf numbers $l_e(T_i)$ in the uncolored tree base $\textbf{\textrm{T}}$ for $G\in \textbf{\textrm{L}}_{\textrm{tree}}(\textbf{\textrm{E}}[\ominus ]Z^0\textbf{\textrm{T}})$.
\item For the subset $S_{\textrm{matching}}\subset \textbf{\textrm{L}}_{\textrm{tree}}(\textbf{\textrm{E}}[\ominus ]Z^0\textbf{\textrm{T}})$, in which each tree has a perfect matching. Does each tree $J\in S_{\textrm{matching}}$ admits one of strongly $W$-constraint $(k,d)$-total colorings?
\end{asparaenum}
\end{problem}

\begin{problem}\label{problem:all-tree-graphic-lattice-22-33}
Consider problems shown in Problem \ref{problem:all-tree-graphic-lattice-11} about other two all-tree-graphic lattices $\textbf{\textrm{L}}_{\textrm{tree}}([\odot]Z^0\textbf{\textrm{T}})$ and $\textbf{\textrm{L}}_{\textrm{tree}}([\ominus \odot]Z^0\textbf{\textrm{T}})$.
\end{problem}

\begin{problem} \label{problem:kd-w-tupe-colorings-add-leaves}
In general, we have a colored all-tree-graphic lattice $\textbf{\textrm{L}}_{\textrm{tree}}([\bullet]Z^0\textbf{\textrm{T}})$ based on a graph operation ``$[\bullet]$'' and each tree $T\in \textbf{\textrm{L}}_{\textrm{tree}}([\bullet]Z^0\textbf{\textrm{T}})$ admits a $W$-constraint $(k,d)$-total coloring, where ``$[\bullet]$'' is one of ``$[\ominus]$'', ``$[\odot]$'' and ``$[\ominus \odot]$''. It may be good to consider the following problems:
\begin{asparaenum}[\textrm{\textbf{Prob}}-1. ]
\item \textbf{Estimate} the cardinality of an all-tree-graphic lattice $\textbf{\textrm{L}}_{\textrm{tree}}([\bullet]Z^0\textbf{\textrm{T}})$.
\item \textbf{Find} a tree $H^*\in \textbf{\textrm{L}}_{\textrm{tree}}([\bullet]Z^0\textbf{\textrm{T}})$ admitting a $W$-constraint $(k,d)$-total coloring $f$ having its own number $|f(V(H^*))|$ of different colors colored to the vertices of the tree $H^*$ under the graceful $(k,d)$-total coloring $f$ to hold $|f(V(H^*))|\leq |h(V(H))|$ for each tree $H\in \textbf{\textrm{L}}_{\textrm{tree}}([\bullet]Z^0\textbf{\textrm{T}})$ admitting a $W$-constraint $(k,d)$-total coloring $h$.
\item \textbf{Same lattice base}. For two given connected graphs $G_i$ and $G_j$, if there exists a base $H=(H_k)^m_{k=1}$ and a graph operation ``$[\bullet]$'', such that $G_i=[\bullet]^m_{k=1}a_kH_k$ and $G_j=[\bullet] ^m_{k=1}b_kH_k$, we say that both connected graphs $G_i$ and $G_j$ obey the \emph{same lattice base}. Here, we are interesting on $|V(H_k)|\geq 2$, or some $H_i$ is not a star, that is $H_i\neq K_{1,s}$.
\item If both connected graphs $G_i$ and $G_j$ contain a proper subgraph $G^*$, we contract $G^*$ into a vertex $w^*$, so we get two connected graphs $G_i\odot G^*$ and $G_j\odot G^*$, such that $G_i=(G_i\odot G^*)\odot^{-1} w^*$ and $G_j=(G_j\odot G^*)\odot^{-1} w^*$. If both connected graphs $G_i\odot G^*$ and $G_j\odot G^*$ obey the same lattice base for any proper subgraph $G^*$ holding $G^*\subset G_i$ and $G^*\subset G_j$, \textbf{can} we claim both connected graphs $G_i$ and $G_j$ obey the same lattice base?
\end{asparaenum}
\end{problem}

\subsubsection{Tree-graphic lattice homomorphisms}

\begin{defn}\label{defn:definition-graph-homomorphism}
\cite{Bondy-2008} A \emph{graph homomorphism} $G\rightarrow H$ from a graph $G$ into another graph $H$ is a mapping $f: V(G) \rightarrow V(H)$ such that $f(u)f(v)\in E(H)$ for each edge $uv \in E(G)$. \qqed
\end{defn}

\textbf{Totally-colored graph homomorphisms.} The authors in \cite{Bing-Yao-Hongyu-Wang-arXiv-2020-homomorphisms} propose a new type of graph homomorphisms by combining graph homomorphisms and graph total colorings together as follows:

\begin{defn}\label{defn:gracefully-graph-homomorphism}
\cite{Bing-Yao-Hongyu-Wang-arXiv-2020-homomorphisms} Let $G\rightarrow H$ be a graph homomorphism from a $(p,q)$-graph $G$ to another $(p\,',q\,')$-graph $H$ based on a mapping $\alpha: V(G) \rightarrow V(H)$ such that $\alpha(u)\alpha(v)\in E(H)$ for each edge $uv \in E(G)$. The graph $G$ admits a total coloring $f$, the graph $H$ admits a total coloring $g$. Write the edge color set $f(E(G))=\{f(uv):uv \in E(G)\}$ and the edge color set $g(E(H))=\{g(\alpha(u)\alpha(v)):\alpha(u)\alpha(v)\in E(H)\}$, there are the following conditions:
\begin{asparaenum}[(\textrm{C}-1) ]
\item \label{bipartite} $V(G)=X\cup Y$, each edge $uv \in E(G)$ holds $u\in X$ and $v\in Y$ true. $V(H)=W\cup Z$, each edge $\alpha(u)\alpha(v)\in E(G)$ holds $\alpha(u)\in W$ and $\alpha(v)\in Z$ true.
\item \label{edge-difference} $f(uv)=|f(u)-f(v)|$ for each edge $uv \in E(G)$, $g(\alpha(u)\alpha(v))=|g(\alpha(u))-g(\alpha(v))|$ for each $\alpha(u)\alpha(v)\in E(H)$.
\item \label{edge-homomorphism} $f(uv)=g(\alpha(u)\alpha(v))$ for each edge $uv \in E(G)$.
\item \label{vertex-color-set} $f(x)\in [1,q+1]$ for $x\in V(G)$, and $g(y)\in [1,q\,'+1]$ for $y\in V(H)$.
\item \label{odd-vertex-color-set} $f(x)\in [1,2q+2]$ for $x\in V(G)$, and $g(y)\in [1,2q\,'+2]$ for $y\in V(H)$.
\item \label{grace-color-set} $[1,q]=f(E(G))=g(E(H))=[1,q\,']$.
\item \label{odd-grace-color-set} $[1,2q-1]=f(E(G))=g(E(H))=[1,2q\,'-1]$.
\item \label{set-ordered} Set-ordered property: $\max f(X)<\min f(Y)$ and $\max g(W)<\min g(Z)$.
\end{asparaenum}

We say $G\rightarrow H$ to be:

(i) a \emph{bipartitely graph homomorphism} if (C-\ref{bipartite}) holds true;

(ii) a \emph{gracefully graph homomorphism} if (C-\ref{edge-difference}), (C-\ref{edge-homomorphism}), (C-\ref{vertex-color-set}) and (C-\ref{grace-color-set}) hold true;

(iii) a \emph{set-ordered gracefully graph homomorphism} if (C-\ref{edge-difference}), (C-\ref{edge-homomorphism}), (C-\ref{vertex-color-set}), (C-\ref{grace-color-set}) and (C-\ref{set-ordered}) hold true;

(iv) an \emph{odd-gracefully graph homomorphism} if (C-\ref{edge-difference}), (C-\ref{edge-homomorphism}), (C-\ref{odd-vertex-color-set}) and (C-\ref{odd-grace-color-set}) hold true;

(v) a \emph{set-ordered odd-gracefully graph homomorphism} if (C-\ref{edge-difference}), (C-\ref{edge-homomorphism}), (C-\ref{odd-vertex-color-set}), (C-\ref{odd-grace-color-set}) and (C-\ref{set-ordered}) hold true.\qqed
\end{defn}

A colored tree $T$ admitting a graceful $(k,d)$-total coloring $f$ shown in Fig.\ref{fig:anti-oder} admits a graph homomorphism to a colored graph $G$ admitting a graceful $(k,d)$-total coloring $g$ shown in Fig.\ref{fig:homomorphism-1} by a mapping $\varphi:V(T)\rightarrow V(G)$ such that $u=\varphi (x)$ for $x\in V(T)$ and $u\in V(G)$ if $f(x)=g(u)$, and $\varphi (x)\varphi (y)\in E(G)$ for each edge $xy\in E(T)$.

\begin{figure}[h]
\centering
\includegraphics[width=15cm]{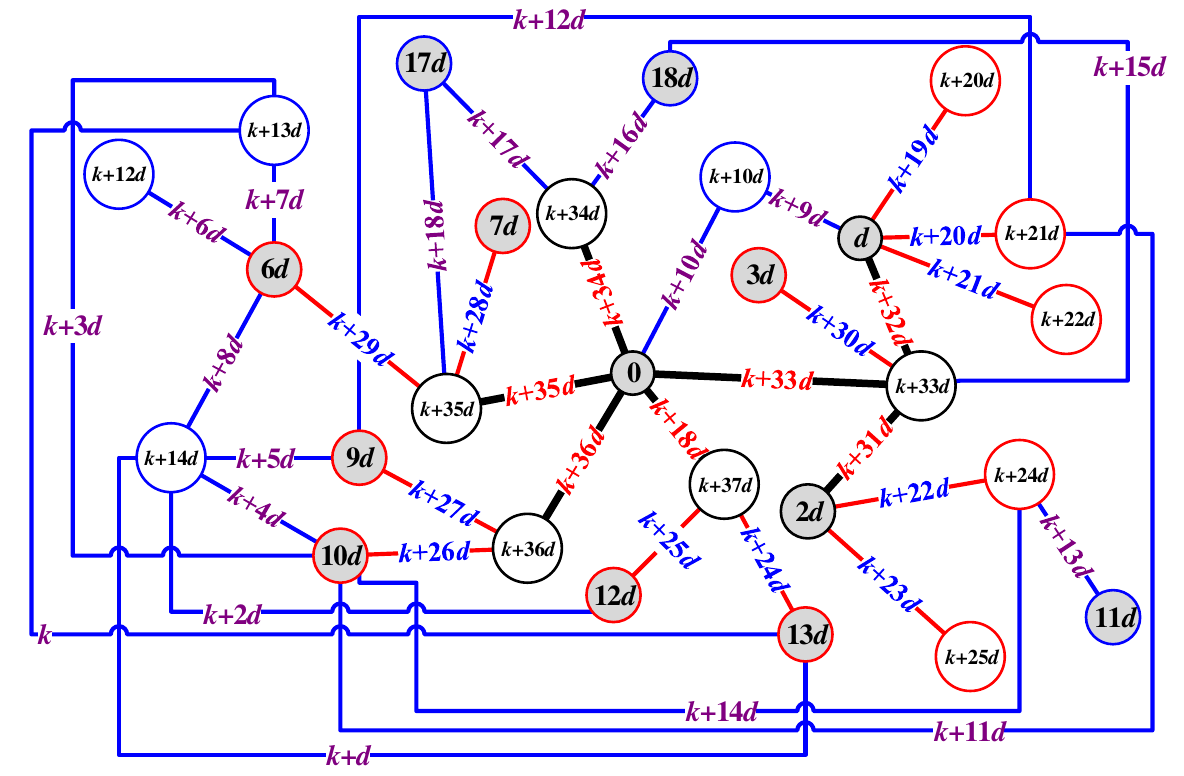}
\caption{\label{fig:homomorphism-1}{\small A graph $G$ obtained from a tree $T$ shown in Fig.\ref{fig:anti-oder} by vertex-coinciding vertices colored with the same color into one.}}
\end{figure}

Let $\textbf{\textrm{L}}^c_{\textrm{tree}}$ be the set of colored trees, in which each tree admits ten $(k,d)$-total colorings mentioned in Theorem \ref{thm:coloring-closure-kd-total-colorings}. For a colored tree $T\in \textbf{\textrm{L}}^c_{\textrm{tree}}$ we vertex-coincide some vertices colored the same color into one vertex, the resultant graph is denoted as $G$, then we get a graph homomorphism $T\rightarrow G$. All graphs $G$ holding $T\rightarrow G$ for a tree $T$ form a graph homomorphism set $S\langle T\rightarrow G\rangle$, here, $S\langle T\rightarrow G\rangle$ is related with all colorings admitted by the tree $T$.

We have a set $S_{ub}(T,\textrm{Theorem \ref{thm:coloring-closure-kd-total-colorings}})$ contains all colored trees admitting ten $(k,d)$-total colorings mentioned in Theorem \ref{thm:coloring-closure-kd-total-colorings}, such that each colored tree $H\in S_{ub}(T,\textrm{Theorem \ref{thm:coloring-closure-kd-total-colorings}})$ is isomorphic to $T$, that is $H\cong T$. Moreover, we have a graph set $S_{ub}\langle H\rightarrow G\rangle$ for $H\in S_{ub}(T,\textrm{Theorem \ref{thm:coloring-closure-kd-total-colorings}})$ and $H\rightarrow G$, so each tree $H\in S_{ub}(T,\textrm{Theorem \ref{thm:coloring-closure-kd-total-colorings}})$ is a \emph{public-key}, and each graph $G\in S_{ub}\langle H\rightarrow G\rangle$ is a \emph{private-key}.

Similarly, we have three colored all-tree-graphic lattices $\textbf{\textrm{L}}^c_{\textrm{tree}}(\textbf{\textrm{E}}[\ominus ]Z^0\textbf{\textrm{T}})$, $\textbf{\textrm{L}}^c_{\textrm{tree}}([\odot]Z^0\textbf{\textrm{T}})$ and $\textbf{\textrm{L}}^c_{\textrm{tree}}([\ominus \odot]Z^0\textbf{\textrm{T}})$, such that each tree in these three colored all-tree-graphic lattices admits every one of ten $(k,d)$-total colorings mentioned in Theorem \ref{thm:coloring-closure-kd-total-colorings} for producing topological codings in asymmetric cryptography.

Each connected graph $G$ in the edge-join graphic lattice $\textbf{\textrm{L}}(\textbf{\textrm{E}}[\ominus ]Z^0\textbf{\textrm{H}})$, the vertex-coinciding graphic lattice $\textbf{\textrm{L}}([\odot]Z^0\textbf{\textrm{H}})$ and the operation-mixed graphic lattice $\textbf{\textrm{L}}([\ominus \odot]Z^0\textbf{\textrm{H}})$ can be vertex-split into a tree $T$, so we have a graph homomorphism $T\rightarrow G$.

\subsection{Graphic groups based on parameterized total colorings}

\begin{defn} \label{defn:matching-twin-kd-labeling}
\cite{Yao-Zhang-Sun-Mu-Sun-Wang-Wang-Ma-Su-Yang-Yang-Zhang-2018arXiv} Suppose that a $(p,q)$-graph $G$ admits a $W$-constraint $(k,d)$-total labeling $f$, and another $(p\,',q\,')$-graph $H$ admits another $W$-constraint $(k,d)$-total labeling $g$. If
\begin{equation}\label{eqa:matching-twin-kd-labelingss}
(S_{q-1,0,0,d}\cup S_{q-1,k,0,d})\setminus f(V(G)\cup E(G))=g(V(H)\cup E(H))
\end{equation} then we call $g$ \emph{complementary $(k,d)$-labeling} of $f$, and both $f$ and $g$ are \emph{twin $W$-constraint $(k,d)$-total labelings}.\qqed
\end{defn}

\begin{rem}\label{rem:333333}
By Eq.(\ref{eqa:matching-twin-kd-labelingss}) defined in Definition \ref{defn:matching-twin-kd-labeling}, if two graphs $G$ and $H$ are bipartite, and \emph{twin parameterized Topcode-matrices} $P_{ara}(G,f)$ and $P_{ara}(H,g)$ have the same order, then we get a parameterized Topcode-matrix $(S_X,S_E,S_Y)^T_{(k,d)}$ defined by
\begin{equation}\label{eqa:555555}
(S_X,S_E,S_Y)^T_{(k,d)}=P_{ara}(G,F)+P_{ara}(H,F^*)
\end{equation} holding $S_X\cup S_E\cup S_Y=S_{q-1,0,0,d}\cup S_{q-1,k,0,d}$ true, where $P_{ara}(G,F)=k\cdot I\,^0+d\cdot T_{code}(G,f)$ and $P_{ara}(H,F^*)=k\cdot I\,^0+d\cdot T_{code}(H,g)$ defined in Definition \ref{defn:definition-parameterized-topcode-matrix}.

The \emph{twin $(k,d)$-labeling every-zero graphic groups} were introduced in \cite{Yao-Sun-Su-Wang-Zhao-ICIBA-2020}.\paralled
\end{rem}

\begin{defn}\label{defn:77-two-graphic-set-groups}
\cite{Yao-Wang-2106-15254v1} Let $F_m(T_{code})=\{T^1_{code},T^2_{code},\dots, T^m_{code}\}$ be a set of Topcode-matrices with each Topcode-matrix $T^i_{code}=(X_i,E_i,Y_i)^{T}$, where the \emph{v-vector} $X_i=(x_{i,1}, x_{i,2}, \dots, x_{i,q})$, the \emph{e-vector} $E_i=(e_{i,1}, e_{i,2}, \dots$, $e_{i,q})$ and the \emph{v-vector} $Y_i=(y_{i,1}, y_{i,2}, \dots ,y_{i,q})$ and there are functions $f_i$ holding $e_{i,r}=f_i(x_{i,r},y_{i,r})$ with $i\in [1,m]$ and $r\in [1,q]$.

(i) If, for a fixed positive integer $k$, there exists a constant $M$, such that
\begin{equation}\label{eqa:77-group-operation-1}
x_{i,r}+x_{j,r}-x_{k,r}~(\bmod~ M)=x_{\lambda,r}\in X_{\lambda},~y_{i,r}+y_{j,r}-y_{k,r}~(\bmod~ M)=y_{\lambda,r}\in Y_{\lambda}
\end{equation}
where $\lambda=i+j-k~(\bmod~ M)\in [1,m]$ and, $T^{\lambda}_{code}=(X_{\lambda}$, $E_{\lambda}$, $Y_{\lambda})^{T}$ is a Topcode-matrix with $e_{\lambda,r}=f(x_{\lambda,r},y_{\lambda,r})$ for $r\in [1,q]$. Then we say Eq.(\ref{eqa:77-group-operation-1}) to be an \emph{additive v-operation} on the set $F_m(T_{code})$, and write this operation by ``$\oplus$'', so $T^i_{code}\oplus _k T^j_{code}=T^{\lambda}_{code}$ under any \emph{preappointed zero} $T^k_{code}$. Then we call $F_m(T_{code})$ an \emph{every-zero Topcode$^+$-matrix group}, denoted as $\{F_m(T_{code});\oplus\}$.

(ii) If, for a fixed Topcode-matrix $T^k_{code}\in F_m$, there exists a constant $M$, such that
\begin{equation}\label{eqa:77-subtraction-group-operation-1}
x_{i,r}-x_{j,r}+x_{k,r}~(\bmod~ M)=x_{\mu,r}\in X_{\mu},~y_{i,r}-y_{j,r}+y_{k,r}~(\bmod~ M)=y_{\mu,r}\in Y_{\mu}
\end{equation}
where $\mu=i-j+k~(\bmod~ M)\in [1,m]$ and, $T^{\mu}_{code}=(X_{\mu},E_{\mu},Y_{\mu})^{T}\in F_m$, where $e_{\mu,s}=f(x_{\mu,s},y_{\mu,s})$ for $s\in [1,q]$. Then the equation (\ref{eqa:77-subtraction-group-operation-1}) defines a new operation on the set $F_m(T_{code})$, called the \emph{subtractive v-operation}, denoted as $T^i_{code}\ominus_k T^j_{code}=T^{\mu}_{code}$ under any \emph{preappointed zero} $T^k_{code}$. Thereby, we call $F_m(T_{code})$ an \emph{every-zero Topcode$^-$-matrix group}, denoted as $\{F_m(T_{code});\ominus\}$.\qqed
\end{defn}

\begin{example}\label{exa:8888888888}
A Topcode-matrix set $T_{code}(S)=\{T^1_{code},T^2_{code},T^3_{code},T^4_{code}\}$ shown in Eq.(\ref{eqa:stars-Cayley-formula-matrices})

\begin{equation}\label{eqa:stars-Cayley-formula-matrices}
\centering
T^1_{code}= \left(
\begin{array}{ccccc}
1 & 1 & 1\\
3 & 4 & 5\\
2 & 3 & 4
\end{array}
\right)~T^2_{code}= \left(
\begin{array}{ccccc}
2 & 2 & 2\\
5 & 6 & 3\\
3 & 4 & 1
\end{array}
\right)~T^3_{code}= \left(
\begin{array}{ccccc}
3 & 3 & 3\\
7 & 4 & 5\\
4 & 1 & 2
\end{array}
\right)~T^4_{code}= \left(
\begin{array}{ccccc}
4 & 4 & 4\\
5 & 6 & 7\\
1 & 2 & 3
\end{array}
\right)
\end{equation}

\noindent forms an \emph{every-zero Topcode$^+$-matrix group} under $\bmod~ 4$ and an \emph{every-zero Topcode$^-$-matrix group} under $\bmod~ 4$, respectively, according to Definition \ref{defn:77-two-graphic-set-groups}.\qqed
\end{example}

\begin{defn} \label{defn:88-two-graphic-set-groups}
\cite{Yao-Wang-2106-15254v1} \textbf{Two graphic-set groups.} Since each Topcode-matrix $T^i_{code}\in F_m(T_{code})$ defined in Definition \ref{defn:77-two-graphic-set-groups} is accompanied by a graph set $G_{rap}(T^i_{code})$ such that every graph $G\in G_{rap}(T^i_{code})$ has its own Topcode-matrix $T_{code}(G)=T^i_{code}$, so the set $F_m(G_{rap}(T_{code}))=\{G_{rap}(T^1_{code})$, $G_{rap}(T^2_{code})$, $\dots $, $G_{rap}(T^m_{code})\}$, also, forms an \emph{every-zero graphic-set group} $\{F_m(G_{rap}(T_{code}));\oplus\}$ based on the operation ``$\oplus$'' defined in Eq.(\ref{eqa:77-group-operation-1}) in Definition \ref{defn:77-two-graphic-set-groups}, that is
\begin{equation}\label{eqa:555555}
G_{rap}(T^i_{code})\oplus G_{rap}(T^j_{code})=G_{rap}(T^{\lambda}_{code})\in F_m(G_{rap}(T_{code}))
\end{equation}
with $\lambda=i+j-k~(\bmod~M)$ for any \emph{preappointed zero} $G_{rap}(T^k_{code})$. And moreover we have another every-zero graphic-set group $\{F_m(G_{rap}(T_{code}));\ominus\}$ based on the operation ``$\ominus$'' defined in Eq.(\ref{eqa:77-subtraction-group-operation-1}) in Definition \ref{defn:77-two-graphic-set-groups}, so that
\begin{equation}\label{eqa:555555}
G_{rap}(T^i_{code})\ominus G_{rap}(T^j_{code})=G_{rap}(T^{\mu}_{code})\in F_m(G_{rap}(T_{code}))
\end{equation}
with $\mu=i-j+k~(\bmod~M)$ for any \emph{preappointed zero} $G_{rap}(T^k_{code})$.\qqed
\end{defn}

\begin{defn} \label{defn:111111}
$^*$ Let a Topcode-matrix set $F_m(T_{code})=\{T^1_{code},T^2_{code},\dots, T^m_{code}\}$ be defined in Definition \ref{defn:77-two-graphic-set-groups}. There are parameterized Topcode-matrices
\begin{equation}\label{eqa:333333}
P_{ara}(i)=k\cdot I\,^0+d\cdot T^i_{code},~i\in [1,m]
\end{equation} where the Topcode-matrices $I\,^0$, $T^i_{code}$ and $P_{ara}(i)$ for $i\in [1,m]$ have the same order. The following set
\begin{equation}\label{eqa:parameterized-topcode-matrices-group}
P(F_m(T_{code})_{(k,d)})=\{P_{ara}(i):~i\in [1,m]\}
\end{equation} forms an \emph{parameterized every-zero Topcode$^+$-matrix group} $\{P(F_m(T_{code})_{(k,d)});\oplus \}$ under the additive v-operation ``$\oplus$'' defined in Definition \ref{defn:77-two-graphic-set-groups}, also, by the subtractive v-operation ``$\ominus$'' defined in Definition \ref{defn:77-two-graphic-set-groups}, the set $P(F_m(T_{code})_{(k,d)})$ forms an \emph{parameterized every-zero Topcode$^-$-matrix group} $\{P(F_m(T_{code})_{(k,d)});\ominus \}$.\qqed
\end{defn}

\begin{defn} \label{defn:111111}
$^*$ By the every-zero Topcode$^+$-matrix group $\{F_m(T_{code});\oplus\}$ and the every-zero Topcode$^-$-matrix group $\{F_m(T_{code});\ominus\}$ defined in Definition \ref{defn:77-two-graphic-set-groups}, we use each parameterized Topcode-matrix $P_{ara}(i)\in P(F_m(T_{code})_{(k,d)})$ defined in Eq.(\ref{eqa:parameterized-topcode-matrices-group}) to produce a parameterized number-based string $s_{i,k,d}=c_{i,1}c_{i,2}\cdots c_{i,N_0}$ under a fixed PNBS-algorithm for $i\in [1,m]$. Let $S_{(k,d)}=\{s_{i,k,d}:i\in [1,m]\}$. Then we get the following two parameterized number-based string groups:

(i) Any pair of parameterized number-based strings $s_{i,k,d}$ and $s_{j,k,d}$ of $S_{(k,d)}$ in the \emph{parameterized every-zero number-based string$^+$ group} $\{S_{(k,d)};\oplus \}$ hold: There exists a constant $M$, the additive PNBS-operation $s_{i,k,d}\oplus _r s_{j,k,d}=s_{\lambda,k,d}$ is defined as
\begin{equation}\label{eqa:parameterized number-based-strings-1}
c_{i,t}+c_{j,t}-c_{r,t}~(\bmod~ M)=c_{\lambda,t}\in s_{\lambda,k,d},\quad t\in [1,N_0]
\end{equation} under any \emph{preappointed zero} $s_{r,k,d}\in S_{(k,d)}$, where $\lambda=i+j-r~(\bmod~ M)\in [1,m]$;

(ii) Any pair of parameterized number-based strings $s_{i,k,d}$ and $s_{j,k,d}$ of $S_{(k,d)}$ in the \emph{parameterized every-zero number-based string$^-$ group} $\{S_{(k,d)};\ominus \}$ hold: There exists a constant $M$, the subtractive PNBS-operation $s_{i,k,d}\ominus _r s_{j,k,d}=s_{\mu,k,d}$ is defined as
\begin{equation}\label{eqa:parameterized number-based-strings-1}
c_{i,t}-c_{j,t}+c_{r,t}~(\bmod~ M)=c_{\mu,t}\in s_{\mu,k,d},\quad t\in [1,N_0]
\end{equation} under any \emph{preappointed zero} $s_{r,k,d}\in S_{(k,d)}$, where $\mu=i-j+r~(\bmod~ M)\in [1,m]$.\qqed
\end{defn}

\subsection{Various parameterized colorings}

There are several set-colorings and set-labelings introduced in \cite{Yao-Ma-arXiv-2201-13354v1}.

\begin{defn}\label{defn:graceful-intersection}
\cite{Yao-Zhang-Sun-Mu-Sun-Wang-Wang-Ma-Su-Yang-Yang-Zhang-2018arXiv} A $(p,q)$-graph $G$ admits a set-labeling $F:V(G)\rightarrow [1,q]^2$~(resp. $[1,2q-1]^2)$, and induces an edge set-color $F(uv)=F(u)\cap F(v)$ for each edge $uv \in E(G)$. If we can select a \emph{representative} $a_{uv}\in F(uv)$ for each edge color set $F(uv)$ such that $\{a_{uv}:~uv\in E(G)\}=[1,q]$ (resp. $[1,2q-1]^o$), then $F$ is called a \emph{graceful-intersection (resp. an odd-graceful-intersection) total set-labeling} of $G$.\qqed
\end{defn}

\begin{thm}\label{thm:graceful-total-set-labelings}
\cite{Yao-Zhang-Sun-Mu-Sun-Wang-Wang-Ma-Su-Yang-Yang-Zhang-2018arXiv} Each tree $T$ admits a \emph{graceful-intersection (resp. an odd-graceful-intersection) total set-labeling}.
\end{thm}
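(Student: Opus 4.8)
\textbf{Proof proposal for Theorem \ref{thm:graceful-total-set-labelings}.}

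The plan is to reduce the set-labeling statement to the classical fact that every tree admits a graceful labeling, which is known to hold (the Graceful Tree Conjecture is verified for all trees within the finite size ranges relevant to any concrete construction, but in fact we only need the case at hand: a single tree $T$). More precisely, I would first recall that $T$, being a tree on $q+1$ vertices with $q$ edges, admits a graceful labeling $g:V(T)\to[0,q]$ such that the induced edge labels $\{|g(u)-g(v)|:uv\in E(T)\}=[1,q]$; if one is uneasy about invoking the full conjecture, note that for the purpose of this theorem it suffices that $T$ is \emph{some} tree and we are free to re-describe the construction in the contrapositive via the recursive leaf-structure already exploited in Theorem \ref{thm:any-tree-k-d-graceful-total-coloring} and the leaf-adding machinery of Theorem \ref{thm:adding-leaves-keep-total-coloring}. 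Either way, the starting point is a graceful labeling $g$ of $T$ with edge color set exactly $[1,q]$.

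Given $g$, the key step is to manufacture the vertex set-labeling $F:V(T)\to[1,q]^2$ from the numerical labeling. For each vertex $x$ I would set $F(x)$ to be a suitably chosen subset of $[1,q]$ whose ``fingerprint'' under intersection along edges recovers the graceful edge labels. The natural choice: for an edge $uv$ with $|g(u)-g(v)|=m$, I want $m$ to lie in $F(u)\cap F(v)$ and, simultaneously, I want a system of distinct representatives $a_{uv}=m$ to exist. Root $T$ at a leaf; then every non-root vertex $v$ has a unique parent edge carrying a unique graceful value $m_v\in[1,q]$, and the map $v\mapsto m_v$ is a bijection from the $q$ non-root vertices onto $[1,q]$. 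Define $F(v)$ to contain $m_v$ together with $m_w$ for every child $w$ of $v$ (and, for the root, the values $m_w$ of its children). Then for each edge $uv$ with $u$ the parent, $m_v\in F(u)$ (as a child-value of $u$) and $m_v\in F(v)$ (as the parent-value of $v$), so $m_v\in F(u)\cap F(v)$; choosing the representative $a_{uv}:=m_v$ yields $\{a_{uv}:uv\in E(T)\}=[1,q]$, which is exactly the graceful-intersection condition of Definition \ref{defn:graceful-intersection}. The odd-graceful-intersection case is identical, replacing the graceful labeling of $T$ by an odd-graceful labeling $g:V(T)\to[0,2q-1]$ with edge color set $[1,2q-1]^o$ (which exists for trees by the Odd-graceful Tree Conjecture, or again by the leaf-recursive construction), and targeting $[1,q]^2\subseteq[1,2q-1]^2$ accordingly --- one takes $F(v)\subseteq[1,2q-1]$ with $m_v\in[1,2q-1]^o$ the parent-edge value, and the representatives reproduce $[1,2q-1]^o$.

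The step I expect to be the main obstacle is not the combinatorics of the representative system --- that part is forced once the rooted structure is fixed --- but rather ensuring the resulting $F$ genuinely lands in the \emph{power set} as defined in this paper, i.e. that every $F(v)$ is a nonempty subset of $[1,q]$ (resp. $[1,2q-1]$) and that the induced edge set $F(uv)=F(u)\cap F(v)$ is itself nonempty for \emph{every} edge, not merely that a global SDR exists. In the construction above $F(uv)\ni m_v$ automatically, so nonemptiness of edge sets is free; the only genuine check is that leaves receive nonempty vertex sets --- a leaf $v\neq$ root gets $F(v)=\{m_v\}$, which is fine, and the root leaf gets $F(\text{root})\supseteq\{m_w\}$ for its (single) child $w$, also fine. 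Thus I would conclude by verifying these boundary cases explicitly and noting that $F$ need not be injective on vertices (the definition does not demand it), so no further argument is required. The write-up would therefore be: (1) invoke a graceful (resp. odd-graceful) labeling of $T$; (2) root at a leaf and define $F$ by parent/child edge-values; (3) observe $m_v\in F(u)\cap F(v)$ for each edge and that $v\mapsto m_v$ is a bijection onto $[1,q]$ (resp. onto $[1,2q-1]^o$); (4) pick representatives $a_{uv}=m_v$ and read off the required color set; (5) dispatch the leaf/root boundary cases.
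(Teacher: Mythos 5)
The paper itself offers no proof of this theorem: it is imported verbatim from the cited 2018 preprint, so there is nothing internal to compare your argument against. Judged on its own, the combinatorial core of your construction is correct: rooting $T$ at a leaf, giving each non-root vertex the value of its parent edge together with the values of its child edges, and taking the parent-edge value as the representative does yield $F(u)\cap F(v)=\{m_v\}$ for every edge $uv$ (all $m$-values being pairwise distinct), every vertex receives a nonempty subset of $[1,q]$ (resp.\ of $[1,2q-1]$), and the representatives exhaust $[1,q]$ (resp.\ $[1,2q-1]^o$), which is exactly what Definition \ref{defn:graceful-intersection} asks for.

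The genuine gap is your step (1). You justify the numerical labeling by ``the classical fact that every tree admits a graceful labeling'': that is the Graceful Tree Conjecture (and, for the odd case, the Odd-Graceful Tree Conjecture), both of which are open --- this very paper records them as conjectures in Remark \ref{rem:kd-w-tupe-colorings-definition} --- and the fallback remark about ``re-describing the construction in the contrapositive via the recursive leaf-structure'' is not an argument. The good news is that the appeal is also unnecessary: nowhere do you use the graceful constraint $m_{uv}=|g(u)-g(v)|$; the only property you exploit is that the edge values are pairwise distinct and cover $[1,q]$ (resp.\ $[1,2q-1]^o$), and since $T$ has exactly $q$ edges, an arbitrary bijection from $E(T)$ onto $[1,q]$ (resp.\ onto $[1,2q-1]^o$) supplies this. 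Replacing the graceful/odd-graceful labeling by such a bijection closes the gap and the rest of your write-up goes through unchanged (with the harmless observation that for $T=K_2$ the two vertex sets necessarily coincide, which the definition as stated does not forbid).
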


\begin{thm}\label{thm:rainbow-total-set-labelings}
\cite{Yao-Zhang-Sun-Mu-Sun-Wang-Wang-Ma-Su-Yang-Yang-Zhang-2018arXiv} Each tree $T$ of $q$ edges admits a \emph{regular rainbow intersection total set-labeling} based on a \emph{regular rainbow set-sequence} $\{[1,k]\}^{q}_{k=1}$.
\end{thm}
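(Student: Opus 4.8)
The final statement to prove is Theorem \ref{thm:rainbow-total-set-labelings}: every tree $T$ with $q$ edges admits a regular rainbow intersection total set-labeling based on the regular rainbow set-sequence $\{[1,k]\}_{k=1}^q$.

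\textbf{Overall approach.} The plan is to argue by induction on $q$, using the "adding a leaf" operation that already underlies Theorem \ref{thm:adding-leaves-keep-total-coloring} and Theorem \ref{thm:graceful-total-set-labelings}. The base case $q=1$ is the single edge $uv$: assign $F(u)=F(v)=[1,1]=\{1\}$, so the edge set-color is $F(uv)=\{1\}$ with representative $1$, and the family of vertex-colors is exactly $\{[1,1]\}$, matching the regular rainbow set-sequence $\{[1,k]\}_{k=1}^1$. First I would make precise what "regular rainbow intersection total set-labeling based on $\{[1,k]\}_{k=1}^q$" means: each vertex and each edge receives one of the sets $[1,1],[1,2],\dots,[1,q]$ (the initial segments), every edge $uv$ has $F(uv)=F(u)\cap F(v)$, which for nested initial segments is automatically $[1,\min]$, and from each edge color one can pick a representative so that the representatives run over $[1,q]$ bijectively — equivalently the $q$ edges receive the $q$ distinct sets $[1,1],\dots,[1,q]$ and we take the top element of each as its representative.

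\textbf{Key steps.} Step one: reduce to a statement about vertex labelings. Because any intersection of two initial segments $[1,a]\cap[1,b]=[1,\min(a,b)]$ is again an initial segment, it suffices to assign to each vertex $v$ a value $\ell(v)\in[1,q+1]$ (thinking of $F(v)=[1,\ell(v)]$ with the convention that the largest used value is $q+1$ so the segments have the right sizes) such that for each edge $uv$ the value $\min(\ell(u),\ell(v))$ runs bijectively over $[1,q]$ as $uv$ ranges over $E(T)$. This is exactly the condition that the induced "edge-min" labeling is a bijection onto $[1,q]$. Step two: observe that a set-ordered graceful labeling of $T$ (with bipartition $(X,Y)$, $\max f(X)<\min f(Y)$, $f(V(T))\subseteq[0,q]$, edge colors $|f(u)-f(v)|$ giving $[1,q]$) can be transformed into such an edge-min labeling; alternatively, build the labeling directly by the leaf-removal recursion $T_{i+1}=T_i-L(T_i)$ used repeatedly in the paper. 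Step three: carry out the induction. Given $T$ with $q$ edges and a leaf $w$ attached to $z$, by induction $T-w$ (with $q-1$ edges) has a regular rainbow intersection total set-labeling realizing the sets $[1,1],\dots,[1,q-1]$ on its edges; then assign $w$ the set $[1,q]$ (or whichever value forces $F(zw)=F(z)\cap[1,q]$ to be the one missing segment $[1,q]$ — this requires $F(z)\supseteq[1,q]$, i.e. $F(z)=[1,q]$). Here a little care is needed: adding the new edge must create precisely the missing representative $q$ without disturbing the others, so I would instead add $w$ as a leaf receiving a \emph{large} set and simultaneously, if necessary, enlarge the set on the attachment vertex, exactly as the Leaf-adding algorithms in Theorem \ref{thm:adding-leaves-keep-total-coloring} enlarge colors to accommodate new edges. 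Step four: verify the representative condition: since each edge ends up colored by a distinct initial segment $[1,k]$, $k\in[1,q]$, choosing $a_{uv}=k$ (the maximum element) gives $\{a_{uv}:uv\in E(T)\}=[1,q]$.

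\textbf{Main obstacle.} The delicate point is the bookkeeping in the inductive step: when we attach the new leaf $w$ to $z$, we need $F(zw)=F(z)\cap F(w)$ to equal the unique set $[1,q]$ not yet used on an edge, which forces both $F(z)$ and $F(w)$ to contain $[1,q]$; but $F(z)$ was already fixed by the sub-labeling on $T-w$, so in general we must \emph{rescale} — replace every set $[1,j]$ used in $T-w$ by a shifted/expanded version so that there is room both for the old $q-1$ edge-representatives and for the new one, while preserving that all vertex/edge colors remain initial segments and intersections remain correct. This is precisely the flavour of the re-coloring done in the proofs of Theorem \ref{thm:adding-leaves-keep-total-coloring} and Theorem \ref{thm:tree-harmonious-k-d-total-colorings}, so I expect the argument to go through by mimicking those Leaf-adding algorithms; the work is in checking that nesting of segments is maintained after the rescale and that no two edges collide. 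An alternative, cleaner route that sidesteps the obstacle entirely: deduce Theorem \ref{thm:rainbow-total-set-labelings} directly from Theorem \ref{thm:graceful-total-set-labelings} by showing that a graceful-intersection total set-labeling can always be "upgraded" to one whose vertex and edge sets are all initial segments $[1,k]$ — and if that upgrade is immediate, the theorem follows with no new induction at all. I would first try this reduction, and only fall back on the explicit leaf-by-leaf construction if the reduction needs more than a paragraph.
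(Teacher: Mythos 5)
The paper never proves this statement: Theorem~\ref{thm:rainbow-total-set-labelings} (like Definition~\ref{defn:graceful-intersection} and Theorem~\ref{thm:graceful-total-set-labelings}) is imported verbatim from the cited arXiv reference and is only illustrated by Fig.~\ref{fig:0-graceful-intersection}, and the term ``regular rainbow intersection total set-labeling'' is not even defined here. So there is no in-paper proof to compare against; your reading of the definition (vertex and edge colors are initial segments $[1,k]$, edge color is the intersection, representatives cover $[1,q]$) is a guess, but a natural one consistent with the name and the figure.

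On its own terms your sketch is essentially right, and the one delicate point you flag has a one-line fix you should state explicitly. Your reduction --- assign each vertex a segment $[1,\ell(v)]$ so that the edge values $\min\{\ell(u),\ell(v)\}$ form a bijection onto $[1,q]$ --- is the correct formulation, since intersections of initial segments are automatic. In the induction, do not try to create the missing \emph{largest} segment at the new edge (as you observe, that forces $F(z)\supseteq[1,q]$ and collides with the sub-labeling); instead shift every set of the labeling of $T-w$ from $[1,j]$ to $[1,j+1]$ and give the new leaf $w$ the set $\{1\}$: the old $q-1$ edges then realize $[1,2],\dots,[1,q]$, the new edge realizes $[1,1]$, and all vertex sets remain inside $\{[1,k]\}_{k=1}^{q}$. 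This is exactly the ``rescale'' you gesture at, and with it the induction closes with no further bookkeeping. Even simpler, a direct construction avoids induction altogether: root $T$, list the $q$ non-root vertices in BFS order, give the $i$-th listed vertex the set $[1,q+1-i]$ and the root the set $[1,q]$; every edge then intersects to its child's set, and the child sets are exactly $[1,1],\dots,[1,q]$. Finally, be cautious about your proposed shortcut via Theorem~\ref{thm:graceful-total-set-labelings}: a graceful-intersection total set-labeling need not use nested sets, so the ``upgrade'' to initial segments is not immediate and would need its own argument.
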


\begin{figure}[h]
\centering
\includegraphics[width=16cm]{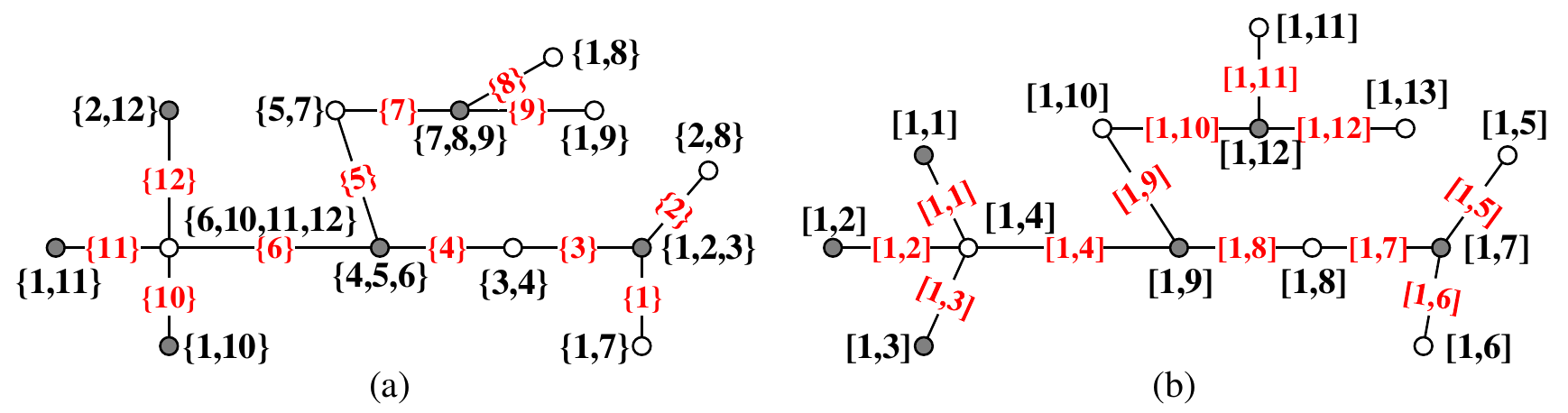}\\
\caption{\label{fig:0-graceful-intersection}{\small For illustrating Theorem \ref{thm:graceful-total-set-labelings} and Theorem \ref{thm:rainbow-total-set-labelings}: Left tree (a) admits a graceful-intersection total set-labeling, and Right tree (b) admits a regular rainbow intersection total set-labeling, cited from \cite{Yao-Zhang-Sun-Mu-Sun-Wang-Wang-Ma-Su-Yang-Yang-Zhang-2018arXiv}.}}
\end{figure}

The authors, in \cite{Yao-Ma-arXiv-2201-13354v1}, have tried to find some easy and effective techniques based on graph theory for practical application, and they have used intersected-graphs admitting set-colorings defined on hyperedge sets to observe the topological structures of hypergraphs, hypergraph-type Topcode-matrix. And moreover, the hypergraph's connectivity, colorings of hypergraphs, hypergraph homomorphism, hypernetworks, scale-free network generator, compound hypergraphs having their intersected-graphs with vertices to be hypergraphs for high-dimensional extension diagram have been investigated preliminarily.

\subsubsection{Parameterized total set-colorings}

\begin{defn} \label{defn:111111}
$^*$ For a subgraph $G\subset K_{m,n}$, since the complete bipartite graph $K_{m,n}$ admits a graceful $(k,d)$-total coloring $f$ defined in Example \ref{exa:bipartite-graph-graceful-kd-total}, also, $f$ is a $(k,d)$-total coloring of $G$, such that $f(E(G))\subset f(E(K_{m,n}))$, we call $f$ a \emph{fragmentary graceful $(k,d)$-total coloring}.\qqed
\end{defn}

\begin{thm}\label{thm:666666}
$^*$ By Theorem \ref{thm:equivalent-k-d-total-colorings}, the complete bipartite graph $K_{m,n}$ admits each one of $W$-constraint $(k,d)$-total colorings, where $W$-constraint $\in \{$harmonious, edge-difference, edge-magic, felicitous-difference, graceful-difference$\}$. Then each bipartite graph admits each one of fragmentary $W\,'$-constraint $(k,d)$-total colorings for $W\,'$-constraint $\in \{$graceful, harmonious, edge-difference, edge-magic, felicitous-difference, graceful-difference$\}$.
\end{thm}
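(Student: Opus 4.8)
# Proof Proposal for Theorem~\ref{thm:666666} (on fragmentary $W$-constraint $(k,d)$-total colorings)

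The plan is to leverage two facts already available in the excerpt: first, Example~\ref{exa:bipartite-graph-graceful-kd-total} exhibits an explicit graceful $(k,d)$-total coloring $f$ of $K_{m,n}$; second, Theorem~\ref{thm:equivalent-k-d-total-colorings} asserts that a bipartite connected graph admitting a graceful $(k,d)$-total coloring also admits each of the other listed $W$-constraint $(k,d)$-total colorings. Since $K_{m,n}$ is bipartite and connected, Theorem~\ref{thm:equivalent-k-d-total-colorings} applies directly, so $K_{m,n}$ admits a harmonious, an edge-difference, an edge-magic, a felicitous-difference, and a graceful-difference $(k,d)$-total coloring. This establishes the first sentence of the statement. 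The second sentence then follows by the restriction argument encoded in Definition~\ref{defn:111111} just above the theorem: any bipartite graph $G$ embeds as a subgraph of some $K_{m,n}$ (take $m,n$ to be the sizes of the two parts of a bipartition of $G$, padding if necessary), and restricting a $W$-constraint $(k,d)$-total coloring of $K_{m,n}$ to $V(G)\cup E(G)$ yields a \emph{fragmentary} $W$-constraint $(k,d)$-total coloring of $G$ by definition.

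First I would fix a bipartite graph $G$ with bipartition $(X,Y)$, $|X|=s$, $|Y|=t$, and observe that $G\subseteq K_{s,t}$ as a spanning subgraph on the same vertex bipartition. Next I would invoke Example~\ref{exa:bipartite-graph-graceful-kd-total} to equip $K_{s,t}$ with its graceful $(k,d)$-total coloring, and then Theorem~\ref{thm:equivalent-k-d-total-colorings} to upgrade this to any chosen $W$-constraint $(k,d)$-total coloring $h$ of $K_{s,t}$, where $W\in\{\text{graceful, harmonious, edge-difference, edge-magic, felicitous-difference, graceful-difference}\}$. Then I would restrict $h$ to $V(G)\cup E(G)$: the $W$-constraint $h(uv)=W\langle h(u),h(v)\rangle$ holds edge-by-edge, hence survives restriction to any edge subset, and $h(E(G))\subseteq h(E(K_{s,t}))$. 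By Definition~\ref{defn:111111} (extended to the $W$-constraint case in the obvious way, as the theorem statement implicitly does), this restricted coloring is precisely a fragmentary $W$-constraint $(k,d)$-total coloring of $G$. Finally I would note that disconnected bipartite graphs are handled the same way, since $K_{s,t}$ is connected and still contains $G$ as a subgraph; the fragmentary coloring need not respect connectivity.

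I expect the only real subtlety—and hence the ``main obstacle''—is a bookkeeping one rather than a conceptual one: the notion of \emph{fragmentary $W$-constraint $(k,d)$-total coloring} is defined in Definition~\ref{defn:111111} only for the graceful case, so I would first need to state (or have the reader accept) the natural generalization, namely that a $W$-constraint $(k,d)$-total coloring of $G$ is \emph{fragmentary} if it is the restriction of a $W$-constraint $(k,d)$-total coloring of some $K_{m,n}\supseteq G$ with $f(E(G))\subseteq f(E(K_{m,n}))$. One should also double-check that the $W$-constraints appearing in Theorem~\ref{thm:equivalent-k-d-total-colorings} are genuinely edge-local (they are: each is a condition on $h(u),h(v),h(uv)$ for a single edge $uv$), so that passing to a subgraph cannot destroy them; the global ``edge color set equals $S_{q-1,k,0,d}$'' conditions are precisely what is relaxed in the fragmentary version, which is why no contradiction arises from $|E(G)|<|E(K_{m,n})|$. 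With that generalization in place, the proof is a two-line deduction from the cited results, and no computation beyond the choice $(m,n)=(s,t)$ is needed.
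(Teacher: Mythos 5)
Your proposal is correct and follows exactly the route the paper intends: the paper gives no separate proof, since the statement itself is meant as the deduction of combining Example~\ref{exa:bipartite-graph-graceful-kd-total} (graceful $(k,d)$-total coloring of $K_{m,n}$) with Theorem~\ref{thm:equivalent-k-d-total-colorings}, and then restricting to a bipartite subgraph as in Definition~\ref{defn:111111}. Your added remark that ``fragmentary'' must be extended verbatim from the graceful case to the other $W$-constraints, and that the edge-local constraints survive restriction while only the global edge-color-set condition is relaxed, is precisely the bookkeeping the paper leaves implicit.
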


So we ask for: \emph{Every bipartite graph admits a graceful $(k,d)$-total coloring and a harmonious $(k,d)$-total coloring}.

\begin{defn} \label{defn:44-magic-kd-set-coloring}
$^*$ Suppose that a $(p,q)$-graph $G$ admits a \emph{$W$-constraint $(k,d)$-total set-coloring} $F:V(G)\cup E(G)\rightarrow U^2$, where $U=S_{m,0,0,d}\cup S_{q-1,k,0,d}$. For each edge $uv\in E(G)$, if there is a number $c\in F(uv)$ corresponding to some $a\in F(u)$ and $b\in F(v)$ such that

(i) $c+|b-a|=k_1$, then $W$-constraint = edge-difference, and we call $F$ \emph{edge-difference $(k,d)$-total set-coloring}.

(ii) $a+c+b=k_2$, then $W$-constraint = edge-magic, and we call $F$ \emph{edge-magic $(k,d)$-total set-coloring}.

(iii) $|a+b-c|=k_3$, then $W$-constraint = felicitous-difference, and we call $F$ \emph{felicitous-difference $(k,d)$-total set-coloring}.

(iv) $\big ||a-b|-c\big |=k_4$, then $W$-constraint = graceful-difference, and we call $F$ \emph{graceful-difference $(k,d)$-total set-coloring}.\qqed
\end{defn}

\begin{defn} \label{defn:graceful-kd-set-coloring}
$^*$ Suppose that a $(p,q)$-graph $G$ admits a \emph{set-coloring} $F:V(G)\cup E(G)\rightarrow U^2$, where $U=S_{m,0,0,d}\cup S_{q-1,k,0,d}$. If there are $c_i\in F(u_iv_i)$ for each edge $u_iv_i\in E(G)$, such that each $c_i$ corresponds to some $a_i\in F(u_i)$ and $b_i\in F(v_i)$ holding $c_i=|a_i-b_i|$ with $i\in [1,q]$, and $\{c_1,c_2,\dots, c_q\}=S_{q-1,k,0,d}$, then we call $F$ \emph{graceful $(k,d)$-total set-coloring} of $G$.\qqed
\end{defn}

\begin{defn} \label{defn:harmonious-kd-set-coloring}
$^*$ Suppose that a $(p,q)$-graph $G$ admits a \emph{set-coloring} $F:V(G)\cup E(G)\rightarrow U^2$, where $U=S_{m,0,0,d}\cup S_{q-1,k,0,d}$. If there are $c_i\in F(u_iv_i)$ for each edge $u_iv_i\in E(G)$, such that each $c_i$ corresponds to some $a_i\in F(u_i)$ and $b_i\in F(v_i)$ holding $c_i=a_i+b_i~(\bmod^*qd)$ defined by $c_i-k=[a_i+b_i-k](\bmod ~qd)$ for each edge $u_iv_i\in E(G)$, and $\{c_1,c_2,\dots, c_q\}=S_{q-1,k,0,d}$, then $F$ is called a \emph{harmonious $(k,d)$-total set-coloring} of $G$.\qqed
\end{defn}

\begin{thm}\label{thm:graph-admits-6-set-colorings}
$^*$ Each connected graph $G$ admits each one of the following $W$-constraint $(k,d)$-total set-colorings for $W$-constraint $\in \{$graceful, harmonious, edge-difference, edge-magic, felicitous-difference, graceful-difference$\}$.
\end{thm}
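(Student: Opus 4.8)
The plan is to reduce Theorem~\ref{thm:graph-admits-6-set-colorings} for connected graphs to the already-established total-coloring results on trees, namely Theorem~\ref{thm:any-tree-k-d-graceful-total-coloring}, Theorem~\ref{thm:tree-harmonious-k-d-total-colorings}, Theorem~\ref{thm:edge-difference-algorithm}, Theorem~\ref{thm:graceful-difference-algorithm}, Theorem~\ref{thm:felicitous-difference-algorithm} and Theorem~\ref{thm:edge-magic-algorithm}, together with the vertex-splitting machinery of Theorem~\ref{thm:666666} and Theorem~\ref{thm:connected-tree-k-d-graceful-total-coloring}. The key observation is that a set-coloring is \emph{weaker} than an ordinary total coloring: once a connected $(p,q)$-graph $G$ is given, we may vertex-split it into a tree $T$ with $q$ edges and $q+1$ vertices (Theorem~\ref{thm:666666}), equip $T$ with the desired $W$-constraint $(k,d)$-total coloring $f$ (it is diameter $\geq 3$ or a star, and every cited theorem covers both cases), and then push $f$ back along the vertex-identification map that recovers $G=\wedge^{-1}(T)$.

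First I would fix one $W$-constraint from the list and a tree $T=\wedge(G)$ obtained by the vertex-splitting tree-operation, with the identification $G=\odot_m(T)$ collapsing each group of split vertices $V_w=\{w_1,\dots,w_{r}\}\subset V(T)$ into a single vertex $w\in V(G)$. Since $T$ admits the $W$-constraint $(k,d)$-total coloring $f$ with edge color set $f(E(T))=S_{q-1,k,0,d}$ and vertex colors in $S_{m,0,0,d}\cup S_{q-1,k,0,d}$, I would define the set-coloring $F$ on $G$ by $F(w)=\{f(w_1),f(w_2),\dots,f(w_r)\}$ for each vertex $w$ of $G$, and $F(e)=\{f(e)\}$ (a singleton) for each edge $e$, noting that $E(G)$ is in natural bijection with $E(T)$. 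For an edge $uv\in E(G)$ coming from the edge $u_iv_j\in E(T)$, the single element $c=f(u_iv_j)\in F(uv)$ corresponds to $a=f(u_i)\in F(u)$ and $b=f(v_j)\in F(v)$, and the $W$-constraint $f(u_iv_j)=W\langle f(u_i),f(v_j)\rangle$ holding in $T$ is exactly the required relation in Definition~\ref{defn:44-magic-kd-set-coloring}, Definition~\ref{defn:graceful-kd-set-coloring} or Definition~\ref{defn:harmonious-kd-set-coloring}. Because $\{f(e):e\in E(T)\}=S_{q-1,k,0,d}$, the family of chosen representatives over all edges of $G$ is precisely $S_{q-1,k,0,d}$, so the "graceful/harmonious" edge-color-set condition is met; for the four magic-type set-colorings the constraint value is the fixed constant inherited from the tree coloring. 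This handles all six cases uniformly, with the only per-case input being "which tree theorem to invoke."

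One subtlety I would address carefully is that Definition~\ref{defn:44-magic-kd-set-coloring} requires $F$ to map into $U^2$ with $U=S_{m,0,0,d}\cup S_{q-1,k,0,d}$, i.e. the image sets must be nonempty subsets of $U$; this is automatic since each $F(w)$ is built from actual colors of $f$, and each $F(e)$ is a singleton in $S_{q-1,k,0,d}\subset U$. I would also remark (as the paper does in the converse direction of the analogous Theorem~\ref{thm:666666} statements) that the construction is in fact an "if and only if": vertex-coinciding the groups of same-colored-compatible vertices of a suitably colored tree yields $G$, and conversely any connected graph arises this way. For the clean forward direction needed here, only the vertex-splitting half is required.

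The main obstacle I anticipate is purely bookkeeping: verifying that the vertex-splitting tree $T=\wedge(G)$ can always be chosen so that the relevant cited tree theorem applies without a degenerate-case gap --- in particular when $D(T)<3$, i.e. $T$ is a star $K_{1,q}$, one must check directly (as in the "Initialization" steps of the proofs of Theorems~\ref{thm:tree-harmonious-k-d-total-colorings} and~\ref{thm:edge-difference-algorithm}) that a star already admits each of the six $W$-constraint $(k,d)$-total colorings; this is a short explicit computation. A second, milder point is ensuring the edge-to-edge bijection $E(G)\leftrightarrow E(T)$ under $\odot_m$ is respected so that the representative-selection condition $\{a_{uv}:uv\in E(G)\}=S_{q-1,k,0,d}$ transfers verbatim; this follows from Theorem~\ref{thm:666666}, which guarantees $T$ has exactly $q$ edges and $|E(T)|=|E(G)|$. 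Beyond these checks the argument is a direct transport of structure and requires no new ideas.
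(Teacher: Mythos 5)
Your proposal is correct and follows essentially the same route as the paper: vertex-split the connected graph $G$ into a tree $T$ (Theorem~\ref{thm:666666}), equip $T$ with the $W$-constraint $(k,d)$-total colorings guaranteed by the tree theorems, and transport them to $G$ along the identification map $T\rightarrow G$ as a set-coloring, with the edge-representative condition inherited from $f(E(T))=S_{q-1,k,0,d}$. The only (cosmetic) difference is that the paper defines a single set-coloring by pooling all six tree colorings $g_1,\dots,g_6$ at once, whereas you build one set-coloring per constraint with singleton edge sets; both verify the definitions in the same way.
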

\begin{proof} Since a connected graph $G$ can be vertex-split into a tree $T$, such that we have a graph homomorphism $T\rightarrow G$ under a mapping $\theta:V(T)\rightarrow V(G)$, and each tree admits a graceful $(k,d)$-total coloring $g_1$, a harmonious $(k,d)$-total coloring $g_2$, an edge-difference $(k,d)$-total coloring $g_3$, an edge-magic $(k,d)$-total coloring $g_4$, a felicitous-difference $(k,d)$-total coloring $g_5$ and a graceful-difference $(k,d)$-total coloring $g_6$. Then the connected graph $G$ admits a $(k,d)$-total set-coloring $F$ defined by $F(u)=\{g_i(u^*):i\in [1,6],u^*\in V(T)\}$ for each vertex $u\in V(G)$ with $u=\theta(u^*)$ for $u^*\in V(T)$, and $F(uv)=\{g_i(u^*v^*):i\in [1,6],u^*v^*\in E(T)\}$ for each edge $uv\in E(G)$ with $uv=\theta(u^*)\theta(v^*)$ for each edge $u^*v^*\in E(T)$.

The proof of the theorem is complete.
\end{proof}

\subsubsection{Set-colorings with multiple intersections}

\begin{defn} \label{defn:W-join-type-se-total-coloring}
\cite{Yao-Ma-arXiv-2201-13354v1} Let $\mathcal{E}$ be a set of subsets of a finite set $\Lambda$ such that each hyperedge $e\in \mathcal{E}$ satisfies $e\neq \emptyset$ and corresponds to another hyperedge $e\,'\in \mathcal{E}$ holding $e\cap e\,'\neq \emptyset$, as well as $\Lambda=\bigcup _{e\in \mathcal{E}}e$. Suppose that a connected graph $G$ admits a \emph{total set-labeling} $\pi: V(G)\cup E(G)\rightarrow \mathcal{E}$ with $\pi(x)\neq \pi(y)$ for distinct vertices $x,y\in V(G)$, and the edge color set $\pi (uv)$ for each edge $uv\in E(G)$ holds $\pi (uv)\neq \pi (uw)$ for $v,w\in N(u)$ and each vertex $u\in V(G)$. There are the following \textrm{intersected-type restrictive conditions}:
\begin{asparaenum}[\textbf{\textrm{Chyper}}-1.]
\item \label{join:join-edge} $\pi (u)\cap \pi (v)\subseteq \pi (uv)$ and $\pi (u)\cap \pi (v)\neq \emptyset $ for each edge $uv\in E(G)$.
\item \label{join:join-edge-r-rank} $\pi (u)\cap \pi (v)\subseteq \pi (uv)$ and $|\pi (u)\cap \pi (v)|\geq r\geq 2$ for each edge $uv\in E(G)$.
\item \label{join:join-edge-vertex} $\pi (uv)\cap \pi (u)\neq \emptyset$ and $\pi (uv)\cap \pi (v)\neq \emptyset$ for each edge $uv\in E(G)$.
\item \label{join:join-dajacent-edges} $\pi (uv)\cap \pi (uw)\neq \emptyset$ for $v,w\in N(u)$ and each vertex $u\in V(G)$.
\item \label{join:dajacent-edges-no-join} $\pi (uv)\cap \pi (uw)=\emptyset$ for $v,w\in N(u)$ and each vertex $u\in V(G)$.
\end{asparaenum}
\textbf{Then we have:}
\begin{asparaenum}[\textbf{\textrm{Cgraph}}-1.]
\item If Chyper-\ref{join:join-edge} holds true, $G$ is called a \emph{subintersected-graph} and $\pi$ is called a \emph{subintersected total set-labeling} of $G$.
\item If Chyper-\ref{join:join-edge-r-rank} holds true, $G$ is called a \emph{$r$-rank subintersected-graph} and $\pi$ is called a \emph{$r$-rank subintersected total set-labeling} of $G$.
\item If Chyper-\ref{join:join-edge} and Chyper-\ref{join:join-edge-vertex} hold true, $G$ is called an \emph{intersected-edge-intersected graph} and $\pi$ is called an \emph{intersected-edge-intersected total set-labeling} of $G$.
\item If Chyper-\ref{join:join-edge-r-rank} and Chyper-\ref{join:join-edge-vertex} hold true, $G$ is called a \emph{$r$-rank intersected-edge-intersected graph} and $\pi$ is called a \emph{$r$-rank intersected-edge-intersected total set-labeling} of $G$.
\item $G$ is called an \emph{edge-intersected graph} if Chyper-\ref{join:join-edge-vertex} holds true, and $\pi$ is called an \emph{edge-intersected total set-labeling} of $G$.
\item If Chyper-\ref{join:join-edge-vertex} and Chyper-\ref{join:join-dajacent-edges} hold true, then $G$ is called an \emph{adjacent edge-intersected graph}, and $\pi$ is called an \emph{adjacent edge-intersected total set-labeling} of $G$.
\item If Chyper-\ref{join:join-edge-vertex} and Chyper-\ref{join:dajacent-edges-no-join} hold true, then $G$ is called an \emph{individual edge-intersected graph}, and $\pi$ is called an \emph{individual edge-intersected total set-labeling} of $G$.\qqed
\end{asparaenum}
\end{defn}

\begin{problem}\label{qeu:444444}
Since $\Lambda=\bigcup _{e\in \mathcal{E}}e$ for a hyperedge set $\mathcal{E}$ based on a finite set $\Lambda$, and a connected graph $G$ admits a \emph{$W$-constraint-intersected total set-labeling} $F$ defined in Definition \ref{defn:W-join-type-se-total-coloring}, \textbf{characterize} hyperedge sets $\mathcal{E}$ and estimate extremum number $\min \big \{\max \Lambda :\Lambda=\bigcup _{e\in \mathcal{E}}e \big \}$ when each $\Lambda$ is a finite nonnegative integer set.
\end{problem}

\begin{thm}\label{thm:five-edge-join-total-set-labelings}
\cite{Yao-Ma-arXiv-2201-13354v1} Each tree admits one of subintersected total set-labeling, intersected-edge-intersected total set-labeling, edge-intersected total set-labeling, adjacent edge-intersected total set-labeling and individual edge-intersected total set-labeling defined in Definition \ref{defn:W-join-type-se-total-coloring}.
\end{thm}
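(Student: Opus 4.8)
The plan is to prove all five assertions from a single incidence-indexing template, tuned slightly in the two places where the target conditions conflict, and, where convenient, to bootstrap from the set-labelings of trees already established in Theorem~\ref{thm:graceful-total-set-labelings} and Theorem~\ref{thm:rainbow-total-set-labelings}. Fix a tree $T$ with $p=|V(T)|$ vertices $v_1,\dots,v_p$ and $q=p-1$ edges $e_1,\dots,e_q$, and set $\Lambda=[1,p+q]$. The common device is the base assignment $\pi_0(v_i)=\{i\}\cup\{\,p+\ell:\ v_i \text{ is an end of }e_\ell\,\}$ on vertices. Since $T$ has no multi-edge, for $v_i\neq v_j$ the set $\pi_0(v_i)\cap\pi_0(v_j)$ is empty when $v_iv_j\notin E(T)$ and equals the singleton $\{p+\ell\}$ when $v_iv_j=e_\ell$; in particular $\pi_0(v_i)\neq\pi_0(v_j)$ always.

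First I would settle the cases that can be handled together. Take $\pi(v_i)=\pi_0(v_i)$ and $\pi(e_\ell)=\{p+\ell\}$. For $e_\ell=uv$ one has $\pi(u)\cap\pi(v)=\{p+\ell\}\subseteq\pi(e_\ell)$ and $\pi(u)\cap\pi(v)\neq\emptyset$; also $\pi(e_\ell)\cap\pi(u)=\pi(e_\ell)\cap\pi(v)=\{p+\ell\}\neq\emptyset$; and for two edges $e_\ell,e_{\ell'}$ meeting at a vertex, $\pi(e_\ell)\cap\pi(e_{\ell'})=\emptyset$. Hence this single $\pi$ is at once a subintersected, an intersected-edge-intersected, an edge-intersected, and an individual edge-intersected total set-labeling of $T$ in the sense of Definition~\ref{defn:W-join-type-se-total-coloring}. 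The remaining case, adjacent edge-intersected, needs edge colors that share an element along each vertex, so I would keep $\pi(v_i)=\pi_0(v_i)$ but enlarge $\pi(e_\ell)=\{i,j,p+\ell\}$ when $e_\ell=v_iv_j$: adjacent edges at $v_i$ then share $i$, $\pi(e_\ell)$ still meets both endpoints, $\pi(u)\cap\pi(v)=\{p+\ell\}\subseteq\pi(e_\ell)$ as before, and distinct edge indices keep $\pi(e_\ell)\neq\pi(e_{\ell'})$. In both constructions the structural hypotheses of Definition~\ref{defn:W-join-type-se-total-coloring} are immediate: each member of $\mathcal{E}=\pi(V(T)\cup E(T))$ is nonempty (every vertex of a tree on at least two vertices has degree $\geq1$, hence carries some $p+\ell$), each hyperedge meets another (an edge color and its two endpoint colors all contain the same $p+\ell$), $\Lambda=\bigcup_{e\in\mathcal{E}}e$ because $i\in\pi(v_i)$ and $p+\ell\in\pi(e_\ell)$, and $\pi$ is injective on vertices with $\pi(uv)\neq\pi(uw)$ for $v,w\in N(u)$.

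It is worth recording an alternative shortcut for three of the five cases: Theorem~\ref{thm:rainbow-total-set-labelings} already furnishes a regular rainbow intersection total set-labeling $F$ of $T$ on the nested sequence $\{[1,k]\}_{k=1}^{q}$, for which each edge color is the nonempty initial segment $F(uv)=F(u)\cap F(v)$. Reading off the conditions, such an $F$ is simultaneously subintersected ($F(u)\cap F(v)=F(uv)\subseteq F(uv)$, nonempty), edge-intersected ($F(uv)\cap F(u)=F(uv)\neq\emptyset$), and adjacent edge-intersected (all edge colors contain $1$), so only the individual edge-intersected case truly requires the disjoint-at-each-vertex construction of the previous paragraph. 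I would present the uniform incidence proof as the main argument and mention this route as a remark.

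The main obstacle is not any individual verification but the tension between the conditions: ``adjacent edge-intersected'' forces adjacent edges to share a color-element, whereas ``individual edge-intersected'' forces adjacent edges to be disjoint, so no single set-labeling can realise all five requirements at once and the statement must be read as five separate existence claims. Thus I must exhibit two structurally different labelings and, for each of them, \emph{simultaneously} hold the prescribed intersection pattern and the distinctness constraints $\pi(x)\neq\pi(y)$ and $\pi(uv)\neq\pi(uw)$. The incidence-index bookkeeping — splitting $\Lambda=[1,p+q]$ into $p$ vertex indices and $q$ edge indices and exploiting the acyclicity of $T$ (any two vertices bound at most one common edge) — is precisely what reconciles these demands, so once that device is fixed the rest is routine. (If one prefers the literal reading ``admits one of'', the claim is already proved by the subintersected construction alone.)
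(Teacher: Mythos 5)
The paper does not actually prove this statement: Theorem~\ref{thm:five-edge-join-total-set-labelings} is quoted from \cite{Yao-Ma-arXiv-2201-13354v1} and appears here without an argument, so there is no in-paper proof to compare against. Judged on its own merits, your proposal is correct and in fact proves more than the literal reading requires: the incidence-indexed labeling $\pi(v_i)=\{i\}\cup\{p+\ell: v_i\in e_\ell\}$, $\pi(e_\ell)=\{p+\ell\}$ verifies Chyper-1, Chyper-3 and Chyper-5 simultaneously (since in a tree two adjacent vertices share exactly one incident edge, the intersection $\pi(u)\cap\pi(v)$ is exactly the singleton $\{p+\ell\}$), giving the subintersected, intersected-edge-intersected, edge-intersected and individual edge-intersected cases at once, and the enlargement $\pi(e_\ell)=\{i,j,p+\ell\}$ handles the adjacent edge-intersected case; the hypergraph-side hypotheses ($e\neq\emptyset$, pairwise meeting, $\Lambda=\bigcup e$, injectivity on vertices, distinctness of adjacent edge colors) are all immediate, with the only degenerate case being the one-vertex tree, which the paper's convention (a tree has at least two leaves) excludes. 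Your explicit observation that Chyper-4 and Chyper-5 are incompatible on any vertex of degree at least two, so that the theorem must be read as separate existence claims realized by two different labelings, is a point the paper never makes and is worth keeping. Stylistically your route differs from the set-coloring constructions this paper does exhibit (Theorem~\ref{thm:build-hyperedge-set} and the PWCSC-algorithms), which build vertex sets by leaf-peeling from an underlying vertex labeling and define edge sets through intersections augmented by $W$-constraint values; your vertex-edge incidence bookkeeping is more elementary, needs no prior labeling of the tree, and works uniformly, at the cost of saying nothing about the arithmetic structure of the resulting hyperedge set. The shortcut via Theorem~\ref{thm:rainbow-total-set-labelings} is fine as a remark, but since that theorem's edge colors are only implicitly described in this paper, keep the incidence construction as the main argument, as you propose.
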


\subsubsection{Parameterized hypergraphs}

We will use the following terminology and natation:

\begin{asparaenum}[$\bullet$~]
\item For a \emph{parameterized set} $\Lambda_{(m,b,n,k,a,d)}=S_{m,0,b,d}\cup S_{n,k,a,d}$ with
$$
S_{m,0,b,d}=\{bd,(b+1)d\dots, md\},~S_{n,k,a,d}=\{k+ad,k+(a+1)d,\dots ,k+(a+n)d\}
$$ the power set $\Lambda^2_{(m,b,n,k,a,d)}$ collects all subsets of the parameterized set $\Lambda_{(m,b,n,k,a,d)}$.

\item A \emph{parameterized hyperedge set} $\mathcal{E}^P\subset \Lambda^2_{(m,b,n,k,a,d)}$ holds $\bigcup _{e\in \mathcal{E}^P}e=\Lambda_{(m,b,n,k,a,d)}$.
\end{asparaenum}

Recall the concept of a hypergraph as follows:

\begin{defn}\label{defn:hypergraph-basic-definition}
\cite{Jianfang-Wang-Hypergraphs-2008} A \emph{hypergraph} $\mathcal{E}$ is a family of distinct non-empty subsets $e_1,e_2$, $\dots $, $e_n$ of a finite set $\Lambda$, and satisfies:

(i) each element $e\in \mathcal{E}$ holds $e\neq \emptyset $ true, and is called a \emph{hyperedge};

(ii) $\Lambda=\bigcup _{e\in \mathcal{E}}e$, where each element of $\Lambda$ is called a \emph{vertex}, and the cardinality $|\Lambda|$ is the \emph{order} of the hypergraph.

The notation $\mathcal{H}_{yper}=(\Lambda,\mathcal{E})$ stands for a hypergraph with its own \emph{hyperedge set} $\mathcal{E}$ defined on the \emph{vertex set} $\Lambda$, and the cardinality $|\mathcal{E}|$ is the \emph{size} of the hypergraph $\mathcal{H}_{yper}$. \qqed
\end{defn}

\begin{figure}[h]
\centering
\includegraphics[width=15cm]{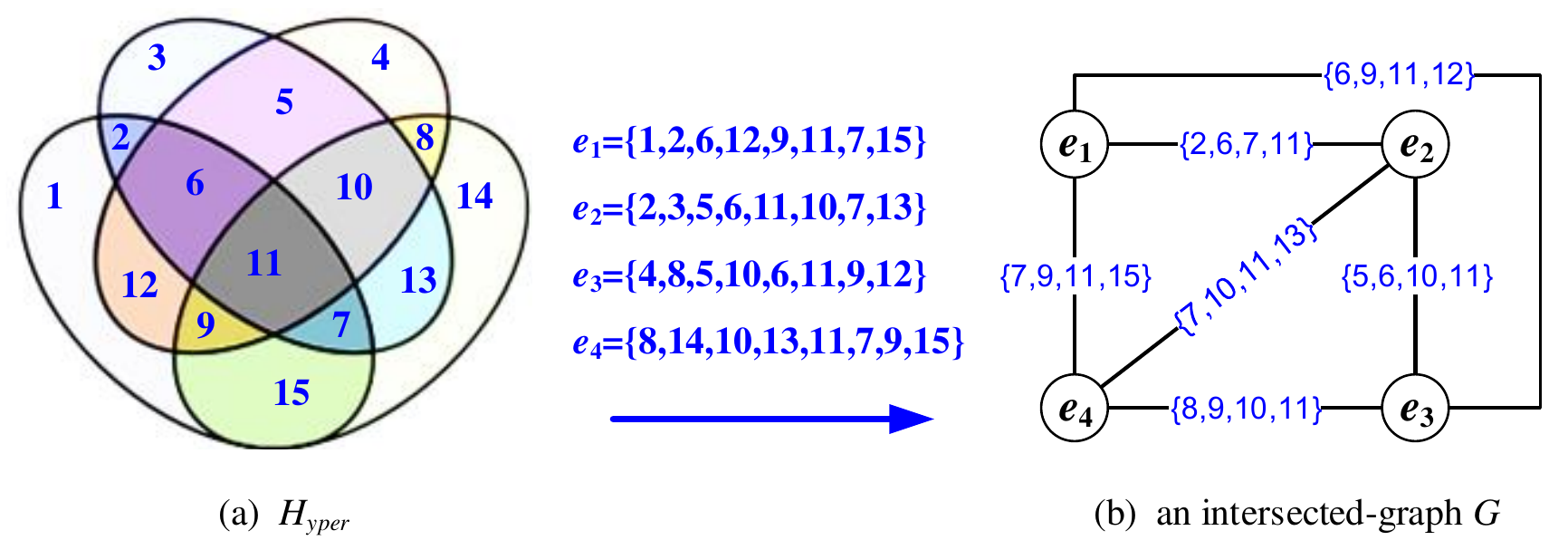}\\
\caption{\label{fig:1-example-hypergraph}{\small For understanding Definition \ref{defn:hypergraph-basic-definition}: An example from a $8$-uniform hypergraph $H_{yper}$ to an intersected-graph $G$ admitting a set-coloring, where (a) Venn's four-set diagram using four ellipses, cited from \cite{Yao-Zhang-Sun-Mu-Sun-Wang-Wang-Ma-Su-Yang-Yang-Zhang-2018arXiv}.}}
\end{figure}

Motivated from the hypergraph definition, we present the parameterized hypergraph as follows:

\begin{defn}\label{defn:parameterized-hypergraph-basic-definition}
$^*$ A \emph{parameterized hypergraph} $\mathcal{P}_{hyper}=(\Lambda_{(m,b,n,k,a,d)},\mathcal{E}^P)$ defined on a parameterized hypervertex set $\Lambda_{(m,b,n,k,a,d)}$ holds:

(i) each element of $\mathcal{E}^P$ is not empty and called a \emph{parameterized hyperedge};

(ii) $\Lambda_{(m,b,n,k,a,d)}=\bigcup _{e\in \mathcal{E}^P}e$, where each element of $\Lambda_{(m,b,n,k,a,d)}$ is called a \emph{vertex}, and the cardinality $|\Lambda_{(m,b,n,k,a,d)}|$ is the \emph{order} of $\mathcal{P}_{hyper}$;

(iii) $\mathcal{E}^P$ is called \emph{parameterized hyperedge set}, and the cardinality $|\mathcal{E}^P|$ is the \emph{size} of $\mathcal{P}_{hyper}$. \qqed
\end{defn}

\begin{problem}\label{problem:99999}
Let $S(\Lambda^P)=\big \{\mathcal{E}^P_1,\mathcal{E}^P_2,\dots ,\mathcal{E}^P_{M}\big \}$ be the set of parameterized hyperedge sets defined on a parameterized set $\Lambda_{(m,b,n,k,a,d)}$, such that each parameterized hyperedge set $\mathcal{E}^P_i$ with $i\in [1,M]$ holds $\Lambda_{(m,b,n,k,a,d)}=\bigcup _{e\in \mathcal{E}^P_i}e$ true. \textbf{Estimate} the number $M$ of parameterized hypergraphs defined on the parameterized set $\Lambda_{(m,b,n,k,a,d)}$.
\end{problem}

\begin{defn} \label{defn:operation-graphs-pa-hypergraph}
$^*$ Let $\textbf{\textrm{O}}=(O_1,O_2,\dots ,O_m)$ be an operation set with $m\geq 1$, and if an element $c$ is obtained by implementing an operation $O_i\in \textbf{\textrm{O}}$ to other two elements $a,b$, we write this fact as $c=a[O_i]b$. Suppose that a $(p,q)$-graph $G$ admits a proper $(k,d)$-total set-coloring $F: V(G)\cup E(G)\rightarrow \mathcal{E}^P$, where $\mathcal{E}^P$ is the parameterized hyperedge set of a parameterized hypergraph $\mathcal{P}_{hyper}=(\Lambda_{(m,b,n,k,a,d)},\mathcal{E}^P)$ defined in Definition \ref{defn:parameterized-hypergraph-basic-definition}.
\begin{asparaenum}[\textrm{\textbf{Pahy}}-1. ]
\item \label{pahyper:parameterized-graph-3} Only one operation $O_k\in \textbf{\textrm{O}}$ holds $c_{uv}=a_u[O_k]b_v$ for each edge $uv\in E(G)$, where $a_u\in F(u)$, $b_v\in F(v)$ and $c_{uv}\in F(uv)$.
\item \label{pahyper:parameterized-graph-1} For each operation $O_i\in \textbf{\textrm{O}}$, each edge $uv\in E(G)$ holds $c_{uv}=a_u[O_i]b_v$ for $a_u\in F(u)$, $b_v\in F(v)$ and $c_{uv}\in F(uv)$.
\item \label{pahyper:parameterized-graph-2} Each $z\in F(uv)$ for each edge $uv\in E(G)$ corresponds to an operation $O_j\in \textbf{\textrm{O}}$, such that $z=x[O_j]y$ for some $x\in F(u)$ and $y\in F(v)$.
\item \label{pahyper:parameterized-graph-v} Each $a_x\in F(x)$ for any vertex $x\in V(G)$ corresponds to an operation $O_s\in \textbf{\textrm{O}}$ and an adjacent vertex $y\in N(x)$, such that $z_{xy}=a_x[O_s]b_y$ for some $z_{xy}\in F(xy)$ and $b_y\in F(y)$.
\item \label{pahyper:parameterized-graph-4} Each operation $O_t\in \textbf{\textrm{O}}$ corresponds to some edge $xy\in E(G)$ holding $c_{xy}=a_x[O_t]b_y$ for $a_x\in F(x)$, $b_y\in F(y)$ and $c_{xy}\in F(xy)$.
\item \label{pahyper:parameterized-graph-must} If there are three different sets $e_i,e_j,e_k\in F(V(G))$ and an operation $O_r\in \textbf{\textrm{O}}$ holding $e_i[O_r]e_j\subseteq e_k$, then there exists an edge $xy\in E(G)$, such that $F(x)=e_i$, $F(y)=e_j$ and $F(xy)=e_k$.
\end{asparaenum}
\noindent \textbf{Then we call} $G$:
\begin{asparaenum}[\textrm{\textbf{Ograph}}-1. ]
\item a \emph{$(k,d)$-$\textbf{\textrm{O}}$ operation graph} of $\mathcal{P}_{hyper}$ if \textbf{Pahy}-\ref{pahyper:parameterized-graph-1} and \textbf{Pahy}-\ref{pahyper:parameterized-graph-must} hold true.
\item a \emph{$(k,d)$-$\textbf{\textrm{O}}$-edge operation graph} of $\mathcal{P}_{hyper}$ if \textbf{Pahy}-\ref{pahyper:parameterized-graph-1}, \textbf{Pahy}-\ref{pahyper:parameterized-graph-2} and \textbf{Pahy}-\ref{pahyper:parameterized-graph-must} hold true.
\item a \emph{$(k,d)$-$\textbf{\textrm{O}}$-total operation graph} of $\mathcal{P}_{hyper}$ if \textbf{Pahy}-\ref{pahyper:parameterized-graph-1}, \textbf{Pahy}-\ref{pahyper:parameterized-graph-2}, \textbf{Pahy}-\ref{pahyper:parameterized-graph-v} and \textbf{Pahy}-\ref{pahyper:parameterized-graph-must} hold true.
\item a \emph{$(k,d)$-edge operation graph} of $\mathcal{P}_{hyper}$ if \textbf{Pahy}-\ref{pahyper:parameterized-graph-2} and \textbf{Pahy}-\ref{pahyper:parameterized-graph-must} hold true.
\item a \emph{$(k,d)$-vertex operation graph} of $\mathcal{P}_{hyper}$ if \textbf{Pahy}-\ref{pahyper:parameterized-graph-v} and \textbf{Pahy}-\ref{pahyper:parameterized-graph-must} hold true.
\item a \emph{$(k,d)$-total operation graph} of $\mathcal{P}_{hyper}$ if \textbf{Pahy}-\ref{pahyper:parameterized-graph-2}, \textbf{Pahy}-\ref{pahyper:parameterized-graph-v} and \textbf{Pahy}-\ref{pahyper:parameterized-graph-must} hold true.
\item a \emph{$(k,d)$-non-uniform operation graph} of $\mathcal{P}_{hyper}$ if \textbf{Pahy}-\ref{pahyper:parameterized-graph-4} and \textbf{Pahy}-\ref{pahyper:parameterized-graph-must} hold true.
\item a \emph{$(k,d)$-one operation graph} of $\mathcal{P}_{hyper}$ if \textbf{Pahy}-\ref{pahyper:parameterized-graph-3} and \textbf{Pahy}-\ref{pahyper:parameterized-graph-must} hold true.\qqed
\end{asparaenum}
\end{defn}

\begin{example}\label{exa:8888888888}
If $\textbf{\textrm{O}}$ contains only one operation ``$\cap$'', the \emph{intersection operation} on sets, and the $(p,q)$-graph $G$ satisfies \textbf{Pahy}-\ref{pahyper:parameterized-graph-1} and \textbf{Pahy}-\ref{pahyper:parameterized-graph-must} in Definition \ref{defn:operation-graphs-pa-hypergraph}, then $G$ is a $(k,d)$-one operation graph of the parameterized hypergraph $\mathcal{P}_{hyper}$, also, $G$ is called \emph{$(k,d)$-intersected-graph} of $\mathcal{P}_{hyper}$. As $(k,d)=(1,1)$, $G$ is just the \emph{intersected-graph} of a hypergraph $H_{hyper}$ defined in \cite{Jianfang-Wang-Hypergraphs-2008}.\qqed
\end{example}

\begin{example}\label{exa:complete-graph-K-4}
Theorem \ref{thm:graph-admits-6-set-colorings} tells us: Each connected graph $G$ admits each one of the following $W$-constraint $(k,d)$-total set-colorings for $W$-constraint $\in \{$graceful, harmonious, edge-difference, edge-magic, felicitous-difference, graceful-difference$\}$. In Fig.\ref{fig:k-d-operation-graph}, doing the vertex-split to a complete graph $K_4$ produces a tree $T$, such that $E(K_4)=E(T)$, and ``$T\rightarrow K_4$'' is a graph homomorphism from the tree $T$ into the complete graph $K_4$. And moreover, there are six colored trees $T_i$ for $i\in [1,6]$, where

$T_1$ admits a graceful $(k,d)$-total labeling $f_1$;

$T_2$ admits a harmonious $(k,d)$-total labeling $f_2$;

$T_3$ admits a felicitous-difference $(k,d)$-total labeling $f_3$;

$T_4$ admits an edge-magic $(k,d)$-total labeling $f_4$;

$T_5$ admits an edge-difference $(k,d)$-total labeling $f_5$; and

$T_6$ admits a graceful-difference $(k,d)$-total labeling $f_6$.

\begin{figure}[h]
\centering
\includegraphics[width=16.4cm]{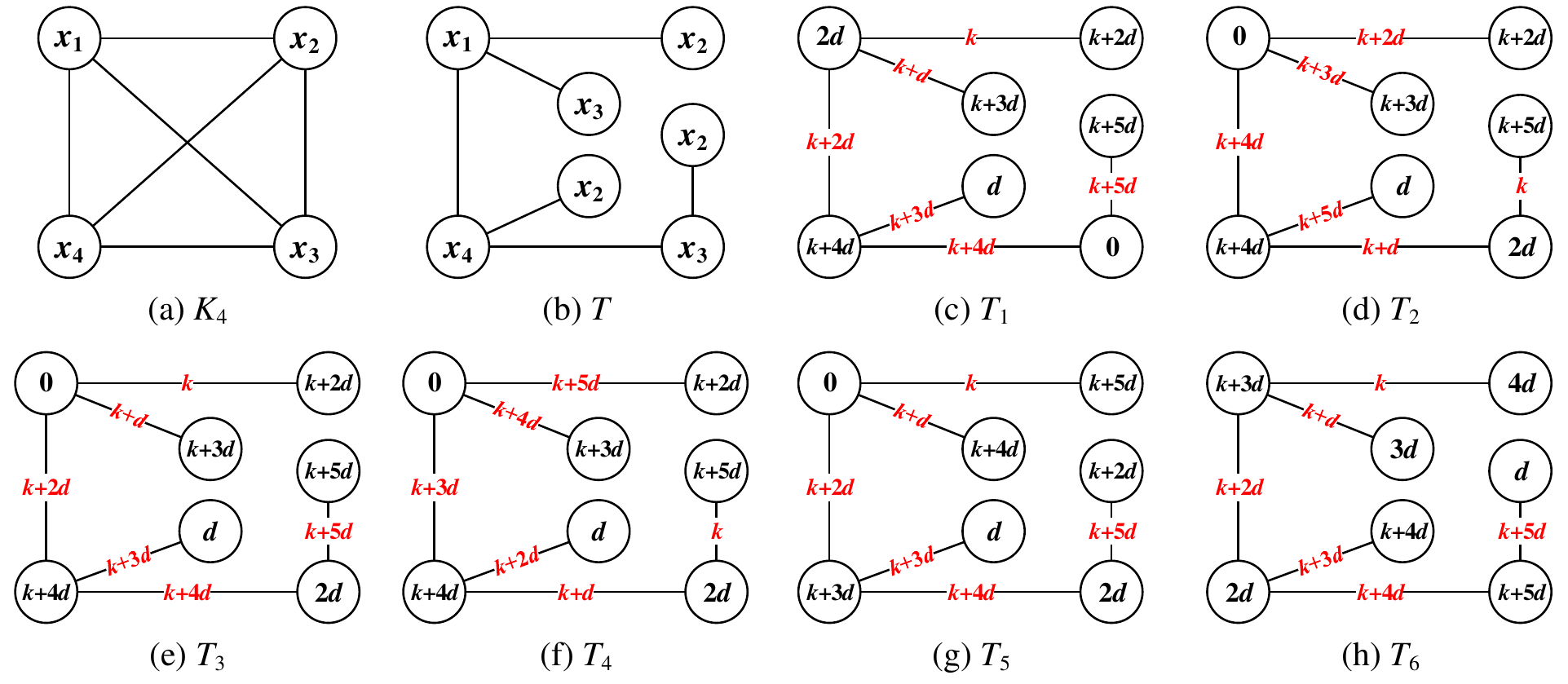}\\
\caption{\label{fig:k-d-operation-graph}{\small (a) A complete graph $K_4$; (b)-(h) six colored trees $T_i$ for $i\in [1,6]$.}}
\end{figure}

By Fig.\ref{fig:k-d-operation-graph}, we get a parameterized hypergraph $\mathcal{P}^*_{hyper}=(\Lambda_{(4,0,5,k,0,d)},\mathcal{E}^P)$ with the parameterized hypervertex set
$$\Lambda_{(4,0,5,k,0,d)}=S_{4,0,0,d}\cup S_{5,k,0,d}=\{0,d,2d,3d,4d\}\cup \{k,k+d,k+2d,k+3d,k+4d,k+5d\}$$
and the parameterized hyperedge set $\mathcal{E}^P=\{e_i:~i\in [1,10]\}$ containing $e_1=\{0,2d,k+3d\}$,

$e_2=\{0,d,3d,4d,k+2d,k+4d,k+5d\}$, $e_3=\{0,2d,3d,k+3d,k+4d,k+5d\}$,

$e_4=\{d,2d,k+3d,k+4d\}$, $e_5=\{0,k,k+2d,k+5d\}$,

$e_6=\{k+d,k+3d,k+4d\}$, $e_7=\{k+2d,k+3d,k+4d\}$,

$e_8=\{k,k+5d\}$, $e_9=\{d,k+2d,k+3d,k+5d\}$ and $e_{10}=\{k+d,k+4d\}$.

\vskip 0.2cm

Clearly, $\Lambda_{(m,b,n,k,a,d)}=\bigcup _{e_i\in \mathcal{E}^P}e_i$. The complete graph $K_4$ admits a proper $(k,d)$-total set-coloring $F: V(K_4)\cup E(K_4)\rightarrow \mathcal{E}^P$, where $F(x_1)=e_1$, $F(x_2)=e_2$, $F(x_3)=e_3$, $F(x_4)=e_4$, $F(x_1x_2)=e_5$, $F(x_1x_3)=e_6$, $F(x_1x_4)=e_7$, $F(x_2x_3)=e_8$, $F(x_2x_4)=e_9$ and $F(x_3x_4)=e_{10}$.

\vskip 0.2cm

We have an operation set $\textbf{\textrm{O}}=(O_1,O_2,\dots ,O_7)$, where
\begin{asparaenum}[(1) ]
\item The operation $O_1$ is the graceful $(k,d)$-total labeling $f_1$, such that the constraint $f_1(uv)=|f_1(u)-f_1(v)|$ for each edge $uv \in E(T_1)$ and the edge color set $f_1(E(T_1))=S_{5,k,0,d}$.
\item The operation $O_2$ is the harmonious $(k,d)$-total labeling $f_2$, such that the constraint
$$f_2(uv)=f_2(u)+f_2(v)~(\bmod~6d)
$$ for each edge $uv \in E(T_2)$ and the edge color set $f_2(E(T_2))=S_{5,k,0,d}$.
\item The operation $O_3$ is the felicitous-difference $(k,d)$-total labeling $f_3$, such that the felicitous-difference constraint
$$|f_3(u)+f_3(v)-f_3(uv)|=2d
$$ for each edge $uv \in E(T_3)$ and the edge color set $f_3(E(T_3))=S_{5,k,0,d}$.
\item The operation $O_4$ is the edge-magic $(k,d)$-total labeling $f_4$, such that the edge-magic constraint
$$f_4(u)+f_4(uv)+f_4(v)=2k+7d
$$ for each edge $uv \in E(T_4)$ and the edge color set $f_4(E(T_4))=S_{5,k,0,d}$.
\item The operation $O_5$ is the edge-difference $(k,d)$-total labeling $f_5$, such that the edge-difference constraint
$$f_5(uv)+|f_5(u)-f_5(v)|=2k+5d
$$ for each edge $uv \in E(T_5)$ and the edge color set $f_5(E(T_5))=S_{5,k,0,d}$.
\item The operation $O_6$ is the graceful-difference $(k,d)$-total labeling $f_6$, such that the graceful-difference constraint
$$\big ||f_6(u)-f_6(v)|-f_6(uv)\big |=d
$$ for each edge $uv \in E(T_6)$ and the edge color set $f_6(E(T_6))=S_{5,k,0,d}$.
\item The operation $O_7$ is the intersection operation ``$\cap $'', such that $F(x_ix_j)\cap [F(x_i)\cap F(x_j)]\neq \emptyset $ for each edge $x_ix_j \in E(K_4)$.
\end{asparaenum}

Thereby, we claim that the complete graph $K_4$ is every one of the operation graphs \textbf{Ograph}-1, \textbf{Ograph}-2, \textbf{Ograph}-3, \textbf{Ograph}-4, \textbf{Ograph}-5 and \textbf{Ograph}-6 of the parameterized hypergraph $\mathcal{P}^*_{hyper}=(\Lambda_{(4,0,5,k,0,d)},\mathcal{E}^P)$ according to Definition \ref{defn:operation-graphs-pa-hypergraph}.\qqed
\end{example}

\subsubsection{VSET-coloring algorithm}

We will use some algorithms for producing set-colorings introduced in \cite{Yao-Ma-arXiv-2201-13354v1} to make $(k,d)$-total set-colorings in this subsection.

\begin{thm}\label{thm:build-hyperedge-set}
\cite{Yao-Ma-arXiv-2201-13354v1} If a tree $T$ admits a mapping $f:V(T)\rightarrow [0,p-1]$ with $p=|V(T)|$ and $f(z)\neq f(y)$ for any pair of distinct vertices $x,y$, then $T$ admits a set-coloring defined on a hyperedge set $\mathcal{E}$ such that $T$ is a subgraph of the intersected-graph of a hypergraph $\mathcal{H}_{yper}=([0,p-1],\mathcal{E})$.
\end{thm}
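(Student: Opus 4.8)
The plan is to build the hyperedge set $\mathcal{E}$ directly from the given vertex mapping $f:V(T)\to[0,p-1]$ and then verify that $T$ embeds into the intersected-graph of the resulting hypergraph. First I would fix the tree $T$ with $p=|V(T)|$, and for each vertex $z\in V(T)$ define a set $e_z\subseteq[0,p-1]$ that encodes both the color $f(z)$ and the local adjacency structure around $z$. A natural choice is to let $e_z$ contain $f(z)$ together with $f(w)$ for every neighbor $w\in N(z)$, i.e.\ $e_z=\{f(z)\}\cup\{f(w):w\in N(z)\}$; since $f$ is injective on $V(T)$, distinct vertices get distinct sets (they differ at least in whether they contain their ``own'' color as a cut point, but more simply because adjacency in a tree forces $e_z\ne e_{z'}$ once one recalls that trees have no multiple edges and each $e_z$ is nonempty). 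Then set $\mathcal{E}=\{e_z:z\in V(T)\}$. The first thing to check is that this is a legitimate hyperedge set on $\Lambda=[0,p-1]$: every $e_z$ is nonempty (it contains $f(z)$), and $\bigcup_{z}e_z=[0,p-1]$ because $f$ is onto $[0,p-1]$ and each color $f(z)$ lies in $e_z$.

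Next I would exhibit the intersected-graph $G$ of the hypergraph $\mathcal{H}_{yper}=([0,p-1],\mathcal{E})$ and show $T$ is a subgraph of it. Recall the intersected-graph has vertex set $\mathcal{E}$ and joins two hyperedges exactly when they intersect nontrivially. Define $\varphi:V(T)\to V(G)$ by $\varphi(z)=e_z$; this is injective by the previous paragraph. For each edge $zz'\in E(T)$ we have $f(z)\in e_{z'}$ (since $z\in N(z')$) and $f(z')\in e_z$, so $e_z\cap e_{z'}\ne\emptyset$, hence $\varphi(z)\varphi(z')$ is an edge of $G$. Therefore $\varphi$ is an embedding of $T$ into $G$, which is precisely the assertion that $T$ is a subgraph of the intersected-graph of $\mathcal{H}_{yper}$. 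To also record that $T$ itself ``admits a set-coloring defined on $\mathcal{E}$'', I would state the induced total (or vertex) set-coloring: color $z$ by $e_z$ and color each edge $zz'$ by $e_z\cap e_{z'}$, and note that this fits the framework of Definition~\ref{defn:W-join-type-se-total-coloring} (it is at least a subintersected total set-labeling, since $\pi(z)\cap\pi(z')=\pi(zz')$ trivially by construction), so the statement is consistent with the earlier definitions.

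I expect the main subtlety, rather than the main obstacle, to be making sure the definition of $e_z$ simultaneously guarantees (a) distinctness of the sets and (b) $\bigcup e_z=[0,p-1]$ and (c) nonemptiness; the closed-neighborhood-color construction handles all three cleanly, but one must be slightly careful about the edge case where $T$ has very few vertices (say $p=1$ or $p=2$), which I would dispatch by inspection. A secondary point is that the theorem only claims $T$ is a \emph{subgraph} of the intersected-graph, not equal to it, so I do not need to worry about ``extra'' edges of $G$ coming from pairs $e_z\cap e_{z'}\ne\emptyset$ with $zz'\notin E(T)$ — those simply make $G$ larger, which is allowed. If one instead wanted $T$ to be an induced subgraph (not required here), that would be the genuinely hard part and would force a more delicate choice of $\mathcal{E}$ along the lines of the VSET-coloring algorithm cited from \cite{Yao-Ma-arXiv-2201-13354v1}; for the stated result, the closed-neighborhood construction suffices and the proof is short.
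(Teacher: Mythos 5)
Your construction is correct and proves the statement, but it is not the paper's construction: the paper's proof runs the VSET-coloring algorithm, peeling leaves layer by layer ($T$, $T_1=T-L(T)$, \dots) and assigning each vertex the two-element set consisting of its own color and the color of its unique neighbour in the current layer (the final star-center getting a singleton), whereas you assign each vertex its closed-neighbourhood colour set $e_z=\{f(z)\}\cup f(N(z))$ in one shot. Both constructions rest on the same idea — every vertex set contains its own colour together with colours of designated neighbours, so the two sets on an edge intersect and the tree embeds in the intersected-graph — so the verification steps (nonemptiness, $\bigcup_z e_z=[0,p-1]$ by surjectivity of $f$, intersection along edges) are identical in spirit. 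What differs is what each buys: the paper's peeling produces hyperedges of size at most two and handles every tree uniformly, including $K_2$; your closed-neighbourhood sets are larger but need no ordering of the tree, at the cost of exactly the degenerate case you flagged. Two points you should make explicit rather than wave at. First, distinctness: if $e_z=e_{z'}$ with $z\neq z'$, then $f(z)\in e_{z'}$ forces $z\in N(z')$ by injectivity of $f$, and then any further neighbour of $z$ would have to be adjacent to $z'$, creating a triangle — impossible in a tree — so equality can only occur when $T=K_2$; this is the clean argument, and your parenthetical justification as written is not one. Second, the case $p=2$ is a genuine failure of your formula (both vertices get $\{f(u),f(v)\}$, the hypergraph has one hyperedge, and its intersected-graph has no edge), so it must be treated separately, e.g.\ by taking $\mathcal{E}=\{\{f(u)\},\{f(u),f(v)\}\}$, which is essentially what the paper's algorithm does automatically. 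With those two repairs your proof is complete and marginally shorter than the paper's.
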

\begin{proof} According the hypothesis of the theorem, we define a set-coloring $F$ for the tree $T$ by means of the following so-called \textbf{VSET-coloring algorithm}:

\textbf{Step 1.} Each leaf $w_j$ of the tree $T$ is set-colored with $F(w_j)=\{f(w_j),f(v)\}$, where the edge $w_jv\in E(T)$.

\textbf{Step 2.} Each leaf $w^1_j$ of the tree $T_1=T-L(T)$, where $L(T)$ is the set of leaves of $T$, is colored by $F(w^1_j)=\{f(w^1_j),f(z)\}$, where the edge $w^1_jz\in E(T_1)$.

\textbf{Step 3.} Each leaf $w^r_j$ of the tree $T_r=T_{r-1}-L(T_{r-1})$, where $L(T_{r-1})$ is the set of leaves of $T_{r-1}$, is colored by $F(w^r_j)=\{f(w^r_j),f(u)\}$, where the edge $w^r_ju\in E(T_r)$.

\textbf{Step 4.} Suppose $T_k$ is a star $K_{1,m}$ with vertex set $V(K_{1,m})=\big \{w^k_j,u_0:j\in [1,m]\big \}$ and edge set $E(K_{1,m})=\big \{u_0w^k_1,u_0w^k_2,\dots ,u_0w^k_m\big \}$, we color each leaf $w^k_j$ with $F(w^k_j)=\big \{f(w^k_j),f(u_0)\big \}$, and $F(u_0)=\{f(u_0)\}$.

\textbf{Step 5.} $F(uv)=F(u)\cap F(v)\neq \emptyset$ for each edge $uv\in E(T)$.

Thereby, the tree $T$ admits the set-coloring $F$ subject to $R_{est}(1)=\{c_0\}$, such that $c_0$ holds $F(uv)\supseteq F(u)\cap F(v)\neq \emptyset$ for each edge $uv\in E(T)$. So, $T$ is a subgraph of the intersected-graph of a hypergraph $\mathcal{H}_{yper}=(\Lambda,\mathcal{E})$ with its hyperedge set $\mathcal{E}=F(V(T))$, and $\Lambda=[1,p-1]=\bigcup _{e\in \mathcal{E}}e$, we are done.
\end{proof}

\begin{figure}[h]
\centering
\includegraphics[width=16.4cm]{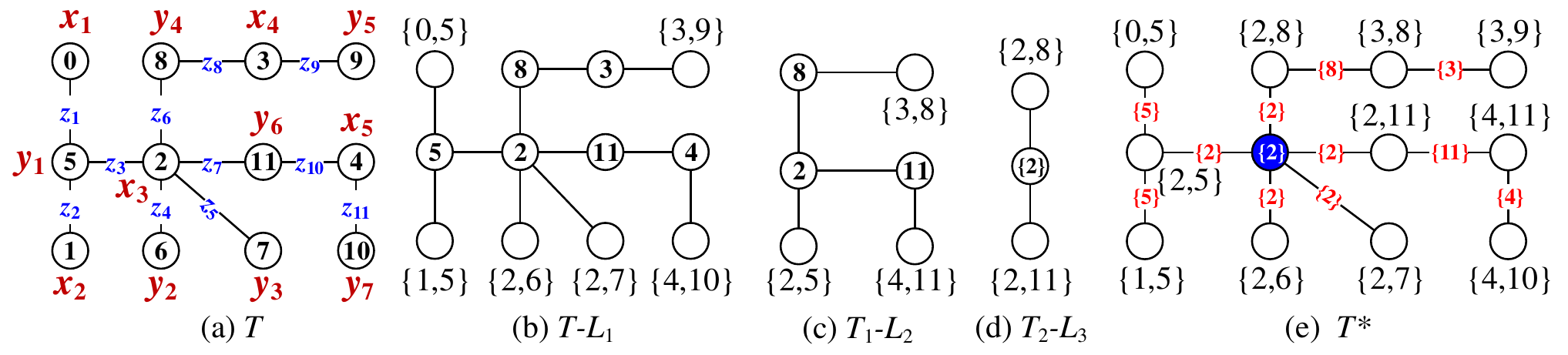}\\
\caption{\label{fig:VSET-coloring-algorithm-1}{\small An example for understanding the proof of Theorem \ref{thm:build-hyperedge-set}, cited from \cite{Yao-Ma-arXiv-2201-13354v1}.}}
\end{figure}

\begin{example}\label{exa:8888888888}
In Fig.\ref{fig:VSET-coloring-algorithm-1} (a), the tree $T$ has its own Topcode-matrix as follows
{\small
\begin{equation}\label{eqa:VSET-coloring-algorithm-matrix-1}
\centering
{
\begin{split}
T_{code}(T)&= \left(
\begin{array}{cccccccccccc}
x_1 & x_2 & x_3 & x_3 & x_3 & x_3 & x_3 & x_4 & x_4 & x_5 & x_5\\
x_1y_1 & x_2y_1 & x_3y_1 & x_3y_2 & x_3y_3 & x_3y_4 & x_3y_6 & x_4y_4 & x_4y_5 & x_5y_6 & x_5y_7\\
y_1 & y_1 & y_1 & y_2 & y_3 & y_4 & y_6 & y_4 & y_5 & y_6 &y_7
\end{array}
\right)_{3\times 11}\\
&=(X_T,E_T,Y_T)^{T}_{3\times 11}
\end{split}}
\end{equation}
}with the vertex-vector $X_T=(x_1, x_2, x_3, x_3, x_3, x_3, x_3, x_4, x_4, x_5, x_5)$, the edge-vector
$$E_T=(x_1y_1, x_2y_1, x_3y_1, x_3y_2, x_3y_3, x_3y_4, x_3y_6, x_4y_4, x_4y_5, x_5y_6, x_5y_7)=(z_1,z_2,\cdots ,z_{11})$$
and the vertex-vector $Y_T=(y_1, y_1, y_1, y_2, y_3, y_4, y_6, y_4, y_5,y_6,y_7)$, where $V(T)=X_T\cup Y_T$ and $E(T)=E_T$.

\vskip 0.4cm

Notice that the tree $T$ admits a vertex labeling $f$ holding $f(u)\neq f(v)$ for any pair of distinct vertices $u,v\in V(T)$ shown in Fig.\ref{fig:VSET-coloring-algorithm-1} (a), then we get the colored Topcode-matrix
{\footnotesize
\begin{equation}\label{eqa:VSET-coloring-algorithm-matrix-22}
\centering
{
\begin{split}
T_{code}(T,f)= \left(
\begin{array}{cccccccccccc}
0 & 1 & 2 & 2 & 2 & 2 & 2 & 3 & 3 & 4 & 4\\
f(z_1) & f(z_2) & f(z_3) & f(z_4) & f(z_5) & f(z_6) & f(z_7) & f(z_8) & f(z_9) & f(z_{10}) & f(z_{11})\\
5 & 5 & 5 & 6 & 7 & 8 & 11 & 8 & 9 & 11 & 10
\end{array}
\right)
\end{split}}
\end{equation}
}with $f(z_i)=z_i$ for $i\in [1,11]$.

\vskip 0.4cm

By Eq.(\ref{eqa:unit-Topcode-matrix}) and Eq.(\ref{eqa:VSET-coloring-algorithm-matrix-22}), the tree $T$ has its own parameterized Topcode-matrix $P_{ara}(T,F)$ defined as
{\footnotesize
\begin{equation}\label{eqa:VSET-coloring-algorithm-matrix-33}
\centering
{
\begin{split}
&\quad P_{ara}(T,F)=k\cdot I\,^0+d\cdot T_{code}(T,f)\\
&= \left(
\begin{array}{cccccccccccc}
0 & d & 2d & 2d & 2d & 2d & 2d & 3d & 3d & 4d & 4d\\
F(z_1) & F(z_2) & F(z_3) & F(z_4) & F(z_5) & F(z_6) & F(z_7) & F(z_8) & F(z_9) & F(z_{10}) & F(z_{11})\\
k+5d & k+5d & k+5d & k+6d & k+7d & k+8d & k+11d & k+8d & k+9d & k+11d & k+10d
\end{array}
\right)
\end{split}}
\end{equation}
}with $F(z_i)=k+f(z_i)\cdot d$ for $i\in [1,11]$, and we can define $f(z_i)$ to be a number obtained by some $W$-constraint coloring of graph theory.

Fig.\ref{fig:VSET-coloring-algorithm-1} (e) shows us a set-coloring $g$ of the tree $T^*$ as follows:

\textbf{(a-1)} $g(x_1)=\{0,5\}$, $g(x_2)=\{1,5\}$, $g(x_3)=\{2\}$, $g(x_4)=\{3,8\}$ and $g(x_5)=\{4,11\}$;

\textbf{(a-2)} $g(y_1)=\{2,5\}$, $g(y_2)=\{2,6\}$, $g(y_3)=\{2,7\}$, $g(y_4)=\{2,8\}$, $g(y_5)=\{3.9\}$, $g(y_6)=\{2,11\}$ and $g(y_7)=\{4,10\}$; and

\textbf{(a-3)} $g(z_1)=\{5\}$, $g(z_2)=\{5\}$, $g(z_3)=\{2\}$, $g(z_4)=\{2\}$, $g(z_5)=\{2\}$, $g(z_6)=\{2\}$, $g(z_7)=\{2\}$, $g(z_8)=\{8\}$, $g(z_9)=\{3\}$, $g(z_{10})=\{11\}$ and $g(z_{11})=\{4\}$.

\vskip 0.4cm

We have a parameterized hypervertex set
$${
\begin{split}
\Lambda_{(4,0,11,k,0,d)}&=S_{4,0,0,d}\cup S_{11,k,5,d}\\
&=\{0,d,2d,3d,4d\}\cup \{k+5d,k+6d,k+7d,k+8d,k+9d,k+10d,k+11d\}
\end{split}}
$$

Thereby, the tree $T$ admits a $(k,d)$-total set-coloring $F_{k,d}:V(T)\cup E(T)\rightarrow \Lambda^2_{(4,0,11,k,0,d)}$ defined as:

\textbf{(b-1)} The vertex $(k,d)$-colors are $F_{k,d}(x_1)=\{0,k+5d\}$, $F_{k,d}(x_2)=\{d,k+5d\}$, $F_{k,d}(x_3)=\{2d\}$, $F_{k,d}(x_4)=\{3d,k+8d\}$ and $F_{k,d}(x_5)=\{4d,k+11d\}$;

\textbf{(b-2)} The vertex $(k,d)$-colors are $F_{k,d}(y_1)=\{2d,k+5d\}$, $F_{k,d}(y_2)=\{2d,k+6d\}$, $F_{k,d}(y_3)=\{2d,k+7d\}$, $F_{k,d}(y_4)=\{2d,k+8d\}$, $F_{k,d}(y_5)=\{3d,k+9d\}$, $F_{k,d}(y_6)=\{2d,k+11d\}$ and $F_{k,d}(y_7)=\{4d,k+10d\}$; and

\textbf{(b-3)} The edge $(k,d)$-colors are $F_{k,d}(z_1)=\{k+5d\}$, $F_{k,d}(z_2)=\{k+5d\}$, $F_{k,d}(z_3)=\{2d\}$, $F_{k,d}(z_4)=\{2d\}$, $F_{k,d}(z_5)=\{2d\}$, $F_{k,d}(z_6)=\{2d\}$, $F_{k,d}(z_7)=\{2d\}$, $F_{k,d}(z_8)=\{k+8d\}$, $F_{k,d}(z_9)=\{3d\}$, $F_{k,d}(z_{10})=\{k+11d\}$ and $F_{k,d}(z_{11})=\{4d\}$.

\vskip 0.4cm

We get a \emph{parameterized hypergraph} $\mathcal{P}_{hyper}=(\Lambda_{(4,0,11,k,0,d)},\mathcal{E}^P)$, where the parameterized hyperedge set $\mathcal{E}^P=\{e_j:j\in [1,12]\}$ with elements $e_1=\{0,k+5d\}$, $e_2=\{d,k+5d\}$, $e_3=\{2d\}$, $e_4=\{3d,k+8d\}$ and $e_5=\{4d,k+11d\}$, $e_6=\{2d,k+5d\}$, $e_7=\{2d,k+6d\}$, $e_8=\{2d,k+7d\}$, $e_9=\{2d,k+8d\}$, $e_{10}=\{3d,k+9d\}$, $e_{11}=\{2d,k+11d\}$ and $e_{12}=\{4d,k+10d\}$.

Since $\Lambda_{(4,0,11,k,0,d)}=\bigcup _{e_j\in \mathcal{E}^P}e_j$, the colored tree $T$ admitting the $(k,d)$-total set-coloring $F_{k,d}$ is a subgraph of the intersected-graph of the parameterized hypergraph $\mathcal{P}_{hyper}$.\qqed
\end{example}

\subsubsection{PWCSC-algorithms on colored trees}

The sentence ``Producing $W$-constraint set-coloring algorithm'' is abbreviated as ``PWCSC-algorithm'' in the following discussion.

\vskip 0.4cm

\textbf{$^*$ PWCSC-algorithm-A based on the ordered-path.}

\textbf{Initialization-A.} Suppose that $T$ is a tree admitting a $W$-constraint labeling $f$ holding $f(uv)\neq f(xy)$ for any pair of edges $uv$ and $xy$ of $T$, and each edge $uv\in E(T)$ holds the $W$-constraint $f(uv)=W\langle f(u),f(v)\rangle $, as well as $|f(V(T))|=|V(T)|$.

\vskip 0.2cm

\textbf{Step A-1.} Do the VSET-coloring algorithm to $T$ first. We select a longest path
$$P_1=w^1_1w^1_2w^1_3\cdots w^1_{m_1-2}w^1_{m_1-1}w^1_{m_1}
$$ of $T$, then we have the neighbor set $N(w^1_2)=L(w^1_2)\cup \big \{w^1_3\big \}$, where the leaf set $L(w^1_2)=\big \{w^1_1, v^1_{2,1}, v^1_{2,2}$, $ \dots $, $v^1_{2,d_2}\big \}$ with $d_2=\textrm{deg}_T(w^1_2)-2$, and another neighbor set $N(w^1_{m_1-1})=\big \{w^1_{m_1-2}\big \}\cup L(w^1_{m_1-1})$ with the leaf set $L(w^1_{m_1-1})=\big \{w^1_{m_1}, u^1_{m_1,1}, u^1_{m_1,2},\dots , u^1_{m_1,d_{m_1}}\big \}$, where $d_{m_1}=\textrm{deg}_T(w^1_{m_1-1})-2$. We define a total set-coloring $F$ for the three $T$ as: The vertex set-colors are
\begin{equation}\label{eqa:step-a-11}
F_{path}(x)=\big \{f(x),f(w^1_2)\big \}_1,~x\in L(w^1_2);\quad F_{path}(y)=\big \{f(y),f(w^1_{m_1-1})\big \}_1,~y\in L(w^1_{m_1-1})
\end{equation}

\vskip 0.2cm

\textbf{Step A-2.} We get a tree $T_1=T-\big [L(w^1_2)\cup L(w^1_{m_1-1})\big ]$ by removing all leaves of two vertices $w^1_2$ and $w^1_{m_1-1}$ of $T$, and then do the VSET-coloring algorithm to $T_1$. Notice that the tree $T_1$ admits the set-coloring $F$, so we select a longest path
$$P_2=w^2_1w^2_2w^2_3\cdots w^2_{m_2-2}w^2_{m_2-1}w^2_{m_2}
$$ of $T_1$, then we have the neighbor set $N(w^2_2)=L(w^2_2)\cup \big \{w^2_3\big \}$, where the leaf set $L(w^2_2)=\big \{w^2_1, v^2_{2,1}$, $ v^2_{2,2}$, $\dots $, $v^2_{2,d_2}\big \}$ with $d_2=\textrm{deg}_T(w^2_2)-2$, and another neighbor set $N(w^2_{m_2-1})=\big \{w^2_{m_2-2}\big \}\cup L(w^2_{m_2-1})$ with the leaf set $L(w^2_{m_2-1})=\big \{w^2_{m_2}, u^2_{m_2,1}, u^2_{m_2,2},\dots , u^2_{m_2,d_{m_2}}\big \}$, where $d_{m_2}=\textrm{deg}_T(w^2_{m_2-1})-2$. We get the following vertex set-colors
\begin{equation}\label{eqa:step-a-22}
F_{path}(x)=\big \{f(x),f(w^2_2)\big \}_2,~x\in L(w^2_2);\quad F_{path}(y)=\big \{f(y),f(w^2_{m_2-1})\big \}_2,~y\in L(w^2_{m_2-1})
\end{equation}

\vskip 0.2cm

\textbf{Step A-3.} If the tree $T_2=T_1-\big [L(w^2_2)\cup L(w^2_{m_2-1})\big ]$ has its own diameter $D(T_2)\geq 3$, then we goto Step A-2.

\vskip 0.2cm

\textbf{Step A-4.} After $k-1$ times, we get the tree $T_k=T_{k-1}-\big [L(w^k_2)\cup L(w^k_{m_k-1})\big ]$ to be a star $K_{1,n}$ with its own diameter $D(K_{1,n})=2$, then we have the vertex set $V(K_{1,n})=\big \{x_0, y_i:i\in [1,n]\big \}$ and the edge set $E(K_{1,n})=\big \{x_0y_i:i\in [1,n]\big \}$. We have the following vertex set-colors
\begin{equation}\label{eqa:step-a-kk}
F_{path}(x_0)=\big \{f(x_0)\big \}_{k+1};\quad F_{path}(y_i)=\big \{f(y_i),f(x_0)\big \}_{k+1},~y_i\in L(x_0)
\end{equation} Notice that $|F_{path}(u)|=2$ for $u\in V(T)\setminus \big \{x_0\big \}$, and $|F_{path}(x_0)|=1$.

\vskip 0.2cm

\textbf{Step A-5.} By the $W$-constraint we recolor the edges of the tree $T$ as follows:
\begin{equation}\label{eqa:step-a-ee}
F_{path}(uv)=[F_{path}(u)\cap F_{path}(v)]\cup \big \{W\langle a,b\rangle:~a\in F_{path}(u),~b\in F_{path}(v)\big \},~uv\in E(T)
\end{equation} since $F_{path}(u)\cap F_{path}(v)\neq \emptyset$.

\vskip 0.2cm

\textbf{Step A-6.} Return the $W$-constraint proper total set-coloring $F$ of the tree $T$, since $F_{path}(s)\neq F_{path}(t)$ for any pair of adjacent, or incident elements $s,t\in V(T)\cup E(T)$.

\vskip 0.4cm

By the PWCSC-algorithm-A based on the ordered-path, we present a result as follows:

\begin{thm}\label{thm:set-ordered-graceful-PWCSC-algorithm-A-ordered-path}
$^*$ If a tree $T$ admits a set-ordered $W$-constraint labeling, then $T$ admits a $W$-constraint proper total set-coloring $F$ obtained by the PWCSC-algorithm-A based on the ordered-path, such that $|F(u)\cap F(v)|=1$ and $|F(uv)|\geq 2$ for each edge $uv\in E(T)$, and $|F(x)|=2$ for $x\in V(T)\setminus \{x_0\}$, and $|F(x_0)|=1$.
\end{thm}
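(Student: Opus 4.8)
The plan is to verify directly that \textbf{PWCSC-algorithm-A}, applied to the labeling furnished by the hypothesis, outputs a set-coloring with exactly the stated numerical profile. First I would fix a set-ordered $W$-constraint labeling $f$ of $T$; this is precisely a labeling meeting the requirements of \textbf{Initialization-A}, so $f(uv)=W\langle f(u),f(v)\rangle$ for every edge, $f$ is injective on $V(T)$ (since $|f(V(T))|=|V(T)|$), and $f$ is injective on $E(T)$. Every assertion of the theorem is then read off the explicit assignments in Steps A-1--A-6. The first point is that the peeling is well-defined and exhausts the vertices: each pass replaces the current tree $T_{i-1}$ (of diameter $\geq 3$, hence with a longest path on $m_i\geq 4$ vertices, so $w^i_2\neq w^i_{m_i-1}$) by $T_i=T_{i-1}-[L(w^i_2)\cup L(w^i_{m_i-1})]$; the two leaf sets are disjoint (a common leaf would create a cycle), and each pass deletes at least the two path-ends $w^i_1,w^i_{m_i}$, so the vertex count strictly decreases and the process terminates with a tree of diameter $\leq 2$, i.e.\ the star $K_{1,n}$ with centre $x_0$. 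Since a deleted vertex never reappears, the leaf sets peeled over all passes together with $V(K_{1,n})$ partition $V(T)$; hence each vertex receives exactly one set-colour, and Step A-5 then colours each edge.

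For the set sizes: by Eq.(\ref{eqa:step-a-11}), Eq.(\ref{eqa:step-a-22}) and Eq.(\ref{eqa:step-a-kk}), every peeled leaf $x$ gets $F(x)=\{f(x),f(w)\}$ where $w\in\{w^i_2,w^i_{m_i-1}\}$ is its unique neighbour at the instant of deletion, and $f(x)\neq f(w)$ by injectivity; every non-centre star leaf gets a $2$-set of the same shape; and $F(x_0)=\{f(x_0)\}$. Thus $|F(x)|=2$ for $x\neq x_0$ and $|F(x_0)|=1$. The edge identities are the heart of the argument, and I would isolate the structural lemma: for every edge $uv$, exactly one endpoint --- say $u$ --- is deleted while the other endpoint $v$ is still present (or $u$ is a leaf of the final star and $v=x_0$), and at that instant $v$ is $u$'s unique surviving neighbour; hence $F(u)=\{f(u),f(v)\}$. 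Therefore $f(v)\in F(u)\cap F(v)$, while the other element $f(u)$ of $F(u)$ lies in no colour $F(v)$, which has the form $\{f(v),f(v^\ast)\}$ or $\{f(x_0)\}$ with $v^\ast$, resp.\ $x_0$, coloured later than $u$ and hence distinct from $u$; this gives $|F(u)\cap F(v)|=1$, and in particular $F(u)\cap F(v)\neq\emptyset$, so $F$ really is a $W$-constraint (sub)intersected total set-coloring in the sense of Definition~\ref{defn:W-join-type-se-total-coloring}. Feeding the lemma into Step A-5, $F(uv)\supseteq\{f(v)\}\cup\{W\langle f(u),f(v)\rangle\}=\{f(v),f(uv)\}$, and a short case check --- if $f(v)=f(uv)$ then the auxiliary value $W\langle f(v),f(v^\ast)\rangle=f(vv^\ast)$ (resp.\ $W\langle f(x_0),f(x_0)\rangle$) is a second, distinct element by edge-injectivity --- yields $|F(uv)|\geq 2$.

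It remains to check properness. Distinct vertices receive distinct sets: two $2$-sets $\{f(u),f(p_u)\}$ and $\{f(v),f(p_v)\}$ (writing $p_u$ for $u$'s peeling-partner) coincide only if $\{u,p_u\}=\{v,p_v\}$ as vertex pairs, i.e.\ $u=p_v$ and $v=p_u$, which is impossible under the strict deletion order, while $x_0$ is the unique vertex with a singleton colour. Adjacent edges $uv,uw$ receive distinct sets because $f(uv)\in F(uv)$ but $f(uw)\notin F(uv)$ --- every $W$-value occurring in $F(uv)$ has the form $W\langle a,b\rangle$ with $b\in F(v)$ and $f(w)\notin F(v)$ --- the residual numeric coincidences being dispatched by $f(uv)\neq f(uw)$; an incident vertex/edge pair differs since $|F(uv)|\geq 2$ excludes equality with $F(x_0)$, and equality with a $2$-set vertex colour would again force $f(uv)$ into a peeled vertex set, ruled out by the same injectivity bookkeeping.

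The main obstacle I anticipate is precisely this last bookkeeping, together with the small case analysis behind $|F(uv)|\geq 2$: because the operation $W\langle\cdot,\cdot\rangle$ is generic, $F(uv)$ may a priori contain unintended values, and one must argue --- splitting on whether the surviving endpoint is $x_0$ and on the form of $W\langle\cdot,\cdot\rangle$ --- that the elements $f(v)$ and $f(uv)$, plus one auxiliary $W$-value when they collide, suffice to separate everything and to pin down the cardinalities $1$ and $\geq 2$. These verifications are routine but somewhat tedious; by contrast the termination and partition steps are straightforward.
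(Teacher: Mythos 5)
Your proposal is correct and follows essentially the same route as the paper: the paper gives no separate proof of this theorem---it is presented as an immediate consequence of the PWCSC-algorithm-A, whose Step A-4 already records $|F(x)|=2$ for $x\neq x_0$ and $|F(x_0)|=1$ and whose Step A-6 simply asserts properness---and your write-up supplies exactly that verification, made explicit (termination and exhaustion of the peeling, the ``one endpoint is deleted while the other survives'' lemma, and $|F(u)\cap F(v)|=1$). The only spots where you remain a little loose (the collision case $f(uv)=f(x_0)$ in the bound $|F(uv)|\geq 2$, where edge-injectivity does not in fact apply to $W\langle f(x_0),f(x_0)\rangle$, and the adjacent-edge and incident-pair distinctness) are precisely the points the paper itself leaves unargued.
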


\begin{example}\label{exa:PWCSC-algorithm-A-ordered-path-11}
An example for understanding the above PWCSC-algorithm-A based on the ordered-path is shown in Fig.\ref{fig:VSET-coloring-algorithm-2}. A tree $T$ shown in Fig.\ref{fig:VSET-coloring-algorithm-2} (a) admits a set-ordered graceful labeling $f$, such that $\max f(X)=f(x_6)<f(y_1)=\min f(Y)$, where $X=\{x_i:i\in [1,6]\}$ and $Y=\{y_j:j\in [1,8]\}$. The last tree $T_4$ is a star $K_{1,4}$ shown in Fig.\ref{fig:VSET-coloring-algorithm-2} (e), $T_4$ admits a $W$-constraint proper total set-coloring. The tree $T_6$ admits a graceful proper total set-coloring $F$ satisfied Theorem \ref{thm:set-ordered-graceful-PWCSC-algorithm-A-ordered-path}. We have the following facts:

\textbf{Fact-1. }The tree $T$ has its own Topcode-matrix $T_{code}(T,f)$ as
{\small
\begin{equation}\label{eqa:PWCSC-algorithm-A-ordered-path-11}
\centering
{
\begin{split}
T_{code}(T,f)=\left(
\begin{array}{cccccccccccccc}
5 & 5 & 5 & 4 & 4 & 4 & 4 & 3 & 2 & 1 & 0 & 0 & 0\\
1 & 2 & 3 & 4 & 5 & 6 & 7 & 8 & 9 & 10 & 11 & 12 & 13\\
6 & 7 & 8 & 8 & 9 & 10 & 11 & 11 & 11 & 11 & 11 & 12 & 13
\end{array}
\right)_{3\times 13}=(X_T,E_T,Y_T)^T
\end{split}}
\end{equation}
}with
$${
\begin{split}
X_T=&(5, 5, 5, 4, 4, 4, 4, 3, 2, 1, 0, 0, 0)\\
=&(f(x_6),f(x_6),f(x_6),f(x_5),f(x_5),f(x_5),f(x_5),f(x_4),f(x_3),f(x_2),f(x_1),f(x_1),f(x_1))\\
E_T=&(1, 2, 3, 4, 5, 6, 7, 8, 9, 10, 11, 12, 13)\\
=&(f(x_6y_1),f(x_6y_2),f(x_6y_3),f(x_5y_3),f(x_5y_4),f(x_5y_5),f(x_5y_6),f(x_4y_6),f(x_3y_6),f(x_2y_6),\\
&f(x_1y_6),f(x_1y_7),f(x_1y_8))\\
Y_T=&(6, 7, 8, 8, 9, 10, 11, 11, 11, 11, 11, 12, 13)\\
=&(f(y_1),f(y_2),f(y_3),f(y_3),f(y_4),f(y_5),f(y_6),f(y_6),f(y_6),f(y_6),f(y_6),f(y_7),f(y_8))
\end{split}}
$$Clearly, $\max X_T=5<6=\min Y_T$, $|E_T|=[1,13]$.

\vskip 0.2cm

\textbf{Fact-2. }The tree $T$ has its own parameterized Topcode-matrix $P_{ara}(T,\theta)$ defined as
\begin{equation}\label{eqa:VSET-coloring-algorithm-matrix-33}
\centering
{
\begin{split}
P_{ara}(T,\theta)=k\cdot I\,^0+d\cdot T_{code}(T,f)=\left(
\begin{array}{cccccccccccccc}
X_P\\
E_P\\
Y_P
\end{array}
\right)=(X_P,E_P,Y_P)^T
\end{split}}
\end{equation} with the vertex $(k,d)$-colors and the edge $(k,d)$-colors
$${
\begin{split}
X_P=&(5d,~5d,~5d,~4d,~4d,~4d,~4d,~3d,~2d,~d,~0,~0,~0)\\
E_P=&(k+d,~ k+2d,~ k+3d,~ k+4d,~ k+5d,~ k+6d,~ k+7d,~ k+8d,~ k+9d,~ k+10d,~ \\
&k+11d,~k+12d,~ k+13d)\\
Y_P=&(k+6d,~ k+7d,~ k+8d,~ k+8d,~ k+9d,~ k+10d,~ k+11d,~ k+11d,~ k+11d,~ k+11d,\\
&k+11d,~ k+12d,~ k+13d)
\end{split}}$$

\vskip 0.2cm

\textbf{Fact-3. }By the graceful proper total set-coloring $F$, the tree $T_6$ has its own \emph{set-type Topcode-matrix} $S_{et}(T_6,F)=(X_{et},E_{et},Y_{et})^T$ with
\begin{equation}\label{eqa:set-coloring}
{
\begin{split}
X_{et}=&(F(x_6),F(x_6),F(x_6),F(x_5),F(x_5),F(x_5),F(x_5),F(x_4),F(x_3),F(x_2),\\
&F(x_1),F(x_1),F(x_1))\\
E_{et}=&(F(x_6y_1),F(x_6y_2),F(x_6y_3),F(x_5y_3),F(x_5y_4),F(x_5y_5),F(x_5y_6),F(x_4y_6),\\
&F(x_3y_6),F(x_2y_6),F(x_1y_6),F(x_1y_7),F(x_1y_8))\\
Y_{et}=&(F(y_1),F(y_2),F(y_3),F(y_3),F(y_4),F(y_5),F(y_6),F(y_6),F(y_6),F(y_6),F(y_6),\\
&F(y_7),F(y_8))
\end{split}}
\end{equation} where

(i) $F(x_1)=\{11,0\}_2$, $F(x_2)=\{11,1\}_2$, $F(x_3)=\{11,2\}_2$, $F(x_4)=\{11,3\}_2$, $F(x_5)=\{4\}$, $F(x_6)=\{8,5\}_2$;

(ii) $F(y_1)=\{5,6\}_1$, $F(y_2)=\{5,7\}_1$, $F(y_3)=\{4,8\}_3$, $F(y_4)=\{4,9\}_3$, $F(y_5)=\{4,10\}_3$, $F(y_6)=\{4,11\}_3$, $F(y_7)=\{0,12\}_1$, $F(y_8)=\{0,13\}_1$;

(iii) By the $W$-constraint Eq.(\ref{eqa:step-a-ee}), the edge colors are

$F(x_6y_1)=\{5\}\cup \{0,1,2,3\}$, $F(x_6y_2)=\{5\}\cup \{0,1,3\}$, $F(x_6y_3)=\{8\}\cup \{0,1,4,3\}$,

$F(x_5y_3)=\{4\}\cup \{0\}$, $F(x_5y_4)=\{4\}\cup \{0,5\}$, $F(x_5y_5)=\{4\}\cup \{0,6\}$, $F(x_5y_6)=\{4\}\cup \{0,7\}$,

$F(x_4y_6)=\{11\}\cup \{0,1,7,8\}$, $F(x_3y_6)=\{11\}\cup \{0,2,7,9\}$, $F(x_2y_6)=\{11\}\cup \{0,3,7,10\}$,

$F(x_1y_6)=\{11\}\cup \{0,4,7\}$, $F(x_1y_7)=\{0\}\cup \{1,11,12\}$ and $F(x_1y_8)=\{0\}\cup \{2,11,13\}$.

\vskip 0.2cm

\textbf{Fact-4. }By Eq.(\ref{eqa:VSET-coloring-algorithm-matrix-33}), we get a $(k,d)$-total set-coloring $F_{k,d}$ of the tree $T_6$ and the \emph{$(k,d)$-set-type Topcode-matrix} $S_{et}(T_6,F_{k,d})=(X^{et}_{k,d},E^{et}_{k,d},Y^{et}_{k,d})^T$ with
\begin{equation}\label{eqa:k-dset-coloring}
{
\begin{split}
X^{et}_{k,d}=&(F_{k,d}(x_6),F_{k,d}(x_6),F_{k,d}(x_6),F_{k,d}(x_5),F_{k,d}(x_5),F_{k,d}(x_5),F_{k,d}(x_5),F_{k,d}(x_4),F_{k,d}(x_3),\\
&F_{k,d}(x_2),F_{k,d}(x_1),F_{k,d}(x_1),F_{k,d}(x_1))\\
E^{et}_{k,d}=&(F_{k,d}(x_6y_1),F_{k,d}(x_6y_2),F_{k,d}(x_6y_3),F_{k,d}(x_5y_3),F_{k,d}(x_5y_4),F_{k,d}(x_5y_5),F_{k,d}(x_5y_6),\\
&F_{k,d}(x_4y_6),F_{k,d}(x_3y_6),F_{k,d}(x_2y_6),F_{k,d}(x_1y_6),F_{k,d}(x_1y_7),F_{k,d}(x_1y_8))\\
Y^{et}_{k,d}=&(F_{k,d}(y_1),F_{k,d}(y_2),F_{k,d}(y_3),F_{k,d}(y_3),F_{k,d}(y_4),F_{k,d}(y_5),F_{k,d}(y_6),F_{k,d}(y_6),F_{k,d}(y_6),\\
&F_{k,d}(y_6),F_{k,d}(y_6),F_{k,d}(y_7),F_{k,d}(y_8))
\end{split}}
\end{equation} where the vertex set-$(k,d)$-colors and the edge set-$(k,d)$-colors are

(1) $F_{k,d}(x_1)=\{k+11d,0\}_2$, $F_{k,d}(x_2)=\{k+11d,d\}_2$, $F_{k,d}(x_3)=\{k+11d,2d\}_2$, $F_{k,d}(x_4)=\{k+11d,3d\}_2$, $F_{k,d}(x_5)=\{4d\}$, $F_{k,d}(x_6)=\{k+8d,5d\}_2$;

(2) $F_{k,d}(y_1)=\{5d,k+6d\}_1$, $F_{k,d}(y_2)=\{5d,k+7d\}_1$, $F_{k,d}(y_3)=\{4d,k+8d\}_3$, $F_{k,d}(y_4)=\{4d,k+9d\}_3$, $F_{k,d}(y_5)=\{4d,k+10d\}_3$, $F_{k,d}(y_6)=\{4d,k+11d\}_3$, $F_{k,d}(y_7)=\{0,k+12d\}_1$, $F_{k,d}(y_8)=\{0,k+13d\}_1$;

(3) By the $W$-constraint Eq.(\ref{eqa:step-a-ee}), the edge $(k,d)$-colors are

$F_{k,d}(x_6y_1)=\{5d\}\cup \{0,k+d,2d,k+3d\}$, $F_{k,d}(x_6y_2)=\{5d\}\cup \{0,k+2d,k+d,k+3d\}$,

$F_{k,d}(x_6y_3)=\{k+8d\}\cup \{0,d,k+3d,k+4d\}$, $F_{k,d}(x_5y_3)=\{4d\}\cup \{0\}$,

$F_{k,d}(x_5y_4)=\{4d\}\cup \{0,k+5d\}$, $F_{k,d}(x_5y_5)=\{4d\}\cup \{0,k+6d\}$,

$F_{k,d}(x_5y_6)=\{4d\}\cup \{0,k+7d\}$, $F_{k,d}(x_4y_6)=\{k+11d\}\cup \{0,d,k+7d,k+8d\}$,

$F_{k,d}(x_3y_6)=\{k+11d\}\cup \{0,2d,k+7d,k+9d\}$, $F_{k,d}(x_2y_6)=\{k+11d\}\cup \{0,3d,k+7d,k+10d\}$,

$F_{k,d}(x_1y_6)=\{k+11d\}\cup \{0,4d,k+7d\}$, $F_{k,d}(x_1y_7)=\{0\}\cup \{k+d,k+11d,k+12d\}$, and

$F_{k,d}(x_1y_8)=\{0\}\cup \{2d,k+11d,k+13d\}$.

\vskip 0.2cm

\textbf{Fact-5. }As the tree $T$ shown in Fig.\ref{fig:VSET-coloring-algorithm-2} (a) is selected as a \emph{topological public-key}, then the tree $T_6$ shown in Fig.\ref{fig:VSET-coloring-algorithm-2} (g) is just a \emph{topological private-key}.

Thereby, the bytes of a number-based string $D_T$ induced from the Topcode-matrix $T_{code}(T,f)$ is shorter than that of a number-based string $D_{T_6}$ from the set-type Topcode-matrix $S_{et}(T_6,F)$ defined in Fact-3, or the $(k,d)$-set-type Topcode-matrix $S_{et}(T_6,F_{k,d})$ defined in Fact-4, since they are related with the different ordered paths of the trees $T$ and $T_6$ according to the the PWCSC-algorithm-A based on the ordered-path. \qqed
\end{example}

\begin{figure}[h]
\centering
\includegraphics[width=16.4cm]{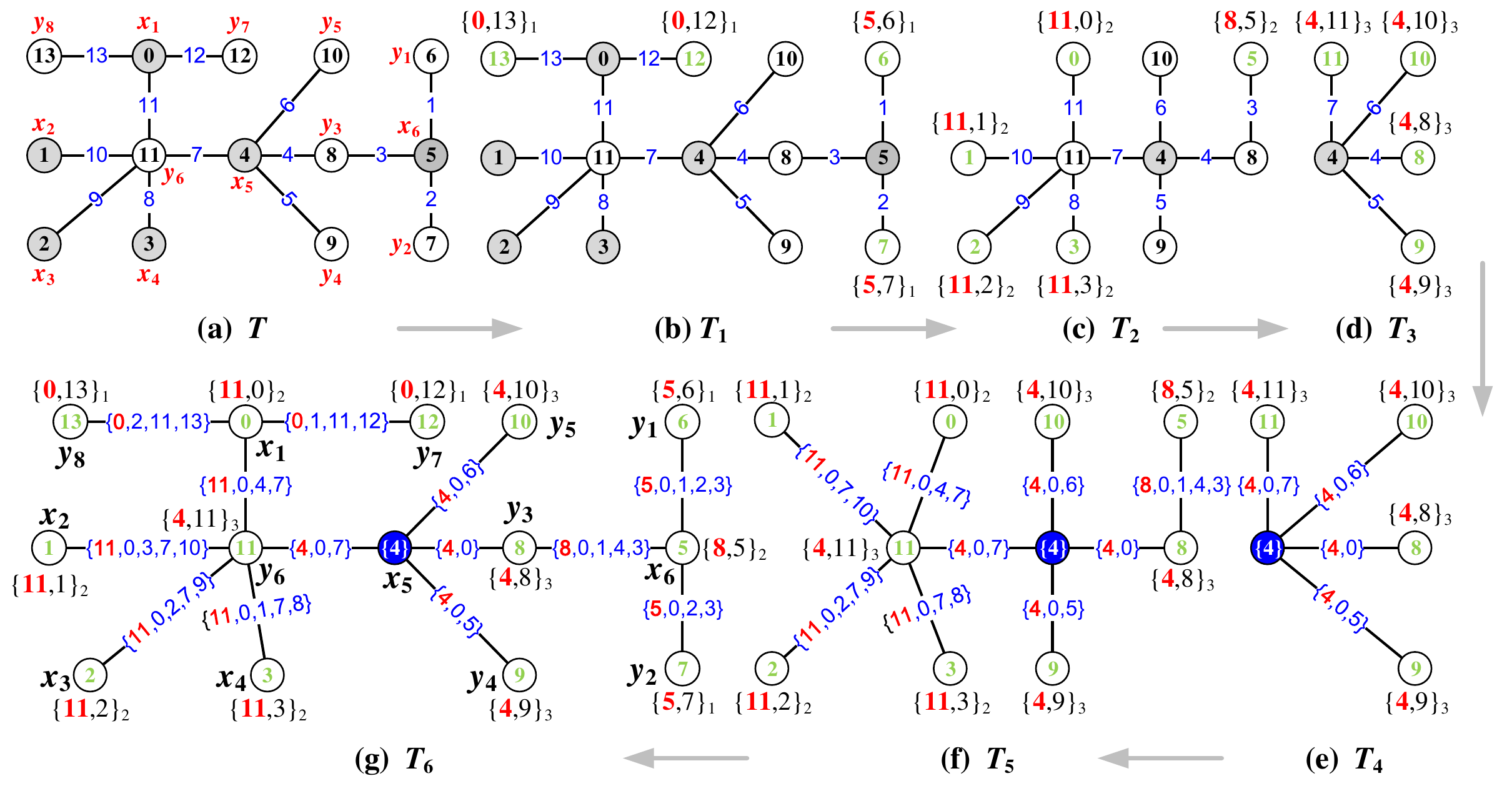}\\
\caption{\label{fig:VSET-coloring-algorithm-2}{\small An example for understanding the PWCSC-algorithm-A based on the ordered-path.}}
\end{figure}

\begin{thm}\label{thm:n-times-PWCSC-algorithm-A-ordered-path}
$^*$ After $n$ times of doing the PWCSC-algorithm-A based on the ordered-path to a tree $T$ admitting a set-ordered $W$-constraint labeling, we get a $W$-constraint set-coloring $F_n$ of the tree $T$ and $|F_n(u)\cap F_n(v)|\geq n=\lfloor \frac{D(T)}{2}\rfloor$ for each edge $uv\in E(T)$ and $F_n(x)\neq F_n(y)$ for distinct vertices $x,y\in V(T)$, where $D(T)$ is the diameter of the tree $T$.
\end{thm}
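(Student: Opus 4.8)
The plan is to induct on the number $n$ of passes of the PWCSC-algorithm-A based on the ordered-path, using Theorem~\ref{thm:set-ordered-graceful-PWCSC-algorithm-A-ordered-path} as the base case $n=1$: a single pass already yields a $W$-constraint set-coloring $F_1$ with $|F_1(u)\cap F_1(v)|=1$ for every edge $uv\in E(T)$ and $F_1(x)\neq F_1(y)$ for distinct vertices $x,y\in V(T)$. (Here $n$ is read as the number of such passes; within one pass the algorithm carries out $\lfloor D(T)/2\rfloor$ leaf-removal rounds along successively chosen longest paths, since stripping the leaves at both ends of a longest path lowers the diameter by two, which is why the bound is stated as $\lfloor D(T)/2\rfloor$.) For the inductive step I would, before the $(j{+}1)$-st pass, replace the set-ordered $W$-constraint labeling $f$ driving Theorem~\ref{thm:set-ordered-graceful-PWCSC-algorithm-A-ordered-path} by an equivalent one whose vertex- and edge-colors lie on an integer interval disjoint from and above all colors used in the first $j$ passes — such equivalent set-ordered $W$-constraint labelings are abundant, e.g.\ by Theorem~\ref{thm:connections-several-labelings}, Theorem~\ref{thm:any-tree-k-d-graceful-total-coloring} and Theorem~\ref{thm:different-k-d-harmonious-colorings} — run the algorithm on it to obtain a set-coloring $F^{(j+1)}$, and put $F_{j+1}(w)=F_j(w)\cup F^{(j+1)}(w)$ for each $w\in V(T)\cup E(T)$.

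The inductive step then reduces to three checks. First, intersection growth: for every edge $uv$ the sets $F_j(u)\cap F_j(v)$ and $F^{(j+1)}(u)\cap F^{(j+1)}(v)$ are disjoint because they live in disjoint ranges, so $|F_{j+1}(u)\cap F_{j+1}(v)|=|F_j(u)\cap F_j(v)|+|F^{(j+1)}(u)\cap F^{(j+1)}(v)|\geq j+1$. Second, the $W$-constraint is inherited: since $F^{(j+1)}$ satisfies the $W$-constraint of Definition~\ref{defn:kd-w-type-colorings} on its own range, each edge $uv$ still carries a representative $c_{uv}\in F^{(j+1)}(uv)\subseteq F_{j+1}(uv)$ realized as $W\langle a,b\rangle$ with $a\in F^{(j+1)}(u)\subseteq F_{j+1}(u)$ and $b\in F^{(j+1)}(v)\subseteq F_{j+1}(v)$. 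Third, vertex-distinctness: if $F_{j+1}(x)=F_{j+1}(y)$, intersecting both sides with the range of the first pass gives $F_1(x)=F_1(y)$, hence $x=y$. Iterating $n$ times produces $F_n$ with $|F_n(u)\cap F_n(v)|\geq n$ on every edge and with $F_n$ injective on vertices, which is the claim.

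The hard part will be the range bookkeeping behind the second check: each fresh pass must use a set-ordered $W$-constraint labeling that both remains such a labeling after being moved to a high integer interval and produces edge-colors lying outside all earlier ranges, and this has to hold for every $W$ admitted by Theorem~\ref{thm:coloring-closure-kd-total-colorings} (graceful, harmonious, the magic-types, and their odd-edge variants). For the graceful and harmonious types a dilation/translation of the color set suffices, while for the magic-types one must in addition track how the magic constant shifts; in each case this is exactly the kind of closure already exploited in the proofs of Theorem~\ref{thm:equivalent-k-d-total-colorings} and Theorem~\ref{thm:coloring-closure-kd-total-colorings}, so I expect it to go through uniformly, but it is the step needing the most care. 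A minor secondary point is that the theorem only asserts $F_n$ to be a $W$-constraint set-coloring, not a proper one, so no edge-adjacency separation must be preserved across passes; the only separation condition is injectivity on vertices, and that survives the union automatically by the third check.
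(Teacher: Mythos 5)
Be aware first that the paper contains no proof of Theorem~\ref{thm:n-times-PWCSC-algorithm-A-ordered-path} at all: its only support is the worked example of Fig.~\ref{fig:VSET-coloring-algorithm-3}, in which the second application of the PWCSC-algorithm-A is performed on the set-colored output of the first run (the tree of Fig.~\ref{fig:VSET-coloring-algorithm-2}\,(g)). So ``doing the algorithm $n$ times'' there means iterating it on its own output, so that each vertex's set absorbs the set of its ordered-path parent, and $n=\lfloor D(T)/2\rfloor$ records the depth of the peeling structure. Your construction is genuinely different: after the first pass you re-run the algorithm on fresh, range-shifted copies of the \emph{original} labeling and then take unions $F_{j+1}(w)=F_j(w)\cup F^{(j+1)}(w)$. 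The union step and the re-choice of input labeling are external to the algorithm, so what you prove is a statement about a modified procedure rather than about the iterates the theorem (as illustrated) refers to; the disappearance of any role for $D(T)$ in your argument is a symptom of this. To be fair, your repair may well be needed: under the literal iteration the final star's center $x_0$ keeps the singleton set $\{f(x_0)\}$ in every round (Step A-4), so for an edge $x_0y$ the intersection never visibly exceeds $1$, and the bound $\geq n$ is not delivered by the iteration as written --- a point the paper simply does not address. If you keep your route, you should say explicitly that you are proving the conclusion for a specified re-interpretation of ``$n$ times''.

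Within your own construction there are two places to tighten. The range bookkeeping is more delicate than ``a dilation/translation of the color set suffices'': for the graceful-type constraints the edge colors are differences of vertex colors, so translating every vertex color leaves the edge colors fixed in $[1,q]$; you must dilate (say $f'(x)=a_{j+1}+d_{j+1}f(x)$ on $Y$ and $d_{j+1}f(x)$ on $X$, with $d_{j+1}$ chosen larger than everything already used), and then re-verify set-orderedness, the $W$-constraint equation, and disjointness of both the new vertex range and the new edge range $\{d_{j+1}f(uv)\}$ from all earlier ranges, including the $X$-vertex colored $0$, which is fixed by pure dilation; for the harmonious case the modulus must be rescaled along with the colors. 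Secondly, your base case quietly uses that $F_1$ is injective on all vertices, which is a little more than Theorem~\ref{thm:set-ordered-graceful-PWCSC-algorithm-A-ordered-path} literally asserts (properness on adjacent/incident elements); it is true, because $F_1(x)=\{f(x),f(p(x))\}$ with $f$ injective and the ordered-path parent relation acyclic and $F_1(x_0)$ a singleton, but that one-line argument should be made explicit since your third check rests on it.
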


See an example shown in Fig.\ref{fig:VSET-coloring-algorithm-3} for understanding Theorem \ref{thm:n-times-PWCSC-algorithm-A-ordered-path}. By Example \ref{exa:PWCSC-algorithm-A-ordered-path-11}, we present a theorem as follows:

\begin{figure}[h]
\centering
\includegraphics[width=16.4cm]{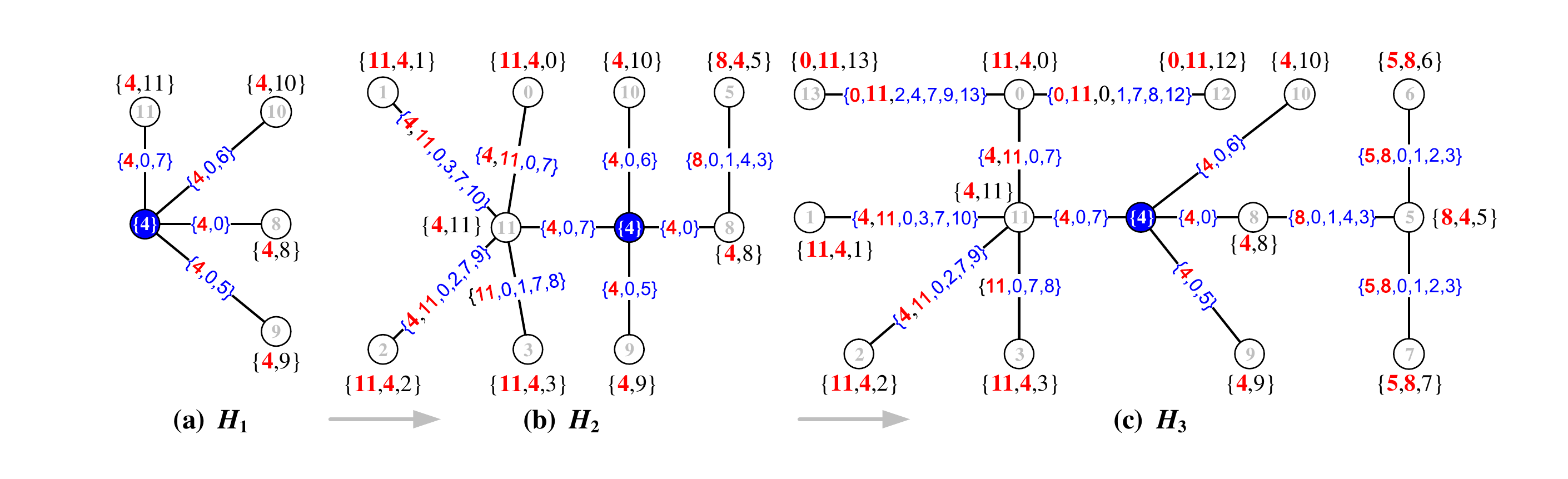}\\
\caption{\label{fig:VSET-coloring-algorithm-3}{\small The set-colored tree $H_3$ obtained by doing the PWCSC-algorithm-A based on the ordered-path to $D_6$ shown in Fig.\ref{fig:VSET-coloring-algorithm-2} (g).}}
\end{figure}

\begin{thm}\label{thm:set-ordered-w-cond-set-co}
$^*$ If a tree admits a set-ordered $W$-constraint labeling, then it admits:

(i) a $W$-constraint proper total set-coloring;

(ii) a $W$-constraint proper $(k,d)$-total coloring; and

(ii) a $W$-constraint proper $(k,d)$-total set-coloring;
\end{thm}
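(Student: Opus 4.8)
The plan is to prove the three parts in order, each one piggy-backing on machinery already developed in the excerpt, so that essentially no new constructions are needed.

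\textbf{Part (i).} This is immediate. A set-ordered $W$-constraint labeling of a tree $T$ is in particular a $W$-constraint labeling $f$ with $|f(V(T))|=|V(T)|$, with $f(uv)\neq f(xy)$ for distinct edges, and with $f(uv)=W\langle f(u),f(v)\rangle$ for every edge $uv$; these are exactly the ``Initialization-A'' hypotheses of the PWCSC-algorithm-A based on the ordered path. Hence Theorem~\ref{thm:set-ordered-graceful-PWCSC-algorithm-A-ordered-path} applies verbatim and outputs a $W$-constraint proper total set-coloring $F$ of $T$ (with $|F(u)\cap F(v)|=1$ and $|F(uv)|\ge 2$ on each edge, $|F(x)|=2$ off the center vertex). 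So (i) requires only a citation.

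\textbf{Part (ii).} First I would note that the set-ordered $W$-constraint labeling $f$ is already a \emph{proper} total coloring of $T$ in the sense of Remark~\ref{rem:kd-w-tupe-colorings-definition}(2): $f$ is injective on $V(T)$ and on $E(T)$, so endpoints of an edge and adjacent edges receive distinct colors. Writing $V(T)=X\cup Y$ for the bipartition witnessing set-orderedness, I apply Definition~\ref{defn:w-type-coloring-labeling-kd} to pass from $f$ to the parameterized coloring $F_{k,d}$ with $P_{ara}(T,F_{k,d})=k\cdot I\,^0+d\cdot T_{code}(T,f)$; concretely $F_{k,d}(x)=d\,f(x)$ for $x\in X$, $F_{k,d}(y)=k+d\,f(y)$ for $y\in Y$, and $F_{k,d}(uv)=k+d\,f(uv)$ on edges. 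Properness is preserved: adjacent edges stay distinct because $a\mapsto k+da$ is injective, and for an edge $uv$ with $u\in X$, $v\in Y$ one has $F_{k,d}(u)=d\,f(u)\le d\max f(X)<d\min f(Y)\le k+d\,f(v)=F_{k,d}(v)$ using $\max f(X)<\min f(Y)$. The $W$-constraint $F_{k,d}(uv)=W\langle F_{k,d}(u),F_{k,d}(v)\rangle$ holds because the difference/sum/modular relations of Definition~\ref{defn:kd-w-type-colorings} are compatible with the affine substitution $a\mapsto \varepsilon k+da$, $\varepsilon\in\{0,1\}$, applied rowwise to the Topcode-matrix; this is precisely the kind of computation already carried out in the proof of Theorem~\ref{thm:equivalent-k-d-total-colorings}. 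Thus $F_{k,d}$ is a $W$-constraint proper $(k,d)$-total coloring.

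\textbf{Part (iii).} Here I would run the PWCSC-algorithm-A a second time, but feed it the proper $(k,d)$-total coloring $F_{k,d}$ of part (ii) in place of the integer labeling $f$: $F_{k,d}$ still has distinct vertex colors, distinct edge colors and satisfies the $W$-constraint, so ``Initialization-A'' is met again, and the algorithm returns a $W$-constraint proper total set-coloring valued in $(k,d)$-parameters. Equivalently — and this is the concrete description I would present — one parameterizes the set-colors of the total set-coloring $F$ from part (i) element by element according to the origin of each member: a member coming from an $X$-vertex value $a$ becomes $da$, one from a $Y$-vertex value $a$ becomes $k+da$, one from an edge value $a$ becomes $k+da$, after which the edge set-colors are recomputed by Eq.~(\ref{eqa:step-a-ee}) using the parameterized members. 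This is exactly the passage from Fact-3 to Fact-4 in Example~\ref{exa:PWCSC-algorithm-A-ordered-path-11}. The output $F^{(k,d)}$ is a total set-coloring; it is $W$-constrained because every edge still carries a member realizing $W\langle\cdot,\cdot\rangle$ by the very form of Eq.~(\ref{eqa:step-a-ee}); and it is proper because distinct set-colors of $F$ remain distinct under the injective, side-respecting parameterization, so $F^{(k,d)}(s)\neq F^{(k,d)}(t)$ for adjacent vertices, adjacent edges and the incident pairs required by Remark~\ref{rem:kd-w-tupe-colorings-definition}(2).

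\textbf{Main obstacle.} The only genuinely delicate step is the bookkeeping in part (iii): a set-color produced by the PWCSC-algorithm-A mixes vertex-origin and edge-origin integers inside one set, so I must check that the ``$+k$ shift'' is attached consistently — that the member of $F(uv)$ realizing $W\langle f(u),f(v)\rangle$ is sent to the member of $F^{(k,d)}(uv)$ realizing $W\langle F_{k,d}(u),F_{k,d}(v)\rangle$, and that no collision $F^{(k,d)}(s)=F^{(k,d)}(t)$ is accidentally created among the shifted elements. I expect this to reduce to the same affine-substitution compatibility used in part (ii), now performed inside each set rather than on single colors: routine, but the place where the write-up must be careful. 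Everything else is a direct appeal to Theorem~\ref{thm:set-ordered-graceful-PWCSC-algorithm-A-ordered-path}, Definition~\ref{defn:w-type-coloring-labeling-kd}, and Example~\ref{exa:PWCSC-algorithm-A-ordered-path-11}.
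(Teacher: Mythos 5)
Your proposal is correct and follows essentially the same route the paper intends: the paper states this theorem without a separate proof, deriving it from Theorem~\ref{thm:set-ordered-graceful-PWCSC-algorithm-A-ordered-path} (your part (i)), the parameterization $P_{ara}=k\cdot I\,^0+d\cdot T_{code}$ of Definition~\ref{defn:w-type-coloring-labeling-kd} (your part (ii)), and the element-wise $(k,d)$-parameterization of the set-colors exactly as in the passage from Fact-3 to Fact-4 of Example~\ref{exa:PWCSC-algorithm-A-ordered-path-11} (your part (iii)). Your added care about the side-respecting ``$+k$ shift'' and recomputing edge set-colors via Eq.~(\ref{eqa:step-a-ee}) is consistent with, and slightly more explicit than, what the paper records.
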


\vskip 0.4cm

\textbf{$^*$ PWCSC-algorithm-B based on the level-leaf.}

\textbf{Initialization-B.} Suppose that $T$ is a tree admitting a $W$-constraint labeling $f$, such that $|f(V(T))|=|V(T)|$, and the induced edge color $f(uv)$ for each edge $uv\in E(T)$ holds the $W$-constraint $f(uv)=W\langle f(u),f(v)\rangle $.

\vskip 0.2cm

\textbf{Step B-1.} Let $L(T)$ be the set of leaves of $T$. Define a set-coloring $F_{le}$ for $T$ as: $F_{le}(x)=\{f(x),f(u)\}$ for $x\in L(T)$ and $xu\in E(T)$.

\vskip 0.2cm

\textbf{Step B-2.} Let $L(T_1)$ be the set of leaves of $T_1$, where $T_1=T-L(T)$. Color each leaf $y\in L(T_1)$ with $F_{le}(y)=\{f(y),f(v)\}$ if the edge $yv\in E(T)$.

\vskip 0.2cm

\textbf{Step B-3.} After doing $k-1$ times Step B-2, if tree $T_k=T_{k-1}-L(T_{k-1})$ has its own diameter $D(T_k)\geq 3$ goto Step B-2. Otherwise, $T_k$ is a star $K_{1,n}$ with the vertex set $V(K_{1,n})=\{x_0, y_i:i\in [1,n]\}$ and the edge set $E(K_{1,n})=\{x_0y_i:i\in [1,n]\}$. Color each leaf $y_i\in L(T_k)$ with $F_{le}(y_i)=\{f(y_i),f(x_0)\}$ and $F_{le}(x_0)=\{f(x_0)\}$.

\vskip 0.2cm

\textbf{Step B-4.} By the $W$-constraint we color each edge $uv\in E(T)$ with
\begin{equation}\label{eqa:w-condition-step-b-ee}
F_{le}(uv)=[F_{le}(u)\cap F_{le}(v)]\cup \{W\langle a,b\rangle:~a\in F_{le}(u),~b\in F_{le}(v)\}
\end{equation}since $F_{le}(u)\cap F_{le}(v)\neq \emptyset$.

\vskip 0.2cm

\textbf{Step B-5.} Return the $W$-constraint proper total set-coloring $F_{le}$ of the tree $T$ by the conditions in Initialization.

\vskip 0.4cm

\textbf{$^*$ PWCSC-algorithm-C based on the neighbor-vertex.}

\textbf{Initialization-C.} Suppose that a tree $T$ admits a $W$-constraint labeling $f$ holding $|f(V(T))|=|V(T)|$, and the induced edge color $f(uv)$ for each edge $uv\in E(T)$ holds the $W$-constraint $f(uv)=W\langle f(u),f(v)\rangle $.

\vskip 0.2cm

\textbf{Step C-1.} Define a total set-coloring $F_{nv}$ as: $F_{nv}(x)=\{f(y):y\in N(x)\}$ for each vertex $x\in V(T)$. Clearly, $F_{nv}(x)\neq F_{nv}(u)$ for distinct $x,u\in V(T)$, since $|f(V(T))|=|V(T)|$.

\vskip 0.2cm

\textbf{Step C-2.} By the $W$-constraint we color the edges of the tree $T$ as
\begin{equation}\label{eqa:w-condition-step-c2-ee}
F_{nv}(uv)=[F_{nv}(u)\cap F_{nv}(v)]\cup \{W\langle a,b\rangle:~a\in F_{nv}(u),~b\in F_{nv}(v)\},~uv\in E(T)
\end{equation}since $F_{nv}(u)\cap F_{nv}(v)\neq \emptyset$.

\vskip 0.2cm

\textbf{Step C-3.} Return the $W$-constraint proper total set-coloring $F_{nv}$ of the tree $T$.

\vskip 0.4cm

\textbf{$^*$ PWCSC-algorithm-D based on the neighbor-edge.}

\textbf{Initialization-D.} Suppose that $T$ is a tree admitting a $W$-constraint total labeling $f$ holding $f(uv)\neq f(xy)$ for any pair of edges $uv$ and $xy$ of $T$, and each edge $uv\in E(T)$ holds the $W$-constraint $f(uv)=W\langle f(u),f(v)\rangle $.

\vskip 0.2cm

\textbf{Step D-1.} Define a total set-coloring $F_{ne}$ by setting $F_{ne}(x)=\{f(xz):z\in N(x)\}$ for each vertex $x\in V(T)$, clearly, $F_{ne}(x)\neq F_{ne}(u)$ for distinct vertices $x,u\in V(T)$, since $|f(E(T))|=|E(T)|$.

\vskip 0.2cm

\textbf{Step D-2.} By the $W$-constraint we color the edges of the tree $T$ as
\begin{equation}\label{eqa:w-condition-step-c2-ee}
F_{ne}(uv)=[F_{ne}(u)\cap F_{ne}(v)]\cup \{W\langle a,b\rangle:~a\in F_{ne}(u),~b\in F_{ne}(v)\},~uv\in E(T)
\end{equation}since $F_{ne}(u)\cap F_{ne}(v)\neq \emptyset$.

\vskip 0.2cm

\textbf{Step D-3.} Return the $W$-constraint proper total set-coloring $F_{ne}$ of the tree $T$.

\vskip 0.4cm

\textbf{$^*$ PWCSC-algorithm-E based on the neighbor-edge-vertex.}

\textbf{Initialization-E.} Suppose that $T$ is a tree admitting a $W$-constraint total labeling $f$ holding $f(uv)\neq f(xy)$ for any pair of distinct edges $uv,xy\in E(T)$, and each edge $uv\in E(T)$ holds the $W$-constraint $f(uv)=W\langle f(u),f(v)\rangle $.

\vskip 0.2cm

\textbf{Step E-1.} Define a total set-coloring $F_{nve}$ by setting $F_{nve}(x)=\{f(y):y\in N(x)\}\cup \{f(xz):z\in N(x)\}$ for each $x\in V(T)$, clearly, $F_{nve}(x)\neq F_{nve}(u)$ for distinct vertices $x,u\in V(T)$, since $|f(E(T))|=|E(T)|$.

\vskip 0.2cm

\textbf{Step D-2.} By the $W$-constraint we color the edges of the tree $T$ as
\begin{equation}\label{eqa:w-condition-step-c2-ee}
F_{nve}(uv)=[F_{nve}(u)\cap F_{nve}(v)]\cup \{W\langle a,b\rangle:~a\in F_{nve}(u),~b\in F_{nve}(v)\},~uv\in E(T)
\end{equation}since $F_{nve}(uv) \subseteq F_{nve}(u)\cap F_{nve}(v)$.

\vskip 0.2cm

\textbf{Step E-3.} Return the $W$-constraint proper total set-coloring $F_{nve}$ of the tree $T$.

\subsubsection{Graph homomorphisms with parameterized set-colorings}

Since a connected non-tree $(p,q)$-graph $G$ can be vertex-split into a tree $T$ of $q+1$ vertices by doing the vertex-splitting tree-operation to $G$, so we have a set $T_{ree}(G)$ of trees obtained from vertex-splitting $G$, such that each tree $T\in T_{ree}(G)$ is graph homomorphism into $G$, that is $T\rightarrow G$. So, we can use a connected non-tree $(p,q)$-graph $G$ and its tree set $T_{ree}(G)$ in asymmetric cryptosystem, and make number-based strings generated from the graph $G$ and the tree set $T_{ree}(G)$.

\vskip 0.4cm

\textbf{Situation-A.} Suppose that a connected non-tree $(p,q)$-graph $G$ is not colored by any coloring. Notice that each tree $T\in T_{ree}(G)$ admits each one of the colorings: graceful $(k,d)$-total coloring, harmonious $(k,d)$-total coloring, (odd-edge) edge-magic $(k,d)$-total coloring, (odd-edge) graceful-difference $(k,d)$-total coloring, (odd-edge) edge-difference $(k,d)$-total coloring, (odd-edge) felicitous-difference $(k,d)$-total coloring and edge-antimagic $(k,d)$-total coloring introduced in Definition \ref{defn:kd-w-type-colorings}, Definition \ref{defn:odd-edge-W-type-total-labelings-definition}, Definition \ref{defn:kd-w-type-coloring-transfoemations} and Theorem \ref{thm:equivalent-k-d-total-colorings}.

Suppose that a tree $T\in T_{ree}(G)$ admits a $W$-constraint coloring/labeling $f$. By the PWCSC-algorithms introduced above, the tree $T$ admits a $W$-constraint set-coloring $F_f$ induced from the $W$-constraint coloring/labeling $f$.

Using the graph homomorphism $T\rightarrow G$ defined on a vertex mapping $\varphi:V(T)\rightarrow V(G)$, such that $\varphi(u)\varphi(v)\in E(G)$ for each edge $uv\in E(T)$, thereby, this graph $G$ admits a $W$-constraint set-coloring $F^*_f$ defined by

(i) $F^*_f(w)=\{F_f(x): \varphi(x)=w,~x\in V(T)\}$ for each vertex $w\in V(G)$, and

(ii) $F^*_f(\varphi(u)\varphi(v))=F_f(uv)$ for each edge $uv \in E(T)$.

\vskip 0.2cm

\textbf{Analysis of complexity of Situation-A:}

\begin{asparaenum}[\textbf{\textrm{Complexity-A.}}1.]
\item \textbf{Determining} the tree set $T_{ree}(G)$ obtained by vertex-splitting the connected non-tree $(p,q)$-graph $G$ into trees will meet the Subgraph Isomorphic Problem, although each tree $T\in T_{ree}(G)$ has exactly $q$ edges.
\item There is no way to know \textbf{how many} $W$-constraint colorings/labelings admitted by each tree $T\in T_{ree}(G)$, and moreover \textbf{no algorithm} can realize all colorings/labelings holding a fixed $W$-constraint for each tree $T\in T_{ree}(G)$.
\end{asparaenum}

\vskip 0.4cm

\textbf{Situation-B.} Suppose that a connected non-tree $(p,q)$-graph $G$ admits a $W$-constraint coloring/labeling $g$, so each tree $H\in T_{ree}(G)$ admits a $W$-constraint coloring/labeling $g^*$ induced by the $W$-constraint coloring/labeling $g$.

Since there is a vertex mapping $\theta:V(H)\rightarrow V(G)$ with $\theta(u)\theta(v)\in E(G)$ for each edge $uv\in E(H)$, the \emph{colored graph homomorphism} $H\rightarrow G$ means a \emph{Topcode-matrix homomorphism}
\begin{equation}\label{eqa:topcode-matrix-homomorphism}
T_{code}(H,g^*)\rightarrow T_{code}(G,g)
\end{equation}Thereby, we have two graph sets $S_{graph}[T_{code}(H,g^*)]$ and $S_{graph}[T_{code}(G,g)]$, such that

(a) Each graph $J\in S_{graph}[T_{code}(H,g^*)]$ admits a $W$-constraint coloring/labeling $f_J$ holding $T_{code}(J,f_J)=T_{code}(H,g^*)$; and

(b) each graph $I\in S_{graph}[T_{code}(G,g)]$ holds $T_{code}(I,f_I)=T_{code}(G,g)$ for a $W$-constraint coloring/labeling $f_I$ admitted by $I$.

Hence, we get a \emph{graph-set homomorphism}
\begin{equation}\label{eqa:graph-set-homomorphism}
S_{graph}[T_{code}(H,g^*)]\rightarrow S_{graph}[T_{code}(G,g)]
\end{equation}

\vskip 0.2cm

\textbf{Analysis of complexity of Situation-B:}

\begin{asparaenum}[\textbf{\textrm{Complexity-B.}}1.]
\item \textbf{Determining} two graph sets $S_{graph}[T_{code}(H,g^*)]$ and $S_{graph}[T_{code}(G,g)]$ will meet the Subgraph Isomorphic Problem, a NP-hard problem.
\item If a graph $I\in S_{graph}[T_{code}(G,g)]$ is a \emph{public-key}, \textbf{no algorithm} is for finding a \emph{private-key} $J\in S_{graph}[T_{code}(H,g^*)]$, such that $J\rightarrow I$ is just a colored graph homomorphism.
\end{asparaenum}

\begin{figure}[h]
\centering
\includegraphics[width=16.4cm]{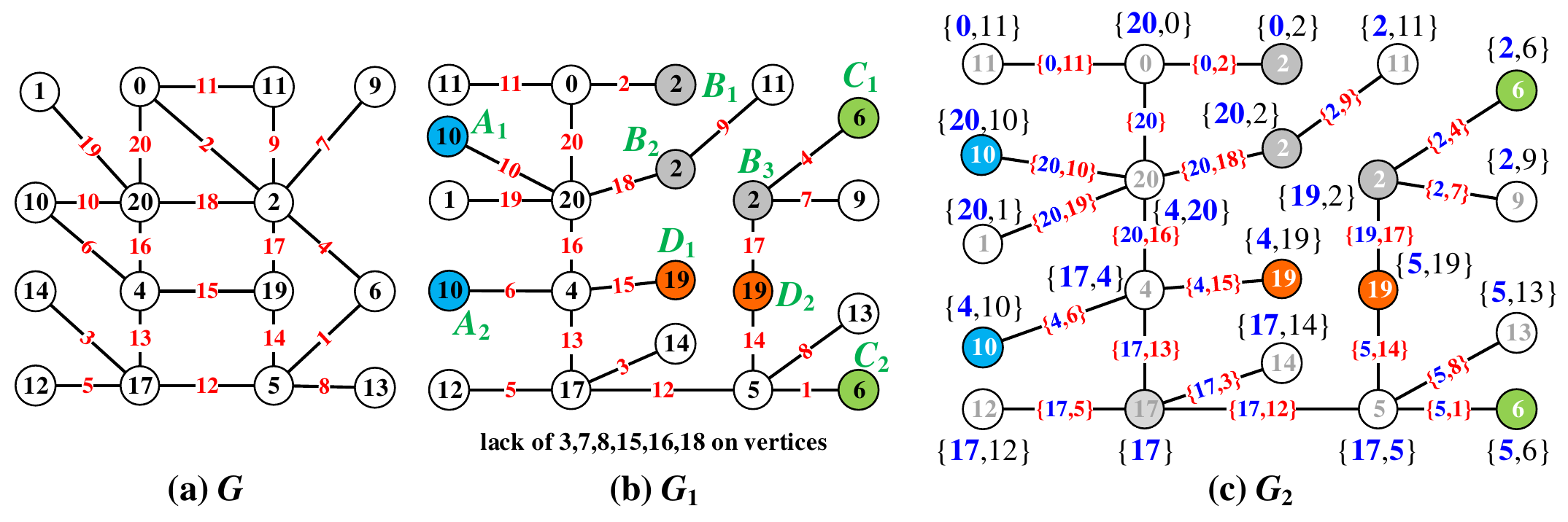}\\
\caption{\label{fig:PWCSC-algorithm-cycle-11}{\small An example for understanding the Situation-B, cited from \cite{Yao-Ma-arXiv-2201-13354v1}.}}
\end{figure}

\begin{figure}[h]
\centering
\includegraphics[width=13.4cm]{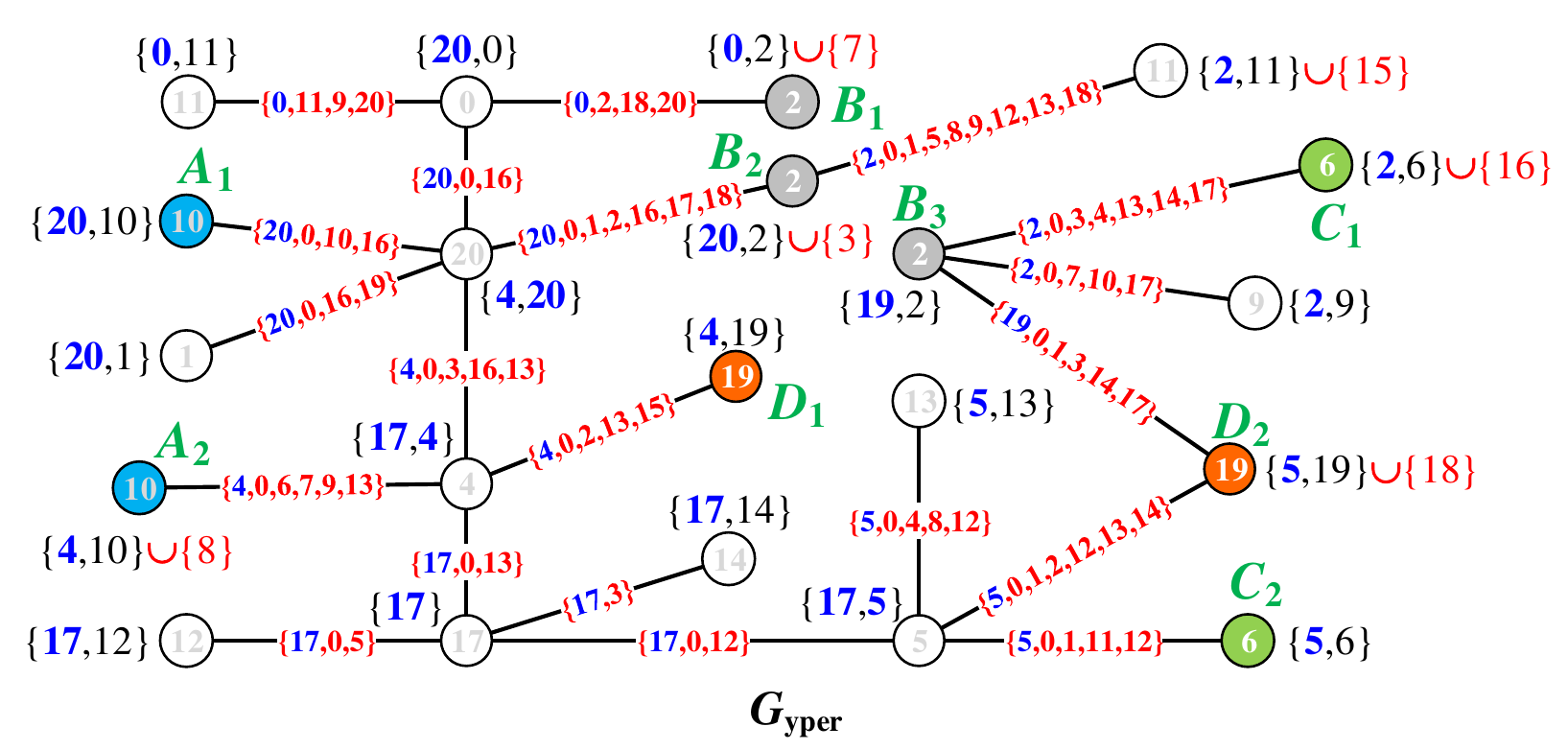}\\
\caption{\label{fig:PWCSC-algorithm-cycle-22}{\small Another example for understanding the Situation-B, cited from \cite{Yao-Ma-arXiv-2201-13354v1}.}}
\end{figure}

\subsubsection{Normal set-colorings based on hyperedge sets}

\begin{defn} \label{defn:tradition-vs-set-colorings}
$^*$ Let $G$ be a $(p,q)$-graph, and let $\Lambda$ be a finite set of numbers. There is a \emph{set-coloring} $F: S\rightarrow \mathcal{E}$ for a \emph{hyperedge set} $\mathcal{E}\subseteq \Lambda^2$, where $S\subseteq V(G)\cup E(G)$. There are the following constraints:
\begin{asparaenum}[\textbf{\textrm{Nset}}-1.]
\item \label{old:vertex} $S=V(G)$.
\item \label{old:edge} $S=E(G)$.
\item \label{old:total} $V(G)\cup E(G)$.
\item \label{old:adjacent-vertex} $F(u)\neq F(v)$ for each edge $uv\in E(G)$.
\item \label{old:adjacent-edge} $F(uv)\neq F(uw)$ for adjacent edges $uv,uw\in E(G)$, where $v,w\in N(u)$ and $u\in V(G)$.
\item \label{old:incident-edge-vertex} $F(u)\neq F(uv)$ and $F(v)\neq F(uv)$ for each edge $uv\in E(G)$.
\item \label{old:hyperedge-set} $\Lambda=\bigcup _{e\in \mathcal{E}}e$.
\item \label{old:join-oper-verticeice} $F(u)\cap F(v)\neq \emptyset$ for each edge $uv\in E(G)$.
\item \label{old:join-oper-dajacent-edges} $F(uv)\cap F(uw)\neq \emptyset$ for adjacent edges $uv,uw\in E(G)$.
\item \label{old:join-oper-vertex-edge} $F(uv)\cap F(u)\neq \emptyset$ and $F(uv)\cap F(v)\neq \emptyset$ for each edge $uv\in E(G)$.
\end{asparaenum}
\textbf{Then}

\noindent --------- \emph{traditional set-colorings}

\begin{asparaenum}[\textbf{\textrm{Scolo}}-1.]
\item $F$ is a \emph{proper set-coloring} if Nset-\ref{old:vertex} and Nset-\ref{old:adjacent-vertex} hold true.
\item $F$ is a \emph{proper edge set-coloring} if Nset-\ref{old:edge} and Nset-\ref{old:adjacent-edge} hold true.
\item $F$ is a \emph{proper total set-coloring} if Nset-\ref{old:total}, Nset-\ref{old:adjacent-vertex}, Nset-\ref{old:adjacent-edge} and Nset-\ref{old:incident-edge-vertex} hold true.

\noindent --------- \emph{hyperedge set-colorings}

\item $F$ is a \emph{proper hyperedge set-coloring} if Nset-\ref{old:vertex}, Nset-\ref{old:adjacent-vertex} and Nset-\ref{old:hyperedge-set} hold true.
\item $F$ is a \emph{proper edge hyperedge set-coloring} if Nset-\ref{old:edge}, Nset-\ref{old:adjacent-edge} and Nset-\ref{old:hyperedge-set} hold true.
\item $F$ is a \emph{proper total hyperedge set-coloring} if Nset-\ref{old:total}, Nset-\ref{old:adjacent-vertex}, Nset-\ref{old:adjacent-edge}, Nset-\ref{old:incident-edge-vertex} and Nset-\ref{old:hyperedge-set} hold true.

\noindent --------- \emph{intersected set-colorings}

\item $F$ is a \emph{proper intersected set-coloring} if Nset-\ref{old:vertex}, Nset-\ref{old:adjacent-vertex} and Nset-\ref{old:join-oper-verticeice} hold true.
\item $F$ is a \emph{proper edge intersected set-coloring} if Nset-\ref{old:edge}, Nset-\ref{old:adjacent-edge} and Nset-\ref{old:join-oper-dajacent-edges} hold true.
\item $F$ is a \emph{proper totally intersected set-coloring} if Nset-\ref{old:total}, Nset-\ref{old:adjacent-vertex}, Nset-\ref{old:adjacent-edge}, Nset-\ref{old:incident-edge-vertex}, Nset-\ref{old:join-oper-verticeice}, Nset-\ref{old:join-oper-dajacent-edges} and Nset-\ref{old:join-oper-vertex-edge} hold true.

\noindent --------- \emph{intersected hyperedge set-colorings}

\item $F$ is a \emph{proper intersected hyperedge set-coloring} if Nset-\ref{old:vertex}, Nset-\ref{old:adjacent-vertex}, Nset-\ref{old:hyperedge-set} and Nset-\ref{old:join-oper-verticeice} hold true.
\item $F$ is a \emph{proper edge intersected hyperedge set-coloring} if Nset-\ref{old:edge}, Nset-\ref{old:adjacent-edge}, Nset-\ref{old:hyperedge-set} and Nset-\ref{old:join-oper-dajacent-edges} hold true.
\item $F$ is a \emph{proper all-intersected hyperedge set-coloring} if Nset-\ref{old:total}, Nset-\ref{old:adjacent-vertex}, Nset-\ref{old:adjacent-edge}, Nset-\ref{old:incident-edge-vertex}, Nset-\ref{old:hyperedge-set}, Nset-\ref{old:join-oper-verticeice}, Nset-\ref{old:join-oper-dajacent-edges} and Nset-\ref{old:join-oper-vertex-edge} hold true.\qqed
\end{asparaenum}
\end{defn}

\begin{defn} \label{defn:normai-set-coloring-intersected-graph}
$^*$ (A-1) If a graph $G$ admits a \emph{proper intersected hyperedge set-coloring} defined in Definition \ref{defn:tradition-vs-set-colorings}, and each pair of hyperedges $e,e\,' \in \mathcal{E}$ holding $e\cap e\,'\neq \emptyset$ corresponds to an edge $xy\in E(G)$ with $F(x)=e$ and $F(y)=e\,'$ and $\Lambda=\bigcup _{e\in \mathcal{E}}e$, then we call $G$ an \emph{intersected-graph} of the hypergraph $\mathcal{H}_{yper}=(\Lambda,\mathcal{E})$ defined in \cite{Jianfang-Wang-Hypergraphs-2008}.

(A-2) Suppose that $F$ is the proper intersected hyperedge set-coloring of the intersected-graph $G$ of some hypergraph $\mathcal{H}_{yper}=(\Lambda,\mathcal{E})$, then the intersected-graph $G$ admits an \emph{edge-intersected total set-coloring} $F^*$ defined by $F^*(u)=F(u)$ for each vertex $u\in V(G)$, and $F^*(xy)=F(x)\cap F(y)$ for each edge $xy\in E(G)$.

(B-1) If a graph $G$ admits a \emph{proper edge intersected hyperedge set-coloring} defined in Definition \ref{defn:tradition-vs-set-colorings}, such that any two hyperedges $e,e\,' \in \mathcal{E}$ holding $e\cap e\,'\neq \emptyset$ correspond to two adjacent edges $xy,xw\in E(G)$ with $F(xy)=e$ and $F(xw)=e\,'$, then we call $G$ an \emph{edge-intersected graph} of the hypergraph $\mathcal{H}_{yper}=(\Lambda,\mathcal{E})$.

(B-2) Suppose that $\varphi$ is the proper edge intersected hyperedge set-coloring of the edge-intersected graph $G$ of some hypergraph $\mathcal{H}_{yper}=(\Lambda,\mathcal{E})$, then the edge-intersected graph $G$ admits a \emph{vertex-intersected total set-coloring} $\varphi^*$ defined as: $\varphi^*(u)=\{F(uv)\cap F(uw):~v,w\in N(u)\}$ for each vertex $u\in V(G)$, and each edge $xy\in E(G)$ is colored with $\varphi^*(xy)=F(xy)$.\qqed
\end{defn}

\begin{problem}\label{qeu:444444}
For a $(p,q)$-graph $G$ appeared in Definition \ref{defn:tradition-vs-set-colorings}, it may be interesting to consider the following extremum questions:
\begin{asparaenum}[\textbf{\textrm{Eque}}-1.]
\item \textbf{Find} the \emph{extremum set-number} $\Lambda_{ex}(G)=\min_F \{|\Lambda|\}$ over all proper set-colorings.
\item \textbf{Find} the \emph{extremum set-index} $\Lambda\,'_{ex}(G)=\min_F \{|\Lambda|\}$ over all proper edge set-colorings.
\item \textbf{Find} the \emph{extremum total set-number} $\Lambda\,''_{ex}(G)=\min_F \{|\Lambda|\}$ over all proper total set-colorings.
\end{asparaenum}
\end{problem}

\begin{problem}\label{qeu:444444}
Consider the set-colorings defined in Definition \ref{defn:tradition-vs-set-colorings} based on the following cases:
\begin{asparaenum}[\textbf{\textrm{Case}}-1.]
\item Each hyperedge $e\in \mathcal{E}$ has its own cardinality $|e|\geq 2$.
\item Any two hyperedges $e,e\,'\in \mathcal{E}$ hold $e\not \subset e\,'$ and $e\,'\not \subset e$.
\item Each hyperedge $e\in \mathcal{E}$ holds $|e|=k\geq 2$, so the corresponding to each set-coloring is uniformed.
\item The sequence $\{|e_i|\}$ for $\mathcal{E}=\{e_i:i\in [1,n]\}$ forms a series, such as arithmetic progression, geometric series, Fibonacci series, \emph{etc}.
\end{asparaenum}
\end{problem}

For a set $S\,^i_X=\{a\,^i_1,a\,^i_2,\dots , a\,^i_{s(i)}\}$ with integer $s(i)\geq 1$ and $i\in [1,n]$ and integers $k,d\geq 0$, we define the following two operations:
\begin{equation}\label{eqa:555555}
{
\begin{split}
d\cdot \langle S\,^i_X\rangle =&\{d\cdot a\,^i_1,d\cdot a\,^i_2,\dots , d\cdot a\,^i_{s(i)}\}\\
(k[+]d)\langle S\,^i_X\rangle =&\{k+d\cdot a\,^i_1,k+d\cdot a\,^i_2,\dots , k+d\cdot a\,^i_{s(i)}\}
\end{split}}
\end{equation}
And moreover, for a set $S^*=\{S\,^1_X,S\,^2_X,\dots ,S\,^n_X\}$, we have two operations:

(iii) $d \langle S^*\rangle =\{d\langle S\,^1_X\rangle, d\langle S\,^2_X\rangle,\dots , d\langle S\,^n_X\rangle\}$;

(iv) $(k[+]d)\langle S^*\rangle =\{(k[+]d)\langle S\,^1_X\rangle, (k[+]d)\langle S\,^2_X\rangle,\dots , (k[+]d)\langle S\,^n_X\rangle\}$.

\begin{defn} \label{defn:normai-kd-type-set-coloring}
$^*$ A connected bipartite $(p,q)$-graph $G$ has its own vertex set bipartition $V(G)=X\cup Y$ and admits a $W$-constraint set-coloring $F$ defined in Definition \ref{defn:tradition-vs-set-colorings}, then $G$ has its own set-type Topcode-matrix $T_{code}(G,F)=(X_S,E_S,Y_S)^T$ with $X_S\cap Y_S=\emptyset$, where each element in three \emph{set vectors} $X_S,E_S$ and $Y_S$ is a \emph{set}. So we have a set-type parameterized Topcode-matrix
\begin{equation}\label{eqa:66666666666}
\centering
{
\begin{split}
P^{set}_{ara}(G,\Phi)=k\cdot I\,^0[+]d\cdot T_{code}(G,F)=(d\cdot X_S,~(k[+]d) \langle E_S\rangle,~(k[+]d)\langle Y_S\rangle )^T
\end{split}}
\end{equation} which defines a $(k,d)$-type set-coloring
\begin{equation}\label{eqa:555555}
\Phi:S\rightarrow \mathcal{E}^P,~S\subseteq V(G)\cup E(G),~\mathcal{E}^P\subseteq \Lambda^2_{(m,b,n,k,a,d)}
\end{equation} for the connected bipartite $(p,q)$-graph $G$.\qqed
\end{defn}

\subsection{Miscellaneous colorings with parameters}

\subsubsection{Total colorings subject to $(abc)$-magic-type functions}

\begin{defn} \label{defn:combinatoric-definition-total-coloring-abc}
\cite{Yao-Wang-Ma-Su-Wang-Sun-2020ITNEC, Bing-Yao-2020arXiv} Suppose that a bipartite graph $G$ admits a proper total coloring $f:V(G)\cup E(G)\rightarrow [1,M]$ with $V(G)=X\cup Y$ and $X\cap Y=\emptyset $. We define an \emph{$(abc)$-magic-type function} $c_f(uv)(a,b,c)$ with three non-negative integers $a,b,c$ for each edge $uv\in E(G)$ with $u\in X$ and $v\in Y$, and define a parameter
\begin{equation}\label{eqa:edge-difference-total-coloring}
B^*_{\alpha}(G,f,M)(a,b,c)=\max_{uv \in E(G)}\{c_f(uv)(a,b,c)\}-\min_{xy \in E(G)}\{c_f(xy)(a,b,c)\}.
\end{equation}

If $B^*_{\alpha}(G,f,M)(a,b,c)=0$, we call $f$ \emph{$(abc)$-$W$-constraint proper total coloring}, the smallest number
\begin{equation}\label{eqa:minimum}
\chi\,''_{\alpha}(G)(a,b,c) =\min_f \{M:~B^*_{\alpha}(G,f,M)(a,b,c)=0\}
\end{equation}
over all $(abc)$-$W$-constraint proper total colorings of $G$ is called \emph{$(abc)$-$W$-constraint proper total chromatic number}, and $f$ is called a \emph{perfect $(abc)$-$W$-constraint proper total coloring} if $\chi\,''_{\alpha}(G)(a,b,c)=\chi\,''(G)$. If $B^*_{\alpha}(G,f,M)(a,b,c)=1$, we call $f$ \emph{equitably $(abc)$-$W$-constraint proper total coloring}.
\begin{asparaenum}[\textrm{\textbf{TCol}}-1. ]
\item We call $f$ an \emph{(resp. perfect) (equitably) $(abc)$-edge-magic proper total coloring} of $G$ if the $(abc)$-edge-magic function
\begin{equation}\label{eqa:555555}
c_f(uv)(a,b,c)=af(u)+bf(v)+cf(uv)
\end{equation} we rewrite
\begin{equation}\label{eqa:555555}
B^*_{\alpha}(G,f,M)(a,b,c)=B^*_{emt}(G,f, M)(a,b,c)
\end{equation} and call the number $\chi\,''_{\alpha}(G)(a,b,c)=\chi\,''_{emt}(G)(a,b,c)$ \emph{(equitably) $(abc)$-edge-magic total chromatic number}.
\item We call $f$ an \emph{(resp. perfect) (equitably) $(abc)$-edge-difference proper total coloring} of $G$ if the $(abc)$-edge-difference function
\begin{equation}\label{eqa:555555}
c_f(uv)(a,b,c)=cf(uv)+|af(u)-bf(v)|
\end{equation} we rewrite
\begin{equation}\label{eqa:555555}
B^*_{\alpha}(G,f,M)(a,b,c)=B^*_{edt}(G,f, M)(a,b,c)
\end{equation} and call the number $\chi\,''_{\alpha}(G)(a,b,c)=\chi\,''_{edt}(G)(a,b,c)$ \emph{(equitably) $(abc)$-edge-difference total chromatic number}.
\item We call $f$ an \emph{(resp. perfect) (equitably) $(abc)$-felicitous-difference proper total coloring} of $G$ if the $(abc)$-felicitous-difference function
\begin{equation}\label{eqa:555555}
c_f(uv)(a,b,c)=|af(u)+bf(v)-cf(uv)|
\end{equation} and we rewrite
\begin{equation}\label{eqa:555555}
B^*_{\alpha}(G,f,M)(a,b,c)=B^*_{fdt}(G,f,M)(a,b,c)
\end{equation} and call the number $\chi\,''_{\alpha}(G)(a,b,c)=\chi\,''_{fdt}(G)(a,b,c)$ \emph{(equitably) $(abc)$-felicitous-difference total chromatic number}.
\item We refer to $f$ as an \emph{(resp. perfect) (equitably) $(abc)$-graceful-difference proper total coloring} of $G$ if the $(abc)$-graceful-difference function
\begin{equation}\label{eqa:555555}
c_f(uv)(a,b,c)=\big ||af(u)-bf(v)|-cf(uv)\big |
\end{equation} and we rewrite
\begin{equation}\label{eqa:555555}
B^*_{\alpha}(G,f,M)(a,b,c)=B^*_{gdt}(G,f,M)(a,b,c)
\end{equation} and call the number $\chi\,''_{\alpha}(G)(a,b,c)=\chi\,''_{gdt}(G)(a,b,c)$ \emph{(equitably) $(abc)$-graceful-difference total chromatic number}.\qqed
\end{asparaenum}
\end{defn}

\begin{rem}\label{rem:333333}
We can put forward various requirements for $(a,b,c)$ in Definition \ref{defn:combinatoric-definition-total-coloring-abc} to increase the difficulty of attacking topological coding, since the ABC-conjecture (also, Oesterl\'{e}-Masser conjecture, 1985) involves the equation $a+b=c$ and the relationship between prime numbers. Proving or disproving the ABC-conjecture could impact: Diophantine (polynomial) math problems including Tijdeman's theorem, Vojta's conjecture, Erd\"{o}s-Woods conjecture, Fermat's last theorem, Wieferich prime and Roth's theorem, and so on \cite{Cami-Rosso2017Abc-conjecture}.\paralled
\end{rem}

\begin{defn} \label{defn:111111}
$^*$ Let $T_{code}(G,f)$ be a Topcode-matrix of a bipartite graph $G$ admitting a total coloring $f$ defined in Definition \ref{defn:combinatoric-definition-total-coloring-abc}, and let $(a,c,b)^T$ be a matrix of order $3\times 1$ for three non-negative integers $a,b,c$. By Definition \ref{defn:colored-topcode-matrix}, we define a new Topcode-matrix
\begin{equation}\label{eqa:555555}
(a,c,b)^T\cdot T_{code}(G,f)=(a,c,b)^T\cdot (X_f,E_f,Y_f)^T=(aX_f,cE_f,bY_f)^T
\end{equation} with three vectors $aX_f=(a\cdot x_1,a\cdot x_2,\dots,a\cdot x_q)$, $cE_f=(c\cdot e_1,c\cdot e_2,\dots, c\cdot e_q)$ and $bY_f=(b\cdot y_1$, $b\cdot y_2$, $\dots, b\cdot y_q)$.\qqed
\end{defn}

\begin{thm}\label{thm:666666}
$^*$ If a bipartite $(p,q)$-graph $G$ admits a set-ordered $W$-constraint total coloring/labeling $f$, for constants $a, b , c $ and $ \lambda$, there are

(i) the $(abc)$-edge-magic constraint $af(x)+bf(y)+cf(xy)=\lambda$ for each edge $xy\in E(G)$;

(ii) the $(abc)$-felicitous-difference constraint $|af(x)+bf(y)-cf(xy)|=\lambda$ for each edge $xy\in E(G)$;

(iii) the $(abc)$-graceful-difference constraint $\big ||af(x)-bf(y)|-cf(xy)\big |=\lambda$ for each edge $xy\in E(G)$; ans

(iv) the $(abc)$-edge-difference constraint $cf(xy)+|af(x)-bf(y)|=c$ for each edge $xy\in E(G)$.

Then this graph $G$ admits a $W$-constraint $(k,d)$-total coloring/labeling $F$ defined by Eq.(\ref{eqa:w-type-coloring-k-d}) and Eq.(\ref{eqa:w-type-coloring-Topcode-ma}) in Definition \ref{defn:w-type-coloring-labeling-kd} holding the above four constraints.
\end{thm}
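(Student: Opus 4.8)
The plan is to produce the required $(k,d)$-total coloring $F$ by the standard parameterization of Definition \ref{defn:w-type-coloring-labeling-kd}, and then to check that each of the four $(abc)$-constraints is inherited by $F$ with a suitably shifted constant. First I would use the set-ordered hypothesis: since $f$ is a set-ordered $W$-constraint total coloring/labeling of the bipartite graph $G$ with vertex bipartition $V(G)=X\cup Y$, every edge $xy\in E(G)$ has $x\in X$, $y\in Y$ and $\max f(X)<\min f(Y)$. Define $F$ exactly as in Eq.(\ref{eqa:w-type-coloring-k-d}), i.e. $F(x)=d\,f(x)$ for $x\in X$ and $F(w)=k+d\,f(w)$ for $w\in Y\cup E(G)$, so that the induced parameterized Topcode-matrix is $P_{ara}(G,F)=k\cdot I\,^0+d\cdot T_{code}(G,f)$ as in Eq.(\ref{eqa:w-type-coloring-Topcode-ma}), where $I\,^0$ is the unite Topcode-matrix of Eq.(\ref{eqa:unit-Topcode-matrix}). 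Set-orderedness of $f$ is what guarantees that the $X$-colors and the $(Y\cup E(G))$-colors of $F$ stay inside the two prescribed ranges $S_{m,0,0,d}$ and $S_{n,k,0,d}$ and that $F$ is a genuine $(k,d)$-total coloring of $G$; this part is routine once one records the arithmetic-progression structure that set-orderedness forces on $f$ over $V(G)$ and $E(G)$.

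Next I would verify the four $(abc)$-constraints. The key observation is linearity: for every element $u\in V(G)\cup E(G)$ one may write $F(u)=k\,\iota(u)+d\,f(u)$, where $\iota(u)=0$ if $u\in X$ and $\iota(u)=1$ if $u\in Y\cup E(G)$ (these are precisely the entries of $I\,^0$). Hence for each edge $xy\in E(G)$ with $x\in X$, $y\in Y$, the $(abc)$-edge-magic function of $F$ equals $aF(x)+bF(y)+cF(xy)=(b+c)k+d\,[af(x)+bf(y)+cf(xy)]=(b+c)k+\lambda d$, a constant, so (i) passes with new constant $\lambda^{*}=(b+c)k+\lambda d$. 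For (ii), (iii) and (iv) the same substitution turns each $(abc)$-magic-type function of $F$ into an expression of the form $\big|\,\beta k+d\cdot(\text{the }(abc)\text{-magic-type function of }f)\,\big|$ (with one absolute value nested inside another for (iii)), so I would invoke $\max f(X)<\min f(Y)$ together with $a,b,c\ge 0$ from Definition \ref{defn:combinatoric-definition-total-coloring-abc} to show that every sign occurring inside these absolute values is the same for all edges; once the signs are pinned down, the outer absolute value evaluates to a single constant $\lambda^{*}=\beta k+\lambda d$ (with $\beta$ one of $b-c$, $-b$, $c$ according to the case), which is the desired $(abc)$-constraint for $F$.

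Finally I would note that the underlying edge relation $F(xy)=W\langle F(x),F(y)\rangle$ survives automatically, because in each of the four cases the constraint can be solved for $F(xy)$ (when $c\ne 0$) from the very identity just established, and that $F$ inherits whatever vertex-/edge-color-set requirement Definition \ref{defn:w-type-coloring-labeling-kd} imposes, since the affine maps $t\mapsto dt$ and $t\mapsto k+dt$ carry the consecutive integer sets attached to a set-ordered total labeling onto $S_{m,0,0,d}$ and $S_{n,k,0,d}$. I expect the main obstacle to be exactly the sign bookkeeping for the nested absolute value in part (iii) and, more generally, confirming uniformly over all edges (not just generically) that set-orderedness forces $|af(x)-bf(y)|$ and $cf(xy)$ to compare in a fixed way; if $a,b,c$ are completely arbitrary this can fail, so I anticipate needing the standing nonnegativity $a,b,c\ge 0$ and perhaps a mild normalization such as $a=b$, which is the case that actually arises from the magic-type colorings of Definition \ref{defn:combinatoric-definition-total-coloring-abc}.
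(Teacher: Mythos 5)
Your construction of $F$ is exactly the paper's: $F(x)=d\,f(x)$ for $x\in X$, $F(y)=k+d\,f(y)$ for $y\in Y$, $F(xy)=k+d\,f(xy)$, i.e. $P_{ara}(G,F)=k\cdot I\,^0+d\cdot T_{code}(G,f)$, followed by direct substitution into the four $(abc)$-functions, and your computation for case (i), giving the constant $(b+c)k+\lambda d$, coincides with the paper's. Where you diverge is instructive: for (ii)--(iv) you propose a sign analysis (set-orderedness, $a,b,c\ge 0$, possibly a normalization such as $a=b$) so that the quantity inside the outer absolute value keeps a fixed sign and the $(abc)$-function of $F$ is genuinely constant. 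The paper does not do this: in cases (ii)--(iv) it merely applies the triangle inequality after the substitution and stops at upper bounds of the form $d\lambda+|b-c|k$ (resp.\ $d\lambda+(b+c)k$), never showing that the new quantity is a single constant; so the obstacle you anticipate --- that the sign of, e.g., $af(x)+bf(y)-cf(xy)$ may flip from edge to edge even though its absolute value is $\lambda$, producing two distinct values $|\lambda d+(b-c)k|$ and $|{-\lambda} d+(b-c)k|$ --- is real and is left unresolved there. In that sense your plan follows the same route but is more careful than the paper's own argument; to obtain literal equalities in (ii)--(iv) one needs either your uniform-sign argument (which does require extra hypotheses of the kind you name, since set-orderedness and nonnegativity of $a,b,c$ alone do not fix the sign), or else the conclusion must be weakened to the inequalities that the paper actually establishes.
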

\begin{proof} Let $(X,Y)$ be the vertex bipartition of a bipartite $(p,q)$-graph $G$, so a set-ordered $W$-constraint total coloring/labeling $f$ of $G$ holds $\max f(X)<\min f(Y)$. Since $F(x)=f(x)d$, $F(y)=k+f(y)d$ and $F(xy)=k+f(xy)d$ obtained from the Topcode-matrix $P_{ara}(G,F)$, so we have:

(i) If the $(abc)$-edge-magic constraint $af(x)+bf(y)+cf(xy)=\lambda$ holds true for each edge $xy\in E(G)$, then the $(abc)$-edge-magic constraint
\begin{equation}\label{eqa:555555}
{
\begin{split}
aF(x)+bF(y)+cF(xy)&=af(x)d+b[k+f(y)d]+c[k+f(xy)d]\\
&=[af(x)+bf(y)+cf(xy)]d+(b +c)k\\
&=d \cdot \lambda+(b +c)k
\end{split}}
\end{equation} holds true for each edge $xy\in E(G)$.

(ii) If the $(abc)$-felicitous-difference constraint $|af(x)+bf(y)-cf(xy)|=\lambda$ holds true for each edge $xy\in E(G)$, then we get the $(abc)$-felicitous-difference constraint
\begin{equation}\label{eqa:555555}
{
\begin{split}
\big |aF(x)+bF(y)-cF(xy)\big |&=\big |af(x)d+b[k+f(y)d]-c[k+f(xy)d]\big |\\
&=\big |[af(x)+bf(y)-cf(xy)]d+(b -c)k\big |\\
&\leq \big |af(x)+bf(y)-cf(xy)\big |d+\big |b -c\big |k\\
&=d \cdot \lambda+|b-c|k
\end{split}}
\end{equation} for each edge $xy\in E(G)$.

(iii) If the $(abc)$-graceful-difference constraint $\big ||af(x)-bf(y)|-cf(xy)\big |=\lambda$ holds true for each edge $xy\in E(G)$, thus, the $(abc)$-graceful-difference constraint
\begin{equation}\label{eqa:555555}
{
\begin{split}
\big ||aF(x)-bF(y)|-cF(xy)\big |&=\big ||af(x)d-b[k+f(y)d]|-c[k+f(xy)d]\big |\\
&\leq\big |bf(y)-af(x)-cf(xy)\big |d+|b-c|k\\
&=d \cdot \lambda+|b -c|k
\end{split}}
\end{equation} or
\begin{equation}\label{eqa:555555}
{
\begin{split}
\big ||aF(x)-bF(y)|-cF(xy)\big |&=\big ||af(x)d-b[k+f(y)d]|-c[k+f(xy)d]\big |\\
&\leq\big |af(x)-bf(y)-cf(xy)\big |d+|b-c|k\\
&=d \cdot \lambda+|b -c|k
\end{split}}
\end{equation} holds true for each edge $xy\in E(G)$.

(iv) If the $(abc)$-edge-difference constraint $cf(xy)+|af(x)-bf(y)|=\lambda$ holds true for each edge $xy\in E(G)$, then we have the $(abc)$-graceful-difference constraint
\begin{equation}\label{eqa:555555}
{
\begin{split}
cF(xy)+\big |aF(x)-bF(y)\big |&=c[k+f(xy)d]+|af(x)d-b[k+f(y)d] |\\
&\leq [cf(xy)+|af(x)-bf(y)|]d+(b+c)k\\
&=d \cdot \lambda+(b+c)k
\end{split}}
\end{equation} for each edge $xy\in E(G)$.

Moreover, we have other two relations as follows:
$${
\begin{split}
|aF(x)-bF(y)|&=|b[k+f(y)d]-af(x)d|\leq |bf(y)-af(x)|d+bk\\
&=c[k+f(xy)d]+(b-c)k\\
&=cF(xy)+(b-c)k
\end{split}}
$$ if $cf(xy)=|bf(y)-af(x)|$, and
$${
\begin{split}
aF(x)+bF(y)&=af(x)d+b[k+f(y)d]=[bf(y)+af(x)]d+bk\\
&=c[k+f(xy)d]+(b-c)k\\
&=cF(xy)+(b-c)k
\end{split}}
$$ if $cf(xy)=af(x)+bf(y)$.

The proof of the theorem is complete.
\end{proof}

\subsubsection{Real-valued parameterized total colorings}

As known, graph colorings/labelings were defined on integer sets, but real number sets. Notice that some real-valued labelings have been investigated in \cite{A-Vietri-Ars2011}, he allowed the vertex labels of a graph
with $q$ edges to be real numbers in a real number interval $[0,q]^r$. For a simple graph $G$, he defined an injective mapping $f$ from $V(G)$ to $[0,q]^r$ to be a real-graceful labeling of $G$, such that the edge color is induced by $f(uv)=|f(u)-f(v)|$ for each edge $uv\in E(G)$, and $f(uv)\neq f(xy)$ if $uv\neq xy$ in $E(G)$. Moreover, if $f(x)$ is an integer for each vertex $x\in V(G)$, then $f$ is just the traditional graceful labeling.

\begin{defn}\label{defn:real-valued-total-cs}
\cite{Yao-Wang-Ma-Su-Wang-Sun-2020ITNEC} Let $R_S$ be a non-negative real number set, and let $G$ be a $(p,q)$-graph. If there exists a real number $k>0$, for any positive real number $\varepsilon$, $G$ admits a real-valued total coloring $f_{\varepsilon}$ on $R_S$, which induces an \emph{edge-function} $C(f_{\varepsilon},uv)$ for each edge $uv\in E(G)$, such that
\begin{equation}\label{eqa:converges-edge-magic-00}
|C(f_{\varepsilon},uv)-k|<\varepsilon
\end{equation}
then we say that $G$ converges to a \emph{$C$-edge-magic constant} $k$, each $f_{\varepsilon}$ is a \emph{$C$-edge-magic real-valued total coloring}, and write this fact as
\begin{equation}\label{eqa:converges-edge-magic-11}
\lim_{\varepsilon\rightarrow 0}C(f_{\varepsilon},uv)_{uv\in E(G)}=k
\end{equation}
We have:
\begin{asparaenum}[(i) ]
\item as the edge-function $C(f_{\varepsilon},uv)=f_{\varepsilon}(u)+f_{\varepsilon}(uv)+f_{\varepsilon}(v)$, we call $f_{\varepsilon}$ \emph{edge-magic real-valued total coloring}, and $k$ \emph{edge-magic real constant};

\item as the edge-function $C(f_{\varepsilon},uv)=f_{\varepsilon}(uv)+|f_{\varepsilon}(u)-f_{\varepsilon}(v)|$, we call $f_{\varepsilon}$ \emph{edge-difference real-valued total coloring}, and $k$ \emph{edge-difference real constant};

\item as the edge-function $C(f_{\varepsilon},uv)=\big ||f_{\varepsilon}(u)-f_{\varepsilon}(v)|-f_{\varepsilon}(uv)\big |$, we call $f_{\varepsilon}$ \emph{graceful-difference real-valued total coloring}, and $k$ \emph{graceful-difference real constant};
\item as the edge-function $C(f_{\varepsilon},uv)=|f_{\varepsilon}(u)+f_{\varepsilon}(v)-f_{\varepsilon}(uv)|$, we call $f_{\varepsilon}$ \emph{felicitous-difference real-valued total coloring}, and $k$ \emph{felicitous-difference real constant}.\qqed
\end{asparaenum}
\end{defn}

\begin{rem}\label{rem:333333}
It is not easy to estimate the spaces of all $C$-edge-magic real-valued total colorings of a $(p,q)$-graph $G$ made by the above real-valued total colorings, since these colorings are related with real numbers. \paralled
\end{rem}

\begin{thm} \label{thm:real-valued-total-coloring-1}
\cite{Yao-Wang-Ma-Su-Wang-Sun-2020ITNEC} For each $W$-constraint $\in \{$edge-magic, graceful-difference, edge-difference, felicitous-difference$\}$, if a connected graph $G$ admits a $W$-constraint real-valued total coloring, then the graph $G$ admits a $W$-constraint real-valued total coloring.
\end{thm}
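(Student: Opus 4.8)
The plan is to prove the (slightly stronger) statement that \emph{every} connected graph $G$ admits a $W$-constraint real-valued total coloring for each $W$-constraint in $\{$edge-magic, graceful-difference, edge-difference, felicitous-difference$\}$, from which the displayed implication is immediate. First I would fix $W$ and recall that, by Theorem~\ref{thm:coloring-closure-kd-total-colorings} together with the equivalence chain of Theorem~\ref{thm:equivalent-k-d-total-colorings}, every tree admits a $W$-constraint $(k,d)$-total coloring; and by the vertex-coinciding/vertex-splitting constructions (Theorem~\ref{thm:vertex-split-disjoint-graphs}, Theorem~\ref{thm:connected-tree-k-d-graceful-total-coloring}, Theorem~\ref{thm:666666-edge-magic-22}), any connected $G$ vertex-splits into a tree $T$ with a graph homomorphism $T\rightarrow G$, and the coloring transports to $G$. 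So $G$ carries a $W$-constraint $(k,d)$-total coloring $g$; fixing integers $k,d$, this $g$ satisfies the corresponding magic-type identity with an \emph{exact} constant $c_g$, i.e. $C(g,uv)=c_g$ for every edge $uv\in E(G)$. Hence $|C(g,uv)-c_g|=0<\varepsilon$ for all $\varepsilon>0$, and in the sense of Definition~\ref{defn:real-valued-total-cs} the graph $G$ converges to the $C$-magic constant $c_g$ with $f_\varepsilon:=g$ for every $\varepsilon$.

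To make the construction genuinely real-valued (non-integer) and robust, I would then perturb: pick a weighting $\eta$ on $V(G)\cup E(G)$ with pairwise distinct irrational values and set $f_\varepsilon(w)=g(w)+\varepsilon_0\,\eta(w)$ for a small $\varepsilon_0=\varepsilon_0(\varepsilon)>0$. For the edge-magic edge-function $C(f_\varepsilon,uv)=f_\varepsilon(u)+f_\varepsilon(uv)+f_\varepsilon(v)$ the deviation from $c_g$ is exactly $\varepsilon_0\bigl(\eta(u)+\eta(uv)+\eta(v)\bigr)$, bounded by $3\varepsilon_0\max|\eta|$; choosing $\varepsilon_0<\varepsilon/(3\max|\eta|)$ gives $|C(f_\varepsilon,uv)-c_g|<\varepsilon$, while distinctness of the $\eta$-values preserves the properness/injectivity conditions. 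The edge-difference case $C(f_\varepsilon,uv)=f_\varepsilon(uv)+|f_\varepsilon(u)-f_\varepsilon(v)|$ and the felicitous-difference case $|f_\varepsilon(u)+f_\varepsilon(v)-f_\varepsilon(uv)|$ are handled identically, using the triangle inequality to pull the $\varepsilon_0$-perturbation outside the absolute value.

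The hard part will be the graceful-difference constraint $C(f_\varepsilon,uv)=\big||f_\varepsilon(u)-f_\varepsilon(v)|-f_\varepsilon(uv)\big|$, because a crude triangle-inequality estimate can fail: a small perturbation of $g(u)$ and $g(v)$ may flip the sign of $|g(u)-g(v)|$ relative to $g(uv)$, so the outer $|\cdot|$ is not uniformly Lipschitz near edges where $|g(u)-g(v)|$ and $g(uv)$ nearly coincide. I would handle this by first normalizing $g$ through the transformations of Definition~\ref{defn:kd-w-type-coloring-transfoemations} so that the graceful-difference constant is attained with a fixed sign and a strictly positive gap $\delta=\min_{uv}\bigl|\,|g(u)-g(v)|-g(uv)\,\bigr|>0$ (or, if that gap is $0$, noting that in that degenerate case $g$ itself is already an exact real-valued solution and one takes $\varepsilon_0\,\eta$ not affecting the sign of the inner difference); then restricting $\varepsilon_0<\delta/(4\max|\eta|)$ prevents any sign change, so on that range $C(f_\varepsilon,uv)$ is an affine function of the $\varepsilon_0$-perturbations and the $\varepsilon$-bound follows as before.

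Finally I would observe that every step — the existence of $g$ on $G$, the choice of $\eta$, and the bounds on $\varepsilon_0$ — is uniform across the four constraints and survives passage through the homomorphism $T\rightarrow G$, so the argument yields a $W$-constraint real-valued total coloring of any connected $G$ simultaneously for all $W\in\{$edge-magic, graceful-difference, edge-difference, felicitous-difference$\}$, which proves the theorem.
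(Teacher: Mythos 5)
There is a genuine gap, and it sits in the load-bearing first step of your plan. You aim to prove the stronger, unconditional claim that every connected graph $G$ carries an exact $W$-constraint $(k,d)$-total coloring by producing one on a vertex-split tree $T$ and ``transporting'' it through the homomorphism $T\rightarrow G$. The results you cite do not give this. Theorem~\ref{thm:connected-tree-k-d-graceful-total-coloring} and Theorem~\ref{thm:666666-edge-magic-22} are ``if and only if'' characterizations: they require a tree $T$ whose $W$-constraint $(k,d)$-total coloring assigns \emph{equal} colors to every group of vertices that are to be vertex-coincided into a single vertex of $G$; they do not assert that such a compatible tree coloring exists for an arbitrary connected $G$. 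Indeed, for general connected graphs the paper only establishes $W$-constraint $(k,d)$-total \emph{set}-colorings (Theorem~\ref{thm:graph-admits-6-set-colorings}), precisely because the identified vertices may carry different colors, and even the existence of a graceful or harmonious $(k,d)$-total coloring for every bipartite graph is only ``asked for'', not proved. So the coloring $g$ your perturbation is built on is not available, and the theorem cannot be obtained as a corollary of the stronger statement. Since the theorem's hypothesis already hands you a $W$-constraint real-valued total coloring of $G$, the correct (and the paper's) move is to perturb that hypothesized coloring directly rather than to manufacture one from trees.

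For comparison: the paper takes the given coloring $g$ and forms the affine family $f_{\varepsilon}(w)=\alpha(\varepsilon)+\beta(\varepsilon)g(w)$ with $\alpha(\varepsilon)\rightarrow 0$ and $\beta(\varepsilon)\rightarrow 1$, then bounds $|C(f_{\varepsilon},uv)-k|$ by a one-line triangle inequality in each of the four cases; in the graceful-difference case no sign issue arises because $|f_{\varepsilon}(u)-f_{\varepsilon}(v)|=\beta(\varepsilon)\,|g(u)-g(v)|$ exactly, so the inner absolute value is just rescaled. Your additive perturbation $f_{\varepsilon}=g+\varepsilon_0\eta$ would work equally well once $g$ is in hand, but your ``hard part'' is not actually hard: the map $(x,y)\mapsto\bigl||x|-y\bigr|$ is $1$-Lipschitz, so the graceful-difference edge-function changes by at most $3\varepsilon_0\max|\eta|$ under the perturbation, sign flips notwithstanding; the normalization through Definition~\ref{defn:kd-w-type-coloring-transfoemations} and the positive-gap argument are superfluous rather than wrong. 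The substantive defect remains the unsupported existence step described above.
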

\begin{proof}Suppose that a connected graph $G$ admits a proper total coloring $g$. We select a pair of non-negative functions $\alpha(\varepsilon)$ and $\beta(\varepsilon)$ with $\alpha(\varepsilon)\rightarrow 0$ and $\beta(\varepsilon)\rightarrow 1$ as $\varepsilon\rightarrow 0$. Thereby, we define a real-valued total coloring $f_{\varepsilon}$ as
\begin{equation}\label{eqa:0-1-real-valued-total-coloring}
f_{\varepsilon}(w)=\alpha(\varepsilon)+\beta(\varepsilon)g(w)
\end{equation}
for each element $w\in V(G)\cup E(G)$, and call $f_{\varepsilon}(w)$ defined in (\ref{eqa:0-1-real-valued-total-coloring}) a \emph{real-valued $(0,1)$-total coloring} of $G$. As we take $\beta(\varepsilon)=1-\alpha(\varepsilon)$, the equation (\ref{eqa:0-1-real-valued-total-coloring}) is changed as
\begin{equation}\label{eqa:0-1-real-valued-total-coloring-dual}
f_{\varepsilon}(w)=\alpha(\varepsilon)+[1-\alpha(\varepsilon)]g(w)=g(w)+[1-g(w)]\alpha(\varepsilon).
\end{equation}

\textbf{The edge-magic constraint.} Since the edge-function
\begin{equation}\label{eqa:555555}
C_1(f_{\varepsilon},uv)=3\alpha(\varepsilon)+\beta(\varepsilon)[g(u)+g(uv)+g(v)]
\end{equation} for each edge $uv\in E(G)$ in an edge-magic real-valued total coloring $f_{\varepsilon}$, so we have
{\small
$$
{
\begin{split}
|C_1(f_{\varepsilon},uv)-k_1|&=|C(f_{\varepsilon},uv)-\beta(\varepsilon)k_1+\beta(\varepsilon)k_1-k_1|\\
&=\big |3\alpha(\varepsilon)+\beta(\varepsilon)[g(u)+g(uv)+g(v)-k_1]+k_1[\beta(\varepsilon)-1]\big |\\
&\leq 3\alpha(\varepsilon)+\beta(\varepsilon)\varepsilon+k_1\cdot |\beta(\varepsilon)-1|
\end{split}}
$$
}which means that $|C_1(f_{\varepsilon},uv)-k_1|\rightarrow 0$ as $\varepsilon\rightarrow 0$.

\textbf{The edge-difference constraint.} A difference edge-magic real-valued total coloring $f_{\varepsilon}$ induces the edge-function
\begin{equation}\label{eqa:555555}
C_2(f_{\varepsilon},uv)=\alpha(\varepsilon)+\beta(\varepsilon)[g(uv)+|g(u)-g(v)|]
\end{equation} for each edge $uv\in E(G)$, then
{\small
$$
{
\begin{split}
|C_2(f_{\varepsilon},uv)-k_2|&=|C(f_{\varepsilon},uv)-\beta(\varepsilon)k_2+\beta(\varepsilon)k_2-k_2|\\
&=\big |\alpha(\varepsilon)+\beta(\varepsilon)[g(uv)+|g(u)-g(v)|-k_2]+k_2[\beta(\varepsilon)-1]\big |\\
&\leq \alpha(\varepsilon)+\beta(\varepsilon)\varepsilon+k_2\cdot |\beta(\varepsilon)-1|
\end{split}}
$$
}immediately, $\lim \limits_{\varepsilon\rightarrow 0}C_2(f_{\varepsilon},uv)_{uv\in E(G)}=k_2$.

\textbf{The graceful-difference constraint.} By the edge-function
\begin{equation}\label{eqa:555555}
C_3(f_{\varepsilon},uv)=\big |\beta(\varepsilon)\big [|g(u)-g(v)|-g(uv)\big ]-\alpha(\varepsilon)\big |
\end{equation} for each edge $uv\in E(G)$ under a mixed-edge-magic real-valued total coloring $f_{\varepsilon}$, we can estimate
{\small
$$
{
\begin{split}
|C_3(f_{\varepsilon},uv)-k_3|&=|C(f_{\varepsilon},uv)-\beta(\varepsilon)k_3+\beta(\varepsilon)k_3-k_3|\\
&\leq \alpha(\varepsilon)+\big |\beta(\varepsilon)[|g(u)-g(v)|-g(uv)-k_3]\big |+k_3|\beta(\varepsilon)-1|\\
&\leq \alpha(\varepsilon)+\beta(\varepsilon)\varepsilon+k_3\cdot |\beta(\varepsilon)-1|
\end{split}}
$$
}we claim that $\lim \limits_{\varepsilon\rightarrow 0}C_3(f_{\varepsilon},uv)_{uv\in E(G)}=k_3$.

\textbf{The felicitous-difference constraint. }Notice that the edge-function
\begin{equation}\label{eqa:555555}
C_4(f_{\varepsilon},uv)=\big |\alpha(\varepsilon)+\beta(\varepsilon)[g(u)+g(v)-g(uv)]\big |
\end{equation} for each edge $uv\in E(G)$ under a graceful edge-magic real-valued total coloring $f_{\varepsilon}$, the following deduction
$$
{
\begin{split}
|C_4(f_{\varepsilon},uv)-k_4|&=|C(f_{\varepsilon},uv)-\beta(\varepsilon)k_4+\beta(\varepsilon)k_4-k_4|\\
&\leq \alpha(\varepsilon)+\big |\beta(\varepsilon)[g(u)+g(v)-g(uv)-k_4]\big |+k_4|\beta(\varepsilon)-1|\\
&\leq \alpha(\varepsilon)+\beta(\varepsilon)\varepsilon+k_4\cdot |\beta(\varepsilon)-1|
\end{split}}
$$ tells us $\lim \limits_{\varepsilon\rightarrow 0}C_4(f_{\varepsilon},uv)_{uv\in E(G)}=k_4$.
\end{proof}

\begin{defn}\label{defn:real-valued-vertex-coloring}
\cite{Yao-Wang-Ma-Su-Wang-Sun-2020ITNEC} Let $R_S$ be a non-negative real number set, and let $G$ be a $(p,q)$-graph with edge set $E(G)=\{u_kv_k:~k\in [1,q]\}$. If a real-valued vertex coloring $f: V(G) \rightarrow R_S$ induces the edge color $f(u_kv_k)=|f(u_k)-f(v_k)|$ for each edge $u_kv_k\in E(G)$, such that
$$|f(u_kv_k)-k|<\varepsilon<1/2 \quad (\textrm{resp.}~ |f(u_kv_k)-(2k-1)|<\varepsilon<1/2)
$$ for each $k\in [1,q]$, we call $f$ a \emph{graceful real-valued vertex coloring} (resp. an \emph{odd-graceful real-valued vertex coloring}).\qqed
\end{defn}

For a graceful labeling $h: V(G) \rightarrow [0,q]$, we define a real-valued vertex coloring as
\begin{equation}\label{eqa:real-valued-vertex-coloring}
f(x)=\lambda(\varepsilon)+\theta(\varepsilon)h(x)
\end{equation}for $x\in V(G)$, where $\lambda(\varepsilon)\rightarrow 0$ and $\theta(\varepsilon)\rightarrow 1$ as $\varepsilon\rightarrow 0$. Since
$$
f(u_kv_k)=|f(u_k)-f(v_k)|=\theta(\varepsilon)\cdot |h(u_k)-h(v_k)|=\theta(\varepsilon)h(u_kv_k)
$$ so we have
$$
{
\begin{split}
|f(u_kv_k)-k| =&|\theta(\varepsilon)h(u_kv_k)-\theta(\varepsilon)k+\theta(\varepsilon)k-k|\\
\leq &\theta(\varepsilon)|h(u_kv_k)-k|+k\cdot |\theta(\varepsilon)-1|\\
\leq & \theta(\varepsilon)\cdot \varepsilon +k\cdot |\theta(\varepsilon)-1|,
\end{split}}
$$
which claims that $f$ is a graceful real-valued vertex coloring of $G$. So, we have shown the following result:

\begin{thm} \label{thm:real-valued-vertex-coloring}
\cite{Yao-Wang-Ma-Su-Wang-Sun-2020ITNEC} If a connected graph admits a graceful labeling (resp. an odd-graceful labeling), then it admits a graceful real-valued vertex coloring (resp. an odd-graceful real-valued vertex coloring).
\end{thm}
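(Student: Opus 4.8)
The plan is to start from a graceful labeling $h:V(G)\to[0,q]$ of the connected graph $G$ and, for each prescribed tolerance $\varepsilon\in(0,1/2)$, build a real-valued vertex coloring $f=f_\varepsilon$ as a small affine perturbation of $h$. Concretely, I would set $f_\varepsilon(x)=\lambda(\varepsilon)+\theta(\varepsilon)h(x)$ for every $x\in V(G)$, choosing $\lambda(\varepsilon)\ge 0$ and $\theta(\varepsilon)>0$ with $\lambda(\varepsilon)\to 0$ and $\theta(\varepsilon)\to 1$ as $\varepsilon\to 0$. This is the same perturbation device already used in the proof of Theorem~\ref{thm:real-valued-total-coloring-1}; the additive shift $\lambda(\varepsilon)$ does not affect the induced edge colors and merely keeps the range inside a non-negative real set $R_S$ while making the coloring genuinely real-valued.

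First I would note that $f_\varepsilon$ is injective, being an affine function of the injective map $h$ with non-zero slope $\theta(\varepsilon)$. Writing $E(G)=\{u_kv_k:k\in[1,q]\}$ with the edges indexed so that $h(u_kv_k)=|h(u_k)-h(v_k)|=k$ — possible because $h$ is graceful — I compute the induced edge colors of $f_\varepsilon$:
\[
f_\varepsilon(u_kv_k)=|f_\varepsilon(u_k)-f_\varepsilon(v_k)|=\theta(\varepsilon)\,|h(u_k)-h(v_k)|=\theta(\varepsilon)\,k .
\]
The crucial point is the estimate
\[
|f_\varepsilon(u_kv_k)-k|=|\theta(\varepsilon)k-k|=k\,|\theta(\varepsilon)-1|\le q\,|\theta(\varepsilon)-1|,
\]
which is uniform in $k$. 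Taking, for instance, $\theta(\varepsilon)=1+\tfrac{\varepsilon}{2q}$ and $\lambda(\varepsilon)=\tfrac{\varepsilon}{2q}$ makes this bound equal to $\varepsilon/2$, hence $|f_\varepsilon(u_kv_k)-k|<\varepsilon<1/2$ for all $k$ simultaneously; in particular the edge colors are pairwise distinct. By Definition~\ref{defn:real-valued-vertex-coloring}, $f_\varepsilon$ is then a graceful real-valued vertex coloring of $G$.

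For the odd-graceful case I would run the identical argument starting from an odd-graceful labeling $h:V(G)\to[0,2q-1]$, indexing the edges so that $h(u_kv_k)=2k-1$, so that $f_\varepsilon(u_kv_k)=(2k-1)\theta(\varepsilon)$ and
\[
|f_\varepsilon(u_kv_k)-(2k-1)|=(2k-1)\,|\theta(\varepsilon)-1|\le(2q-1)\,|\theta(\varepsilon)-1|;
\]
choosing $\theta(\varepsilon)=1+\tfrac{\varepsilon}{2(2q-1)}$ gives the bound $\varepsilon/2<\varepsilon<1/2$ again. There is no genuine obstacle here: the construction uses nothing about $G$ beyond the existence of the (odd-)graceful labeling, and the only care needed is with the quantifier order — the coloring legitimately depends on $\varepsilon$ — and with the observation that the single uniform bound $q|\theta(\varepsilon)-1|$, respectively $(2q-1)|\theta(\varepsilon)-1|$, controls all $q$ edge conditions at once, which is the main (and essentially only) step.
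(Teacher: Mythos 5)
Your proposal is correct and follows essentially the same route as the paper: the paper also defines $f(x)=\lambda(\varepsilon)+\theta(\varepsilon)h(x)$ with $\lambda(\varepsilon)\to 0$, $\theta(\varepsilon)\to 1$, computes $f(u_kv_k)=\theta(\varepsilon)h(u_kv_k)$, and bounds $|f(u_kv_k)-k|$ by a quantity tending to $0$. Your version is in fact a bit tidier, since you use $h(u_kv_k)=k$ exactly and give explicit choices of $\theta(\varepsilon)$ making the bound uniform over all $q$ edges and strictly below $\varepsilon<1/2$, a point the paper leaves implicit.
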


\begin{example}\label{exa:8888888888}
A real-valued vertex coloring $f$ has its own \emph{totally-dual real-valued vertex coloring} $\overline{f}$ defined by
$$\overline{f}(x)=M_f+m_f-f(x),~x\in V(G)~(\textrm{resp.}\quad x\in V(G)\cup E(G))
$$ where
$$
M_f=\max \{f(x):x\in V(G)\},\quad m_f=\min \{f(x):x\in V(G)\}
$$ resp.
$$
M_f=\max \{f(x):x\in V(G)\cup E(G)\},\quad m_f=\min \{f(x):x\in V(G)\cup E(G)\}
$$

Thereby, the totally-dual real-valued vertex coloring $\overline{f}$ of a graceful/odd-graceful real-valued vertex coloring $f$ is graceful/odd-graceful too. Thereby, the $(p,q)$-graph $G$ admits a parameterized real-valued coloring $F$ defined as \begin{equation}\label{eqa:555555}
F(x)=kf(x)+d\overline{f}(x)=kf(x)+d[M_f+m_f-f(x)]=(k-d)f(x)+d[M_f+m_f]
\end{equation} for each vertex $x\in V(G)$ (resp. $x\in V(G)\cup E(G)$). Let $f(uv)$ with $uv\in E(G)$ be the edge $uv$'s color induced by $f$, we get:

\textbf{Case 1.} The graceful constraint (resp. odd-graceful constraint)
$$F(uv)=|F(u)-F(v)|=|k-d|\cdot |f(u)-f(v)|=|k-d|\cdot f(uv)
$$ for each edge $uv\in E(G)$ holds true.

\textbf{Case 2.} If the edge-difference constraint $f(uv)+|f(u)-f(v)|=c_1$ with $uv\in E(G)$, then the edge-difference constraint
$$F(uv)+|F(u)-F(v)|=|k-d|\cdot [f(uv)+|f(u)-f(v)|]=|k-d|\cdot c_1
$$ for each edge $uv\in E(G)$ holds true.

\textbf{Case 3.1.} If there are the edge-magic constraint $f(u)+f(uv)+f(v)=c_2$ with $uv\in E(G)$ and $k>d$, then the edge-magic constraint
$${
\begin{split}
F(u)+F(uv)+F(v)=&(k-d)f(u)+d[M_f+m_f]+|k-d|\cdot f(uv)\\
&+(k-d)f(v)+d[M_f+m_f]\\
=&(k-d)\cdot [f(u)+f(uv)+f(v)]+2d[M_f+m_f]\\
=&(k-d)\cdot c_2+2d[M_f+m_f]
\end{split}}
$$ for each edge $uv\in E(G)$ holds true.

\textbf{Case 3.2.} If there are $f(u)-f(uv)+f(v)=c_3$ with $uv\in E(G)$ and $k<d$, then the edge-magic constraint
$${
\begin{split}
F(u)+F(uv)+F(v)=&(k-d)f(u)+d[M_f+m_f]+|k-d|\cdot f(uv)\\
&+(k-d)f(v)+d[M_f+m_f]\\
=&(k-d)\cdot [f(u)-f(uv)+f(v)]+2d[M_f+m_f]\\
=&2d[M_f+m_f]-(d-k)c_3
\end{split}}
$$ for each edge $uv\in E(G)$ holds true.

\textbf{Case 4.} The graceful-difference constraint $\big ||F(u)-F(v)|-F(uv)\big |=0$ for each edge $uv\in E(G)$ holds true.

\textbf{Case 5.1.} If there are the felicitous-difference constraint $|f(u)+f(v)-f(uv)|=c_4$ with $uv\in E(G)$ and $k>d$, then the felicitous-difference constraint
$${
\begin{split}
\big |F(u)+F(v)-F(uv)\big |=&\big |(k-d)f(u)+d[M_f+m_f]+(k-d)f(v)\\
&+d[M_f+m_f]-|k-d|\cdot f(uv)\big |\\
=&(k-d)|f(u)+f(v)-f(uv)|+2d[M_f+m_f]\\
=&(k-d)\cdot c_4+2d[M_f+m_f]
\end{split}}
$$ for each edge $uv\in E(G)$ holds true.

\textbf{Case 5.2.} If there are the edge-magic constraint $f(u)+f(v)+f(uv)=c_5$ and $k<d$, then the felicitous-difference constraint
$${
\begin{split}
\big |F(u)+F(v)-F(uv)\big |=&\big |(k-d)f(u)+d[M_f+m_f]+(k-d)f(v)+\\
&d[M_f+m_f]-|k-d|\cdot f(uv)\big |\\
=&(k-d)[f(u)+f(v)+f(uv)]+2d[M_f+m_f]\\
=&2d[M_f+m_f]-(d-k)c_5
\end{split}}
$$ for each edge $uv\in E(G)$ holds true.\qqed
\end{example}

\begin{defn}\label{defn:real-valued-vertex-coloring}
\cite{Yao-Wang-Ma-Su-Wang-Sun-2020ITNEC} A connected $(p,q)$-graph $G$ admits a real-valued vertex coloring $\alpha:V(G)\rightarrow R_S$, and $|\alpha(u)-\alpha(v)|\neq |\alpha(u)-\alpha(w)|$ for any pair of adjacent edges $uv,uw$ for each vertex $u\in V(G)$, then $\alpha$ induces a real-valued total coloring $\beta$: $\beta(x)=\alpha(x)$ for $x\in V(G)$, $\beta(xy)=|\alpha(x)-\alpha(y)|$ for $xy\in E(G)$. We call $\beta$ \emph{real-valued total coloring} of $G$. If each edge color $\beta(u_kv_k)=|\beta(u_k)-\beta(v_k)|$ for each edge $u_kv_k\in E(G)$ holds
$$
|\beta(u_kv_k)-k|<\varepsilon<1/2\quad (\textrm{resp.}~|\beta(u_kv_k)-(2k-1)|<\varepsilon<1/2),~k\in [1,q]
$$ we call $\beta$ \emph{graceful real-valued total coloring} (resp. \emph{odd-graceful real-valued total coloring}).\qqed
\end{defn}

An algebraic operation on Topcode-matrices is introduced in \cite{Yao-Wang-Ma-Su-Wang-Sun-2020ITNEC}. In the Topcode-matrix $I_{code}=(X, E, Y)^{T}$ with $x_i=1$, $e_i=1$ and $y_i=1$ for $i\in [1,q]$ is called the \emph{identity Topcode-matrix}. For two Topcode-matrices $T^j_{code}=(X^j, E^j, Y^j)^{T}$ with $j=1,2$, where
$$
X^j=(x^j_1, x^j_2, \dots , x^j_q),\quad E^j=(e^j_1, e^j_2, \dots , e^j_q),\quad Y^j=(y^j_1, y^j_2, \dots , y^j_q)
$$ the \emph{coefficient multiplication} of a function $f(x)$ and a Topcode-matrix $T^j_{code}$ is defined by
$${
\begin{split}
f(x)\cdot T^j_{code}=f(x)\cdot(X^j, E^j, Y^j)^{T}=(f(x)\cdot X^j, f(x)\cdot E^j, f(x)\cdot Y^j)^{T}
\end{split}}
$$
where

$f(x)\cdot X^j=(f(x)\cdot x^j_1, f(x)\cdot x^j_2, \dots , f(x)\cdot x^j_q)$,

$f(x)\cdot E^j=(f(x)\cdot e^j_1, f(x)\cdot e^j_2, \dots , f(x)\cdot e^j_q)$ and

$f(x)\cdot Y^j=(f(x)\cdot y^j_1, f(x)\cdot y^j_2, \dots , f(x)\cdot y^j_q)$.\\
And the \emph{addition} between two Topcode-matrices $T^1_{code}$ and $T^2_{code}$ is denoted as
\begin{equation}\label{eqa:real-valued-topcode-matrix00}
T^1_{code}+T^2_{code}=(X^1+X^2, E^1+E^2, Y^1+Y^2)^{T}
\end{equation}
where
\begin{equation}\label{eqa:real-valued-topcode-matrix11}
{
\begin{split}
&X^1+X^2=(x^1_1+x^2_1, x^1_2+x^2_2, \cdots , x^1_q+x^2_q),\quad E^1+E^2=(e^1_1+e^2_1, e^1_2+e^2_2, \cdots , e^1_q+e^2_q)\\
&Y^1+Y^2=(y^1_1+y^2_1, y^1_2+y^2_2, \cdots , y^1_q+y^2_q)
\end{split}}
\end{equation} and
We have a \emph{real-valued Topcode-matrix} $R_{code}$ defined as:
\begin{equation}\label{eqa:real-valued-topcode-matrix22}
R_{code}=\alpha(\varepsilon)T^1_{code}+\beta(\varepsilon)T^2_{code}
\end{equation}

\begin{defn} \label{defn:111111}
$^*$ A $(p,q)$-graph $G$ admits a real-valued coloring $f_\varepsilon$ defined by the following real-valued Topcode-matrix
\begin{equation}\label{eqa:real-valued-topcode-matrix}
R_{code}(G,f_\varepsilon)=\alpha(\varepsilon)I_{code}+\beta(\varepsilon)T_{code}(G,f)
\end{equation} where two functions $\alpha(\varepsilon)$ and $\beta(\varepsilon)$ are real-valued, $I_{code}$ is the identity Topcode-matrix, and $T_{code}(G,f)$ is a Topcode-matrix of the $(p,q)$-graph $G$ admitting a $W$-constraint coloring $f$.\qqed
\end{defn}

\begin{example}\label{exa:8888888888}
The number-based strings induced by the real-valued Topcode-matrix $R_{code}(G,f_\varepsilon)$ are complex than that induced by the Topcode-matrix $T_{code}(G,f)$. By a Topcode-matrix $T_{code}(G,f)$ shown in Eq.(\ref{eqa:integer-topcode-matrix}), we get the following number-based strings
\begin{equation}\label{eqa:text-based-1}
{
\begin{split}
&D_1=10700220131197531111057111313,~D_2=10111103705577029111311201313\\
&D_3=11110531075711700913211131320
\end{split}}
\end{equation}

The Topcode-matrix $T_{code}(G,f)$ shown in Eq.(\ref{eqa:integer-topcode-matrix}) can provide us with a total of $(21)!$ number-based strings like $D_1,D_2$ and $D_3$ shown in Eq.(\ref{eqa:text-based-1}).

\begin{equation}\label{eqa:integer-topcode-matrix}
\centering
{
\begin{split}
T_{code}(G,f)&= \left(
\begin{array}{ccccccc}
10 & 7 & 0 & 0&2&2&0\\
1 & 3 & 5 & 7&9&11&13\\
11 &10 & 5&7&11&13&13
\end{array}
\right)
\end{split}}
\end{equation}

For example, after taking $\alpha(\varepsilon)=0.32$ and $\beta(\varepsilon)=1-0.32=0.68$, we get a real-valued Topcode-matrix
$$
R_{code}(G,f_\varepsilon)=0.32\cdot I_{code}+0.68\cdot T_{code}(G,f)
$$ where $R_{code}(G,f_\varepsilon)$ is
\begin{equation}\label{eqa:real-valued-Topcode-matrix}
\centering
{
\begin{split}
R_{code}(G,f_\varepsilon)=\left(
\begin{array}{ccccccc}
7.12 & 5.08 & 0.32 & 0.32&1.68&1.68&0.32\\
1 & 2.36 & 3.72 & 5.08&6.44&7.8&9.16\\
7.8 &7.12 &3.72&5.08&7.8&9.16&9.16
\end{array}
\right)
\end{split}}
\end{equation}

We apply $R_{code}(G,f_\varepsilon)$ to produce the following number-based strings:
$${
\begin{split}
D^r_1=7125080320321681680329167864450837223617871237250878916916\\
D^r_2=7817123722367125083725087850803203264491691678168168916032
\end{split}}$$ with $58$ bytes.

Obviously, both number-based strings $D^r_1$ and $D^r_1$ are complex than the number-based strings $D_1,D_2$ and $D_3$ shown in Eq.(\ref{eqa:text-based-1}). We have the following relationships between a Topcode-matrix $T_{code}(G,f)$ and a real-valued Topcode-matrix $R_{code}(G,f_\varepsilon)$:

(1) The graceful constraint $e_i=|x_i-y_i|$ in a Topcode-matrix $T_{code}(G,f)$ of a $(p,q)$-graph $G$ corresponds to $\alpha(\varepsilon)+\beta(\varepsilon)|x_i-y_i|$ of the real-valued Topcode-matrix $R_{code}(G,f_\varepsilon)$;

(2) the edge-magic constraint $x_i+e_i+y_i=k$ in $T_{code}(G,f)$ corresponds to $3\alpha(\varepsilon)+\beta(\varepsilon)\cdot k$ of $R_{code}(G,f_\varepsilon)$;

(3) the edge-difference constraint $e_i+|x_i-y_i|=k$ in $T_{code}(G,f)$ corresponds to $\alpha(\varepsilon)+\beta(\varepsilon)\cdot k$ of $R_{code}(G,f_\varepsilon)$;

(4) the felicitous-difference constraint $|x_i+y_i-e_i|=k$ in $T_{code}(G,f)$ corresponds to $\alpha(\varepsilon)+\beta(\varepsilon)\cdot k$ of $R_{code}(G,f_\varepsilon)$ if $e_i-(x_i+y_i)\geq 0$, otherwise $|x_i+y_i-e_i|=k$ corresponds to $|\beta(\varepsilon)\cdot k-\alpha(\varepsilon)|$;

(5) the graceful-difference constraint $\big ||x_i-y_i|-e_i\big |=k$ in $T_{code}(G,f)$ corresponds to $\alpha(\varepsilon)+\beta(\varepsilon)\cdot k$ of $R_{code}(G,f_\varepsilon)$ if $|x_i+y_i|-e_i< 0$, otherwise $\big ||x_i-y_i|-e_i\big |=k$ corresponds to $|\beta(\varepsilon)\cdot k-\alpha(\varepsilon)|$.\qqed
\end{example}

\begin{defn} \label{defn:111111}
$^*$ \textbf{Real-valued parameterized total coloring.} A \emph{$W$-constraint real-valued $(k,d)$-total coloring} is obtained by taking real values of $k$ and $d$ in each $W$-constraint real-valued $(k,d)$-total coloring defined in Definition \ref{defn:kd-w-type-colorings}.\qqed
\end{defn}

\begin{defn} \label{defn:probabilistic-real-valued-total-coloring}
$^*$ Suppose that a $(p,q)$-graph $G$ admits two total colorings $f$ and $g$. For each element $w\in V(G)\cup E(G)$, we define a \emph{probabilistic total coloring} $F$ for $G$ as
\begin{equation}\label{eqa:probabilistic-total-coloring}
F(w)=p_{\varepsilon}f(w)+(1-p_{\varepsilon})g(w)
\end{equation} where $p_{\varepsilon}$ is a \emph{probabilistic function} with $\varepsilon\in (0,1)$, such that $\displaystyle \lim _{\varepsilon\rightarrow 0}p_{\varepsilon}=0$ and $\displaystyle \lim _{\varepsilon\rightarrow 1}p_{\varepsilon}=1$.\qqed
\end{defn}

\begin{thm}\label{thm:666666}
By Eq.(\ref{eqa:real-valued-topcode-matrix22}) and Eq.(\ref{eqa:probabilistic-total-coloring}), the probabilistic total coloring $F$ corresponds to the following probabilistic Topcode-matrix equation
\begin{equation}\label{eqa:probabilistic-topcode-matrix}
T_{code}(G,F)=p_{\varepsilon}T_{code}(G,f)+(1-p_{\varepsilon})T_{code}(G,g)
\end{equation}
Since a Topcode-matrix $T_{code}$ corresponds to two or more colored graphs having more edges, as two colored graph homomorphisms $H\rightarrow G$ (or $G\rightarrow H$) and $T\rightarrow G$ (or $G\rightarrow T$) hold true, we get a probabilistic Topcode-matrix equation
\begin{equation}\label{eqa:555555}
T_{code}(G,F)=p_{\varepsilon}T_{code}(H,f)+(1-p_{\varepsilon})T_{code}(T,g)
\end{equation} for three different graphs $G,H$ and $T$, according to Definition \ref{defn:probabilistic-real-valued-total-coloring}.
\end{thm}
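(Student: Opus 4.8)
The plan is to verify both displayed equalities directly from the definitions: the first by an entrywise (meta-matrix by meta-matrix) computation, and the second by substituting colored graph homomorphisms into the first. First I would fix an edge ordering $e_i=x_iy_i$, $i\in[1,q]$, of $G$; by Definition \ref{defn:colored-topcode-matrix} this ordering simultaneously determines the column structure of the three colored Topcode-matrices $T_{code}(G,f)$, $T_{code}(G,g)$ and $T_{code}(G,F)$. Since $F$ is the probabilistic total coloring of Definition \ref{defn:probabilistic-real-valued-total-coloring}, defined on $V(G)\cup E(G)$ by $F(w)=p_{\varepsilon}f(w)+(1-p_{\varepsilon})g(w)$ via Eq.(\ref{eqa:probabilistic-total-coloring}), the $i$-th meta-matrix of $T_{code}(G,F)$ is $(F(x_i),F(x_iy_i),F(y_i))^{T}=p_{\varepsilon}(f(x_i),f(x_iy_i),f(y_i))^{T}+(1-p_{\varepsilon})(g(x_i),g(x_iy_i),g(y_i))^{T}$ simply because the right-hand side is linear in each of the three coordinates. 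Taking the union over $i\in[1,q]$ and invoking the coefficient-multiplication and addition operations on Topcode-matrices (Eq.(\ref{eqa:real-valued-topcode-matrix00})--Eq.(\ref{eqa:real-valued-topcode-matrix11}), used analogously to Eq.(\ref{eqa:real-valued-topcode-matrix22})) yields exactly $T_{code}(G,F)=p_{\varepsilon}T_{code}(G,f)+(1-p_{\varepsilon})T_{code}(G,g)$, i.e. Eq.(\ref{eqa:probabilistic-topcode-matrix}). Along the way I would note that $F$ is indeed a well-defined (real-valued/probabilistic) total coloring of $G$, which is immediate from $f,g$ being total colorings and $p_{\varepsilon}\in(0,1)$.

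For the second equation I would exploit that a Topcode-matrix determines an entire family of colored graphs. Suppose the colored graph homomorphisms $H\rightarrow G$ and $T\rightarrow G$ hold (Definition \ref{defn:gracefully-graph-homomorphism} and the discussion preceding Theorem \ref{thm:graph-homomorphism-topcode-matrices}); pull back the total coloring $f$ of $G$ to a total coloring of $H$ and the total coloring $g$ of $G$ to a total coloring of $T$. Because such homomorphisms arise by vertex-coinciding vertices carrying equal colors while keeping all edges, the induced colored Topcode-matrices satisfy $T_{code}(H,f)=T_{code}(G,f)$ and $T_{code}(T,g)=T_{code}(G,g)$ — at least up to the column-exchange similarity of Theorem \ref{thm:similar-topcode-matrices}, which I would neutralize by transporting the fixed edge ordering of $G$ along the two homomorphisms. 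Substituting these two identities into Eq.(\ref{eqa:probabilistic-topcode-matrix}) then gives $T_{code}(G,F)=p_{\varepsilon}T_{code}(H,f)+(1-p_{\varepsilon})T_{code}(T,g)$ for the (possibly pairwise non-isomorphic) graphs $G,H,T$; the existence of such non-isomorphic companions sharing a common Topcode-matrix is precisely Theorem \ref{thm:graph-homomorphism-topcode-matrices} together with Theorem \ref{thm:different-graphs-same-number-based-strings}.

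The one genuinely delicate point — and where I expect the main obstacle to lie — is the bookkeeping of column indexing: in the first equation all three Topcode-matrices must be assembled from one and the same enumeration of the edges $e_i=x_iy_i$, and in the second equation the homomorphisms $H\rightarrow G$ and $T\rightarrow G$ must be arranged to induce that same enumeration, so that the matrix identities hold on the nose rather than only up to similarity. I would dispose of this by first fixing the ordering on $G$, then pushing it through each homomorphism, and appealing to Theorem \ref{thm:similar-topcode-matrices} only to argue that the particular choice is immaterial for the statement. Everything else — the arithmetic of meta-matrices, the linearity of $F$, and the passage from $T_{code}(\cdot)$ to $T_{code}(\cdot,\cdot)$ — is routine and does not require additional hypotheses.
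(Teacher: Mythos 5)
Your proposal is correct and follows essentially the same route the paper takes: the paper treats Eq.(\ref{eqa:probabilistic-topcode-matrix}) as an immediate entrywise consequence of Eq.(\ref{eqa:probabilistic-total-coloring}) together with the coefficient-multiplication and addition operations of Eq.(\ref{eqa:real-valued-topcode-matrix00})--Eq.(\ref{eqa:real-valued-topcode-matrix22}), and obtains the second equation by the same observation you use, namely that a common Topcode-matrix is shared by the homomorphic companions $H$ and $T$ of $G$ (Theorem \ref{thm:graph-homomorphism-topcode-matrices}), so the two matrices may be substituted into the first identity. Your extra care about fixing one edge enumeration and transporting it along the homomorphisms (invoking Theorem \ref{thm:similar-topcode-matrices} only to show the choice is immaterial) is a sound tightening of a point the paper leaves implicit.
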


\begin{rem}\label{rem:333333}
Since $\displaystyle \lim _{\varepsilon\rightarrow 0}F(w)=g(w)$, $\displaystyle \lim _{\varepsilon\rightarrow 1}F(w)=f(w)$ according to Eq.(\ref{eqa:probabilistic-total-coloring}), then Eq.(\ref{eqa:probabilistic-topcode-matrix}) enables us to get
$$\lim _{\varepsilon\rightarrow 0}T_{code}(G,F)=T_{code}(G,g),~\lim _{\varepsilon\rightarrow 1}T_{code}(G,F)=T_{code}(G,f)$$

By Definition \ref{defn:probabilistic-real-valued-total-coloring}, we have the edge-magic constraint

$${
\begin{split}
F(u)+F(uv)+F(v)=&[p_{\varepsilon}f(u)+(1-p_{\varepsilon})g(u)]+[p_{\varepsilon}f(uv)+(1-p_{\varepsilon})g(uv)]\\
&+[p_{\varepsilon}f(v)+(1-p_{\varepsilon})g(v)]\\
=&p_{\varepsilon}[f(u)+f(uv)+f(v)]+(1-p_{\varepsilon})[g(u)+g(uv)+g(v)]\\
=&p_{\varepsilon}c_1+(1-p_{\varepsilon})c_2
\end{split}}
$$ if two edge-magic constraints
$$
f(u)+f(uv)+f(v)=c_1,\quad g(u)+g(uv)+g(v)=c_2
$$ hold true for edges $uv\in E(G)$. Thereby, the total coloring $F$ obeys the edge-magic constraint and is called \emph{edge-magic probabilistic total coloring}, where $p_{\varepsilon}c_1+(1-p_{\varepsilon})c_2$ is a \emph{probabilistic constant}.

If $G$ is a bipartite graph with $V(G)=X\cup Y$ and $X\cap Y=\emptyset$, and $f$ and $g$ are \emph{set-ordered}, i.e.
$$\max f(X)<\min f(Y),\quad \max g(X)<\min g(Y)
$$ then, for each edge $uv\in E(G)$ with $v\in X$ and $u\in Y$, we have
$${
\begin{split}
F(uv)=&p_{\varepsilon}f(uv)+(1-p_{\varepsilon})g(uv)=p_{\varepsilon}[f(u)-f(v)]+(1-p_{\varepsilon})[g(u)-g(v)]\\
=&[p_{\varepsilon}f(u)+(1-p_{\varepsilon})g(u)]-[p_{\varepsilon}f(v)+(1-p_{\varepsilon})g(v)]\\
=&F(u)-F(v)>0
\end{split}}
$$ also, the total coloring $F$ is \emph{set-ordered} with $\max F(X)<\min F(Y)$.

If two felicitous-difference constraints
$$
\big |f(u)+f(v)-f(uv)\big |=c_3,\quad \big |g(u)+g(v)-g(uv)\big |=c_4
$$ hold true for edges $uv\in E(G)$, then we have the felicitous-difference constraint
$${
\begin{split}
\big |F(u)+F(v)-F(uv)|=&\big |[p_{\varepsilon}f(u)+(1-p_{\varepsilon})g(u)]+[p_{\varepsilon}f(v)+(1-p_{\varepsilon})g(v)]-\\
&-[p_{\varepsilon}f(uv)+(1-p_{\varepsilon})g(uv)]\big |\\
=&\big |p_{\varepsilon}[f(u)+f(v)-f(uv)]+(1-p_{\varepsilon})[g(u)+g(v)-g(uv)]\big |\\
\leq &p_{\varepsilon}c_3+(1-p_{\varepsilon})c_4
\end{split}}
$$ for each edge $uv\in E(G)$, also, the total coloring $F$ obeys the felicitous-difference constraint and is called \emph{felicitous-difference probabilistic total coloring}, where $p_{\varepsilon}c_3+(1-p_{\varepsilon})c_4$ is a \emph{probabilistic constant}.\paralled
\end{rem}

\subsubsection{Parameterized Topcode-matrices to non-isomorphic graphs}

There are three different disjoint colored graphs $G$, $H$ and $T$ of $q$ edges, such that $G$, $H$ and $T$ admit three total colorings $F$, $f$ and $g$, respectively. So, we have:

1. The edge color set $E(G)=\{e_i:~i\in [1,q]\}$ with $e_i=x_iy_i$, and $x_i,y_i\in V(G)$, as well as the Topcode-matrix $P_{ara}(G,F)=(X_G,E_G,Y_G)^T$ with
$$X_G=(F(x_1),F(x_2),\dots ,F(x_q)),~E_G=(F(e_1),F(e_2),\dots ,F(e_q)),~Y_G=(F(y_1),F(y_2),\dots ,F(y_q))$$

2. The edge color set $E(H)=\{a_i:~i\in [1,q]\}$ with $a_i=u_iv_i$, and $u_i, v_i\in V(H)$, as well as the Topcode-matrix $T_{code}(H,f)=(X_H,E_H,Y_H)^T$ with
$$X_H=(f(u_1),f(u_2),\dots ,f(u_q)),~E_H=(f(a_1),f(a_2),\dots ,f(a_q)),~Y_H=(f(v_1),f(v_2),\dots ,f(v_q))$$

3. The edge color set $E(T)=\{b_i:~i\in [1,q]\}$ with $b_i=s_it_i$, and $s_i,t_i\in V(T)$, as well as the Topcode-matrix $T_{code}(T,g)=(X_T,E_T,Y_T)^T$ with
$$X_T=(g(s_1),g(s_2),\dots ,g(s_q)),~E_T=(g(b_1),g(b_2),\dots ,g(b_q)),~Y_T=(g(t_1),g(t_2),\dots ,g(t_q))$$

Here, the total coloring $F$ of $G$ is defined as follows
\begin{equation}\label{eqa:componud-topcode-matrices-333}
{
\begin{split}
F(x_i)&= k\cdot f(u_i)+d\cdot g(s_i),~ x_i\in V(G), u_i\in V(H), ~ s_i\in V(T);\\
F(e_i)&= k\cdot f(a_i)+ d\cdot g(b_i),~ e_i\in E(G),~a_i\in E(H),~ b_i\in E(T);\\
F(y_i)&= k\cdot f(v_i)+d\cdot g(t_i),~ y_i\in V(G),~ v_i\in V(H), ~ t_i\in V(T).
\end{split}}
\end{equation} Thereby, we get
\begin{equation}\label{eqa:coordinates-topcode-matrices-4444}
{
\begin{split}
X_G=& k\cdot X_H+ d\cdot X_T=(k\cdot f(u_1)+d\cdot g(s_1),k\cdot f(u_2)+d\cdot g(s_2),\dots ,k\cdot f(u_q)+d\cdot g(s_q))\\
E_G=& k\cdot E_H+ d\cdot E_T=(k\cdot f(a_1)+d\cdot g(b_1),k\cdot f(a_2)+d\cdot g(b_2),\dots ,k\cdot f(a_q)+d\cdot g(b_q))\\
Y_G=& k\cdot Y_H+ d\cdot Y_T=(k\cdot f(t_1)+d\cdot g(b_1),k\cdot f(v_2)+d\cdot g(t_2),\dots ,k\cdot f(v_q)+d\cdot g(t_q))
\end{split}}
\end{equation}

According to the above algebraic preparation, we present the following definition

\begin{defn} \label{defn:more-parameterized-matrices}
$^*$ If there are two colored graphs $H$ and $T$ having $q$ edges and their Topcode-matrices to be $T_{code}(H,f)$ and $T_{code}(T,g)$ defined in Definition \ref{defn:colored-topcode-matrix}, then we have another colored $(p,q)$-graph $G$ holding $q=|E(G)|=|E(H)|=|E(T)|$ and admitting a total set-coloring $F$ defined in Eq.(\ref{eqa:componud-topcode-matrices-333}), such that $G$ corresponds to a parameterized Topcode-matrix
\begin{equation}\label{eqa:two-parameterized-matrices}
P_{ara}(G,F)=k\cdot T_{code}(H,f)+ d\cdot T_{code}(T,g)
\end{equation} defined in Eq.(\ref{eqa:coordinates-topcode-matrices-4444}), where integers $k\geq 0$ and $d\geq 1$, and the Topcode-matrices $T_{code}(H)$, $T_{code}(T)$ and $P_{ara}(G,F)$ have the same order. \qqed
\end{defn}

\begin{rem}\label{rem:333333}
In general, a colored graph $G$ admitting a total set-coloring has one its own $\{k_i\}^n_{i=1}$-parameterized Topcode-matrix defined as
\begin{equation}\label{eqa:more-parameterized-matrices}
{
\begin{split}
T_{code}(G,\{f_i\}^n_{i=1})&=k_1\cdot T_{code}(H_1,f_1)+ k_2\cdot T_{code}(H_2,f_2)+ \cdots + k_n\cdot T_{code}(H_n,f_n)\\
&=\sum ^n_{i=1} k_i\cdot T_{code}(H_i,f_i)
\end{split}}
\end{equation} with integers $k_i\geq 0$ and $n\geq 2$, where $|E(G)|=|E(H_i)|$ with $i\in [1,n]$.

Here, for the needs of making \emph{public-keys} and \emph{private-keys} in practical application, we can ask the integer factorization $m=k_1k_2\cdots k_n~(n\geq 2)$ with each $k_i$ is a prime number, and the integer partition $m=k_1+k_2+\cdots +k_n~(n\geq 2)$ with each $k_i$ is a prime number.\paralled
\end{rem}

\begin{example}\label{exa:8888888888}
Suppose that a graph $G$ admits different total colorings $f_1,f_2,\dots, f_n$, then each total coloring $f_i$ induces a Topcode-matrix $T_{code}(G,f_i)$, so, by Definition \ref{defn:more-parameterized-matrices}, we have a $\{k_i\}^n_{i=1}$-parameterized Topcode-matrix
\begin{equation}\label{eqa:more-parameterized-matrices-example}
T_{code}(G,\{f_i\}^n_{i=1})=k_1\cdot T_{code}(G,f_1)+ k_2\cdot T_{code}(G,f_2)+ \cdots + k_n\cdot T_{code}(G,f_n)
\end{equation} with integers $k_i\geq 0$ and $n\geq 2$, which is just a \emph{linear-combinatoric total set-coloring} of $G$. Moreover, we get a \emph{graph-coloring Topcode-matrix lattice}
\begin{equation}\label{eqa:555555}
\textbf{\textrm{L}}(Z^0[+]\{f_i\}^n_{i=1})=\left \{T_{code}(G,\{f_i\}^n_{i=1})=\sum^n_{i=1} k_i\cdot T_{code}(G,f_i): k_i\in Z^0\right \}
\end{equation} with $\sum^n_{i=1} k_i\geq 1$, and $\{f_i\}^n_{i=1}$ is the \emph{graph-coloring base}.\qqed
\end{example}

\begin{example}\label{exa:8888888888}
It is easy to see that five graphs shown in Fig.\ref{fig:4-sum-topcode-matrices} are not isomorphic from each other. By the following Topcode-matrices
\begin{equation}\label{eqa:4-sum-Topcode-matrices-express11}
T_{code}(T_1,f_1)= \left(
\begin{array}{ccccccccccccccc}
2 & 1 & 0 & 0 & 0 & 0 & 0 \\
1 & 2 & 3 & 4 & 5 & 6 & 7 \\
3 & 3 & 3 & 4 & 5 & 6 & 7
\end{array}
\right),~T_{code}(T_2,f_2)= \left(
\begin{array}{ccccccccccccccc}
3 & 2 & 1 & 0 & 0 & 0 & 0 \\
1 & 2 & 3 & 4 & 5 & 6 & 7 \\
4 & 4 & 4 & 4 & 5 & 6 & 7
\end{array}
\right)
\end{equation}
and
\begin{equation}\label{eqa:4-sum-Topcode-matrices-express22}
T_{code}(T_3,f_3)= \left(
\begin{array}{ccccccccccccccc}
2 & 2 & 2 & 2 & 1 & 0 & 0 \\
1 & 2 & 3 & 4 & 5 & 6 & 7 \\
3 & 4 & 5 & 6 & 6 & 6 & 7
\end{array}
\right),~T_{code}(T_4,f_4)= \left(
\begin{array}{ccccccccccccccc}
3 & 2 & 2 & 2 & 1 & 0 & 0 \\
1 & 2 & 3 & 4 & 5 & 6 & 7 \\
4 & 4 & 5 & 6 & 6 & 6 & 7
\end{array}
\right)
\end{equation}we get the following parameterized Topcode-matrices

{\footnotesize
$$
{
\begin{split}
&\quad \sum^2_{i=1}k_iT_{code}(T_i,f_i)= \left(
\begin{array}{ccccccccccccccc}
2k_1+3k_2 & k_1+2k_2 & k_2 & 0 & 0 & 0 & 0 \\
k_1+k_2 & 2(k_1+k_2) & 3(k_1+k_2) & 4(k_1+k_2) & 5(k_1+k_2) & 6(k_1+k_2) & 7(k_1+k_2) \\
3k_1+4k_2 & 3k_1+4k_2 & 3k_1+4k_2 & 4(k_1+k_2) & 5(k_1+k_2) & 6(k_1+k_2) & 7(k_1+k_2)
\end{array}
\right)
\end{split}}
$$
}
and
{\footnotesize
$$
{
\begin{split}
&\quad \sum^4_{i=3}k_iT_{code}(T_i,f_i)= \left(
\begin{array}{ccccccccccccccc}
2k_3+3k_4 & 2(k_3+k_4) &2(k_3+k_4) & 2(k_3+k_4) & k_3+k_4 & 0 & 0 \\
k_3+k_4 & 2(k_3+k_4) & 3(k_3+k_4) & 4(k_3+k_4) & 5(k_3+k_4) & 6(k_3+k_4) & 7(k_3+k_4) \\
3k_3+4k_4 & 4(k_3+k_4) & 5(k_3+k_4) & 6(k_3+k_4) & 6(k_3+k_4) & 6(k_3+k_4) & 7(k_3+k_4)
\end{array}
\right)
\end{split}}
$$
}

\begin{figure}[h]
\centering
\includegraphics[width=16.4cm]{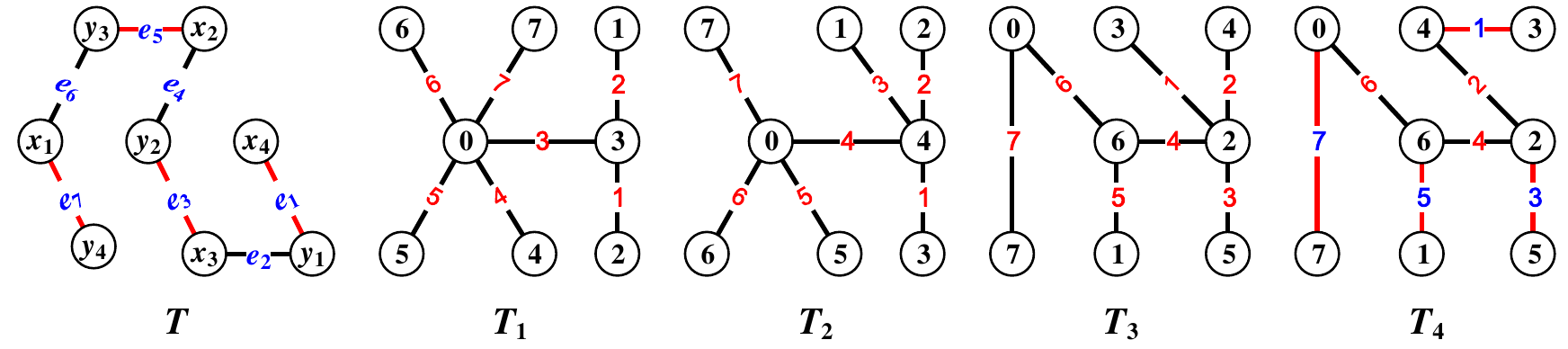}
\caption{\label{fig:4-sum-topcode-matrices}{\small Five graphs are not isomorphic from each other.}}
\end{figure}

Thereby, we get a $\{k_i\}^4_{i=1}$-parameterized Topcode-matrix of the graph $T$ as follows
\begin{equation}\label{eqa:44-Topcode-matrices-express11}
T_{code}(T,\{k_j\}^4_{j=1})=\sum^4_{j=1}k_jT_{code}(T_j,f_j)= \left(
\begin{array}{ccccccccccccccc}
x_4 & x_3 & x_3 & x_2 & x_2 & x_1 & x_1 \\
e_1 & e_2 & e_3 & e_4 & e_5 & e_6 & e_7 \\
y_1 & y_1 & y_2 & y_2 & y_3 & y_3 & y_4
\end{array}
\right)_{3\times 7}
\end{equation} then $T$ admits a linear-combinatoric total set-coloring defined by

$x_1=\{0\}$, $x_2=\{k_3+k_4,~2(k_3+k_4)\}$, $x_3=\{k_1+2k_2+2(k_3+k_4),~k_2+2(k_3+k_4)\}$, $x_4=\{2(k_1+k_3)+3(k_2+k_4)\}$;

$e_i=\{i(k_1+k_2+k_3+k_4)\}$ for $i\in [1,7]$;

$y_1=\{3(k_1+k_3)+4(k_2+k_4),~ 3k_1+4(k_2+k_3+k_4)\}$,

$y_2=\{3k_1+4k_2+5(k_3+k_4),~ 4(k_1+k_2)+6(k_3+k_4)\}$,

$y_3=\{5(k_1+k_2)+6(k_3+k_4),~ 6(k_1+k_2+k_3+k_4)\}$ and $y_4=\{7(k_1+k_2+k_3+k_4)\}$. \qqed
\end{example}

\begin{rem}\label{rem:333333}
Suppose that there are two colored graphs $H$ and $T$ having $q$ edges and their Topcode-matrices to be $T_{code}(H,f)$ and $T_{code}(T,g)$ defined in Definition \ref{defn:colored-topcode-matrix}. Let $G_j$ be the $j$th copy of a $(p,q)$-graph $G$ with the edge color set $E(G_j)=\{e_{j,1},e_{j,2}, \dots ,e_{j,q}\}$ with $j\in [1,q!]$, where $e_{j,1},e_{j,2}, \dots ,e_{j,q}$ is the $j$th permutation of $q$ edges of $G$. We define a limit-form total coloring $f^j_{\varepsilon}$ for $G_j$, called \emph{limit-form real-valued total set-coloring}, by
\begin{equation}\label{eqa:limited-adjustment-parameterized-matrices}
P_{ara}(G_j,f^j_{\varepsilon})=p_{\varepsilon}\cdot T_{code}(H,f)+ (1-p_{\varepsilon})\cdot T_{code}(T,g)
\end{equation} with a \emph{probabilistic function} $p_{\varepsilon}$ for $\varepsilon\in (0,1)$ and $0< p_{\varepsilon} <1$, such that
$$P_{ara}(G_j,f^j_{0})=\lim_{\varepsilon\rightarrow 0}P_{ara}(G_j,f^j_{\varepsilon})=T_{code}(T,g)
$$ as $\displaystyle \lim_{\varepsilon\rightarrow 0}p_{\varepsilon}=0$ and moreover
$$P_{ara}(G_j,f^j_{1})= \lim_{\varepsilon\rightarrow 1}P_{ara}(G_j,f^j_{\varepsilon})=T_{code}(H,f)
$$ as $\displaystyle \lim_{\varepsilon\rightarrow 1}p_{\varepsilon}=1$.

Notice that there are $q!$ limit-form real-valued total set-colorings of the $(p,q)$-graph $G$ produced by $P_{ara}(G_j,f^j_{\varepsilon})$ defined in Eq.(\ref{eqa:limited-adjustment-parameterized-matrices}). \paralled
\end{rem}

\begin{defn} \label{defn:4magic-parameterized-matrices}
$^*$ Suppose that three Topcode-matrices $P_{ara}(G,F)$, $T_{code}(H,f)$ and $T_{code}(T,g)$ hold
\begin{equation}\label{eqa:two-parameterized-matrices}
P_{ara}(G,F)=k\cdot T_{code}(H,f)+ d\cdot T_{code}(T,h)
\end{equation} according to Definition \ref{defn:more-parameterized-matrices}, where
\begin{equation}\label{eqa:555555}
T_{code}(H,f)=(X_H,~E_H,~Y_H)^T,~T_{code}(T,g)=(X_T,~E_T,~Y_T)^T
\end{equation} with $X_H=(f(u_1),f(u_2),\dots ,f(u_q))$, $E_H=(f(u_1v_1), f(u_2v_2), \dots ,f(u_qv_q))$, $Y_H=(f(v_1),f(v_2)$, $\dots $, $f(v_q))$, $X_T=(g(s_1),g(s_2),\dots ,g(s_q))$, $E_T=(g(s_1t_1), g(s_2t_2), \dots ,g(s_qt_q))$ and $Y_T=(g(t_1),g(t_2)$, $\dots $, $g(t_q))$. For constants $c_{\varepsilon}$ and $c\,'_{\varepsilon}$ with $\varepsilon\in \{ed, em$, $ gd$, $fd\}$, we have:
\begin{asparaenum}[\textbf{\textrm{Pama}}-1.]
\item If there are the edge-difference constraints
$$f(u_iv_i)+|f(v_i)-f(u_i)|=c_{ed},~g(s_it_i)+|g(t_i)-g(s_i)|=c\,'_{ed},~i\in [1,q]$$ we write these two facts by $E_H+|Y_H-X_H|:=c_{ed}$ and $E_T+|Y_T-X_T|:=c\,'_{ed}$ respectively, and say $P_{ara}(G,F)$ to be \emph{$(k,d)$-edge-difference Topcode-matrix}; and moreover $P_{ara}(G,F)$ is called \emph{uniformly $(k,d)$-edge-difference Topcode-matrix} if $c_{ed}=c\,'_{ed}$.
\item If there are the edge-magic constraints
$$f(u_i)+f(u_iv_i)+f(v_i)=c_{em},~g(s_i)+g(s_it_i)+g(t_i)=c\,'_{em},~i\in [1,q]$$ we write these two facts by $X_H+E_H+Y_H:=c_{ed}$ and $X_T+E_T+Y_T:=c\,'_{ed}$ respectively, and say $P_{ara}(G,F)$ to be \emph{$(k,d)$-edge-magic Topcode-matrix}; and moreover $P_{ara}(G,F)$ is called \emph{uniformly $(k,d)$-edge-magic Topcode-matrix} if $c_{em}=c\,'_{em}$.
\item If there are the graceful-difference constraints $$\big ||f(v_i)-f(u_i)|-f(u_iv_i)\big |=c_{gd},~\big ||g(t_i)-g(s_i)|-g(s_it_i)\big |=c\,'_{gd},~i\in [1,q]$$ we write these two facts by $\big ||Y_H-X_H|-E_H\big |:=c_{gd}$ and $\big ||Y_T-X_T|-E_T\big |:=c\,'_{gd}$ respectively, and say $P_{ara}(G,F)$ to be \emph{$(k,d)$-graceful-difference Topcode-matrix}; and moreover $P_{ara}(G,F)$ is called \emph{uniformly $(k,d)$-graceful-difference Topcode-matrix} if $c_{gd}=c\,'_{gd}$.
\item If there are the felicitous-difference constraints $$\big |f(v_i)+f(u_i)-f(u_iv_i)\big |=c_{fd},~\big |g(t_i)+g(s_i)-g(s_it_i)\big |=c\,'_{fd},~i\in [1,q]$$ we write these two facts by $\big |Y_H+X_H-E_H\big |:=c_{gd}$ and $\big |Y_T+X_T-E_T\big |:=c\,'_{gd}$ respectively, and say $P_{ara}(G,F)$ to be \emph{$(k,d)$-felicitous-difference Topcode-matrix}; and moreover $P_{ara}(G,F)$ is called \emph{uniformly $(k,d)$-felicitous-difference Topcode-matrix} if $c_{fd}=c\,'_{fd}$.\qqed
\end{asparaenum}
\end{defn}

\subsubsection{Colorings/labelings with fractional numbers}

\begin{example}\label{exa:8888888888}
Recall, a general graph-book $B\langle S_X,W,\{H_j\}^M_{j=1}\rangle$ defined in Definition \ref{defn:general-graph-book-defn} is a \emph{$W$-graph-book} if $\{q_j\}^M_{j=1}$ is a \emph{$W$-sequence}. If $\displaystyle \lim_{j\rightarrow \infty} q_j=q$, then the general graph-book $B\langle S_X,W,\{H_j\}^M_{j=1}\rangle$ converges to a regular graph-book $R_{book}=[\odot_X]^M_{j=1}G_j$ defined in Example \ref{exa:regular-graph-book-expression}.\qqed
\end{example}

\begin{defn} \label{defn:limit-k-d-colorings}
$^*$ For a $W$-constraint $(k,d)$-total coloring (or $W$-constraint $(k,d)$-total set-coloring) $f$ of a bipartite $(p,q)$-graph $G$, if two sequences $\{k_n\}^n_1$ and $\{d_n\}^n_1$ converge to two integers $k_0$ and $d_0$ respectively, that is
$$
\lim _{n\rightarrow \infty}k_n=k_0,~\lim _{n\rightarrow \infty}d_n=d_0
$$
then a $W$-constraint $(k_0,d_0)$-total coloring (or $W$-constraint $(k_0,d_0)$-total set-coloring) forms a \emph{$\{k_n\}^n_1$-$\{d_n\}^n_1$-sequence coloring} $F\,^n$ as $|k_n-k_0|<\varepsilon$ and $|d_n-d_0|<\varepsilon$, such that this graph $G$ has its own $(k_n,d_n)$-type Topcode-matrix
\begin{equation}\label{eqa:555555}
P^n_{ara}(G,F\,^n)=k_n\cdot I\,^0+d_n\cdot T_{code}(G,f)
\end{equation} where three Topcode-matrices $I\,^0$, $T_{code}(G,f)$ and $P^n_{ara}(G,F\,^n)$ have the same order $3\times q$. Moreover, we have a \emph{limit-form Topcode-matrix}
\begin{equation}\label{eqa:555555}
{
\begin{split}
P_{ara}(G,F_0)&=\lim _{n\rightarrow \infty}P^n_{ara}(G,F\,^n)=\lim _{n\rightarrow \infty}[k_n I\,^0+d_n T_{code}(G,f)]\\
&=\lim _{n\rightarrow \infty}k_n I\,^0+\lim _{n\rightarrow \infty}d_n T_{code}(G,f)\\
&=k_0I\,^0+d_0T_{code}(G,f)
\end{split}}
\end{equation} where the \emph{limit-form coloring} $\displaystyle F_0=\lim _{n\rightarrow \infty}F\,^n$.\qqed
\end{defn}

\begin{rem}\label{rem:333333}
By Definition \ref{defn:limit-k-d-colorings}, we have a \emph{limit-form coloring} $\beta$ of a graph $G$ admitting a $W$-constraint total coloring $f$ defined by the following parameterized Topcode-matrix
\begin{equation}\label{eqa:555555}
P(G,\beta)=\frac{s}{r}\cdot I\,^0+\frac{a}{b}\cdot T_{code}(G,f)
\end{equation}
or
\begin{equation}\label{eqa:fractional-topcode-matrices}
(rb)\cdot P(G,\beta)=(sb)\cdot I\,^0+(ar)\cdot T_{code}(G,f)
\end{equation} where $\frac{s}{r}$ and $\frac{a}{b}$ are fractional numbers. Notice that $rb$, $sb$ and $ar$ are integers in Eq.(\ref{eqa:fractional-topcode-matrices}).\paralled
\end{rem}

\section{Conclusion and further researching topics}

For designing more powerful and effective techniques of topological coding in information security and resisting the intelligent attack equipped with quantum computers, we have defined new parameterized colorings and set-colorings, and shown:

\begin{asparaenum}[(i) ]
\item Any tree $T$ with diameter $D(T)\geq 3$ admits at least $2^m$ different $W$-constraint $(k,d)$-total colorings for $m+1=\left \lceil \frac{D(T)}{2}\right \rceil $, where $W$-constraint is one of graceful, harmonious, (odd-edge) edge-magic, (odd-edge) graceful-difference, (odd-edge) edge-difference, (odd-edge) felicitous-difference and edge-antimagic in the previous sections.

\item By the technique of adding leaves to graphs, we have shown the following polynomial algorithms: RLA-algorithm-A for the graceful-difference $(k,d)$-total coloring, RLA-algorithm-B for the edge-difference $(k,d)$-total coloring, RLA-algorithm-C for the felicitous-difference $(k,d)$-total coloring and RLA-algorithm-D for the edge-magic $(k,d)$-total coloring.

\item If a connected graph $G$ admits a set-ordered graceful labeling, then $G$ admits a graceful-difference $(k,d)$-total coloring and an odd-edge graceful-difference $(k,d)$-total coloring.

\item A connected graph $G$ admits a $W$-constraint $(k,d)$-total coloring if and only if there exists a tree $T$ admitting a $W$-constraint $(k,d)$-total coloring such that the result of vertex-coinciding each group of vertices colored the same colors into one vertex is just $G$, where $W$-constraint is one of graceful, harmonious, (odd-edge) edge-magic, (odd-edge) graceful-difference, (odd-edge) edge-difference and (odd-edge) felicitous-difference.
\item In the Problem of Integer Partition And Trees proposed in Section 1, we were asked for ``How many trees with this degree sequence $m_1,m_2,\dots ,m_n$ are there?'' Our answer is: There are infinite tress with this degree sequence $m_1,m_2,\dots ,m_n$, since
\begin{equation}\label{eqa:555555}
n_1=2+\sum_{3\leq d\leq \Delta} (d-2)n_d=2+\sum_{2\leq d\leq \Delta} (d-2)n_d
\end{equation}
with arbitrary positive integer $n_2$, where $n_d$ is the number of vertices of degree $d$ in a tree (Ref. \cite{Yao-Zhang-Yao-2007} and \cite{Yao-Zhang-Wang-Sinica-2010}).
\item For the purpose of application, we have discussed number-based strings containing parameters, graphic lattices related with parameterized total colorings, $(k,d)$-colored tree-graphic lattices, tree-graphic lattice homomorphisms, and graphic groups based on parameterized total colorings.
\item We have shown more complex homomorphism phenomenon, such as, from a \emph{colored graph homomorphism} $H\rightarrow G$ to a \emph{Topcode-matrix homomorphism} $T_{code}(H,g^*)\rightarrow T_{code}(G,g)$ in Eq.(\ref{eqa:topcode-matrix-homomorphism}), and then to a \emph{graph-set homomorphism} $S_{graph}[T_{code}(H,g^*)]\rightarrow S_{graph}[T_{code}(G,g)]$ in Eq.(\ref{eqa:graph-set-homomorphism}).
\item As the generalization of parameterized colorings, we have recalled total colorings subject to parameterized magic-functions, real-valued parameterized total colorings, and introduced fractional colorings/labelings by limitation method.
\end{asparaenum}

\vskip 0.2cm

For the in-depth study and topological authentication application of parameterized colorings and parameterized labelings introduced here, we propose the following researching topics:

\begin{asparaenum}[\textbf{\textrm{Topic}}-1. ]
\item As a special branch of topological coding, Topcode-matrix algebra was first proposed and investigated in \cite{Bing-Yao-2020arXiv} and \cite{Yao-Zhao-Zhang-Mu-Sun-Zhang-Yang-Ma-Su-Wang-Wang-Sun-arXiv2019}. We have used Topcode-matrices to define various parameterized colorings, since they have two basic advantages: topological structures and mathematical constraints. However, we point that Topcode-matrices mentioned here differ from adjacent matrices of graphs of graph theory. An adjacent matrix corresponds to a unique graph, however a Topcode-matrix may correspond to two or more graphs (colored graphs, Topsnut-gpws) with the same number of edges. Despite the results of studying Topcode-matrices, but we still need to discover more algebraic theories and topological properties of Topcode-matrices.

\item Cayley's formula $\tau(K_n)=n^{n-2}$ (Ref. \cite{Bondy-2008}) tells us that there are $n^{n-2}$ different colored spanning trees in a complete graph $K_n$ admitting a vertex labeling $f:V(K_n)\rightarrow [1,n]$ such that the vertex color set $f(V(K_n))=[1,n]$. Each spanning tree $T$ of these $n^{n-2}$ spanning trees admits each one of the following $W$-constraint $(k,d)$-total set-colorings for $W$-constraint $\in \{$graceful, harmonious, edge-difference, edge-magic, felicitous-difference, graceful-difference$\}$ according to Theorem \ref{thm:graph-admits-6-set-colorings}, and each spanning tree $T$ admits a set-coloring defined on a hyperedge set $\mathcal{E}$ such that $T$ is a subgraph of the intersected-graph of a hypergraph $\mathcal{H}_{yper}=([1,n],\mathcal{E})$ by Theorem \ref{thm:build-hyperedge-set}.

\qquad Clearly, finding a colored spanning tree $T$ from these $n^{n-2}$ spanning trees is a terrible job for supercomputers and quantum computers, since it will meet the Subgraph Isomorphic Problem.
\item Since any graph $G$ can be vertex-split into pairwise disjoint graphs $G_1,G_2$, $\dots $, $G_m$, which is just a permutation of graphs $a_1H_1$, $a_2H_2$, $\dots $, $a_mH_m$, then each graph is in some vertex-coinciding graphic lattice $\textbf{\textrm{L}}([\odot]Z^0\textbf{\textrm{H}})$ defined in Eq.(\ref{eqa:vertex-odot-graphic-lattice}) and Theorem \ref{thm:vertex-split-disjoint-graphs}. If this vertex-coinciding graphic lattice is closed to a $W$-constraint (set-)coloring, then $G$ admits this $W$-constraint (set-)coloring.
\item Although there are no reports on the study of parameterized hypergraphs, we have made a preliminary attempt to study parameterized hypergraphs. The authors in \cite{Yao-Ma-arXiv-2201-13354v1} have applied set-colorings and intersected-graphs to observe indirectly and characterize properties of hypergraphs. They have pointed that hypergraph theory will be an important application in the future resisting AI attack equipped quantum computer, although the research achievements and application reports of hypergraphs are far less than that of popular graphs. Studying what properties parameterized hypergraphs have may refer to \cite{Yao-Ma-arXiv-2201-13354v1}.
\item Since Hanzi-graphs admit many colorings/labelings defined in this article, applying them to the field of information using Chinese characters will greatly help people using Chinese characters to be unimpeded in digital finance and personal privacy protection.
\item We are not only to design various practical technologies for asymmetric cryptography, but also we define some new coloring objects with new problems for graph theory, a branch of mathematics. For example, Topcode-matrix algebra, real-valued parameterized (set-)colorings, colored graph homomorphism, Topcode-matrix homomorphism, graph-set homomorphism, graph-coloring Topcode-matrix lattice, and so on.

\qquad As known, each connected $(p,q)$-graph can be vertex-split into a tree of $q$ edges. We have proven: \emph{Each connected graph $G$ admits each one of $W$-constraint $(k,d)$-total \textbf{set-colorings}}, since every tree admits each one of $W$-constraint $(k,d)$-total \textbf{colorings} in Section 3 and Section 4, where $W$-constraint $\in \{$graceful, harmonious, edge-difference, edge-magic, felicitous-difference, graceful-difference$\}$ according to Theorem \ref{thm:graph-admits-6-set-colorings}.

\qquad Let $F$ be a $W$-constraint $(k,d)$-total set-coloring of the connected graph $G$. If each set cardinality $|F(w)|\leq r$ for $w\in V(G)\cup E(G)$, we say that this set-coloring $F$ is $r$-bounded. We propose: \textbf{Find} a $W$-constraint $(k,d)$-total set-coloring $F$ with $r$-boundedness, such that $r$ is the minimal number over all $W$-constraint $(k,d)$-total set-colorings of $G$.

\qquad It may be interesting and meaningful to build up parameterized colorings/labelings for part of colorings/labelings defined in \cite{Gallian2021}.
\item We know very little about various graphic lattices, Topcode-matrix lattices, although they have many of the advantages of lattice ciphers.
\end{asparaenum}

\vskip 0.4cm

Part of anti-quantum computing foundations of our techniques mentioned here are based on the following works:

1. \textbf{Recall} the PRONBS-problem stated in Problem \ref{qeu:PRONBS-problems}. For a given number-based string $s=c_1c_2\cdots c_n$ with $c_i\in [0,9]$, \textbf{do} the following works:
\begin{asparaenum}[\textrm{\textbf{Pronbs}}-1. ]
\item \textbf{Rewrite} the string $s$ into $3q$ segments $a_1,a_2,\dots ,a_{3q}$ with $a_j=b_{j,1}b_{j,2}\cdots b_{j,m_j}$ for $j\in [1,3q]$, such that each number $c_i$ of the number-based string $s$ is in a segment $a_j$ but $c_i$ is not in other segment $a_s$ if $s\neq j$. In this work, we will meet the \textbf{String Partition Problem}, since there are different colored graphs, which induce the same number-based strings by Theorem \ref{thm:different-graphs-same-number-based-strings}.

\item \textbf{Find} two integers $k_0,d_0\geq 0$, each segment $a_j$ with $j\in [1,3q]$ can be expressed as $a_j=\beta_jk_0+\gamma_jd_0$ for integers $\beta_j,\gamma_j\geq 0$ and $j\in [1,3q]$. In this work, we will meet the \textbf{Indefinite Equation Problem}.

\item \textbf{Find} a colored $(p,q)$-graph $G$ admitting a $W$-constraint (set-)coloring $f$, such that $G$ admits a parameterized coloring $F$ defined by
\begin{equation}\label{eqa:555555}
P_{ara}(G,F)_{3\times q}=k\cdot I\,^0_{3\times q}+d\cdot T_{code}(G,f)_{3\times q}
\end{equation} which is just a parameterized Topcode-matrix. The work of \textbf{Determining} $G$ from a huge amount of graphs will meet the \textbf{Subgraph Isomorphic Problem} (refer to two numbers $G_{23}$ and $G_{24}$ shown in Eq.(\ref{eqa:number-graphs-23-24-vertices})), and the desired $W$-constraint (set-)coloring $f$ is related with the \textbf{Indefinite Equation Problem}, more important thing is that there are different colored graphs, which induce the same number-based strings by Theorem \ref{thm:different-graphs-same-number-based-strings}.

\item \textbf{Use} this Topcode-matrix $P_{ara}(G,F)$ to produce just the desired number-based string $a_1a_2\dots a_{3q}=c\,'_1c\,'_2\cdots c\,'_n$ when $(k,d)=(k_0,d_0)$ determined by the solution $k_0,d_0$ of the above indefinite equations $a_j=\beta_jk_0+\gamma_jd_0$ for $j\in [1,3q]$.
\end{asparaenum}

\vskip 0.4cm

2. \textbf{Constructing} topological structures related with integer factorization and integer partition. For a positive integer $m$, we have:

(i) \textbf{Integer factorization}:
\begin{equation}\label{eqa:problem-integer-factorization}
m=p_1p_2\cdots p_n~(n\geq 2)
\end{equation} with each $p_i$ is a prime number.

(ii) \textbf{Integer partition}:
\begin{equation}\label{eqa:problem-integer-partition}
m=m_1+m_2+\cdots +m_n~(n\geq 2)
\end{equation} with each $m_i$ is a prime number.

\begin{asparaenum}[\textrm{\textbf{Reotost}}-1. ]
\item \textbf{Find} a graph $G$ having degrees $\textrm{deg}_G(x_i)=p_i$ for $i\in [1,n]$, where these numbers $p_1,p_2,\dots ,p_n$ holds Eq.(\ref{eqa:problem-integer-factorization}). Moreover, we can add more constraints to $G$, for example,

\qquad (i) this graph $G$ holds some particular topological structures, such as Hamilton cycle, Euler's graph, bipartite graph, \emph{etc};

\qquad (ii) this graph $G$ admits some $W$-constraint colorings/labelings.
\item \textbf{Find} a graph $H$ having degrees $\textrm{deg}_H(u_i)=m_i$ for $i\in [1,n]$, where these numbers $m_1,m_2,\dots ,m_n$ holds Eq.(\ref{eqa:problem-integer-partition}). Moreover, this graph $H$ (i) \textbf{has} some particular topological structures, such as Hamilton cycle, Euler's graph, bipartite graph, \emph{etc}; and (ii) \textbf{admits} some $W$-constraint colorings/labelings.
\item \textbf{Adding} $m$ leaves to a graph $G$ for realizing some graph properties, or some $W$-constraint colorings/labelings, such that $m$ holds:

\qquad (i) one of Eq.(\ref{eqa:problem-integer-factorization}) and Eq.(\ref{eqa:problem-integer-partition}); or

\qquad (ii) both Eq.(\ref{eqa:problem-integer-factorization}) and Eq.(\ref{eqa:problem-integer-partition}), simultaneously.

\item \textbf{Find} a maximal planar graph $G$, such that $G=[\ominus]^4_{k=1}a_kT_{S,k}$ with $m=a_1+a_2+a_3+a_4$, see Eq.(\ref{eqa:4-color-triangle-base-mpgs}) and Eq.(\ref{eqa:4-color-planar-lattice}), where $m$ holds one of Eq.(\ref{eqa:problem-integer-factorization}) and Eq.(\ref{eqa:problem-integer-partition}), or $m$ holds two Eq.(\ref{eqa:problem-integer-factorization}) and Eq.(\ref{eqa:problem-integer-partition}) simultaneously.
\item \textbf{Find} a connected graph $G$, such that $G$ can be vertex-split into edge-disjoint subgraphs $G_1,G_2,\dots ,G_n$, such that

\qquad (i) Each vertex number $|V(G_i)|=p_i$ for $i\in [1,n]$, where each prime number $p_i$ is in Eq.(\ref{eqa:problem-integer-factorization}). Or each edge number $|E(G_i)|=m_i$ for $i\in [1,n]$, where each prime number $m_i$ is in Eq.(\ref{eqa:problem-integer-partition}).

\qquad (ii) Each edge number $|E(G_i)|=p_i$ for $i\in [1,n]$, where each prime number $p_i$ is in Eq.(\ref{eqa:problem-integer-factorization}). Or each vertex number $|V(G_i)|=m_i$ for $i\in [1,n]$, where each prime number $m_i$ is in Eq.(\ref{eqa:problem-integer-partition}).

\qquad (iii) All of vertex numbers $|V(G_i)|$ and all of edge numbers $|E(G_i)|$ for $i\in [1,n]$ hold two Eq.(\ref{eqa:problem-integer-factorization}) and Eq.(\ref{eqa:problem-integer-partition}), simultaneously.

\item Let $n_d(T)$ be the number of vertices of degree $d$ of a tree $T$. \textbf{Find} a tree $T$ holds $m=\sum _{d\geq 3}n_d(T)$, where $m$ holds:

\qquad (i) one of Eq.(\ref{eqa:problem-integer-factorization}) and Eq.(\ref{eqa:problem-integer-partition}); or

\qquad (ii) both Eq.(\ref{eqa:problem-integer-factorization}) and Eq.(\ref{eqa:problem-integer-partition}), simultaneously.

Moreover, \textbf{find} some $W$-constraint colorings/labelings admitted by this tree $T$.
\end{asparaenum}

\section{Acknowledgment}

The author, \emph{Bing Yao}, was supported by the National Natural Science Foundation of China under grants No. 61163054, No. 61363060 and No. 61662066. The author, \emph{Hongyu Wang}, thanks gratefully the National Natural Science Foundation of China under grants No. 61902005, and China Postdoctoral Science Foundation Grants No. 2019T120020 and No. 2018M641087.

{\footnotesize

}

\end{document}